\documentclass[a4paper,reqno,11pt]{amsart}
\usepackage[hmargin=2.5cm,vmargin=2.5cm]{geometry}

\usepackage[dvipsnames]{xcolor}
\usepackage{amssymb,amsthm,mathtools,cancel,mathrsfs,mathdots,graphicx,framed,enumitem,tikz,subcaption,bm,eucal,thmtools}

\usetikzlibrary{patterns, decorations.markings,arrows, calc}

\usepackage[colorlinks=true, pdfstartview=FitV, urlcolor=blue, citecolor=red, linkcolor=blue, unicode, backref=page]{hyperref}
\usepackage{stmaryrd}
\SetSymbolFont{stmry}{bold}{U}{stmry}{m}{n} 

\usepackage{cleveref}

\usepackage[T1]{fontenc}
\usepackage[landscape=true]{pdflscape}

\renewcommand{\Re}{\operatorname{Re}}
\renewcommand{\Im}{\operatorname{Im}}

\def\wt{\widetilde}
\def\wh{\widehat}

\def\d{\mathrm{d}}
\def\e{\mathrm{e}}
\def\i{\mathrm{i}}
\def\sfC{\mathsf{C}}
\def\sfq{\mathsf{q}}

\def\res{\mathop{\mathrm {res}}\limits_}

\newtheorem{theorem}{Theorem}[section]

\newtheorem{definition}[theorem]{Definition}

\newtheorem{lemma}[theorem]{Lemma}
\newtheorem{proposition}[theorem]{Proposition} 
\newtheorem{corollary}[theorem]{Corollary}

\newtheorem{dRHp}[theorem]{Discrete Riemann--Hilbert problem}
\newtheorem{cRHp}[theorem]{Continuous Riemann--Hilbert problem}

\theoremstyle{remark}
\newtheorem{remark}[theorem]{Remark}

\allowdisplaybreaks

\begin{document}

\numberwithin{equation}{section}

\title[
Multiplicative Averages of Plancherel Random Partitions
]{
Multiplicative Averages of Plancherel Random Partitions: Elliptic Functions, Phase Transitions, and Applications
}

\author{Mattia Cafasso}
\address[M.~Cafasso]{Univ Angers, CNRS, LAREMA, SFR MATHSTIC, F-49000 Angers, France}
\email{mattia.cafasso@univ-angers.fr}

\author{Matteo Mucciconi}
\address[M.~Mucciconi]{ 
 Department of Mathematics, National University of Singapore,
 S17, 10 Lower Kent Ridge Road, 119076, Singapore.}
\email{matteomucciconi@gmail.com}

\author{Giulio Ruzza}
\address[G.~Ruzza]{CEMS.UL, Departamento de Ci\^encias Matem\'aticas, Faculdade de Ci\^encias da Universidade de Lisboa, Campo Grande Edif\'{i}cio C6, 1749-016, Lisboa, Portugal}
\email{gruzza@ciencias.ulisboa.pt}

\date{}

\begin{abstract}
    We consider random integer partitions~$\lambda$ that follow the Poissonized Plancherel measure of parameter~$t^2$.
    Using Riemann--Hilbert techniques, we establish the asymptotics of the multiplicative averages 
    \[
        Q(t,s)=\mathbb{E} \left[ \prod_{i\ge 1} \left(1+\e^{\eta(\lambda_i-i+\frac 12-s)}\right)^{-1} \right]
    \]
    for fixed $\eta>0$ in the regime $t\to+\infty$ and $s/t=O(1)$.
    We compute the large-$t$ expansion of $\log Q(t,xt)$ expressing the rate function $\mathcal{F}(x) = -\lim_{t \to \infty}  t^{-2}\log Q(t,xt)$ and the subsequent divergent and oscillatory contributions explicitly in terms of elliptic theta functions. 
    The associated equilibrium measure presents, in general, nontrivial saturated regions and it undergoes two third-order phase transitions of different nature which we describe.
    Applications of our results include an explicit characterization of tail probabilities of the height function of the $q$-deformed polynuclear growth model and of the edge of the positive-temperature discrete Bessel process and asymptotics of radially symmetric solutions to the 2D~Toda equation with step-like initial data.
\end{abstract}

\subjclass[2020]{
	11P82;
	33E05;
	37K60;
	41A60;
	60F10;
	60K35}
\keywords{
	Random partitions;
	phase transitions;
	large deviations;
	elliptic functions;
	Toda equations}

\maketitle

\setcounter{tocdepth}{1}

{
\hypersetup{linkcolor=black}
\tableofcontents
}

\section{Introduction and results}

\subsection{Overview}

The Plancherel measure is a probability measure over the set of integer partitions $\lambda$ of a natural number $n$, which arises naturally in representation-theoretic, combinatorial, and probabilistic contexts \cite{okounkov2006uses}. It assigns to a partition $\lambda$ a probability mass proportional to the square of the dimension of the irreducible representation of the symmetric group $S_n$ indexed by $\lambda$. It is also famously related to the distribution of the longest increasing subsequence of a uniformly distributed random permutation of $n$ elements \cite{Schensted1961,logan_shepp1977variational,VershikKerov_LimShape1077,baik1999distribution,romik2015surprising}; see also \cite{GREENE1979,Baik1999second,Borodin2000b,Johansso1999Plancherel}.
The \emph{Poissonized Plancherel measure} (which we denote by $\mathcal{P}_{t^2}$) occurs when the natural number $n$ is also random and follows a Poisson distribution with a parameter that in this paper we denote by $t^2$.
For an integer partition $\lambda = (\lambda_1,\lambda_2,\ldots)$ it is given explicitly by
\begin{equation}\label{eq:poissonized plancherel}
    \mathcal{P}_{t^2}\bigl(\lbrace\lambda\rbrace\bigr) \,=\, \e^{-t^2} \, \frac{t^{2|\lambda|}}{ \prod_{i,j} (\lambda_i-i + \lambda_j'-j+1)^2},
\end{equation}
where $\lambda'=(\lambda_1',\lambda_2',\dots)$ with $\lambda_j'=\#\{ k: \lambda_k \ge j \}$ is the \emph{transposed} partition. 
The product in the denominator ranges over $i,j\in\mathbb{N}$ such that $1\leq j\leq \lambda_i$ and it is commonly called \emph{hook product} because $\lambda_i-i + \lambda_j'-j+1$ is the \emph{hook length} of the cell $(i,j)$, as depicted in \Cref{fig:partition_hook_maya}.

\begin{figure}
    \centering
    \includegraphics[width=0.4\linewidth]{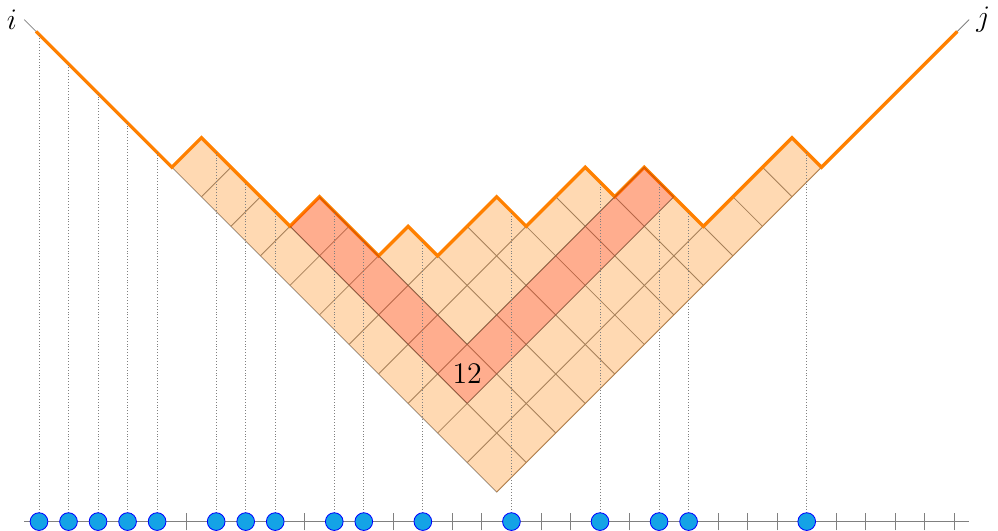}
    \caption{In \textcolor{brown}{orange}, the Young diagram (in Russian notation) of the partition $\lambda = (11,8,8,7,5,3,2,2,1,1,1)$. In \textcolor{Cerulean}{blue}, the Maya diagram $\mathscr{D}(\lambda)$.
    The darker shaded cells represent the hook of the cell $(3,2)$, whose length is $12$.}
    \label{fig:partition_hook_maya}
\end{figure}

The Poissonized Plancherel measure possesses a rich structure.
For instance, defining the \textit{Maya diagram} of an integer partition~$\lambda$ as the subset
\begin{equation}\label{eq:point process D}
    \mathscr{D}(\lambda) \,= \,\left \lbrace \lambda_i - i +\frac 12 \, : \, i\in \mathbb{N} \right\rbrace 
\end{equation}
of the half-integers $\mathbb{Z}'=\mathbb{Z}+\frac 12$, the point process $\mathscr{D}(\lambda)$ on~$\mathbb{Z}'$ (where $\lambda$ follows the Poissonized Plancherel measure) is \emph{determinantal}, with correlation kernel~\cite{Borodin1999RSK,Borodin2000a,Johansso1999Plancherel,Borodin2000b} 
\begin{equation}
\label{eq:dBKernel}
   \mathsf{K}(i,j;t)\, = \,t \,\frac{ \mathrm{J}_{i-\frac 12}(2t) \,\mathrm{J}_{j+\frac 12}(2t) \,-\, \mathrm{J}_{i+\frac 12}(2t) \,\mathrm{J}_{j-\frac 12}(2t) }{i\,-\,j},
   \qquad i,j\in\mathbb{Z}'.
\end{equation}
Here, $\mathrm{J}_k$ denotes the Bessel function of the first kind of order~$k$ and the diagonal entries of the kernel are defined by the limit $\mathsf{K}(i,i;t) = \lim_{j \to i} \mathsf{K}(i,j;t)$.
This means that $\mathbb{P}\left[ S \subset \mathscr{D}(\lambda) \right] = \det\left[ \mathsf{K}(i,j;t) \right]_{i,j \in S}$ for any finite set~$S\subset\mathbb{Z}'$. 

\medskip

This paper is devoted to the study of the following multiplicative averages of the Poissonized Plancherel measure, defined for $\eta>0$, $t>0$, $s\in\mathbb{Z}$, by
\begin{equation}
\label{eq:defQ}
    Q(t,s) \,=\, \mathbb{E}_{t^2}  \left[ \prod_{\xi\in\mathscr D(\lambda)} \frac{1}{1+\exp\bigl(\eta(\xi-s)\bigr)}\right] =\, \mathbb{E}_{t^2} \left[ \prod_{i \ge 1} \frac{1}{1+\exp\bigl(\eta(\lambda_i-i+\frac 12-s)\bigr)} \right],
\end{equation}
and, in particular, to its asymptotics when $t\to+\infty$ with $s/t=O(1)$.
For brevity, we will often omit the dependence on~$\eta$ in the notation.

\medskip

Multiplicative averages of the form \eqref{eq:defQ} have recently attracted considerable interest due to their connection with integrable systems~\cite{okounkov2001infinite,krajenbrink2020painleve,cafasso2021airy,quastel_remenik_2022_KP,ghosal2023universality,CafassoRuzza23,ruzza2025bessel,matetski2025polynuclear}, their use in \emph{number rigidity} and \emph{thinning} of the underlying point process \cite{ghosh2015determinantal,bufetov2016rigidity,claeys2023determinantal,bufetov2024expectation,claeys2024janossy},
and, especially, their application in the study of solvable growth processes in the Kardar--Parisi--Zhang universality class \cite{borodin2016moments,borodin2016stochastic_MM,BO2016_ASEP,corwin2018coulomb,lwtail,tsai_lower_tail,cafasso_claeys_KPZ,bothner2022momenta,charlier2022uniform,claeys2024deformations,claeys2024large,claeys2024integrable,das2025large,dlm24,zhong2024large,Ghosal_Silva_6VM}.
Hence, the asymptotic analysis of multiplicative averages such as~$Q(t,s)$ has immediate applications to the description of the asymptotic behavior of integrable systems and large deviations of solvable models of one-dimensional growing interfaces, as we will see below.

\medskip

Global asymptotics of the (Poissonized) Plancherel measure are often carried out through its relation with log-gases, originally discovered independently by B.~Logan and L.~Shepp~\cite{logan_shepp1977variational} and by A.~Vershik and S.~Kerov~\cite{VershikKerov_LimShape1077}.
This connection allows one to express the probability mass of a large partition $\lambda$ as
\begin{equation}
\label{eq:poissonized_Plancherel_log_ergy}
    \mathcal{P}_{t^2}\bigl(\lbrace\lambda\rbrace\bigr) \,=\, \exp\bigl( -t^2 ( 1+ \mathcal{E}[\mathfrak{h}] ) + o(t^2) \bigr).
\end{equation}
Here, $\mathcal{E}$ is the logarithmic energy
\begin{equation}
\label{eq:VKLSenergy}
    \mathcal{E}[\mathfrak{h}] = \iint \log \frac{1}{|\mu-\nu|} \mathfrak{h}(\mu) \mathfrak{h}(\nu) \d \mu \, \d \nu + \int 2 \left( \mu \log |\mu| - \mu  \right) \mathfrak{h}(\mu) \d \mu
\end{equation}
and the function $\mathfrak{h}$ is the \emph{half-complemented rescaled empirical measure} of the point process $\mathscr{D}(\lambda)$, defined by
\begin{equation}
\label{eq:complementation}
    \mathfrak{h}(\mu) = \rho(\mu)- \mathbf{1}_{(-\infty,0]}(\mu),
\end{equation}
where $\rho$ is the rescaled \emph{empirical measure}
\begin{equation}
\label{eq:empirical measure}
    \rho(\mu) = \mathbf 1_{\mathscr D(\lambda)}\left(\lfloor t\mu \rfloor +\tfrac{1}{2}\right) .
\end{equation}
(Throughout this paper, we denote by $\mathbf 1_X$ the characteristic function of a set~$X$.)
The domain of~$\mathfrak{h}$ is the convex set
\begin{equation}
\label{eq:Hprimespace}
    \mathcal{H} = \left\{ \mathfrak{h} \in L^1(\mathbb{R}): \mathbf{1}_{(-\infty,0]}(\mu)+ \mathfrak{h}(\mu) \in [0,1] \, \, \text{and} \, \, \int_{\mathbb{R}} \mathfrak{h}(\mu) \d \mu=0 \right\}.
\end{equation}
Using the expansion analogous to \eqref{eq:poissonized_Plancherel_log_ergy} for the (non Poissonized) Plancherel measure, A. Vershik, S. Kerov, B. Logan, and L. Shepp proved that the random function~$\rho$ converges (weakly almost surely) to the \emph{arccosine density}
\begin{equation}
\label{eq:arccosine law}
    \rho(\mu) \xlongrightarrow[t \to \infty]{} \rho_{\mathrm{VKLS}}(\mu) = \mathbf 1_{(-\infty,-2)}(\mu)+\frac{\mathbf 1_{(-2,2)}(\mu)}{\pi}\arccos\left(\frac{\mu}{2} \right).
\end{equation}
In the same article~\cite{logan_shepp1977variational}, B.~Logan and L.~Shepp also described the limiting behavior of a large partition~$\lambda$ subject to the condition of having small first row~$\lambda_1$ and/or small first column~$\lambda_1'$, by solving constrained optimization problems associated with the energy~$\mathcal{E}$.
These results have later found use in more probabilistic context to describe the lower-tail large deviation rate function for the longest increasing subsequence of a uniform random permutations or a Poisson planar environment~\cite{deuschel_zeitouni_1999,seppalainen_98_increasing}.
We recover these limiting behaviors of constrained large partitions from the results of this paper in the limit $\eta\to+\infty$; see~\Cref{rem:q0}.

\subsection{Results}

The connection between the Poissonized Plancherel measure and log-gases is crucial also in the asymptotic analysis of the multiplicative average~$Q(t,s)$.
A basic Laplace-style argument, using the Hardy--Ramanujan bound to control the number of partitions of size~$O(t^2)$, shows that
\begin{equation}
\label{eq:Q by Laplace argument} 
	Q(t,xt) \,=\, \exp\bigl( - t^2 \mathcal{F}(x) + o(t^2) \bigr)\,,\qquad \text{as }t\to+\infty,
\end{equation}
(see \Cref{thm:Laplace_argument}), where
\begin{equation}
	\mathcal{F}(x) \,=\, 1\, +\, \frac{\eta x^2}{2}  \mathbf{1}_{(-\infty,0)}(x) \,+\, \min_{\mathfrak{h} \in \mathcal{H}} \mathcal{E}_{\eta,x} \left[ \mathfrak{h} \right]\,, 
\end{equation}
and~$\mathcal{E}_{\eta,x}$ is the logarithmic energy
\begin{equation}
\label{eq:log-energy}
    \mathcal{E}_{\eta,x}\left[\mathfrak{h} \right] = \iint \log \frac{1}{|\mu-\nu|} \mathfrak{h}(\mu) \mathfrak{h}(\nu) \d \mu \, \d \nu + \int  \bigl( 2\mu (\log |\mu| - 1) + V_{\eta,x}(\mu)  \bigr) \mathfrak{h}(\mu) \d \mu,
\end{equation}
with
\begin{equation}
\label{eq:potentialVqx}
V_{\eta,x}(\mu)=\frac{\eta}{2}[ \mu - x ]_+\,.
\end{equation}
(Throughout this paper, we denote~$[f]_+=\max\lbrace f,0\rbrace$.) Then, by~\eqref{eq:Q by Laplace argument}, the leading term in the asymptotic form of~$\log Q(t,xt)$ can be obtained by an optimization problem associated with the (quadratic and positive definite) energy~$\mathcal{E}_{\eta,x}$.
Despite differing from the energy~$\mathcal{E}$ simply by a piecewise linear potential term, constrained minimization problems of~$\mathcal{E}_{\eta,x}$ over the set of (half-complemented) densities~$\mathcal{H}$ turn out to be substantially richer, with optimal profiles exhibiting new behaviors depending on the parameter~$x$.
The first main result of this paper is the explicit formula for the optimal density~$\rho_{\eta,x}$ associated with the energy~$\mathcal{E}_{\eta,x}$, which is presented in the following subsection.

\begin{figure}[t]
\centering
\begin{subfigure}{.45\textwidth}
\frame{\includegraphics[scale=.28]{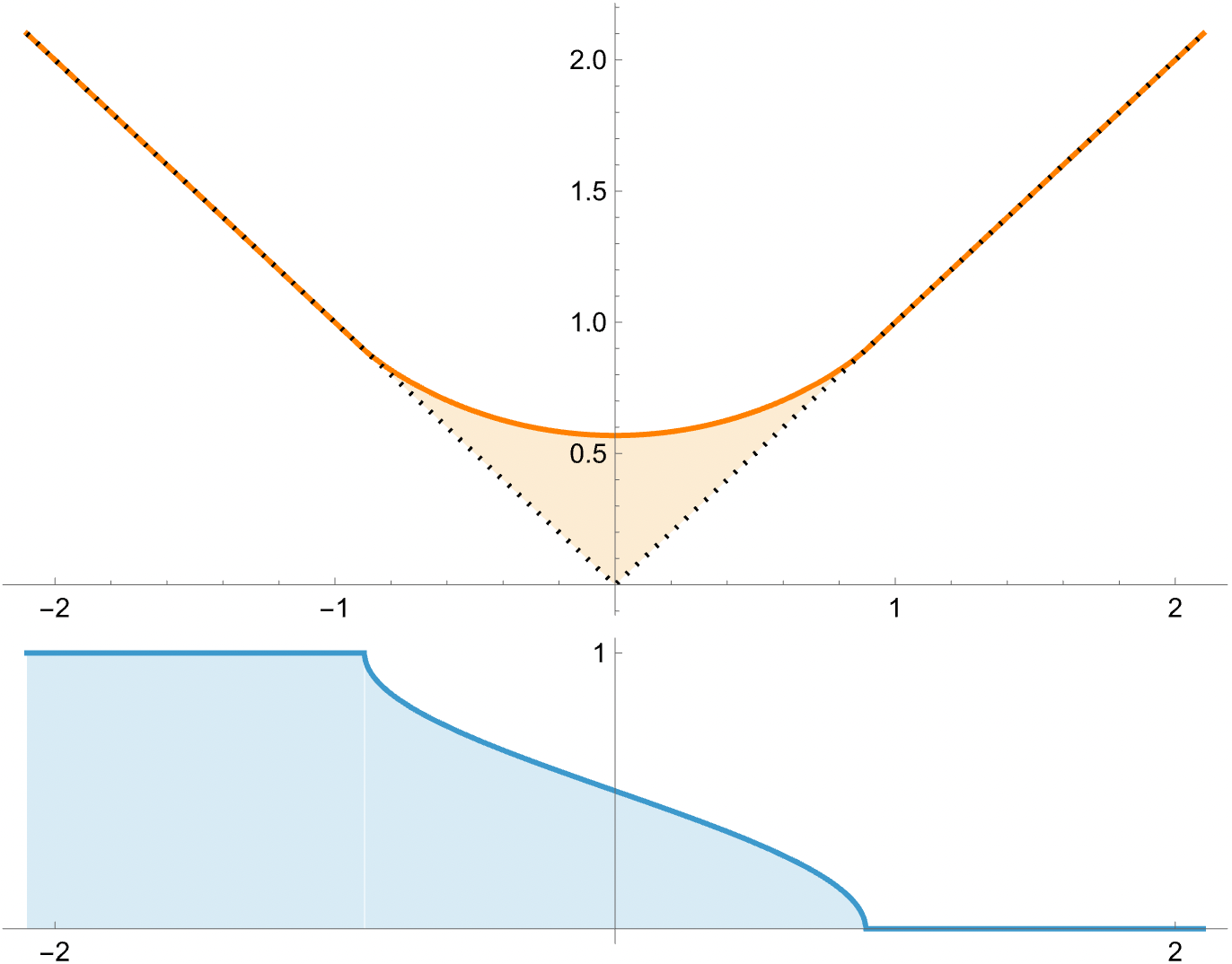}}
\caption*{$x \le x_* = -0.994136\ldots$}
\end{subfigure}
\quad
\begin{subfigure}{.45\textwidth}
\frame{\includegraphics[scale=.28]{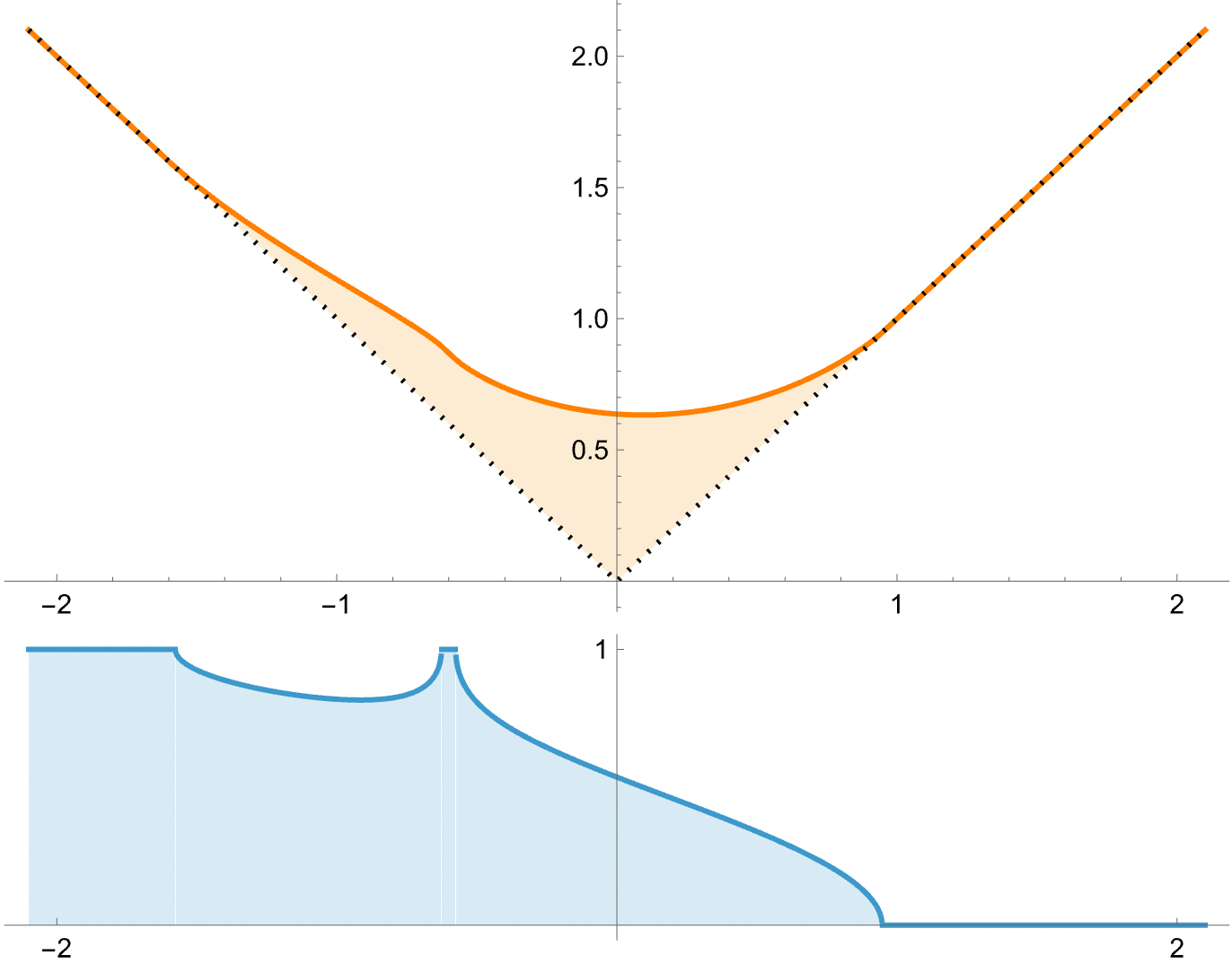}}
\caption*{$x=-0.6$}
\end{subfigure}
\begin{subfigure}{.45\textwidth}
\frame{\includegraphics[scale=.28]{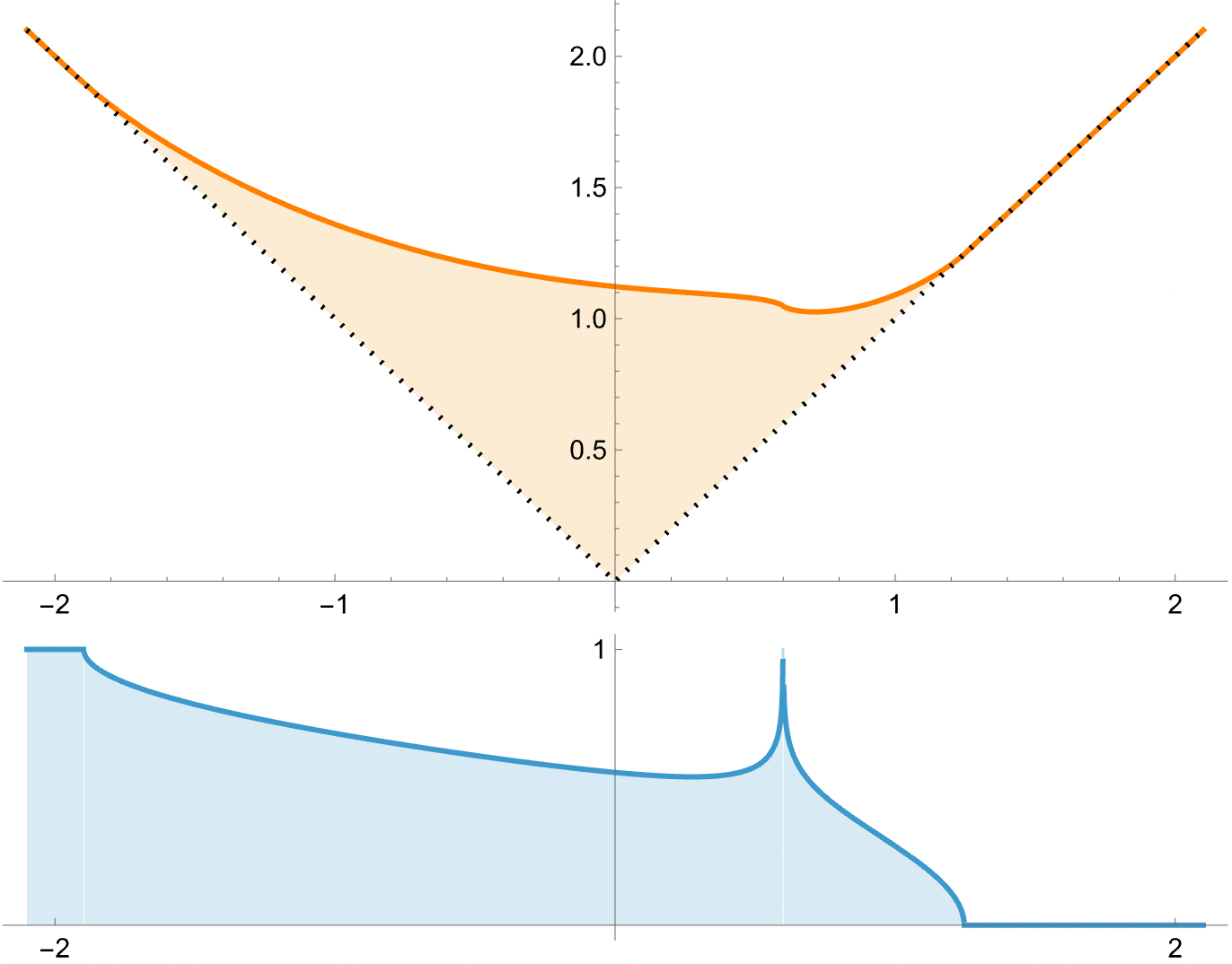}}
\caption*{$x=0.6$}
\end{subfigure}
\quad
\begin{subfigure}{.45\textwidth}
\frame{\includegraphics[scale=.28]{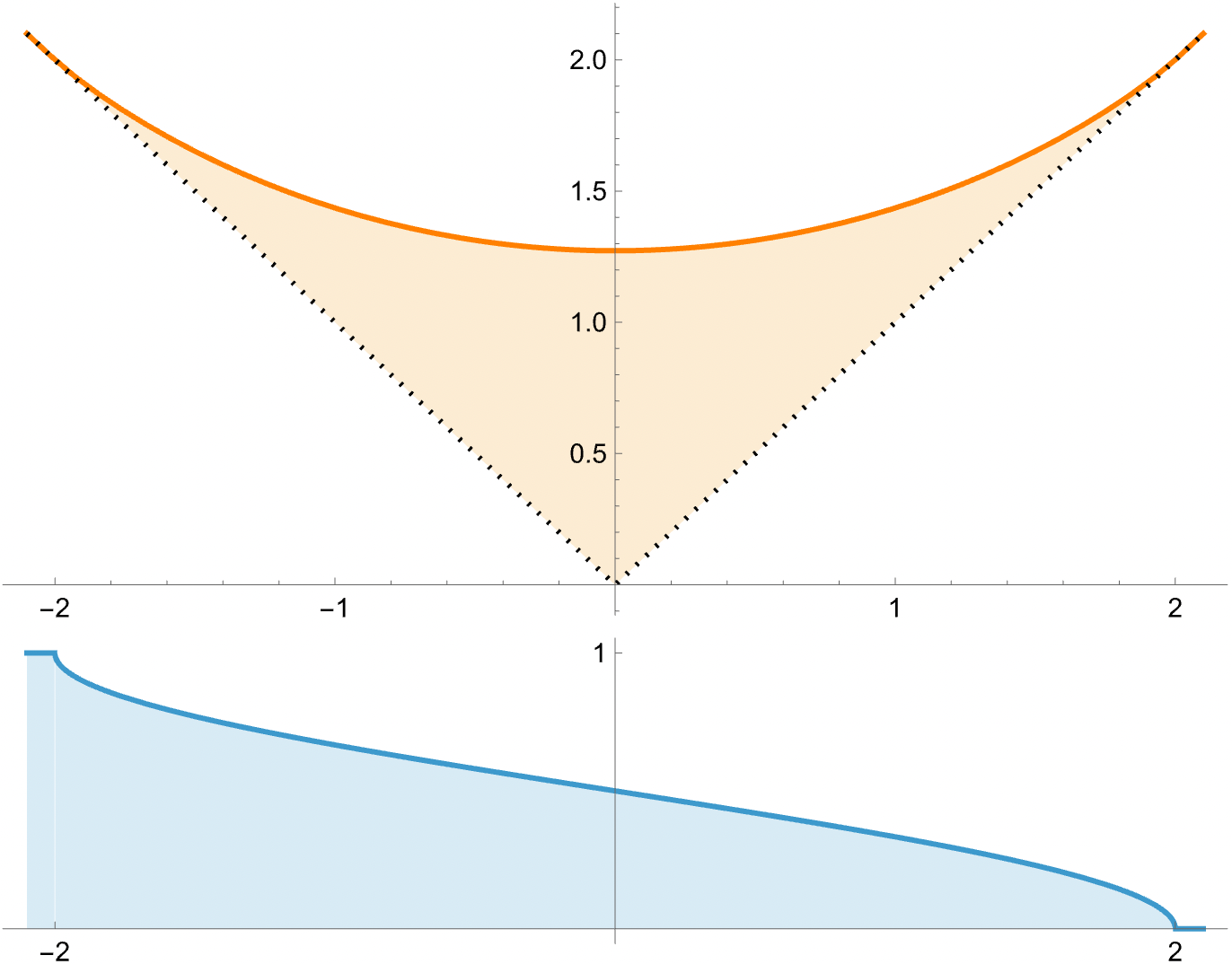}}
\caption*{$x \ge 2$}
\end{subfigure}
\caption{The various phases of the equilibrium measure $\rho_{\eta,x}$ (in \textcolor{Cerulean}{blue}) and the corresponding limiting partition (in \textcolor{brown}{orange}) in Russian notation. Here $\eta=\log 5$. In the top-left panel, when $x \le x_*$, the density $\rho_{\eta,x}$ is a rescaling of the Vershik--Kerov--Logan--Shepp density by a factor~$\e^{-\eta/2}$.
In the top-right and bottom-left panels the cases when $\rho_{\eta,x}$ possesses two saturated regions and its explicit form is given in \eqref{eq:maintheorem:densitymiddle}. In the bottom-right panel, where $x \ge 2$, $\rho_{\eta,x}$ coincides with the Vershik--Kerov--Logan--Shepp density.}
\label{fig:equilibrium measures}
\end{figure}

\subsubsection{Minimization of logarithmic energy}

We will need the following elliptic theta functions (see also Appendix~\ref{app:elliptic}):
\begin{equation}
\label{eq:theta}
\begin{aligned}
\vartheta(z|\tau) &= \sum_{n\in\mathbb{Z}}\e^{2\pi\i n z+\i\pi n^2\tau},
&\vartheta_{10}(z|\tau) &= \vartheta\bigl(z+\tfrac{1}{2}\big|\tau\bigr),
\\
\vartheta_{01}(z|\tau) &= \e^{\i\pi(z+\frac \tau 4)}\,\vartheta\bigl(z+\tfrac{\tau}{2}\big|\tau\bigr),
&
\qquad
\vartheta_{11}(z|\tau) 
&= \i\,\e^{\i\pi(z+\frac \tau 4)}\,\vartheta\bigl(z+\tfrac{1+\tau}{2}\big|\tau\bigr),
\end{aligned}
\end{equation}
for $z\in\mathbb{C}$ and $\tau\in\mathbb{C}$ satisfying~$\Im\tau>0$. We will denote by $\vartheta'(z|\tau)$ the derivative in the argument $z$ of $\vartheta(z|\tau)$, and similarly for $\vartheta_{ij}$.
We also need to define some special functions.

\begin{definition}[Implicit half-period] \label{def:K(x)}
    For all $\eta>0$ we denote
    \begin{equation}
    \label{eq:xq}
    x_*\,=\,-2\,(1-\e^{-\eta})\,\eta^{-1}.
    \end{equation}
    We define the function $\mathcal{U}:(\eta/2,+\infty)\to \mathbb{R}$ by
    \begin{equation}
    \label{eq:f1}
        \mathcal{U}(K) \,=\, 2K\e^{\frac {\eta}2\bigl(\frac{\eta}{2K}-1\bigr)}\frac{\vartheta_{11}(\frac{\eta}{2K}|\frac{\i\pi}{K})}{\vartheta_{11}'(0|\frac{\i\pi}{K})}
    \end{equation}
    and the function $\mathcal{K}:(x_*,2)\to(\eta/2,+\infty)$ by letting, for any $x\in(x_*,2)$, $\mathcal{K}(x)=K$ where $K$ is the unique solution in $(\eta/2,+\infty)$ of
    \begin{equation}
    \label{eq:x}
        \left(1-K\frac{\partial}{\partial K}\right)\mathcal{U}(K) \,=\, -\frac{\eta x}2\,.
    \end{equation}
\end{definition}
We will show (see Proposition~\ref{prop:f1f2body}) that $\left(1-K\frac{\partial}{\partial K}\right)\mathcal{U}(K)$ is an increasing function of $K\in (\eta/2,+\infty)$ with range~$(-\eta,1-\e^{-\eta})$.
Hence, the function $\mathcal{K}$ is well defined. 

\medskip

We will denote by~$\rho_{\mathrm{VKLS}}$ the Vershik--Kerov--Logan--Shepp density; see~\eqref{eq:arccosine law}.

\begin{theorem}[Equilibrium measure]\label{thm:minimizer}
    Let $\eta>0$ and $x\in \mathbb{R}$.
    The minimizer $\rho_{\eta,x}$ of the logarithmic energy $\mathcal{E}_{\eta,x}$ is given explicitly as follows.
    \begin{enumerate}[leftmargin=*]
    \item If $x \le x_*$, $\rho_{\eta,x}(\mu)=\rho_{\mathrm{VKLS}}\bigl(\e^{\eta/2}\mu\bigr)$.
    
    \item If $x_*<x<2$,
    \begin{equation}
    \label{eq:maintheorem:densitymiddle}
    \begin{aligned}
        \rho_{\eta,x}(\mu) ={}& \mathbf 1_{(-\infty,a)\cup(b,c)}(\mu)
        \\
        &+\mathbf 1_{(a,b)}(\mu)\biggl[1+\frac{R(\mu)}{\pi}\biggl(\int_{d}^{+\infty}\frac{\d\nu}{R(\nu)(\nu-\mu)}-\frac {\eta}{2\pi}\int_c^d\frac{\d\nu}{R(\nu)(\nu-\mu)}\biggr)\biggr]
        \\&+\mathbf 1_{(c,d)}(\mu)\biggl[ 1-\frac{R(\mu)}{\pi}
        \biggl(\int_d^{+\infty}\frac{\d\nu}{R(\nu)(\nu-\mu)}-\frac {\eta}{2\pi}\,\mathrm{p.v.}\int_c^d\frac{\d\nu}{R(\nu)(\nu-\mu)}\biggl) \biggr]
    \end{aligned}
    \end{equation}
    where
    \begin{equation}
    R(\mu)=\bigl|(\mu-a)(\mu-b)(\mu-c)(\mu-d)\bigr|^{1/2}.
    \end{equation}
    The endpoints $a=a(\eta,x)$, $b=b(\eta,x)$, $c=c(\eta,x)$, and $d=d(\eta,x)$ are given by
    \begin{equation}
    \label{eq:maintheorem:endpoints}
    a=\mathcal T\biggl(\frac{\vartheta'_{01}(\frac{\eta}{4\mathcal{K}}| \frac{ \i\pi}{ \mathcal{K}})}{\vartheta_{01}(\frac{\eta}{4\mathcal{K}}| \frac{ \i\pi}{ \mathcal{K}})}\biggr),
    \
    b=\mathcal T\biggl(\frac{\vartheta'(\frac{\eta}{4\mathcal{K}}| \frac{ \i\pi}{ \mathcal{K}})}{\vartheta(\frac{\eta}{4\mathcal{K}}| \frac{ \i\pi}{ \mathcal{K}})}\biggr),
    \
    c=\mathcal T\biggl(\frac{\vartheta'_{10}(\frac{\eta}{4\mathcal{K}}| \frac{ \i\pi}{ \mathcal{K}})}{\vartheta_{10}(\frac{\eta}{4\mathcal{K}}| \frac{ \i\pi}{ \mathcal{K}})}\biggr),
    \
    d=\mathcal T\biggl(\frac{\vartheta'_{11}(\frac{\eta}{4\mathcal{K}}| \frac{ \i\pi}{ \mathcal{K}})}{\vartheta_{11}(\frac{\eta}{4\mathcal{K}}| \frac{ \i\pi}{ \mathcal{K}})}\biggr),
    \end{equation}
    where $\mathcal{K}=\mathcal{K}(x)$ is given in Definition~\ref{def:K(x)}, the elliptic theta functions are given in~\eqref{eq:theta}, and $\mathcal{T}$ is the affine transformation
    \begin{equation}
    \mathcal{T}(z)=\frac{\mathcal{U}(\mathcal{K})}{\mathcal{K}}\biggl(z - \frac{\eta}2 - \frac{\vartheta'_{11}(\frac{\eta}{2\mathcal{K}}| \frac{ \i\pi}{ \mathcal{K}})}{\vartheta_{11}(\frac{\eta}{2\mathcal{K}}|\frac{ \i\pi}{ \mathcal{K}})} \biggr).
    \end{equation}
    
    \item If $x\ge 2$, $\rho_{\eta,x}(\mu)=\rho_{\mathrm{VKLS}}(\mu)$.
    \end{enumerate}
\end{theorem}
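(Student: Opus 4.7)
The plan is to prove Theorem~\ref{thm:minimizer} by (i) reducing to Euler--Lagrange conditions via strict convexity; (ii) verifying those conditions directly in the two outer phases; and (iii) solving a two-cut singular integral equation with upper constraint in the middle phase via a genus-one spectral curve, translating the result into theta functions through the Abel map.

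\textbf{Setup and variational characterization.} First I would observe that $\mathcal{E}_{\eta,x}$ is strictly convex on the convex set $\mathcal{H}$, and that the growth of $2\mu(\log|\mu|-1)+V_{\eta,x}(\mu)$ together with the pointwise constraints is enough to guarantee coercivity; hence a unique minimizer $\mathfrak{h}_{\eta,x}$ exists. Standard variational arguments (admissible perturbations preserving the zero-average constraint and the pointwise bounds) give a Lagrange multiplier $\ell=\ell(\eta,x)$ such that
\[
\Phi(\mu) := 2\int\log\frac{1}{|\mu-\nu|}\mathfrak{h}_{\eta,x}(\nu)\,\d\nu + 2\mu(\log|\mu|-1) + V_{\eta,x}(\mu)
\]
satisfies $\Phi(\mu)=\ell$ wherever $\rho_{\eta,x}(\mu)\in(0,1)$, $\Phi(\mu)\le\ell$ wherever $\rho_{\eta,x}(\mu)=1$, and $\Phi(\mu)\ge\ell$ wherever $\rho_{\eta,x}(\mu)=0$. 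By strict convexity these conditions are also sufficient, so the proof reduces to checking them for the stated candidates.

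\textbf{Outer phases.} For $x\ge 2$ one has $V_{\eta,x}\equiv 0$ on $[-2,2]$ and thus on the support of $\rho_{\mathrm{VKLS}}$; the EL equations coincide with those of the classical Vershik--Kerov--Logan--Shepp problem, which $\rho_{\mathrm{VKLS}}$ is known to solve, and $\Phi(\mu)\ge\ell$ for $\mu\ge 2$ is inherited from the VKLS case (noting that $V_{\eta,x}\ge 0$). For $x\le x_*$, the rescaling $\rho(\mu)=\rho_{\mathrm{VKLS}}(\e^{\eta/2}\mu)$ is supported on $[-2\e^{-\eta/2},2\e^{-\eta/2}]$. Under the assumption $x$ is to the left of this band (so $V_{\eta,x}(\mu)=(\eta/2)(\mu-x)$ on the support), the linear shift from $V_{\eta,x}$ combines with $2\mu(\log|\mu|-1)$ to give EL equations for the rescaled density; the check reduces to a single scalar computation, and the threshold $x_*$ is pinned exactly by the requirement that the EL inequality become saturated at the kink $\mu=x$, giving~\eqref{eq:xq}.

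\textbf{Middle phase.} For $x_*<x<2$, the ansatz is two active bands $(a,b)\cup(c,d)$ separated by a saturated region $(b,c)$ which contains the kink $x$, together with a further saturated region $(-\infty,a)$ and a void $(d,+\infty)$. Differentiating $\Phi(\mu)=\ell$ on the bands yields a singular integral equation for the resolvent $W(z)=\int(z-\nu)^{-1}\mathfrak{h}_{\eta,x}(\nu)\,\d\nu$ whose boundary values satisfy $W_+(\mu)+W_-(\mu)=\partial_\mu V_{\mathrm{eff}}(\mu)$ on $(a,b)\cup(c,d)$, while on the saturated regions the jump encodes $\rho=1$; the kink of $V_{\eta,x}$ at $\mu=x\in(b,c)$ produces a jump of $\i\eta/2$ across $(c,d)$ when $W$ is analytically continued. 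I would solve this Riemann--Hilbert problem on the genus-one curve $\Sigma\colon y^2=(z-a)(z-b)(z-c)(z-d)$, subtracting explicit elementary differentials to absorb the logarithmic behavior at infinity and the $\eta$-jump, and reducing to a holomorphic differential whose periods are theta nulls. Using the Abel map $\Sigma\to\mathbb{C}/(\mathbb{Z}+\tau\mathbb{Z})$ with $\tau=\i\pi/\mathcal{K}$, the four branch points are identified as the $\mathcal{T}$-images of logarithmic derivatives of the four Jacobi theta functions at $\eta/(4\mathcal{K})$, giving~\eqref{eq:maintheorem:endpoints}; the transcendental equation~\eqref{eq:x} arises from the first moment/normalization condition $W(z)=O(z^{-2})$ as $z\to\infty$ (i.e.\ $\int\mathfrak{h}=0$); and~\eqref{eq:maintheorem:densitymiddle} comes from $\rho=1+\pi^{-1}\Im W_+$ on the bands.

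\textbf{Main obstacle.} The hardest step is the middle phase: correctly identifying the support topology, solving the two-cut RH problem with a jump discontinuity in the right-hand side at the interior point $x$, and packaging the result in closed theta-function form. In particular, one must verify that $\mathcal{K}(x)$ is well defined, which reduces to showing that $(1-K\partial_K)\mathcal{U}(K)$ is strictly increasing on $(\eta/2,+\infty)$ with range $(-\eta,1-\e^{-\eta})$ (this is deferred to \Cref{prop:f1f2body}), as well as checking positivity of the density in the bands and the strict EL inequalities in the saturated and void regions. Compatibility at the phase boundaries $x=x_*$ and $x=2$, where the elliptic ansatz degenerates continuously to the outer phases (the saturated interval $(b,c)$ collapsing, or the band $(c,d)$ closing into the void), provides the global consistency check.
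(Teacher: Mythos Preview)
Your overall strategy matches the paper's: reduce to Euler--Lagrange via strict convexity (Propositions~\ref{prop:mingeneral}--\ref{prop:minfromg}), verify a one-cut ansatz directly in the outer phases, and solve a two-cut scalar Riemann--Hilbert problem on the elliptic curve in the middle phase via uniformization. Two of your identifications are wrong, however, and would lead you to the wrong equations if actually carried out.

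First, the threshold $x_*$ is \emph{not} pinned by saturation of the EL inequality at the kink $\mu=x$. In the one-cut phase the inequality $\Phi(\mu)\le\ell$ on the saturated region $(-\infty,-2\e^{-\eta/2}]$ is governed, for $\mu\le x$, by the concave function $\mu\mapsto g_+(\mu)+g_-(\mu)$, whose maximum sits at the fixed interior point $\mu=-1-\e^{-\eta}$ independently of $x$; see the computation around~\eqref{eq:onecutinequalityx} and Remark~\ref{remark:BirthOfACut}. The threshold $x_*$ is the value of $x$ at which this maximum, shifted by the $x$-dependent constant, first touches $\ell$, so a new band nucleates at $\mu=-1-\e^{-\eta}$ (the ``birth of a cut'' location), which is macroscopically separated from both $x_*$ and the edge of the existing band. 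Mislocating this point would give the wrong formula for~$x_*$.

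Second, in the two-cut phase the transcendental equation~\eqref{eq:x} for $\mathcal{K}$ does not come from the normalization $\int\mathfrak{h}=0$. There are four endpoint unknowns and four conditions: three from the large-$z$ expansion $g'(z)=zg_{-2}+\log z+g_{-1}+g_0z^{-1}+\cdots$ (namely $g_{-2}=g_{-1}=g_0=0$), plus a \emph{balance} condition that the Lagrange multiplier be the same on both bands, equation~\eqref{eq:balance}. After elliptic uniformization the first three become linear in the auxiliary parameters and fix $w_\infty=\eta/2$, $m=\mathcal{U}(K)$, and $d$ in terms of~$K$; it is the balance condition---a genuinely genus-one period constraint---that yields $\mathcal{U}(K)\mathcal{V}(K)=-\tfrac{\eta x}{2}$. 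Your description of the middle phase never states this condition: the equation $W_++W_-=V_{\mathrm{eff}}'$ on the two disjoint bands carries two independent integration constants, and equating them is the extra equation that must be imposed by hand and that singles out~$\mathcal{K}(x)$.
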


\begin{figure}
    \centering
    \includegraphics[width=0.5\linewidth]{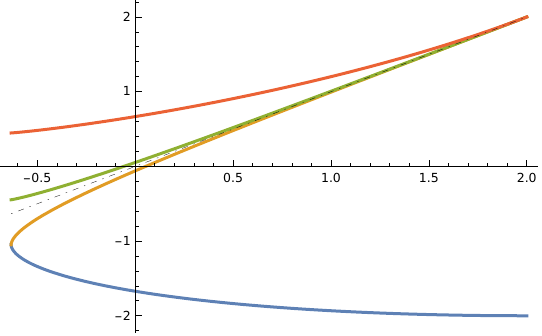}
    \caption{A plot of the endpoints $a$, $b$, $c$, and $d$ (in blue, orange, green, and red, respectively) as functions of $x\in(x_*,2)$. The dot-dashed thin curve is the graph of $x$ and lies between~$b$ and~$c$.
    We take $\eta=\log 20$, such that $x_*=-\frac{19}{10\log 20}=-0.634236\ldots$}
    \label{fig:endpoints}
\end{figure}

The proof is given in Section~\ref{sec:variational}.
In~\Cref{fig:equilibrium measures} we plot~$\rho_{\eta,x}$ for some values of~$x$ and~$\eta$. We observe two phase transitions modulated by the parameter~$x$. When~$x > 2$, the optimal profile is given by the arccosine law $\rho_{\mathrm{VKLS}}$; see \Cref{fig:equilibrium measures}, bottom-right panel. When $x\in(x_*,2)$, with explicit critical value $x_*$ as in \eqref{eq:xq}, the optimal density $\rho_{\eta,x}$ is characterized by a nontrivial saturated region (namely, a finite interval where $\rho_{\eta,x}$ is identically equal to~$1$) in a neighborhood of the macroscopic location $x$; see \Cref{fig:equilibrium measures}, top-right and bottom-left panels. When $x< x_*$, the two saturated regions coalesce around the point $-1-\e^{-\eta}$ (see~\Cref{sec:propertiesendpoints}), which is macroscopically far away from $x_*$ and the optimizer $\rho_{\eta,x}$ becomes a rescaling of the arccosine law; see \Cref{fig:equilibrium measures}, top-left panel. We also give a plot of the endpoints $a,b,c,d$ as functions of $x \in (x_*,2)$ for a fixed $\eta$ in~\Cref{fig:endpoints}.

\subsubsection{Large-$t$ asymptotic expansion} \label{subs:intro asymptotic expansion}

The rich behavior of the optimizer $\rho_{\eta,x}$ is reflected in the large-$t$ asymptotics of~$Q(t,xt)$ which we now describe.

\begin{definition}[Rate function]\label{def:Fq} We define the function
$\mathcal{F}:\mathbb{R}\to\mathbb{R}_{\ge 0}$ by
\begin{equation}\label{eq:Fq}
    \mathcal{F}(x) = 
        \begin{dcases}
            \frac{\eta x^2}2 +1-\e^{-\eta} &\text{if } x \leq x_*,
            \\
            \frac{\eta x^2}2 +1-\e^{-\eta} +\eta\int_{x_*}^x\mathcal{L}(y)\d y&\text{if } x_*<x<2,
            \\
            0  &\text{if } x\geq 2,
        \end{dcases}
    \end{equation}
where, recalling $\mathcal{U},\mathcal{K}$ from \Cref{def:K(x)}, we set
\begin{equation}
    \label{eq:defL}
    \mathcal{L}(x)  \,=\, -\frac{\frac{\eta x}{2}+\mathcal{U}\bigl(\mathcal{K}(x)\bigr)}{\mathcal{K}(x)}\,=\, -\left.\frac{\partial \mathcal{U}(K)}{\partial K}\right|_{K=\mathcal{K}(x)}\,.
\end{equation}
\end{definition}

\begin{remark}
When $x\in(x_*,2)$, we have the equivalent expression
\begin{equation}
\label{eq:explicitF}
\mathcal{F}(x) = 1+\frac{\eta}{2}\biggl(1-\frac{\eta}{2\mathcal{K}}\biggr)x^2-\frac{3\eta\mathcal{U}(\mathcal{K})}{4\mathcal{K}}x+\mathcal{U}(\mathcal{K})^2\frac{\d^2}{\d\eta^2}\log\vartheta_{11}\bigl(\frac{\eta}{2\mathcal{K}}\big|\frac{\i\pi}{\mathcal{K}}\bigr),
\end{equation}
where $\mathcal{K}=\mathcal{K}(x)$ is given in Definition~\ref{def:K(x)}; see Remark~\ref{remark:explicitFWeierstrass}.
\end{remark}

Plots of $\mathcal{F}$, $\mathcal{K}$, and $\mathcal{L}$ can be found in Figure~\ref{fig:rate function F}. 
We are now ready to state the second main result of this paper.

\begin{theorem}[Asymptotic expansion]
\label{thm:main}
Fix $\eta>0$.
For all $x<2$ there exists $\mathcal C(x)\in\mathbb{R}$ such that, when $t\to+\infty$,
\begin{equation}\label{eq:main}
Q(t,xt) =
\begin{cases}
\mathcal C(x) \,\exp\bigl(-t^2\mathcal{F}(x)\bigr)\left(1+O(t^{-1})\right)
&\mbox{if }x<x_*,
\\[1em]
\mathcal C(x) \,\vartheta\bigl(t \mathcal{L}(x) \,\big|\, \frac{\i\pi }{\mathcal{K}(x)}\bigr) \,
\exp\bigl(-t^2\mathcal{F}(x) +\mathcal{A}\log t \bigr)
\left(1+O(t^{-1/2})\right)
&\mbox{if }x_*<x<2,
\end{cases}
\end{equation}
where $\mathcal{K}$ is given in Definition~\ref{def:K(x)}, $\mathcal{F}, \mathcal{L}$ are given in Definition~\ref{def:Fq}, and  
$\mathcal{A}$ is a constant depending only on $\eta$, cf.~\eqref{eq:Afinal}.
These asymptotics are uniform for $x\in(-\infty,x_*-\delta]\cup[x_*+\delta,2-\delta]$ for all $\delta>0$.
\end{theorem}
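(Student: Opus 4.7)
My plan is to express $Q(t,xt)$ as a Fredholm determinant and to carry out a Deift--Zhou nonlinear steepest-descent analysis, with the associated Riemann--Hilbert problem (RHP) driven by the explicit equilibrium measure $\rho_{\eta,x}$ produced by Theorem~\ref{thm:minimizer}. Since $\mathscr{D}(\lambda)$ is determinantal with kernel~\eqref{eq:dBKernel}, one has
\begin{equation}
Q(t,s) \,=\, \det_{\ell^2(\mathbb{Z}')}\bigl(I - \varphi_{\eta,s}\,\mathsf{K}(\cdot,\cdot;t)\bigr),\qquad \varphi_{\eta,s}(\xi) \,=\, \bigl(1+\e^{-\eta(\xi-s)}\bigr)^{-1},
\end{equation}
and the discrete Bessel kernel is integrable of Its--Izergin--Korepin--Slavnov type, with Bessel-function building blocks. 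I would turn the determinant into a $2\times 2$ continuous matrix RHP either by the shrinking-contour interpolation that replaces the discrete sum by a contour integral around $\mathbb{Z}'$, or by using the Bessel-type Lax pair whose isomonodromic tau function equals $Q$, in the spirit of~\cite{ruzza2025bessel,matetski2025polynuclear}.

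Next, I would introduce the $g$-function
\begin{equation}
g(z) \,=\, 2\int_{\mathbb{R}}\log(z-\mu)\,\rho_{\eta,x}(\mu)\,\d\mu,
\end{equation}
and perform the usual chain of transformations $Y \to T \to S \to R$. The first absorbs $g$ into the exponential factor, producing the leading contribution $-t^2\mathcal{F}(x)$ once the $2\mu(\log|\mu|-1)$ and $V_{\eta,x}(\mu)$ terms of $\mathcal{E}_{\eta,x}$ are accounted for. The second opens lenses around the two bands $(a,b)$ and $(c,d)$, while the saturated intervals $(-\infty,a)$ and $(b,c)$ are handled by the customary conjugation that swaps the upper constraint $\rho_{\eta,x}\equiv 1$ for the usual lower one. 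The resulting jumps are oscillatory on the bands and exponentially close to the identity off small neighborhoods of the endpoints.

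In the middle phase $x_*<x<2$, the equilibrium measure is two-cut and the outer parametrix lives on the elliptic Riemann surface $y^2=(z-a)(z-b)(z-c)(z-d)$. Its entries are ratios of the theta functions~\eqref{eq:theta} with modular parameter $\i\pi/\mathcal{K}(x)$, evaluated at an Abel image shifted by the $B$-period of $\d g$, which is precisely $\mathcal{L}(x)=-\partial_K\mathcal{U}(K)|_{K=\mathcal{K}(x)}$ of~\eqref{eq:defL}; this yields the factor $\vartheta\bigl(t\mathcal{L}(x)\big|\tfrac{\i\pi}{\mathcal{K}(x)}\bigr)$ in~\eqref{eq:main}. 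Airy-type local parametrices are then built at each of the four soft edges $a,b,c,d$; in addition, a local parametrix at the transition point $\mu=0$ of the potential $2\mu(\log|\mu|-1)$ --- where the half-complementation $\rho-\mathbf{1}_{(-\infty,0]}$ changes regime --- is needed, and this Fisher--Hartwig-like model problem is what accounts for the logarithmic correction $\mathcal{A}\log t$ and for the $O(t^{-1/2})$ error. In the one-cut phase $x<x_*$ the outer parametrix degenerates to an algebraic solution on the rescaled VKLS arc, no theta factor survives, and the error improves to $O(t^{-1})$.

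Finally, to identify the prefactor $\mathcal{C}(x)$ and to compute $\mathcal{A}$ explicitly, I would derive a differential identity for $\partial_s\log Q$ (or, more convenient, $\partial_\eta\log Q$) as a contour integral of the RHP solution, following~\cite{cafasso2021airy,CafassoRuzza23}, and integrate it from a reference value where the behaviour is under control --- for instance, from $x\to 2^-$, where the saturated interval $(b,c)$ shrinks and $Q(t,xt)\to 1$. The \emph{main obstacle} I anticipate is the delicate bookkeeping required to match the theta argument produced by the outer parametrix with exactly $t\mathcal{L}(x)$ in the normalization~\eqref{eq:theta}, and to verify the consistency of $\mathcal{C}(x)$ across the third-order transition at $x_*$ where the genus of the spectral curve drops. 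Both steps rely essentially on the explicit identities for $\mathcal{U}$ and $\mathcal{K}$ underlying Definition~\ref{def:K(x)} and on the endpoint parametrization~\eqref{eq:maintheorem:endpoints}.
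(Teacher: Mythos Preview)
Your overall strategy---Fredholm determinant, continuous RHP, $g$-function from the equilibrium measure, lens opening, theta-function outer parametrix in the two-cut phase, Airy parametrices at the four endpoints---matches the paper's architecture (Sections~\ref{sec:notationsDeltaNabla}--\ref{sec:DeiftZhou2cut}). But two of your key claims are incorrect and would derail the argument.

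First, there is \emph{no} local parametrix at $\mu=0$, and the term $2\mu(\log|\mu|-1)$ does not create a Fisher--Hartwig singularity in the RHP. In the paper's setup the asymptotic normalization of the Bessel matrix $\boldsymbol\Phi$ absorbs the $z(\log z-1)$ behaviour at infinity, and the effective potential $\wt V_{\eta,x}(z)=t^{-1}\log(1+\e^{\eta t(z-x)})$ is smooth; the only local parametrices are the four Airy ones. Consequently the small-norm analysis gives $\boldsymbol J_R=\boldsymbol{\mathrm I}+O(t^{-1})$ uniformly, and the RHP yields asymptotics for the \emph{logarithmic derivatives} $\wh\alpha(t,x)=-\tfrac12\partial_t\log Q(t,xt)$ and $\wh\beta(t,x)=Q(t,xt-1)/Q(t,xt)$ with error $O(t^{-2})$ (Propositions~\ref{prop:finalqminus}, \ref{prop:finalqplus}), not for $\log Q$ itself.

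Second, both $\mathcal A\log t$ and the $O(t^{-1/2})$ error arise at the \emph{integration} stage (Section~\ref{sec:proof of main thm}), not from any parametrix. One first integrates $\wh\alpha(0,\tau)$ in $\tau$: the periodic subleading term $\tau^{-1}\mathcal X(0,\tau)$ has a nonzero mean $\overline{\mathcal X}(0)$, and $\int^t\overline{\mathcal X}(0)\tau^{-1}\d\tau$ produces the $\mathcal A\log t$ term (so $\mathcal A=-2\overline{\mathcal X}(0)$, cf.~\eqref{eq:Afinal}). One then writes $\log Q(t,s)-\log Q(t,0)=-\sum_{i=1}^{s}\log\wh\beta(i/t,t)$; the summand contains $\log\vartheta\bigl(t\mathcal L(i/t)+\cdots\bigr)$, and controlling this oscillatory sum requires van der Corput bounds (Lemmas~\ref{lem:van der corput}--\ref{lem:kusmin-landau_weighted}), which is exactly what limits the error to $O(t^{-1/2})$. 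Your proposed integration in $x$ from $x\to 2^-$ would instead force you through the phase transition where the genus drops, precisely the regime the paper says it cannot handle uniformly (Remark~\ref{rem:uniformly}); the paper integrates in $t$ at $s=0$ and then sums in $s$ at fixed $t$, staying well inside a single phase. Finally, your $g$-function $2\int\log(z-\mu)\rho_{\eta,x}(\mu)\,\d\mu$ diverges because $\rho_{\eta,x}\equiv1$ on a half-line; the correct object is built from the half-complemented density $\mathfrak h_*=\rho_{\eta,x}-\mathbf 1_{(-\infty,0]}$ together with an explicit $\log z$ term, as in~\eqref{eq:gfrakright}.
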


\begin{figure}
    \centering
    \frame{\includegraphics[width=0.45\linewidth]{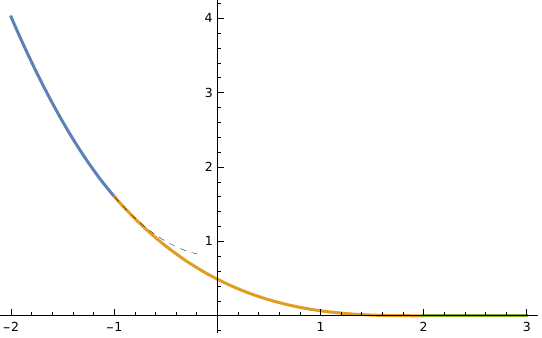}}\\[1em]
    \frame{\includegraphics[width=0.4\linewidth]{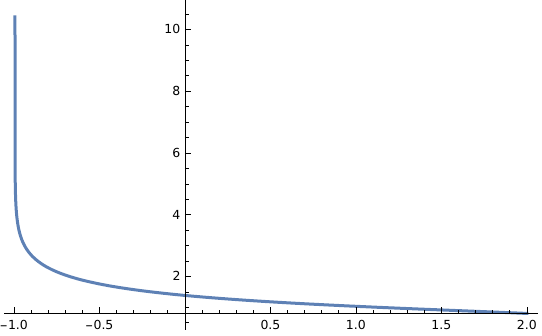}}
    \quad
    \frame{\includegraphics[width=0.4\linewidth]{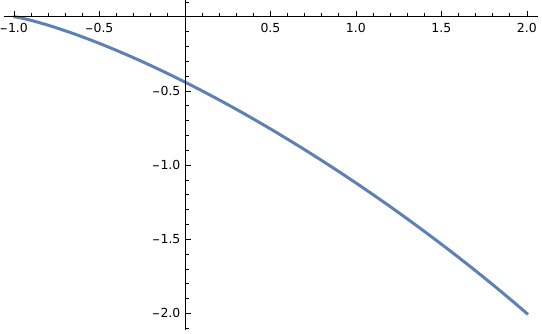}}
    \caption{
    Top: plot of the rate function $\mathcal{F}(x)$ (the different colors correspond to $x\leq x_*$, $x_*<x<2$, and $x\geq 2$; the thin black dashed part is the analytic continuation of the parabola defining $\mathcal{F}(x)$ for $x\leq x_*$).
    Bottom: plots of the functions $\mathcal{K}(x)$ and $\mathcal{L}(x)$, respectively, for $x\in(x_*,2)$.
    We take $\eta=\log 5$, such that $x_* = -\frac{8}{5\, \log 5} = -0.994136\ldots$}
    \label{fig:rate function F}
\end{figure}

The region $x>2$, which is not described in this theorem, is covered by the results contained in \cite{das2025large}; see Remark \ref{rem:dlm} below.
The proof of Theorem~\ref{thm:main} is given in Section~\ref{sec:proof of main thm} and is based on the Riemann--Hilbert asymptotic analyses carried out in Sections~\ref{sec:DeiftZhou1cut} and~\ref{sec:DeiftZhou2cut}.
This theorem gives all divergent terms of the asymptotic expansion of $\log Q(t,xt)$, as well as an explicit periodic contribution in the $O(1)$~term; see Remarks~\ref{rem:constant order} and~\ref{rem:uniformly} for comments on the constant-order term.
The coefficient~$\mathcal{A}$ of the logarithmic term is given by a rather involved expression, namely, as the average of a periodic function expressed as a rational combination of elliptic theta functions and their first derivatives.
It is not clear whether this expression can be simplified nor whether the logarithmic term possesses a physical explanation; it would be interesting to clarify the nature of this logarithmic contribution.
Nevertheless, the explicit expression which we provide is still suitable for practical computations; see Figure~\ref{fig:A} for a plot of $\mathcal{A}$ as a function of $\eta$ and Remark~\ref{rem:A}.

The rate function~$\mathcal{F}(x)$ undergoes two phase transitions of the third order, as stated next.

\begin{theorem}[Phase transitions] \label{thm:phase transition}
    For all $\eta>0$, $\mathcal{F}\in C^2(\mathbb{R}) \setminus C^3(\mathbb{R})$ and the following properties hold.
    \begin{enumerate}[leftmargin=*]
        \item \label{item:TW phase transition} As $x\uparrow 2$, we have $\mathcal{F}(x) = \frac{1}{12} (2-x)^3 +O(2-x)^4$.
        \item \label{item:BOAC phase transition} The function $\mathcal{F}''$ is not H{\"older} continuous at $x_*$ for any H{\"older} exponent.
    \end{enumerate}
 \end{theorem}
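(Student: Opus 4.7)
The plan is to work directly from Definition~\ref{def:Fq}, reducing everything to the behavior of the function $\mathcal{U}$ at the two critical values $K=\eta/2$ and $K\to+\infty$. The key algebraic input is that, for $x\in(x_*,2)$, implicit differentiation of $\mathcal{U}(K)-K\mathcal{U}'(K)=-\eta x/2$ at $K=\mathcal{K}(x)$ yields $\mathcal{K}'(x)=\eta/(2\mathcal{K}(x)\mathcal{U}''(\mathcal{K}(x)))$; combined with $\mathcal{L}(x)=-\mathcal{U}'(\mathcal{K}(x))$ this gives the clean formulas
\begin{equation}
\label{eq:FppSketch}
\mathcal{F}''(x)=\eta-\frac{\eta^2}{2\mathcal{K}(x)},\qquad \mathcal{F}'''(x)=\frac{\eta^3}{4\mathcal{K}(x)^3\,\mathcal{U}''(\mathcal{K}(x))},\qquad x\in(x_*,2).
\end{equation}
Since $\mathcal{K}$ is a bijection $(x_*,2)\to(\eta/2,+\infty)$ by Proposition~\ref{prop:f1f2body}, the entire analysis reduces to studying the two degenerations of $\mathcal{U}$ at the endpoints of this range.

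For the transition at $x=2$, I will Taylor-expand $\mathcal{U}$ near $K=\eta/2$. Setting $\epsilon=K-\eta/2$ and $u=\eta/(2K)-1$, the identity $-2Ku=2\epsilon$ holds exactly; combining this with the quasi-periodicity $\vartheta_{11}(1+u|\tau)=-\vartheta_{11}(u|\tau)$ and the odd Taylor series of $\vartheta_{11}(\,\cdot\,|\tau)$ produces $\mathcal{U}(K)=2\epsilon-2\epsilon^2+O(\epsilon^3)$ as $\epsilon\to 0^+$. In particular $\mathcal{U}(\eta/2)=0$, $\mathcal{U}'(\eta/2)=2$, and $\mathcal{U}''(\eta/2)=-4$, so $\mathcal{K}(2^-)=\eta/2$, $\mathcal{L}(2^-)=-2$, $\mathcal{F}'(2^-)=\mathcal{F}''(2^-)=0$, and $\mathcal{F}'''(2^-)=-\tfrac12$ by~\eqref{eq:FppSketch}. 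Continuity of $\mathcal{F}$, inherited from the continuous dependence on $x$ of the variational problem $\min_{\mathfrak{h}}\mathcal{E}_{\eta,x}[\mathfrak{h}]$, gives $\mathcal{F}(2^-)=0$, and Taylor's theorem then yields $\mathcal{F}(x)=\tfrac{1}{12}(2-x)^3+O((2-x)^4)$, proving item~(1); the jump $\mathcal{F}'''(2^-)=-\tfrac12\neq 0=\mathcal{F}'''(2^+)$ simultaneously gives $\mathcal{F}\notin C^3(\mathbb{R})$.

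For the transition at $x=x_*$, the modular parameter $\tau=\i\pi/K$ degenerates as $K\to+\infty$, so I will apply the Jacobi modular transformation $\tau\mapsto-1/\tau=\i K/\pi$, under which the nome $\tilde q=\e^{-K}$ is exponentially small. Expanding $\vartheta_{11}(w|\i K/\pi)=2\tilde q^{1/4}\sin(\pi w)-2\tilde q^{9/4}\sin(3\pi w)+O(\tilde q^{25/4})$ at the imaginary argument $w=-\i\eta/(2\pi)$, together with the hyperbolic identity $3\sinh(\eta/2)-\sinh(3\eta/2)=-4\sinh^3(\eta/2)$, I expect to obtain
\begin{equation}
\label{eq:ULargeKSketch}
\mathcal{U}(K)=(1-\e^{-\eta})-8\e^{-\eta/2}\sinh^3(\eta/2)\,\e^{-2K}+O(\e^{-4K}),\qquad K\to+\infty.
\end{equation}
Substituting~\eqref{eq:ULargeKSketch} into $\mathcal{U}(K)-K\mathcal{U}'(K)=-\eta x/2$ and using $-\eta x_*/2=1-\e^{-\eta}$ then produces $x-x_*\asymp\mathcal{K}(x)\,\e^{-2\mathcal{K}(x)}$, hence the logarithmic inversion $\mathcal{K}(x)=\tfrac12\log(1/(x-x_*))+O(\log\log(1/(x-x_*)))$ as $x\downarrow x_*$. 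This shows $\mathcal{F}''(x)\to\eta=\mathcal{F}''(x_*^-)$ (which completes the $C^2$ matching at $x_*$), while
\[
\mathcal{F}''(x)-\mathcal{F}''(x_*)\sim-\frac{\eta^2}{\log(1/(x-x_*))}
\]
vanishes strictly slower than any positive power of $x-x_*$, so $\mathcal{F}''$ fails to be H\"older continuous at $x_*$ for every positive exponent, proving item~(2).

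The main obstacle will be the rigorous derivation of~\eqref{eq:ULargeKSketch} together with its first two derivatives in $K$: the Jacobi imaginary transformation must be applied in a regime where both the modular parameter and the argument $\eta/(2K)$ of $\vartheta_{11}$ tend to $0$, and the delicate cancellation producing the $\sinh^3(\eta/2)$ coefficient (and in particular the vanishing of all $O(\e^{-K})$ corrections) must be tracked carefully. Once~\eqref{eq:ULargeKSketch} is established, the remainder of the argument reduces to elementary calculus and implicit function inversion.
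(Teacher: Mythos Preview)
Your proposal is correct and in fact somewhat cleaner than the paper's own argument. Both proofs rest on the same underlying asymptotics of $\mathcal{U}(K)$ at the two endpoints $K\downarrow\eta/2$ and $K\to+\infty$ (your expansion~\eqref{eq:ULargeKSketch} agrees with the paper's, since $(1-\e^{-\eta})^3/\e^{-\eta}=8\e^{-\eta/2}\sinh^3(\eta/2)$), but they are organized differently. The paper substitutes these expansions into the explicit Weierstrass formula~\eqref{eq:explicitFWeierstrass} for $\mathcal{F}$ and carries out a direct series computation: for item~(1) this produces $\mathcal{F}(x)=\tfrac{2}{3}\varepsilon^3+O(\varepsilon^4)$ after expanding $\mathcal{U},\mathcal{V},\wp$ and $x(K)$ simultaneously, while for item~(2) it requires expanding the remainder $\mathsf{R}(K)=\mathcal{F}(x(K))-\text{parabola}$ out to order $\e^{-4K}$ and checking $\mathsf{R}/(x-x_*)^{2+\epsilon}\to\infty$ against $\mathsf{R}/(x-x_*)^2\to0$. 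You instead pass through the closed identity $\mathcal{F}''(x)=\eta-\eta^2/(2\mathcal{K}(x))$ (which is the paper's~\eqref{eq:finalidentityFsecond}), so that both items reduce to the single-variable behavior of $\mathcal{K}(x)$: for item~(1) you read off $\mathcal{F}'''(2^-)=-1/2$ from $\mathcal{U}''(\eta/2)=-4$, and for item~(2) the failure of H\"older continuity is immediate from $\mathcal{F}''(x)-\eta\sim-\eta^2/\log(1/(x-x_*))$. Your route avoids the need to expand $\mathcal{F}$ itself via~\eqref{eq:explicitFWeierstrass} and thus bypasses the heavier algebra; the paper's route has the advantage of not needing a separate argument for $\mathcal{F}(2^-)=0$ (which you obtain from the variational formulation). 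The one point worth handling with a little care is the uniformity of~\eqref{eq:ULargeKSketch} under differentiation in $K$, but this is routine once the series representation is in place.
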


The proof is in Section~\ref{sec:propertiesrateproof}.
At~$x=2$, $\mathcal{F}(x)$ exhibits a third-order phase transition of Tracy--Widom type, related to the asymptotic fluctuation result $Q(t,2t-st^{1/3}) \to F_{\mathrm{GUE}}(-s)$, $F_{\mathrm{GUE}}$ being the cumulative distribution function of the GUE Tracy--Widom distribution; see~\cite{aggarwal_borodin_wheeler_tPNG}.
At $x=x_*$, $\mathcal{F}(x)$ exhibits a third-order phase transition due to the coalescence of the two saturated regions discussed above and which shows similarities with the ``birth of a cut'' in random matrix Theory~\cite{eynard2006universal,claeys2008birth,mo2008riemann,bertola2009mesoscopic}.

\medskip

Several comments are in order.

\begin{remark}\label{rem:dlm}
    The asymptotic relation as $t\to+\infty$
    \begin{equation}
    \label{eq:DLM}
        Q(t,xt) \sim 1 - \exp\bigl(-t\, \Psi_+(x)\bigr), \qquad \text{for }x>2,
    \end{equation}
    where $\Psi_+$ is an explicit (strictly positive) function, has been established in~\cite{das2025large}, see~Equation~(1.16) there.
    This result covers the region $x>2$, which is left out in Theorem~\ref{thm:main} and implies that $\lim_{t\to+\infty} t^{-2}\log Q(t,xt) =- \mathcal{F}(x)$ for all~$x\in\mathbb{R}$.
    Indeed, in \cite{das2025large}, the authors already proved that the rate function $\mathcal{F}(x)$ exists, that it is $C^1(\mathbb{R})$, that it vanishes identically when~$x \ge 2$, and that it equals $\frac 12\eta x^2+1-\e^{-\eta}$ when~$x$ is sufficiently large and negative (without identifying the threshold value~$x_*$).
    Furthermore, building upon a result of~\cite{CafassoRuzza23} (see also~\cite{matetski2025polynuclear}), they speculated that~$\mathcal{F}$ solves a second-order nonlinear ODE obtained as a scaling limit of the cylindrical Toda equation (see Equation~5.3 in \emph{op. cit.}) and gave an explicit formula for the solution to this ODE that was conjecturally the relevant one, see~\cite[Section 5.1]{das2025large}.
    That formula does not match the expression for~$\mathcal{F}$ given in~\eqref{eq:Fq} and therefore it is incorrect.
    Moreover, it can be checked that the rate function~$\mathcal{F}$ in~\eqref{eq:Fq} does not satisfy the Equation~5.3 in \emph{op.~cit.} on~$(x_*,2)$.
\end{remark}

\begin{remark} \label{rem:constant order}
    The expression of the constant order term $\mathcal C(x) \,\vartheta\bigl(t \mathcal{L}(x) \big| \frac{\i\pi }{\mathcal{K}(x)}\bigr)$ in the expansion \eqref{eq:main}, for $x_*<x<2$, deserves a brief explanation. From the point of view of the asymptotic expansion this form might appear unconventional, since the term $\mathcal{C}(x)$ is not given explicitly. Nevertheless, such decomposition is motivated by the fact that the oscillatory term $\vartheta\bigl(t \mathcal{L}(x) \,\big|\, \frac{\i\pi }{\mathcal{K}(x)}\bigr)$ fully captures the constant-order contribution to the asymptotic expansion of the discrete logarithmic derivative $\log \frac{Q(t,s)}{Q(t,s-1)}$, in the transition regime $x_* t < s < 2t$. This is natural (and relevant) in view of the fact that $\log \frac{Q(t,s)}{Q(t,s-1)}$ solves the cylindrical Toda equation (a reduction of the 2D Toda equation)~\cite{CafassoRuzza23,matetski2025polynuclear}; see Section~\ref{subs:asymptotic Toda}.
\end{remark}

\begin{remark} \label{rem:uniformly}
The asymptotic relations in~\eqref{eq:main} hold uniformly for~$x$ at a bounded distance from~$x_*$ and~$2$. The uniform control of the expansion of $\log Q(t,xt)$ for $x \in \mathbb{R}$ would require a finer Riemann--Hilbert analysis at the transition points $x=x_*$ and $x=2$.
This analysis will likely be instrumental in addressing the \emph{constant problem}, namely, determining the constant~$\mathcal C(x)$ appearing in Theorem~\ref{thm:main}. 
Indeed, this is a recurring delicate problem in the asymptotic analysis of Fredholm determinants (and, more generally, of tau functions of integrable systems) \cite{tracy1991asymptotics,ehrhardt2006dyson,ehrhardt2010asymptotics,Deift_Its_Krasovsky_Airy,deift2007widom,bothner2018large,basor1991fisher,baik2008asymptotics,krasovsky2009large,Lisovyy_Dyson,BleherBothnerConstant,bothner2019analysis} 
and we therefore leave the determination of this constant to future work.
\end{remark}

\begin{remark}
\label{rem:q0}
    In the limit $\eta\to+\infty$ we have $x_*\uparrow 0$ and the rate function $\mathcal{F}(x)$ converges, for $x>0$, to the lower-tail large deviation rate function of the last passage percolation through a Poisson planar environment derived in \cite{seppalainen_98_increasing}; see~\Cref{subs: limit q0} for details.
\end{remark}

\begin{remark}
As we mentioned above, the point~$x=2$ is a third-order phase transition of Tracy--Widom type.
The third-order phase transition at~$x_*$ is of a different nature.
Namely, it is induced by the discrete character of the process which forces the formation of nontrivial, macroscopically spaced, saturated regions.
To the best of our knowledge, this is the first time such a discreteness-induced birth of a cut is studied.
The closest analog appears to be the phase transition of the symmetric six-vertex model with domain wall boundary conditions between the disordered and anti-ferroelectric phases first studied in detail by P.~Zinn-Justin~\cite{zinn2000sixvertex}.
Nevertheless, in that case, the parameter governing the phase transition of the free energy is analogous to our~$\eta$ in the regime when~$\eta\to+\infty$ and the phase transition is of infinite order. A double third-order phase transition also occurs in the analysis of the partition function of the five-vertex model with certain boundary condition as found in the recent articles \cite{burenev2024thermodynamics,colomo2025five}.
Another related discreteness-induced phase transition is the Douglas--Kazakov one~\cite{douglas1993large,forrester2011non} occurring in the context of Euclidean $U(N)$~Yang--Mills theory on the two-dimensional sphere (see also \cite{LevyMaida,LiechtyWang2016}).
Nevertheless, it is more analogous to the phase transition of the rate function $\mathcal{F}(x)$ at $x=2$, the discontinuity in the third derivative of the Yang--Mills free energy being linked to Tracy--Widom statistics in a related matrix model~\cite{forrester2011non}.    
\end{remark}

\subsection{Methods}

We achieve the asymptotic result contained in Theorem~\ref{thm:main} starting from a \emph{discrete} Riemann--Hilbert characterization of~$Q(t,s)$ developed in~\cite{CafassoRuzza23} following the work of A.~Borodin on discrete integrable kernels~\cite{borodin2000riemann,borodin2003discrete} (see~\Cref{sec:notationsDeltaNabla}).
After a dressing procedure (which relies on a novel construction involving Bessel and Hankel functions, whose analytic properties require special care) we reformulate the characterization of~$Q(t,s)$ in terms of \emph{continuous} Riemann--Hilbert problems, such that we are able to apply the Deift--Zhou nonlinear steepest descent method~\cite{deift1993steepest}.
The main ingredient of the method, the so-called $g$-function, is closely related to the equilibrium measure minimizing the logarithmic energy~$\mathcal{E}_{\eta,x}$, and so our Riemann--Hilbert approach builds upon the study of this variational problem (which we carry out in Section~\ref{sec:variational}). Depending on the value of $x$, the equilibrium problem exhibits two qualitatively different regimes. Accordingly, we perform separate steepest descent analyses for the cases $x < x_*$ and $x_* < x < 2$, corresponding to one-cut and two-cut equilibrium measures respectively (see Sections \ref{sec:DeiftZhou1cut} and \ref{sec:DeiftZhou2cut}).  

This approach is inspired by the asymptotic analysis of \emph{discrete} orthogonal polynomials in the regime of large degree; see \cite{BKMM2003,bleher2011uniform}.
In this setting, the discreteness of the orthogonality measure gives rise to \emph{saturated} regions in the associated equilibrium measure, which describes the limiting density of roots of the orthogonal polynomials.
The presence of saturated regions leads to the appearance of hyperelliptic theta functions in the asymptotics, in a similar way as for \emph{continuous} orthogonal polynomials associated with non-convex potentials \cite{DeiftEtAl_multicut}.

A central technical challenge in our approach is the explicit determination of the endpoints of the equilibrium measure in the two-cut phase.
In general, these endpoints enter the Riemann-Hilbert steepest-descent analysis in a critical way and ultimately determine the form of the asymptotic expansion, yet they are rarely available in closed form in terms of the physical parameters of the model (which are $x$ and~$\eta$ in our case).
The endpoints are characterized only implicitly, though a system of coupled and typically transcendental equations arising from the Euler--Lagrange variational conditions for the equilibrium problem; see~\cite[Chapter~6]{DeiftBook}

Apart from a small number of exceptional cases in the literature (most notably the analysis by P.~Zinn-Justin of the six-vertex model with domain-wall boundary conditions in the anti-ferroelectric phase \cite{zinn2000sixvertex}, which inspired subsequent works by P.~Bleher and K.~Liechty \cite{bleher2010exact,bleher2013random} and by V.~Gorin and K.~Liechty \cite{gorin2025boundary}) such equations for the endpoints generally resist explicit solutions; see also \cite{deift1994collisionless,bothner2015asymptotic,girotti2021rigorous,girotti2023soliton}.
By constrast, in the present work we show that the equilibrium conditions can be brought into a substantially more tractable form through a sequence of nontrivial transformations. In particular, via an elliptic uniformization, the original system of transcendental equations is reduced to three linear relations together with a single remaining transcendental constraint; see \Cref{sec:minimizationxq<x<2}. This constraint implicitly defines the half-period $\mathcal{K}(x)$ of the associated elliptic curve (see \Cref{def:K(x)}) which enters our formulas as a crucial parameter. We stress that, unlike in the analysis of the six-vertex model by P.~Zinn-Justin, where remarkable cancellations lead to a purely linear system (see also~\cite[Section~2]{bleher2010exact}), no such simplifications occur in our situation.
The presence of the implicit function $\mathcal{K}$ leads to further involved calculations, such as those of \Cref{app:monotonic}. 

Our Riemann--Hilbert asymptotic analysis also enables us to compute subleading contributions beyond the rate function.
In particular, we obtain an explicit formula for the divergent term of order $\log t$ in the expansion of $\log Q(t,xt)$. The presence of such a logarithmic term is not universal in model of this type and its origin is not fully clear to us. For example, in the anti-ferroelectric six-vertex model studied by P.~Bleher and K.~Liechty \cite{bleher2010exact} the free energy expansion lacks an $O(\log t)$ term, but it does arise in the six-vertex model’s disordered phase \cite{bleher2006exact}, or in the asymptotic expansion of lower tail probabilities in geometric last passage percolation recently analyzed by S.-S.~Byun, C.~Charlier, P.~Moreillon, and N.~Simm~\cite{byun2025precise}.

Strictly speaking, the Riemann--Hilbert analysis only provides precise asymptotics for (discrete) partial derivatives of $\log Q(t,s)$, namely, for
\begin{equation}
    \alpha(t,s)=-\frac 12\,t\,\partial_t\log Q(t,s) \qquad
    \text{and}
    \qquad 
    \beta(t,s)=\frac{Q(t,s-1)}{Q(t,s)}-1,
\end{equation}
reported in Propositions~\ref{prop:finalqminus} and~\ref{prop:finalqplus}. As a result, the expansion of~$\log Q(t,xt)$ is obtained by an integration of these asymptotics, which we perform in Section~\ref{sec:proof of main thm}. This poses an additional technical challenge as the integration of the discrete logarithmic derivative $\beta$ involves sums with highly oscillatory terms, whose cancellation properties must be understood to extract the correct asymptotic behavior. We resolve this problem using tools from analytic number theory; specifically, we apply van der Corput-type estimates for oscillatory sums to show that the remainder from these sums is sharply bounded by $O(t^{-1/2})$, leading to the error term in~\eqref{eq:main}.

\subsection{Open directions} \label{subs:open_directions}


The results of this paper open several natural directions for investigation, besides the constant problem already mentioned in \Cref{rem:uniformly}. We briefly discuss some of them here.

\medskip

First, let us point out that the analysis presented in this paper should apply to multiplicative averages $\mathbb{E}\left[\prod_{i\ge 1} \varsigma(p_i -s) \right]$ of more general discrete determinantal point processes $(p_i)$ with log-gas structure.
Natural and interesting first candidates would be other instances of the Schur measure~\cite{okounkov2001infinite} such as the Meixner ensembles.

In \Cref{sec:applications} we describe three applications of our main asymptotic result. The first concerns the explicit characterization of the lower-tail large deviation rate function of the $q$-deformed polynuclear growth model, a stochastic growth process interpolating between the polynuclear growth and the KPZ equation. While Theorem~\ref{thm:lower-tail} describes the leading-order asymptotics, a natural open problem is to extend the description to the full tail distribution uniformly in $x,t$. In particular, in the limit $x\uparrow 2$ one should observe a transition from a large deviation regime to moderate deviation and finally to fluctuation regime, known to be governed by the Tracy--Widom distribution.

In view of the fact that multiplicative averages of the Meixner ensembles are known to describe the integrated current of the stochastic six-vertex model and, in an appropriate scaling, of the asymmetric simple exclusion process \cite{BO2016_ASEP,borodin2016stochastic_MM}, extending our asymptotic analysis to such special instances of the Schur measure would provide a complete and explicit description of the tails of these stochastic particle systems, a problem which has remained elusive despite recent progresses in the area; for example, see~\cite{tracy2009asymptotics,aggarwal2023asep,landon2023tail,aggarwal2024scaling,dlm24,Ghosal_Silva_6VM}. This approach was employed in the recent work \cite{Ghosal_Silva_6VM}, where authors  presented a Riemann--Hilbert analysis of the Meixner ensemble in the one-cut regime to describe moderate deviations (and not large deviations) of the stochastic six-vertex model.

The second application concerns the positive-temperature discrete Bessel process, which may be interpreted as a system of discrete free-fermions in a linear potential at positive temperature. In this case the rate function $\mathcal{F}$ itself serves as the lower tail large deviation rate function for the rightmost occupied state. At $x_*$, the rate function develops a singularity corresponding to a third-order phase transition, as identified in \Cref{thm:phase transition}. In \Cref{rem:condensation} we observe that such phase transition is evidence of a ``condensation'' phenomenon of the positive temperature discrete Bessel process (see \Cref{fig:condensation}) the nature of which is at this point unclear and warrants further studies.

As already pointed out at the end of \Cref{subs:intro asymptotic expansion}, the same phase transition corresponds, in the Poissonized Plancherel picture, to the appearance of a discreteness induced ``birth-of-a-cut'' phenomenon in the equilibrium measure around the point~$x_*$. As such the analysis of the ``condensation'' of the positive temperature discrete Bessel process should relate to the analysis of microscopic fluctuations of the point process $\mathscr{D}(\lambda)$ in the band region around the macroscopic location $-1-\e^{-\eta}$, in the critical regime $x\downarrow x_*$ when the band vanishes. We hope to clarify these aspects in future works.

As shown in~\cite{CafassoRuzza23}, $Q(t,s)$ solves the (bilinear form of the) cylindrical Toda equation, which is the radial reduction of the 2D Toda equation.
Hence, our results yield the asymptotic behavior of a distinguished class of solutions with step-like initial conditions, as we discuss in Section~\ref{subs:asymptotic Toda}.
This type of solutions has been the subject of an intense line of research in integrable systems (starting from the seminal work~\cite{gurevich1973decay}) and it was studied in great depth for many integrable equations such as the Korteweg--de~Vries (KdV) equation, the modified KdV equation, the nonlinear Schr\"odinger equation, the Toda equation, and others; see Section~\ref{subs:asymptotic Toda} for a discussion.
Extending our results to general step-like initial conditions, as well as studying the behavior of these solutions around the two transition points (such as $x_*$ and $2$ for our class of solutions) are interesting open problems that we will address in the future.

\subsection*{Acknowledgments}

MC acknowledges the support of the Centre Henri Lebesgue, program ANR-11-LABX0020-0, and the International Research Project PIICQ, funded by CNRS Math\'ematiques. 
MC is grateful to the Group of Mathematical Physics (Grupo de Física Matemática, FCT – Portuguese national funding, UID/00208/2025) for supporting a one-week visit to Instituto Superior T\'ecnico in Lisbon. MC thanks Manuela Girotti for fruitful discussions.

MM is grateful to Alexei~Borodin, Percy~Deift, Vadim~Gorin, and Ken~McLaughlin for useful discussions during the Simons Symposium ``Solvable Lattice Models \& Interacting Particle Systems~(2025)'' and to Sung-Soo~Byun for useful comments.   

The work of GR is funded by FCT - Fundação para a Ciência e a Tecnologia, I.P., through national funds, under the project UID/04561/2025.
GR is grateful to Giordano Cotti, Gabriele Degano, and Davide Masoero for useful discussions.
GR acknowledges the support provided by the CNRS Mathématiques through a \emph{poste rouge} for a three-month stay at the UMR LAREMA (Angers) and by the National University of Singapore for a three-week visit, during which a large part of this work was carried out.

We are grateful to Tamara Grava for insightful discussions on solutions of integrable equations with step-like initial conditions.

\section{Applications} \label{sec:applications}

The study of the multiplicative average~\eqref{eq:defQ} is especially interesting due to the fact that the function~$Q(t,s)$ describes, thanks to relations descending from symmetric functions~\cite{borodin2016stochastic_MM,aggarwal_borodin_wheeler_tPNG,IMS_matching}, at the same time the law of the height function of the $q$-deformed polynuclear growth model and the edge distribution of the positive-temperature discrete Bessel process, which is a model of free fermions at positive temperature.
In this section we apply~\Cref{thm:main} to these situations.

\subsection{$q$-deformed polynuclear growth}

The $q$-deformed polynuclear growth ($q$-PNG) model, where $q$ is a parameter in $(0,1)$, is a stochastic growth process introduced by A.~Aggarwal, A.~Borodin, and M.~Wheeler~\cite{aggarwal_borodin_wheeler_tPNG} as a solvable deformation of the famous polynuclear growth (PNG) model (recovered when $q=0$); see~\cite{Praehofer2002,imamura2005polynuclear,johansson_rahman_multitime,johansson_rahman_inhomogeneous,matetski2025polynuclear} and references therein. 

The $q$-PNG model is a stochastic evolution of height profiles $h(x,t)$ over time $t\in\mathbb{R}_{>0}$.
The height profiles are integer-valued piecewise constant functions of~$x\in\mathbb{R}$ with unit jump discontinuities; we also allow the height to take value $-\infty$.
We denote the set of such functions by
\begin{equation}\label{eq:piecewise constants}
    \mathsf{PW}(\mathbb{R}) \,=\, \bigl\lbrace h:\mathbb{R} \to \mathbb{Z} \cup \{-\infty\} \,  : \, h(x) - h(x_\pm)  \in \lbrace 0, 1, + \infty \rbrace \, \forall x \in \mathbb{R}\bigr\rbrace\,,
\end{equation}
where $f(x_\pm) = \lim_{\epsilon \downarrow 0} f(x \pm \epsilon)$ and we agree that~$-\infty -(- \infty)=0$.
We can view such height profiles as collections of islands of unit thickness stacked on top of each other.
As time $t$ increases, islands expand laterally with constant speed which we assume to be equal to one without loss of generality.
Two islands merge upon collision and, with probability~$q$, a new island of infinitesimal width is created on top of the collision point.
Creation of a new island of infinitesimal width is referred to as ``nucleation''.
Nucleations also occur at random space-time points $(x,t)$ sampled according to a space-time Poisson point process with intensity~$\Lambda>0$, in addition to the collision mechanism just explained.
A snapshot the $q$-PNG model dynamics is shown in~\Cref{fig:qPNG}.
When $q=0$, colliding islands merge without triggering nucleations and we recover the PNG model.

\begin{figure}
    \centering
    \includegraphics[width=0.7\linewidth]{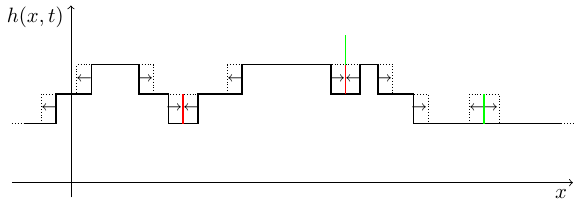}
    \caption{A depiction of the $q$-PNG dynamics.
    Red segments represent the collision interface between two islands.
    Green segments correspond to nucleations of new islands of infinitesimal width.
    }
    \label{fig:qPNG}
\end{figure}

In this paper, we will consider the special case of \emph{droplet initial conditions} for the $q$-PNG model.
In this case, at time~$t=0$ the height function takes the degenerate value
\begin{equation}
    h(x,t=0) \,=\, 
    \begin{cases}
        0 \qquad &\text{if } x=0,
        \\
        -\infty &\text{if } x \neq 0,
    \end{cases}
\end{equation}
and, at any time $t>0$, the height function~$h(x,t)$ takes finite values only in the forward light-cone of the origin~$|x| \leq t$.
In the original work~\cite{aggarwal_borodin_wheeler_tPNG}, the following central limit theorem was proven:
\begin{equation}\label{eq:qPNG convergence}
    \frac{h(x,t) - v_{\Lambda,q} \left(t^2-x^2 \right)^{1/2}}{\sigma_{\Lambda,q} \left(t^2-x^2 \right)^{1/6}} \xlongrightarrow[t \to \infty]{ \quad \text{in distribution} \quad } \chi_{\mathrm{GUE}},
\end{equation}
where $v_{\Lambda,q} = \Lambda/(1-q)$, $\sigma_{\Lambda,q} = \bigl( \Lambda/(2-2q) \bigr)^{1/3}$ and $\chi_{\mathrm{GUE}}$ is a random variable obeying the GUE Tracy--Widom distribution~\cite{tracy1993level}.
In~\cite{drillick2023strong}, H.~Drillick and Y.~Lin proved that the law of large number for the height function $h$ with droplet initial conditions holds in the strong sense.
The central limit theorem~\eqref{eq:qPNG convergence} relies on the aforementioned relationship between the Poissonized Plancherel measure and the height function $h$.
Indeed, according to~\cite{aggarwal_borodin_wheeler_tPNG,IMS_matching,das2025large}, we have
\begin{equation}\label{eq:relationqPNG multiplicative average}
    \mathbb{P} \left[ h(0,t) + \chi + S \le s \right] = \mathbb{E}_{t^2}\left[\prod_{i \ge 1} \frac{1}{1+q^{s+i-\lambda_i}} \right],
\end{equation}
where $\chi$ and $S$ are independent random variables with laws
\begin{equation}
\begin{aligned}
    \mathbb{P}\left[\chi = k\right] &= q^k \prod_{i\ge 1}\bigl(1-q^{k+i}\bigr),
    &&\text{for }  k \in \lbrace 0,1,2, \dots \rbrace,
\\
    \mathbb{P}\left[S = k\right] &= \frac{q^{k^2/2}}{\sum_{n\in\mathbb{Z}}q^{n^2/2}},
    &&\text{for }k \in \mathbb{Z}.
\end{aligned}
\end{equation}

The following large deviation principles for the one point distribution of the height function $h(x,t)$ have been established in~\cite{das2025large}: 
\begin{equation}
\begin{aligned}
    \lim_{t\to \infty} t^{-1}\, \log \mathbb{P}\left[ h(0,t) \ge t \mu \right] &\,=\, - \Phi_+(\mu),
\\
    \lim_{t\to \infty} t^{-2}\,\log \mathbb{P}\left[ h(0,t) \le t \mu \right] &\,=\, - \Phi_-(\mu).
\end{aligned}
\end{equation}
The upper-tail large deviation rate function~$\Phi_+$ was derived explicitly from the relation~\eqref{eq:relationqPNG multiplicative average}.
The derivation of the lower-tail rate function~$\Phi_-$ is more subtle because, in the lower-tail regime, in the left-hand side of~\eqref{eq:relationqPNG multiplicative average} there is a nontrivial competition between the tails of the random variable~$h$ and of the discrete Gaussian~$S$.
As a result the rate function~$\Phi_-$ ends up being related to the scaling limit of the right-hand side of~\eqref{eq:relationqPNG multiplicative average} through an infimal deconvolution operation, which itself requires establishing \emph{a~priori} convexity property of~$\Phi_-$, see \cite{das2025large,dlm24}.
The next theorem completes the explicit description of the rate function~$\Phi_-$.

\begin{theorem}[Lower-tail large deviation rate function] \label{thm:lower-tail}
    Fix $q\in (0,1)$ and let $\eta=-\log q$.
    Let $h(x,t)$ be the height function of the $q$-PNG with intensity $\Lambda=2(1-q)$ and droplet initial conditions.
    Then, for $\mu \in [0,2]$ we have
    \begin{equation}
        \lim_{t\to +\infty} t^{-2} \, \log \mathbb{P} \left[ h(0,t) \le \mu t \right] = -\Phi_-(\mu),
    \end{equation}
    where
    \begin{equation} \label{eq:Phi_minus}
        \Phi_-(\mu) = \max_{y \in \mathbb{R}} \left\{ \mathcal{F} (y) - \frac{\eta}{2} (\mu-y)^2 \right\}.
    \end{equation} 
\end{theorem}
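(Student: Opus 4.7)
The plan is to combine the identity~(1.24), which realizes the law of $H = h(0,t)+\chi+S$ in terms of the multiplicative average $Q(t,\cdot)$, with the large-$t$ asymptotics of $Q(t,xt)$ furnished by \Cref{thm:main} (extended to $x\geq 2$ via \Cref{rem:dlm}), and to close the argument through the infimal-convolution/convex-duality machinery already developed in~\cite{das2025large,dlm24}.

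Concretely, the independence of $h(0,t)$, $\chi$, $S$ together with the $O(1)$ tail behavior of $\chi$ yields
\begin{equation*}
\mathbb{P}[H\leq xt]=\sum_{k\in\mathbb{Z}} \mathbb{P}[S=k]\,\mathbb{P}[h(0,t)\leq xt-k]\cdot(1+o(1)),
\end{equation*}
where $\log\mathbb{P}[S=k]=-\eta k^2/2+O(1)$. Setting $k=yt$ and applying a Laplace-type argument at the scale $t^2$, I would combine the identity $\log\mathbb{P}[H\leq xt]=-t^2\mathcal{F}(x)+o(t^2)$ (which follows from~(1.24) together with \Cref{thm:main} and \Cref{rem:dlm}) with the \emph{ansatz} $\log \mathbb{P}[h(0,t)\leq yt]=-t^2 \Phi_-(y)+o(t^2)$ to obtain the relation
\begin{equation*}
\mathcal{F}(x)=\inf_{y\in\mathbb{R}}\Bigl[\Phi_-(y)+\tfrac{\eta}{2}(x-y)^2\Bigr],
\end{equation*}
i.e.\ $\mathcal{F}$ is the infimal convolution of $\Phi_-$ with the quadratic $y\mapsto \eta y^2/2$.

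The next step is to invert this infimal convolution to isolate $\Phi_-$. Given the \emph{a priori} convexity of $\Phi_-$ established in~\cite{das2025large,dlm24}, together with the strict convexity and finiteness of $\eta y^2/2$, standard Legendre--Fenchel duality yields
\begin{equation*}
\Phi_-(\mu)=\sup_{y\in\mathbb{R}}\Bigl[\mathcal{F}(y)-\tfrac{\eta}{2}(\mu-y)^2\Bigr]
\end{equation*}
for $\mu\in[0,2]$, which is precisely the formula~\eqref{eq:Phi_minus}. A quick qualitative analysis of $y\mapsto \mathcal{F}(y)-\tfrac{\eta}{2}(\mu-y)^2$, using the explicit quadratic form of $\mathcal{F}$ on $(-\infty,x_*]$ and the vanishing of $\mathcal{F}$ on $[2,+\infty)$ from \Cref{def:Fq}, guarantees that the supremum is attained at some finite $y\in\mathbb{R}$ whenever $\mu\in[0,2]$.

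The main obstacle is the uniform control of the Laplace-type expansion across the phase transition at $x=2$ (where the asymptotic regime of $Q$ switches from \Cref{thm:main} to \Cref{rem:dlm}), as well as the \emph{a priori} convexity of $\Phi_-$ needed to perform the Legendre inversion. Both ingredients are precisely the content of the analysis carried out in~\cite{das2025large,dlm24}, which I would invoke as a black box; the novelty here is that the explicit identification of $\mathcal{F}$ provided by \Cref{thm:main} can now be plugged into that abstract framework to produce the closed-form expression~\eqref{eq:Phi_minus}.
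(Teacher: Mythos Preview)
Your proposal is correct and takes essentially the same approach as the paper: both reduce the theorem to invoking \cite[Theorem~1.3]{das2025large} (which already gave $\Phi_-(\mu)=\max_{y}\{\mathcal{F}(y)-\tfrac{\eta}{2}(\mu-y)^2\}$ without knowing $\mathcal{F}$ explicitly) and plugging in the explicit $\mathcal{F}$ from \Cref{thm:main}. The paper's proof is a single sentence citing that result, whereas you additionally sketch the Laplace/infimal-convolution mechanism underlying it---but this is precisely the content of \cite{das2025large,dlm24} that you yourself end up treating as a black box, so the two arguments coincide.
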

\begin{proof}
    This is a straightforward corollary of~\cite[Theorem 1.3]{das2025large}, in which the same expression \eqref{eq:Phi_minus} was given for the rate function $\Phi_-$, but without explicit description of the function~$\mathcal{F}$.
\end{proof}

Plots of the rate function~$\Phi_-$ (for some values of~$q$) are given in~\Cref{fig:rate functions Phi}. 

\begin{figure}
    \centering
    \includegraphics[width=0.5\linewidth]{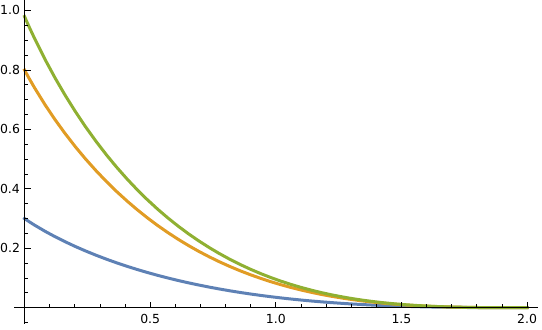}
    \caption{A plot of the rate functions $\Phi_-(x)$ for values of $q=7/10$ (\textcolor{Cerulean}{blue}), $q=1/5$ (\textcolor{brown}{orange}) and $q=1/50$ (\textcolor{OliveGreen}{green}).}
    \label{fig:rate functions Phi}
\end{figure}

\subsection{Positive-temperature discrete Bessel process} \label{subs:positive-temp discrete Bessel}

The positive-temperature discrete Bessel process is a determinantal point process $\mathsf{B} = (\mathsf{b}_n)_{n \in \mathbb{N}} \subset \mathbb{Z}'$ (with $\mathsf{b}_{n}>\mathsf{b}_{n+1}$ for all $n\in\mathbb{N}$), which depends on parameters~$\eta>0$ and~$t>0$, characterized by the correlation kernel
\begin{equation}
\label{eq:PosTempdBKernel}
\mathsf{K}_\eta(i,j;t) = \sum_{\ell \in \mathbb{Z}} \frac{ \mathrm{J}_{i+\ell-\frac 12}(2t) \mathrm{J}_{j+\ell-\frac 12}(2t) }{1+\e^{-\ell\eta}} \qquad (i,j\in\mathbb{Z}').
\end{equation}
It arises in several related contexts, which include
\begin{itemize}[leftmargin=*]
\item the grand canonical ensemble of \emph{free fermions} in one dimension with one-particle Hamiltonian
\begin{equation}
    \mathscr{H} (\varphi)_j = - t \left( \varphi_{j+1} + \varphi_{j-1} \right) + j \varphi_j,
\end{equation}
where the parameter $\eta$ is the inverse temperature
(see~\cite{betea_bouttier_periodic,dean2015universal,dean2019noninteracting} for properties of ground states of more general systems of free fermions), and
    \item the \emph{periodic Schur measure} with Plancherel specialization~\cite{borodin2007periodic,betea_bouttier_periodic}.
\end{itemize}
A sampling procedure for the point process~$\mathsf{B}$ was described in~\cite[Section 2.3]{das2025large} through the periodic Robinson--Schensted correspondence~\cite{sagan1990robinson} and we now recall it.
Introducing
\begin{equation}
    \mathscr{C}_r = \mathbb{R}/r \mathbb{Z}
    \qquad \text{with} \qquad r=\sqrt{2} (1-\e^{-\eta}) t
\end{equation}
we define the set of periodic height functions
\begin{equation}
    \mathsf{PW}(\mathscr{C}_r) = \bigl\lbrace h: \mathscr{C}_r \to \mathbb{Z} \, : \, h(x) - h(x_\pm) \in \lbrace 0, 1 \rbrace\, \forall x \in \mathscr{C}_r \bigr\rbrace\,,
\end{equation}
where $f(x_\pm) = \lim_{\epsilon \downarrow 0} f(x \pm \epsilon)$.
This is a periodic variant of~\eqref{eq:piecewise constants}.
The \emph{multi-layer PNG} model on~$\mathscr{C}_r$ is a dynamics on a family of height functions $(h_{n}(\cdot,s))_{n \in \mathbb{N},\, s \in \mathbb{R}}$ such that
\begin{equation}
    h_{n}(x,s) > h_{n+1}(x,s) \quad \text{for all }n\in\mathbb{N},\, x\in \mathscr{C}_t,\,\text{and }s\in \mathbb{R}.
\end{equation}
The top height function $h_1(x,s)$ evolves analogously to a ($q=0$) PNG model.
Namely, island expand at unit speed and merge upon collision and nucleations occur at rates given by a (potentially inhomogeneous) Poisson point process $\mathcal{P}$ on $\mathscr{C}_r \times \mathbb{R}$.
For $n \ge 2$, the $n$-th layer height function $h_n$ also evolves similarly to a PNG model, with islands spreading laterally at speed 1 and nucleations triggered by merging of islands at layer $n-1$, i.e. if at time $t$ two islands merge at layer $n-1$ at location $x$, then at time $t$ a nucleation occurs at layer $n$ at location $x$.

To relate the positive-temperature discrete Bessel point process to the periodic multi-layer PNG model, we consider the following procedure~\cite{das2025large}
\begin{enumerate}
    \item Sample a sequence $(\kappa_j)_{j\in\mathbb{Z}}$ of independent Bernoulli random variables
    with law
    \begin{equation}
        \mathbb{P} \left[ \kappa_j =1 \right] = \frac{1}{1+\e^{j\eta}}.
    \end{equation}
    and initialize at $s=-\infty$ the functions $h_i$ to take the constant (random) values
    \begin{equation}
        h_n(x,-\infty) = \max\biggl\lbrace j \in \mathbb{Z} \,: \,\sum_{r \ge j} \eta_r = n \biggr\rbrace, \qquad \text{for all } n \in \mathbb{N}.  
    \end{equation}
    Namely $h_n(\cdot ,-\infty)$ is a constant function with value equal to the location of the $n$-th rightmost point of the sequence $(\kappa_j)_{j \in \mathbb{Z}}$; see \Cref{fig:cylindrical PNG}, central panel.
    \item Let $\mathfrak{P}$ be the Poisson point process on $\mathscr{C}_r \times \mathbb{R}$ with inhomogeneous intensity
    \begin{equation}
        \lambda(p) = \sum_{k \ge 0} \e^{-k\eta} \mathbf{1}_{\mathfrak{R}_k}(p),
    \end{equation}
    where the subsets $\mathfrak{R}_k$ are\footnote{Here $\| \cdot \|_1$ is the norm on $\mathscr{C}_r \times \mathbb{R}$ induced by the $\ell_1$ norm $\| (x_1,x_2) \|_1 = |x_1|+|x_2|$ on $\mathbb{R}^2$.}
    \begin{equation}\label{eq:regions of cylinder}
        \mathfrak{R}_k = \left\{ p \in \mathscr{C}_r \times \mathbb{R} : \| p-p_k \|_{1} \le \frac{r}{2} \right\},
        \qquad
        \text{where}
        \qquad
        p_k = \left( \frac{kr}{2},-\frac{(k+1)r}{2} \right),
    \end{equation}
    for $k \ge 0$, as depicted in \Cref{fig:cylindrical PNG}, left panel. 
    In other words, $\mathfrak{P}$ is the disjoint union of independent Poisson point processes supported in $\mathfrak{R}_k$ and with rate $\e^{-k\eta}$ for $k\ge 0$.
    \item For time $s > -\infty$ evolve the family of height functions $\bigl(h_n(x,s)\bigr)_{n\in\mathbb{N}}$ following a multi-layer PNG model with nucleation rates given by the point process~$\mathfrak{P}$.
\end{enumerate}
\begin{proposition}
    Let $\bigl(h_n(x,s)\bigr)_{n\in\mathbb{N},\,s\in\mathbb{R}}$ be the family of height functions sampled as above.
    Then, the point process
    \begin{equation}
    \frac{1}{2}+h_n(0,0) \qquad \text{for } n\in\mathbb{N},
    \end{equation}
    is equal in law to a positive-temperature discrete Bessel process.
\end{proposition}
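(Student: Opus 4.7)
The plan is to identify the sampling procedure as a realisation of the periodic Schur measure of Borodin~\cite{borodin2007periodic} (see also~\cite{betea_bouttier_periodic}) with period $u=\e^{-\eta}$ and Plancherel specialisations of parameter $t$ on both sides, and then to read off the kernel $\mathsf{K}_\eta$ from the known determinantal formula for its correlation functions. Since the sampling rules are lifted from \cite[Section~2.3]{das2025large}, the argument amounts essentially to a synthesis of existing constructions, and the proof I envision is reference-driven rather than calculation-heavy.

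I would proceed in two steps. The first is to match the data of the dynamics to the parameters of the periodic Schur measure via the periodic Robinson--Schensted correspondence~\cite{sagan1990robinson}. The random initial configuration at $s=-\infty$, extracted from the Bernoulli sequence $(\kappa_j)_{j\in\mathbb{Z}}$ with $\mathbb{P}[\kappa_j=1]=(1+\e^{j\eta})^{-1}$, is itself a determinantal process on $\mathbb{Z}$ with the diagonal Fermi--Dirac kernel and plays the role of the free-fermion thermal background encoding the period $u=\e^{-\eta}$. Next, the geometric weighting $\e^{-k\eta}$ attached to each diamond region $\mathfrak{R}_k$ in the Poisson intensity $\lambda(p)$, together with the staggered placement of the centres $p_k=\bigl(kr/2,-(k+1)r/2\bigr)$ and the cylinder period $r=\sqrt{2}(1-\e^{-\eta})t$, is calibrated precisely so that nucleations in $\mathfrak{R}_k$ influence the observation line $\{s=0\}$ only after the associated light cone has wrapped around the cylinder $k$ times. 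This reproduces, on the symmetric-functions side, the $k$-th winding sector of the periodic Schur measure with weight $u^k=\e^{-k\eta}$, and collecting all sectors reconstitutes the full Plancherel specialisation at parameter $t$ on the complementary side.

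The second step is to apply the known determinantal formula. Once the joint law of $(h_n(0,0))_{n\in\mathbb{N}}$ has been matched to the shifted rows $\lambda_n-n+\tfrac{1}{2}$ of a partition $\lambda$ drawn from the periodic Schur measure with period $u=\e^{-\eta}$ and Plancherel specialisations at $t$ on both sides, one writes the associated correlation kernel as a double contour integral over an annulus. Expanding each Plancherel generating factor $\exp(t(z-z^{-1}))$ by the classical Bessel generating function, summing the resulting geometric series in the period variable $u$, and performing the residue calculation reduces the double integral precisely to the Bessel series $\mathsf{K}_\eta(i,j;t)=\sum_{\ell\in\mathbb{Z}}\frac{\mathrm{J}_{i+\ell-\frac{1}{2}}(2t)\,\mathrm{J}_{j+\ell-\frac{1}{2}}(2t)}{1+\e^{-\ell\eta}}$ of~\eqref{eq:PosTempdBKernel}, which is by definition the kernel of the positive-temperature discrete Bessel process.

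The main obstacle lies in the geometric bookkeeping of the first step: one must check carefully that the unit-speed light-cone propagation on the cylinder, combined with the staggered tiling by the diamonds $\mathfrak{R}_k$ and the exponential weights $\e^{-k\eta}$, reproduces exactly the grand-canonical periodic Schur measure with the stated specialisations, including the correct identification of the parameters $r$ and $u$ with $\eta$ and $t$. Since this verification is essentially carried out in~\cite[Section~2.3]{das2025large}, the proof would consist in recalling that reduction and then invoking Borodin's kernel formula to recognise the result as $\mathsf{K}_\eta$.
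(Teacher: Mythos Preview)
Your proposal is correct and takes essentially the same approach as the paper: both defer to the results of~\cite{das2025large}, with the paper citing Propositions~2.7 and~2.9 there directly, while you unpack the content of those propositions (the identification with the periodic Schur measure via periodic RSK, followed by the recognition of the kernel~$\mathsf{K}_\eta$). Your more detailed narrative is accurate but ultimately reduces to the same citation.
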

\begin{proof}
    This is a combination of \cite[Proposition 2.7]{das2025large} and \cite[Proposition 2.9]{das2025large}.
\end{proof}

\begin{figure}
    \centering
    \includegraphics[width=\linewidth]{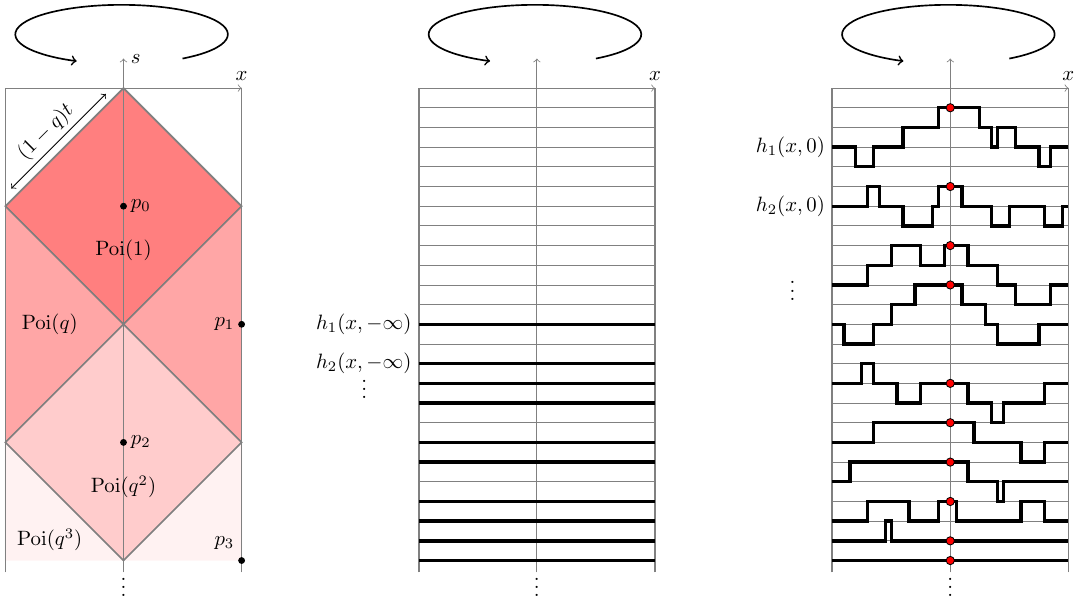}
    \caption{
    In the left panel, the space-time regions $\left( \mathfrak{R}_k \right)_{k\ge 0}$ described in \eqref{eq:regions of cylinder}.
    In the central panel, the initialization of the height functions $h_n$ at time $s=-\infty$.
    In the right panel a depiction of the multi-level cylindrical PNG model dynamics under the inhomogeneous nucleation rates pictured in the left panel.
    The red dots represent the values of the heights $\bigl(h_n(0,0)\bigr)_{n\in\mathbb{N}}$.}
    \label{fig:cylindrical PNG}
\end{figure}

The following theorem, which is an immediate consequence of~\Cref{thm:main}, describes the large deviations (in the parameter~$t$) of the marginal $\mathsf{b}_1$ of a positive-temperature discrete Bessel point process.

\begin{theorem} \label{thm:lower tail positive temp discrete bessel}
Let $\mathsf{B}=(\mathsf{b}_n)_{n\in\mathbb{N}}$ be the positive-temperature discrete Bessel process with parameters~$\eta>0$ and~$t>0$.
Then, we have
\begin{equation}\label{eq:ldp positive-temperature bessel}
    \mathbb{P} (\mathsf{b}_1 \le x t) = 
    \begin{cases}
        \mathcal C(x) \,\exp\bigl(-t^2\mathcal{F}(x)\bigr)\left(1+O(t^{-1})\right)
        &\mbox{if }x<x_*,
        \\[1em]
        \mathcal C(x) \,\vartheta\bigl(t \mathcal{L}(x) \,\big|\, \frac{\i\pi}{\mathcal{K}(x)}\bigr)\,\exp\bigl(-t^2\mathcal{F}(x)+\mathcal{A}\log t\bigr)\left(1+O(t^{-1/2})\right)
        &\mbox{if }x_*<x<2,
        \\[1em]
        1+O(t^{-\infty}) &\mbox{if } x>2,
    \end{cases}
\end{equation}
with $\mathcal{F}$, $\mathcal{K}$, $\mathcal{L}$ given in Definitions~\ref{def:K(x)} and~\ref{def:Fq}, and $\mathcal{A}$ given in~\eqref{eq:Afinal}.
\end{theorem}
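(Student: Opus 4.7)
The plan is to reduce the theorem to Theorem~\ref{thm:main} via a distributional identity that expresses the lower-tail gap probability of the positive-temperature discrete Bessel process as the multiplicative average $Q(t,xt)$, and then to handle the regime $x>2$ by invoking an already-available super-exponential upper-tail estimate.

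First I would invoke the identity $\mathbb{P}(\mathsf{b}_1\le s)=Q(t,s)$, which is a consequence of the matching between the periodic Schur measure with Plancherel specialization (whose top row produces $\mathsf{b}_1$) and the Poissonized Plancherel measure established in \cite{aggarwal_borodin_wheeler_tPNG,IMS_matching,das2025large}. At the level of Fredholm determinants, the left-hand side equals $\det(I-\mathsf{K}_\eta)_{\ell^2((s,+\infty)\cap\mathbb{Z}')}$ with the positive-temperature Bessel kernel \eqref{eq:PosTempdBKernel}, while the right-hand side, by the general formula for multiplicative averages of determinantal point processes applied to the discrete Bessel point process $\mathscr{D}(\lambda)$, equals $\det(I-f\mathsf{K})_{\ell^2(\mathbb{Z}')}$ with Fermi--Dirac factor $f(\xi)=(1+\e^{-\eta(\xi-s)})^{-1}$ and discrete Bessel kernel \eqref{eq:dBKernel}. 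The two Fredholm determinants coincide via the summation identity underlying the definition of $\mathsf{K}_\eta$ in \eqref{eq:PosTempdBKernel}, which is precisely the mechanism behind the periodic-Schur/Plancherel matching.

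Once this identification is in place, the expansions for the regimes $x<x_*$ and $x_*<x<2$ are direct corollaries of Theorem~\ref{thm:main} applied with $s=xt$, including the oscillatory theta-function prefactor, the $\mathcal{A}\log t$ contribution, and the quantitative error terms. For the remaining regime $x>2$, since $\mathcal{F}(x)\equiv 0$ there, one needs only super-polynomial decay of the complementary probability $1-\mathbb{P}(\mathsf{b}_1\le xt)$. This is furnished by the upper-tail estimate~\eqref{eq:DLM} from~\cite{das2025large}, which asserts $1-Q(t,xt)=O(\e^{-t\Psi_+(x)})$ with $\Psi_+(x)>0$ for $x>2$; translated through the identity of the previous paragraph, this yields $1-\mathbb{P}(\mathsf{b}_1\le xt)=O(t^{-\infty})$ as claimed.

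The main anticipated obstacle is purely bookkeeping: verifying the precise shift and normalization conventions so that the matching identity reads as the clean equality $\mathbb{P}(\mathsf{b}_1\le s)=Q(t,s)$ on $\mathbb{Z}'$ without spurious factors, and checking that the half-integer support of $\mathscr{D}(\lambda)$ is compatible with the conventions in \eqref{eq:PosTempdBKernel}. Beyond this verification, no new analytic input is required: the hard work has already been carried out in Theorem~\ref{thm:main} and in the asymptotic analysis of \cite{das2025large} for the upper tail.
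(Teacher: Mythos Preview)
Your proposal is correct and follows essentially the same route as the paper: establish the identity $\mathbb{P}(\mathsf{b}_1\le s)=Q(t,s)$ via Fredholm determinant manipulations, then read off the asymptotics from Theorem~\ref{thm:main}. The paper phrases the key determinant identity as an application of Sylvester's identity $\det(1-AB)=\det(1-BA)$ together with the standard multiplicative-expectation formula for determinantal processes (citing \cite{CafassoRuzza23,IMS_matching}), whereas you frame it as the periodic-Schur/Plancherel matching and the summation identity behind~\eqref{eq:PosTempdBKernel}; these are two descriptions of the same mechanism. One point worth noting: you are actually more careful than the paper's proof about the case $x>2$, which is not covered by Theorem~\ref{thm:main} and does require the separate input~\eqref{eq:DLM} from~\cite{das2025large} that you invoke.
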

\begin{proof}
    Let us denote by $\mathscr K(t)$ and $\mathscr K_{\eta}(t)$ the operators on $\ell^2(\mathbb{Z}')$ with kernels given, respectively, by $\mathsf K(i,j;t)$ and $\mathsf{K}_\eta(i,j;t)$, see~\eqref{eq:dBKernel} and~\eqref{eq:PosTempdBKernel}.
    For any $s\in\mathbb{Z}$ we have
    \begin{equation}
        \mathbb{P}( \mathsf{b}_1 \le s) 
        = \det_{\ell^2(\mathbb{Z'})}  \bigl( 1-\mathscr K_\eta(t) \bigr)
        = \det _{\ell^2(\mathbb{Z'})} \bigl(1 - \sqrt{\varsigma}(\cdot -s)\mathscr{K}(t) \sqrt{\varsigma}(\cdot -s)\bigr) = Q(t,s),
    \end{equation}
    where $\varsigma(n) = (1+\e^{-n\eta})^{-1}$.
    In this chain of equalities, the middle one follows from Sylvester's determinantal identity $\det (1-AB) = \det (1 + BA)$ while the first and last ones follow from the well-known expression of multiplicative expectations of a determinantal point process in terms of Fredholm determinants (see \cite{CafassoRuzza23,IMS_matching} for more details).
    The statement is then a corollary of Theorem~\ref{thm:main}.
\end{proof}

\begin{remark} \label{rem:prob empty positive temperature bessel}
    By the sampling argument described above it is immediate to verify that for any $x<0$, we have
    \begin{equation}
        \begin{split}
            &\mathbb{P}\left[ \mathsf{b}_i = \lfloor t x \rfloor -i + \frac{1}{2}~ \forall i \in \mathbb{N} \right]
            \ge \mathbb{P}\left[ \mathfrak{P} = \varnothing \text{ and } \kappa_i = \mathbf{1}_{(-\infty,\lfloor t x \rfloor)}\left(- i \right) \, \, \forall i \in \mathbb{Z} \right]
            \\
            &= \left(\prod_{ j \le \lfloor t x \rfloor } \frac{1}{1+\e^{j\eta}} \right) \left(\prod_{ j > \lfloor t x \rfloor  } \frac{1}{1+\e^{-j\eta}} \right) 
            = \exp \left(-t^2\left( (1-\e^{-\eta}) + \eta\frac{x^2}{2} \right) + O(t) \right).
        \end{split}
    \end{equation}
    Namely, the probability that a sample of the positive temperature discrete Bessel process $\mathsf{B}$ consists of all half integer points to the left of $\lfloor t x \rfloor+\frac{1}{2}$, when $t$ gets large is bounded from below by $\exp\left( - t^2 F(x) +O(t) \right)$, where $F(x) = (1-\e^{-\eta}) + \eta\frac{x^2}{2}$. Notice that $F(x) = \mathcal{F}(x)$ when $x<x_*$.
\end{remark}

\begin{remark} \label{rem:condensation}
    \Cref{thm:lower tail positive temp discrete bessel} is especially interesting in view of the phase transition of~$\mathcal{F}$ at the point $x_*$; see \Cref{thm:phase transition}.
    In light of \Cref{rem:prob empty positive temperature bessel}, this suggests that, under the condition that the rightmost point~$\mathsf{b}_1$ lies to the left of $t x_*$, the positive-temperature discrete Bessel point process will ``condensate'' and become strongly concentrated around the configuration $\mathsf{b}_i = \lfloor t x_* \rfloor - i +\frac 12$, for $i\ge 1$; see \Cref{fig:condensation}. 
    It would be interesting to make the above prediction mathematically precise.
\end{remark}

\begin{figure}
    \centering
    \includegraphics[width=0.8\linewidth]{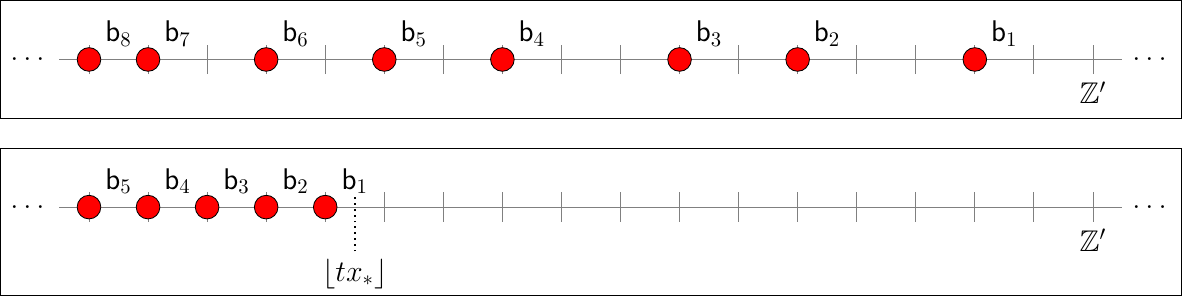}
    \caption{In the top panel, a depiction of the positive temperature discrete Bessel process. In the bottom panel, a depiction of the condensation phenomenon described in \Cref{rem:condensation}.}
    \label{fig:condensation}
\end{figure}

\subsection{Solutions to the cylindrical Toda equation with step-like initial conditions} \label{subs:asymptotic Toda}

The multiplicative averages~\eqref{eq:defQ} (and more general ones) were studied from the standpoint of integrable systems in~\cite{CafassoRuzza23}.
Indeed, Theorems~I and~II in \emph{op.~cit.} imply that
\begin{equation}
y(t,s) \,=\, \log \frac{Q(t,s)}{Q(t,s-1)}\qquad (t>0,\,s\in\mathbb{Z})
\end{equation}
satisfies the \emph{cylindrical Toda equation}
\begin{equation}
\label{cToda}
    \partial_t^2y(t,s)\,+\,t^{-1}\partial_t y(t,s) \,=\, 4\bigl(\e^{y(t,s+1) - y(t,s)} \,-\, \e^{y(t,s) - y(t,s-1)}\bigr),
\end{equation}
with initial conditions
\begin{equation}
\label{eq:cTodaIC}
    y(0,s) \,=\, \log\bigl(1 + \e^{-\eta s}\bigr),\qquad
\partial_ty(t,s)\big|_{t=0}=0.
\end{equation}
The cylindrical Toda equation~\eqref{cToda} corresponds to the radial reduction $Y(T,\overline{T},s)=y(T\overline{T},s)$ of the celebrated \textit{2D Toda equation}
\begin{equation}
    \partial_{T}\partial_{\overline{T}}Y(T,\overline{T},s) = \e^{Y(T,\overline{T},s+1) - Y(T,\overline{T},s)} \,-\, \e^{Y(T,\overline{T},s) - Y(T,\overline{T},s-1)}.
\end{equation}
It is illustrative to look at the asymptotic results of this paper from the point of view of cylindrical Toda dynamics.
Indeed, any linear configuration $y(t,s)=\nu s$ (for some~$\nu\in\mathbb{R}$) is an equilibrium of~\eqref{cToda} and so the initial condition~\eqref{eq:cTodaIC} is close to two different equilibria when $s$ is large positive or negative (with $\nu=0$ and $\nu=-\eta$, respectively).
It is reasonable to expect that, when $|s|$ is large compared to $t$, the cylindrical Toda solution~$y(t,s)$ remains close to the same equilibria while exhibiting a nontrivial transition regime interpolating between the two.
The asymptotic analysis of this paper makes this prediction explicit and proves it rigorously.
Namely, we show that
\begin{equation}
\label{eq:ylarget}
    y(t,xt) \,=\, t\,{y}_1(x) \,+\, y_0(t,x) \,+\, O(t^{-1}),\quad\mbox{ as }t\to+\infty,
\end{equation}
uniformly for $x$ away from $x_*$ and $2$, where (with the same notations as in Theorem~\ref{thm:main})
\begin{equation}\label{eq:ylargetbis}
\begin{aligned}
    {y}_1(x) &\,=\, 
    \begin{cases}
        -\eta x & \text{if }x \leq x_*, \\
        -\eta\bigl(x + \mathcal L(x)\bigr) & \text{if }x_* < x < 2,\\
        0 & \text{if }x\geq2,
    \end{cases}
\\
    {y}_0(t,x) &\,=\, 
    \begin{cases}
    \frac{\eta}{2} & \text{if }  x \leq x_*, \\
    \frac{\eta}2-\frac{\eta^2}{4\mathcal{K}(x)}  - \log \frac{\vartheta\bigl(t \mathcal{L}(x) + \frac{\eta}{2\mathcal{K}(x)} \,\big|\,\frac{\i \pi}{ \mathcal{K}(x)}\bigr)}{\vartheta \bigl(t \mathcal{L}(x)\,\big|\, \frac{\i \pi}{\mathcal{K}(x)} \bigr)} & \text{if } x_* < x < 2,\\
    0 & \text{if } x \geq 2.
    \end{cases}
\end{aligned}
\end{equation}
Noting that $\wh\beta(t,x)=-y(t,xt)$ by~\eqref{eq:wtalphabetagamma}, the asymptotic expansion~\eqref{eq:ylarget} follows from Propositions~\ref{prop:finalqminus} and~\ref{prop:finalqplus} for $x<x_*$ and $x_*<x<2$, respectively, and from~\eqref{eq:DLM} for $x>2$.
Plots of $y_1(x)$ and ${y}_0(t,x)$ are given in Figure~\ref{fig:TodaAsymp}.

The oscillating behavior of the solution, which appears in the subleading term $y_0$ of the asymptotic, is best seen
by introducing the following coordinates, which are a direct analog of the classical \emph{Flaschka variables} for the one-dimensional Toda lattice:
\begin{equation}
\begin{aligned}
\mathfrak a(t,s) &= \exp\biggl(\frac{y(t,s+1) - y(t,s)}{2}\biggr) = \frac{\sqrt{Q(t,s+1)Q(t,s-1)}}{Q(t,s)}, 
\\ 
\mathfrak b(t,s) &= \frac{1}2 \frac{\partial}{\partial t}y(t,s) = \frac{1}2\frac{\partial}{\partial t}\log\frac{Q(t,s)}{Q(t,s-1)}.
\end{aligned}
\end{equation}
Note that~$\mathfrak{a}(s,t)$ represents the relative displacement of neighboring particles in the lattice, rather than their position~$y(s,t)$.
The cylindrical Toda equation~\eqref{cToda} is equivalent to the system
\begin{equation}
\label{eq:TodaFlaschka}
\begin{aligned}
    \frac{\partial}{\partial t}\mathfrak{a}(t,s) &= \mathfrak{a}(t,s)\bigl(\mathfrak{b}(t,s+1) - \mathfrak{b}(t,s) \bigr),
    \\
    \frac{\partial}{\partial t}\mathfrak{b}(t,s) &= 2\bigl(\mathfrak{a}(t,s)^2 - \mathfrak{a}(t,s-1)^2\bigr) - \frac{\mathfrak{b}(t,s)}t,
\end{aligned}
\end{equation}
with initial conditions~\eqref{eq:cTodaIC} corresponding to
\begin{equation}
\label{eq:ICTodaFlaschka}
\mathfrak{a}(0,s) = \left(\frac{1 + \e^{-\eta(s+1)}}{1 + {\rm e}^{-\eta s}}\right)^{1/2}, \qquad \mathfrak{b}(0,s) = 0.
\end{equation}
Note that $\lim_{s \to +\infty} \mathfrak{a}(0,s) = 1$ and $\lim_{s \to -\infty} \mathfrak{a}(0,s) = \e^{-\eta/2}$.
Therefore, the initial condition~\eqref{eq:ICTodaFlaschka} belongs to the class of \emph{step-like initial conditions}, which have been extensively studied in the case of the one-dimensional Toda lattice, see the review article \cite{Mich16} and Remark~\ref{remshock} below.
Incidentally, let us recall that the spectrum of a constant-coefficient Jacobi operator (with entries equal to~$\mathsf{b}$ along the diagonal and equal to~$\mathsf{a}$ immediately above and below the diagonal) is $[\mathsf{b}-2\mathsf{a},\mathsf{b}+2\mathsf{a}]$, such that, in our case, the \emph{left} and \emph{right background spectra} (employing standard terminology for the Toda lattice, see for example \textit{op. cit.}) are $ [-2{\rm e}^{-\eta/2}, 2{\rm e}^{-\eta/2}]$ and $[-2, 2]$, respectively.
Hence, we are in the case of \emph{embedded background spectra}, which is a mixed case combining features of the Toda shock and rarefaction problems.

The asymptotic results of this paper imply that, as $t\to+\infty$ with $x=s/t=O(1)$, the variables $\mathfrak{a}$ and $\mathfrak{b}$ remain bounded, with the following asymptotic form:
\begin{equation}
\mathfrak{a}(t,xt) = \mathfrak{a}_0(t,x) + O(t^{-1}), \qquad \mathfrak{b}(t,xt) = \mathfrak{b}_0(t,x) + O(t^{-1}),
\end{equation}
uniformly for $x$ away from $x_*$ and $2$, where
\begin{equation}
\label{eq:asympab}
\begin{aligned}
    \mathfrak{a}_0(t,x) &= 
    \begin{cases}
    \exp(-\frac{\eta}2) & \text{if } x \leq x_*,
    \\
    \exp\bigl(-\frac{\eta}2 + \frac{\eta^2}{4\mathcal{K}(x)}\bigr)\,\frac{\sqrt{\vartheta\bigl(t\mathcal{L}(x) + \frac{\eta}{2\mathcal{K}(x)}\big|\frac{\i \pi}{\mathcal{K}(x)}\bigr)\vartheta\bigl(t\mathcal{L}(x) - \frac{\eta}{2\mathcal{K}(x)}\big|\frac{\i \pi}{\mathcal{K}(x)}\bigr)}}{\vartheta\bigl(t\mathcal{L}(x)\big| \frac{\i \pi}{\mathcal{K}(x)}\bigr)}  & \text{if } x_* < x < 2,\\
    1 & \text{if } x \geq 2,
    \end{cases}
\\
\mathfrak{b}_0(t,x) &= 
\begin{cases}
    0 & \text{if }x \leq x_*\text{ or } x \geq 2,
    \\
    \frac{\mathcal{U}\left(\mathcal{K}(x)\right)}{2\mathcal{K}(x)}
    \biggl(\eta +\frac{\vartheta'\bigl(t\mathcal{L}(x) + \frac{\eta}{2\mathcal{K}(x)}\big| \frac{\i \pi}{\mathcal{K}(x)}\bigr)}{\vartheta\bigl(t\mathcal{L}(x) + \frac{\eta}{2\mathcal{K}(x)}\big| \frac{\i \pi }{\mathcal{K}(x)}\bigr)} - \frac{\vartheta'\bigl(t\mathcal{L}(x)\big| \frac{\i \pi}{\mathcal{K}(x)}\bigr)}{\vartheta\bigl(t\mathcal{L}(x)\big| \frac{\i \pi }{\mathcal{K}(x)}\bigr)}\biggr) & \text{if }x_* < x < 2.
\end{cases}
\end{aligned}
\end{equation}

With the notation in~\eqref{eq:wtalphabetagamma}, we have
\begin{equation}
\mathfrak{a}(t,xt)= \exp\biggl(\frac 12\log\frac{\widehat{\beta}(t,x)}{\widehat{\beta}(t,x+\frac 1t)}\biggr),\quad
\mathfrak{b}(t,xt)= \widehat{\alpha}(t,x-\tfrac 1t)-\widehat{\alpha}(t,x) ,
\end{equation}
and so the claimed asymptotic relations follow from the asymptotic relations for $\widehat{\alpha}(t,x)$ and for $\widehat{\beta}(t,x)$ provided in Propositions~\ref{prop:finalqminus} and~\ref{prop:finalqplus} (for $x<x_*$ and $x_*<x<2$, respectively, in the latter case also using the identities of Lemma~\ref{lemma:finalidentities}) and from~\eqref{eq:DLM} for $x>2$.

\begin{figure}
    \centering
    \includegraphics[width=0.4\linewidth]{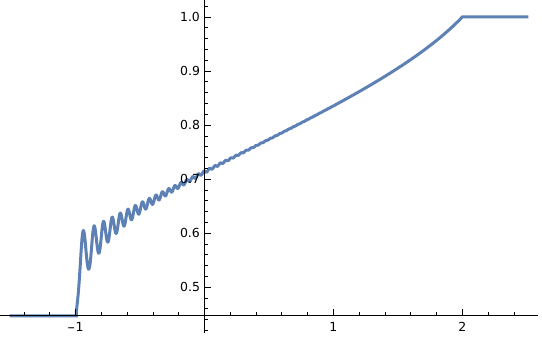}
    \qquad
    \includegraphics[width=0.4\linewidth]{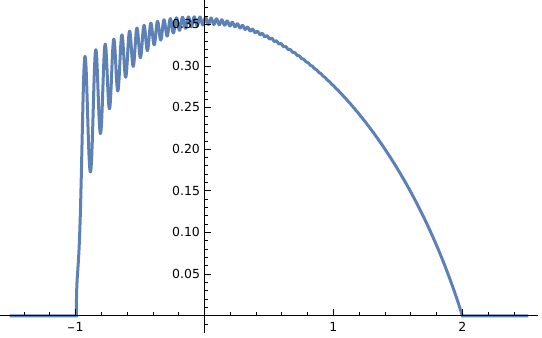}\\
    \caption{
    Plots of $\mathfrak{a}_0(t,x)$ and $\mathfrak{b}_0(t,x)$ (left and right, respectively) with $\eta=\log 5$ (such that $x_* = -\frac{8}{5\, \log 5} = -0.994136\ldots$) and $t=40$.
    }
    \label{fig:TodaAsymp}
\end{figure}

\begin{remark}
It is straightforward to check that $w(T,s) = y \bigl( \e^{T/2},s \bigr)-Ts$ is a solution to the well-known one-dimensional Toda lattice, namely,
\begin{equation}
    \partial_T^2 w(T,s) = \e^{w(T,s+1) - w(T,s)} - \e^{w(T,s) - w(T,s-1)}.
\end{equation}
Nevertheless, the initial conditions for $w(T,s)$ as $T\to-\infty$ are ill-defined, see~\eqref{eq:cTodaIC}.
\end{remark}

\begin{remark}\label{remshock}
Starting with the classical work of A.~Gurevich and L.~Pitaevskii \cite{gurevich1973decay} on the KdV equation and Whitham theory~\cite{Whitham}, step-like initial conditions have played a central role in nonlinear wave theory. 
Indeed, this line of research was subsequently implemented rigorously and extended to many more integrable models.
While not attempting to give an exhaustive account of the very vast literature, we refer to \cite{gurevich1973decay,gurevich1974nonstationary,Khruslov76,ablowitz1977asymptotic,deift1994collisionless,kamchatnov2021gurevich, EPTKdV} for the KdV equation, to
\cite{KhruslovKotlyarov89,KotlyarovMinakov2010,BertolaMinakov,GravaMinakov} for the modified KdV equation, to \cite{BuckinghamVenakides07,BoutetKotlyarovShepelski11} for the nonlinear Schr\"odinger equation, and to \cite{BlochKodama,venakides1991toda,kamvissis1993toda,deift1996toda,egorova2014long,egorova2020long} for the (one-dimensional) Toda equation.
The prominent feature of this theory is that solutions develop rapid oscillations (described by elliptic functions) in a transition regime connecting the limiting values of the step-like initial condition.

The results of this paper open the study of step-like initial conditions for the cylindrical Toda equation, which (to the best of our knowledge) have not been considered before. (Nevertheless, cylindrical Toda periodic solutions have been studied intensively, see for instance ~\cite{widomToda,tracywidomcyltoda1,tracywidomcyltoda2, GIK+25} and references therein).
Moreover, our approach (based on the probabilistic content of the solutions we consider in this paper) differs from the standard ones (based on tools such as Whitham modulation theory and spectral theory of Lax operators) employed in the vast integrable systems literature.
The interplay between the probabilistic approach and more traditional ones is an interesting topic that certainly deserves further investigations.
\end{remark}

\section{Minimization of logarithmic energy}\label{sec:variational}

\subsection{Poissonized Plancherel measure and logarithmic energy}

The relation between the (Poissonized) Plancherel measure and log-gases has been understood since the work of B.~Logan and L.~Shepp \cite{logan_shepp1977variational} and of A.~Vershik and S.~Kerov \cite{VershikKerov_LimShape1077}, who used it to determine the asymptotic shape which bears their names. In this subsection we elaborate on these rather established results to formulate the log-gas problem associated to the asymptotics of the multiplicative average~\eqref{eq:defQ}.

\begin{theorem} \label{thm:Laplace_argument}
    Fix $\eta>0$ and $x\in \mathbb{R}$.
    Recall the energy $\mathcal{E}_{\eta,x}$ from \eqref{eq:log-energy} and the set $\mathcal{H}$ from \eqref{eq:Hprimespace}.
    Then, we have
    \begin{equation}
        \lim_{t \to \infty} - \frac{1}{t^2} \log Q(t,xt) = 1 + \frac{\eta x^2}{2}  \mathbf{1}_{(-\infty,0)}(x) + \inf_{\mathfrak{h} \in \mathcal{H}} \left\{ \mathcal{E}_{\eta,x} \left[ \mathfrak{h} \right] \right\}.
    \end{equation}
\end{theorem}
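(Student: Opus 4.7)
The plan is to view $Q(t,xt)$ as a partition-function-like sum and perform a Laplace-type asymptotic analysis. Writing the expectation explicitly as
\[
Q(t,xt) \,=\, \sum_\lambda \mathcal{P}_{t^2}(\{\lambda\}) \prod_{\xi \in \mathscr{D}(\lambda)} \bigl(1 + \mathrm{e}^{\eta(\xi-xt)}\bigr)^{-1},
\]
I would compute the leading exponential rate of each summand in terms of the rescaled half-complemented empirical measure $\mathfrak{h}_\lambda$ from~\eqref{eq:complementation}--\eqref{eq:empirical measure}, and then reduce the sum to the variational problem on the right-hand side.

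Two pointwise asymptotic expansions are the key ingredients. First, the Vershik--Kerov--Logan--Shepp expansion~\eqref{eq:poissonized_Plancherel_log_ergy} gives
\[
\log \mathcal{P}_{t^2}(\{\lambda\}) \,=\, -t^2\bigl(1 + \mathcal{E}[\mathfrak{h}_\lambda]\bigr) + o(t^2),
\]
uniformly over partitions of size $|\lambda| = O(t^2)$ whose rescaled density lies in a fixed compact subset of~$\mathcal{H}$. Second, combining the elementary pointwise estimate $\log(1 + \mathrm{e}^u) = [u]_+ + O(\mathrm{e}^{-|u|})$ with a Riemann-sum approximation based on~\eqref{eq:empirical measure} yields
\[
\sum_{\xi \in \mathscr{D}(\lambda)} \log\bigl(1 + \mathrm{e}^{\eta(\xi - xt)}\bigr) \,=\, \eta t^2 \int_{\mathbb{R}} [\mu-x]_+\, \rho_\lambda(\mu)\,\mathrm{d}\mu + O(1),
\]
uniformly over the same class. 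Splitting $\rho_\lambda = \mathfrak{h}_\lambda + \mathbf{1}_{(-\infty,0)}$ and computing the vacuum piece $\int_{-\infty}^0 [\mu-x]_+\,\mathrm{d}\mu = \tfrac{x^2}{2}\mathbf{1}_{x<0}$ explicitly isolates the deterministic contribution $\tfrac{\eta x^2}{2}\mathbf{1}_{x<0}\cdot t^2$ from the profile-dependent potential matching the $V_{\eta,x}$ term of $\mathcal{E}_{\eta,x}$ (cf.~\eqref{eq:log-energy}--\eqref{eq:potentialVqx}), so that the combined exponent reads
\[
\log\Bigl[\mathcal{P}_{t^2}(\{\lambda\})\prod_{\xi}(1+\mathrm{e}^{\eta(\xi-xt)})^{-1}\Bigr] \,=\, -t^2\Bigl(1 + \tfrac{\eta x^2}{2}\mathbf{1}_{x<0} + \mathcal{E}_{\eta,x}[\mathfrak{h}_\lambda]\Bigr) + o(t^2).
\]

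The Laplace step is then standard. For the upper bound, the Hardy--Ramanujan estimate $p(n) \le \exp(C\sqrt{n})$ implies that the number of partitions with $|\lambda| \le Ct^2$ is $\exp(O(t)) = \exp(o(t^2))$, a negligible multiplicative factor, while partitions with $|\lambda| > Ct^2$ contribute a super-exponentially small probability by concentration of the Poisson law on $|\lambda|$. For the matching lower bound, given any admissible quasi-minimizer $\mathfrak{h}^\ast$, I would construct a sequence of partitions $\lambda(t)$ whose rescaled densities approximate $\mathfrak{h}^\ast$ at the discretization scale $1/t$; continuity (and lower semicontinuity) of $\mathcal{E}_{\eta,x}$ in the chosen topology then yields the reverse inequality. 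The hard part will be the \emph{uniform} control of the pointwise expansions above over the infinite-dimensional class of partitions: this requires selecting a topology on~$\mathcal{H}$ in which $\mathcal{E}_{\eta,x}$ is continuous with compact level sets, regularizing the logarithmic Coulomb kernel at scale $1/t$ to absorb the diagonal singularity of the double integral in $\mathcal{E}$, and truncating $\mathfrak{h}$ to bounded support (justified \emph{a posteriori}, since by~\Cref{thm:minimizer} the minimizer $\rho_{\eta,x}$ has bounded support for every $x$).
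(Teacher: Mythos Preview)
Your proposal is correct and follows essentially the same strategy as the paper's proof. One difference worth noting: the paper reverses the order of operations. It first proves a clean sandwich
\[
\mathcal{S} \;\leq\; Q(t,xt) \;\leq\; \bigl((Mt^2+1)\e^{\mathsf{c}t\sqrt{M}} + \e^{-t^2(M\log M - M)}\bigr)\,\mathcal{S},
\qquad \mathcal{S} = \sup_\lambda \Bigl\{\mathcal{P}_{t^2}(\{\lambda\})\prod_\xi (1+\e^{\eta(\xi-xt)})^{-1}\Bigr\},
\]
using only the Hardy--Ramanujan bound and a Chernoff estimate on $|\lambda|$, and \emph{then} computes the exponent of a single term. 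This neatly sidesteps what you identify as the hard part: since both sides of the sandwich involve the same supremum $\mathcal{S}$, one never needs the pointwise expansion to hold uniformly over an infinite-dimensional class of partitions. The paper also redoes the VKLS change of variables (through the Russian-notation coordinates and the profile function $h$) rather than citing~\eqref{eq:poissonized_Plancherel_log_ergy} as a black box, which lets it process the hook product and the multiplicative weight in one pass; your direct computation via the empirical measure $\rho_\lambda$ is equally valid and arguably more transparent.
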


The proof of this theorem is a collage of arguments that can be found in \cite{logan_shepp1977variational,VershikKerov_LimShape1077,romik2015surprising,das2025large}.
We report it for completeness, as it constitutes the starting point of our analysis.

\begin{proof}
We have the trivial lower bound
\begin{equation}
\label{eq:Q lower bound}
    \mathcal S=\sup_{\lambda} \biggl\lbrace \mathcal{P}_{t^2}\bigl(\lbrace\lambda\rbrace\bigr) \prod_{i \ge 1} \frac{1}{1+\e^{\eta(\lambda_i-i+\frac 12-xt)}} \biggr\rbrace \leq Q(t,xt),
\end{equation}
where $\mathcal P_{t^2}$ is the Poissonized Plancherel measure, see~\eqref{eq:poissonized plancherel}.
The Hardy--Ramanujan approximation of the number of partitions of an integer~$n$ implies that there exists $\mathsf{c}>0$ such that the bound $\#\bigl\lbrace \lambda \,:\, |\lambda| = n \bigr\rbrace \,\leq\, \e^{ \mathsf{c} \sqrt{n}}$ holds for all $n\geq 0$.
Then, for all $M>1$, we also have the upper bound
\begin{equation}
    \begin{aligned}
        Q(t,xt) &= 
        \sum_{\lambda}
        \mathcal{P}_{t^2}\bigl(\lbrace\lambda\rbrace\bigr)\prod_{i \ge 1}
        \frac{1}{1+\e^{\eta(\lambda_i-i+\frac 12-xt)}}
        \\
        & \leq 
        \sum_{\lambda: |\lambda|\leq Mt^2}
        \mathcal{P}_{t^2}\bigl(\lbrace\lambda\rbrace\bigr)\prod_{i \ge 1}
        \frac{1}{1+\e^{\eta(\lambda_i-i+\frac 12-xt)}}
        + \mathcal{P}_{t^2} \biggl(\bigcup_{\lambda:|\lambda|>Mt^2}\lbrace\lambda\rbrace\biggr)
        \prod_{i \ge 1}\frac{1}{1+\e^{\eta(-i+\frac 12-xt)}}
        \\
        & \leq \biggl(\sum_{m=0}^{\lfloor Mt^2\rfloor} \e^{ \mathsf{c} \sqrt{m}} \biggr)\, \mathcal{S}  
        + \e^{-t^2(1+M \log M -M)} \prod_{i \geq 1}
        \frac{1}{1+\e^{\eta(-i+\frac 12-xt)}},
    \end{aligned}
\end{equation}
where in the last inequality we use the Hardy--Ramanujan bound and the basic Chernoff bound $\mathbb P(|\lambda|>Mt^2)<\e^{-t^2(1+M\log M-M)}$ (valid for all $M>1$) to estimate the tail of $|\lambda|$ (which is a Poisson random variable of mean $t^2$).
By the obvious bounds $\sum_{m=0}^{X} \e^{  \mathsf{c} \sqrt{m}}\leq (X+1)\e^{ \mathsf{c} \sqrt{X}}$ and (by looking at the empty partition)
\begin{equation}
\mathcal{S}  \geq \e^{-t^2}\prod_{i\geq 1} \frac{1}{1+\e^{\eta(-i+\frac 12-xt)}}
\end{equation}
we obtain
\begin{equation}
\label{eq:Q upper bound}
Q(t,xt)\leq \mathcal{S}\,\biggl((Mt^2+1)\e^{  \mathsf{c} t\sqrt{M}}+\e^{-t^2(M\log M-M)}\biggr),\quad\text{for all }M>1.
\end{equation}
Combining \eqref{eq:Q lower bound} and \eqref{eq:Q upper bound} shows
that 
\begin{equation}\label{eq:ldp}
    \lim_{t \to \infty} \frac{1}{t^2} \log Q(t,xt) = \lim_{t \to \infty} \frac{1}{t^2} \log \mathcal{S}.
\end{equation}
Let now $\lambda$ be a large integer partition and consider the scaling
\begin{equation}
    \phi(y) = t^{-1} \lambda_{\lfloor yt \rfloor+1}.
\end{equation}
Plugging the above scaling in the definition of the Poissonized Plancherel measure \eqref{eq:poissonized plancherel}, we have
\begin{equation}
    \begin{split}
        &\mathcal{P}_{t^2}\bigl(\lbrace\lambda\rbrace\bigr) \prod_{i \ge 1} \frac{1}{1+\e^{\eta(\lambda_i-i+\frac 12-xt)}}
        \\
        &= \exp \biggl\lbrace -t^2 - 2 \sum_{(i,j) \in \lambda}  \log \left( \frac{\lambda_i -i + \lambda_j'-j+1}{t} \right) - \sum_{i\ge 1} \log \left( 1+ \e^{\eta(\lambda_i-i+\frac 12-xt) } \right)  \biggr\rbrace
        \\
        & = \exp \biggl\lbrace -t^2 \left[1 - 2  \iint \log \biggl( \phi (y)-z+ \phi^{-1}(z) -y \right) \d y \, \d z 
        \\
        & \qquad \qquad \qquad \qquad \qquad \qquad + \eta \int \bigl[ \phi(y)-y-x \bigr]_+  \d y +O\bigl( t^{-1}\log t\bigr) \biggr] \biggr\rbrace.
    \end{split}
\end{equation}
The error term $O\bigl( t^{-1}\log t\bigr)$ comes from the discrepancy between the summand and the integrand, which is of order $\log t$ for all cells $(i,j)$ such that $\lambda_i-i+\lambda_j'-j$ is of order~1. These cells are only those in the proximity of \emph{corner cells} of $\lambda$, namely cells $(i,j)$ such that $\lambda_i-i+\lambda_j'-j=0$.
The number of corner cells is in bijection with rows $\lambda_i$ such that $\lambda_i>\lambda_{i+1}$ and so if a partition has $r$ corner cells then it must contain a \emph{staircase partition} $(r,r-1,\dots,2,1)$ and so its size is greater than $\frac{1}{2}r(r-1)$.
As a result, for a partition of size~$O(t^2)$ the number of corner cells is~$O(t)$.

It is convenient to work with the change of coordinates
\begin{equation}
    u = \frac{y-z}{\sqrt{2}}, \qquad v=\frac{y+z}{\sqrt{2}},
    \qquad 
    v = h(u) + |u|, \qquad \frac{v-u}{\sqrt{2}} = \phi\left( \frac{u+v}{\sqrt{2}} \right).
\end{equation}
In the $u,v$ variables, and denoting $h'$ the derivative of $h$, the above integrals become (see \cite[Section 1.14]{romik2015surprising})
\begin{equation}\label{eq:hook integral manipulations}
    \begin{split}
        &- 2  \int \log \left( \phi (y)-z+ \phi^{-1}(z) -y \right) \d y \, \d z
         + \eta \int \left[ \phi(y)-y-x \right]_+  \d y
        \\
        & = \frac{1}{2} \int \log \frac{1}{|u-v|} h'(u) h'(v) \d u \, \d v -2 \int \left[ u \log |u| -u + u \frac{\log 2}{2}  \right] h'(u) \d u
        \\
        & \qquad \qquad \qquad \qquad +\eta \sqrt{2} \int\left[ -\sqrt{2} u - x \right]_+ \left( \mathbf{1}_{[0,\infty)}(u) + \frac{1}{2} h'(u) \right) \d u
        \\
        & = \frac{\eta}{2} x^2 \mathbf{1}_{(-\infty,0)}(x) + \frac{1}{2} \int \log \frac{1}{|u-v|} h'(u) h'(v) \d u \, \d v 
        \\
        &
        \qquad \qquad \qquad \qquad -2 \int \left[ u \log |u| -u + u \frac{\log 2}{2} - \frac{\eta}{\sqrt{8}}\left[-\sqrt{2} u -x \right]_+  \right] h'(u) \d u,
    \end{split}
\end{equation}
where in the last equality we used the identity
\begin{equation}
    \eta \sqrt{2} \int \mathbf{1}_{(0,\infty)}(u) \left[ -\sqrt{2} u - x \right]_+ \d u = \frac{\eta}{2} x^2 \mathbf{1}_{(-\infty,0)}(x).
\end{equation}
We can finally operate the change of variable
\begin{equation}
    u = - \frac{\mu}{\sqrt{2}},
    \qquad
    v = - \frac{\nu}{\sqrt{2}},
    \qquad
    \mathfrak{h}(\mu) = \frac{1}{2} h' \left( - \frac{\mu}{\sqrt{2}}\right),
\end{equation}
to transform the right-hand side of \eqref{eq:hook integral manipulations} into
\begin{equation}
    \frac{\eta}{2} x^2 \mathbf{1}_{(-\infty,0)}(x) + \iint \log \frac{1}{|\mu-\nu|} \mathfrak{h}(\mu) \mathfrak{h}(\nu) \d \mu \, \d \nu  + \int 2 \left( \mu \log |\mu| - \mu + \frac{\eta}{2}\left[ \mu -x \right]_+  \right) \mathfrak{h}(\mu) \d \mu.
\end{equation}
We necessarily have $\mathfrak{h} \in \mathcal{H}$ and combining the above approximation with \eqref{eq:ldp} we complete the proof.
\end{proof}

Tracing the various transformations of the original large partition $\lambda$ performed in this proof, it is straightforward to check that~$\mathfrak{h}$ is related to the rescaled empirical measure~$\rho(\mu)$ of the point process $\mathscr{D}(\lambda)$, defined in \eqref{eq:empirical measure}, by the relation~\eqref{eq:complementation}.

\subsection{Variational problem}

The functional~$\mathcal{E}_{\eta,x}$ is a strictly convex functional on the convex set $\mathcal H$.
Thus, if a minimizer exists, it is unique.
Sufficient conditions for the existence of a minimizer are given in the following proposition.

\begin{proposition}
\label{prop:mingeneral}
    Let $\eta>0$ and $x\in\mathbb{R}$.
    Assume that $\mathfrak h_*\in\mathcal H$ satisfies, for some $\ell\in\mathbb{R}$,
    \begin{equation}
    \label{eq:Robin}
     2 \int_\mathbb{R} \log \frac{1}{|\mu - \nu|}  \mathfrak{h}_*(\nu)  \d \nu + 2\mu (\log |\mu| - 1) + V_{\eta,x}(\mu) 
    \,\,
    \begin{cases}
        \ge \ell& \text{if } \mu \in {I}_0,
        \\
        = \ell  & \text{if } \mu \in {I},
        \\
        \le \ell & \text{if } \mu \in {I}_1,
    \end{cases}
    \end{equation}
where $V_{\eta,x}(\mu)=\frac {\eta}2[\mu-x]_+$, see~\eqref{eq:potentialVqx}, and
    \begin{equation}
    \begin{aligned}
    {I}_0    &= \{ \mu \in \mathbb{R} \, : \, \mathbf{1}_{(-\infty,0]}(\mu) + \mathfrak{h}_*(\mu)=0 \},
    \\
    {I}    &= \{ \mu \in \mathbb{R} \, : \, \mathbf{1}_{(-\infty,0]}(\mu) + \mathfrak{h}_*(\mu) \in (0,1) \},
    \\
    {I}_1    &= \{ \mu \in \mathbb{R} \, : \, \mathbf{1}_{(-\infty,0]}(\mu) + \mathfrak{h}_*(\mu)=1 \}.
\end{aligned}
\end{equation}
Then, $\mathfrak h_*$ is the unique minimizer of $\mathcal{E}_{\eta,x}$ on $\mathcal H$.
\end{proposition}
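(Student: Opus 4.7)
\medskip
\noindent\textbf{Proof proposal.} The plan is a standard Euler--Lagrange comparison argument. Since $\mathcal{E}_{\eta,x}$ is strictly convex on the convex set $\mathcal{H}$ (as noted above), it suffices to show that the condition \eqref{eq:Robin} implies $\mathcal{E}_{\eta,x}[\mathfrak{h}] \ge \mathcal{E}_{\eta,x}[\mathfrak{h}_*]$ for every $\mathfrak{h}\in\mathcal{H}$. I would pick an arbitrary $\mathfrak{h} \in \mathcal{H}$, set $\sigma = \mathfrak{h} - \mathfrak{h}_*$, and note the two crucial features of $\sigma$: first, $\int_\mathbb{R}\sigma = 0$ since both $\mathfrak h$ and $\mathfrak h_*$ lie in $\mathcal{H}$; second, the pointwise constraint $\mathbf{1}_{(-\infty,0]} + \mathfrak{h} \in [0,1]$ forces $\sigma \ge 0$ on $I_0$ and $\sigma \le 0$ on $I_1$, with no sign constraint on $I$.

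Expanding the quadratic functional around $\mathfrak{h}_*$, I would write
\begin{equation*}
\mathcal{E}_{\eta,x}[\mathfrak{h}] - \mathcal{E}_{\eta,x}[\mathfrak{h}_*]
 \,=\, \mathcal{I}(\sigma) \,+\, \int_\mathbb{R} \Phi(\mu)\,\sigma(\mu)\,\d\mu,
\end{equation*}
where $\Phi(\mu) = 2\int \log|\mu-\nu|^{-1}\mathfrak{h}_*(\nu)\d\nu + 2\mu(\log|\mu|-1)+V_{\eta,x}(\mu)$ is exactly the left-hand side of \eqref{eq:Robin}, and $\mathcal{I}(\sigma) = \iint \log|\mu-\nu|^{-1}\sigma(\mu)\sigma(\nu)\,\d\mu\,\d\nu$ is the logarithmic self-energy of~$\sigma$. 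The linear part is handled by splitting the integral over $I_0 \cup I \cup I_1$: on $I_0$ one has $\Phi\ge\ell$ and $\sigma\ge 0$, on $I_1$ one has $\Phi\le\ell$ and $\sigma\le 0$, while on $I$ one has $\Phi=\ell$ identically; in all three regions $(\Phi-\ell)\sigma \ge 0$, and so
\begin{equation*}
\int_\mathbb{R}\Phi\,\sigma \,\d\mu \,\ge\, \ell \int_\mathbb{R} \sigma \,\d\mu \,=\, 0,
\end{equation*}
using $\int\sigma = 0$.

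For the quadratic part, I would invoke the positive-definiteness of the logarithmic kernel on the space of signed measures with zero total mass. Concretely, via the Fourier identity
\begin{equation*}
\mathcal{I}(\sigma) \,=\, \int_0^{+\infty} \frac{|\widehat\sigma(\xi)|^2}{\xi}\,\d\xi \,\ge\, 0,
\end{equation*}
which is valid for $\sigma \in L^1 \cap L^2$ with $\int\sigma = 0$ and sufficient decay. Combining both estimates gives $\mathcal{E}_{\eta,x}[\mathfrak{h}]\ge \mathcal{E}_{\eta,x}[\mathfrak{h}_*]$, and uniqueness is the strict convexity observation already recorded before the proposition.

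The main obstacle is a purely technical one: justifying the splitting of the double integral and the Fourier identity for $\sigma$ built out of two elements of $\mathcal{H}$. Since $\mathcal{H}$ only requires $L^1$ regularity and its elements need not have compact support (they approach $-\mathbf{1}_{(-\infty,0]}$ at $-\infty$), one must verify that the relevant integrals converge absolutely so that Fubini and the Plancherel identity apply; an a~priori cutoff followed by a dominated convergence argument, using the constraints built into $\mathcal{H}$ to control the tail of $\sigma$, should suffice. A minor subtlety is also to ensure that the integral $\int \Phi\sigma$ is well defined despite the unboundedness of $\Phi$ at infinity, which reduces to the fact that $\Phi(\mu)-\ell$ is bounded on the effective support of $\sigma$ once the linear part of $\Phi$ is absorbed using $\int\sigma = 0$.
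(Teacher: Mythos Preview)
Your proposal is correct and follows essentially the same approach as the paper's proof: both rest on the pointwise sign observation $(\Phi(\mu)-\ell)\,\sigma(\mu)\ge 0$ combined with $\int\sigma=0$, together with the (strict) convexity of the logarithmic energy. The paper's version is very terse---it writes $p(\mu)\sigma(\mu)\ge 0$ and appeals to the convexity stated before the proposition---whereas you spell out the quadratic term explicitly via the Fourier identity and flag the integrability technicalities; your treatment is more careful but not genuinely different.
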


\begin{proof}
Let $p(\mu)$ be the left-hand side of~\eqref{eq:Robin}.
The condition~\eqref{eq:Robin} and the definition of $\mathcal H$ imply that $p(\mu)\bigl(\mathfrak h(\mu)-\mathfrak h_*(\mu)\bigr)\geq 0$ for all $\mu\in\mathbb{R}$ and all $\mathfrak h\in\mathcal H$.
Integrating over $\mu\in\mathbb{R}$ yields $\mathcal{E}_{\eta,x}[\mathfrak h]\geq\mathcal{E}_{\eta,x}[\mathfrak h_*]$ for all $\mathfrak h\in\mathcal H$.
\end{proof}

To explicitly determine the minimizer, the standard approach is to rewrite the variational conditions from Proposition~\ref{prop:mingeneral} in terms of the boundary values of an antiderivative ${g}(z)$ of the (modified) \emph{Cauchy transform}
\begin{equation}
 g'(z) = \int_\mathbb{R} \frac{\mathfrak{h}_*(\mu)}{\mu - z}\d\mu + \log z,
\end{equation}
which is an analytic function of $z \in \mathbb{C} \setminus \mathbb{R}$.
(The additional logarithmic term accounts for the infinite support of $\mathfrak{h}_* + \mathbf{1}_{(-\infty,0)}$.)
We formalize this approach in the next proposition.

\begin{proposition}
\label{prop:minfromg}
    Let $\eta>0$ and $x\in\mathbb{R}$.
    Assume that $ g(z)$ is a function analytic for $z\in\mathbb{C}\setminus\mathbb{R}$ such that the following conditions are fulfilled.
    \begin{enumerate}[leftmargin=*]
        \item We have $ g(z) = z(\log z-1)+ g_\infty+O(z^{-1})$ (for some $ g_\infty\in\mathbb{C}$) as $z\to\infty$ uniformly in $\mathbb{C}\setminus\mathbb{R}$, where $\log z$ denotes the principal branch, analytic for $z\in\mathbb{C}\setminus(-\infty,0]$ and real-valued on $(0,+\infty)$.
        \item The boundary values $ g_\pm(\mu) = \lim_{\varepsilon\downarrow 0} g(\mu\pm\i\varepsilon)$ and $ g_\pm'(\mu) = \lim_{\varepsilon\downarrow 0} g'(\mu\pm\i\varepsilon)$ exist and are continuous for all $\mu\in\mathbb{R}$.
        \item There exist a nonnegative integer $N$ and $ p_0< p_1<\dots< p_{2N+1}$ in $\mathbb{R}$ such that, denoting $ I=\bigcup_{j=0}^{N}\left( p_{2j}, p_{2j+1}\right)$, we have $\mathbb{R}\setminus I=I_0\cup I_1$ where $I_0$ and $I_1$ are finite unions of closed intervals and 
    \begin{equation}
     g_+'(\mu)- g_-'(\mu)=0\ \text{  if  }\ \mu\in I_0,\qquad
     g_+'(\mu)- g_-'(\mu)=2\pi\i\ \text{  if  }\ \mu\in I_1.
    \end{equation}
    \item For some $\ell\in\mathbb{R}$ we have
    \begin{equation}
    g_+(\mu)+g_-(\mu)+ V_{\eta,x}(\mu) \,\,\begin{cases}
        \ge \ell & \text{if } \mu \in {I}_0,
        \\
        = \ell  & \text{if } \mu \in {I},
        \\
        \le \ell & \text{if } \mu \in {I}_1.
    \end{cases}
    \end{equation}
    \item The function
    \begin{equation}
    \label{eq:minimizerfromboundaryvalues}
    \mathfrak h_*(\mu)=\frac 1{2\pi\i}\bigl(g'_+(\mu)-g'_-(\mu)\bigr)-\mathbf 1_{(-\infty,0)}(\mu),\qquad \mu\in\mathbb{R},
    \end{equation}
    is in $\mathcal H$.
    \end{enumerate}
    Then, $\mathfrak h_*(\mu)$ is the unique minimizer of $\mathcal{E}_{\eta,x}$ on $\mathcal H$.
\end{proposition}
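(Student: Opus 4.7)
My plan is to reduce Proposition~\ref{prop:minfromg} to Proposition~\ref{prop:mingeneral} by identifying $g(z)$ with the logarithmic potential of $\mathfrak h_*$ up to the explicit singular term $z(\log z-1)$ that encodes the ``half-complementation''. First, I would introduce the candidate
\[
\tilde g(z) \,=\, -\int_\mathbb{R}\log(z-\mu)\,\mathfrak h_*(\mu)\,\d\mu \,+\, z(\log z-1) \,+\, g_\infty
\]
(principal branch of $\log$) and argue that $g=\tilde g$. Differentiating gives $\tilde g'(z)=\int\mathfrak h_*(\mu)/(\mu-z)\,\d\mu+\log z$. A Sokhotski--Plemelj computation produces a jump $2\pi\i\,\mathfrak h_*(\mu)$ on $\mathbb R$ coming from the Cauchy transform, plus $2\pi\i\,\mathbf 1_{(-\infty,0)}(\mu)$ coming from $\log z$, for a total jump $2\pi\i\bigl(\mathfrak h_*(\mu)+\mathbf 1_{(-\infty,0)}(\mu)\bigr)$; by hypothesis~(3) combined with the defining relation \eqref{eq:minimizerfromboundaryvalues} this matches the jump of $g'$ across the whole real line, so $g'-\tilde g'$ extends to an entire function. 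Using $\int\mathfrak h_*=0$ (from $\mathfrak h_*\in\mathcal H$) one has $\tilde g'(z)=\log z+o(1)$ at infinity, and differentiating hypothesis~(1) yields $g'(z)=\log z+O(z^{-2})$. Liouville then forces $g'=\tilde g'$, and matching the constants at infinity gives $g=\tilde g$ identically.

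Next, I would take real boundary values. Since $\log_+(t)+\log_-(t)=2\log|t|$ for every real $t\neq 0$ (trivial if $t>0$; if $t<0$ the imaginary parts $\pm\i\pi$ cancel), this applied to $\log(z-\mu)$ and to $\log z$ yields
\[
g_+(\mu)+g_-(\mu) \,=\, -2\int_\mathbb{R}\log|\mu-\nu|\,\mathfrak h_*(\nu)\,\d\nu \,+\, 2\mu(\log|\mu|-1) \,+\, 2g_\infty.
\]
Substituting into hypothesis~(4) shows that $\mathfrak h_*$ verifies the Euler--Lagrange inequalities \eqref{eq:Robin} with $\ell-2g_\infty$ in place of $\ell$. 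Together with hypothesis~(5), this allows Proposition~\ref{prop:mingeneral} to be invoked, giving that $\mathfrak h_*$ is the unique minimizer of $\mathcal E_{\eta,x}$ on $\mathcal H$.

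The main technical point to address will be the convergence of the logarithmic potential $\int\log(z-\mu)\mathfrak h_*(\mu)\,\d\mu$: the density $\mathfrak h_*$ is only $L^1$ with values in $[-1,1]$, and $\log|\mu|$ is unbounded at infinity. The neutrality $\int\mathfrak h_*=0$ will be the crucial ingredient; writing $\log(z-\mu)=\log(-\mu)+\log(1-z/\mu)$ for $|\mu|\gg|z|$ (with suitable branches), the $z$-independent $\log(-\mu)$ contribution drops out upon integration against $\mathfrak h_*$, and the remainder decays as $O(|\mu|^{-1})$ pointwise, so that the integral converges whenever $\mathfrak h_*$ has even mild decay at infinity---a condition implicit in the finiteness of $\mathcal E_{\eta,x}[\mathfrak h_*]$ and satisfied in all configurations appearing in this paper. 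The remaining ingredients (Sokhotski--Plemelj for $L^1$ densities with continuous boundary values as guaranteed by hypothesis~(2), and Liouville for entire functions of sub-logarithmic growth) are classical.
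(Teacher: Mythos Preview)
Your approach is correct and matches exactly what the paper has in mind: the paper omits the proof entirely, saying only that the reduction to Proposition~\ref{prop:mingeneral} via ``standard complex-analytic arguments'' is routine, and your Liouville argument identifying $g$ with the logarithmic potential of $\mathfrak h_*$ is precisely that standard argument. One simplification worth noting: the convergence issue you worry about at the end is actually vacuous, since hypotheses~(3) and~(5) together force $\mathfrak h_*$ to be \emph{compactly supported}---on the unbounded components of $I_0\cup I_1$ the density $\mathfrak h_*+\mathbf 1_{(-\infty,0)}$ takes values in $\{0,1\}$, and $\mathfrak h_*\in L^1$ then excludes the nonzero value on any unbounded piece---so no regularization or decay hypothesis is needed.
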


The complex-analytic arguments guaranteeing that $\mathfrak{h}_*$ defined by~\eqref{eq:minimizerfromboundaryvalues} satisfies the conditions of Proposition~\ref{prop:mingeneral} are standard.
Therefore, we omit the proof of Proposition~\ref{prop:minfromg}.

We observe that this method of solving the minimization problem involves making an ansatz for the sets $I=\bigcup_{j=0}^N( p_{2j}, p_{2j+1})$, $I_0$, and $I_1$.
As we will show, this minimization problem undergoes two phase transitions: when $x<x_*$ or $x>2$, ${I}$ consists of a single interval, whereas when~$x_*<x<2$, it consists of two intervals. 
We separate our analysis accordingly.

We anticipate that, when $x>2$, the minimization problem reduces to the classical Vershik--Kerov--Logan--Shepp one (see~Section~\ref{sec:minimizationx>2}) and, when $x<x_*$, the minimizer turns out to be a rescaling of the Vershik--Kerov--Logan--Shepp density.

\subsection{Case $x\leq x_*$}\label{sec:minimizationx<xq}
We start by making a ``one-cut'' assumption which will be justified \emph{a posteriori}.
This means that we assume $I=(u,v)$, for some $u<v$ to be determined.
We also make the assumption (which we will also justify below) that $x<u$.
We want to construct a function $g(z)$ satisfying the conditions of Proposition~\ref{prop:minfromg}.
We first introduce its derivative $ g'(z)$ by
\begin{equation}
\label{gfrakprimeSP}
\begin{aligned}
 g'(z)&=
 r(z)\left(\int_{-\infty}^u \frac{\d\nu}{ r(\nu)(\nu-z)}-\frac{\eta}{2\pi\i}\int_u^v \frac{\d\nu}{ r_+(\nu)(\nu-z)}\right)
\\
&=\pm \i\pi- r(z)\left(\int_{v}^{+\infty}\frac{\d\nu}{ r(\nu)(\nu-z)}+\frac{\eta}{2\pi\i}\int_u^v \frac{\d\nu}{ r_+(\nu)(\nu-z)}\right)
\end{aligned}
\end{equation}
where
\begin{equation}
 r(z)=\sqrt{(z-u)(z-v)}
\end{equation}
(analytic for $z\in\mathbb{C}\setminus[u,v]$ and $\sim z$ as $z\to\infty$) and $ r_+(\nu)=\lim_{\varepsilon\downarrow 0} r(\nu+\i\varepsilon)=\i\sqrt{|(\nu-u)(\nu-v)|}$.
In the second line of~\eqref{gfrakprimeSP} the sign $\pm$ is chosen according to $\pm\Im z>0$ and
the equality of the two lines follows from Cauchy's theorem.

The function $g' (z)$ is analytic for $z\in\mathbb{C}\setminus(-\infty,v]$ and, by the Sokhotski--Plemelj formulas, the boundary values of $ g'$ satisfy
\begin{equation}
\label{eq:gprimefrakleftjump}
\begin{aligned}
 g_+'(\mu)- g_-'(\mu)&= 2\pi\i,&&\mu\in(-\infty,u),
\\
 g_+'(\lambda)+ g_-'(\mu)&=-\eta,&&\mu\in(u,v).
\end{aligned}
\end{equation}
Using $ r(z)\int_u^v\frac{\d\nu}{ r_+(\nu)(\nu-z)} = \i\pi$, it is elementary to show that
\begin{equation}
\label{eq:frakgprimeleft}
 g'(z) =-\frac{\eta}2 + \log\frac{1+\sqrt{\frac{z-v}{z-u}}}{1-\sqrt{\frac{z-v}{z-u}}}
\end{equation}
where $\sqrt{\frac{z-v}{z-u}}$ is analytic for $z\not\in [u,v]$ and $\sim 1$ as $z\to\infty$ and we take the principal branch of the logarithm.
Hence, when $z\to\infty$,
\begin{equation}
\label{eq:frakgleftexpansioninfinity}
 g'(z)=\log z+ g_{-1}+ g_0z^{-1}+ g_1 z^{-2}+O(z^{-3})
\end{equation}
with
\begin{equation}
 g_{-1}=\log\frac{4\e^{-\eta/2}}{v-u},\quad  g_0=-\frac{u+v}{2},\quad  g_1=-\frac{3u^2+2uv+3v^2}{16}.
\end{equation}
The endpoints are determined by enforcing the asymptotic condition $ g'(z) = \log z+O(z^{-1})$.
Indeed, the unique solution $u < v$ to the system ${g}_{-1} = {g}_0 = 0$ is given by $(u = -2\e^{-\eta/2},\, v = 2\e^{-\eta/2})$, and for the remainder of this paragraph we assume that $u$ and $v$ are fixed accordingly.

Since $x_*<-2\e^{-\eta/2}$ for all $\eta>0$, cf~\eqref{eq:xq}, we can now check that our initial assumption $x<u$ is justified.
Let us also record the value 
\begin{equation}
\label{eq:frakg1left}
 g_1=-\e^\eta
\end{equation}
for later convenience and note that
\begin{equation}
\label{eq:periodgfrakleft}
\frac 1 {2\pi\i}\int_{-2\e^{-\eta/2}}^{2\e^{-\eta/2}}\bigl({g}'_+(\mu) - {g}'_-(\mu)\bigr)\d\mu =-2\e^{-\eta/2},
\end{equation}
which follows from Cauchy's theorem.

Next, we introduce
\begin{equation}
\label{eq:frakgleft}
 g(z)=\int_{2\e^{-\eta/2}}^z  g'(y)\,\d y=- r(z)-\frac{\eta}2(z-2\e^{-\eta/2})+z\log\frac{1+\sqrt{\frac{\e^{\eta/2}z-2}{\e^{\eta/2}z+2}}}{1-\sqrt{\frac{\e^{\eta/2}z-2}{\e^{\eta/2}z+2}}},
\end{equation}
which is analytic for $z\in\mathbb{C}\setminus(-\infty,2\e^{-\eta/2}]$.
Its boundary values satisfy (for some $\ell\in\mathbb{C}$)
\begin{equation}
\label{eq:gfrakleftjump}
\begin{aligned}
 g_+(\mu)- g_-(\mu)&= 2\pi\i\mu,&&\mu\in(-\infty,- 2\e^{-\eta/2}),
\\
 g_+(\lambda)+ g_-(\mu)&=-\eta(\mu-x)+\ell,&&\mu\in(- 2\e^{-\eta/2}, 2\e^{-\eta/2}).
\end{aligned}
\end{equation}
To derive the first relation we use~\eqref{eq:periodgfrakleft}.
From the second relation, since $g_\pm(\mu)\to 0$ as $\mu\to 2\e^{-\eta/2}$, we obtain
\begin{equation}
\label{eq:fraklleft}
\ell=\eta(2\e^{-\eta/2}-x).
\end{equation}
Moreover, from the explicit expression~\eqref{eq:frakgleft} we see that, as $z\to\infty$,
\begin{equation}
\label{eq:frakginftyleft}
 g(z)=z(\log z-1)+ g_\infty- g_1 z^{-1}+O(z^{-2}),\quad g_\infty=\eta\e^{-\eta/2}.
\end{equation}
(The explicit values of $g_\infty$ and $g_1$ will be useful later on.)

\begin{proposition}
Let $x\leq x_*$.
The inequalities
\begin{equation}
 g_+(\mu)+ g_-(\mu)+ V_{\eta,x}(\mu)
\,\,
\begin{cases}
\geq \ell& \text{if } \mu \in [2\e^{-\eta/2},+\infty),
\\
\leq \ell & \text{if } \mu \in (-\infty,-2\e^{-\eta/2}],
\end{cases}
\end{equation}
where $\ell=\eta(2\e^{-\eta/2}-x)$ as in~\eqref{eq:fraklleft}, are satisfied.
\end{proposition}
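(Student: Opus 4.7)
The plan is to introduce $H(\mu) := g_+(\mu)+g_-(\mu)+V_{\eta,x}(\mu)-\ell$, with $u := -2\e^{-\eta/2}$ and $v := 2\e^{-\eta/2}$, and to verify the two variational inequalities on $[v,+\infty)$ and $(-\infty,u]$ by combining endpoint matching at $\mu = u,v$ with a monotonicity analysis of $H'$.

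On $[v,+\infty)$, the function $g$ is real-analytic, so $H(\mu)=2g(\mu)+V_{\eta,x}(\mu)-\ell$. The first step checks the boundary value at $\mu=v$ using $g(v)=0$ from~\eqref{eq:frakgleft} together with the matching condition~\eqref{eq:gfrakleftjump} that defines $\ell$. Next, one shows that $H'(\mu)>0$ for $\mu>v$: the explicit form~\eqref{eq:frakgprimeleft} gives $g'(\mu)=-\tfrac{\eta}{2}+\log\tfrac{1+s(\mu)}{1-s(\mu)}$ with $s(\mu)=\sqrt{(\e^{\eta/2}\mu-2)/(\e^{\eta/2}\mu+2)}\in(0,1)$, and, combined with $V'_{\eta,x}(\mu)=\eta/2$ (valid on $\{\mu>x\}\supset(v,+\infty)$), the strict positivity of $\log\tfrac{1+s}{1-s}$ on $(v,+\infty)$ delivers $H'(\mu)>0$. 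Integration from $v$ then yields $H\geq 0$ on $[v,+\infty)$.

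On $(-\infty,u]$, the Schwarz reflection $g(\bar z)=\overline{g(z)}$ ensures that $g_+(\mu)+g_-(\mu)=2\Re g_+(\mu)\in\mathbb{R}$. The endpoint value at $\mu=u$ follows from~\eqref{eq:gfrakleftjump} by continuity from $I$, giving $(g_++g_-)(u)=4\eta\e^{-\eta/2}$; together with $V_{\eta,x}(u)=\tfrac{\eta}{2}(u-x)$ (valid since $x\leq x_*<u$) one identifies the boundary value of $H$ at $\mu=u$. To extend the analysis to $\mu<u$, the plan is to analytically continue~\eqref{eq:frakgprimeleft} across $(-\infty,u)$ through the branch $s_1(\mu):=\sqrt{(v-\mu)/(u-\mu)}$, which is real and strictly $>1$ there; Sokhotski--Plemelj then yields $(g_+'+g_-')(\mu)=-\eta+2\log\tfrac{1+s_1(\mu)}{s_1(\mu)-1}$. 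The function $s_1$ increases on $(-\infty,u)$ from $1^+$ (at $-\infty$) to $+\infty$ (at $u^-$), so the logarithm decreases monotonically from $+\infty$ to $0$.

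The main obstacle is the corner of $V_{\eta,x}$ at $\mu=x$. On $(x,u)$ one has $H'(\mu)=2\log\tfrac{1+s_1(\mu)}{s_1(\mu)-1}>0$, so $H$ is strictly increasing up to its value at $u$, which handles that subinterval immediately. On $(-\infty,x)$ instead $H'(\mu)=-\eta+2\log\tfrac{1+s_1(\mu)}{s_1(\mu)-1}$ is strictly decreasing and may change sign, producing at most one interior local maximum $\mu_0\in(-\infty,x)$ of $H$; combined with $H(\mu)\to-\infty$ as $\mu\to-\infty$, the inequality $H\leq 0$ on $(-\infty,u]$ thus reduces to the single estimate $H(\mu_0)\leq 0$. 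The threshold $x_*$ of~\eqref{eq:xq} is precisely the value at which this maximum touches zero: for $x\leq x_*$ it stays nonpositive, and the critical case $x=x_*$ corresponds to the birth-of-a-cut phase transition recorded in \Cref{thm:phase transition}. This final inequality is verified by expressing $H(\mu_0)$ in closed form via~\eqref{eq:frakgleft}, using the critical equation $H'(\mu_0)=0$ to eliminate $\mu_0$ in favor of $s_1(\mu_0)$, and comparing the resulting elementary inequality in $x$ with~\eqref{eq:xq}.
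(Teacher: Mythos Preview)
Your strategy is essentially the paper's: check the endpoint, differentiate, and on the left interval reduce to a single maximum whose value is controlled by $x\le x_*$. There is one slip that breaks the argument as written. You take $V'_{\eta,x}=\eta/2$, following the display~\eqref{eq:potentialVqx}, but the $g$-function of this section is constructed so that $g'_++g'_-=-\eta$ on $(u,v)$ (see~\eqref{eq:gprimefrakleftjump}); for the equality $g_++g_-+V_{\eta,x}=\ell$ to hold on $I$ one needs $V'_{\eta,x}=\eta$, and indeed the paper's proof rewrites the first inequality as $2g(\mu)+\eta(\mu-2\e^{-\eta/2})\ge0$, i.e.\ uses $V_{\eta,x}(\mu)=\eta(\mu-x)$. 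With your $\eta/2$ the conclusions ``$H'>0$ on $(v,\infty)$'' and ``$H'>0$ on $(x,u)$'' are false: for instance $H'(\mu)=2g'(\mu)+\eta/2\to-\eta/2$ as $\mu\downarrow v$. Once the factor is corrected everything you wrote goes through.

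Two minor points where the paper is sharper. First, the critical point $\mu_0$ need not lie in $(-\infty,x)$ (it does not when $x<-1-\e^{-\eta}$); the paper sidesteps this case split by noting that $g_++g_-$ is concave on all of $(-\infty,u)$, so the global maximum of $H$ on $(-\infty,x]$ is bounded by the maximum of $g_++g_-+\mathrm{const}$ over the larger interval. Second, rather than eliminating $\mu_0$ in favor of $s_1(\mu_0)$, the paper solves the critical equation directly to obtain $\mu_0=-1-\e^{-\eta}$, plugs into~\eqref{eq:frakgleft}, and finds the closed-form maximum value $2(1-\e^{-\eta})+\eta x$, which is $\le0$ precisely when $x\le x_*$.
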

\begin{proof}
The first inequality is equivalent to $2 g(\mu)+\eta(\mu-2\e^{-\eta/2})\geq 0$ for all $\mu\geq 2\e^{-\eta/2}$.
This reduces to an equality in the limit $\mu\downarrow 2 \e^{-\eta/2}$ hence it is enough to show that $2 g'(\mu)+\eta>0$ for all $\mu>2\e^{-\eta/2}$. This is trivial by~\eqref{eq:frakgprimeleft}.
The second inequality is equivalent to the following pair of inequalities:
\begin{equation}
\label{eq:onecutinequalityx}
\begin{aligned}
 g_+(\mu)+ g_-(\mu)+ \eta(\mu -2\e^{-\eta/2})&\leq 0,&&\mu\in [x,-2\e^{-\eta/2}],\\
 g_+(\mu)+ g_-(\mu)+ \eta( x-2\e^{-\eta/2})&\leq 0,&&\mu\in (-\infty,x].
\end{aligned}
\end{equation}
The first one reduces to an equality in the limit $\mu\uparrow -2\e^{-\eta/2}$, see~\eqref{eq:gfrakleftjump}, hence it is enough to show that $ g_+'(\mu)+ g_-'(\mu)+ \eta>0$ for all $\mu<-2\e^{-\eta/2}$.
By~\eqref{eq:frakgprimeleft}, this is equivalent to
\begin{equation}
\log\left|\frac{1+\sqrt{\frac{\e^{\eta/2}\mu-2}{\e^{\eta/2}\mu+2}}}{1-\sqrt{\frac{\e^{\eta/2}\mu-2}{\e^{\eta/2}\mu+2}}}\right|>0,\qquad \mu<-2\e^{-\eta/2},
\end{equation}
which can be easily verified.
Finally, to show the second inequality in~\eqref{eq:onecutinequalityx} we first study the derivative in $\mu$ of the left-hand side, which is, again by~\eqref{eq:frakgprimeleft},
\begin{equation}
 g_+'(\mu)+ g_-'(\mu)=2\log\left|\frac{1+\sqrt{\frac{\e^{\eta/2}\mu-2}{\e^{\eta/2}\mu+2}}}{1-\sqrt{\frac{\e^{\eta/2}\mu-2}{\e^{\eta/2}\mu+2}}}\right|-\eta.
\end{equation}
It is easily checked that this function is decreasing for $\mu<-2\e^{-\eta/2}$ and has a (unique) zero at $\mu=-1-\e^{-\eta}$.
This means that the function $ g_+(\mu)+ g_-(\mu)+ \eta( x-2\e^{-\eta/2})$ for $\mu<-2\e^{-\eta/2}$ is concave and has a global maximum at $\mu=-1-\e^{-\eta}$.
The value of this function at the maximum is, by~\eqref{eq:frakgleft},
\begin{equation}
\left.\left(2\sqrt{\mu^2-4\e^{-\eta}}-\eta(\mu-x)+2\mu\log\left|\frac{1+\sqrt{\frac{\e^{\eta/2}\mu-2}{\e^{\eta/2}\mu+2}}}{1-\sqrt{\frac{\e^{\eta/2}\mu-2}{\e^{\eta/2}\mu+2}}}\right| \right)\right|_{\mu=-1-\e^{-\eta}}= 2(1-\e^{-\eta})+x\eta.
\end{equation}
Hence, for all $x<x_*=-2(1-\e^{-\eta})\eta^{-1}$, see~\eqref{eq:xq}, the maximum is negative and so also the second inequality in~\eqref{eq:onecutinequalityx} is proved.
\end{proof}

\begin{remark}
    \label{remark:strictineqqminus}
    In the proof of the last proposition, we also showed that for any $\epsilon,\delta>0$ there exists $ k>0$ such that for all $x\leq x_*-\delta$ we have
\begin{equation}
 g_+(\mu)+ g_-(\mu)+ V_{\eta,x}(\mu) 
\,\,
\begin{cases}
\geq \ell+ k& \text{if } \mu \in [2\e^{-\eta/2}+\epsilon,+\infty),
\\
\leq \ell-k & \text{if } \mu \in (-\infty,-2\e^{-\eta/2}-\epsilon].
\end{cases}
\end{equation}
\end{remark}

\begin{remark}
\label{remark:BirthOfACut}
The proof of the last proposition shows the necessity of the condition $x \leq x_*$, even though we can find candidate endpoints for the support of the equilibrium measure in the wider region $x < -2\e^{-\eta/2}$.
It is illustrative to plot the function  
\begin{equation}
\label{eq:functiontoplot}
\mu\,\mapsto\, g_+(\mu)+ g_-(\mu)+ \eta\left( \max\lbrace x,\mu\rbrace -2\e^{-\eta/2}\right)
\end{equation}
for $\mu < -2\e^{-\eta/2}$ and for various values of $x$, as in Figure~\ref{fig:BirthOfACut}.  
The graph of this function, which must be negative for all $\mu < -2\e^{-\eta/2}$ in order for Proposition~\ref{prop:minfromg} to apply, actually crosses the horizontal axis when $x$ increases past $x_*$.  
In the critical case $x = x_*$, the graph is tangent to the horizontal axis at $\mu = -1 - \e^{-\eta}$, suggesting the emergence of a new ``cut'' (i.e., of a new connected component of $I$) at this location when $x$ increases past $x_*$, a fact that we will prove rigorously in the next paragraph (see~\Cref{sec:propertiesendpoints}).
\end{remark}
\begin{figure}[t]
\centering
\begin{subfigure}{.32\textwidth}
\includegraphics[scale=.5]{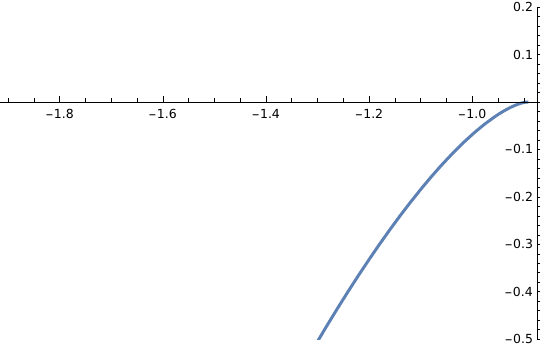}
\caption*{$x=-1.5$}
\end{subfigure}
\begin{subfigure}{.32\textwidth}
\includegraphics[scale=.5]{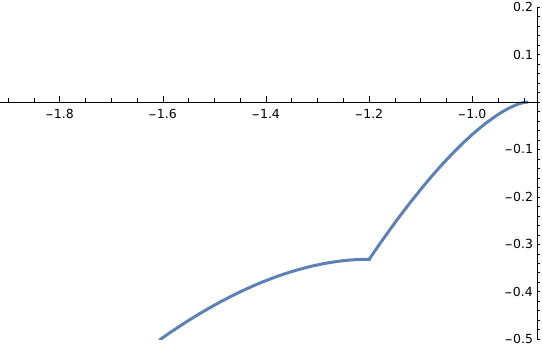}
\caption*{$x=-1-\e^{-\eta}=-1.2$}
\end{subfigure}
\begin{subfigure}{.32\textwidth}
\includegraphics[scale=.5]{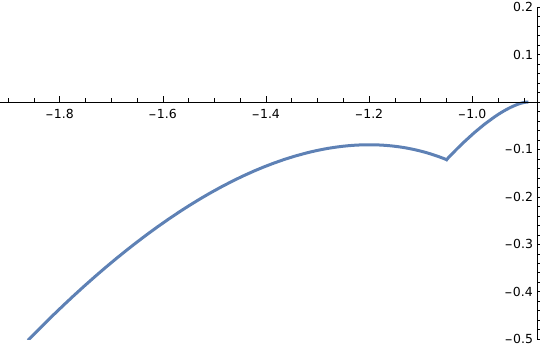}
\caption*{$x=-1.05$}
\end{subfigure}
\begin{subfigure}{.32\textwidth}
\includegraphics[scale=.5]{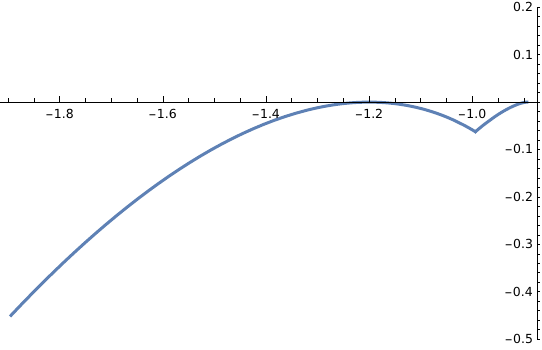}
\caption*{$x=x_*=-0.994\ldots$}
\end{subfigure}
\begin{subfigure}{.32\textwidth}
\includegraphics[scale=.5]{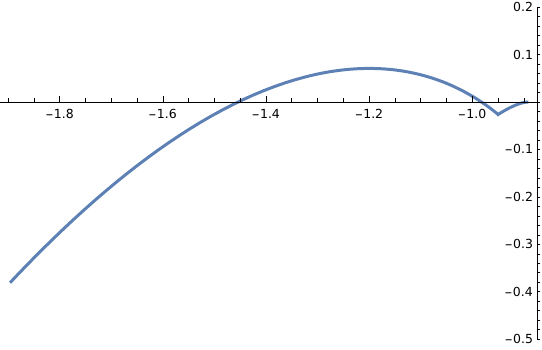}
\caption*{$x=-0.95$}
\end{subfigure}
\begin{subfigure}{.32\textwidth}
\includegraphics[scale=.5]{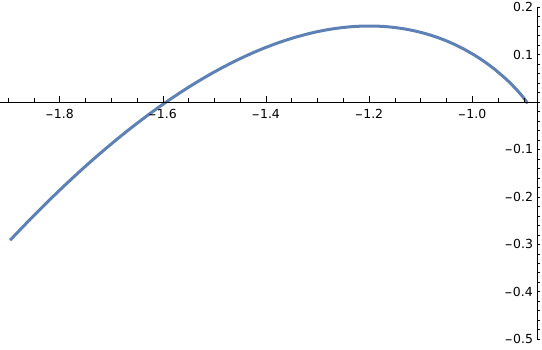}
\caption*{$x=-2\e^{-\eta/2}=-0.894\ldots$}
\end{subfigure}
\caption{Plot of the function~\eqref{eq:functiontoplot} for $\mu<-2\e^{-\eta/2}$ for $\eta=\log 5$ and various values of $x$, see Remark~\ref{remark:BirthOfACut}.}
\label{fig:BirthOfACut}
\end{figure}

It is elementary to check that for $|\mu|<2\e^{-\eta/2}$ we have
\begin{equation}
\label{eq:rescaledVKLS}
\mathfrak h_*(\mu)+\mathbf 1_{(-\infty,0)}(\mu)=\frac 1{2\pi\i}\bigl( g'_+(\mu)- g'_-(\mu)\bigr)=\frac 1\pi\arccos\left(\e^{\eta/2}\frac{\mu}{2}\right)
\end{equation}
and the proof of the first part of Theorem~\ref{thm:minimizer} is complete, because all the conditions of Proposition~\ref{prop:minfromg} are satisfied.

\subsection{Case $x_*<x<2$}\label{sec:minimizationxq<x<2}

In this case we make a ``two-cut'' assumption which, again, will be justified later.
Namely, we assume $I=(a,b)\cup(c,d)$, with
\begin{equation}
a<b<x<c<d.
\end{equation}
(For a heuristic motivation of this assumption, see~Remark~\ref{remark:BirthOfACut}.)
Again, we start from the construction of $g'(z)$, which we define by
\begin{equation}
\label{eq:gprimefrakright}
\begin{aligned}
g'(z)&=
r(z)\left(\int_{(-\infty,a)\cup(b,c)} \frac{\d\nu}{r(\nu)(\nu-z)}-\frac{\eta}{2\pi\i}\int_c^d \frac{\d\nu}{r_+(\nu)(\nu-z)}\right)
\\
&=\pm\i\pi-r(z)\left(\int_d^{+\infty} \frac{\d\nu}{r(\nu)(\nu-z)}+\frac{\eta}{2\pi\i}\int_c^d \frac{\d\nu}{r_+(\nu)(\nu-z)}\right)
\end{aligned}
\end{equation}
where
\begin{equation}
r(z)=\sqrt{(z-a)(z-b)(z-c)(z-d)}
\end{equation}
(analytic for $z\in\mathbb{C}\setminus\bigl([a,b]\cup[c,d]\bigr)$ and $\sim z^2$ as $z\to\infty$) and $r_+(\nu)=\lim_{\varepsilon\downarrow 0}r(\nu+\i\varepsilon)$.
Moreover, in the second line of~\eqref{eq:gprimefrakright} the sign $\pm$ is chosen according to $\pm\Im z>0$ and
the equality of the two lines follows from Cauchy's theorem.

The function $g' (z)$ is analytic for $z\in\mathbb{C}\setminus(-\infty,d]$ and, by the Sokhotski--Plemelj formulas, the boundary values of $g'$ satisfy
\begin{equation}
\label{eq:gprimefrakrightjumps}
\begin{aligned}
&g_+'(\mu)+g_-'(\mu)=-\eta\mathbf 1_{(c,d)}(\mu),&&\mu\in(a,b)\cup(c,d),\\
&g_+'(\mu)-g_-'(\mu)=2\pi\i,&&\mu\in (-\infty,a)\cup(b,c).
\end{aligned}
\end{equation}
As $z\to\infty$, we have
\begin{equation}
\label{eq:gfrakrightexpinfty}
g'(z)=zg_{-2}+\log z+g_{-1}+g_{0}z^{-1}+g_1z^{-2}+O(z^{-3})
\end{equation}
for appropriate coefficients $g_j$ which we will make explicit later.

It is convenient to introduce the elliptic uniformization of the Riemann surface of $r(z)$.
Namely, we consider the Schwarz--Christoffel conformal transformation
\begin{equation}
\label{eq:conformalelliptic}
z\mapsto w(z) = m\int_d^z\frac{\d\nu}{r(\nu)},\qquad m=\frac{\i\pi}{\int_d^c\frac{\d\nu}{r_+(\nu)}}>0.
\end{equation}
(Note that $\int_d^c\frac{\d\nu}{r_+(\nu)}\in\i\mathbb{R}_{>0}$; indeed, the branch of $r(z)$ we consider satisfies $r(\nu)\in\mathbb{R}_{<0}$ for $\nu\in(b,c)$ and $r_+(\nu)\in\i\mathbb{R}_{>0}$ for $\nu\in(c,d)$.)
It maps the half-planes $\lbrace z\in\mathbb{C}:\,\pm\Im z>0\rbrace$ conformally onto the rectangles $\lbrace w\in\mathbb{C}:\,0<\Re w<K,\, 0<\pm\Im w<\pi\rbrace$ (see~Figure~\ref{fig:conformalelliptic}) where
\begin{equation}
K= m\int_c^b\frac{\d\nu}{r(\nu)}>0.
\end{equation}
The inverse map $z=z(w)$ extends to the complex $w$-plane and is the universal cover of the Riemann surface of $r(z)$, realizing the latter as the complex torus $\mathbb{C}/(2K\mathbb{Z}+2\pi\i\mathbb{Z})$.
The involution $w\mapsto -w$ corresponds to the involution that exchanges the two sheets $(z,\pm r(z))$ of the Riemann surface of $r(z)$.
The two points at infinity of the Riemann surface of $r(z)$ correspond to the points $\pm w_\infty$, where
\begin{equation}
\label{eq:defwinfty}
w_\infty= m\int_d^{\infty}\frac{\d\nu}{r(\nu)}\in (0,K).
\end{equation}

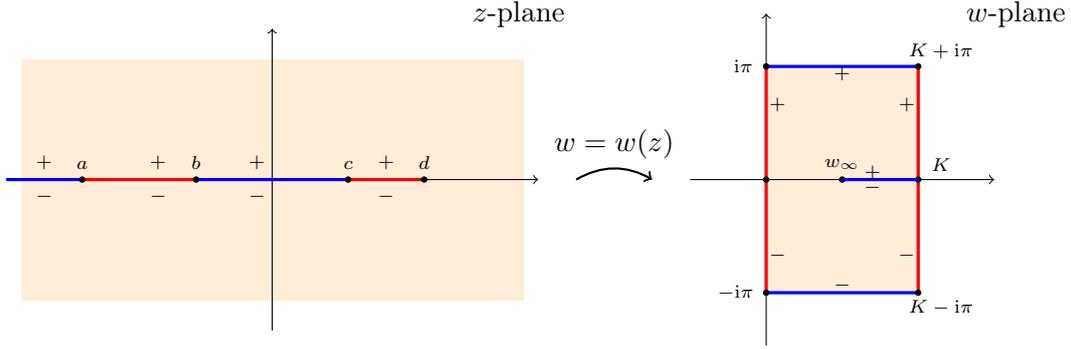
\begin{figure}[t]
\centering
\begin{tikzpicture}
\begin{scope}[shift={(-4,0)}]
\node[below right] at (2.5,2.5) {$z$-plane};

    \fill[orange!15] (-3.3,-1.6) rectangle (3.3,1.6);
    
    \draw[->] (-3.5,0) -- (3.5,0);
    \draw[->] (0,-2) -- (0,2);
    
    \draw[very thick, blue] (-3.5,0) -- (-2.5,0);
    \draw[very thick, red] (-2.5,0) -- (-1,0);
    \draw[very thick, blue] (-1,0) -- (1,0);
    \draw[very thick, red] (1,0) -- (2,0);

    \fill (-2.5,0) circle (1.2pt) node[above] {\tiny{$a$}};
    \fill (-1,0) circle (1.2pt) node[above] {\tiny{$b$}};
    \fill (1,0) circle (1.2pt) node[above] {\tiny{$c$}};
    \fill (2,0) circle (1.2pt) node[above] {\tiny{$d$}};
    
    \node[above] at (-1.5,0) {\tiny{$+$}}; \node[above] at (-3,0) {\tiny{$+$}}; \node[above] at (-.2,0) {\tiny{$+$}}; \node[above] at (1.5,0) {\tiny{$+$}};
    \node[below] at (-1.5,0) {\tiny{$-$}}; \node[below] at (-3,0) {\tiny{$-$}}; \node[below] at (-.2,0) {\tiny{$-$}}; \node[below] at (1.5,0) {\tiny{$-$}};
\end{scope}

\draw[->, thick, bend left] (0,0) to node[above] {$w=w(z)$} (1,0);
    
\begin{scope}[shift={(2.5,0)}]

\node[below right] at (2.5,2.5) {$w$-plane};

    \fill[orange!15] (0,-1.5) rectangle (2,1.5);
    \draw[->] (-1,0) -- (3,0);
    \draw[->] (0,-2.2) -- (0,2.2);
    
    \draw[very thick, blue] (0,-1.5) -- (2,-1.5);
    \draw[very thick, red] (2,-1.5) -- (2,1.5);
    \draw[very thick, blue] (2,1.5) -- (0,1.5);
    \draw[very thick, red] (0,1.5) -- (0,-1.5);
    
    \node at (-.4,-1.5) {\tiny{$-\i\pi$}};
    \node at (-.3,1.5) {\tiny{$\i\pi$}};
    \node at (2.3,-1.7) {\tiny{$K-\i\pi$}};
    \node at (2.3,1.7) {\tiny{$K+\i\pi$}};
    \node at (2.3,.2) {\tiny{$K$}};
    
    \fill (0,-1.5) circle (1.2pt);
    \fill (0,1.5) circle (1.2pt);
    \fill (2,-1.5) circle (1.2pt);
    \fill (2,1.5) circle (1.2pt);
	\fill (0,0) circle (1.2pt);
	\fill (2,0) circle (1.2pt);
    
    \node at (1,-1.4) {\tiny{$-$}};
    \node at (1,1.4) {\tiny{$+$}};
    \node at (0.15,1) {\tiny{$+$}};
    \node at (0.15,-1) {\tiny{$-$}};
    \node at (1.85,1) {\tiny{$+$}};
    \node at (1.85,-1) {\tiny{$-$}};
    \node at (1.4,.1) {\tiny{$+$}};
    \node at (1.4,-.1) {\tiny{$-$}};
    
    \fill (1,0) circle (1.2pt) node[above] {\tiny{$w_\infty$}};
    \draw[very thick, blue] (1,0) -- (2,0);
    \end{scope}
\end{tikzpicture}
\caption{Conformal transformation $w(z)$ defined in~\eqref{eq:conformalelliptic}.}
\label{fig:conformalelliptic}
\end{figure}

To obtain the endpoints $(a, b, c, d)$ we will first determine the parameters $(K, m, w_\infty, d)$.
Three constraints for these parameters are $g_2=g_1=g_0=0$, see~\eqref{eq:gfrakrightexpinfty}.
To find an additional restriction, we introduce
\begin{equation}
\label{eq:gfrakright}
g(z)=\int_d^z g'(y)\,\d y,
\end{equation}
which is analytic for $z\in\mathbb{C}\setminus(-\infty,d]$.
By integrating~\eqref{eq:gprimefrakrightjumps}, we see that the boundary values of $g$ at $(a,b)\cup(c,d)$ satisfy (recall that we are assuming $b<x<c$)
\begin{equation}
g_+(\mu)+g_-(\mu)+V_{\eta,x}(\mu)=\begin{cases}{\ell}_1  & \text{if }\mu\in (a,b)
\\
\ell &\text{if } \mu\in (c,d)
\end{cases}
\end{equation}
with
\begin{equation}
\label{eq:fraklright}
\ell_1=\eta(d-c)-\int_b^c\bigl(g'_+(\nu)+g_-'(\nu)\bigr)\d \nu,\qquad
\ell=\eta(d-x).
\end{equation}
The last condition we need is supplied by the requirement $\ell_1=\ell$, i.e., more explicitly,
\begin{equation}
\label{eq:balance}
\int_b^c\bigl(g'_+(\nu)+g_-'(\nu)\bigr)\d \nu=-\eta(c-x).
\end{equation}

Summarizing, we now have to show that the system of four equations formed by the three conditions $g_j=0$ (for $j=0,1,2$) and by~\eqref{eq:balance} admits a unique solution in the parameters $(K, m, w_\infty, d)$ and that such solution satisfies $b<x<c$.

The first equation, $g_{-2}=0$ is easily rewritten in terms of these parameters because
\begin{equation}
\label{eq:gm2}
g_{-2} = \int_d^{\infty}\frac{\d\nu}{r(\nu)}+\frac{\eta}{2\pi\i}\int_c^d \frac{\d\nu}{r_+(\nu)}=
m^{-1}\left(w_\infty-\frac{\eta}2\right),
\end{equation}
by~\eqref{eq:gprimefrakright} (second line),~\eqref{eq:conformalelliptic}, and~\eqref{eq:defwinfty}.

For the remaining equations, it is convenient to first work out an explicit expression for $g'(z(w))$ (see~Proposition~\ref{prop:h}).
We will work with Weierstrass elliptic functions $\sigma,\zeta,\wp$ with half-periods $K>0$ and $\i\pi$ (see~Appendix~\ref{app:elliptic}). We simply write $\sigma(w)=\sigma(w|K,\i\pi)$, $\zeta(w)=\zeta(w|K,\i\pi)$, and $\wp(w)=\wp(w|K,\i\pi)$ throughout this section.

We will repeatedly use the following well-known fact.

\begin{lemma}
\label{lemma:merodoublyperiodic}
Let $\psi$ be a meromorphic function on $\mathbb{C}$ satisfying $\psi(w+2Kn_0+2\pi\i n_1)=\psi(w)$ for all $n_0,n_1\in\mathbb{Z}$.
Assume that the set of poles of $\psi$ is $\lbrace w_1,\dots,w_N\rbrace+2K\mathbb{Z}+2\pi\i\mathbb{Z}$ and that $\psi(w)=k_i(w-w_i)^{-2}+h_i(w-w_i)^{-1}+O(1)$ as $w\to w_i$. 
Then, $\sum_{i=1}^Nh_i=0$ and, for some $\psi_0\in\mathbb{C}$,
\begin{equation}
\psi(w) = \psi_0+\sum_{i=1}^Nh_i\zeta(w-w_i)+\sum_{i=1}^Nk_i\wp(w-w_i).
\end{equation}
\end{lemma}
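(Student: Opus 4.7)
The plan is to prove this as a standard application of Liouville's theorem for elliptic functions, following the classical decomposition argument for doubly-periodic meromorphic functions.

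First I would establish the residue identity $\sum_{i=1}^N h_i=0$. Integrate $\psi(w)$ around the boundary of a fundamental parallelogram $P$ of the lattice $2K\mathbb{Z}+2\pi\i\mathbb{Z}$ (chosen, if necessary, to be a translate so that none of the poles $w_i$ lie on $\partial P$). By the double periodicity of $\psi$, the contributions from opposite sides of $P$ cancel exactly, so $\oint_{\partial P}\psi(w)\,\d w=0$. On the other hand, by the residue theorem this integral equals $2\pi\i\sum_{i=1}^N h_i$, yielding the claimed identity.

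Next I would consider the auxiliary function
\begin{equation}
\phi(w)=\psi(w)-\psi_0-\sum_{i=1}^N h_i\,\zeta(w-w_i)-\sum_{i=1}^N k_i\,\wp(w-w_i),
\end{equation}
with $\psi_0$ to be fixed. The key local fact is that $\wp(u)=u^{-2}+O(u^2)$ and $\zeta(u)=u^{-1}+O(u^3)$ near $u=0$, so that near each $w_i$ the subtraction cancels both the double and simple pole of $\psi$, leaving only a removable singularity. Since all other points of $\mathbb{C}$ are regular for both $\psi$ and the combination of $\zeta,\wp$ translates, $\phi$ extends to an entire function.

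The main point is then periodicity. The term $\sum k_i\wp(w-w_i)$ is doubly periodic because $\wp$ is. For the $\zeta$-sum we use the quasi-periodicity relations $\zeta(w+2K)=\zeta(w)+2\zeta(K)$ and $\zeta(w+2\pi\i)=\zeta(w)+2\zeta(\i\pi)$ (Legendre's quasi-periods), which give
\begin{equation}
\sum_{i=1}^N h_i\,\zeta(w+2K-w_i)-\sum_{i=1}^N h_i\,\zeta(w-w_i)=2\zeta(K)\sum_{i=1}^N h_i=0,
\end{equation}
and similarly along the imaginary half-period. Hence $\phi$ is entire and doubly periodic, therefore bounded, and by Liouville's theorem constant. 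Choosing $\psi_0$ equal to that constant gives the desired representation.

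I do not foresee a genuine obstacle; the only delicate point is bookkeeping of the quasi-periods of $\zeta$ and the verification that the truncated Laurent expansions of $\zeta$ and $\wp$ match the prescribed principal parts of $\psi$ — both of which are classical and purely local computations.
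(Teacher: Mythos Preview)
Your argument is correct and is the standard classical proof of this decomposition; the paper itself does not give a proof, simply stating the lemma as a ``well-known fact.'' One cosmetic remark: it is cleaner to define $\phi$ without the unknown constant $\psi_0$, conclude that $\phi$ is constant, and then \emph{name} that constant $\psi_0$, rather than introducing $\psi_0$ before you know it exists.
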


As a preliminary computation for what follows, we use this lemma to express $z(w)$ in terms of elliptic functions of $w$.
As $z\to\infty$ we have
\begin{equation}
w(z)=w_\infty-m z^{-1}+O(z^{-2}),
\end{equation}
where $w_\infty$ is defined in~\eqref{eq:defwinfty}, and so
\begin{equation}
z(w) = \frac {m}{w_\infty-w}+O(1),\quad w\to w_\infty.
\end{equation}
Therefore, the function~$z(w)$ is even, doubly periodic, and meromorphic with simple poles at~$\pm w_\infty+2K\mathbb{Z}+2\pi\i\mathbb{Z}$.
By Lemma~\ref{lemma:merodoublyperiodic}, we have
\begin{equation}
\label{eq:zelliptic}
z(w)=d-2m\zeta(w_\infty)+m\zeta(w_\infty-w)+m\zeta(w_\infty+w).
\end{equation}
Here we also use the value $z(0)=d$ to fix the constant (and we exploit the fact that $\zeta$ is odd).
Using~\eqref{eq:Taylorzeta} we see that as $w\to w_\infty$
\begin{equation}
\label{eq:expzelliptic}
m z(w) = \frac 1{w_\infty-w}+c_0+c_1(w_\infty-w)+O\bigl((w_\infty-w)^2\bigr)
\end{equation}
where
\begin{equation}
\label{eq:c0c1}
c_0=m^{-1}d+\zeta(2w_\infty)-2\zeta(w_\infty),\qquad
c_1=\wp(2w_\infty).
\end{equation}
Therefore, as $w\to w_\infty$, we have
\begin{equation}
\label{eq:explogzelliptic}
\log(z(w)) = -\log\biggl(\frac{w_\infty-w}{m}\biggr)+c_0(w_\infty-w)+\bigl(c_1-\frac 12 c_0^2\bigr) (w_\infty-w)^2+O\bigl((w_\infty-w)^3\bigr)
\end{equation}

Next, we introduce
\begin{equation}
\label{eq:f=gzw}
f(w) =g'\bigl(z(w)\bigr).
\end{equation}
Note that $z\mapsto w(z)$ is a conformal transformation of $z\in \mathbb{C}\setminus(-\infty,d]$ onto $w\in \mathcal R$, where
\begin{equation}\mathcal R=\bigl\lbrace w\in\mathbb{C}:\,0<\Re w<K,\,-\pi<\Im w<\pi,\, w\not\in (w_\infty,K)\bigr\rbrace,\end{equation}
see~Figure~\ref{fig:conformalelliptic}.
Hence, the function $f(w)$ is analytic for $w\in \mathcal R$.

\begin{proposition}
\label{prop:h}
Assuming $g_{-2}=0$, for all $w\in\mathcal R$ we have
\begin{equation}
f(w)=Aw-\frac {\eta}2-\log\frac{\sigma\bigl(w_\infty-w\bigr)}{\sigma\bigl(w_\infty+w\bigr)},\qquad A=-\frac{\zeta(\i\pi)}{\i\pi}\eta.
\end{equation}
Here, $\log\bigl(\sigma(w_\infty-w)/\sigma(w_\infty+w)\bigr)$ denotes the branch analytic in the simply connected set $\mathcal R$ which takes real values for $w\in(0,w_\infty)$.
\end{proposition}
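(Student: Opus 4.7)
The plan is to represent $f(w)$ in closed form by first showing that $f'$ is an elliptic function on $\mathbb{C}$, then identifying it via Lemma~\ref{lemma:merodoublyperiodic} and integrating to recover $f$, and finally fixing the integration constants by a special evaluation and a monodromy computation. The starting point is that $f$ is analytic in $\mathcal R$, and the hypothesis $g_{-2}=0$ forces $w_\infty=\eta/2$ by~\eqref{eq:gm2}; combining~\eqref{eq:gfrakrightexpinfty} with the expansion~\eqref{eq:explogzelliptic} of $z(w)$ near $w_\infty$ then yields
\begin{equation*}
    f(w) = -\log(w_\infty - w) + \log m + g_{-1} + O(w_\infty - w), \qquad w \to w_\infty,
\end{equation*}
so $f'$ has a simple pole at $w_\infty$ of residue $-1$.

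The key step is the extension of $f'$ to a doubly periodic meromorphic function on $\mathbb{C}$ with periods $2K$ and $2\pi\i$. By~\eqref{eq:gprimefrakrightjumps}, across each side of the rectangle $\mathcal R$ the function $f$ itself has only a \emph{constant} discontinuity: crossing the image of a blue segment $(-\infty,a)\cup(b,c)$ shifts $f$ by $\pm 2\pi\i$, while crossing the image of a red cut $(a,b)$ or $(c,d)$ reflects $f$ to $-f$ or to $-f-\eta$. In either case the derivative $f'$ has no jump and thus continues analytically across every edge of $\mathcal R$, yielding an elliptic function on $\mathbb{C}$ whose only poles in a fundamental domain are simple, located at $\pm w_\infty$. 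The residue at $-w_\infty$ must be $+1$, either by the sheet-swap symmetry $w\mapsto -w$ of the uniformization or by the vanishing of the sum of residues in a fundamental domain.

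Applying Lemma~\ref{lemma:merodoublyperiodic} produces
\begin{equation*}
    f'(w) = A - \zeta(w - w_\infty) + \zeta(w + w_\infty)
\end{equation*}
for some $A\in\mathbb{C}$, and integrating using $\zeta = \sigma'/\sigma$ together with the oddness of $\sigma$ (absorbing the resulting $\log(-1)$ into the additive constant) yields
\begin{equation*}
    f(w) = A w + C - \log\frac{\sigma(w_\infty - w)}{\sigma(w_\infty + w)}
\end{equation*}
on $\mathcal R$, with the branch of the logarithm as in the statement. Evaluating at $w=0$ kills the logarithmic term, so $C=f(0)=g'(d)$; the boundary identities $g'_+(d)+g'_-(d)=-\eta$ (from $(c,d)$) and $g'_+(d)-g'_-(d)=0$ (from the right of $d$) give $g'(d)=-\eta/2$, hence $C=-\eta/2$.

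The remaining constant $A$ is fixed by the $2\pi\i$-monodromy of $f$. Using the $\sigma$-quasi-periodicity $\sigma(z+2\pi\i) = -\sigma(z)\exp\bigl(2\zeta(\i\pi)(z+\i\pi)\bigr)$, a direct computation from the explicit formula yields
\begin{equation*}
    f(w + 2\pi\i) - f(w) = 2\pi\i\, A + 4\zeta(\i\pi)\, w_\infty \pmod{2\pi\i\,\mathbb{Z}}.
\end{equation*}
On the other hand, translating by $2\pi\i$ in the $w$-plane corresponds in the $z$-picture to crossing the blue cut $(b,c)$, across which $g'$ jumps by $2\pi\i$; with the branch of the logarithm fixed coherently so that the identity $f(w)=g'(z(w))$ holds throughout $\mathcal R$ with the principal branches, this contribution is absorbed into the branch choice and the net monodromy vanishes. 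Hence $2\pi\i\, A + 4\zeta(\i\pi)\, w_\infty = 0$, and substituting $w_\infty=\eta/2$ gives $A=-\zeta(\i\pi)\eta/(\i\pi)$. The main technical obstacle is precisely the coherent tracking of logarithmic branches, both in the extension of $f$ across the four edges of $\mathcal R$---where the reflections and shifts of $g'$ must combine consistently with the branches chosen for the $\sigma$-ratio---and in the monodromy identity that uniquely pins down $A$.
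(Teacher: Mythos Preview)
Your overall strategy matches the paper's: extend $f'$ to an even doubly periodic meromorphic function, identify it via Lemma~\ref{lemma:merodoublyperiodic}, integrate, and then fix the two constants. The identification of $C=f(0)=-\eta/2$ is correct and essentially identical to the paper's.

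The gap is in your determination of $A$. You use the $2\pi\i$-monodromy (equivalently, condition~\eqref{condh3}), which yields an identity of the form
\[
2\pi\i\,A + 4\zeta(\i\pi)\,w_\infty \equiv (\text{jump of }g') \pmod{2\pi\i\,\mathbb{Z}},
\]
and you then assert that the net monodromy vanishes exactly. But this is precisely the step you flag as ``the main technical obstacle'' and do not carry out: the integer multiple of $2\pi\i$ contributed by the branch of $\log\bigl(\sigma(w_\infty-w)/\sigma(w_\infty+w)\bigr)$ as $w$ moves from the bottom edge of $\mathcal R$ to the top is never pinned down. Saying the $2\pi\i$ jump of $g'$ is ``absorbed into the branch choice'' is a heuristic, not a proof; one must actually track the imaginary part of this logarithm from the real segment $(0,w_\infty)$ (where it vanishes by definition) to the horizontal edges, and that winding number is not obvious. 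As it stands, your condition determines $A$ only modulo~$\mathbb{Z}$.

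The paper sidesteps this by using condition~\eqref{condh1} instead, namely the reflection $f(K+\i u)+f(K-\i u)=0$ across the right edge. Both evaluation points $K\pm\i u$ lie in $\overline{\mathcal R}$ and are complex conjugates, so by Schwarz reflection $\log\wt f(K+\i u)+\log\wt f(K-\i u)=2\Re\log\wt f(K+\i u)=\log|\wt f(K+\i u)|^2$ is real; the $\sigma$-quasiperiodicity identity~\eqref{eq:htilde} then gives exactly $-4\zeta(K)w_\infty$ with no $2\pi\i$ ambiguity. This yields $2AK-\eta+4\zeta(K)w_\infty=0$, and the Legendre identity~\eqref{eq:LegendreIdentity} together with $w_\infty=\eta/2$ simplifies to $A=-\zeta(\i\pi)\eta/(\i\pi)$. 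Your argument can be repaired either by switching to this reflection condition, or by actually supplying the winding computation for the logarithm; as written it is incomplete.
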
 
\begin{proof}
The existence of boundary values of $g'(z)$ and the jump conditions~\eqref{eq:gprimefrakrightjumps} imply the relations
\begin{align}
\label{condh1}
f(K+\i u)+f(K-\i u)&=0,&&u\in(0,\pi),\\
\label{condh2}
f(\i u)+f(-\i u)&=-\eta,&&u\in (0,\pi),\\
\label{condh3}
f(u+\i\pi)-f(u-\i\pi)&=2\pi\i,&&u\in(0,K),\\
\label{condh4}
f_+(u)-f_-(u)&=2\pi\i,&&u\in(w_\infty,K),
\end{align}
which also involve the boundary values $f_\pm$ of $f$ at the boundary of $\mathcal R$, see~Figure~\ref{fig:conformalelliptic}.
Moreover, using the assumption $g_{-2}=0$, $f(w)=\log(w_\infty-w)+O(1)$ as $w\to w_\infty$.
Therefore, the derivative $f'(w)=\frac{\d}{\d w}f(w)$ extends to an even doubly periodic meromorphic function of~$w$ with simple poles at~$w=\pm w_\infty+2K\mathbb{Z}+2\pi\i\mathbb{Z}$ such that
\begin{equation}
f'(w)=-\frac 1{w-w_\infty}+O(1),\qquad w\to w_\infty.
\end{equation}
By Lemma~\ref{lemma:merodoublyperiodic} we get
\begin{equation}
\label{eq:hprime}
f'(w)= A-\zeta(w-w_\infty)+\zeta(w+w_\infty)
=A-\frac{\d}{\d w}\log\frac{\sigma(w_\infty-w)}{\sigma(w_\infty+w)}
\end{equation}
for some $A\in\mathbb{C}$.
Therefore, for all $w\in\mathcal R$,
\begin{equation}
f(w)=f(0)+Aw-\log\frac{\sigma(w_\infty-w)}{\sigma(w_\infty+w)},
\end{equation}
where the branch of the logarithm is defined in the statement: indeed, $g'(z)$ takes real values for $z\in(d,+\infty)$, hence $f(w)$ must take real values for $w\in (0,w_\infty)$.
Due to this choice of branch and the properties of $\sigma$, we see that~\eqref{condh4} is satisfied, as well as~\eqref{condh2}, provided that $f(0)=-\frac 12\eta$.
Next, note that the the function $\wt{f}(w) = \frac{\sigma(w_\infty-w)}{\sigma(w_\infty+w)}$ satisfies, thanks to~\eqref{eq:periodicsigma},
\begin{equation}
\label{eq:htilde}
\wt{f}(K-u) = \frac 1{\wt{f}(K+u)}\e^{-4\zeta(K)w_\infty}.
\end{equation}
For $u\in (0,\i\pi)$ we therefore have
\begin{equation}
f(K+u)+f(K-u)=2AK-\eta+4\zeta(K)w_\infty
\end{equation}
and this expression vanishes, in agreement with~\eqref{condh1}, if and only if we take
\begin{equation}
\label{eq:A}
A=-\frac 1{2K}\bigl(4\zeta(K)w_\infty-\eta\bigr)=-\frac{\zeta(\i\pi)}{\i \pi}\eta.
\end{equation}
In the last equality we used~\eqref{eq:gm2} to substitute $w_\infty={\eta}/2$ and the Legendre identity~\eqref{eq:LegendreIdentity} to simplify the result.
This determines the expression for $f(w)$ claimed in the statement, thus completing the proof.
(One could verify by similar means that~\eqref{condh3} is also satisfied, but this is not necessary for the proof.)
\end{proof}

In what follows, we will continue denoting $A=-\frac{\zeta(\i\pi)}{\i\pi}\eta$ for short.

\begin{corollary}
Assuming $g_{-2}=0$, as $w\to w_\infty$ we have
\begin{equation}
\begin{aligned}
f(w)&=-\log(w_\infty-w)+Aw_\infty-\frac{\eta}2+\log\sigma(2w_\infty)-\bigl(A+\zeta(2w_\infty)\bigr)(w_\infty-w)
\\
&\qquad \qquad\qquad\qquad \qquad\qquad\qquad
-\frac 12\wp(2w_\infty)(w_\infty-w)^2+O\bigl((w_\infty-w)^3\bigr).
\end{aligned}
\end{equation}
\end{corollary}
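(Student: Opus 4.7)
The statement is a straightforward Taylor expansion of the explicit formula
\[
f(w) = Aw - \frac{\eta}{2} - \log\sigma(w_\infty - w) + \log\sigma(w_\infty + w)
\]
obtained in Proposition~\ref{prop:h}, so the plan is simply to expand each term as $w \to w_\infty$ and collect.

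The affine piece $Aw = Aw_\infty - A(w_\infty - w)$ is immediate. For $-\log\sigma(w_\infty - w)$, I would use that the Weierstrass $\sigma$-function is odd with $\sigma(z) = z + O(z^5)$ near the origin (since $\sigma$ is odd, $\sigma'(0)=1$, and the next nonvanishing Taylor coefficient at $0$ is the one of $z^5$). Therefore
\[
\log\sigma(w_\infty - w) = \log(w_\infty - w) + O\bigl((w_\infty - w)^4\bigr),
\]
so this piece contributes $-\log(w_\infty - w)$ with no correction up to the order required.

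For $\log\sigma(w_\infty + w)$, I would set $u = w_\infty - w$ so that $w_\infty + w = 2w_\infty - u$ and Taylor-expand $F(u) = \log\sigma(2w_\infty - u)$ around $u=0$ using $(\log\sigma)'(z) = \zeta(z)$ and $\zeta'(z) = -\wp(z)$. Taking care of the chain rule one finds $F'(0) = -\zeta(2w_\infty)$ and $F''(0) = -\wp(2w_\infty)$, hence
\[
\log\sigma(w_\infty + w) = \log\sigma(2w_\infty) - \zeta(2w_\infty)(w_\infty - w) - \tfrac{1}{2}\wp(2w_\infty)(w_\infty - w)^2 + O\bigl((w_\infty-w)^3\bigr).
\]
Adding the three contributions and grouping the coefficients of $(w_\infty - w)^0$, $(w_\infty - w)^1$, and $(w_\infty - w)^2$ yields exactly the formula claimed in the Corollary.

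Since the manipulation is a routine expansion, there is no substantial obstacle; the only point demanding mild attention is tracking the signs arising from the substitution $u = w_\infty - w$ in the chain rule (in particular the minus sign in $F''(0) = -\wp(2w_\infty)$, which is what produces the $-\tfrac{1}{2}\wp(2w_\infty)$ in the final answer) and invoking the oddness of $\sigma$ to justify that $\log\sigma(w_\infty - w)-\log(w_\infty-w)$ contributes no term of order $(w_\infty - w)^k$ for $k\leq 3$.
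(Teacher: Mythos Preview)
Your proposal is correct and is exactly the intended argument: the paper states this as an immediate corollary of Proposition~\ref{prop:h}, and the routine Taylor expansion you carry out (using $\sigma(z)=z+O(z^5)$, $(\log\sigma)'=\zeta$, and $\zeta'=-\wp$) is precisely what is needed to pass from that proposition to the claimed expansion.
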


By comparing with~\eqref{eq:gfrakrightexpinfty}, the last corollary implies
\begin{equation}
\begin{aligned}
g_{-1}&=Aw_\infty-\frac{\eta}2+\log\sigma(2w_\infty)-\log m,\qquad\quad
g_0=-m\bigl(A+\zeta(2w_\infty)+c_0\bigr),\\
g_1&=-m^2\biggl(c_1-\frac{c_0^2}2\biggr)+c_0\wt g_0-\frac {m^2}2\wp(2w_\infty). \label{eq:g0gpm1}
\end{aligned}
\end{equation}
Therefore, the equation $g_0=0$ implies, in view of~\eqref{eq:c0c1},
\begin{equation}
\label{eq:qplusg0=0}
A+m^{-1}d+2\zeta(2w_\infty)-2\zeta(w_\infty)=0.
\end{equation}
In order to examine~\eqref{eq:balance}, we will also express
\begin{equation}
\label{eq:lastintegral}
\begin{aligned}
F(w)=g(z(w))&=\int_{d}^{z(w)}g'(y)\d y \\
&= \int_0^wf(y) z'(y)\d y= f(w)z(w)+d\frac{\eta}2-\int_0^wf'(y) z(y)\d y
\end{aligned}
\end{equation}
in terms of elliptic functions of $w$.
We recall that $f(w)=g'(z(w))$ and the explicit values $f(0)=-\frac 12\eta$ and $z(0)=d$.
In order to compute the last integral in~\eqref{eq:lastintegral}, note that, by~\eqref{eq:zelliptic} and~\eqref{eq:hprime}, we have
\begin{equation}
\begin{aligned}
&f'(y) z(y)
\\
&= \biggl(A-\frac{\d}{\d y}\log\frac{\sigma(w_\infty-y)}{\sigma(w_\infty+y)}\biggr)\biggl(d-2m\zeta(w_\infty)-m\frac{\d}{\d y}\log\frac{\sigma(w_\infty-y)}{\sigma(w_\infty+y)}\biggr)
\\
&=A\bigl(d-2m\zeta(w_\infty)\bigr)-\bigl(m A+d-2m\zeta(w_\infty)\bigr)\frac{\d}{\d y}\log\frac{\sigma(w_\infty-y)}{\sigma(w_\infty+y)}+
m\biggl(\frac\d{\d y}\log\frac{\sigma(w_\infty-y)}{\sigma(w_\infty+y)}\biggr)^2.
\end{aligned}
\end{equation}
Let us prove an identity which is useful to integrate the last term in this expression.
\begin{lemma}
We have
\begin{equation}
\label{eq:tobeprovenelliptic}
\begin{aligned}
&\biggl(\frac\d{\d y}\log\frac{\sigma(w_\infty-y)}{\sigma(w_\infty+y)}\biggr)^2=
\\&\qquad\wp(w_\infty-y)+\wp (w_\infty+y)+2\zeta(2w_\infty)\bigl(\zeta(w_\infty-y)+\zeta(w_\infty+y)\bigr)-\frac{\sigma''(2w_\infty)}{\sigma(2w_\infty)}.
\end{aligned}
\end{equation}
\end{lemma}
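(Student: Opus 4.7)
The plan is to reduce the identity to two classical facts about Weierstrass functions and a single algebraic rearrangement. Write $u=w_\infty-y$ and $v=w_\infty+y$, so that $u+v=2w_\infty$ is independent of $y$.

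\textbf{Step 1: rewrite the left-hand side.} Using $\sigma'/\sigma=\zeta$ we get
\begin{equation}
\frac{\d}{\d y}\log\frac{\sigma(w_\infty-y)}{\sigma(w_\infty+y)}\,=\,-\zeta(u)-\zeta(v),
\end{equation}
so the left-hand side of~\eqref{eq:tobeprovenelliptic} equals $\bigl(\zeta(u)+\zeta(v)\bigr)^2$.

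\textbf{Step 2: invoke the classical $\zeta$ addition identity.} The standard addition theorem for the Weierstrass $\zeta$-function combined with the addition theorem for $\wp$ gives the well-known identity
\begin{equation}
\bigl(\zeta(u)+\zeta(v)-\zeta(u+v)\bigr)^2\,=\,\wp(u)+\wp(v)+\wp(u+v),
\end{equation}
which follows from squaring $\zeta(u+v)-\zeta(u)-\zeta(v)=\tfrac12\bigl(\wp'(u)-\wp'(v)\bigr)/\bigl(\wp(u)-\wp(v)\bigr)$ and using $\wp(u+v)=-\wp(u)-\wp(v)+\tfrac14\bigl((\wp'(u)-\wp'(v))/(\wp(u)-\wp(v))\bigr)^2$. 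Expanding the square and solving for $\bigl(\zeta(u)+\zeta(v)\bigr)^2$ yields
\begin{equation}
\bigl(\zeta(u)+\zeta(v)\bigr)^2\,=\,\wp(u)+\wp(v)+\wp(u+v)+2\zeta(u+v)\bigl(\zeta(u)+\zeta(v)\bigr)-\zeta(u+v)^2.
\end{equation}

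\textbf{Step 3: convert $\wp-\zeta^2$ into $\sigma''/\sigma$.} From $\sigma'=\sigma\zeta$ we differentiate once more to obtain $\sigma''=\sigma\zeta^2+\sigma\zeta'=\sigma(\zeta^2-\wp)$, i.e.\ $\sigma''/\sigma=\zeta^2-\wp$. Applying this at the argument $u+v=2w_\infty$ gives
\begin{equation}
\wp(u+v)-\zeta(u+v)^2\,=\,-\frac{\sigma''(2w_\infty)}{\sigma(2w_\infty)},
\end{equation}
and substituting into the expression from Step~2 produces exactly the right-hand side of~\eqref{eq:tobeprovenelliptic}.

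There is no real obstacle: the only nontrivial input is recognizing that the two standard Weierstrass addition identities reduce the claim to the elementary relation $\sigma''/\sigma=\zeta^2-\wp$. One minor point to check is that, since the right-hand side of~\eqref{eq:tobeprovenelliptic} was to be independent of $y$ only through the sum $u+v=2w_\infty$, the above reduction is consistent; this is automatic once the classical identity in Step~2 is invoked.
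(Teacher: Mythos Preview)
Your proof is correct and takes a genuinely different route from the paper. The paper argues via Lemma~\ref{lemma:merodoublyperiodic}: it observes that the left-hand side is an even doubly periodic meromorphic function of $y$ with double poles only at $\pm w_\infty$ modulo the lattice, computes the Laurent expansion at $y=w_\infty$ (finding principal part $(y-w_\infty)^{-2}-2\zeta(2w_\infty)(y-w_\infty)^{-1}$ and constant term $\zeta(2w_\infty)^2-2\zeta'(2w_\infty)$), and then invokes the structure theorem to match with the right-hand side. Your argument bypasses all of this by recognizing that the classical squared addition identity $(\zeta(u)+\zeta(v)-\zeta(u+v))^2=\wp(u)+\wp(v)+\wp(u+v)$, combined with $\sigma''/\sigma=\zeta^2-\wp$ evaluated at $u+v=2w_\infty$, gives the result directly. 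Your approach is more self-contained and arguably more transparent for this particular identity; the paper's approach has the advantage of being the same mechanism used repeatedly throughout Section~\ref{sec:minimizationxq<x<2} (for $z(w)$, for $f'(w)$, etc.), so it fits the surrounding narrative without introducing a separate tool.
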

\begin{proof}
The left-hand side of~\eqref{eq:tobeprovenelliptic} is an even, doubly periodic, and meromorphic function of $y$ with double poles at $\pm w_\infty+2K\mathbb{Z}+2\pi\i\mathbb{Z}$ with Laurent expansion
\begin{equation}
\frac 1{(y-w_\infty)^2}-\frac{2\zeta(2w_\infty)}{y-w_\infty}+\zeta(2w_\infty)^2-2\zeta'(2w_\infty)+O(y-w_\infty)
\end{equation}
as $y\to w_\infty$.
By Lemma~\ref{lemma:merodoublyperiodic}, this function equals the right-hand side of~\eqref{eq:tobeprovenelliptic} (where we use the constant term in the expansion as $y\to w_\infty$ to fix the constant).
\end{proof}

Using this lemma we finally obtain
\begin{equation}
\label{eq:F}
\begin{aligned}
F(w) &= f(w)z(w)+d\frac{\eta}2-\biggl(A\bigl(d-2m\zeta(w_\infty)\bigr)-m\frac{\sigma''(2w_\infty)}{\sigma(2w_\infty)}\biggr)w
\\
&\quad +\bigl(m A+d-2m\zeta(w_\infty)+2m\zeta(2w_\infty)\bigr)\log\frac{\sigma(w_\infty-w)}{\sigma(w_\infty+w)}
\\
&\quad-m\zeta(w_\infty-w)+m\zeta(w_\infty+w).
\end{aligned}
\end{equation}
By~\eqref{eq:qplusg0=0}, the coefficient in front of $\log\frac{\sigma(w_\infty-w)}{\sigma(w_\infty+w)}$ in this expression vanishes, so we can safely ignore the term in the second line from now on.

Equation~\eqref{eq:balance} is equivalent to
\begin{equation}
\begin{aligned}
&\int_d^c\bigl(g'_+(\nu)+g_-'(\nu)\bigr)\d \nu-\int_d^b\bigl(g'_+(\nu)+g_-'(\nu)\bigr)\d \nu 
\\
&\qquad
= F(\i\pi)+F(-\i\pi)-F(K+\i\pi)-F(K-\i\pi)=-\eta(c-x).
\end{aligned}
\end{equation}
Using the explicit expression for $F(w)$ we just derived, see~\eqref{eq:F}, as well as~\eqref{condh1} and~\eqref{condh2}, this condition is equivalent to
\begin{equation}
\biggl(A\bigl(d-2m\zeta(w_\infty)\bigr)-m\frac{\sigma''(2w_\infty)}{\sigma(2w_\infty)}\biggr)2K-4m\zeta(K) = x\eta.
\end{equation}

Summarizing, the four equations that determine $(K,m,w_\infty,d)$ are
\begin{align}
w_\infty&=\frac{\eta}2, \label{eq:endpointselliptic1}\\
\log m&=-\frac{\zeta(\i\pi)}{\i\pi}w_\infty\eta-\frac{\eta}2+\log\sigma(2w_\infty),\label{eq:endpointselliptic2}\\
m^{-1}d&=\frac{\zeta(\i\pi)}{\i\pi}\eta-2\zeta(2w_\infty)+2\zeta(w_\infty), \label{eq:endpointselliptic3}\\
m^{-1}x\eta &=\biggl(-\frac{\zeta(\i\pi)}{\i\pi}\eta\bigl(m^{-1}d-2\zeta(w_\infty)\bigr)-\frac{\sigma''(2w_\infty)}{\sigma(2w_\infty)}\biggr)2K-4\zeta(K). \label{eq:endpointselliptic4}
\end{align}
The first equation determines $w_\infty$ and the third one gives the value of $d$ in terms of $m,K$.
The remaining two equations then read
\begin{equation}
\label{eq:systemfinaltwocut}
m=\mathcal{U}(K),\qquad
-\frac{x\eta}{2m}=\mathcal V(K),
\end{equation}
where, for $K\in(\eta/2,+\infty)$, we denote
\begin{align}
\label{eq:f1Weierstrass}
\mathcal{U}(K)&=\exp\biggl(-\frac{\eta}{2}-\frac{\zeta(\i\pi)}{2\i\pi}\eta^2\biggr)\,\sigma(\eta),
\\
\label{eq:f2}
\mathcal V(K)&=1+K\biggl( \left( \zeta(\eta) - \eta \frac{\zeta(\i \pi)}{\i \pi}  \right)^2 - \wp( \eta ) + 2 \frac{\zeta(\i \pi)}{\i \pi} \biggr).
\end{align}
We observe that one needs to use the identity $\frac{\sigma''}{\sigma}=\zeta^2-\wp$ and the Legendre identity~\eqref{eq:LegendreIdentity} to rewrite~\eqref{eq:endpointselliptic4} as the second equation in~\eqref{eq:systemfinaltwocut}.
We also note that~\eqref{eq:f1Weierstrass} coincides with the definition in~\eqref{eq:f1} by~\eqref{eq:relsigmatheta} and the Legendre identity~\eqref{eq:LegendreIdentity}.

\begin{proposition}
\label{prop:relf1f2}
For all $K\in(\eta/2,+\infty)$ we have
\begin{equation}
\mathcal V(K) = 1-K\frac{\partial}{\partial K}\log\mathcal{U}(K).
\end{equation}
\end{proposition}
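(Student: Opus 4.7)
The plan is to compute $K\frac{\partial}{\partial K}\log\mathcal{U}(K)$ explicitly and to verify that the result matches $1-\mathcal{V}(K)$ as given in \eqref{eq:f2}. I would work with the theta-function form \eqref{eq:f1} of $\mathcal{U}(K)$ rather than with the Weierstrass form \eqref{eq:f1Weierstrass}, as the former is far better suited to computing the $K$-derivative. Setting $z=\eta/(2K)$ and $\tau=\i\pi/K$, so that $\partial_K z = -z/K$ and $\partial_K\tau=-\tau/K$, an elementary chain-rule computation reduces $K\frac{\partial}{\partial K}\log\mathcal{U}(K)$ to a combination of the elementary term $1-\eta^2/(4K)$ together with first- and second-order $z$-derivatives of $\log\vartheta_{11}(z|\tau)$ at $z=\eta/(2K)$ and $z=0$. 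The $\tau$-derivatives that appear at intermediate stages are all converted into $z$-derivatives by means of the heat equation $\partial_\tau\vartheta_{11} = \frac{1}{4\pi\i}\partial_z^2\vartheta_{11}$ and its $z=0$ degeneration $\partial_\tau\vartheta_{11}'(0|\tau) = \frac{1}{4\pi\i}\vartheta_{11}'''(0|\tau)$.

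The second step is to translate the resulting theta-derivative quantities into Weierstrass language, so that they can be directly compared with \eqref{eq:f2}. Combining the Jacobi--Weierstrass relation \eqref{eq:relsigmatheta} with the defining identities $(\log\sigma)' = \zeta$ and $(\log\sigma)'' = -\wp$ yields
\[
\partial_z\log\vartheta_{11}(z|\tau) = 2K\zeta(w) - 2\zeta(K)w,\qquad \partial_z^2\log\vartheta_{11}(z|\tau) = -4K^2\wp(w) - 4K\zeta(K),
\]
with $w=2Kz$. Evaluating these at $z=\eta/(2K)$ produces the main contribution, while matching the constant term of the Laurent expansion of the same expression at $z=0$ (using $\log\vartheta_{11}(z|\tau)=\log z + \log\vartheta_{11}'(0|\tau) + z^2\vartheta_{11}'''(0|\tau)/\bigl(6\vartheta_{11}'(0|\tau)\bigr)+O(z^4)$) furnishes the auxiliary identity
\[
\frac{\vartheta_{11}'''(0|\tau)}{\vartheta_{11}'(0|\tau)} = -12K\zeta(K),
\]
which controls the $K$-derivative of $\log\vartheta_{11}'(0|\tau)$.

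Finally I would assemble all contributions and simplify using Legendre's identity \eqref{eq:LegendreIdentity} in the form $\zeta(K)/K = \zeta(\i\pi)/(\i\pi) + 1/(2K)$, so that all occurrences of $\zeta(K)$ are rewritten in terms of $\zeta(\i\pi)/(\i\pi)$. After this substitution the explicit $\eta^2/(4K)$ term cancels exactly and the expression collapses to
\[
K\frac{\partial}{\partial K}\log\mathcal{U}(K) = K\wp(\eta) - 2K\frac{\zeta(\i\pi)}{\i\pi} - K\left(\zeta(\eta) - \eta\frac{\zeta(\i\pi)}{\i\pi}\right)^{\!2},
\]
which, in view of \eqref{eq:f2}, is exactly $1-\mathcal{V}(K)$. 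The main obstacle is not conceptual but computational: the conversions between theta and Weierstrass formalisms (in particular the extraction of $\vartheta_{11}'''(0|\tau)/\vartheta_{11}'(0|\tau)$ from a Laurent matching) and the use of Legendre's identity to produce the numerous cancellations leading to the clean final form demand careful bookkeeping to avoid error.
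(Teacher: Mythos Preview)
Your proposal is correct and the computation checks out in full; the final identity you arrive at,
\[
\frac{\partial}{\partial K}\log\mathcal{U}(K)=\wp(\eta)-2\frac{\zeta(\i\pi)}{\i\pi}-\Bigl(\zeta(\eta)-\eta\frac{\zeta(\i\pi)}{\i\pi}\Bigr)^{2},
\]
is exactly what the paper establishes. However, your route is genuinely different from the paper's. The paper works from the Weierstrass form~\eqref{eq:f1Weierstrass} of $\mathcal{U}(K)$ and invokes the ready-made formulas for $\partial_K\log\sigma(z)$ and $\partial_K\zeta(\i\pi)$ from Lemma~\ref{lemma:WeierstrassDerivativesinK} and~\eqref{eq: partial K zeta i pi}; once those are in hand, the $g_2$ terms cancel immediately and the result drops out in two lines. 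You instead start from the theta form~\eqref{eq:f1}, convert all $\tau$-derivatives to $z$-derivatives via the heat equation, and then translate back to Weierstrass language through~\eqref{eq:relzetatheta}--\eqref{eq:relwptheta}, with Legendre's identity supplying the final cancellations. Your argument is longer but has the advantage of being self-contained: it relies only on the heat equation and the standard Jacobi--Weierstrass dictionary, and avoids any appeal to the unproven Lemma~\ref{lemma:WeierstrassDerivativesinK}. The paper's approach is shorter precisely because it outsources that work to the appendix.
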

\begin{proof}
By~\eqref{eq:f2}, it suffices to show that
\begin{equation}
\label{eq:ZZZZ}
\frac{\partial}{\partial K}\log\mathcal{U}(K) = \wp( \eta )-2 \frac{\zeta(\i \pi)}{\i \pi}-\left( \zeta(\eta) - \eta \frac{\zeta(\i \pi)}{\i \pi}  \right)^2.
\end{equation}
This follows from~\eqref{eq:f1Weierstrass} along with the first equation in~\eqref{eq:partialKsigmazetawp} and~\eqref{eq: partial K zeta i pi}.
\end{proof}

We are interested in solutions satisfying $0<w_\infty<K$, see~\eqref{eq:defwinfty}, hence we restrict to~$K>\eta/2$.

\begin{proposition}
\label{prop:f1f2body}
The function $K\mapsto \mathcal U(K)\mathcal V(K)$ is monotonically increasing for $K\in \bigl(\eta/2,+\infty)$.
It tends to $-\eta$ when $K\downarrow\tfrac 12\eta$ and to $1-\e^{-\eta}$ when $K\to+\infty$.
\end{proposition}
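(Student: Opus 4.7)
My plan is to split the statement into the monotonicity claim and the two boundary values, handling each separately.

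For monotonicity, the key observation is Proposition~\ref{prop:relf1f2}, which gives $\mathcal{U}\mathcal{V}=\mathcal{U}-K\,\partial_K\mathcal{U}$. Differentiating in~$K$ yields $\partial_K(\mathcal{U}\mathcal{V})=-K\,\partial_K^2\mathcal{U}$, so monotonic increase of $\mathcal{U}\mathcal{V}$ on $(\eta/2,+\infty)$ is equivalent to strict concavity of $\mathcal{U}$ on the same interval. To establish the latter, I would start from the logarithmic derivative identity \eqref{eq:ZZZZ}, namely
\begin{equation}
\frac{\partial}{\partial K}\log\mathcal{U}(K)=\wp(\eta)-2\,\frac{\zeta(\i\pi)}{\i\pi}-\left(\zeta(\eta)-\eta\,\frac{\zeta(\i\pi)}{\i\pi}\right)^2,
\end{equation}
and differentiate once more. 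Using the formulas~\eqref{eq:partialKsigmazetawp} and~\eqref{eq: partial K zeta i pi} for the $K$-derivatives of $\sigma,\zeta,\wp,\zeta(\i\pi)$, together with the Weierstrass relation $\wp'^2=4\wp^3-g_2\wp-g_3$, the result can be rewritten as a single, explicit combination of elliptic functions evaluated at $\eta$.

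To prove the negativity of this combination, the cleanest route is to pass to the theta representation~\eqref{eq:f1}. Setting $z=\eta/(2K)$, $\tau=\i\pi/K$ and using the heat equation $\partial_\tau\vartheta_{11}=(4\pi\i)^{-1}\partial_z^2\vartheta_{11}$, the $K$-derivatives of $\mathcal{U}$ become first- and second-order $z$-derivatives of $\vartheta_{11}$ at $z=\eta/(2K)$. The concavity statement then reduces to a positivity inequality among $\vartheta_{11}(z|\tau)$, $\vartheta_{11}'(z|\tau)$, and $\vartheta_{11}''(z|\tau)$ at $z=\eta/(2K)$, which is the content of Appendix~\ref{app:monotonic}; I expect this to be the main technical obstacle and likely to require the classical theta identities (e.g.\ Jacobi's derivative formula and the addition formulas relating $\vartheta_{11}$ to $\vartheta,\vartheta_{10},\vartheta_{01}$).

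For the boundary limits I would proceed by explicit expansion. As $K\downarrow\eta/2$, the parameter $w_\infty=\eta/2$ approaches the half-period $K$, so $\sigma(\eta)=\sigma(2K)$; but the quasiperiodicity $\sigma(z+2K)=-\sigma(z)\e^{2\zeta(K)(z+K)}$ evaluated at $z=0$ forces $\sigma(2K)=0$, hence $\mathcal{U}(K)\to 0$. A Taylor expansion of $\sigma(\eta)$ around $\eta=2K$, combined with $\wp(\eta)\sim (\eta-2K)^{-2}$, shows that $K\,\mathcal{U}'(K)$ has a finite limit, and a short calculation gives $\mathcal{U}(K)\mathcal{V}(K)\to-\eta$. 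As $K\to+\infty$ (equivalently $\tau=\i\pi/K\to 0$), one should invoke Jacobi's imaginary transformation to rewrite $\vartheta_{11}(\eta/(2K)|\i\pi/K)/\vartheta_{11}'(0|\i\pi/K)$ in terms of theta functions at large imaginary modular parameter, where the $q$-series converges rapidly; the trigonometric limit then yields $\mathcal{U}(K)\to 1-\e^{-\eta}$ and $K\,\mathcal{U}'(K)\to 0$, giving $\mathcal{U}\mathcal{V}\to 1-\e^{-\eta}$. Together with the monotonicity and continuity of $\mathcal{U}\mathcal{V}$, these endpoint computations determine the range and complete the proof.
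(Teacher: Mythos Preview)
Your reduction $\partial_K(\mathcal{U}\mathcal{V})=-K\,\partial_K^2\mathcal{U}$ via Proposition~\ref{prop:relf1f2} is correct and clean, but your belief that the resulting concavity statement is ``the content of Appendix~\ref{app:monotonic}'' is wrong, and this is the gap in your plan. The appendix does not work with theta functions or the heat equation; it proves, by lengthy term-by-term estimates on the trigonometric series \eqref{eq:sigma trig expansion}--\eqref{eq:wp trig expansion} (for $K\le3$) and the hyperbolic series \eqref{eq:wZ}--\eqref{eq:wpKipi} (for $K>3$), two separate facts: that $\mathcal{U}$ is increasing (Proposition~\ref{prop:f1monotone}) and that $\mathscr{A}^2+\mathscr{B}$ is increasing, where $\mathcal{V}=K(\mathscr{A}^2+\mathscr{B})+1$ (Propositions~\ref{prop:monotonicity small K} and~\ref{prop:increasing property K large}). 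Since $\mathscr{A}^2+\mathscr{B}=-\partial_K\log\mathcal{U}$ by \eqref{eq:ZZZZ} and \eqref{eq: a and b def}, the second statement is exactly \emph{log}-concavity of $\mathcal{U}$, i.e.\ $\mathcal{U}''\mathcal{U}<(\mathcal{U}')^2$, which is strictly weaker than the concavity $\mathcal{U}''<0$ your reduction requires. So the hard analytic step---proving $\partial_K^2\mathcal{U}<0$---remains open in your sketch and cannot be delegated to the appendix as written. (Note, incidentally, that separate monotonicity of $\mathcal{U}>0$ and of $\mathcal{V}$ does not obviously yield monotonicity of $\mathcal{U}\mathcal{V}$ on the region where $\mathcal{V}<0$ either.)

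For the boundary limits your plan matches the paper in spirit. At $K\downarrow\eta/2$ the paper uses quasi-periodicity and homogeneity of $\sigma,\zeta,\wp$ near the lattice point $2K$ to obtain $\mathcal{U}(K)\sim 2K-\eta$ and $\mathcal{V}(K)\sim-\eta/(2K-\eta)$, hence $\mathcal{U}\mathcal{V}\to-\eta$; this is your Taylor-expansion idea made precise. At $K\to\infty$ the paper uses the trigonometric degeneration~\eqref{eq:triglimitKinfty} of the Weierstrass functions rather than a modular transformation of theta functions---the two are equivalent, and the Weierstrass route gives $\mathcal{U}\to1-\e^{-\eta}$ and $\mathcal{V}\to1$ directly.
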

\begin{proof}
The proof of the monotonicity property is rather lengthy and technical. Therefore, we defer it to Appendix~\ref{app:monotonic}.
Actually, both~$\mathcal{U}(K)$ and~$\mathcal{V}(K)$ are monotone in~$K$, as it will be shown in Propositions~\ref{prop:f1monotone} and~\ref{prop:f2monotone}.

In the limit~$K\downarrow\tfrac 12\eta$, the argument~$\eta$ of the Weierstrass functions (appearing in~$\mathcal{U}(K)$ and~$\mathcal{V}(K)$) becomes close to a point in the period lattice.
We use the quasi-periodicity and the homogeneity properties of~$\sigma$, see~\eqref{eq:homogeneity}, to get
\begin{equation}
\begin{aligned}
\sigma(\eta)=\sigma(\eta|K,\i\pi) &=-\sigma(\eta-2K|K,\i\pi)\e^{-2(K-\eta)\zeta(K|K,\i\pi)}
\\
&= \frac{\eta}{2K}\sigma\biggl(-\frac{\eta}{2K}(-\eta+2K)\bigg|\frac{\eta}2,\i\pi\frac{\eta}{2K}\biggr)\e^{-2(K-\eta)\zeta(K|K,\i\pi)}.
\end{aligned}
\end{equation}
Since the Weierstrass functions are continuous functions of the half-periods, this identity combined with~\eqref{eq:Taylorsigma} implies that
\begin{equation}
\sigma(\eta) \sim \e^{-2(K-\eta)\zeta(K)}(2K-\eta),\qquad K\downarrow\tfrac 12\eta.
\end{equation}
By completely similar arguments, we obtain
\begin{equation}
\label{eq:limitssss}
\zeta(\eta) \sim \frac {1}{\eta-2K},\quad
\wp(\eta) \sim \frac {1}{(2K-\eta)^2},\qquad K\downarrow\tfrac 12\eta.
\end{equation}
It is straightforward to deduce that~$\mathcal{U}(K)\sim 2K-\eta$ and~$\mathcal{V}(K)\sim-\frac{\eta}{2K-\eta}$ as~$K\downarrow\tfrac 12\eta$.
Here one needs to use the Legendre identity~\eqref{eq:LegendreIdentity}.

On the other hand, in the limit $K\to+\infty$ the Weierstrass functions degenerate to trigonometric functions, namely
\begin{equation}
\label{eq:triglimitKinfty}
\sigma(w)\to 2\e^{-\frac{w^2}{24}} \sinh \frac{w}{2},\quad
\zeta(w)\to\frac{6 \coth \frac{w}{2}-w}{12},\quad 
\wp(w)\to\frac{5+\cosh w}{24} \biggl(\mathrm{csch}\,\frac{w}{2}\biggr)^2.
\end{equation}
It follows that $\mathcal{U}(K)\to 1-\e^{-\eta}$ and $\mathcal{V}(K)\to 1$ as $K\to+\infty$.
\end{proof}

\begin{figure}[t]
\centering
\includegraphics[scale=.5]{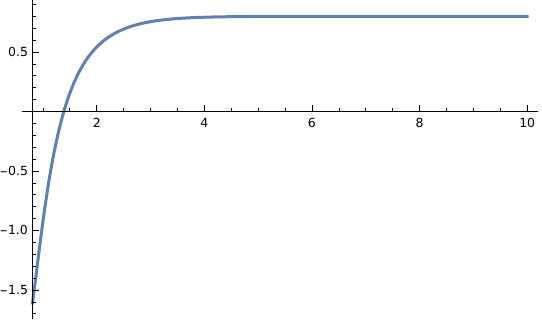}
\caption{The function $K\mapsto \mathcal{U}(K)\mathcal{V}(K)$ when $\eta=\log 5$.}
\label{fig:f1f2}
\end{figure}

The following corollary is immediate, recalling $x_*=-2(1-\e^{-\eta})\eta^{-1}$ from~\eqref{eq:xq} and the function $\mathcal{K}=\mathcal{K}(x)$ given in Definition~\ref{def:K(x)}.

\begin{corollary}
\label{cor:Kstar}
For all $x_*<x<2$ the equation~$\mathcal{U}(K)\mathcal{V}(K)=-\frac{\eta}2x$ has a unique solution $K=\mathcal{K}(x)>\frac 12\eta$.
Hence, for all $x_*<x<2$ the system~\eqref{eq:systemfinaltwocut} has a unique solution $K=\mathcal{K}(x)$, $m=\mathcal{U}\bigl(\mathcal{K}(x)\bigr)$.
\end{corollary}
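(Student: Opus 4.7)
The approach is direct and leverages entirely the preceding two propositions together with the intermediate value theorem. The plan is to realize the corollary as a one-line consequence of the monotonicity and limiting behavior of $\mathcal{U}(K)\mathcal{V}(K)$ already established in Proposition~\ref{prop:f1f2body}, and then to identify the resulting solution with the one given in Definition~\ref{def:K(x)} by means of Proposition~\ref{prop:relf1f2}.

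First, I would note that the map $K \mapsto \mathcal{U}(K)\mathcal{V}(K)$ is continuous on $(\eta/2,+\infty)$ (as $\mathcal{U}$ and $\mathcal{V}$ are built from Weierstrass functions whose dependence on the half-period $K$ is smooth) and, by Proposition~\ref{prop:f1f2body}, is strictly monotone with limits $-\eta$ as $K\downarrow \eta/2$ and $1-\e^{-\eta}$ as $K\to+\infty$. Hence it is a continuous bijection from $(\eta/2,+\infty)$ onto the open interval $(-\eta,\,1-\e^{-\eta})$.

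Next, I would check that the target value $-\eta x/2$ sweeps exactly this interval as $x$ ranges over $(x_*,2)$. Indeed, $x=2$ yields $-\eta$, and, recalling $x_* = -2(1-\e^{-\eta})\eta^{-1}$ from~\eqref{eq:xq}, $x=x_*$ yields $1-\e^{-\eta}$. Since $x\mapsto -\eta x/2$ is a strictly decreasing affine function, the condition $-\eta x/2 \in (-\eta,\,1-\e^{-\eta})$ is equivalent to $x\in(x_*,2)$. The intermediate value theorem combined with strict monotonicity then gives existence and uniqueness of a solution $K=\mathcal{K}(x)\in(\eta/2,+\infty)$ to $\mathcal{U}(K)\mathcal{V}(K)=-\eta x/2$.

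To match this with the function $\mathcal{K}$ introduced in Definition~\ref{def:K(x)}, I would apply Proposition~\ref{prop:relf1f2}, which gives
\begin{equation*}
\mathcal{U}(K)\mathcal{V}(K) \,=\, \mathcal{U}(K)\biggl(1-K\frac{\partial}{\partial K}\log\mathcal{U}(K)\biggr) \,=\, \Bigl(1-K\tfrac{\partial}{\partial K}\Bigr)\mathcal{U}(K),
\end{equation*}
so that the equation $\mathcal{U}(K)\mathcal{V}(K)=-\eta x/2$ coincides with~\eqref{eq:x}. Thus the two notions of $\mathcal{K}(x)$ agree. Finally, once $\mathcal{K}(x)$ is fixed, the first equation in~\eqref{eq:systemfinaltwocut} uniquely determines $m=\mathcal{U}(\mathcal{K}(x))$, completing the resolution of the system. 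There is no real obstacle here, since all the analytical difficulty has been absorbed into the monotonicity statement of Proposition~\ref{prop:f1f2body}, whose proof is relegated to Appendix~\ref{app:monotonic}.
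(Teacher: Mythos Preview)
Your proposal is correct and is essentially a spelled-out version of what the paper calls ``immediate'': it uses the monotonicity and limiting values of $\mathcal{U}(K)\mathcal{V}(K)$ from Proposition~\ref{prop:f1f2body}, checks that $-\eta x/2$ ranges over $(-\eta,1-\e^{-\eta})$ exactly when $x\in(x_*,2)$, and identifies the resulting unique root with the $\mathcal{K}(x)$ of Definition~\ref{def:K(x)} via Proposition~\ref{prop:relf1f2}. This is precisely the paper's approach, with the intermediate-value argument and the matching to Definition~\ref{def:K(x)} made explicit.
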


Having determined the parameters $(K,m,w_\infty,d)$, we can now obtain the following explicit expression for the endpoints $(a,b,c,d)$ by using~\eqref{eq:zelliptic}:
\begin{equation}
\begin{aligned}
\label{eq:abcdapprox}
a&=\mathcal{U}(\mathcal{K})\biggl(\frac{\zeta(\i\pi)}{\i\pi}\eta+\zeta\bigl(\frac {\eta}2+\mathcal{K}\bigr)+\zeta\bigl(\frac {\eta}2-\mathcal{K}\bigr)-2\zeta\bigl(\eta\bigr)\biggr),\\
b&=\mathcal{U}(\mathcal{K})\biggl(\frac{\zeta(\i\pi)}{\i\pi}\eta+\zeta\bigl(\frac {\eta}2+\mathcal{K}+\i\pi\bigr)+\zeta\bigl(\frac {\eta}2-\mathcal{K}-\i\pi\bigr)-2\zeta\bigl(\eta\bigr)\biggr),\\
c&=\mathcal{U}(\mathcal{K})\biggl(\frac{\zeta(\i\pi)}{\i\pi}\eta+\zeta\bigl(\frac {\eta}2+\i\pi\bigr)+\zeta\bigl(\frac {\eta}2-\i\pi\bigr)-2\zeta\bigl(\eta\bigr)\biggr),\\
d&=\mathcal{U}(\mathcal{K})\biggl(\frac{\zeta(\i\pi)}{\i\pi}\eta-2\zeta(\eta)+2\zeta\bigl(\frac{\eta}2\bigr)\biggr),
\end{aligned}
\end{equation}
where $\mathcal{K}=\mathcal{K}(x)$ and it is understood that the half-periods of the Weierstrass functions are $\mathcal{K}$ and $\i\pi$.
These expressions can be simplified to~\eqref{eq:maintheorem:endpoints} by the relation between the Weierstrass $\zeta$ function and the theta functions~\eqref{eq:theta}, see~\eqref{eq:relzetatheta}.
We refer to Figure~\ref{fig:endpoints} for a plot and to Section~\ref{sec:propertiesendpoints} for some further properties of the endpoints.

\begin{remark}
\label{remark:Jacobian}
For convenience in the discussion of Section~\ref{sec:gqplus} below, we define
\begin{equation}
\label{eq:Xi}
\begin{aligned}
\Xi(w_\infty,m,K,d) &= 
\biggl(w_\infty-\frac{\eta}2,
m-\e^{-\eta/2}\,\sigma(2w_\infty)\e^{Aw_\infty},
d+m \bigl(A+2\zeta(2w_\infty)-2\zeta(w_\infty)\bigr),
\\
&\qquad\qquad\frac{x\eta}2+m\biggl(-A(A+2\zeta(2w_\infty))+\frac{\sigma''(2w_\infty)}{\sigma(2w_\infty)}\biggr)K+2\zeta(K)\biggr)
\end{aligned}
\end{equation}
(with $A=-\frac{\zeta(\i\pi)}{\i\pi}\eta$, as above)
such that the system determining the parameters $w_\infty,m,K,d$ can be written as
\begin{equation}
\label{eq:Xieq}
\Xi(w_\infty,m,K,d)=(0,0,0,0).
\end{equation}
The Jacobian determinant of $\Xi$ evaluated at the solution to~\eqref{eq:Xieq} is, up to an irrelevant sign, $\partial_K\bigl(\mathcal{U}(K)\mathcal{V}(K)\bigr)\bigr|_{K=\mathcal{K}(x)}$, which is uniformly away from $0$ as long as $x\in (x_*+\delta,2-\delta)$ for some $\delta>0$, see Appendix~\ref{app:monotonic}.
\end{remark}

For later convenience, we report the values of $g_1$, $g_\infty$, and $\ell$.

\begin{proposition}
\label{prop:g1ginftyqplus}
For all $x\in(x_*,2)$ we have
\begin{equation}
\label{eq:g1ginftyqplus}
\begin{aligned}
g_1(x)&= -\mathcal{U}(\mathcal{K})^2 \left(\wp(\eta) + \frac{\zeta(\i\pi)}{\i\pi} - \frac{\mathcal{V}(\mathcal{K})-1}{2 \mathcal{K}} \right),
\\
g_\infty(x)&=\eta\frac{\mathcal{U}(\mathcal{K})}{2}\left(\frac{\zeta(\i \pi)}{\i \pi}\eta - 2 \zeta(\eta) + 2\zeta\bigl(\frac{\eta}{2}\bigr) + \frac{\mathcal{V}(\mathcal{K})-1}{\mathcal{K}} \right),
\\
\ell(x)&=\eta\mathcal{U}(\mathcal{K})\left(\frac{\zeta(\i \pi)}{\i \pi}\eta- 2 \zeta(\eta) + 2\zeta\bigl(\frac{\eta}{2}\bigr)+2\eta^{-1}\mathcal{V}(\mathcal{K}) \right),
\end{aligned}
\end{equation}
where, as usual, $\mathcal{K}=\mathcal{K}(x)$ and the half-periods of the Weierstrass functions are $\mathcal{K},\i\pi$.
\end{proposition}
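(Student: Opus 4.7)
The plan is to extract the three coefficients by matching Taylor expansions at $u=w_\infty-w=0$ in the elliptic uniformization developed in Section~\ref{sec:minimizationxq<x<2}. All three quantities are fully determined by the explicit elliptic representations of $f(w)=g'(z(w))$ in Proposition~\ref{prop:h} and of $F(w)=g(z(w))$ in~\eqref{eq:F}, together with the Laurent expansion of $z(w)$ near its simple pole at $w_\infty$.

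The quantity $g_1$ is extracted by matching $f(w)$ and $g'(z(w))$ to order $u^{2}$. Having imposed the constraints $g_{-1}=g_0=0$ (which respectively fix $m$ via~\eqref{eq:endpointselliptic2} and $d$ via~\eqref{eq:endpointselliptic3}), the corollary following Proposition~\ref{prop:h} gives
\begin{equation*}
f(w)=-\log u-\bigl(A+\zeta(\eta)\bigr)u-\tfrac{1}{2}\wp(\eta)u^{2}+O(u^{3}),
\end{equation*}
while the expansion of $z(w)$ from~\eqref{eq:zelliptic} combined with $g'(z)=\log z+g_1/z^{2}+O(z^{-3})$ yields
\begin{equation*}
g'(z(w))=-\log u+c_0 u+\bigl(c_1-\tfrac{c_0^{2}}{2}+\tfrac{g_1}{m^{2}}\bigr)u^{2}+O(u^{3}),
\end{equation*}
with $c_0,c_1$ as in~\eqref{eq:c0c1}. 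The $u^{2}$ coefficient, together with $c_1=\wp(\eta)$ and $c_0=-A-\zeta(\eta)$ (which is the condition~\eqref{eq:qplusg0=0}), gives $g_1=\tfrac{m^{2}}{2}\bigl((A+\zeta(\eta))^{2}-3\wp(\eta)\bigr)$ with $m=\mathcal{U}(\mathcal{K})$. The compact form announced in the proposition then follows by elementary manipulation using $A=-\eta\,\zeta(\i\pi)/(\i\pi)$ and the definition~\eqref{eq:f2} of $\mathcal{V}(K)$, which in particular yields the identity $(A+\zeta(\eta))^{2}-\wp(\eta)=(\mathcal{V}(\mathcal{K})-1)/\mathcal{K}+2A/\eta$.

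The quantity $g_\infty$ is computed as the limit $g_\infty=\lim_{w\to w_\infty}\bigl[F(w)-z(w)\log z(w)+z(w)\bigr]$. Using~\eqref{eq:F}, and observing that the coefficient of $\log[\sigma(w_\infty-w)/\sigma(w_\infty+w)]$ vanishes thanks to $g_0=0$, the argument of the limit rewrites as
\begin{equation*}
z(w)\bigl[f(w)-\log z(w)+1\bigr]+d\tfrac{\eta}{2}-Bw-m\zeta(w_\infty-w)+m\zeta(w_\infty+w).
\end{equation*}
By the previous computation $f(w)-\log z(w)=O(u^{2})$, so $z(w)\bigl[f(w)-\log z(w)\bigr]=O(u)$; at the same time, the pole $m/u$ of $z(w)$ cancels against that of $-m\zeta(w_\infty-w)$. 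Collecting the $u^{0}$ contributions gives $g_\infty=\tilde c_0+m\zeta(\eta)+d\eta/2-Bw_\infty$ with $\tilde c_0=d-2m\zeta(\eta/2)+m\zeta(\eta)$. Expanding $B=A(d-2m\zeta(w_\infty))-m\bigl(\zeta(\eta)^{2}-\wp(\eta)\bigr)$ via the identity $\sigma''/\sigma=\zeta^{2}-\wp$ and substituting~\eqref{eq:endpointselliptic3} reshuffles this into the form asserted in the proposition.

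The formula for $\ell(x)$ is immediate from~\eqref{eq:fraklright}: indeed $\ell=\eta(d-x)$, and substituting $d$ from~\eqref{eq:endpointselliptic3} together with $-\eta x=2\mathcal{U}(\mathcal{K})\mathcal{V}(\mathcal{K})$ from~\eqref{eq:systemfinaltwocut} directly yields the stated expression. The whole proof is purely computational; the only mild difficulty is the bookkeeping required to track the auxiliary constants $A$, $B$, $c_0$, $c_1$, $\tilde c_0$, and to translate the final expressions between the Weierstrass $(\zeta,\wp)$ form and the compact $(\mathcal{U}(\mathcal{K}),\mathcal{V}(\mathcal{K}))$ form.
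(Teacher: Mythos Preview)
Your proposal is correct and follows essentially the same route as the paper's proof: both extract $g_1$ by matching the $u^2$ coefficients in the expansions of $f(w)$ and $g'(z(w))$ near $w=w_\infty$ (equivalently, using~\eqref{eq:g0gpm1}), both obtain $g_\infty$ as the constant term of $F(w)$ in~\eqref{eq:F} at $w=w_\infty$, and both read off $\ell$ directly from~\eqref{eq:fraklright} combined with the expression for $d$. Your write-up is in fact slightly more explicit than the paper's in tracking the cancellation of the $m/u$ pole against $-m\zeta(w_\infty-w)$ and in recording the auxiliary identity $(A+\zeta(\eta))^{2}-\wp(\eta)=(\mathcal{V}(\mathcal{K})-1)/\mathcal{K}+2A/\eta$.
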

\begin{proof}
By~\eqref{eq:g0gpm1} together with the explicit expression of $c_0$ and $c_1$, given in~\eqref{eq:c0c1}, we obtain
\begin{equation}
g_1 = -m^2\left(\frac{3}2\wp(2w_\infty) - \frac{1}2\left( -\frac{\zeta(\i\pi)}{\i\pi}\eta + \zeta(2w_\infty)\right)^2 \right),
\end{equation}
and this expression is further simplified using~\eqref{eq:systemfinaltwocut} and~\eqref{eq:f2}, second line.
The expression for~$g_1$ follows from $w_\infty = \eta/2$ and $m=\mathcal{U}(\mathcal{K})$.

As for $g_\infty$, we use the explicit expression of $F(w)$ in \eqref{eq:F} and expand $F(w(z))$ as $z\to\infty$ using $w(z) = w_\infty - m z^{-1} + O(z^{-2})$. Taking the constant term in $z$ we obtain
\begin{equation}
g_\infty = d \frac{\eta}2 - m\bigl(A + \zeta(2w_\infty)\bigr) - \left(A\bigl(d - 2m\zeta(w_\infty)\bigr) - m\frac{\sigma''(2w_\infty)}{\sigma(2w_\infty)}\right)w_\infty + m \zeta(2 w_\infty).
\end{equation}
This expression is further simplified using \eqref{eq:f2} together with \eqref{eq:systemfinaltwocut} to obtain
\begin{equation}
g_\infty =- \frac{m\eta}2 \left( -\frac{\zeta(\i\pi)}{\i \pi}\eta + 2 \zeta(2w_\infty) - 2\zeta(w_\infty) - \frac{2 \zeta(\i \pi)}{\i \pi} + 2 \frac{\zeta(K)}{K} + \frac{x\eta}{2 m K} \right).
\end{equation}
The desired expression follows from the Legendre identity~\eqref{eq:LegendreIdentity} as well as from $K=\mathcal K$, $w_\infty = \eta/2$, and $m=\mathcal{U}(\mathcal{K})$.

The expression for~$\ell$ is obtained directly by~\eqref{eq:fraklright} and the expression for~$d$ in~\eqref{eq:abcdapprox}.
\end{proof}

Our next task is to show that $b<x<c$.
To this end it is convenient to use a property of this construction which is actually independent of the specific value of $K$ and which we state and prove now.

\begin{proposition}
    \label{prop:bcineq}
    Consider the endpoints $a,b,c,d$ as functions of $K$, with all other parameters $m,w_\infty,d$ determined as functions of $K$ through~\eqref{eq:endpointselliptic1}--\eqref{eq:endpointselliptic3}.
    Then, $-\eta<g_+'(\nu)+g_-'(\nu) <0$ for all $K>\eta/2$ and all $\nu\in(b,c)$.
\end{proposition}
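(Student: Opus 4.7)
The strategy is to parametrize $(b,c)$ via the elliptic uniformization, establish the boundary values $h(b)=0$, $h(c)=-\eta$ by continuity, and then prove strict monotonicity of $h(\nu):=g_+'(\nu)+g_-'(\nu)$.

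\textbf{Step 1 (parametrization).} I set $\nu=z(u+\i\pi)$ for $u\in(0,K)$, where $z$ is the map~\eqref{eq:zelliptic}. Using $z'(w)=m[\wp(w_\infty-w)-\wp(w_\infty+w)]$ and the fact that $\wp$ restricted to the horizontal line $\operatorname{Im}w=\pi$ is a real even function of its real part, I check that $\nu=z(u+\i\pi)$ is a real-analytic strict decreasing bijection from $(0,K)$ onto $(b,c)$. By Schwarz reflection applied to $g'$ (which extends analytically as it is the Cauchy transform of a real density plus a real log term), we have $g_-'(\nu)=\overline{g_+'(\nu)}$ on $(b,c)$, so $h(\nu)=2\operatorname{Re}g_+'(\nu)=2\operatorname{Re}f(u+\i\pi)$ is real.

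\textbf{Step 2 (boundary values).} From~\eqref{eq:gprimefrakright}, $g'$ has at worst a square-root singularity at each of $b,c$, so $g_+'+g_-'$ extends continuously across these endpoints. Combined with $g_+'+g_-'\equiv 0$ on $(a,b)$ and $g_+'+g_-'\equiv-\eta$ on $(c,d)$ (see~\eqref{eq:gprimefrakrightjumps}), continuity gives
\begin{equation*}
\lim_{u\uparrow K}h(u)=\lim_{\nu\downarrow b}(g_+'+g_-')(\nu)=0,\qquad \lim_{u\downarrow 0}h(u)=\lim_{\nu\uparrow c}(g_+'+g_-')(\nu)=-\eta.
\end{equation*}

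\textbf{Step 3 (closed-form expression).} Using Proposition~\ref{prop:h} together with the identification of $\sigma$ with $\vartheta_{11}$ (Appendix~\ref{app:elliptic}), I rewrite
\begin{equation*}
f(w)=\frac{\eta w}{2K}-\frac{\eta}{2}-\log\frac{\vartheta_{11}\bigl(\tfrac{w_\infty-w}{2K}\bigl|\tau\bigr)}{\vartheta_{11}\bigl(\tfrac{w_\infty+w}{2K}\bigr|\tau\bigr)},\qquad \tau=\frac{\i\pi}{K},
\end{equation*}
where the simplification of the linear coefficient uses the Legendre identity~\eqref{eq:LegendreIdentity} and $w_\infty=\eta/2$. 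Applying the half-period shift identity $\vartheta_{11}(z+\tau/2|\tau)=-\i e^{-\i\pi z-\i\pi\tau/4}\vartheta_{10}(z|\tau)$ to both theta factors at $w=u\pm\i\pi$ and taking real parts, I obtain the real closed form
\begin{equation*}
h(u)=-\eta+2\eta s+2\log\frac{\vartheta_{10}(z_\infty+s\,|\,\tau)}{\vartheta_{10}(z_\infty-s\,|\,\tau)},\qquad s=\frac{u}{2K},\ z_\infty=\frac{\eta}{4K}.
\end{equation*}
Since $\vartheta_{10}(\cdot|\tau)$ has no real zeros for $\tau\in\i\mathbb{R}_{>0}$, the expression is smooth on $(0,K)$.

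\textbf{Step 4 (strict monotonicity).} Differentiating gives
\begin{equation*}
Kh'(u)=\eta+\Psi(z_\infty+s)+\Psi(z_\infty-s),\qquad \Psi(z):=\frac{\vartheta_{10}'(z|\tau)}{\vartheta_{10}(z|\tau)},
\end{equation*}
where $\Psi$ is odd and admits a convergent series expansion via the Jacobi product. I show $h'(u)>0$ on $(0,K)$ by exploiting the relation $\eta=4Kz_\infty$, which identifies $\eta$ with a period-like quantity, together with a pointwise estimate on $\Psi(z_\infty\pm s)$ derived from the product expansion of $\vartheta_{10}$.

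\textbf{Step 5 (conclusion).} Strict monotonicity, combined with $h(0)=-\eta$ and $h(K)=0$, yields $-\eta<h(u)<0$ on $(0,K)$.

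The main obstacle is the strict positivity of $h'(u)$ in Step 4, which is a genuine theta-function estimate rather than a formal manipulation. The key tool is the product expansion of $\vartheta_{10}$ together with the arithmetic relation $\eta=4Kz_\infty$ tying the external parameter $\eta$ to the elliptic period, which together rule out cancellations in the logarithmic derivative.
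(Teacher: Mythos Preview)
Your overall framework matches the paper's: parametrize $(b,c)$ via the elliptic uniformization, identify the boundary values $-\eta$ and $0$ by continuity from the adjacent intervals, and reduce the statement to strict monotonicity of $h(u)=f(u+\i\pi)+f(u-\i\pi)$ on $(0,K)$. Steps~1--3 are fine and essentially equivalent to what the paper does (the paper works with Weierstrass $\sigma,\zeta,\wp$ rather than $\vartheta_{10}$, but the two formulations are interchangeable via~\eqref{eq:relsigmatheta}--\eqref{eq:relzetatheta}).

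The gap is Step~4. You yourself flag it as ``the main obstacle'' and then offer only a sentence: a ``pointwise estimate on $\Psi(z_\infty\pm s)$ derived from the product expansion of $\vartheta_{10}$'' together with the relation $\eta=4Kz_\infty$. That is not a proof; it is a description of ingredients without an actual inequality chain. In particular, $\Psi=\vartheta_{10}'/\vartheta_{10}$ changes sign on the real line, so one cannot simply bound each term in $Kh'(u)=\eta+\Psi(z_\infty+s)+\Psi(z_\infty-s)$ from below termwise, and the constraint $\eta=4Kz_\infty$ by itself does not make this sum positive without further structure. The paper does \emph{not} prove $h'>0$ directly. Instead it proves three separate inequalities: (i) $h''(u)<0$ on $(0,K)$, via the addition formula~\eqref{eq:ellipticprostaphaeresis} and the sign of $\wp'$ on the relevant segments (Lemma~\ref{lemma:ineqwprectlattice}); (ii) $h'(0)>0$ and (iii) $h'(K)>0$, each by reducing to an inequality of the form $-\wp(\cdot)-\zeta(\i\pi)/(\i\pi)>0$ and then appealing to explicit series expansions~\eqref{eq:zetaipi},~\eqref{eq:wpKipi}. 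Concavity plus positivity of $h'$ at both endpoints then forces $h'>0$ throughout. To complete your argument you should either translate this concavity-plus-endpoints scheme into your $\vartheta_{10}$ language, or make the direct product-expansion bound explicit enough to withstand the sign changes of $\Psi$.
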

\begin{proof}
In view of~\eqref{eq:f=gzw}, it is sufficient to show that for all $K>\frac 12\eta$, we have
    \begin{equation}
        -\eta<f(u+\i\pi)+f(u-\i\pi)<0\quad \text{for all }u\in(0,K),
    \end{equation}
    where $f(w) = f(w;K,q) = g'\bigl(z(w)\bigr)$ and, as explained in the statement, we regard $a,b,c,d$ as functions of $K$ only, the other parameters $m,w_\infty,d$ being determined in terms of $K$ by~\eqref{eq:endpointselliptic1}--\eqref{eq:endpointselliptic3}.
    Using the explicit expression for $f(w)$ given in Proposition~\ref{prop:h}, we need to show that the inequalities
    \begin{equation}
    -\eta<G(u)=-2\eta\frac{\zeta(\i\pi)}{\i\pi}u-\eta-\log\frac{\sigma\bigl(\frac{\eta}2-u-\i\pi\bigr)}{\sigma\bigl(\frac{\eta}2+u-\i\pi\bigr)}-\log\frac{\sigma\bigl(\frac{\eta}2-u+\i\pi\bigr)}{\sigma\bigl(\frac{\eta}2+u+\i\pi\bigr)}<0
    \end{equation}
hold for all $\eta>0$, $K>\frac 12\eta$, and $u\in (0,K)$.
Since $G(0)=-\eta$ and $G(K)=0$, it is enough to show that $G(u)$ is monotonically increasing for $u\in (0,K)$. 
To this end, it is sufficient to show that, for all $K>\frac 12\eta$, we have $\partial_u G(u)\big|_{u=0}>0$, $\partial_u G(u)\big|_{u=K}>0$, and $\partial_u^2G(u)<0$ for all $u\in(0,K)$.
Using the explicit formulas
\begin{equation}
\begin{aligned}
\frac 12\partial_u G(u) &= \zeta\bigl(\frac{\eta}2+u+\i\pi\bigr)+\zeta\bigl(\frac{\eta}2-u-\i\pi\bigr)-\frac{\zeta(\i\pi)}{\i\pi}\eta,\\
\frac 12\partial_u^2G(u) &= -\wp\bigl(\frac{\eta}2+u+\i\pi\bigr)+
\wp\bigl(-\frac{\eta}2-u-\i\pi\bigr),
\end{aligned}\end{equation}
these three inequalities are equivalent, respectively, to
\begin{equation}
\label{eq:ineqzeta}
\zeta\bigl(\frac \eta2+\i\pi\bigr)+
\zeta\bigl(\frac \eta2-\i\pi\bigr)-\eta\frac{\zeta(\i\pi)}{\i\pi}>0,\quad
\zeta\bigl(\frac \eta2+K+\i\pi\bigr)+
\zeta\bigl(\frac \eta2-K-\i\pi\bigr)-\eta\frac{\zeta(\i\pi)}{\i\pi}>0
\end{equation}
(for $K>0$ and $\eta\in(0,2K)$) and
\begin{equation}
\label{eq:ineqwp}
\wp\left(\frac{\eta}{2} + \i \pi -u \right) -\wp\left(\frac{\eta}2 +\i\pi +u\right)<0\ \ \text{(for $K>0$, $\eta\in (0,2K)$ and $u\in(0,K)$).}
\end{equation}
By~\eqref{eq:ellipticprostaphaeresis}, to prove~\eqref{eq:ineqwp}, it suffices to show that 
    \begin{equation}\label{eq:ineqwpaddition}
        \frac{ \wp'( \frac{\eta}{2} + \i \pi)  \wp'(u) }{ \bigl( \wp( \frac{\eta}{2} + \i \pi) - \wp(u) \bigr)^2 }<0
        \ \ \text{(for $K>0$, $\eta\in (0,2K)$ and $u\in(0,K)$).}
    \end{equation}
    This follows directly from Lemma~\ref{lemma:ineqwprectlattice}.
    Next, to prove the first inequality in~\eqref{eq:ineqzeta}, which is an equality when $\eta=0$, it suffices (taking a derivative in $\eta$ and applying the periodicity of $\wp=-\zeta'$) to show that $-\wp\bigl(\frac \eta 2+\i\pi\bigr)-\frac{\zeta(\i\pi)}{\i\pi}>0$ for all $K>0$ and $\eta\in (0,2K)$.
    Since $\wp'\bigl(\frac \eta 2+\i\pi\bigr)>0$ for all $K>0$ and $\eta\in (0,2K)$, see~Lemma~\ref{lemma:ineqwprectlattice}, it suffices to show $-\wp\bigl(K+\i\pi\bigr)-\frac{\zeta(\i\pi)}{\i\pi}>0$ for all $K\geq 0$.
    This is clear, because from~\eqref{eq:zetaipi} and~\eqref{eq:wpKipi} we obtain that $\frac{\zeta(\i\pi)}{\i\pi}+\wp(K+\i\pi)$ is equal to
    \begin{equation}
     -\frac{1}{2 \bigl(\cosh (K)+1\bigr)}-\frac 12\sum_{n\geq 1}\biggl(\frac{1}{\cosh \bigl( K(2n+1)\bigr)+1}+\frac{1}{\cosh\bigl( K(-2n+1)\bigr)+1}\biggr),
    \end{equation}
which is manifestly negative (sum of negative terms) for all $K>0$.
The proof of the first inequality in~\eqref{eq:ineqzeta} is complete.
The second one is proved by a completely parallel argument; we omit the details.
\end{proof}
\begin{corollary}
We have $b<x<c$.
\end{corollary}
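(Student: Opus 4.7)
The plan is to combine the balance condition~\eqref{eq:balance}, which was imposed when constructing $g(z)$, with the pointwise bounds on $g'_+(\nu)+g'_-(\nu)$ on the gap $(b,c)$ established in Proposition~\ref{prop:bcineq}. Both ingredients are already in hand, so the argument should be very short; the only thing to be careful about is that Proposition~\ref{prop:bcineq} is stated uniformly in $K>\eta/2$, and in particular applies when we specialize to $K=\mathcal{K}(x)$.

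First I would recall that the parameters $(K,m,w_\infty,d)$ for $x_*<x<2$ are determined by Corollary~\ref{cor:Kstar} with $K=\mathcal{K}(x)$, and that by construction the balance identity
\begin{equation*}
\int_b^c\bigl(g'_+(\nu)+g'_-(\nu)\bigr)\d\nu \,=\, -\eta(c-x)
\end{equation*}
holds. Next I would invoke Proposition~\ref{prop:bcineq}, which gives the strict pointwise bounds $-\eta<g'_+(\nu)+g'_-(\nu)<0$ for all $\nu\in(b,c)$. Integrating these bounds over $(b,c)$ yields
\begin{equation*}
-\eta(c-b)\,<\,\int_b^c\bigl(g'_+(\nu)+g'_-(\nu)\bigr)\d\nu\,<\,0.
\end{equation*}

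Combining the two displays, I get $-\eta(c-b)<-\eta(c-x)<0$, and dividing by $-\eta$ (which reverses inequalities) yields $0<c-x<c-b$, i.e.\ $b<x<c$, as required. I do not anticipate any real obstacle: the hard analytic work (the monotonicity and sign analysis of the Weierstrass expressions) has already been carried out in Proposition~\ref{prop:bcineq}, and the balance equation~\eqref{eq:balance} was the very condition that singled out $\mathcal{K}(x)$ in the first place. The only subtle point is that the inequalities in Proposition~\ref{prop:bcineq} are strict for every $K>\eta/2$, so integration preserves strict inequality and the conclusion $b<x<c$ is strict, consistent with the two-cut ansatz made at the start of Section~\ref{sec:minimizationxq<x<2}.
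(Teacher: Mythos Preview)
Your proof is correct and follows essentially the same approach as the paper: integrate the strict pointwise bounds of Proposition~\ref{prop:bcineq} over $(b,c)$ and compare with the balance condition~\eqref{eq:balance} to obtain $-\eta(c-b)<-\eta(c-x)<0$, hence $b<x<c$.
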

\begin{proof}
    By the above lemma and further taking $K=\mathcal{K}(x)$ according to~\eqref{eq:endpointselliptic4}, which is a rewriting of~\eqref{eq:balance}, we must have
    \begin{equation}
    -(c-b)\eta<\int_{b}^{c}
    \bigl(g_+'(\nu)+g_-'(\nu)\bigr)\d\nu=-\eta(c-x)<0,
    \end{equation}
    which implies $b<x<c$.
\end{proof}    

\begin{proposition}
\label{prop:ineqvariational2cut}
For all $x_*<x<2$, the inequalities
\begin{equation}
g_+(\mu)+g_-(\mu)+ V_{\eta,x}(\mu)
\,\,
\begin{cases}
\geq \ell& \text{if } \mu \in [d,+\infty)
\\
\leq \ell & \text{if } \mu \in (-\infty,a]\cup [b,c]
\end{cases}
\end{equation}
are satisfied, where $\ell=\eta(d-x)$ as in~\eqref{eq:fraklright}.
\end{proposition}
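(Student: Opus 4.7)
The proof proceeds by verifying the two inequalities on the three complementary intervals $[b,c]$, $[d,+\infty)$, and $(-\infty,a]$. Introduce $h(\mu):=g_+(\mu)+g_-(\mu)+V_{\eta,x}(\mu)$. The variational equalities already derived on the bands $(a,b)$ and $(c,d)$ (see~\eqref{eq:fraklright} and recall that $\ell_1=\ell$ by the balance condition~\eqref{eq:balance}), together with the continuity of the boundary values $g_\pm$ at the branch points $a,b,c,d$ of $r(z)$, imply $h(\mu_0)=\ell$ at each $\mu_0\in\{a,b,c,d\}$. With $h$ anchored to $\ell$ at each band endpoint, the three remaining inequalities reduce to sign statements for $h'$ on each open interval, which I would then integrate back.

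The inequality on $[b,c]$ is an immediate consequence of Proposition~\ref{prop:bcineq}. Since $V'_{\eta,x}(\mu)=\eta\,\mathbf{1}_{(x,+\infty)}(\mu)$ and $b<x<c$, that proposition yields $h'(\mu)\in(-\eta,0)$ on $(b,x)$ and $h'(\mu)\in(0,\eta)$ on $(x,c)$. Hence $h$ strictly decreases from $h(b)=\ell$ to a minimum at $\mu=x$ and strictly increases back to $h(c)=\ell$, so $h<\ell$ on $(b,c)$; the integral consistency $\int_b^ch'(\mu)\,\d\mu=h(c)-h(b)=0$ is exactly the balance condition~\eqref{eq:balance}.

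On $[d,+\infty)$ the function $g$ is analytic and $V_{\eta,x}(\mu)=\eta(\mu-x)$, so $h(\mu)=2g(\mu)+\eta(\mu-x)$ and the claim reduces to $h'(\mu)=2g'(\mu)+\eta\ge 0$ for $\mu>d$. Note that $h'(d^+)=0$ (from the boundary limit $g'_++g'_-=-\eta$ on $(c,d)$) and $h'(\mu)\to+\infty$ as $\mu\to+\infty$ (since $g'(z)=\log z+O(1)$ by~\eqref{eq:gfrakrightexpinfty} together with $g_{-2}=0$). The square-root regularity at the soft edge gives $g'(\mu)=-\eta/2+C\sqrt{\mu-d}+O(\mu-d)$ for some $C>0$ (from the positivity of the equilibrium density on the band side of $d$), so $h'>0$ near $d^+$. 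To propagate the inequality globally I would pass to the elliptic uniformization of Proposition~\ref{prop:h}: parametrizing $\mu\in(d,+\infty)$ by $w\in(0,w_\infty)$ with $w_\infty=\eta/2$, one has
\[
\tfrac12h'(\mu)=f(w)+\tfrac{\eta}{2}=Aw+\log\frac{\sigma(w_\infty+w)}{\sigma(w_\infty-w)},
\]
and the non-negativity of this explicit elliptic expression on $(0,w_\infty)$ is reduced, by differentiation and use of addition formulas for $\wp$, to monotonicity/symmetry inequalities for Weierstrass functions on the rectangular period lattice, in the spirit of Lemma~\ref{lemma:ineqwprectlattice}, excluding interior sign changes.

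The argument on $(-\infty,a]$ runs in parallel. Since $a\le b\le x$, we have $V'_{\eta,x}=0$ on $(-\infty,a)$, so $h'(\mu)=g'_+(\mu)+g'_-(\mu)$, a real quantity since the jump $g'_+-g'_-=2\pi\i$ is purely imaginary. One has $h(a)=\ell$, and the goal is $h'\ge 0$ so that $h$ increases up to $a$ and $h(\mu)\le\ell$ for $\mu<a$. Under the conformal map, the ray $(-\infty,a)$ corresponds to the segment $\{u+\i\pi:w_\infty<u<K\}$ on the top side of the rectangle $\mathcal R$, and substituting into Proposition~\ref{prop:h} produces an explicit elliptic formula for $h'$ whose sign is analyzed by the same kind of Weierstrass-function reasoning. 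The main obstacle throughout is this final sign check on the two unbounded regions: while the bounded saturated interval $[b,c]$ is handled cleanly by Proposition~\ref{prop:bcineq}, the non-compact empty and saturated regions appear to require a direct elliptic-function computation, mirroring the concrete verification carried out in the one-cut setting of Section~\ref{sec:minimizationx<xq}.
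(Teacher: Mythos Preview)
Your strategy coincides with the paper's: anchor $h=g_++g_-+V_{\eta,x}$ at $\ell$ on the band endpoints, reduce to sign conditions on $h'$ on each complementary interval, handle $(b,c)$ by Proposition~\ref{prop:bcineq}, and treat the unbounded pieces by passing to the elliptic coordinate via Proposition~\ref{prop:h}.

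One concrete slip: under the Schwarz--Christoffel map $w(z)$ of~\eqref{eq:conformalelliptic}, the ray $(-\infty,a)$ (from the upper half-plane) is carried to the real segment $(w_\infty,K)=(\eta/2,K)$ on the \emph{bottom} of the rectangle, not to $\{u+\i\pi:w_\infty<u<K\}$; the top side $\{u+\i\pi:0<u<K\}$ is the image of $(b,c)$, which you have already used. With the corrected parametrization, substituting $w=u\in(\eta/2,K)$ into Proposition~\ref{prop:h} yields the paper's explicit inequality~\eqref{eq:equivalentinequality-inftya}, paired with~\eqref{eq:equivalentinequalitydinfty} for $(d,+\infty)$. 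The paper then differentiates in $u$, reducing to the $\zeta$/$\wp$ inequalities~\eqref{eq:needtoestablish1}--\eqref{eq:needtoestablish2} and the sign of $\wp(\eta/2-u)-\wp(\eta/2+u)$, all settled via~\eqref{eq:ellipticprostaphaeresis} and Lemma~\ref{lemma:ineqwprectlattice}. Your soft-edge argument ``$C>0$ from density positivity near $d$'' is only heuristic (and not used by the paper), but since you immediately defer the global statement to the elliptic computation this does not affect the outline.
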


\begin{proof}
The inequalities in the statement can be rewritten more explicitly as
\begin{equation}
    \begin{aligned}
        2 g(\mu)+\eta(\mu-d)&\geq 0,&&\mu\in[d,+\infty),\\
         g_+(\mu)+ g_-(\mu)+\eta(\mu-d)&\leq 0,&&\mu\in[x,c],\\
         g_+(\mu)+ g_-(\mu)+\eta(x-d)&\leq 0,&&\mu\in[b,x],\\
         g_+(\mu)+ g_-(\mu)+\eta(x-d)&\leq 0,&&\mu\in(-\infty,a].
    \end{aligned}
\end{equation}
First, we note that these inequalities are saturated when $\mu=a,b,c,d$.
This follows from some direct consequences (which we list now) of~\eqref{eq:gprimefrakrightjumps} and~\eqref{eq:gfrakright}.
First, when $\mu\to d$ we have $ g(\mu)\to 0$.
Then, when $\mu\to c$ we have
\begin{equation}
     g_+(\mu)+ g_-(\mu)\to\int_{d}^{c}\bigl( g'_+(\nu)+ g'_+(\nu)\bigr)\d\nu=\eta(d-c).
\end{equation}
Moreover, when $\mu\to b$, we have
\begin{equation}
     g_+(\mu)+ g_-(\mu)\to\eta(d-c)+\int_{c}^{b}\bigl( g'_+(\nu)+ g'_+(\nu)\bigr)\d\nu = \eta(d-x),
\end{equation}
see~\eqref{eq:balance}. Finally, when $\mu\to a$, we have
\begin{equation}
     g_+(\mu)+ g_-(\mu)\to\eta(d-x)+\int_{b}^{a}\bigl( g'_+(\nu)+ g'_+(\nu)\bigr)\d\nu = \eta(d-x).
\end{equation}
Hence, to complete the proof it is enough to show that
\begin{equation}
\begin{aligned}
2 g'(\mu)+\eta & > 0,&&\mu\in(d,+\infty),\\
 g_+'(\mu)+ g_-'(\mu)+\eta & > 0,&&\mu\in(x,c),\\
 g_+'(\mu)+ g_-'(\mu) & < 0,&&\mu\in(b,x),\\
 g_+'(\mu)+ g_-'(\mu) & > 0,&&\mu\in(-\infty,a).
\end{aligned}
\end{equation}

The inequalities on~$(b,x)$ and $(x,c)$ follow directly from Proposition~\ref{prop:bcineq}.
The inequalities on $(d,+\infty)$ and $(-\infty,a)$ are equivalent, in terms of the variable $u=w(\mu)$, see~\eqref{eq:conformalelliptic}, and using Proposition~\ref{prop:h}, to (respectively)
\begin{align}
\label{eq:equivalentinequalitydinfty}
-\frac{\zeta(\i\pi)}{\i\pi}\eta u-\log\frac{\sigma\bigl(\frac{\eta}2-u\bigr)}{\sigma\bigl(\frac{\eta}2+u\bigr)}&>0, &&u\in\left(0,\frac{\eta}2\right),\ \ \eta\in(0,2K),
\\
\label{eq:equivalentinequality-inftya}
-\frac{\zeta(\i\pi)}{\i\pi}\eta u-\log\left|\frac{\sigma\bigl(\frac{\eta}2-u\bigr)}{\sigma\bigl(\frac{\eta}2+u\bigr)}\right|&>\frac{\eta}2, &&u\in\left(\frac{\eta}2,K\right),\ \ \eta\in(0,2K).
\end{align}
The first one is an equality when $u\downarrow 0$ and the second one is an equality when $u\uparrow K$ (the former is a trivial assertion, for the second one needs to use the quasi-periodicity properties of $\sigma$, see~\eqref{eq:periodicsigma}, and the Legendre identity, see~\eqref{eq:LegendreIdentity}.)
Therefore, it suffices to show that the left-hand side of~\eqref{eq:equivalentinequalitydinfty} is increasing in $u\in(0,\eta/2)$ and that the left-hand side of~\eqref{eq:equivalentinequality-inftya} is decreasing in $u\in(\eta/2,K)$, namely, that
\begin{align}
    -\frac{\zeta(\i\pi)}{\i\pi}\eta+\zeta\bigl(\frac{\eta}2-u\bigr)+\zeta\bigl(\frac{\eta}2+u\bigr)&>0, &&u\in\left(0,\frac{\eta}2\right),\ \ \eta\in(0,2K),
    \\
    -\frac{\zeta(\i\pi)}{\i\pi}\eta+\zeta\bigl(\frac{\eta}2-u\bigr)+\zeta\bigl(\frac{\eta}2+u\bigr)&<0, &&u\in\left(\frac{\eta}2,K\right),\ \ \eta\in(0,2K).
\end{align}
Again, to study these inequalities, since the left-hand side of the first (second) one diverges to $+\infty$ ($-\infty$) when $u\uparrow\tfrac{\eta}2$ (when $u\downarrow\tfrac{\eta}2$) by~\eqref{eq:Taylorzeta}, it suffices to show the three inequalities
\begin{equation}
\label{eq:needtoestablish1}
-\frac{\zeta(\i\pi)}{\i\pi}\eta+2\zeta\bigl(\frac{\eta}2\bigr)> 0,\quad
\eta\in(0,2K),
\end{equation}
\begin{equation}
\label{eq:needtoestablish2}
-\frac{\zeta(\i\pi)}{\i\pi}\eta+\zeta\bigl(\frac{\eta}2-K\bigr)+\zeta\bigl(\frac{\eta}2+K\bigr)< 0,\quad \eta\in(0,2K),
\end{equation}
(corresponding to the limits $u\downarrow 0$ and $u\uparrow K$) and
\begin{equation}
\wp\bigl(\frac{\eta}2-u\bigr)-\wp\bigl(\frac{\eta}2+u\bigr)=\frac{\wp'(\eta/2)\wp'(u)}{\bigl(\wp(\eta/2)-\wp(u)\bigr)^2}>0,\quad u\in\left(0,\frac{\eta}2\right)\cup\left(\frac{\eta}2,K\right),\ \ \eta\in(0,2K).
\end{equation}
In the last inequality, we used~\eqref{eq:ellipticprostaphaeresis} to rewrite the left-hand side; this inequality follows directly from the fact that $\wp'(u)<0$ for $u\in(0,K)$, see~Lemma~\ref{lemma:ineqwprectlattice}, and so we only need to establish~\eqref{eq:needtoestablish1} and~\eqref{eq:needtoestablish2}.

Let us proceed with~\eqref{eq:needtoestablish1} first: the left-hand side diverges to $+\infty$ when $\eta\downarrow 0$ by~\eqref{eq:Taylorzeta} and converges to $1$ as $\eta\uparrow 2K$ by the Legendre identity~\eqref{eq:LegendreIdentity}. Hence we only need to show it is decreasing in $\eta$ for $\eta\in(0,2K)$, i.e., that
\begin{equation}
\label{eq:itisequivalenttoinequality}
-\frac{\zeta(\i\pi)}{\i\pi}-\wp\bigl(\frac{\eta}2\bigr)< 0,\qquad
\eta\in(0,2K).
\end{equation}
The left-hand side of this inequality diverges to $-\infty$ when $\eta\downarrow 0$. Therefore, since $\wp(\eta/2)$ is decreasing for $\eta\in(0,2K)$ (see Lemma~\ref{lemma:ineqwprectlattice}), it is enough to show this inequality for $\eta=2K$:
\begin{equation}
-\frac{\zeta(\i\pi)}{\i\pi}-\wp(K)< 0,\qquad
\eta\in(0,2K).
\end{equation}
This is clear, because from~\eqref{eq:zetaipi} and~\eqref{eq:wpK} we obtain that $-\frac{\zeta(\i\pi)}{\i\pi}-\wp(K)$ is equal to
\begin{equation}
     -\frac{1}{4\sinh(K/2)^2}-\frac 14\sum_{n\geq 1}\left[\sinh \bigl( (n+\tfrac 12)K\bigr)^{-2}+\sinh \bigl( (-n+\tfrac 12)K\bigr)^{-2}\right],
\end{equation}
which is manifestly negative (sum of negative terms) for all $K>0$. Hence~\eqref{eq:needtoestablish1} is established.

Finally, we prove~\eqref{eq:needtoestablish2}: the left-hand side can be rewritten as $2\zeta(K)-\frac{\zeta(\i\pi)}{\i\pi}\eta+2\zeta\bigl(\frac{\eta}2-K\bigr)$ by~\eqref{eq:periodiczeta}. Hence, it converges to~$0$ when $\eta\downarrow 0$  (recall that $\zeta$ is odd) and it diverges to $-\infty$ when $\eta\uparrow 2K$ by~\eqref{eq:Taylorzeta}.
Hence we only need to show it is decreasing in $\eta$ for $\eta\in(0,2K)$, i.e., that
\begin{equation}
-\frac{\zeta(\i\pi)}{\i\pi}-\wp\bigl(\frac{\eta}2-K\bigr)< 0,\qquad
\eta\in(0,2K).
\end{equation}
This is equivalent to~\eqref{eq:itisequivalenttoinequality}, which we have already shown. The proof is complete.
\end{proof}

\begin{remark}
    \label{remark:strictineqqplus}
    In the proof of the last proposition, we also showed that for any $\epsilon,\delta>0$ there exists $ k>0$ such that for all $x\in(x_*+\delta,2-\delta)$ we have
    \begin{equation}
         g_+(\mu)+ g_-(\mu)+ V_{\eta,x}(\mu) 
        \,\,
        \begin{cases}
            \geq  \ell+ k& \text{if } \mu \in [d+\epsilon,+\infty),
            \\
            \leq  \ell- k & \text{if } \mu \in (-\infty,a-\epsilon]\cup [b+\epsilon,c-\epsilon].
        \end{cases}
    \end{equation}
\end{remark}

\begin{proposition}
    The quantity $\frac{1}{2\pi\i}\bigl(g_+'(\mu)-g_-'(\mu)\bigr)$ is equal to the right-hand side in~\eqref{eq:maintheorem:densitymiddle}.
    Moreover, the corresponding $\mathfrak h_*$ defined by $\mathfrak h_*(\mu)+\mathbf{1}_{(-\infty,0)}(\mu)=\frac{1}{2\pi\i}\bigl(g_+'(\mu)-g_-'(\mu)\bigr)$ satisfies $\mathfrak h_*\in\mathcal H$.
\end{proposition}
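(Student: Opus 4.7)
The proof splits into two parts. For the density formula, I apply the Sokhotski--Plemelj formulas to the two integral representations of $g'(z)$ in~\eqref{eq:gprimefrakright}. I first record the branch of $r$ on each piece of $\mathbb{R}$: starting from the asymptotic $r(z) \sim z^2$ and continuing analytically along paths in the upper half-plane, one finds $r = +R$ on $(-\infty, a) \cup (d, +\infty)$, $r = -R$ on $(b, c)$, $r_+ = -\i R$ on $(a, b)$, and $r_+ = +\i R$ on $(c, d)$. On the gaps $(-\infty, a) \cup (b, c)$ and the exterior $(d, +\infty)$, the jump relations~\eqref{eq:gprimefrakrightjumps} directly give $\frac{1}{2\pi\i}(g'_+ - g'_-) = 1$ and $0$, matching the indicator $\mathbf 1_{(-\infty, a) \cup (b, c)}(\mu)$ in~\eqref{eq:maintheorem:densitymiddle}. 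On the band $(a, b)$ the contour of $\int_c^d$ is disjoint from $(a, b)$ so no principal value arises, and the second expression in~\eqref{eq:gprimefrakright} evaluates to
\begin{equation*}
g'_\pm(\mu) = \pm \i\pi \mp \i R(\mu)\biggl[\int_d^{+\infty}\frac{\d\nu}{R(\nu)(\nu - \mu)} - \frac{\eta}{2\pi}\int_c^d \frac{\d\nu}{R(\nu)(\nu - \mu)}\biggr],
\end{equation*}
whose half-difference over $2\pi\i$ reproduces the $(a, b)$-line of~\eqref{eq:maintheorem:densitymiddle}. On the band $(c, d)$ the same computation requires Sokhotski--Plemelj for the $\int_c^d$ integral, producing a principal value together with a residue $\mp\pi/R(\mu)$ from the pole at $\nu = \mu$; collecting terms yields the $(c, d)$-line.

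For the membership $\mathfrak h_* \in \mathcal H$, compact support and $L^1$ integrability are immediate: since $\rho \equiv 1$ on $(-\infty, a)$ and $\rho \equiv 0$ on $(d, +\infty)$, the function $\mathfrak h_* = \rho - \mathbf 1_{(-\infty, 0)}$ vanishes outside the bounded interval $[\min(a, 0), \max(d, 0)]$ and is uniformly bounded there. The vanishing of $\int_\mathbb R \mathfrak h_*$ follows from the asymptotic $g'(z) = \log z + g_{-1} + O(z^{-2})$ as $z \to \infty$, which is valid because $g_{-2} = 0$ (by~\eqref{eq:gm2}) and $g_0 = 0$ (by~\eqref{eq:qplusg0=0}) were both enforced in the construction of the parameters. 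Indeed, setting $G(z) := g'(z) - \log z - g_{-1}$ with the principal branch of $\log$, the jump of $G$ across $\mathbb R$ equals $g'_+ - g'_-$ minus the jump $2\pi\i \mathbf 1_{(-\infty, 0)}$ of $\log z$, which is precisely $2\pi\i\, \mathfrak h_*(\mu)$. Since $G(z) = O(z^{-2})$, the Cauchy representation $G(z) = \int_\mathbb R \frac{\mathfrak h_*(\mu)}{\mu - z}\d\mu$ together with the comparison of the $z^{-1}$-term of the expansion forces $\int_\mathbb R \mathfrak h_* = 0$.

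Finally, to verify $0 \leq \rho \leq 1$, note that on the gaps and exterior $\rho \in \{0,1\}$ trivially, while on the bands Schwarz reflection $\overline{g'(\bar z)} = g'(z)$ gives $\rho(\mu) = \Im g'_+(\mu)/\pi$. Through the uniformization of Section~\ref{sec:minimizationxq<x<2}, the upper side of $(c, d)$ corresponds to $w = \i y$ with $y \in (0, \pi)$, and that of $(a, b)$ to $w = K + \i y$ with $y \in (0, \pi)$; on each edge Proposition~\ref{prop:h} provides an explicit elliptic expression for $\rho$. The endpoint values $\rho(a) = \rho(b) = \rho(c) = 1$ and $\rho(d) = 0$ follow by continuity with the adjacent intervals, using that $r$ vanishes at the endpoints and the $\pm \i\pi$ contributions of the second form in~\eqref{eq:gprimefrakright} survive. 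The remaining check reduces to a sign condition, namely that $\partial_y [\Im f(\i y)] = \Re f'(\i y) = A + 2\Re\zeta(w_\infty + \i y)$ and the analogous expression on $w = K + \i y$ keep $\rho(\mu(y))$ within $[0, 1]$; this can be verified using the monotonicity and convexity inequalities for the Weierstrass $\wp$ and $\zeta$ functions on the rectangular lattice from Lemma~\ref{lemma:ineqwprectlattice} together with the Legendre identity, in the same spirit as in Propositions~\ref{prop:bcineq} and~\ref{prop:ineqvariational2cut}. This elliptic monotonicity verification is the main technical step; all other items are bookkeeping of branches and asymptotics.
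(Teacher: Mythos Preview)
Your treatment of the first assertion (the density formula) and of the zero-mean condition $\int_{\mathbb R}\mathfrak h_*=0$ is correct and follows the same route as the paper: Sokhotski--Plemelj applied to the two forms of~\eqref{eq:gprimefrakright}, together with the enforced conditions $g_{-2}=g_{0}=0$. Your branch bookkeeping is accurate.

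The gap is in the crucial bound $0\le\rho\le1$ on the bands. You propose to control $\rho$ through a sign condition on the $y$-derivative along the vertical edges $w=\i y$ and $w=K+\i y$ of the uniformizing rectangle, invoking Lemma~\ref{lemma:ineqwprectlattice} ``in the same spirit'' as Propositions~\ref{prop:bcineq} and~\ref{prop:ineqvariational2cut}. But on the $(a,b)$-band both endpoint values are equal to~$1$: indeed $w(a)=K$ and $w(b)=K+\i\pi$, and the relations~\eqref{condh3}--\eqref{condh4} give $f_+(K)-f_-(K)=2\pi\i$ and $f(K+\i\pi)-f(K-\i\pi)=2\pi\i$, so $\rho(a)=\rho(b)=1$. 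A monotonicity-in-$y$ argument therefore cannot work there; the density dips strictly below~$1$ in the interior and one must bound the minimum from below by~$0$. You have not indicated how to do this, and it is not a direct consequence of the inequalities in Lemma~\ref{lemma:ineqwprectlattice}. On $(c,d)$ the monotonicity approach is in principle viable (since $\rho(d)=0$ and $\rho(c)=1$), but the required inequality $A+2\Re\zeta(\tfrac\eta2+\i y)>0$ for all $y\in(0,\pi)$ and all $K>\eta/2$ is not established either.

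The paper handles this step by a genuinely different device: rather than varying $y$, it freezes $(K,u)$ and views $\frac{1}{2\pi\i}\bigl(f(K+\i u)-f(K-\i u)\bigr)$ as a function of $\eta\in(0,2K)$. A direct computation shows this function is \emph{strictly decreasing in~$\eta$} (the second $\eta$-derivative is a quotient of $\wp'$-values whose sign is fixed by Lemma~\ref{lemma:ineqwprectlattice}, and the first $\eta$-derivative at $\eta=0$ is then shown to be negative by a further convexity argument in $u$). At the endpoints the quasi-periodicity of~$\sigma$ collapses the expression to the explicit values~$1$ (at $\eta=0$) and~$u/\pi\in[0,1]$ (at $\eta=2K$), which pins the function in~$[0,1]$ for all intermediate~$\eta$. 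This trick of interpolating in~$\eta$ bypasses the difficulty you would face on $(a,b)$, where no single sign of $\partial_y\rho$ is available.
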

\begin{proof}
The first statement is a simple consequence of the definition~\eqref{eq:gprimefrakright} and of the Sokhotski--Plemelj formulas.
For the second statement, we only need to show that $\frac{1}{2\pi\i}\bigl(g_+'(\mu)-g_-'(\mu)\bigr)\in[0,1]$, which is a nontrivial statement only when~$\mu\in(a,b)$ or~$\mu\in(c,d)$.
For the statement on $(a,b)$, we will show that the following inequality holds for all $K>\eta/2>0$:
\begin{equation}
\label{eq: inequality in ab}
    0<\frac 1{2\pi\i}\bigl( f(K+\i u)- f(K-\i u)\bigr)<1,\qquad  u\in(0,\pi).
\end{equation}
By \Cref{prop:h}, with $w_\infty=\eta/2$ as per \eqref{eq:endpointselliptic1}, we have
\begin{equation}
    \frac {f(K+\i u)- f(K-\i u)}{2\pi\i} = 1+\frac{ u}{\pi} + \frac{\zeta(\i \pi)}{\i \pi}(2K-\eta) \frac{ u}{\pi} - \frac{1}{\i \pi} \log \frac{\sigma(K-\frac{\eta}{2} + \i  u)}{\sigma(K-\frac{\eta}{2} - \i  u)}.
\end{equation}
For every $K>0$ and $ u\in (0,\pi)$ fixed, we will show that the function
\begin{equation}\label{eq:ineq ab case proof 1}
    \eta \mapsto 1+\frac{ u}{\pi} + \frac{\zeta(\i \pi)}{\i \pi}(2K-\eta) \frac{ u}{\pi} - \frac{1}{\i \pi} \log \frac{\sigma(K-\frac{\eta}{2} + \i  u)}{\sigma(K-\frac{\eta}{2} - \i  u)}
\end{equation}
is strictly decreasing, in the variable $\eta$, for $\eta \in (0,2K)$ and subsequently we will show that its values at $\eta=0$ and $\eta=2K$ are in the interval~$[0,1]$.
To see that \eqref{eq:ineq ab case proof 1} is a decreasing function of $\eta$, observe that its second derivative
\begin{equation}
    \frac{\wp(K-\frac{\eta}{2}+ \i  u)-\wp(K-\frac{\eta}{2} - \i  u)}{4 \pi \i} = - \frac{\wp'(K-\frac{\eta}{2}) \wp'(\i  u)}{4 \pi \i \bigl(\wp(K-\frac{\eta}{2}) - \wp (\i  u) \bigr)^2}<0
\end{equation}
is manifestly negative for all $\eta \in (0,2K)$ and $ u\in(0, \pi)$, by \Cref{lemma:ineqwprectlattice}. On the other hand, the first derivative of \eqref{eq:ineq ab case proof 1} is
\begin{equation}\label{eq:ineq ab case proof 2}
    -\frac{\zeta(\i \pi)}{\i \pi} \frac{ u}{\pi} + \frac{\zeta(K-\frac{\eta}{2}+ \i  u)-\zeta(K-\frac{\eta}{2} - \i  u)}{2 \pi \i}
\end{equation}
and so we only need to show that its evaluation at $\eta=0$ is negative.
Evaluating \eqref{eq:ineq ab case proof 2} at $\eta=0$, we have
\begin{equation}
    -\frac{\zeta(\i \pi)}{\i \pi} \frac{ u}{\pi} + \frac{\zeta(K+ \i  u)-\zeta(K - \i  u)}{2 \pi \i}.
\end{equation}
To show that the above function is negative we view it as a function of $ u \in (0,\pi)$ for $\eta,K$ fixed.
We observe that it vanishes at $ u =0$ and at $ u=\pi$, while its second derivative with respect to $ u$ is 
\begin{equation}
    \frac{\wp'(K+\i  u) - \wp'(K-\i  u)}{2 \i \pi} = \frac{\wp'(K + \i  u)}{\i \pi} >0,
\end{equation}
where we used \Cref{lemma:ineqwprectlattice}.
This proves that the function
\begin{equation}
    \frac {f(K+\i u)- f(K-\i u)}{2\pi\i}
\end{equation}
is strictly decreasing in $\eta$ and we can prove the inequality \eqref{eq: inequality in ab} by evaluating \eqref{eq:ineq ab case proof 1} at $\eta=0$ and $\eta=2K$. For $\eta=0$ we have
\begin{equation}
    \frac {f(K+\i u)- f(K-\i u)}{2\pi\i}\bigg|_{\eta=0}=1+\frac{ u}{\pi} + 2K \frac{\zeta(\i \pi)}{\i \pi} \frac{ u}{\pi} - \frac{1}{\i \pi} \log \frac{\sigma(K + \i  u)}{\sigma(K - \i  u)} = 1,
\end{equation}
which follows from the quasi-periodicity of the $\sigma$ function. On the other hand, for $\eta=2K$ we have
\begin{equation}
    \frac {f(K+\i u)- f(K-\i u)}{2\pi\i}\bigg|_{\eta=2K}=1+\frac{ u}{\pi} - \frac{1}{\i \pi} \log \frac{\sigma( + \i  u)}{\sigma(- \i  u)} = \frac{ u}{\pi} \in [0,1].
\end{equation}
This concludes the proof of the relevant inequality for $\mu\in(a,b)$. The inequality on $(c,d)$ follows from completely similar arguments and we omit it.
\end{proof}

The proof of the second part in Theorem~\ref{thm:minimizer} is complete.

\subsection{Further remarks on the endpoints}\label{sec:propertiesendpoints}
When $x\downarrow x_*$, $\mathcal{K}(x)\to+\infty$ and, using \eqref{eq:triglimitKinfty} and $\mathcal{U}(K) \to 1-\e^{-\eta}$ as $K\to+\infty$ (as shown in Proposition~\ref{prop:f1f2body}), we have 
\begin{equation}
    a,b\to -1-\e^{-\eta},\quad c\to -2\e^{-\eta/2},\quad d\to 2\e^{-\eta/2}.
\end{equation}
This confirms the emergence, when $x$ increases past $x_*$, of a new ``cut'' $(a,b)$ around the point $-1-\e^{-\eta}$, as anticipated in Remark~\ref{remark:BirthOfACut}.
At the same time, as $x\downarrow x_*$, $(c,d)$ converges to the support $(-2\e^{-\eta/2},2\e^{-\eta/2})$ of the equilibrium measure for $x\leq x_*$.
From~\eqref{eq:maintheorem:densitymiddle} it can also be shown that $\mathfrak h_*$ on $(c,d)$ converges to the minimizer of the case $x<x_*$ as $x\downarrow x_*$.

When $x\uparrow 2$, $\mathcal{K}(x)\to\eta/2$ and, using \eqref{eq:limitssss} and $\mathcal{U}(K)\sim 2K-\eta$ as $K\to\eta/2$ (as shown in Proposition~\ref{prop:f1f2body}), we have 
\begin{equation}
    a\to -2,\qquad b,c,d\to 2,
\end{equation}
namely, $(a,b)$ converges to the support $(-2,2)$ of the Vershik--Kerov--Logan--Shepp shape and $(c,d)$ shrinks to a single point and disappears as $x$ increases past $2$.
From~\eqref{eq:maintheorem:densitymiddle} it can also be shown that $\mathfrak h_*$ on $(a,b)$ converges to the Vershik--Kerov--Logan--Shepp shape as~$x\uparrow 2$.
Moreover, it is not hard to check that, in this limit,
\begin{equation}
\begin{aligned}
d-c &\sim -2\frac{2\wp''(\i\pi+\frac{\eta}2 | \frac{\eta}2,\i\pi)}{\wp(\i\pi+\frac{\eta}2|  \frac{\eta}2,\i\pi)-\wp(\i\pi| \frac{\eta}2,\i\pi)}\left(K-\frac {\eta}2\right)^2,\\
d-b &\sim 2\frac{\wp''(\i\pi| \frac{\eta}2,\i\pi)}{\wp(\i\pi+\frac{\eta}2| \frac{\eta}2,\i\pi)-\wp(\i\pi| \frac{\eta}2,\i\pi)}\left(K-\frac {\eta}2\right)^2.
\end{aligned}
\end{equation}
As a consequence,
\begin{equation}\label{eq:bcclose}
    c-b \sim 2 \frac {\wp''(\i\pi+\frac{\eta}2|\frac{\eta}2,\i\pi)+\wp''(\i\pi|\frac{\eta}2,\i\pi)}{\wp(\i\pi+\frac{\eta}2|\frac{\eta}2,\i\pi)-\wp(\i\pi|\frac{\eta}2,\i\pi)}\left(K-\frac {\eta}2\right)^2.
\end{equation}
This implies that $d-c$ and $d-b$ (and, hence, $c-b$) are all $O((2-x)^2)$ as $x\uparrow 2$.
On the other hand, the coefficient appearing in the asymptotic relation for $c-b$ is very small for small $\eta$ (see~Figure~\ref{fig:bcclose}), explaining the apparently faster convergence of $b$ and $c$ (see~Figure~\ref{fig:endpoints}).

\begin{figure}[t]
\centering
\includegraphics[scale=.5]{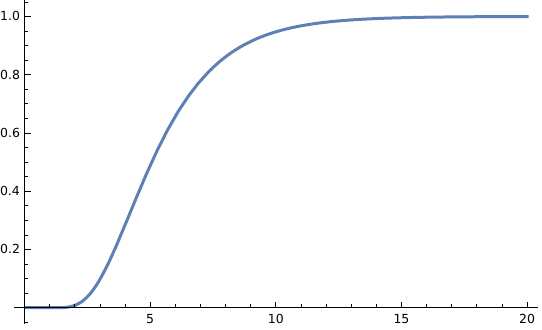}
\caption{The coefficient $2 \frac {\wp''(\i\pi+\frac{\eta}2|\frac{\eta}2,\i\pi)+\wp''(\i\pi|\frac{\eta}2,\i\pi)}{\wp(\i\pi+\frac{\eta}2|\frac{\eta}2,\i\pi)-\wp(\i\pi|\frac{\eta}2,\i\pi)}$ which appears in the asymptotic relation~\eqref{eq:bcclose}, plotted as a function of~$\eta$.}
\label{fig:bcclose}
\end{figure}

\subsection{Case $x\geq 2$}\label{sec:minimizationx>2}
The analysis of this case is completely analogous to that of Section~\ref{sec:minimizationx<xq} and so we omit the details.
Furthermore, this case reduces exactly to the classical Vershik--Kerov--Logan--Shepp analysis~\cite{VershikKerov_LimShape1077,logan_shepp1977variational}.
Indeed, by~\eqref{eq:log-energy},
\begin{equation}
\label{E+E}
\mathcal{E}_{\eta,x}[\mathfrak h]=    \mathcal{E}\left[\mathfrak{h} \right]+\int V_{\eta,x}(\mu) \mathfrak{h}(\mu) \d \mu
\end{equation}
where~$\mathcal E$ is defined in~\eqref{eq:VKLSenergy}.
It was shown in \emph{op.~cit.} that the minimizer of~$\mathcal E$ is~$\mathfrak h_*(\mu)$ given in (3) of Theorem~\ref{thm:minimizer}.
Since such $\mathfrak h_*$ is supported on $[-2,2]$, if $x\geq 2$ and $\mathfrak h\in\mathcal H$ we have
\begin{equation}
\mathcal{E}_{\eta,x}[\mathfrak h]\geq \mathcal E[\mathfrak h]\geq
\mathcal E[\mathfrak h_*]=\mathcal{E}_{\eta,x}[\mathfrak h_*]
\end{equation}
and so $\mathfrak h_*$ is also the unique minimizer of $\mathcal{E}_{\eta,x}$.
The proof of the third part in Theorem~\ref{thm:minimizer} is complete.

\section{Discrete Riemann--Hilbert characterization of the multiplicative average}\label{sec:notationsDeltaNabla}

In this section we recall the discrete Riemann--Hilbert characterization of~$Q(t,s)$ from~\cite{CafassoRuzza23}.

We begin by setting up some notation.
We denote the $2\times 2$ identity matrix by $\boldsymbol {\mathrm{I}}$ and recall the notation $\mathbb{Z}'=\mathbb{Z}+\frac 12$ from the Introduction.
Given $P\subseteq \mathbb{C}$ and $\delta>0$, we denote
\begin{equation}
\label{eq:neighborhood}
\mathcal N_\delta(P)=\bigcup_{p \in P} \bigl\lbrace z\in\mathbb{C}:\, |z-p|<\delta\bigr\rbrace.
\end{equation}
Given $-\pi\leq\nu_1<\nu_2\leq\pi$, we denote
\begin{equation}
\mathcal S_{\nu_1,\nu_2}=\bigl\lbrace z\in\mathbb{C}:\, \nu_1<\arg z<\nu_2\bigr\rbrace.
\end{equation}
Given $\eta>0$, we denote
\begin{equation}
\varsigma(z) = \frac 1{1+\e^{-\eta z}}.
\end{equation}

By the general theory of determinantal point processes, the multiplicative average~$Q(t,s)$, defined in~\eqref{eq:defQ}, can be expressed as a Fredholm determinant
\begin{equation}
Q(t,s)=\mathbb E\biggl[\prod_{i}\left(1-\varsigma(\lambda_i-i+\tfrac 12-s)\right)\biggr]=\det_{\ell^2(\mathbb{Z}')}\bigl(1-\mathscr H_\eta(t,s)\bigr).
\end{equation}
Here, $\mathscr H_\eta(t,s)$ is the operator on $\ell^2(\mathbb{Z}')$ acting via the kernel
\begin{equation}
\label{eq:Mkernel}
\mathsf H_\eta(i,j;t,s) = \sqrt{\varsigma(i-s)}\,\mathsf K(i,j;t)\,\sqrt{\varsigma(j-s)},\qquad i,j\in\mathbb{Z}',
\end{equation}
where $\mathsf K(i,j;t)$ is the discrete Bessel kernel, see~\eqref{eq:dBKernel}.

\begin{remark}
\label{rem:nonzero}
It is well-known that $Q(t,s)\not=0$ for all $\eta>0$, $t>0$, and $s\in\mathbb{Z}$, see~\cite{Borodin2007} or~\cite[Section~1]{CafassoRuzza23}.
\end{remark}

The discrete Bessel kernel belongs to a discrete analog of the class of \emph{integrable operators}~\cite{its1990differential}.
In particular, following A.~Borodin's discrete version~\cite{borodin2000riemann} of the Its--Izergin--Korepin--Slavnov theory of integrable operators, the Fredholm determinant $Q(t,s)$ can be characterized in terms of a discrete Riemann--Hilbert problem, as we now detail.

\begin{dRHp}
\label{drhp:Y}
Find a meromorphic function $\boldsymbol Y:\mathbb{C}\to\mathrm{SL}(2,\mathbb{C})$ with simple poles at $\mathbb{Z}'$ only such that
\begin{equation}
\boldsymbol Y(z)=O(1)\,\biggl(\boldsymbol {\mathrm{I}}+\frac{\boldsymbol W(n)}{z-n}\biggr),\quad z\to n,
\end{equation}
for all $n\in\mathbb{Z}'$ and
\begin{equation}
\boldsymbol Y(z)\to \boldsymbol {\mathrm{I}}
\end{equation}
as $z\to\infty$ uniformly in $\mathbb{C}\setminus\mathcal N_\delta(\mathbb{Z}')$ for any $\delta>0$.
Here, $O(1)$ denotes a $2\times 2$ matrix valued analytic function of $z$ in a neighborhood of $z=n$ and
\begin{equation}
\boldsymbol W(n) = \frac{\varsigma(n-s)}{1-\mathsf H_\eta(n,n;t,s)}
\renewcommand*{\arraystretch}{1.5}
\begin{pmatrix}
t\,\mathrm J_{n-\frac 12}(2t)\,\mathrm J_{n+\frac 12}(2t) & -\mathrm J_{n+\frac 12}(2t)^2  \\  t^2\,\mathrm J_{n-\frac 12}(2t)^2 & -t\,\mathrm J_{n-\frac 12}(2t)\,\mathrm J_{n+\frac 12}(2t) 
\end{pmatrix}
\end{equation}
where $\mathsf H_\eta(m,n;t,s)$ is the kernel appearing in~\eqref{eq:Mkernel}.
\end{dRHp}

The following result has been proven in~\cite{CafassoRuzza23} (with $\varsigma$ replaced by more general functions, see Theorems 3.4 and 3.5 in \emph{op.~cit.} as well as Remark~\ref{rem:nonzero}) and is a consequence of the general theory of (discrete) integrable operators~\cite{its1990differential,borodin2000riemann}.

\begin{theorem}[\cite{CafassoRuzza23}]
\label{thm:CR}
The discrete Riemann--Hilbert problem~\ref{drhp:Y} has a unique solution $\boldsymbol Y$ and we have
\begin{equation}
\boldsymbol Y(z)=\boldsymbol {\mathrm{I}}+z^{-1}\begin{pmatrix}
\alpha & \beta \\ \gamma & -\alpha
\end{pmatrix}+O(z^{-2})
\end{equation}
as $z\to\infty$ uniformly in $\mathbb{C}\setminus\mathcal N_\delta(\mathbb{Z}')$ for any $\delta>0$, with
\begin{equation}\begin{aligned}
\label{eq:alphabetagamma}
\alpha&=\alpha(t,s)=-\frac 12\,t\,\partial_t\log Q(t,s), \\
\beta&=\beta(t,s)=\frac{Q(t,s-1)}{Q(t,s)}-1,\\
\gamma&=\gamma(t,s)= t^2\biggl(\frac{Q(t,s+1)}{Q(t,s)}-1\biggr).
\end{aligned}
\end{equation}
\end{theorem}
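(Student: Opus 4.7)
The plan is to deploy the discrete version, due to A.~Borodin \cite{borodin2000riemann}, of the Its--Izergin--Korepin--Slavnov framework for integrable operators. The argument naturally proceeds in three stages: (i) recognizing $\mathsf H_\eta$ as a discrete integrable kernel; (ii) proving solvability of the dRHP via a vanishing lemma; and (iii) reading off the coefficients $\alpha,\beta,\gamma$ from the large-$z$ expansion.

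For stage (i), the discrete Bessel kernel has the Christoffel--Darboux-type form $\mathsf K(i,j;t) = (i-j)^{-1}\boldsymbol\phi(i)^\top \boldsymbol\psi(j)$ with $\boldsymbol\phi(i) = \bigl(t\mathrm J_{i-\frac 12}(2t),\,-\mathrm J_{i+\frac 12}(2t)\bigr)^\top$ and $\boldsymbol\psi(j) = \bigl(\mathrm J_{j+\frac 12}(2t),\,t\mathrm J_{j-\frac 12}(2t)\bigr)^\top$, the diagonal being regular since $\boldsymbol\phi(n)^\top \boldsymbol\psi(n) = 0$. Conjugating by the multiplier $\sqrt{\varsigma(\cdot-s)}$ preserves this structure, and the matrix $\boldsymbol W(n)$ in the dRHP is (up to transposition conventions) the rank-one dyad built from $\sqrt{\varsigma(n-s)}\,\boldsymbol\phi(n)$ and $\sqrt{\varsigma(n-s)}\,\boldsymbol\psi(n)$, rescaled by $(1-\mathsf H_\eta(n,n;t,s))^{-1}$. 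This places the problem squarely within the Borodin framework.

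For stage (ii), the collection of residue conditions at $\mathbb{Z}'$ together with the normalization at infinity is equivalent to a linear system of the form $(1-\mathscr H_\eta)\boldsymbol F = \sqrt{\varsigma(\cdot-s)}\,\boldsymbol\phi$ for some $\boldsymbol F \in \ell^2(\mathbb{Z}')\otimes \mathbb{C}^2$. A vanishing lemma then yields uniqueness: any solution decaying at infinity forces $\boldsymbol F \in \ker(1-\mathscr H_\eta)$, which is trivial since $Q(t,s) \neq 0$ (Remark~\ref{rem:nonzero}). Existence is then produced explicitly by the ansatz
\[
\boldsymbol Y(z) = \boldsymbol{\mathrm{I}} + \sum_{n\in\mathbb{Z}'} \frac{\boldsymbol F(n)\bigl(\sqrt{\varsigma(n-s)}\,\boldsymbol\psi(n)\bigr)^\top}{z-n},
\qquad \boldsymbol F(n) = \bigl((1-\mathscr H_\eta)^{-1}\sqrt{\varsigma(\cdot-s)}\,\boldsymbol\phi\bigr)(n),
\]
the residue condition at each $n$ reducing to a direct algebraic check. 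Unimodularity of $\boldsymbol Y$ then follows because $\det \boldsymbol Y$ is entire (each residue is rank one, hence traceless in its singular part) with the prescribed asymptotic, so by Liouville's theorem $\det \boldsymbol Y \equiv 1$.

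For stage (iii), the coefficient $\boldsymbol Y_1 = -\sum_{n\in\mathbb{Z}'}\boldsymbol F(n)\bigl(\sqrt{\varsigma(n-s)}\,\boldsymbol\psi(n)\bigr)^\top$ has entries that are bilinear forms in the resolvent of $\mathscr H_\eta$. To identify $\alpha$, differentiate $\log Q = \log\det(1-\mathscr H_\eta)$ in $t$ via Jacobi's formula; the Bessel recurrence $2\mathrm J_\nu'(2t) = \mathrm J_{\nu-1}(2t) - \mathrm J_{\nu+1}(2t)$ rewrites $\partial_t\mathsf H_\eta$ as an integrable kernel with the same $\boldsymbol\phi,\boldsymbol\psi$ (up to swapping components), and the resulting trace matches the $(1,1)$ entry of $\boldsymbol Y_1$ with the prefactor $-\tfrac 12 t$. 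For $\beta$ and $\gamma$, the discrete shift $s \to s\mp 1$ corresponds to multiplying the weight $\varsigma(\cdot-s)$ by an explicit factor, and the determinant ratio $Q(t,s\mp 1)/Q(t,s)$ can be reduced via a Jacobi--Cramer-type identity to the trace of the resolvent paired with a rank-one perturbation; matching with the off-diagonal entries of $\boldsymbol Y_1$ produces the stated formulas (the explicit $t^2$ in $\gamma$ tracking the asymmetry between $\boldsymbol\phi$ and $\boldsymbol\psi$). The main obstacle will be this last identification: carefully rearranging the sums involving Bessel functions of shifted order so that the telescoping implicit in the shift $s\to s\mp 1$ is made manifest. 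Absolute convergence of every sum encountered is guaranteed by the superexponential decay $\mathrm J_\nu(2t) = O(|\nu|^{-|\nu|})$ as $|\nu|\to\infty$ at fixed $t$, so there are no analytic subtleties beyond this bookkeeping.
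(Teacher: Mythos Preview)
The paper does not prove this theorem: it is stated as a citation from \cite{CafassoRuzza23} (Theorems~3.4 and~3.5 there), with the remark that it ``is a consequence of the general theory of (discrete) integrable operators~\cite{its1990differential,borodin2000riemann}.'' Your outline is precisely a sketch of that general theory applied to this situation, so there is nothing to compare; you are reconstructing the argument the paper defers to the reference.

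As a sketch it is sound. The integrable-kernel structure of $\mathsf K$ is correct (your $\boldsymbol\phi,\boldsymbol\psi$ give the right Christoffel--Darboux numerator), the existence/uniqueness mechanism via $Q(t,s)\neq 0$ is the standard one, and your stage~(iii) correctly identifies the two mechanisms needed: the $t$-derivative of the log-determinant for $\alpha$, and rank-one perturbations of the weight for the shifts $s\to s\mp 1$ giving $\beta,\gamma$. One point to sharpen if you carry this out in full: the identification of $\beta$ and $\gamma$ with the off-diagonal entries of $\boldsymbol Y_1$ is most cleanly done not by ``Jacobi--Cramer'' manipulations of the Fredholm determinant, but by recognizing that the shift $s\to s\mp 1$ corresponds to a Schlesinger-type transformation of the dRHP (left-multiplication of $\boldsymbol Y$ by a rational matrix with a single pole), from which the determinant ratio drops out as a specific matrix entry. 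This is how \cite{CafassoRuzza23} proceeds, and it bypasses the ``telescoping'' bookkeeping you anticipate as the main obstacle.
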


Introduce, following~\cite{CafassoRuzza23},
\begin{equation}
\label{eq:Phi}
\boldsymbol \Phi(z) = \begin{pmatrix}
\mathrm J_{z-\frac 12}(2t) & \i\pi\mathrm H^{(1)}_{z-\frac 12}(2t) \\
t\mathrm J_{z+\frac 12}(2t) & \i\pi t \mathrm H^{(1)}_{z+\frac 12}(2t)
\end{pmatrix}
\end{equation}
where $\mathrm H_k^{(1)}(r)$ are the Hankel functions of the first kind.

The following lemma is a straightforward consequence of the results in~\cite[Section~4.1]{CafassoRuzza23}.

\begin{lemma}[\cite{CafassoRuzza23}]
\label{lemma:simplejump}
We have
\begin{equation}
\boldsymbol Y(z)\boldsymbol \Phi(z)=O(1)\left(\boldsymbol {\mathrm{I}}-\frac{\varsigma(n-s)}{z-n}\begin{pmatrix}
    0 & 1 \\ 0 & 0
\end{pmatrix} \right),\quad z\to n,
\end{equation}
for all $n\in\mathbb{Z}'$.
\end{lemma}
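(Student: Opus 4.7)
The plan is to verify the claimed local factorization by expanding $\boldsymbol Y(z)\boldsymbol \Phi(z)$ in a Laurent series at $z=n$ and matching coefficients against a hypothetical product $B(z)(\boldsymbol{\mathrm{I}} - \varsigma(n-s) E_{12}/(z-n))$ with $B$ analytic at $n$, where $E_{12} = \begin{pmatrix} 0 & 1 \\ 0 & 0 \end{pmatrix}$. Two preliminary facts will be needed: first, $\boldsymbol \Phi(z)$ is an entire matrix in $z$, since $\mathrm{J}_\nu(r)$ and $\mathrm{H}^{(1)}_\nu(r)$ are entire in $\nu$ for any fixed $r>0$ (the apparent singularities of $\mathrm{H}^{(1)}_\nu$ at integer order being removable), and second, $\det\boldsymbol \Phi(z) \equiv 1$ by the Bessel--Hankel Wronskian identity
\begin{equation*}
\mathrm{J}_\nu(2t)\,\mathrm{H}^{(1)}_{\nu+1}(2t) - \mathrm{J}_{\nu+1}(2t)\,\mathrm{H}^{(1)}_\nu(2t) = -\frac{\mathrm i}{\pi t}.
\end{equation*}

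Using the local expansion $\boldsymbol Y(z) = A(z)(\boldsymbol{\mathrm{I}} + \boldsymbol W(n)/(z-n))$ from Discrete Riemann--Hilbert problem~\ref{drhp:Y}, with $A(z)$ analytic and invertible at $n$ (since $\boldsymbol W(n)$ is trace-zero and rank one, hence nilpotent, so that $\det(\boldsymbol{\mathrm{I}} + \boldsymbol W(n)/(z-n)) = 1$), I would compute
\begin{align*}
\boldsymbol Y(z)\boldsymbol \Phi(z) &= \frac{A(n)\boldsymbol W(n)\boldsymbol \Phi(n)}{z-n} \\
&\quad + \bigl[A(n)\boldsymbol \Phi(n) + A'(n)\boldsymbol W(n)\boldsymbol \Phi(n) + A(n)\boldsymbol W(n)\boldsymbol \Phi'(n)\bigr] + O(z-n).
\end{align*}
Matching with the Laurent expansion of $B(z)(\boldsymbol{\mathrm{I}} - \varsigma(n-s)E_{12}/(z-n))$, whose residue $-\varsigma(n-s)B(n)E_{12}$ has zero first column and whose constant term $B(n) - \varsigma(n-s)B'(n)E_{12}$ has first column equal to $B(n)e_1$, and using the invertibility of $A(n)$, the lemma reduces to the two identities
\begin{align*}
\text{(i)}\quad & \boldsymbol W(n)\boldsymbol \Phi(n)e_1 = 0, \\
\text{(ii)}\quad & \boldsymbol W(n)\boldsymbol \Phi(n)e_2 = -\varsigma(n-s)\bigl[\boldsymbol \Phi(n)e_1 + \boldsymbol W(n)\boldsymbol \Phi'(n)e_1\bigr],
\end{align*}
where in (ii) the term $A'(n)\boldsymbol W(n)\boldsymbol \Phi(n)e_1$ from the constant term drops out thanks to (i).

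The key step is then the verification of (i) and (ii) from the explicit form of $\boldsymbol W(n)$. Since $\boldsymbol W(n)$ is rank one and trace zero, it admits a factorization $\boldsymbol W(n) = \frac{\varsigma(n-s)}{1-\mathsf H_\eta(n,n;t,s)} u v^T$ with column vectors $u, v$ whose entries are built from $\mathrm{J}_{n\pm 1/2}(2t)$ and which satisfy $v^T u = 0$. Identity (i) then becomes the algebraic orthogonality $v^T \boldsymbol \Phi(n)e_1 = 0$, which is a direct calculation from the first column of $\boldsymbol \Phi(n)$ and the entries of $v$. Identity (ii) is the delicate one: one evaluates $v^T \boldsymbol \Phi(n)e_2$ in closed form using the Wronskian identity above, and $v^T \boldsymbol \Phi'(n)e_1$ using the L'H\^opital-derived expression
\begin{equation*}
\mathsf K(n,n;t) = t\bigl(\mathrm{J}_{n+1/2}(2t)\,\partial_\nu \mathrm{J}_\nu(2t)\big|_{\nu=n-1/2} - \mathrm{J}_{n-1/2}(2t)\,\partial_\nu \mathrm{J}_\nu(2t)\big|_{\nu=n+1/2}\bigr),
\end{equation*}
obtained by taking the diagonal limit $j\to n$ in~\eqref{eq:dBKernel}. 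The denominator $1-\mathsf H_\eta(n,n;t,s)$ in $\boldsymbol W(n)$ is designed precisely so that, using $\mathsf H_\eta(n,n;t,s) = \varsigma(n-s)\mathsf K(n,n;t)$, the ratio of the two contributions produces the claimed prefactor $-\varsigma(n-s)$.

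The main obstacle is identity (ii): whereas (i) reduces to a transparent algebraic orthogonality, verifying (ii) requires the precise interplay between the Wronskian evaluation of $v^T\boldsymbol \Phi(n)e_2$ and the L'H\^opital/Bessel-derivative evaluation of $v^T\boldsymbol \Phi'(n)e_1$, together with the cancellation of $1-\mathsf H_\eta(n,n;t,s)$ in the normalization of $\boldsymbol W(n)$ against $\mathsf H_\eta(n,n;t,s)$ arising from the derivative calculation. This cancellation is precisely the reason for the specific normalization of $\boldsymbol W(n)$ in the discrete Riemann--Hilbert problem, and it is what makes the continuous-dressing operation by $\boldsymbol \Phi$ convert the discrete residue structure of $\boldsymbol Y$ into the triangular form stated in the lemma.
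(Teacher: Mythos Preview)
The paper does not give its own proof of this lemma: it is quoted from~\cite{CafassoRuzza23} and introduced only as ``a straightforward consequence of the results in~\cite[Section~4.1]{CafassoRuzza23}''. So there is no argument in the paper to compare against, and your direct verification is exactly the kind of computation the cited reference carries out.

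Your reduction is correct. Defining $B(z)=\boldsymbol Y(z)\boldsymbol\Phi(z)\bigl(\boldsymbol{\mathrm I}+\varsigma(n-s)E_{12}/(z-n)\bigr)$ and expanding $(\boldsymbol{\mathrm I}+\boldsymbol W(n)/(z-n))\boldsymbol\Phi(z)\bigl(\boldsymbol{\mathrm I}+\varsigma(n-s)E_{12}/(z-n)\bigr)$ to order $(z-n)^{-1}$ gives precisely your conditions (i) and (ii). The auxiliary facts you invoke (entireness of $\boldsymbol\Phi$ in the order variable, $\det\boldsymbol\Phi\equiv 1$ via the Bessel--Hankel cross-product, nilpotency of $\boldsymbol W(n)$ so that $A(n)$ is invertible) are all correct.

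One caution when you actually execute the calculation. In the integrable-operator formalism underlying~\cite{CafassoRuzza23}, $\boldsymbol W(n)$ is, up to the scalar prefactor, the rank-one matrix $\mathbf f(n)\mathbf g(n)^T$ with $\mathbf f(n)=\boldsymbol\Phi(n)e_1=\bigl(\mathrm J_{n-1/2},\,t\,\mathrm J_{n+1/2}\bigr)^T$ and $\mathbf g(n)=\bigl(t\,\mathrm J_{n+1/2},\,-\mathrm J_{n-1/2}\bigr)^T$; with that factorization your (i) is literally the integrability condition $\mathbf g(n)^T\mathbf f(n)=0$, and (ii) follows from the Wronskian and the L'H\^opital evaluation of $\mathsf K(n,n;t)$ exactly as you outline. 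If instead you plug in the off-diagonal entries of $\boldsymbol W(n)$ exactly as displayed in the present paper, you will find $\boldsymbol W(n)\boldsymbol\Phi(n)e_1$ proportional to $\mathrm J_{n-1/2}^2-\mathrm J_{n+1/2}^2\neq 0$, so (i) would appear to fail. This points to a transcription slip in the displayed $\boldsymbol W(n)$ (the subscripts $n\pm\tfrac12$ in the off-diagonal entries are interchanged relative to $\mathbf f\mathbf g^T$), not to a defect in your strategy.
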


In other words, right multiplication by $\boldsymbol \Phi$ simplifies the pole condition in the discrete Riemann--Hilbert problem~\ref{drhp:Y}.
On the other hand, the asymptotic expansion at $z=\infty$ is more involved now.
To deal with it, we will also consider the Hankel functions of the second kind, $H_k^{(2)}(r)$.

\begin{lemma}
\label{lemma:asymptoticsbesselandhankel}
As $z\to\infty$ uniformly in $\mathcal S_{-\pi+\delta,\pi-\delta}$ for any $\delta>0$, we have
\begin{equation}
\begin{aligned}
\mathrm J_{z-\frac 12}(2t)&=\frac 1{\sqrt{2\pi t}}\biggl(1-\bigl(t^2-\tfrac 1{24}\bigr)z^{-1}+O\bigl(z^{-2}\bigr)\biggr) \e^{-z(\log(zt^{-1})-1)},
\\
\mathrm J_{z+\frac 12}(2t) &=\frac 1{\sqrt{2\pi t}}\biggl(tz^{-1}+O\bigl(z^{-2}\bigr)\biggr) \e^{-z(\log(zt^{-1})-1)}.
\end{aligned}
\end{equation}
As $z\to\infty$ uniformly in $\mathcal S_{-\pi+\delta,\pi-\delta}\setminus\mathcal N_\delta(\mathbb{Z}'_{>0})$ for any $\delta>0$, we have
\begin{equation}
\begin{aligned}
\frac{\pi}{\cos\pi z}\,\mathrm J_{-z-\frac 12}(2t) &=\sqrt{\frac{2\pi}{t}}\biggl(1+\bigl(t^2-\tfrac 1{24}\bigr)z^{-1}+O\bigl(z^{-2}\bigr)\biggr) \e^{z(\log(zt^{-1})-1)},
\\
\frac{\pi}{\cos\pi z}\,\mathrm J_{-z+\frac 12}(2t) &=\sqrt{\frac{2\pi}{t}}\biggl(-tz^{-1}+O\bigl(z^{-2}\bigr)\biggr) \e^{z(\log(zt^{-1})-1)}.
\end{aligned}
\end{equation}
For $l=1,2$, as $z\to\infty$ uniformly in $\mathcal S_{-\frac\pi 2+\delta,\frac\pi 2-\delta}$ for any $\delta>0$, we have
\begin{equation}\begin{aligned}
\i\pi\mathrm H^{(l)}_{z-\frac 12}(2t) &=(-1)^{l+1}\sqrt{\frac {2\pi} t}\biggl(tz^{-1}+O\bigl(z^{-2}\bigr)\biggr) \e^{z(\log(zt^{-1})-1)},
\\
\i\pi\mathrm H^{(l)}_{z+\frac 12}(2t) &=(-1)^{l+1}\sqrt{\frac {2\pi} t}\biggl(1+\bigl(t^2-\tfrac 1{24}\bigr)z^{-1}+O\bigl(z^{-2}\bigr)\biggr) \e^{z(\log(zt^{-1})-1)}.
\end{aligned}
\end{equation}
\end{lemma}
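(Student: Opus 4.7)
The plan is to derive all three families of asymptotics from the Taylor series representation
\begin{equation*}
\mathrm J_\nu(2t) \,=\, \sum_{k\geq 0}\frac{(-1)^k\,t^{\nu+2k}}{k!\,\Gamma(\nu+k+1)}
\end{equation*}
together with the Stirling expansion of $\Gamma(z+a)$, which is uniform in any sector $|\arg z|\leq \pi-\delta$. Factorial bounds guarantee that the tail $k\geq 2$ is uniformly of smaller order than required, so only the $k=0$ and $k=1$ terms enter the computation at the stated precision.

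First I would treat the Bessel asymptotics in $\mathcal S_{-\pi+\delta,\pi-\delta}$. Substituting $\nu = z - \tfrac 12$ and applying the uniform Stirling expansion
\begin{equation*}
\Gamma\bigl(z+\tfrac 12\bigr) \,=\, \sqrt{2\pi}\,z^z\,\e^{-z}\bigl(1 - \tfrac{1}{24z} + O(z^{-2})\bigr),
\end{equation*}
one recognizes $\e^{-z(\log(z/t)-1)}$ as the leading exponential factor. The first series correction $-t^2/\Gamma(z+\tfrac 32) = -(t^2/(z+\tfrac 12))/\Gamma(z+\tfrac 12)$ combines with the $-1/(24z)$ from Stirling to produce the prefactor $1 - (t^2 - 1/24)z^{-1}$ appearing in the statement; the analogous computation for $\mathrm J_{z+1/2}(2t)$ differs only in that the extra factor $(z+\tfrac 12)^{-1}$ provides the leading $tz^{-1}$.

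Second, to handle $\pi\,\mathrm J_{-z\mp 1/2}(2t)/\cos\pi z$ in $\mathcal S_{-\pi+\delta,\pi-\delta}\setminus\mathcal N_\delta(\mathbb{Z}'_{>0})$, the key algebraic manoeuvre is the reflection identity $\Gamma(s)\Gamma(1-s) = \pi/\sin\pi s$ combined with $\sin\pi(-z+k+\tfrac 12) = (-1)^k \cos\pi z$, which absorbs the troublesome $\cos\pi z$ into the reciprocal Gamma factors of the series and gives
\begin{equation*}
\frac{\pi}{\cos\pi z}\,\mathrm J_{-z-1/2}(2t) \,=\, \Gamma\bigl(z+\tfrac 12\bigr)\,t^{-z-1/2}\sum_{k\geq 0}\frac{t^{2k}}{k!\,\prod_{j=0}^{k-1}(j+\tfrac 12 - z)},
\end{equation*}
and similarly for the $\mathrm J_{-z+1/2}$ case. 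The resulting series is again controlled by its first terms, while the combination of Stirling with $t^{-z-1/2}$ yields the exponential $\sqrt{2\pi/t}\,\e^{z(\log(z/t)-1)}$; the exclusion of the neighborhood of $\mathbb{Z}'_{>0}$ guarantees that $1/\cos\pi z$ does not interfere with the uniformity.

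Finally, the Hankel asymptotics follow from the standard connection formula
\begin{equation*}
\mathrm H^{(l)}_\nu(x) \,=\, (-1)^{l+1}\,\frac{\i}{\sin\pi\nu}\bigl(\e^{(-1)^l \i\pi\nu}\mathrm J_\nu(x) - \mathrm J_{-\nu}(x)\bigr),
\end{equation*}
which, after setting $\nu = z\pm \tfrac 12$ and using $\sin\pi(z\pm \tfrac 12) = \pm\cos\pi z$, expresses $\i\pi\,\mathrm H^{(l)}_{z\pm 1/2}(2t)$ as a linear combination of $\pi\,\mathrm J_{-z\mp 1/2}(2t)/\cos\pi z$ and $\pi\,\e^{(-1)^l \i\pi z}\mathrm J_{z\pm 1/2}(2t)/\cos\pi z$. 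In the narrower sector $\mathcal S_{-\pi/2+\delta,\pi/2-\delta}$, the second term is exponentially subdominant because $\mathrm J_{z\pm 1/2}(2t)\sim \e^{-z(\log(z/t)-1)}$ is overwhelmed by the $\e^{z(\log(z/t)-1)}$ growth of the first, while $\e^{(-1)^l\i\pi z}/\cos\pi z$ remains bounded throughout this sector. Thus the first two families determine the third directly; the only delicate point throughout is to keep every estimate uniform in the stated sectors, which reduces to the uniformity of Stirling's expansion and of the factorial decay of the series tail.
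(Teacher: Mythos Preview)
Your approach is essentially identical to the paper's: Taylor series plus Stirling for $\mathrm J_{z\pm 1/2}$, Euler reflection for the negative-order terms, and the standard connection formulae for the Hankel functions. There is, however, one false claim in your Hankel argument: the factor $\e^{(-1)^l\i\pi z}/\cos\pi z = 2/(1+\e^{(-1)^{l+1}2\pi\i z})$ is \emph{not} bounded throughout $\mathcal S_{-\pi/2+\delta,\pi/2-\delta}$, since it has simple poles at every half-integer $z\in\mathbb{Z}'_{>0}$, all of which lie on the positive real axis and hence in the sector. Consequently, direct substitution of the two earlier families into the connection formula yields the Hankel asymptotics only on $\mathcal S_{-\pi/2+\delta,\pi/2-\delta}\setminus\mathcal N_\delta(\mathbb{Z}'_{>0})$, which is the domain of validity of your second family.

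To close the gap, observe that $\i\pi\mathrm H^{(l)}_{z\pm 1/2}(2t)$ and its claimed asymptotic expansion are both analytic in $z$ across the half-integers (the poles of the two summands in the connection formula cancel, as one checks from $\mathrm J_{-n}=(-1)^n\mathrm J_n$). Having established the estimate on the boundary of a disk of radius $1/4$ about each large half-integer, the maximum principle propagates it to the interior, giving uniformity on the full sector. The paper's proof is just as terse on this point: it simply cites the connection formula and says ``the proof is complete'' without commenting on the half-integers.
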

\begin{proof}
Using the well-known expansion of Bessel functions
\begin{equation}
\mathrm J_{k}(r)=\biggl(\frac r2\biggr)^k\sum_{n\geq 0}\frac{(-r^2/4)^n}{n!\,\Gamma(k+1+n)},
\end{equation}
we have
\begin{equation}
\mathrm J_{z-\frac 12}(2t)\,\Gamma(\tfrac 12+z)\,t^{\frac 12-z}=\sum_{n\geq 0}\frac{(-t^2)^n}{n!}\,\frac{\Gamma(z+\frac 12)}{\Gamma(z+\frac 12+n)}=1-\frac{t^2}{z+\frac 12}+\sum_{n\geq 2}\frac{(-t^2)^n}{n!\,(z+\frac 12)_n}
\end{equation}
using the Pochhammer symbol $(w)_n=\prod_{j=0}^{n-1} (w+j)$.
Hence,
\begin{equation}
\label{eq:restartfromhere}
\mathrm J_{z-\frac 12}(2t) \,\Gamma(\tfrac 12+z)\,t^{\frac 12-z}\,-\,1\,+\,t^2z^{-1}=\frac {t^2}{2z(z+\frac 12)}\,+\,\sum_{n\geq 2}\frac{(-t^2)^n}{n!\,(z+\frac 12)_n}=O\bigl(z^{-2}\bigr)
\end{equation}
as $z\to\infty$ uniformly in $\mathbb{C}\setminus \mathcal N_\delta(\mathbb{Z}'_{<0})$ for any $\delta>0$.
Stirling's approximation implies that the asymptotic relation
\begin{equation}
\label{eq:Stirling}
\Gamma(\tfrac 12+z)t^{\frac 12-z}=\sqrt{2\pi t}\bigl(1-\tfrac 1{24}z^{-1}+O(z^{-2})\bigr)\e^{z(\log(zt^{-1})-1)}
\end{equation}
holds as $z\to\infty$ uniformly in $\mathcal S_{-\pi+\delta,\pi-\delta}$ for any $\delta>0$.
It is straightforward to combining these facts to get the claimed asymptotic relation for $\mathrm J_{z-\frac 12}(2t)$; the one for $\mathrm J_{z+\frac 12}(2t)$ follows.

By replacing $z\mapsto-z$ and using Euler's reflection formula in~\eqref{eq:restartfromhere} we get
\begin{equation}
\mathrm J_{-z-\frac 12}(2t) \,\frac{\pi}{\cos\pi z}\Gamma(\tfrac 12+z)^{-1}\,t^{\frac 12+z}\,-\,1\,-\,t^2z^{-1}=\frac {t^2}{2z(z-\frac 12)}+\sum_{n\geq 2}\frac{(-t^2)^n}{n!\,(-z+\frac 12)_n}=O\bigl(z^{-2}\bigr)
\end{equation}
as $z\to\infty$ uniformly in $\mathbb{C}\setminus \mathcal N_\delta(\mathbb{Z}'_{>0})$ for any $\delta>0$.
The claimed asymptotic relation for $\mathrm J_{-z-\frac 12}(2t)$ then follows from the Stirling approximation, see~\eqref{eq:Stirling}. The one for $\mathrm J_{-z+\frac 12}(2t)$ is a direct consequence.

The asymptotic relations for the Hankel functions follow from (see~\cite[eq.~10.4.7]{DLMF})
\begin{equation}
\label{eq:connectionformulas1}
\begin{aligned}
\i\pi\mathrm H^{(1)}_{z-\frac 12}(2t)&=\frac{2\pi\i}{1+\e^{2\pi\i z}}\mathrm J_{z-\frac 12}(2t)-\frac\pi{\cos(\pi z)}\mathrm J_{-z+\frac 12}(2t),
\\
\i\pi\mathrm H^{(2)}_{z-\frac 12}(2t)&=\frac{2\pi\i}{1+\e^{-2\pi\i z}}\mathrm J_{z-\frac 12}(2t)+\frac\pi{\cos(\pi z)}\mathrm J_{-z+\frac 12}(2t),
\end{aligned}
\end{equation}
and the proof is complete.
\end{proof}

We set
\begin{equation}
{\renewcommand{\arraystretch}{1.4}
\begin{aligned}
\boldsymbol \Phi_R(z)&=
\begin{pmatrix}
    1 & 0 \\ 0 & t
\end{pmatrix}
\begin{pmatrix}
\mathrm J_{z-\frac 12}(2t) &  -\mathrm J_{-z+\frac 12}(2t) \\
\mathrm J_{z+\frac 12}(2t) & \mathrm J_{-z-\frac 12}(2t)
\end{pmatrix}
\begin{pmatrix}
    1 & 0 \\ 0 & \frac{\pi}{\cos(\pi z)}
\end{pmatrix},
\\
\boldsymbol \Phi_{L}^{+}(z)&=
\begin{pmatrix}
    1 & 0 \\ 0 & t
\end{pmatrix}
\begin{pmatrix}
\mathrm H^{(1)}_{-z+\frac 12}(2t) & -\mathrm J_{-z+\frac 12}(2t)
\\
-\mathrm H^{(1)}_{-z-\frac 12}(2t) & \mathrm J_{-z-\frac 12}(2t)
\end{pmatrix}
\begin{pmatrix}
    \frac {\i}{2}\e^{-\i\pi z} & 0 \\ 0 & 2\pi \e^{\i\pi z}
\end{pmatrix},
\\
\boldsymbol \Phi_{L}^{-}(z)&=
\begin{pmatrix}
    1 & 0 \\ 0 & t
\end{pmatrix}
\begin{pmatrix}
\mathrm H^{(2)}_{-z+\frac 12}(2t) & -\mathrm J_{-z+\frac 12}(2t)
\\
-\mathrm H^{(2)}_{-z-\frac 12}(2t) & \mathrm J_{-z-\frac 12}(2t)
\end{pmatrix}
\begin{pmatrix}
    -\frac {\i}{2}\e^{\i\pi z} & 0 \\ 0 & 2\pi \e^{-\i\pi z}
\end{pmatrix}.
\end{aligned}
}
\end{equation}
By use of the connection formulas~\eqref{eq:connectionformulas1} and
\begin{equation}
\mathrm H^{(1)}_{-z+\frac 12}(2t)=-\i\,\e^{\i\pi z}\mathrm H^{(1)}_{z-\frac 12}(2t),\qquad
\mathrm H^{(2)}_{-z+\frac 12}(2t)=\i\,\e^{-\i\pi z}\mathrm H^{(2)}_{z-\frac 12}(2t),
\end{equation}
see~\cite[eq.~10.4.6]{DLMF}, we get
\begin{equation}
\boldsymbol \Phi_R(z)=\boldsymbol \Phi(z) \begin{pmatrix}
    1 & -\frac{2\pi\i}{1+\e^{2\pi\i z}}\\ 0 & 1
\end{pmatrix}
,\quad
\boldsymbol \Phi_L^\pm(z)=\boldsymbol \Phi_R(z)\boldsymbol C_\pm(z),\quad
\boldsymbol C_\pm(z)=\begin{pmatrix} \frac 1{1+\e^{\pm 2\pi\i z}} & 0 \\ \pm \frac 1{2\pi\i} & 1+\e^{\pm 2\pi\i z} \end{pmatrix}.
\end{equation}

\begin{remark}
    The matrix $\boldsymbol \Phi_R$ coincides with the matrix employed by Borodin in the same context in~\cite[Section~3]{borodin2003discrete}.
    The construction of the other matrices $\boldsymbol \Phi_L^\pm$ is essentially dictated by the Riemann--Hilbert analysis which will be carried out in the following sections.
\end{remark}

The following two propositions will be the main ingredients for the formulation of \emph{continuous} Riemann--Hilbert problems amenable to the nonlinear steepest descent analysis.

\begin{proposition}
\label{prop:analytic}
For all $s\in\mathbb{Z}$, let
\begin{equation}
\begin{aligned}
\boldsymbol \nabla_\pm(z)&=\begin{pmatrix}
1 & \mp2\pi\i(1+\e^{\mp 2\pi\i z})^{-1}\bigl(1-\varsigma(z-s)\bigr) \\ 0 & 1
\end{pmatrix},
\\
\boldsymbol \Delta_\pm(z)&=\begin{pmatrix}
(1+\e^{\pm 2\pi\i z})^{-1} & 0 \\ \pm\frac{1}{2\pi\i}\bigl(1-\varsigma(z-s)\bigr)^{-1} & 1+\e^{\pm 2\pi\i z}
\end{pmatrix}.
\end{aligned}
\end{equation}
The matrices $\boldsymbol Y(z)\boldsymbol \Phi_R(z)\boldsymbol \nabla_\pm(z)$ and $\boldsymbol Y(z)\boldsymbol \Phi_R(z)\boldsymbol \Delta_\pm(z)$ are analytic at $z=n$ for all $n\in\mathbb{Z}'$.
\end{proposition}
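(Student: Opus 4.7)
My plan is to reduce everything to Lemma~\ref{lemma:simplejump} together with explicit residue bookkeeping at each $n\in\mathbb{Z}'$. Combining that lemma with the relation $\boldsymbol \Phi_R(z)=\boldsymbol \Phi(z)\bigl(\begin{smallmatrix}1 & -2\pi\i/(1+\e^{2\pi\i z})\\ 0 & 1\end{smallmatrix}\bigr)$ recorded above, one obtains, in a punctured neighborhood of $n$,
\[
\boldsymbol Y(z)\boldsymbol \Phi_R(z)=\boldsymbol A(z)\begin{pmatrix}1 & a(z)\\ 0 & 1\end{pmatrix},\qquad a(z):=-\frac{\varsigma(n-s)}{z-n}-\frac{2\pi\i}{1+\e^{2\pi\i z}},
\]
with $\boldsymbol A(z)$ analytic at $n$. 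Since $\e^{\pm 2\pi\i n}=-1$, we have $1+\e^{\pm 2\pi\i z}=\mp 2\pi\i(z-n)+O((z-n)^2)$; in particular, $a(z)$ has a simple pole at $n$ with residue $1-\varsigma(n-s)$.

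For $\boldsymbol \nabla_\pm(z)$, which is upper triangular with $1$ on the diagonal, the multiplication $\bigl(\begin{smallmatrix}1 & a\\ 0 & 1\end{smallmatrix}\bigr)\boldsymbol \nabla_\pm(z)$ simply adds the (1,2) entries. A direct expansion shows that the (1,2) entry of $\boldsymbol \nabla_\pm$, namely $\mp 2\pi\i(1+\e^{\mp 2\pi\i z})^{-1}(1-\varsigma(z-s))$, has a simple pole at $n$ with residue $-(1-\varsigma(n-s))$, which exactly cancels that of $a(z)$. Hence the (1,2) entry of the triangular factor becomes analytic at $n$, and analyticity of $\boldsymbol Y(z)\boldsymbol \Phi_R(z)\boldsymbol \nabla_\pm(z)$ follows by left multiplication with the analytic matrix $\boldsymbol A(z)$.

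For $\boldsymbol \Delta_\pm(z)=\bigl(\begin{smallmatrix} b & 0 \\ c & d\end{smallmatrix}\bigr)$ with $b=(1+\e^{\pm 2\pi\i z})^{-1}$, $c=\pm(2\pi\i(1-\varsigma(z-s)))^{-1}$, and $d=1+\e^{\pm 2\pi\i z}$, the product reads $\bigl(\begin{smallmatrix}b+ac & ad\\ c & d\end{smallmatrix}\bigr)$. Here $c$ and $d$ are analytic at $n$ (using $\varsigma(n-s)\in(0,1)$); the (1,2) entry $ad$ is the product of a simple pole and a simple zero and so is analytic; finally, for the (1,1) entry, a residue count shows that the simple poles of $b$ and $ac$ at $n$ have opposite residues $\mp(2\pi\i)^{-1}$ and $\pm(2\pi\i)^{-1}$, whence $b+ac$ is analytic. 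Left multiplication by $\boldsymbol A(z)$ again preserves analyticity.

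The argument is a purely local residue computation, and the only delicate point is keeping precise track of the signs of the simple poles of $(1+\e^{\pm 2\pi\i z})^{\pm 1}$ at half-integers so that the residues in each entry cancel. No global analytic structure of the Bessel or Hankel functions is invoked, as the removal of the poles of $\boldsymbol Y(z)\boldsymbol \Phi_R(z)$ at $\mathbb{Z}'$ is entirely encoded in Lemma~\ref{lemma:simplejump}.
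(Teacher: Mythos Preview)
Your proof is correct and follows exactly the approach the paper indicates: the paper's proof is the single sentence ``It follows from Lemma~\ref{lemma:simplejump},'' and you have simply carried out the local residue bookkeeping that this sentence leaves to the reader. Your factorization $\boldsymbol Y\boldsymbol\Phi_R=\boldsymbol A\bigl(\begin{smallmatrix}1&a\\0&1\end{smallmatrix}\bigr)$ via the relation $\boldsymbol\Phi_R=\boldsymbol\Phi\bigl(\begin{smallmatrix}1&-2\pi\i/(1+\e^{2\pi\i z})\\0&1\end{smallmatrix}\bigr)$ and the subsequent entrywise cancellation of simple poles is precisely the intended computation.
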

\begin{proof}
It follows from Lemma~\ref{lemma:simplejump}.
\end{proof}

\begin{remark}
\label{remark:entire}
    It is important to note that $\varsigma(z)$ has poles on the line $\Re z=0$, whereas $(1-\varsigma(z))^{-1}$ is entire in~$z$.
\end{remark}

Throughout this paper we will use the \emph{Pauli matrix} $\boldsymbol\sigma_3=\begin{pmatrix}
    1 & 0 \\ 0 & -1
\end{pmatrix}$.

\begin{proposition}
\label{prop:asymptotics}
Let $\boldsymbol \Phi=\boldsymbol \Phi(z)$ be any of the matrices $\boldsymbol \Phi_R,\boldsymbol \Phi_{L}^{\pm}$.
The asymptotic relation
\begin{equation}
\boldsymbol \Phi(z)\bigl(\sqrt{2\pi t}\,\e^{z(\log(zt^{-1})-1)}\bigr)^{\boldsymbol\sigma_3}=\boldsymbol {\mathrm{I}}+\begin{pmatrix}
\tfrac 1{24}-t^2 & 1 \\ t^2 & -\tfrac 1{24}+t^2
\end{pmatrix}z^{-1}+O\bigl(z^{-2}\bigr)
\end{equation}
holds in the following regimes:
\begin{itemize}[leftmargin=*]
\item if $\boldsymbol \Phi=\boldsymbol \Phi_R$, as $z\to\infty$ uniformly in $\mathcal S_{-\pi+\delta,\pi-\delta}\setminus\mathcal N_\delta(\mathbb{Z}'_{>0})$ for any $\delta>0$;
\item if $\boldsymbol \Phi=\boldsymbol \Phi_L^+$, as $z\to\infty$ uniformly in $\mathcal S_{\pi-\delta,\pi}$ for any $\delta>0$;
\item if $\boldsymbol \Phi=\boldsymbol \Phi_L^-$, as $z\to\infty$ uniformly in $\mathcal S_{-\pi,-\pi+\delta}$ for any $\delta>0$.
\end{itemize}
\end{proposition}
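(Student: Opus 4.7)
The plan is a direct computation based on Lemma~\ref{lemma:asymptoticsbesselandhankel}, handled separately for $\boldsymbol \Phi_R$ and for $\boldsymbol \Phi_L^\pm$. Let $D(z)=\sqrt{2\pi t}\,\e^{z(\log(zt^{-1})-1)}$, so that $D^{\boldsymbol\sigma_3}=\operatorname{diag}(D,D^{-1})$; I would then verify the expansion entry by entry, observing that in each case the exponential factors produced by the lemma cancel exactly against $D^{\pm 1}$, while the constants $\sqrt{2\pi t}$ combine correctly, leaving a polynomial series in $z^{-1}$.

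For $\boldsymbol \Phi_R$ the four entries are $\mathrm J_{z-\frac 12}(2t)$, $t\mathrm J_{z+\frac 12}(2t)$, $-\frac{\pi}{\cos\pi z}\mathrm J_{-z+\frac 12}(2t)$, and $\frac{t\pi}{\cos\pi z}\mathrm J_{-z-\frac 12}(2t)$, whose asymptotics in $\mathcal S_{-\pi+\delta,\pi-\delta}\setminus\mathcal N_\delta(\mathbb{Z}'_{>0})$ are given exactly by the first two sets of expansions in Lemma~\ref{lemma:asymptoticsbesselandhankel} (the exclusion of the half-integers accounts for the poles of $\sec\pi z$). Multiplying the first and second columns by $D$ and $D^{-1}$ respectively, the exponentials $\e^{\mp z(\log(zt^{-1})-1)}$ cancel against $D^{\pm 1}$, the constants $\sqrt{2\pi t}$ and $\sqrt{2\pi/t}$ combine to~$1$, and the resulting polynomial series reproduces the claimed matrix coefficient.

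For $\boldsymbol \Phi_L^\pm$ in the sectors $\mathcal S_{\pi-\delta,\pi}$ and $\mathcal S_{-\pi,-\pi+\delta}$, I would apply the Hankel expansions (third set in Lemma~\ref{lemma:asymptoticsbesselandhankel}) with $z$ replaced by $-z$: for $z\in\mathcal S_{\pi-\delta,\pi}$ one has $\arg(-z)\in(-\delta,0)$, and for $z\in\mathcal S_{-\pi,-\pi+\delta}$ one has $\arg(-z)\in(0,\delta)$, so in both cases $-z$ lies in $\mathcal S_{-\pi/2+\delta',\pi/2-\delta'}$ for any $\delta'<\pi/2-\delta$, inside the validity region. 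The key branch identity is $\log(-z)=\log z\mp\i\pi$ in the upper/lower half plane (principal branch), which converts the exponential factor to $\e^{-z(\log(-zt^{-1})-1)}=\e^{-z(\log(zt^{-1})-1)}\e^{\pm\i\pi z}$. The residual phase $\e^{\pm\i\pi z}$ is cancelled exactly by the factors $\e^{\mp\i\pi z}$ in the outer right matrix in the definition of $\boldsymbol \Phi_L^\pm$ (this is precisely their purpose there), and the entries involving $\mathrm H^{(1)}$ or $\mathrm H^{(2)}$ then yield the correct diagonal expansion after routine bookkeeping of the constants.

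The main obstacle concerns the entries of $\boldsymbol \Phi_L^\pm$ involving $\mathrm J_{-z\pm\frac 12}(2t)$. Naively using $\mathrm J_\nu=\frac{1}{2}(\mathrm H^{(1)}_\nu+\mathrm H^{(2)}_\nu)$ together with the Hankel expansions leads to a cancellation of the entire asymptotic series in $z^{-1}$, reflecting the super-exponential decay of $\mathrm J_\nu(2t)$ for $\nu\to+\infty$ along the real axis. To extract the correct leading behavior one should instead use the second set of expansions in Lemma~\ref{lemma:asymptoticsbesselandhankel}, which gives the asymptotics of $\frac{\pi}{\cos\pi z}\mathrm J_{-z\pm\frac 12}(2t)$, combined with the identity $2\e^{\pm\i\pi z}\cos\pi z=1+\e^{\pm 2\i\pi z}$; in the respective extreme sectors the correction $\e^{\pm 2\i\pi z}$ has modulus $\e^{-2\pi|\Im z|}$, which is $O(z^{-N})$ for every $N$ uniformly on closed subsectors bounded away from the real axis and can therefore be absorbed into the $O(z^{-2})$ error. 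Combining these off-diagonal contributions with the Hankel entries computed above completes the proof.
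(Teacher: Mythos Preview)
Your treatment of $\boldsymbol\Phi_R$ and of the Hankel (first-column) entries of $\boldsymbol\Phi_L^\pm$ is correct and is exactly the paper's intended derivation from Lemma~\ref{lemma:asymptoticsbesselandhankel}. The gap is in your handling of the second-column (Bessel~$\mathrm J$) entries of $\boldsymbol\Phi_L^\pm$. Your proposal to invoke the second set of expansions in the lemma together with $2\e^{\pm\i\pi z}\cos\pi z=1+\e^{\pm 2\i\pi z}$ does not work: the validity region of that second set is $\mathcal S_{-\pi+\delta,\pi-\delta}\setminus\mathcal N_\delta(\mathbb{Z}'_{>0})$, which is \emph{disjoint} from $\mathcal S_{\pi-\delta,\pi}$ (and from $\mathcal S_{-\pi,-\pi+\delta}$). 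Even if one could push the expansion there, your own observation that $\e^{\pm 2\i\pi z}$ is only small on subsectors bounded away from the real axis means the argument would fail to be uniform on the full open sector claimed in the proposition.

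The fix is to do for the second column exactly what you already did for the first: apply the \emph{first} set of expansions in the lemma with $z\mapsto -z$. For $z\in\mathcal S_{\pi-\delta,\pi}$ one has $-z\in\mathcal S_{-\delta,0}\subset\mathcal S_{-\pi+\delta',\pi-\delta'}$, so the first set applies and gives, e.g.,
\[
\mathrm J_{-z-\tfrac12}(2t)=\tfrac{1}{\sqrt{2\pi t}}\bigl(1+(t^2-\tfrac1{24})z^{-1}+O(z^{-2})\bigr)\e^{z(\log(-zt^{-1})-1)}.
\]
The same branch identity $\log(-z)=\log z\mp\i\pi$ produces a factor $\e^{\mp\i\pi z}$, which is cancelled \emph{exactly} by the $2\pi\e^{\pm\i\pi z}$ sitting in the second column of the right diagonal factor in the definition of $\boldsymbol\Phi_L^\pm$ (just as $\e^{\mp\i\pi z}$ in the first column cancelled the Hankel phase). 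No $\cos\pi z$ manipulation or exponentially small correction is needed, and the estimate is uniform on the entire sector $\mathcal S_{\pi-\delta,\pi}$ (resp.\ $\mathcal S_{-\pi,-\pi+\delta}$).
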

\begin{proof}
It follows from Lemma~\ref{lemma:asymptoticsbesselandhankel}.
\end{proof}

\section{Nonlinear steepest descent analysis (case \texorpdfstring{$x<x_*$}{x<x\_*})}\label{sec:DeiftZhou1cut}

Throughout this section we are going to assume that $x=s/t<x_*$.

\subsection{Continuous Riemann--Hilbert problem}

Assume $\wt u\in\mathbb{R}$ (which will be fixed later).
Consider the (multi-)contour $\Sigma_M$ in the complex $z$-plane depicted in Figure~\ref{fig:SigmaMqminus}:
\begin{equation}
\Sigma_M=\mathbb{R}\cup\Gamma_{L}^+\cup\Gamma_{L}^-\cup\Gamma_{R}^+\cup\Gamma_{R}^-\cup\gamma_u^+\cup\gamma_u^-,
\end{equation}
where $\gamma_u^\pm$ are smooth contours joining $\wt u$ to the lines $\Im z=\pm\epsilon$ and, denoting $\wt u^\pm$ the intersection points of $\gamma_u^\pm$ with $\Im z=\pm\epsilon$, $\Gamma_R^\pm=[\wt u^\pm,+\infty\pm\i\epsilon)$ and $\Gamma_L^\pm=\bigl([-R_0\pm\i\epsilon,\wt u^\pm]\bigr)\cup\bigl(\e^{\pm\i \delta_0}\mathbb{R}_+-R_0\pm\i\epsilon\bigr)$.
The parameters $\wt u\in\mathbb{R}$, $\epsilon>0$, $R_0>-\wt u$, as well as the specific curves $\gamma_u^\pm$, will be fixed later, while $\delta_0\in(\pi/2,\pi)$ can be fixed arbitrarily.
Let $\Sigma_M^\circ=\Sigma_M\setminus \lbrace {\wt{u}},{\wt{u}}^+,{\wt{u}}^-\rbrace$, oriented as in Figure~\ref{fig:SigmaMqminus}.
The orientation determines $\pm$ sides of $\Sigma_M^\circ$, $+$ to the left-hand side and $-$ to the right-hand side.
Moreover, $\Sigma_M$ divides $\mathbb{C}$ into six connected components which we call $\Omega_1$, $\dots$, $\Omega_6$.
This is illustrated in Figure~\ref{fig:SigmaMqminus}.

\begin{figure}[t]
\centering
\begin{tikzpicture}

\draw[very thin,->] (-6.5,0) -- (4.5,0);
\draw[very thin,->] (0,-2) -- (0,2);

\draw[very thick,->] (-6,0) -- (-3,0) node[below]{\tiny{$-$}} node[above]{\tiny{$+$}};
\draw[very thick,->] (-3,0) -- (1,0) node[below]{\tiny{$-$}} node[above]{\tiny{$+$}};
\draw[very thick] (1,0) -- (4,0);

\draw[very thick] (-4,1) -- (-5,3/2) node[below]{\tiny{$-$}} node[above]{\tiny{$+$}};
\draw[very thick,<-] (-5,3/2) -- (-6,2);

\draw[very thick] (-4,-1) -- (-5,-3/2) node[below]{\tiny{$-$}} node[above]{\tiny{$+$}};
\draw[very thick,<-] (-5,-3/2) -- (-6,-2);

\draw[very thick,->] (-4,1) -- (1,1) node[below]{\tiny{$-$}} node[above]{\tiny{$+$}};
\draw[very thick] (1,1) -- (4,1);

\draw[very thick,->] (-4,-1) -- (1,-1) node[below]{\tiny{$-$}} node[above]{\tiny{$+$}};
\draw[very thick] (1,-1) -- (4,-1);

\draw[very thick,->] 
  (-1,-1) .. controls (-1.3,-0.6) .. (-1.5,-0.5) 
    node[right]{\tiny{$-$}} node[left]{\tiny{$+$}};

\draw[very thick] 
  (-1.5,-0.5) .. controls (-1.8,-0.3) .. (-2,0);

\draw[very thick,->] 
  (-2,0) .. controls (-1.8,0.3) .. (-1.5,0.5) 
    node[right]{\tiny{$-$}} node[left]{\tiny{$+$}};

\draw[very thick] 
  (-1.5,0.5) .. controls (-1.3,0.6) .. (-1,1);

\node at (-2.1,-1/5) {\small{${\wt{u}}$}};
\node at (-1,6/5) {\small{${\wt{u}}^+$}};
\node at (-1,-6/5) {\small{${\wt{u}}^-$}};

\node at (-4,1/2) {\small{$\Omega_6$}};
\node at (-4,-1/2) {\small{$\Omega_1$}};
\node at (1.2,2) {\small{$\Omega_5$}};
\node at (1.2,-2) {\small{$\Omega_2$}};
\node at (2.5,1/2) {\small{$\Omega_4$}};
\node at (2.5,-1/2) {\small{$\Omega_3$}};

\end{tikzpicture}
\caption{$\Sigma_M$, its orientation and corresponding $\pm$ sides and domains $\Omega_i$ for $1\leq i\leq 6$ (case $x<x_*$).}
\label{fig:SigmaMqminus}
\end{figure}

We assume that 
\begin{equation}
x<{\wt{u}}.
\end{equation}
We introduce analytic matrix functions $\boldsymbol M_i:\Omega_i\to\mathrm{SL}(2,\mathbb{C})$ as follows:
\begin{equation}
\begin{aligned}
\boldsymbol M_1(z)&=\boldsymbol Y(tz)\boldsymbol\Phi_R(tz)\boldsymbol\Delta_-(tz)=\boldsymbol Y(tz)\boldsymbol\Phi_{L}^-(tz)\boldsymbol C_-(tz)^{-1}\boldsymbol\Delta_-(tz),\\
\boldsymbol M_2(z)&=\boldsymbol M_5(z)=\boldsymbol Y(tz)\boldsymbol\Phi_R(tz),\\
\boldsymbol M_3(z)&=\boldsymbol Y(tz)\boldsymbol\Phi_R(tz)\boldsymbol\nabla_-(tz),\\
\boldsymbol M_4(z)&=\boldsymbol Y(tz)\boldsymbol\Phi_R(tz)\boldsymbol\nabla_+(tz),\\
\boldsymbol M_6(z)&=\boldsymbol Y(tz)\boldsymbol\Phi_R(tz)\boldsymbol\Delta_+(tz)=\boldsymbol Y(tz)\boldsymbol\Phi_L^+(tz)\boldsymbol C_+(tz)^{-1}\boldsymbol\Delta_+(tz),
\end{aligned}
\end{equation}
with the notations introduced in Section~\ref{sec:notationsDeltaNabla} (and a slight abuse of notation in writing equalities like $\boldsymbol M_2=\boldsymbol M_5$, as these are functions with different domains but defined by the same formula).
Note that the assumption~$x<{\wt{u}}$ is needed to ensure that $\boldsymbol M_3,\boldsymbol M_4$ are analytic in $\Omega_3,\Omega_4$ (respectively), see~Remark~\ref{remark:entire}.

An important property of these matrix functions, which follows from Proposition~\ref{prop:analytic}, is that $\boldsymbol M_i$ is analytic in a proper open neighborhood of $\Omega_i$.
Moreover, we have
\begin{equation}\begin{aligned}
\e^{-z(\log(zt^{-1})-1)\boldsymbol \sigma_3}\boldsymbol \nabla_+(z)\e^{z(\log(zt^{-1})-1)\boldsymbol \sigma_3}&=\boldsymbol {\mathrm{I}}+O(z^{-\infty})&&\text{as }z\to\infty\text{ uniformly in }\Omega_4\setminus\mathcal N_\delta(\mathbb{Z}'_+),\\
\e^{-z(\log(zt^{-1})-1)\boldsymbol \sigma_3}\boldsymbol \nabla_-(z)\e^{z(\log(zt^{-1})-1)\boldsymbol \sigma_3}&=\boldsymbol {\mathrm{I}}+O(z^{-\infty})&&\text{as }z\to\infty\text{ uniformly in }\Omega_3\setminus\mathcal N_\delta(\mathbb{Z}'_+),\\
\e^{-z(\log(zt^{-1})-1)\boldsymbol\sigma_3}\boldsymbol C_+(z)^{-1}\boldsymbol \Delta_+(z)\e^{z(\log(zt^{-1})-1)\boldsymbol\sigma_3}&=\boldsymbol {\mathrm{I}}+O(z^{-\infty})&&\text{as }z\to\infty\text{ uniformly in }\Omega_6\setminus\mathcal N_\delta(\mathbb{Z}'_-),\\
\e^{-z(\log(zt^{-1})-1)\boldsymbol\sigma_3}\boldsymbol C_-(z)^{-1}\boldsymbol \Delta_-(z)\e^{z(\log(zt^{-1})-1)\boldsymbol\sigma_3}&=\boldsymbol {\mathrm{I}}+O(z^{-\infty})&&\text{as }z\to\infty\text{ uniformly in }\Omega_1\setminus\mathcal N_\delta(\mathbb{Z}'_-),
\end{aligned}\end{equation}
for any $\delta>0$.
We infer from Proposition~\ref{prop:asymptotics} that the matrix function $\boldsymbol M:\mathbb{C}\setminus \Sigma_M\to\mathrm{SL}(2,\mathbb{C})$ which equals $(\sqrt{2\pi t})^{\boldsymbol\sigma_3}\boldsymbol M_i$ on $\Omega_i$ is the unique solution to the following Riemann--Hilbert problem.

\begin{cRHp}
\label{cRHp:Mqminus}
Find an analytic function $\boldsymbol M:\mathbb{C}\setminus \Sigma_M\to\mathrm{SL}(2,\mathbb{C})$ such that the following conditions hold true.
\begin{enumerate}[leftmargin=*]
\item Non-tangential boundary values of $\boldsymbol M$ exist and are continuous on $\Sigma_M^\circ$ and satisfy
\begin{equation}
\boldsymbol M_+(z)=\boldsymbol M_-(z)\,\boldsymbol J_M(z),\qquad z\in \Sigma_M^\circ,
\end{equation}
where $\boldsymbol J_M(z)$ is given for $z\in \Sigma_M^\circ$ by 
\begin{equation}
\label{eq:JMqminus}
\boldsymbol J_M(z)=\begin{cases}
\boldsymbol \Delta_\pm(tz)^{\mp 1}&z\in \Gamma_L^\pm,\\
\boldsymbol \nabla_\pm(tz)^{\mp 1}&z\in \Gamma_R^\pm,\\
\boldsymbol \nabla_-(tz)^{-1}\boldsymbol \nabla_+(tz) = \begin{pmatrix}
1 & -2\pi\i\bigl(1-\varsigma(t(z-x))\bigr) \\ 0 & 1
\end{pmatrix}
&z\in ({\wt{u}},+\infty),\\
\boldsymbol \Delta_-(tz)^{-1}\boldsymbol \Delta_+(tz)
&z\in (-\infty,{\wt{u}}),
\\
\boldsymbol \nabla_\pm(tz)^{-1}\boldsymbol \Delta_\pm(tz) &  z\in\gamma_u^\pm.
\end{cases}
\end{equation}
\item We have $\boldsymbol M(z)\e^{tz(\log z-1)\boldsymbol\sigma_3}\to\boldsymbol {\mathrm{I}}$ as $z\to\infty$  uniformly in $\mathbb{C}\setminus \Sigma_M$.
\item We have $\boldsymbol M(z)=O(1)$ as $z\to z_0$ uniformly in $\mathbb{C}\setminus\Sigma_M$ for all $z_0\in\Sigma_M\setminus\Sigma_M^\circ$.
\end{enumerate}
\end{cRHp}

We note from Theorem~\ref{thm:CR} and Lemma~\ref{lemma:asymptoticsbesselandhankel} that
\begin{equation}
\label{eq:refinedexpansionMqminus}
\boldsymbol M(z)=\biggl(\boldsymbol {\mathrm{I}}+\frac 1t\begin{pmatrix}
\alpha+\tfrac 1{24}-t^2 & 2\pi t(\beta+1) \\ \frac{\gamma+t^2}{2\pi t} & -\alpha-\tfrac 1{24}+t^2
\end{pmatrix}z^{-1}+O\bigl(z^{-2}\bigr)\biggr)\,\e^{-tz(\log z-1)\,\boldsymbol\sigma_3}
\end{equation}
as $z\to\infty$ uniformly in $\mathbb{C}\setminus \Sigma_M$.
Here, $\alpha=\alpha(t,xt)$, $\beta=\beta(t,xt)$, and $\gamma=\gamma(t,xt)$ are given in~\eqref{eq:alphabetagamma}.

\subsection{Construction of the \texorpdfstring{$g$}{g}-function}

We will construct the $g$-function ${\wt{g}}(z)$ as a small deformation, when $t$ is large, of the function $g(z)$ employed in Section~\ref{sec:minimizationx<xq} in the context of minimization of the logarithmic energy $\mathcal{E}_{\eta,x}$ for $x\leq x_*$.
To highlight the parallel, we use the same letters for the analogous quantities, with a tilde to distinguish them (and implying the dependence on $t$).

Let us first introduce $\wt V_{\eta,x}(z)$ by
\begin{equation}
\label{eq:vqminus}
\wt V_{\eta,x}(z)=-\frac 1t\log\bigl(1-\varsigma(t(z-x))\bigr)=\frac 1t\log(1+\e^{\eta t (z-x)}).
\end{equation}
We note that
\begin{equation}
\wt V_{\eta,x}'(z)=\frac{\eta}{1+\e^{-\eta t(z-x)}}.
\end{equation}

For any ${\wt{u}}<\wt{v}$, let
\begin{equation}
{\wt{r}}(z) = \sqrt{(z-{\wt{u}})(z-\wt{v})},
\end{equation}
analytic for $z\in\mathbb{C}\setminus[{\wt{u}},\wt{v}]$ and $\sim z$ as $z\to\infty$.

We define
\begin{equation}
\label{eq:gleftprelim}
\begin{aligned}
{\wt{g}}'(z)&={\wt{r}}(z)\left(\int_{-\infty}^{\wt{u}}\frac{\d\nu}{{\wt{r}}(\nu)(\nu-z)}-\frac{1}{2\pi\i}\int_{\wt{u}}^{\wt{v}}\frac{\wt V_{\eta,x}'(\nu)\d\nu}{{\wt{r}}_+(\nu)(\nu-z)}\right)
\\
&=\pm \i\pi-{\wt{r}}(z)\left(\int_{\wt{v}}^{+\infty}\frac{\d\nu}{{\wt{r}}(\nu)(\nu-z)}+\frac{1}{2\pi\i}\int_{\wt{u}}^{\wt{v}}\frac{\wt V_{\eta,x}'(\nu)\d\nu}{{\wt{r}}_+(\nu)(\nu-z)}\right).
\end{aligned}
\end{equation}
(The sign in the second line is determined by~$\pm\Im z>0$.)
We assume ${\wt{u}},\wt{v}\in\mathbb{R}$ with $x<{\wt{u}}<\wt{v}$ and we will shortly determine the values of ${\wt{u}},\wt{v}$.
The function ${\wt{g}}' (z)$ is analytic for $z\in\mathbb{C}\setminus(-\infty,\wt{v}]$ and, by the Sokhotski--Plemelj formulas, the boundary values ${\wt{g}}'_\pm$ from above ($+$) and below ($-$) the real axis exist and are continuous on $(-\infty,{\wt{u}})\cup({\wt{u}},\wt{v})$ and satisfy
\begin{equation}
\begin{aligned}
{\wt{g}}_+'(\mu)-{\wt{g}}_-'(\mu)&=2\pi\i,&&\mu\in (-\infty,{\wt{u}}),\\
{\wt{g}}_+'(\mu)+{\wt{g}}_-'(\mu)&=-\wt V_{\eta,x}'(\mu),&&\mu\in ({\wt{u}} ,\wt{v}).
\end{aligned}
\end{equation}
It is easy to check that we have an alternative expression
\begin{equation}
\label{eq:gprimeqminusexplicit}
{\wt{g}}'(z) =-\frac{{\wt{r}}(z)}{2\pi\i}\int_{\wt{u}}^{\wt{v}} \frac{\wt V_{\eta,x}'(\nu)}{{\wt{r}}_+(\nu)}\frac{\d\nu}{\nu-z} + \log\frac{1+\sqrt{\frac{z-\wt{v}}{z-{\wt{u}}}}}{1-\sqrt{\frac{z-\wt{v}}{z-{\wt{u}}}}}
\end{equation}
where $\sqrt{\frac{z-\wt{v}}{z-{\wt{u}}}}$ (here and below) is analytic for $z\in\mathbb{C}\setminus [{\wt{u}},\wt{v}]$ and $\sim 1$ as $z\to\infty$ and we take the principal branch of the logarithm.
We observe that
\begin{equation}
\label{eq:periodqminus}
\int_{\wt{u}}^{\wt{v}}\bigl({\wt{g}}'_+(\nu)-{\wt{g}}_-'(\nu)\bigr)\d \nu=-2\pi\i {\wt{u}},
\end{equation}
which follows from Cauchy's theorem.

The endpoints ${\wt{u}},\wt{v}$ are fixed, as we are now going to show, by requiring the asymptotic expansion
\begin{equation}
\label{eq:gprimeasympqminus}
{\wt{g}}'(z)=\log z+O(z^{-2}),\quad z\to\infty
\end{equation}
to hold.
Indeed, for arbitrary ${\wt{u}}<\wt{v}$ we have
\begin{equation}
{\wt{g}}'(z)=\log z+{\wt{g}}_{-1}+{\wt{g}}_{0}z^{-1}+{\wt{g}}_1z^{-2}+O(z^{-3}),\quad z\to\infty
\end{equation}
and therefore we want to find ${\wt{u}}<\wt{v}$ such that ${\wt{g}}_{-1}={\wt{g}}_0=0$ (we include here the term of order $z^{-2}$ for later convenience).

\begin{proposition}\label{prop:gapproxendpointsqminus}
    As $t\to+\infty$, uniformly for $x\leq {\wt{u}}-\delta$ (for any $\delta>0$), we have
    \begin{equation}
    {\wt{g}}_{-1}=\log\frac{4\e^{-\eta/2}}{\wt{v}-{\wt{u}}}+O(t^{-\infty}),\
    {\wt{g}}_0=-\frac{{\wt{u}}+\wt{v}}2+O(t^{-\infty}),\
    {\wt{g}}_1=-\frac{3{\wt{u}}^2+2{\wt{u}}\wt{v}+3\wt{v}^2}{16}+O(t^{-\infty}).
    \end{equation}
\end{proposition}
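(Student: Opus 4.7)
The strategy is to compare $\wt{g}'(z)$ with the function obtained by replacing the smoothed derivative $\wt{V}_{\eta,x}'(\nu)$ with its limit $\eta$ on the interval $[\wt{u},\wt{v}]$. Concretely, define
\begin{equation*}
g'_{\mathrm{ap}}(z) \,=\, \wt{r}(z)\left(\int_{-\infty}^{\wt{u}}\frac{\d\nu}{\wt{r}(\nu)(\nu-z)}-\frac{\eta}{2\pi\i}\int_{\wt{u}}^{\wt{v}}\frac{\d\nu}{\wt{r}_+(\nu)(\nu-z)}\right),
\end{equation*}
which is precisely the analog of the function in \eqref{gfrakprimeSP} with general endpoints $\wt{u},\wt{v}$. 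The identity $\wt{r}(z)\int_{\wt{u}}^{\wt{v}}\frac{\d\nu}{\wt{r}_+(\nu)(\nu-z)}=\i\pi$ (Cauchy's theorem) holds for any $\wt{u}<\wt{v}$, and hence the same elementary manipulation leading to \eqref{eq:frakgprimeleft} yields
\begin{equation*}
g'_{\mathrm{ap}}(z) \,=\, -\frac{\eta}{2}+\log\frac{1+\sqrt{\frac{z-\wt{v}}{z-\wt{u}}}}{1-\sqrt{\frac{z-\wt{v}}{z-\wt{u}}}}.
\end{equation*}

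Expanding this explicit formula at $z=\infty$ yields, exactly as in the derivation of \eqref{eq:frakgleftexpansioninfinity} (which was carried out for the specific endpoints $u=-2\e^{-\eta/2}$, $v=2\e^{-\eta/2}$ but in fact holds for arbitrary $\wt{u}<\wt{v}$), the expansion
\begin{equation*}
g'_{\mathrm{ap}}(z) \,=\, \log z + \log\frac{4\e^{-\eta/2}}{\wt{v}-\wt{u}} -\frac{\wt{u}+\wt{v}}{2}z^{-1}-\frac{3\wt{u}^2+2\wt{u}\wt{v}+3\wt{v}^2}{16}z^{-2}+O(z^{-3}).
\end{equation*}

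It remains to control the error
\begin{equation*}
E(z) \,:=\, \wt{g}'(z)-g'_{\mathrm{ap}}(z) \,=\, \frac{\wt{r}(z)}{2\pi\i}\int_{\wt{u}}^{\wt{v}}\frac{\eta-\wt{V}_{\eta,x}'(\nu)}{\wt{r}_+(\nu)(\nu-z)}\,\d\nu.
\end{equation*}
A direct computation gives $\eta-\wt{V}_{\eta,x}'(\nu)=\eta/(1+\e^{\eta t(\nu-x)})$, which for $\nu\in[\wt{u},\wt{v}]$ and $x\leq\wt{u}-\delta$ is uniformly bounded by $\eta\,\e^{-\eta t\delta}=O(t^{-\infty})$. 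The remaining kernel $1/\bigl(\wt{r}_+(\nu)(\nu-z)\bigr)$ has only integrable square-root singularities at $\nu=\wt{u},\wt{v}$, and for $z$ in a neighborhood of infinity the function $\wt{r}(z)/(\nu-z)$ admits, uniformly in $\nu\in[\wt{u},\wt{v}]$, the asymptotic expansion $-\sum_{k\geq 0}\nu^k z^{-k}\cdot\wt{r}(z)/z$ whose coefficients are polynomial in $\nu$ (with coefficients depending only on $\wt{u},\wt{v}$). Inserting this expansion and using that $\int_{\wt{u}}^{\wt{v}}\nu^k/\wt{r}_+(\nu)\,\d\nu$ is a finite constant for each $k$, we conclude that every coefficient in the expansion of $E(z)$ at $z=\infty$ is $O(t^{-\infty})$, uniformly for $x\leq\wt{u}-\delta$. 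Combining with the explicit coefficients of $g'_{\mathrm{ap}}(z)$ proves the claim. The only mildly delicate point is the uniformity at the branch points $\nu=\wt{u},\wt{v}$, but this is handled by the integrability of $1/\wt{r}_+$; no genuine obstacle arises.
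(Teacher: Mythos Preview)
Your proof is correct and follows essentially the same approach as the paper: both decompose $\wt{g}'(z)$ into the explicit closed form $-\frac{\eta}{2}+\log\frac{1+\sqrt{(z-\wt{v})/(z-\wt{u})}}{1-\sqrt{(z-\wt{v})/(z-\wt{u})}}$ plus the error integral $\frac{\wt{r}(z)}{2\pi\i}\int_{\wt{u}}^{\wt{v}}\frac{\eta-\wt{V}_{\eta,x}'(\nu)}{\wt{r}_+(\nu)(\nu-z)}\d\nu$, and then use that $\eta-\wt{V}_{\eta,x}'(\nu)=O(t^{-\infty})$ uniformly on $[\wt{u},\wt{v}]$ when $x\leq\wt{u}-\delta$. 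You are somewhat more explicit about the integrability of $1/\wt{r}_+$ at the endpoints and the coefficient-wise bound on the large-$z$ expansion of the error, but the argument is the same.
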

\begin{proof}
We have
\begin{equation}
{\wt{g}}'(z)=-\frac{\eta}2 + \log\frac{1+\sqrt{\frac{z-\wt{v}}{z-{\wt{u}}}}}{1-\sqrt{\frac{z-\wt{v}}{z-{\wt{u}}}}}+\frac{{\wt{r}}(z)}{2\pi\i}\int_{\wt{u}}^{\wt{v}} \frac{\eta-\wt V_{\eta,x}'(\nu)}{{\wt{r}}_+(\nu)}\frac{\d\nu}{\nu-z}.
\end{equation}
The thesis follows from the fact that $\wt V_{\eta,x}'(\nu)=\eta+O(t^{-\infty})$ as $t\to+\infty$, uniformly for $\nu\geq x+\delta$ (for any $\delta>0$).
\end{proof}

It follows from this proposition that the conditions ${\wt{g}}_{-1}={\wt{g}}_0=0$ uniquely determine the endpoints ${\wt{u}},\wt{v}$, provided $t$ is sufficiently large, and, moreover,
\begin{equation}
\label{eq:uvqminus}
{\wt{u}}=-2\e^{-\eta/2}+O(t^{-\infty}),\qquad
\wt{v}=2\e^{-\eta/2}+O(t^{-\infty}),
\end{equation}
as $t\to+\infty$, uniformly for $x\leq x_*$.

From now on we assume that $t$ is sufficiently large and that ${\wt{u}},\wt{v}$ are fixed as just explained.
Under such assumption, in~\eqref{eq:gprimeasympqminus} we have, again thanks to Proposition~\ref{prop:gapproxendpointsqminus},
\begin{equation}
\label{eq:g1qminus}
{\wt{g}}_1=-\e^{-\eta}+O(t^{-\infty})
\end{equation}
as $t\to+\infty$ uniformly for $x\leq x_*$.

Next, we introduce
\begin{equation}
\label{eq:gqminus}
{\wt{g}}(z)=\int_{\wt{v}}^z {\wt{g}}'(y)\,\d y,
\end{equation}
which is analytic for $z\in\mathbb{C}\setminus(-\infty,\wt{v}]$.
The properties of ${\wt{g}}'(z)$ imply that the non-tangential boundary values ${\wt{g}}_\pm(\mu)$ exist for $\mu\in(-\infty,{\wt{u}})\cup({\wt{u}},\wt{v})$ and satisfy
\begin{equation}
\begin{aligned}
{\wt{g}}_+(\mu)-{\wt{g}}_-(\mu)&=2\pi\i \mu,&&\mu\in(-\infty,{\wt{u}}),
\\
{\wt{g}}_+(\mu)+{\wt{g}}_-(\mu)&=-\wt V_{\eta,x}(\mu)+\wt{\ell},&&\mu\in({\wt{u}},\wt{v}),
\end{aligned}
\end{equation}
with
\begin{equation}
\label{eq:ellqminus}
\wt{\ell}=\wt V_{\eta,x}(\wt{v})=\eta(2\e^{-\eta/2}-x)+O(t^{-\infty}).
\end{equation}
Here we also used~\eqref{eq:periodqminus} and the fact that ${\wt{g}}(z)\to 0$ as $z\to\wt{v}$ (by definition of ${\wt{g}}(z)$).

As $z\to\infty$,
\begin{equation}
{\wt{g}}(z)=z(\log z-1)+{\wt{g}}_\infty-{\wt{g}}_1 z^{-1}+O(z^{-2})
\end{equation}
with ${\wt{g}}_\infty$ independent of $z$.
With arguments similar to those in Proposition~\ref{prop:gapproxendpointsqminus} and comparing with~\eqref{eq:frakginftyleft}, we obtain that
\begin{equation}
\label{eq:ginftyqminus}
{\wt{g}}_\infty=\e^{-\eta/2} \eta+O(t^{-\infty}).
\end{equation}

\subsection{Normalization of the continuous Riemann--Hilbert problem}

Let $\Sigma_N=\Sigma_M$ and $\Sigma_N^\circ=\Sigma_M^\circ\setminus\lbrace v\rbrace$, with the same orientation.
Introduce the analytic matrix function $\boldsymbol N:\mathbb{C}\setminus \Sigma_N\to\mathrm{SL}(2,\mathbb{C})$ by
\begin{equation}
\label{eq:defNqminus}
\boldsymbol N(z)=\begin{pmatrix}
-\frac 1{2\pi\i} & 0 \\ 0 & 1
\end{pmatrix}\,\e^{t(\frac {\wt{\ell}} 2-{\wt{g}}_\infty)\boldsymbol\sigma_3}\,\boldsymbol M(z)\,e^{t ({\wt{g}}(z)-\frac {\wt{\ell}} 2)\boldsymbol\sigma_3}\,\begin{pmatrix}
-2\pi\i & 0 \\ 0 & 1
\end{pmatrix}.
\end{equation}
The construction of ${\wt{g}}(z)$ carried out in the previous paragraph ensures that $\boldsymbol N(z)$ is the unique solution to the following Riemann--Hilbert problem.

\begin{cRHp}
\label{cRHp:Nqminus}
Find an analytic function $\boldsymbol N:\mathbb{C}\setminus \Sigma_N\to\mathrm{SL}(2,\mathbb{C})$ such that the following conditions hold true.
\begin{enumerate}[leftmargin=*]
\item Non-tangential boundary values of $\boldsymbol N$ exist and are continuous on $\Sigma_N^\circ$ and satisfy
\begin{equation}
\boldsymbol N_+(z)=\boldsymbol N_-(z)\boldsymbol J_N(z),\qquad z\in \Sigma_N^\circ,
\end{equation}
where $\boldsymbol J_N(z)$ is given for $z\in \Sigma_N^\circ$ by 
\begin{equation}
\label{eq:JNqminus1}
\boldsymbol J_N(z)=
\begin{pmatrix}
-\frac 1{2\pi\i} & 0 \\ 0 & 1
\end{pmatrix}
\e^{-t({\wt{g}}_-(z)-\frac{\wt{\ell}} 2)\boldsymbol\sigma_3}\boldsymbol J_M(z)\e^{t({\wt{g}}_+(z)-\frac{\wt{\ell}} 2)\boldsymbol \sigma_3}
\begin{pmatrix}
-2\pi\i & 0 \\ 0 & 1
\end{pmatrix}
\end{equation}
if $z\in (-\infty,{\wt{u}})\cup({\wt{u}},\wt{v})$ and by
\begin{equation}
\boldsymbol J_N(z)=\begin{pmatrix}
-\frac 1{2\pi\i} & 0 \\ 0 & 1
\end{pmatrix}
\e^{-t({\wt{g}}(z)-\frac{\wt{\ell}} 2)\boldsymbol \sigma_3}\boldsymbol J_M(z)\e^{t({\wt{g}}(z)-\frac{\wt{\ell}} 2)\boldsymbol \sigma_3}
\begin{pmatrix}
-2\pi\i & 0 \\ 0 & 1
\end{pmatrix}
\end{equation}
otherwise.
\item We have $\boldsymbol N(z)\to\boldsymbol {\mathrm{I}}$ as $z\to\infty$  uniformly in $\mathbb{C}\setminus \Sigma_N$.
\item We have $\boldsymbol N(z)=O(1)$ as $z\to z_0$ uniformly in $\mathbb{C}\setminus\Sigma_N$ for all $z_0\in\Sigma_N\setminus\Sigma_N^\circ$.
\end{enumerate}
\end{cRHp}

We note from~\eqref{eq:refinedexpansionMqminus} and~\eqref{eq:defNqminus} that
\begin{equation}
\label{eq:refinedexpansionNqminus}
\boldsymbol N(z)=\boldsymbol {\mathrm{I}}+\begin{pmatrix}
\wh\alpha-t(1+{\wt{g}}_1)+\tfrac 1{24t} & \i\e^{t(\wt{\ell}-2{\wt{g}}_\infty)}\wh\beta \\ -\i\e^{t(2{\wt{g}}_\infty-\wt{\ell})}\wh\gamma & -\wh\alpha+t(1+{\wt{g}}_1)-\tfrac 1{24t}
\end{pmatrix}z^{-1}+O\bigl(z^{-2}\bigr)
\end{equation}
as $z\to\infty$ uniformly in $\mathbb{C}\setminus \Sigma_N$.
Here, $\wh\alpha=\wh\alpha(t,x)$, $\wh\beta=\wh\beta(t,x)$, and $\wh\gamma=\wh\gamma(t,x)$ are
\begin{equation}
\label{eq:wtalphabetagamma}
\begin{aligned}
\wh\alpha(t,x)&=\frac 1t\alpha(t,xt)=-\frac 12\,\partial_t\log Q(t,s)\bigr|_{s=xt}, \\
\wh\beta(t,x)&=\beta(t,xt)+1=\frac{Q(t,xt-1)}{Q(t,xt)},\\
\wh\gamma(t,x)&= \frac 1{t^2}\bigl(\gamma(t,xt)+t^2\bigr)=\frac{Q(t,xt+1)}{Q(t,xt)}.
\end{aligned}
\end{equation}

To write down the jump matrix $\boldsymbol J_N(z)$ in a more explicit way, it is convenient to introduce
\begin{equation}
\label{eq:varphiqminus}
\varphi(z) = 2{\wt{g}}(z)+\wt V_{\eta,x}(z)-\wt{\ell}
\end{equation}
as well as
\begin{equation}
\varphi_1(z) = \varphi(z)\mp 2\pi \i z,\qquad \pm\Im z>0.
\end{equation}

\begin{proposition}\label{prop:varphiqminus}
The following properties hold true, for $t$ sufficiently large.
\begin{enumerate}[leftmargin=*]
    \item The function $\varphi(z)$ is analytic for $z\in\mathbb{C}\setminus\bigl((-\infty,\wt{v}]\cup(\i \mathbb{R}+x)\bigr)$.
    It has non-tangential boundary values $\varphi_\pm(\mu)$ for all $\mu\in(-\infty,{\wt{u}})\cup({\wt{u}},\wt{v})$ such that
    \begin{align}
        \label{eq:jumpphiqminus1}
        \varphi_\pm(\mu)&=\pm({\wt{g}}_+(\mu)-{\wt{g}}_-(\mu)\bigr),&&\mu\in({\wt{u}},\wt{v}),
        \\
        \label{eq:jumpphiqminus2}
        \varphi_+(\mu)-\varphi_-(\mu)&= 4\pi\i\,\mu,&&\mu\in(-\infty,{\wt{u}}).
    \end{align}
    \item There exist a neighborhood of $z=\wt{v}$ and a function $\varphi_{\wt{v}}(z)$ analytic in that neighborhood such that $\varphi(z)=(z-\wt{v})^{3/2}\varphi_{\wt{v}}(z)$ (with principal branch) and that 
    \begin{equation}
    \label{eq:varphivvaluev}
    \begin{aligned}
    \varphi_{\wt{v}}(\wt{v})&=\frac{8}{3\,\sqrt{\wt{v}-{\wt{u}}}}=\frac 43 \e^{\frac 14\eta}+O(t^{-\infty}),
    \\
    \varphi_{\wt{v}}'(\wt{v})&=-\frac{4}{15\,\sqrt{\wt{v}-{\wt{u}}}^3}=-\frac 1{30} \e^{\frac 34\eta}+O(t^{-\infty}).
    \end{aligned}
    \end{equation}
    The neighborhood can be chosen independent of $t$ and $x$, provided $x\leq x_*-\delta$ (for some $\delta>0$).
    \item The function $\varphi_1(z)$ is analytic for $z\in\mathbb{C}\setminus\bigl([{\wt{u}},+\infty)\cup(\i\mathbb{R}+x)\bigr)$.
    There exist a neighborhood of $z={\wt{u}}$ and a function $\varphi_{\wt{u}}(z)$ analytic in that neighborhood such that $\varphi_1(z) = -({\wt{u}}-z)^{3/2}\varphi_{\wt{u}}(z)$ (with principal branch) and that 
    \begin{equation}
    \label{eq:varphiuvalueu}
    \begin{aligned}
    \varphi_{\wt{u}}({\wt{u}})&=\frac{8}{3\,\sqrt{\wt{v}-{\wt{u}}}}=\frac 43 \e^{\frac 14\eta}+O(t^{-\infty}).
    \\
    \varphi_{\wt{u}}'({\wt{u}})&=\frac{4}{15\,\sqrt{\wt{v}-{\wt{u}}}^3}=\frac 1{30} \e^{\frac 34\eta}+O(t^{-\infty}).
    \end{aligned}
    \end{equation}
    The neighborhood can be chosen independent of $t$ and $x$, provided $x\leq x_*-\delta$ (for some $\delta>0$).
\end{enumerate}
\end{proposition}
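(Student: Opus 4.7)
The plan is to verify parts (1)--(3) in sequence, relying on the explicit formulas \eqref{eq:gleftprelim}--\eqref{eq:gprimeqminusexplicit} together with the uniform asymptotic $\wt V_{\eta,x}'(\nu)=\eta+O(t^{-\infty})$ (valid on any compact set disjoint from $x+\i\mathbb{R}$, in particular on $[{\wt{u}},{\wt{v}}]$, since ${\wt{u}}\geq x+\delta$ for $t$ large by Proposition~\ref{prop:gapproxendpointsqminus} and \eqref{eq:uvqminus}).

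\medskip

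Part (1) is mostly bookkeeping. The function $\wt V_{\eta,x}(z)=\frac{1}{t}\log\bigl(1+\e^{\eta t(z-x)}\bigr)$ has logarithmic branch points precisely at the zeros $z=x+(2k+1)\pi\i/(\eta t)$, $k\in\mathbb{Z}$, of $1+\e^{\eta t(z-x)}$, all lying on $x+\i\mathbb{R}$; placing branch cuts along that line makes $\wt V_{\eta,x}$ analytic elsewhere, which combined with the analyticity of ${\wt{g}}$ off $(-\infty,{\wt{v}}]$ yields the stated domain of $\varphi$. On $({\wt{u}},{\wt{v}})$, subtracting the relation ${\wt{g}}_+(\mu)+{\wt{g}}_-(\mu)=-\wt V_{\eta,x}(\mu)+\wt\ell$ from $\varphi_\pm(\mu)=2{\wt{g}}_\pm(\mu)+\wt V_{\eta,x}(\mu)-\wt\ell$ gives \eqref{eq:jumpphiqminus1}. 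On $(-\infty,{\wt{u}})$, $\wt V_{\eta,x}$ is continuous and ${\wt{g}}_+-{\wt{g}}_-=2\pi\i\mu$, giving \eqref{eq:jumpphiqminus2}. The latter also implies that $\varphi_1=\varphi\mp 2\pi\i z$ (for $\pm\Im z>0$) has vanishing jump across $(-\infty,{\wt{u}})$ and so extends analytically there, while retaining the cuts on $[{\wt{u}},+\infty)$ and $x+\i\mathbb{R}$; this establishes the analyticity claim of part~(3).

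\medskip

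For part (2), rewrite \eqref{eq:gprimeqminusexplicit} as
\begin{equation*}
{\wt{g}}'(z)=-\frac{\eta}{2}+\log\frac{1+\sqrt{(z-{\wt{v}})/(z-{\wt{u}})}}{1-\sqrt{(z-{\wt{v}})/(z-{\wt{u}})}}+\frac{{\wt{r}}(z)}{2\pi\i}\int_{{\wt{u}}}^{{\wt{v}}}\frac{\eta-\wt V_{\eta,x}'(\nu)}{{\wt{r}}_+(\nu)(\nu-z)}\d\nu,
\end{equation*}
in which the last term is $O(t^{-\infty})$ uniformly in a neighborhood of ${\wt{v}}$. Setting $s=z-{\wt{v}}$ and $c={\wt{v}}-{\wt{u}}$, the expansion $y=\sqrt{s/(s+c)}=s^{1/2}c^{-1/2}-\tfrac{1}{2}s^{3/2}c^{-3/2}+O(s^{5/2})$ together with $\log\tfrac{1+y}{1-y}=2y+\tfrac{2}{3}y^3+O(y^5)$ yields ${\wt{g}}'(z)=-\tfrac{\eta}{2}+2s^{1/2}/\sqrt{c}-\tfrac{1}{3}s^{3/2}/c^{3/2}+O(s^{5/2})+O(t^{-\infty})$, and adding $\wt V_{\eta,x}'(z)=\eta+O(t^{-\infty})$ produces $\varphi'(z)=\tfrac{4}{\sqrt{c}}s^{1/2}-\tfrac{2}{3c^{3/2}}s^{3/2}+O(s^{5/2})+O(t^{-\infty})$. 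Since $\varphi({\wt{v}})=0$ (as ${\wt{g}}({\wt{v}})=0$ by \eqref{eq:gqminus} and $\wt V_{\eta,x}({\wt{v}})=\wt\ell$ by \eqref{eq:ellqminus}), termwise integration gives $\varphi(z)=(z-{\wt{v}})^{3/2}\varphi_{\wt{v}}(z)$ with $\varphi_{\wt{v}}({\wt{v}})=\tfrac{8}{3\sqrt{c}}$ and $\varphi_{\wt{v}}'({\wt{v}})=-\tfrac{4}{15c^{3/2}}$; substituting $c=4\e^{-\eta/2}+O(t^{-\infty})$ from \eqref{eq:uvqminus} recovers \eqref{eq:varphivvaluev}.

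\medskip

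Part (3) proceeds analogously but with delicate branch accounting. For $\Im z>0$ near ${\wt{u}}$, the same $t\to\infty$ reduction (now using the second line of \eqref{eq:gleftprelim} with $+$ sign) gives ${\wt{g}}'(z)=-\eta/2+\log\tfrac{1+Y}{1-Y}+O(t^{-\infty})$ with $Y=\sqrt{(z-{\wt{v}})/(z-{\wt{u}})}$ large; the branch is fixed by the matching with ${\wt{g}}_+(z)\to-\eta/2+\i\pi$ as $z\downarrow{\wt{u}}$, which gives $\log\tfrac{1+Y}{1-Y}=\i\pi+2Y^{-1}+\tfrac{2}{3}Y^{-3}+O(Y^{-5})$. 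In the upper half-plane near ${\wt{u}}$ we have $\sqrt{z-{\wt{v}}}=\i\sqrt{c}\bigl(1-(z-{\wt{u}})/(2c)+\cdots\bigr)$, hence $Y^{-1}=-\i\sqrt{z-{\wt{u}}}/\sqrt{c}\cdot\bigl(1+(z-{\wt{u}})/(2c)+\cdots\bigr)$; expansion yields
\begin{equation*}
\varphi'(z)-2\pi\i=-\tfrac{4\i}{\sqrt{c}}\sqrt{z-{\wt{u}}}-\tfrac{2\i}{3c^{3/2}}(z-{\wt{u}})^{3/2}+O\bigl((z-{\wt{u}})^{5/2}\bigr)+O(t^{-\infty}).
\end{equation*}
Using the principal-branch identifications $\sqrt{z-{\wt{u}}}=\i\sqrt{{\wt{u}}-z}$ and $(z-{\wt{u}})^{3/2}=-\i({\wt{u}}-z)^{3/2}$ (valid in $\Im z>0$) converts this into $\varphi_1'(z)=\tfrac{4}{\sqrt{c}}\sqrt{{\wt{u}}-z}-\tfrac{2}{3c^{3/2}}({\wt{u}}-z)^{3/2}+O(({\wt{u}}-z)^{5/2})+O(t^{-\infty})$. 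Combining the two jump relations of ${\wt{g}}$ at ${\wt{u}}$ gives ${\wt{g}}_+({\wt{u}})=\tfrac{1}{2}(-\wt V_{\eta,x}({\wt{u}})+\wt\ell+2\pi\i{\wt{u}})$, whence $\varphi_+({\wt{u}})=2\pi\i{\wt{u}}$ and $\varphi_1({\wt{u}})=0$. Termwise integration now yields $\varphi_1(z)=-({\wt{u}}-z)^{3/2}\varphi_{\wt{u}}(z)$ with $\varphi_{\wt{u}}({\wt{u}})=\tfrac{8}{3\sqrt{c}}$ and $\varphi_{\wt{u}}'({\wt{u}})=\tfrac{4}{15c^{3/2}}$, which substituting $c=4\e^{-\eta/2}+O(t^{-\infty})$ gives \eqref{eq:varphiuvalueu}.

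\medskip

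The principal obstacle is the consistent tracking of branches in the local analysis at ${\wt{u}}$: one must correctly identify the distinguished factor $\i$ arising from $\sqrt{z-{\wt{v}}}\sim\i\sqrt{c}$ in the upper half-plane, and propagate it through both the expansion of $\log\tfrac{1+Y}{1-Y}$ at large $Y$ and the conversion to the principal branch of $({\wt{u}}-z)^{1/2}$, so that the positive coefficient $\tfrac{8}{3\sqrt{c}}$ emerges for $\varphi_{\wt{u}}({\wt{u}})$ (rather than its negative) and both derivatives in \eqref{eq:varphiuvalueu} have the correct sign.
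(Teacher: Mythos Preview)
Your proof is correct and follows the same route as the paper: expand $\varphi'(z)=2\wt g'(z)+\wt V_{\eta,x}'(z)$ near the endpoints using the explicit expression~\eqref{eq:gprimeqminusexplicit}, observe that the non-$\log$ integral term is $O(t^{-\infty})$, read off the coefficients from the power series of $\log\frac{1+Y}{1-Y}$, and integrate. Your part~(3) supplies the branch bookkeeping that the paper leaves to the reader.

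One small point of presentation: you state ``termwise integration gives $\varphi(z)=(z-\wt v)^{3/2}\varphi_{\wt v}(z)$'' after exhibiting only the first two coefficients of the local expansion, which strictly speaking does not yet show that $\varphi_{\wt v}$ is analytic in a full neighborhood. The paper handles this by rewriting $\varphi'$ exactly as
\[
\varphi'(z)=-\frac{\wt r(z)}{\i\pi}\int_{\wt u}^{\wt v}\frac{\wt V_{\eta,x}'(\mu)-\wt V_{\eta,x}'(z)}{\wt r_+(\mu)(\mu-z)}\,\d\mu+2\log\frac{1+Y}{1-Y},
\]
in which the subtraction makes the integral itself analytic across $[\wt u,\wt v]$, so the factor $(z-\wt v)^{1/2}$ coming from $\wt r(z)$ is manifest. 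Your own displayed formula for $\wt g'(z)$ contains the same information (splitting the residual integral as $f(z)+(z-\wt v)^{1/2}\cdot(\text{analytic})$ via the identity $\wt r(z)\int_{\wt u}^{\wt v}\frac{\d\nu}{\wt r_+(\nu)(\nu-z)}=\i\pi$ recovers exactly the paper's expression), so the gap is cosmetic rather than substantive; a sentence making this structure explicit would complete the argument.
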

\begin{proof}
These properties are simple consequences of the definition so we only comment on the proof of the statement about the local structure of $\varphi$ near $\wt{v}$.

First, by~\eqref{eq:gprimeqminusexplicit}
\begin{equation}
\label{eq:varphiprimeqminusexplicit}
\varphi'(z)=2{\wt{g}}'(z)+\wt V_{\eta,x}'(z) = -\frac{{\wt{r}}(z)}{\i\pi}\int_{\wt{u}}^{\wt{v}}\frac{\wt V_{\eta,x}'(\mu)-\wt V_{\eta,x}'(z)}{{\wt{r}}_+(\mu)(\mu-z)}\d\mu+2\log\frac{1+\sqrt{\frac{z-\wt{v}}{z-{\wt{u}}}}}{1-\sqrt{\frac{z-\wt{v}}{z-{\wt{u}}}}}
\end{equation}
and so $\varphi'(z)$ equals $(z-\wt{v})^{1/2}$ times a function of $z$ analytic in a neighborhood of $z=\wt{v}$.
By integrating in $z$, noting that $\varphi(z)\to 0$ as $z\to \wt{v}$, we get $\varphi(z) = (z-\wt{v})^{3/2}\varphi_{\wt{v}}(z)$ for a function $\varphi_{\wt{v}}(z)$ analytic for $z$ in a neighborhood of $\wt{v}$.
Moreover, since $\wt V_{\eta,x}'(z)=\eta+O(t^{-\infty})$ uniformly for $\Re z>x+\delta$ for any $\delta>0$, we can rewrite~\eqref{eq:varphiprimeqminusexplicit} as
\begin{equation}
\label{eq:varphiprimeqminusapprox}
\varphi'(z)=2\log\frac{1+\sqrt{\frac{z-\wt{v}}{z-{\wt{u}}}}}{1-\sqrt{\frac{z-\wt{v}}{z-{\wt{u}}}}}+O(t^{-\infty}\sqrt{z-\wt{v}})
\end{equation}
uniformly for $z$ is in a fixed neighborhood of $\wt{v}$.
The conclusion then follows easily, also using~\eqref{eq:uvqminus}.

For the statement about the local structure of $\varphi_1$ near ${\wt{u}}$ we use a completely similar argument, just using the second line of~\eqref{eq:gleftprelim} in place of the first, and so we omit the details.
\end{proof}

We can use these functions to write the jump matrix $\boldsymbol J_N(z)$ as
\begin{equation}
\label{eq:JNqminusallcases}
\boldsymbol J_N(z)=\begin{cases}
\begin{pmatrix}
(1+\e^{\pm 2\pi\i t z})^{\pm 1} & 0 \\ \e^{t\varphi(z)}& (1+\e^{\pm 2\pi\i t z})^{\mp 1}
\end{pmatrix}
&z\in\Gamma_L^\pm,\\
\begin{pmatrix}
1 & - \frac{\e^{-t\varphi(z)}}{1+\e^{\mp 2\pi\i tz}} \\ 0 & 1
\end{pmatrix}
&z\in\Gamma_R^\pm,\\
\begin{pmatrix}
\e^{t\varphi_+(z)} & 1 \\ 0 & \e^{t\varphi_-(z)}
\end{pmatrix}
&z\in ({\wt{u}},\wt{v}),\\
\begin{pmatrix}
1 & \e^{-t\varphi(z)} \\ 0 & 1
\end{pmatrix}
&z\in (\wt{v},+\infty),\\
\begin{pmatrix}
	1 & 0 \\
	-\e^{t\varphi_1(z)} & 1
\end{pmatrix}
&z\in (-\infty,{\wt{u}}),
\\
\begin{pmatrix}
	1 & \mp \e^{-t\varphi_1(z) } \\
	\mp \e^{t\varphi(z)} & 1 + \e^{\pm 2 \pi \i t z}
\end{pmatrix} & z\in\gamma_u^\pm.
\end{cases}
\end{equation}

\subsection{Lens opening}
We now fix the contours $\gamma_u^\pm$ as the loci where $\Im \varphi_1(z)=0$, $0\leq \pm\Im z\leq \epsilon$, see~Proposition~\ref{prop:varphiqminus}.
In particular, $\varphi_1(z)>0$ on these contours.
We introduce contours $\gamma_v^\pm$ (starting at $\wt{v}$ and ending on the lines $\Im z=\pm\epsilon$) as the loci where $\Im \varphi(z)=0$, $0\leq \pm\Im z\leq \epsilon$.
In particular, $\varphi(z)<0$ on these contours. 
We define
\begin{equation}
\Sigma_T = \Sigma_N\cup\gamma_v^+\cup\gamma_v^-.
\end{equation}
Denoting $\wt{v}^\pm$ the intersection points of $\gamma_v^\pm$ with $\Im z=\pm\epsilon$, we set $\Sigma_T^\circ=\Sigma_T\setminus\lbrace {\wt{u}},{\wt{u}}^+,{\wt{u}}^-,\wt{v},\wt{v}^+,\wt{v}^-\rbrace$.
We orient $\Sigma_T^\circ$ as $\Sigma_N^\circ$, with the additional curves also oriented upwards.
This is illustrated in Figure~\ref{fig:SigmaTqminus}; in particular the curves $\gamma_u^\pm,\gamma_v^\pm,\lbrace\Im z=\pm\epsilon\rbrace,\lbrace\Im z=0\rbrace$ delimit bounded regions $\mathscr{L}_\pm$ (the ``lenses'' in Riemann--Hilbert jargon) and we define
\begin{equation}
\boldsymbol T(z)=
\begin{cases}
\boldsymbol N(z)\begin{pmatrix}
1 & 0 \\ \mp\e^{t\varphi(z)} & 1
\end{pmatrix} & \text{if }z\in\mathscr{L}_\pm,\\
\boldsymbol N(z) & \text{otherwise}.
\end{cases}
\end{equation}

\begin{figure}[t]
\centering
\begin{tikzpicture}

\draw[very thin,->] (-6.5,0) -- (6.5,0);
\draw[very thin,->] (0,-2) -- (0,2);

\draw[very thick,->] (-6,0) -- (-3,0);
\draw[very thick,->] (-3,0) -- (1,0);
\draw[very thick,->] (1,0) -- (5,0) ;
\draw[very thick] (5,0) -- (6,0);

\draw[very thick] (-4,1) -- (-5,3/2);
\draw[very thick,<-] (-5,3/2) -- (-6,2);

\draw[very thick] (-4,-1) -- (-5,-3/2);
\draw[very thick,<-] (-5,-3/2) -- (-6,-2);

\draw[very thick,->] (-4,1) -- (1,1);
\draw[very thick,->] (1,1) -- (5,1);
\draw[very thick] (5,1) -- (6,1);

\draw[very thick,->] (-4,-1) -- (1,-1);
\draw[very thick,->] (1,-1) -- (5,-1);
\draw[very thick] (5,-1) -- (6,-1);

\draw[very thick,->] 
  (-1,-1) .. controls (-1.3,-0.6) .. (-3/2,-1/2);
\draw[very thick] 
  (-3/2,-1/2) .. controls (-1.8,-0.3) .. (-2,0);
\draw[very thick,->] 
  (-2,0) .. controls (-1.8,0.3) .. (-3/2,1/2);
\draw[very thick] 
  (-3/2,1/2) .. controls (-1.3,0.6) .. (-1,1);

\node at (-2.2,-1/5) {\small{${\wt{u}}$}};
\node at (-1.3,6/5) {\small{${\wt{u}}^+$}};
\node at (-1.3,-6/5) {\small{${\wt{u}}^-$}};

\draw[very thick,->] 
  (3,-1) .. controls (3.2,-0.6) .. (7/2,-1/2);
\draw[very thick] 
  (7/2,-1/2) .. controls (3.8,-0.3) .. (4,0);
\draw[very thick,->] 
  (4,0) .. controls (3.8,0.3) .. (7/2,1/2);
\draw[very thick] 
  (7/2,1/2) .. controls (3.2,0.6) .. (3,1);

\node at (4.1,-1/5) {\small{$\wt{v}$}};
\node at (3.3,6/5) {\small{$\wt{v}^+$}};
\node at (3.3,-6/5) {\small{$\wt{v}^-$}};

\node at (1,1/2) {$\mathscr{L}_+$};
\node at (1,-1/2) {$\mathscr{L}_-$};

\end{tikzpicture}
\caption{$\Sigma_{T}$, its orientation, and the ``lenses'' $\mathscr{L}_\pm$ (case $x<x_*$).}
\label{fig:SigmaTqminus}
\end{figure}

It is important to note that we can perform this transformation because $\varphi(z)$ is analytic for $z\in\mathscr{L}_\pm$, see~Proposition~\ref{prop:varphiqminus}.

It is clear that $\boldsymbol T$ solves the following Riemann--Hilbert problem.

\begin{cRHp}
Find an analytic function $\boldsymbol T:\mathbb{C}\setminus \Sigma_T\to\mathrm{SL}(2,\mathbb{C})$ such that the following conditions hold true.
\begin{enumerate}[leftmargin=*]
\item Non-tangential boundary values of $\boldsymbol T$ exist and are continuous on $\Sigma_T^\circ$ and satisfy
\begin{equation}
\boldsymbol T_+(z)=\boldsymbol T_-(z)\boldsymbol J_T(z),\qquad z\in \Sigma_T^\circ,
\end{equation}
where $\boldsymbol J_T(z)$ is given explicitly below.
\item We have $\boldsymbol T(z)\to\boldsymbol {\mathrm{I}}$ as $z\to\infty$  uniformly in $\mathbb{C}\setminus \Sigma_T$.
\item We have $\boldsymbol T(z)=O(1)$ as $z\to z_0$ uniformly in $\mathbb{C}\setminus\Sigma_T$ for all $z_0\in\Sigma_T\setminus\Sigma_T^\circ$.
\end{enumerate}
\end{cRHp}

It follows from the factorization
\begin{equation}
\begin{pmatrix}
\e^{t\varphi_+(z)} & 1 \\ 0 & \e^{t\varphi_-(z)}
\end{pmatrix} =
\begin{pmatrix}
1 & 0 \\ \e^{t\varphi_-(z)} & 1
\end{pmatrix}
\begin{pmatrix}
0 & 1 \\ -1 & 0
\end{pmatrix}
\begin{pmatrix}
1 & 0 \\ \e^{t\varphi_+(z)} & 1
\end{pmatrix}
\end{equation}
that the jump matrix $\boldsymbol J_T(z)$ for $z\in\Sigma_T^\circ$ is given explicitly by
\begin{equation}
\label{eq:JTqminus}
\boldsymbol J_T(z)=\begin{cases}
\begin{pmatrix}
0 & 1\\ -1 & 0
\end{pmatrix}
&\text{if }z\in ({\wt{u}},\wt{v})
\\
\begin{pmatrix}
1 & - \frac{\e^{-t\varphi(z)}}{1+\e^{\mp 2\pi\i tz}} \\ \e^{t\varphi(z)} & \frac 1{1+\e^{\pm2\pi\i t z}}
\end{pmatrix}
&\text{if }z\in ({\wt{u}}^\pm,\wt{v}^\pm)
\\
\begin{pmatrix}
1 & \mp \e^{-t\varphi_1(z)} \\ 0 & 1
\end{pmatrix}
&\text{if } z\in \gamma_u^\pm,
\\
\begin{pmatrix}
1 & 0 \\ \mp \e^{t\varphi(z)} & 1
\end{pmatrix}
&\text{if } z\in \gamma_v^\pm,
\\
\boldsymbol J_N(z)
&\text{otherwise}.
\end{cases}
\end{equation}

The key point of this construction for $x<x_*$ is that the jump matrices $\boldsymbol J_T(z)$ are exponentially close to the identity when $t$ is large, except on $({\wt{u}},\wt{v})$ and in (arbitrarily small) neighborhoods of ${\wt{u}}$ and $\wt{v}$.
Namely, let
\begin{equation}
\label{eq:SigmaTepsilonqminus}
\wt{\Sigma_T}^\epsilon = \Sigma_T^\circ\setminus\bigl(({\wt{u}}-\epsilon,\wt{v}+\epsilon)\cup\gamma_u^+\cup\gamma_u^-\cup\gamma_v^+\cup\gamma_v^-\bigr).
\end{equation}

\begin{proposition}
    \label{prop:JTsmallqminus}
    For any $\epsilon>0$ small enough and for any $\delta>0$, there exists $c>0$ such that
    $\boldsymbol J_T(z)=\boldsymbol {\mathrm{I}}+O\left(\frac{1}{|z|^2+1}\e^{-ct}\right)$ as $t\to+\infty$ uniformly for $z\in \wt{\Sigma_T}^\epsilon$ and for $x\leq x_*-\delta$.
\end{proposition}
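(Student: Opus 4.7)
The plan is to perform a case analysis on the connected components of $\wt{\Sigma_T}^\epsilon$, verifying the estimate by direct inspection of each of the jump matrices in~\eqref{eq:JTqminus}. The key preliminary step is to translate the variational inequalities of \Cref{sec:minimizationx<xq}---in particular the uniform gap recorded in Remark~\ref{remark:strictineqqminus}---into strict sign conditions for $\Re\varphi$ and $\Re\varphi_1$. On $(\wt v+\epsilon,+\infty)$ we have $\varphi(\mu)={\wt g}_+(\mu)+{\wt g}_-(\mu)+\wt V_{\eta,x}(\mu)-\wt\ell$ (the boundary values of ${\wt g}$ coincide here), so the gap gives $\Re\varphi(\mu)\ge k>0$, making the off-diagonal entry $\e^{-t\varphi(z)}$ of the jump exponentially small. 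Analogously, exploiting the identity $\wt g_+-\wt g_-=2\pi\i\mu$ on $(-\infty,\wt u)$, one identifies $\varphi_1(\mu)={\wt g}_+(\mu)+{\wt g}_-(\mu)+\wt V_{\eta,x}(\mu)-\wt\ell$, so that $\Re\varphi_1(\mu)\le-k<0$ for $\mu\le\wt u-\epsilon$, making $\e^{t\varphi_1(z)}$ exponentially small.

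For the off-axis components at height $\pm\i\epsilon$ the same bounds persist by analyticity of $\varphi$ and $\varphi_1$ in strips around the real axis (where they have no singular sets), combined with a first-order Taylor expansion in $\Im z$; the parameter $\epsilon$ is chosen after fixing $k$, so that the perturbation from $y=0$ does not destroy the gap. Concretely, on $\Gamma_L^\pm$ one obtains $\Re\varphi(z)\le-k/2$, on the tails $(\wt v^\pm,+\infty\pm\i\epsilon)$ one has $\Re\varphi(z)\ge k/2$, and on the middle segments $(\wt u^\pm,\wt v^\pm)$ one needs $\Re\varphi<-c$ and $\Re\varphi_1>c$ \emph{simultaneously}, to control both $\e^{t\varphi(z)}$ and $\e^{-t\varphi(z)}/(1+\e^{\mp 2\pi\i tz})$. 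The latter follows because $\varphi_\pm$ is purely imaginary on $(\wt u,\wt v)$ (the jump ${\wt g}_+'-{\wt g}_-'=2\pi\i\rho$ integrates to a purely imaginary quantity), so that
\begin{equation}
\Re\varphi(\mu\pm\i\epsilon)=\mp 2\pi\rho(\mu)\epsilon+O(\epsilon^2),\qquad \Re\varphi_1(\mu\pm\i\epsilon)=\pm 2\pi\bigl(1-\rho(\mu)\bigr)\epsilon+O(\epsilon^2),
\end{equation}
with $\rho(\mu)$ and $1-\rho(\mu)$ bounded away from $0$ in any compact sub-interval of $(\wt u,\wt v)$ by uniform approximation of the rescaled Vershik--Kerov--Logan--Shepp density. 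Near the endpoints this bulk estimate is complemented by the local cube-root structure from \Cref{prop:varphiqminus}, which provides explicit sign control in a fixed neighborhood of $\wt u,\wt v$ at height $\pm\i\epsilon$. The auxiliary diagonal factors $(1+\e^{\pm 2\pi\i tz})^{\pm 1}$ on all off-axis contours contribute $1+O(\e^{-2\pi t\epsilon})$ since $|\Im z|\ge\epsilon$ there.

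The polynomial decay $O(|z|^{-2})$ as $|z|\to\infty$ comes for free: using $\wt g(z)=z(\log z-1)+O(z^{-1})$, together with $\wt V_{\eta,x}(z)=\eta(z-x)+O(\e^{-c|z|t})$ as $\Re z\to+\infty$ and $\wt V_{\eta,x}(z)=O(\e^{-c|\Re z|t})$ as $\Re z\to-\infty$, one checks that $\Re(\pm\varphi)(z)\to-\infty$ at rate $|z|\log|z|$ on the respective outgoing contours, which dominates any polynomial. Uniformity in $x\le x_*-\delta$ is ensured by the uniform gap in Remark~\ref{remark:strictineqqminus} together with the $O(t^{-\infty})$ approximations of $\wt u,\wt v,\wt\ell,\wt g_\infty,\wt g_1$ from Proposition~\ref{prop:gapproxendpointsqminus} and~\eqref{eq:uvqminus}--\eqref{eq:ginftyqminus}. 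The main technical obstacle is the simultaneous verification of $\Re\varphi<-c$ and $\Re\varphi_1>c$ along the full length of the segments $(\wt u^\pm,\wt v^\pm)$, including their endpoint regions where $\rho\to 0$ at $\wt v$ and $1-\rho\to 0$ at $\wt u$; this is resolved by carefully matching the bulk estimate with the $(z-\wt v)^{3/2}$ and $(\wt u-z)^{3/2}$ local expansions of \Cref{prop:varphiqminus}, which determine the argument of $\varphi$ and $\varphi_1$ unambiguously at height $\pm\i\epsilon$.
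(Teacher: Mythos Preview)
Your proposal is correct and follows essentially the same approach as the paper: a case-by-case analysis on the components of $\wt{\Sigma_T}^\epsilon$, invoking Remark~\ref{remark:strictineqqminus} for the real-axis pieces, a Cauchy--Riemann/first-order Taylor argument on the middle segments $(\wt u^\pm,\wt v^\pm)$, the local $(z-z_0)^{3/2}$ structure from Proposition~\ref{prop:varphiqminus} near the endpoints, and the $z\log z$ growth of $\varphi,\varphi_1$ at infinity for the polynomial decay factor. One phrasing caveat: $\varphi$ and $\varphi_1$ are \emph{not} analytic in two-sided strips around the real axis---each carries a branch cut on a real half-line, and $\wt V_{\eta,x}$ has poles on $\Re z=x$---so your ``Taylor expansion in $\Im z$'' really uses one-sided analyticity from the relevant half-plane, and near $\wt u,\wt v$ (where $\varphi''$ blows up and the $O(\epsilon^2)$ remainder is not uniform) it must be replaced by the local cube-root expansion, exactly as you note at the end.
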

\begin{proof}
We start by making some general observations.
First,
\begin{equation}
\label{eq:JTsmallqminus ineq1}
\Re \wt V_{\eta,x}(z)\leq\frac{\log 2}t+\eta[\Re z-x]_+,\quad\text{ for all }z\in\mathbb{C}\text{ and }x\in\mathbb{R},
\end{equation}
and
\begin{equation}
\label{eq:JTsmallqminus ineq2}
\Re \wt V_{\eta,x}(z)=\eta[\Re z-x]_++O(t^{-\infty}),\quad\text{ for all }z\in\mathbb{C}\text{ such that }|\Re z-x|\geq \delta,
\end{equation}
uniformly for any $\delta>0$.

Second, letting $g(z)$ the function constructed in Section~\ref{sec:minimizationx<xq}, see~\eqref{eq:frakgleft}, ${\wt{g}}(z)=g(z)+O(t^{-\infty})$ uniformly for $z\in\mathbb{C}\setminus(-\infty,\wt{v}+\epsilon]$, and, similarly,
${\wt{g}}_\pm(z)=g_\pm(z)+O(t^{-\infty})$ uniformly for $z<{\wt{u}}-\epsilon$, for any $\epsilon>0$, and for $x\leq x_*$.

Third, as it follows from~\eqref{eq:varphiprimeqminusexplicit},
\begin{align}
\label{eq:JTsmallqminus ineq3}
\varphi(z)&\sim 2z\log z,\quad\text{ as }|\Re z|\to\infty,\,z\notin\mathbb{R}_-,
\\
\label{eq:JTsmallqminus ineq4}
\varphi_\pm(z)&\sim 2z\log|z|,\quad\text{ as }z\to-\infty,
\end{align}
which are also uniform for $x\leq x_*$.

Next, we reason separately for each component of $\wt{\Sigma_T}^\epsilon$.

    When $z\in(\wt{v}+\epsilon,+\infty)$ we have $\boldsymbol J_T-\boldsymbol {\mathrm{I}}=O(\e^{-t\varphi(z)})$.
    By the observations above, in particular ~\eqref{eq:JTsmallqminus ineq2} and~\eqref{eq:ellqminus}, we have
    \begin{equation}
        \varphi(z)= 2{\wt{g}}(z)+\wt V_{\eta,x}(z)-\wt{\ell} =2g(z)+\eta[\Re z-x]_++\eta(2\e^{-\eta/2}-x)+O(t^{-\infty})
    \end{equation}
    hence it suffices to recall Remark~\ref{remark:strictineqqminus}, taking for example $c=k/2>0$, to obtain $\varphi(z)>c$ for all $z>\wt{v}+\epsilon$.
    Combining with~\eqref{eq:JTsmallqminus ineq3} it is easy to conclude that $\e^{-t\varphi(z)}=O\bigl(\e^{-c t}/(|z|^2+1)\bigr)$.
    
    When $z\in(-\infty,{\wt{u}}-\epsilon)$ we have $\boldsymbol J_T-\boldsymbol {\mathrm{I}}=O(\e^{t\varphi_1(z)})$.
    By the observations above, in particular~\eqref{eq:JTsmallqminus ineq2} and~\eqref{eq:ellqminus}, we have
    \begin{equation}
    \begin{aligned}
    \varphi_1(z)=\varphi_+(z)-2\pi\i z &= {\wt{g}}_+(z)+{\wt{g}}_-(z)+\wt V_{\eta,x}(z)-\wt{\ell} \\
    &=g_+(z)+g_-(z)+\eta[z-x]_+-\eta(2\e^{-\eta/2}-x) +O(t^{-\infty}),
    \end{aligned}
    \end{equation}
    if $|z-x|\geq \delta$ for a fixed $\delta>0$, and, by~\eqref{eq:JTsmallqminus ineq1},
    \begin{equation}
    \begin{aligned}
    \varphi_1(z) &\leq \frac{\log 2}t+{\wt{g}}_+(z)+{\wt{g}}_-(z)+\eta[z-x]_+-\wt{\ell} \\
    &=\frac{\log 2}t+g_+(z)+g_-(z)+\eta[z-x]_+-\eta(2\e^{-\eta/2}-x)+O(t^{-\infty}),
    \end{aligned}
    \end{equation}
    if $|z-x|\leq \delta$.
    Hence, it suffices to recall Remark~\ref{remark:strictineqqminus}, taking for example $c=k/2>0$, to obtain $\varphi_1(z)<-c$ for all $z<{\wt{u}}-\epsilon$.
    Combining with~\eqref{eq:JTsmallqminus ineq4} it is easy to conclude that $\e^{t\varphi_1(z)}=O\bigl(\e^{-ct}/(|z|^2+1)\bigr)$.
    
    When $z\in \Gamma_R^\pm$ we have $\boldsymbol J_T-\boldsymbol {\mathrm{I}}=O(\e^{-t(\varphi(z)+2\pi\epsilon)})$.
    First, when $z$ is in a neighborhood of $z=\wt{v}$ such that $\varphi(z) = (z-\wt{v})^{3/2}\varphi_{\wt{v}}(z)$ as in Proposition~\ref{prop:varphiqminus}, we see that along the line $\Im z=\pm\epsilon$, to the right of $\wt{v}^\pm$, we have $\Re\varphi(z)>\Re\varphi(\wt{v}^\pm)$.
    By the same proposition and simple estimates, we have $\Re\varphi(\wt{v}^\pm)>-\frac C2 \epsilon^{3/2}$ (for some $C>0$) and so (taking $\epsilon$ small enough) we can make $\Re\varphi(z)+2\pi\epsilon\geq c>0$ for $z$ inside this neighborhood, for some $c>0$.
    When $z$ is outside, the desired bound follows, by continuity, from the one we already proved on $(\wt{v}+\epsilon,+\infty)$, also using~\eqref{eq:JTsmallqminus ineq3}.    
    
    When $z\in\Gamma_L^\pm$ we have $\boldsymbol J_T-\boldsymbol {\mathrm{I}}=O(\e^{t\varphi(z)})=O(\e^{t(\varphi_1(z)-2\pi\epsilon)})$ (we are ignoring the diagonal entries which are easily bounded).
    The argument is completely similar to that on $\Gamma_R^\pm$.
    Namely, when $z$ is in a neighborhood of $z={\wt{u}}$ such that $\varphi_1(z) =-({\wt{u}}-z)^{3/2}\varphi_{\wt{u}}(z)$ as in Proposition~\ref{prop:varphiqminus}, we see that along the line $\Im z=\pm\epsilon$, to the left of ${\wt{u}}^\pm$, we have $\Re\varphi_1(z)<\Re\varphi_1({\wt{u}}^\pm)$.
    By the same proposition and simple estimates, we have $\Re\varphi_1({\wt{u}}^\pm)<2C \epsilon^{3/2}$ and so (taking $\epsilon$ small enough) we can make $\Re\varphi_1(z)-2\pi\epsilon\leq-c<0$ for $z$ inside this neighborhood, for some $c>0$.
    When $z$ is outside, the desired bound follows, by continuity, from the one we already proved on $(-\infty,{\wt{u}}-\epsilon)$, also using~\eqref{eq:JTsmallqminus ineq4}. (Here we can finally fix $R_0$ appearing in the definition of $\Gamma_L^\pm$, namely in the diagonal part of the contour we can simply use the behavior at $\infty$ of $\varphi_1$, see~\eqref{eq:JTsmallqminus ineq4}, and in the horizontal part of the contour we can use the continuity argument.)
    
    When $z\in ({\wt{u}}^\pm,\wt{v}^\pm)$ we have $\boldsymbol J_T-\boldsymbol {\mathrm{I}}=O(\e^{-t(\varphi(z)+2\pi\epsilon)}+\e^{t\varphi(z)}+\e^{-2\pi\epsilon t})$.
    Therefore we are done if we can show that, with $\epsilon>0$ sufficiently small, we can achieve $c<-\varphi(\mu\pm\i\epsilon)<2\pi\epsilon-c$ for some $c>0$, for all $\mu\in(\Re {\wt{u}}^\pm,\Re \wt{v}^\pm)$.
    This can be established using a standard argument from the nonlinear steepest descent asymptotic analysis of Riemann--Hilbert problems.
    Namely, we first observe that $(\Re {\wt{u}}^\pm,\Re \wt{v}^\pm)$ is strictly contained in $({\wt{u}},\wt{v})$ (by the local structure of $\varphi$ and $\varphi_1$ near $z=\wt{v}$ and $z={\wt{u}}$ established in Proposition~\ref{prop:varphiqminus}).
    Hence, for any $\mu\in (\Re {\wt{u}}^\pm,\Re \wt{v}^\pm)$ we have
    \begin{equation}
    \frac{\partial}{\partial y}\Re\varphi(\mu\pm\i y)\biggr|_{y=0}=
    \mp\frac{\partial}{\partial\mu}\Im \varphi_\pm(\mu)=
    -2\arccos\frac{\mu}{\wt{v}-{\wt{u}}}+O(t^{-\infty})
    \end{equation}
    where we first use the Cauchy--Riemann equations and then~\eqref{eq:varphiprimeqminusapprox}.
    Therefore, for $\epsilon>0$ small enough, recalling that $(\Re {\wt{u}}^\pm,\Re \wt{v}^\pm)$ is strictly contained in $({\wt{u}},\wt{v})$, we can achieve
    \begin{equation}
    -2\pi+\wt c<\frac{\d}{\d y}\Re\varphi(\mu\pm\i y)<-\wt c
    \end{equation}
    for all $0<\pm y<\epsilon$ and some $\wt c>0$.
    Integrating this expression for $y$ between $0$ and $\epsilon$, using the fact that $\Re\varphi_\pm =0$ on $({\wt{u}},\wt{v})$, we obtain the desired inequality with $c=\wt c\epsilon$.
\end{proof}

\subsection{Parametrices}
\label{sec:parametricesqminus}
The next step in the asymptotic analysis of Riemann--Hilbert problems is the construction of explicit approximations to $\boldsymbol T$ (called \emph{parametrices}).
Their construction takes advantage of the fact just proved (Proposition~\ref{prop:JTsmallqminus}) that the jump matrix $\boldsymbol J_T$ becomes close to the identity as $t\to+\infty$ everywhere except on~$({\wt{u}},\wt{v})$ and in small (but fixed) neighborhoods of~${\wt{u}}$ and~$\wt{v}$.
Accordingly, we will construct the \emph{outer parametrix} (an approximation to~$\boldsymbol T$ valid away from ${\wt{u}}$ and $\wt{v}$) and the \emph{inner parametrices} (approximations to $\boldsymbol T$ near ${\wt{u}}$ and $\wt{v}$).
The parametrices are constructed following standard procedures of the nonlinear steepest descent method.

\subsubsection{Outer parametrix}\label{sec:outerparaqminus}

The outer parametrix is obtained by neglecting all jumps of $\boldsymbol T$ except on $({\wt{u}},\wt{v})$, which corresponds to the following Riemann--Hilbert problem.

\begin{cRHp}
\label{cRHp:Poutqminus}
Find an analytic function $\boldsymbol P^{\mathrm{out}}:\mathbb{C}\setminus[{\wt{u}},\wt{v}]\to\mathrm{SL}(2,\mathbb{C})$ such that the following conditions hold true.
\begin{enumerate}[leftmargin=*]
\item Non-tangential boundary values of $\boldsymbol P^{\mathrm{out}}$ exist and are continuous on $({\wt{u}},\wt{v})$ and satisfy
\begin{equation}
\boldsymbol P_+^{\mathrm{out}}(z)=\boldsymbol P_-^{\mathrm{out}}(z)\,\begin{pmatrix} 0 & 1 \\ -1 & 0 \end{pmatrix},\qquad z\in ({\wt{u}},\wt{v}).
\end{equation}
\item We have $\boldsymbol P^{\mathrm{out}}(z)\to\boldsymbol {\mathrm{I}}$ as $z\to\infty$  uniformly in~$\mathbb{C}$.
\item We have $\boldsymbol P^{\mathrm{out}}(z)=O\bigl(|z-z_0|^{-1/4}\bigr)$ as $z\to z_0$ for $z_0\in\lbrace{\wt{u}},\wt{v}\rbrace$.
\end{enumerate}
\end{cRHp}

It is well known (e.g., see~\cite{ItsLargeN}) that the unique solution is
\begin{equation}
\label{eq:globalparaqminus}
\boldsymbol P^{{\mathrm{out}}}(z) = \boldsymbol G \left(\frac{z-\wt{v}}{z-{\wt{u}}}\right)^{\frac 14\boldsymbol \sigma_3}\boldsymbol G^{-1},\qquad\boldsymbol G=\frac 1{\sqrt 2}\begin{pmatrix}
1 & \i \\ \i & 1
\end{pmatrix}.
\end{equation}
We note that
\begin{equation}
\label{eq:Poutlargezqminus}
\boldsymbol P^{\mathrm{out}}(z)=
\boldsymbol {\mathrm{I}}+z^{-1}\begin{pmatrix}
0 & \frac \i 4(\wt{v}-{\wt{u}}) \\ \frac \i 4({\wt{u}}-\wt{v}) & 0
\end{pmatrix}+O\bigl(z^{-2}\bigr)
\end{equation}
as $z\to\infty$ uniformly in $\mathbb{C}$.

\subsubsection{Inner Airy parametrices}

By Proposition~\ref{prop:varphiqminus}, the maps
\begin{equation}
z\mapsto\zeta_{\wt{u}}(z)=\left(\frac 34t\varphi_{\wt{u}}(z)\right)^{\frac 23}(z-{\wt{u}}),\qquad
z\mapsto\zeta_{\wt{v}}(z)=\left(\frac 34t\varphi_{\wt{v}}(z)\right)^{\frac 23}(z-\wt{v}),
\end{equation}
are conformal (injective) mappings of neighborhoods $\mathcal{Q}_{\wt{u}}$ of $z={\wt{u}}$ and $\mathcal{Q}_{\wt{v}}$ of $z=\wt{v}$ (respectively).
We may safely assume that 
\begin{equation}
\text{$\mathcal{Q}_{\wt{u}}$ and $\mathcal{Q}_{\wt{v}}$ are (open) squares of side length $2\epsilon$ centered at ${\wt{u}}$ and $\wt{v}$ respectively}
\end{equation}
and that these conformal mappings extend to open neighborhoods of the closures $\overline {\mathcal{Q}_{{\wt{u}}}}$ and $\overline {\mathcal{Q}_{\wt{v}}}$.
We also know from Proposition~\ref{prop:varphiqminus} that we may take $\epsilon$ independent of $x,t$, as long as $x\leq x_*$ and $t$ is sufficiently large.
Moreover, also by Proposition~\ref{prop:varphiqminus},
\begin{equation}
\begin{aligned}
\zeta_{\wt{u}}'(z)\big|_{z={\wt{u}}} &= \left(\frac 34t\varphi_{\wt{u}}({\wt{u}})\right)^{\frac 23}= \e^{-\frac 16\eta}t^{\frac 23}+O(t^{-\infty}),
\\
\zeta_{\wt{v}}'(z)\big|_{z=\wt{v}} &= \left(\frac 34t\varphi_{\wt{v}}(\wt{v})\right)^{\frac 23}=\e^{-\frac 16\eta}t^{\frac 23}+O(t^{-\infty}),
\end{aligned}
\end{equation}
which implies that (possibly taking $\epsilon$ small enough), for some $C>0$ (independent of $x$),
\begin{equation}
\label{eq:conformalboundaryexpands}
\left|\zeta_{\wt{u}}(z)\right|\geq Ct^{\frac 23}\text{  when }z\in\partial \mathcal{Q}_{\wt{u}},
\qquad
\left|\zeta_{\wt{v}}(z)\right| \geq Ct^{\frac 23}\text{  when }z\in\partial \mathcal{Q}_{\wt{v}}.
\end{equation}
Let us also note that $\zeta_{\wt{u}}$ and $\zeta_{\wt{v}}$ map the real line into the real line, that $\zeta_{\wt{u}}$ maps $\gamma_u^\pm$ into the half-line emanating at the origin with argument $\pm\pi/3$, and that $\zeta_{\wt{v}}$ maps $\gamma_v^\pm$ into the half-line emanating at the origin with argument $\pm 2\pi/3$.

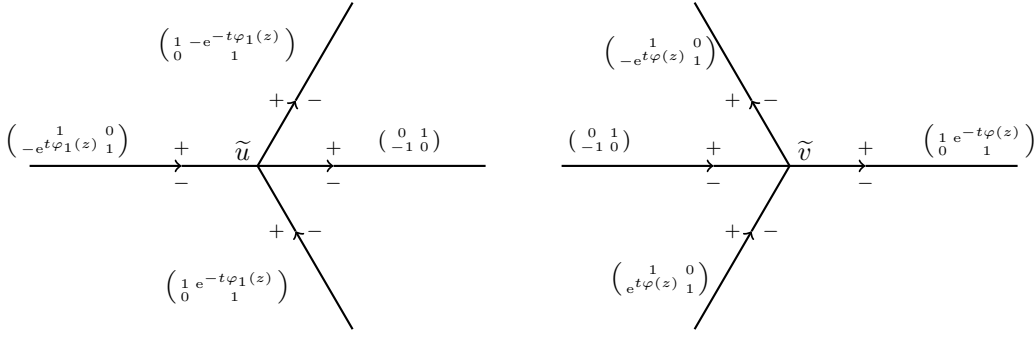
\begin{figure}[t]
\centering
\begin{tikzpicture}
\begin{scope}[shift={(3.5,0)}]

\draw[->,thick] (-3,0) -- (-1,0) node[below]{\tiny{$-$}} node[above]{\tiny{$+$}};
\draw[thick] (-1,0) -- (0,0);
\draw[->,thick] (0,0) -- (1,0) node[below]{\tiny{$-$}} node[above]{\tiny{$+$}};
\draw[thick] (1,0) -- (3,0);

\begin{scope}[rotate=-30]
\draw[->,thick] (0,-2.5) -- (0,-1) node[right]{\tiny{$-$}} node[left]{\tiny{$+$}};
\draw[thick] (0,-1) -- (0,0);
\end{scope}
\begin{scope}[rotate=30]
\draw[->,thick] (0,0) -- (0,1) node[right]{\tiny{$-$}} node[left]{\tiny{$+$}};
\draw[thick] (0,1) -- (0,2.5);
\end{scope}

\node at (1/5,1/5) {\small{$\wt{v}$}};

\node at (-2.5,1/3) {\tiny{$\left(\begin{smallmatrix}0 & 1\\ -1 & 0 \end{smallmatrix}\right)$}};
\node at (2.5,1/3) {\tiny{$\left(\begin{smallmatrix}1 & \e^{-t\varphi(z)}\\ 0 & 1 \end{smallmatrix}\right)$}};
\node at (-1.7,1.5) {\tiny{$\left(\begin{smallmatrix}1 & 0\\ -\e^{t\varphi(z)} & 1 \end{smallmatrix}\right)$}};
\node at (-1.7,-1.5) {\tiny{$\left(\begin{smallmatrix} 1 & 0\\ \e^{t\varphi(z)} & 1 \end{smallmatrix}\right)$}};
\end{scope}

\begin{scope}[shift={(-3.5,0)}]
\draw[->,thick] (-3,0) -- (-1,0) node[below]{\tiny{$-$}} node[above]{\tiny{$+$}};
\draw[->,thick] (-1,0) -- (1,0) node[below]{\tiny{$-$}} node[above]{\tiny{$+$}};
\draw[thick] (1,0) -- (3,0);

\begin{scope}[rotate=30]
\draw[->,thick] (0,-2.5) -- (0,-1) node[right]{\tiny{$-$}} node[left]{\tiny{$+$}};
\draw[thick] (0,-1) -- (0,0);
\end{scope}
\begin{scope}[rotate=-30]
\draw[->,thick] (0,0) -- (0,1) node[right]{\tiny{$-$}} node[left]{\tiny{$+$}};
\draw[thick] (0,1) -- (0,2.5);
\end{scope}

\node at (-1/5,1/5) {\small{${\wt{u}}$}};

\node at (-2.5,1/3) {\tiny{$\left(\begin{smallmatrix}1 & 0 \\ -\e^{t\varphi_1(z)} & 1 \end{smallmatrix}\right)$}};
\node at (2,1/3) {\tiny{$\left(\begin{smallmatrix}0 & 1\\ -1 & 0 \end{smallmatrix}\right)$}};
\node at (-.4,1.6) {\tiny{$\left(\begin{smallmatrix}1 & -\e^{-t\varphi_1(z)} \\ 0 & 1 \end{smallmatrix}\right)$}};
\node at (-.4,-1.6) {\tiny{$\left(\begin{smallmatrix} 1 & \e^{-t\varphi_1(z)} \\ 0 & 1 \end{smallmatrix}\right)$}};
\end{scope}

\end{tikzpicture}
\caption{Jumps of $T(z)$ in a neighborhood of $z={\wt{u}}$ and of $z=\wt{v}$.}
\label{fig:localparaqminus}
\end{figure}

By definition, if $z$ is in $\mathcal{Q}_{\wt{u}}$ and $\mathcal{Q}_{\wt{v}}$ (respectively), as well as in the domain of analyticity of $\varphi_1$ and $\varphi$ (respectively), we have
\begin{equation}
\label{eq:zetatphiqminus}
\frac 43\left(-\zeta_{\wt{u}}(z) \right)^{\frac 32} = -t\varphi_1(z),\qquad
\frac 43\zeta_{\wt{v}}(z) ^{\frac 32} = t\varphi(z).
\end{equation}
Using these identities, we see that the jumps of $\boldsymbol T(z)$ inside $\mathcal{Q}_{\wt{u}}$ and $\mathcal{Q}_{\wt{v}}$ (depicted in Figure~\ref{fig:localparaqminus}) coincide, under these conformal mappings, with the jumps of appropriate Airy model Riemann--Hilbert problem solutions $\boldsymbol\Phi^{\mathrm{Ai},\mathrm{II}}$ and $\boldsymbol\Phi^{\mathrm{Ai},\mathrm{I}}$, defined in~Appendix~\ref{app:Airy} (whose jumps are depicted in~Figure~\ref{fig:AirymodelRHP}).
Therefore, following the routine practice of the nonlinear steepest descent method, we define the inner Airy parametrices by
\begin{equation}
\begin{aligned}
\boldsymbol P^{({\wt{u}})}(z)&=\boldsymbol E^{({\wt{u}})}(z)\boldsymbol \Phi^{\mathrm{Ai},\mathrm{II}}(\zeta_{\wt{u}}(z))&&(z\in \mathcal{Q}_{\wt{u}}),\\
\boldsymbol P^{(\wt{v})}(z)&=\boldsymbol E^{(\wt{v})}(z)\boldsymbol \Phi^{\mathrm{Ai},\mathrm{I}}(\zeta_{\wt{v}}(z))&&(z\in \mathcal{Q}_{\wt{v}}),
\end{aligned}
\end{equation}
where
\begin{equation}
\begin{aligned}
\boldsymbol E^{({\wt{u}})}(z)&=\boldsymbol P^{\mathrm{out}}(z)\boldsymbol G^{-1}\bigl(-\zeta_{\wt{u}}(z)\bigr)^{-\frac 14\boldsymbol\sigma_3}&&(z\in \mathcal{Q}_{\wt{u}}),
\\
\boldsymbol E^{(\wt{v})}(z)&=\boldsymbol P^{\mathrm{out}}(z)\boldsymbol G^{-1}\zeta_{\wt{v}}(z)^{\frac 14\boldsymbol\sigma_3}&&(z\in \mathcal{Q}_{\wt{v}}).
\end{aligned}
\end{equation}
Here, $\boldsymbol P^{\mathrm{out}}$ is the outer parametrix introduced in~\Cref{sec:outerparaqminus}, $\boldsymbol G$ is defined in~\eqref{eq:globalparaqminus}, and $\boldsymbol\Phi^{\mathrm{Ai},\mathrm{I}}$ and $\boldsymbol\Phi^{\mathrm{Ai},\mathrm{II}}$ are defined in~\eqref{eq:defPhiAi1}--\eqref{eq:defPhiAi2} and solve the model Airy Riemann--Hilbert problem~\ref{cRHp:Airy}.
The usual properties of this construction are summarized in the next proposition.

\begin{proposition}
\label{prop:matchingqminus}
Let $z_0\in\lbrace{\wt{u}},\wt{v}\rbrace$.
The matrix $\boldsymbol E^{(z_0)}$ is analytic in $\mathcal{Q}_{z_0}$.
The matrix $\boldsymbol P^{(z_0)}$ is analytic in $\mathcal{Q}_{z_0}\setminus\Sigma_T$ and satisfies the same jump condition as $\boldsymbol T$ on $\Sigma_{T}\cap \mathcal{Q}_{z_0}$.
Moreover, when $z\in\partial \mathcal{Q}_{z_0}$ we have
\begin{equation}
\boldsymbol P^{(z_0)}(z)\boldsymbol P^{\mathrm{out}}(z)^{-1} = \boldsymbol {\mathrm{I}}+t^{-1}\wt{\boldsymbol J}_{R,z_0}(z)+O(t^{-2}),\quad t\to+\infty,
\end{equation}
with error term $O(t^{-1})$ uniform for $z\in\partial \mathcal{Q}_{z_0}$ and for $x\leq x_*-\delta$ (for any~$\delta>0$) and
\begin{equation}
\label{eq:JR1qminus}
\begin{aligned}
\wt{\boldsymbol J}_{R,{\wt{u}}} &= \frac 1{72\,\varphi_{\wt{u}}(z)\,(z-{\wt{u}})^2\,\sqrt{\wt{v}-z}}\begin{pmatrix}
    -7{\wt{u}}+5\wt{v}+2z & \i(7{\wt{u}}+5\wt{v}-12 z) \\
    \i(5{\wt{u}}+7\wt{v}-12 z) & 7{\wt{u}}-5\wt{v}-2z
\end{pmatrix},\\
\wt{\boldsymbol J}_{R,\wt{v}} &= \frac 1{72\,\varphi_{\wt{v}}(z)\,(z-\wt{v})^2\,\sqrt{z-{\wt{u}}}}\begin{pmatrix}
    5{\wt{u}}-7\wt{v}+2z & -\i(5{\wt{u}}+7\wt{v}-12 z) \\ 
    -\i(5{\wt{u}}+7\wt{v}-12 z) & -5{\wt{u}}+7\wt{v}-2z
\end{pmatrix}.
\end{aligned}
\end{equation}
\end{proposition}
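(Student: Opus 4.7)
The proof follows the standard pattern of the Deift--Zhou nonlinear steepest descent method; the three assertions are established in turn.

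For the analyticity of $\boldsymbol E^{(z_0)}$, I would argue by tracing the (potential) singularities and jumps inside $\mathcal{Q}_{z_0}$. Take $z_0=\wt{v}$: inside $\mathcal{Q}_{\wt{v}}$, the outer parametrix $\boldsymbol P^{\mathrm{out}}$, given by~\eqref{eq:globalparaqminus}, has a jump on the segment $({\wt{u}},\wt{v})\cap\mathcal{Q}_{\wt{v}}=(\wt{v}-\epsilon,\wt{v})$, while the factor $\zeta_{\wt{v}}(z)^{\frac 14\boldsymbol\sigma_3}$ has a jump on the same segment since $\zeta_{\wt{v}}$ is real-analytic and maps this segment to $(-c,0)$ for some $c>0$. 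A direct computation using the explicit jump $\left(\begin{smallmatrix}0 & 1 \\ -1 & 0\end{smallmatrix}\right)=\boldsymbol G\boldsymbol\sigma_3\boldsymbol G^{-1}$ shows that these two jumps cancel. As $z\to\wt{v}$, the factor $(z-\wt{v})^{\pm 1/4}$ hidden in $\zeta_{\wt{v}}^{\pm 1/4}$ cancels the mild singularity of $\boldsymbol P^{\mathrm{out}}$, so the resulting isolated singularity at $\wt{v}$ is removable. The case $z_0={\wt{u}}$ is analogous using the factor $(-\zeta_{\wt{u}})^{-1/4\boldsymbol\sigma_3}$.

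For the jump condition of $\boldsymbol P^{(z_0)}$, since $\boldsymbol E^{(z_0)}$ is analytic inside $\mathcal{Q}_{z_0}$, the jumps of $\boldsymbol P^{(z_0)}$ coincide with those of $\boldsymbol\Phi^{\mathrm{Ai},\bullet}\circ\zeta_{z_0}$. The conformal map $\zeta_{\wt{v}}$ (respectively $\zeta_{\wt{u}}$) sends the local jump contour $\Sigma_T\cap\mathcal{Q}_{\wt{v}}$ (respectively $\Sigma_T\cap\mathcal{Q}_{\wt{u}}$) onto the jump contour of the model Airy problem~\ref{cRHp:Airy}, and, thanks to the identities $\tfrac 43\zeta_{\wt{v}}^{3/2}=t\varphi$ and $\tfrac 43(-\zeta_{\wt{u}})^{3/2}=-t\varphi_1$ in~\eqref{eq:zetatphiqminus}, the exponential weights $\e^{\mp\frac 23\zeta^{3/2}}$ in the Airy jumps are converted precisely into the exponentials appearing in the jumps of $\boldsymbol T$ displayed in Figure~\ref{fig:localparaqminus}.

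For the matching condition, I would insert into the definition of $\boldsymbol P^{(z_0)}$ the standard large-argument expansion
\begin{equation}
\boldsymbol\Phi^{\mathrm{Ai},\bullet}(\zeta)=\zeta^{-\frac 14\boldsymbol\sigma_3}\boldsymbol G\Bigl(\boldsymbol {\mathrm{I}}+\boldsymbol\Phi_1^{\mathrm{Ai}}\zeta^{-\frac 32}+O(\zeta^{-3})\Bigr)\e^{\mp\frac 23\zeta^{\frac 32}\boldsymbol\sigma_3}
\end{equation}
(valid outside the jump contour, cf.~Appendix~\ref{app:Airy}), where $\boldsymbol\Phi_1^{\mathrm{Ai}}$ is the explicit constant matrix $\tfrac 1{48}\left(\begin{smallmatrix}1 & 6\i \\ 6\i & -1\end{smallmatrix}\right)$ (up to the sign conventions used in the paper). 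After the cancellation of the prefactors built into $\boldsymbol E^{(z_0)}$, one obtains
\begin{equation}
\boldsymbol P^{(z_0)}(z)\boldsymbol P^{\mathrm{out}}(z)^{-1}=\boldsymbol {\mathrm{I}}+\boldsymbol P^{\mathrm{out}}(z)\boldsymbol G^{-1}\zeta_{z_0}(z)^{\mp\frac 14\boldsymbol\sigma_3}\boldsymbol\Phi_1^{\mathrm{Ai}}\zeta_{z_0}(z)^{\pm\frac 14\boldsymbol\sigma_3}\boldsymbol G\boldsymbol P^{\mathrm{out}}(z)^{-1}\,\zeta_{z_0}(z)^{-\frac 32}+O(\zeta_{z_0}^{-3}),
\end{equation}
with the signs chosen according to $z_0=\wt{v}$ or $z_0={\wt{u}}$. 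Using~\eqref{eq:zetatphiqminus} to replace $\zeta_{z_0}^{-3/2}$ by $\tfrac 34(t\varphi)^{-1}$ (or the analogous expression with $-\varphi_1$), one gets the $t^{-1}$-scaling. The uniformity in $z\in\partial\mathcal{Q}_{z_0}$ and in $x\leq x_*-\delta$ follows from the lower bound~\eqref{eq:conformalboundaryexpands} together with the $x$-uniform control on the conformal maps afforded by Proposition~\ref{prop:varphiqminus}. The explicit matrices in~\eqref{eq:JR1qminus} are then extracted by a direct algebraic computation using the formula~\eqref{eq:globalparaqminus} for $\boldsymbol P^{\mathrm{out}}$ and substituting the values of $\varphi_{z_0}(z_0)$, and I expect this last bookkeeping step---checking that the off-diagonal $\sqrt{\cdot}$-factors and the affine polynomials in $z,{\wt{u}},\wt{v}$ come out exactly as written---to be the main (purely computational) obstacle, while conceptually the matching is completely standard.
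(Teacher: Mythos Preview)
Your approach matches the paper's proof in all three parts: analyticity of $\boldsymbol E^{(z_0)}$ via jump cancellation and removable singularity, jumps of $\boldsymbol P^{(z_0)}$ by construction, and matching via the large-$\zeta$ Airy expansion combined with~\eqref{eq:zetatphiqminus} and~\eqref{eq:conformalboundaryexpands}. One minor bookkeeping slip: in this paper's convention (see~\eqref{eq:asympPhiAi}) the exponential $\e^{\mp\frac23\zeta^{3/2}\boldsymbol\sigma_3}$ is already absorbed into $\boldsymbol\Phi^{\mathrm{Ai},\bullet}$, so the factors $\zeta_{z_0}^{\pm\frac14\boldsymbol\sigma_3}$ from $\boldsymbol E^{(z_0)}$ cancel directly with those in the Airy expansion and your intermediate formula should read $\boldsymbol P^{\mathrm{out}}(z)\bigl(\boldsymbol{\mathrm I}+\boldsymbol\Phi_1^{\mathrm{Ai}}\zeta_{z_0}^{-3/2}+\cdots\bigr)\boldsymbol P^{\mathrm{out}}(z)^{-1}$ without any $\zeta_{z_0}^{\mp\frac14\boldsymbol\sigma_3}$ conjugation, which simplifies the final algebra leading to~\eqref{eq:JR1qminus}.
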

\begin{proof}
Since $(\zeta_{\wt{v}}^{1/4})_+=\i(\zeta_{\wt{v}}^{1/4})_-$ on $\mathcal{Q}_{\wt{v}}\cap (-\infty,\wt{v})$, we have
\begin{equation}
\boldsymbol E^{(\wt{v})}_+=\boldsymbol P^{\mathrm{out}}_-\begin{pmatrix}0&1\\-1&0\end{pmatrix}\boldsymbol G^{-1}\begin{pmatrix}
\i&0\\ 0&-\i
\end{pmatrix}(\zeta_{\wt{v}}^{\frac 14\boldsymbol\sigma_3})_-
=\boldsymbol E^{(\wt{v})}_-,\quad \text{on }\mathcal{Q}_{\wt{v}}\cap (-\infty,\wt{v}),
\end{equation}
where one uses the identity
\begin{equation}
\begin{pmatrix}0&1\\-1&0\end{pmatrix}\boldsymbol G^{-1}\begin{pmatrix}
\i&0\\ 0&-\i
\end{pmatrix}=\boldsymbol G^{-1}.
\end{equation}
The isolated singularity of $\boldsymbol E^{(\wt{v})}(z)$ at $z=\wt{v}$ is removable because $\boldsymbol E^{(\wt{v})}(z)=O\bigl((z-\wt{v})^{-1/2}\bigr)$ as $z\to \wt{v}$.
Therefore, $\boldsymbol E^{(\wt{v})}$ is analytic in $\mathcal{Q}_{\wt{v}}$ and the first statement is proved when $z_0=\wt{v}$.
The statement for $z_0={\wt{u}}$ is proven in a completely analogous manner.

The second statement is true by construction.

For the last statement, when $t\to+\infty$ we have $\zeta_{z_0}(z)^{-1}=O(t^{-2/3})$ uniformly for $z\in\partial \mathcal{Q}_{z_0}$ by~\eqref{eq:conformalboundaryexpands}, hence we can use~\eqref{eq:asympPhiAi} to get
\begin{equation}
\begin{aligned}
\boldsymbol P^{({\wt{u}})}(z)\boldsymbol P^{\mathrm{out}}(z)^{-1} &=\boldsymbol P^{\mathrm{out}}(z)\left(\boldsymbol {\mathrm{I}}-\frac1{36\,t\,\varphi_1(z)}
\begin{pmatrix} -1 & 6\i \\ 6\i & 1 \end{pmatrix}
+O(t^{-2})\right)\boldsymbol P^{\mathrm{out}}(z)^{-1}
\\
\boldsymbol P^{(\wt{v})}(z)\boldsymbol P^{\mathrm{out}}(z)^{-1} &=\boldsymbol P^{\mathrm{out}}(z)\left(\boldsymbol {\mathrm{I}}+\frac1{36\,t\,\varphi(z)}
\begin{pmatrix} 1 & 6\i \\ 6\i & -1 \end{pmatrix}
+O(t^{-2})\right)\boldsymbol P^{\mathrm{out}}(z)^{-1}
\end{aligned}
\end{equation}
where we also use~\eqref{eq:zetatphiqminus}.
The proof follows by using the explicit expression for $\boldsymbol P^{\mathrm{out}}(z)$, as well as the local factorizations $\varphi(z)=\varphi_{\wt{v}}(z)(z-\wt{v})^{3/2}$ and $\varphi_1(z) = -\varphi_{{\wt{u}}}(z)({\wt{u}}-z)^{3/2}$, see~Proposition~\ref{prop:varphiqminus}.
\end{proof}

\subsection{Error analysis}

Let us finally fix $\epsilon>0$ sufficiently small such that the results of the previous sections hold true.
Let $\Sigma_R$ be
\begin{equation}
\Sigma_R=(-\infty,{\wt{u}}-\epsilon]\cup[\wt{v}+\epsilon,+\infty)\cup \Gamma_L^+\cup\Gamma_L^-\cup\Gamma_R^+\cup\Gamma_R^-\cup\partial \mathcal{Q}_{{\wt{u}}}\cup\partial \mathcal{Q}_{\wt{v}}
\end{equation}
and let $\Sigma_R^\circ$ be, as usual, the complement in $\Sigma_R$ of the points of intersection of the various contours forming $\Sigma_R$, namely $\Sigma_R^\circ$ consists of the points in $\Sigma_R$ except for ${\wt{u}}-\epsilon,\wt{v}+\epsilon,{\wt{u}}^\pm,\wt{v}^\pm$ and for the vertices of the squares $\mathcal{Q}_{\wt{u}}$ and $\mathcal{Q}_{\wt{v}}$.
The orientation on $\Sigma_R^\circ$ is illustrated in Figure~\ref{fig:SigmaRqminus}.

\begin{figure}[t]
\centering
\begin{tikzpicture}

\draw[very thin,->] (-6.5,0) -- (6.5,0);
\draw[very thin,->] (0,-2) -- (0,2);

\draw[very thick,->] (-6,0) -- (-4,0);
\draw[very thick,] (-4,0) -- (-5/2,0);
\draw[very thick,->] (5/2,0) -- (5,0) ;
\draw[very thick] (5,0) -- (6,0);

\draw[very thick, ->] (-5/2,1/2) -- (-5/2,1/4);
\draw[very thick, ->] (-5/2,1/4) -- (-5/2,-1/4);
\draw[very thick] (-5/2,-1/4) -- (-5/2,-1/2);

\draw[very thick, ->] (-3/2,-1/2) -- (-3/2,-1/4);
\draw[very thick, ->] (-3/2,-1/2) -- (-3/2,1/4);
\draw[very thick] (-3/2,1/4) -- (-3/2,1/2);

\draw[very thick, ->] (5/2,-1/2) -- (5/2,-1/4);
\draw[very thick, ->] (5/2,-1/2) -- (5/2,1/4);
\draw[very thick] (5/2,1/4) -- (5/2,1/2);

\draw[very thick, ->] (3/2,1/2) -- (3/2,1/4);
\draw[very thick, ->] (3/2,1/2) -- (3/2,-1/4);
\draw[very thick] (3/2,-1/4) -- (3/2,-1/2);

\draw[very thick] (-4,1/2) -- (-5,5/4);
\draw[very thick,<-] (-5,5/4) -- (-6,2);

\draw[very thick] (-4,-1/2) -- (-5,-5/4);
\draw[very thick,<-] (-5,-5/4) -- (-6,-2);

\draw[very thick] (-4,1/2) -- (-2,1/2);
\draw[very thick,<->] (-2,1/2) -- (1/2,1/2);
\draw[very thick] (1/2,1/2) -- (2,1/2);
\draw[very thick,<->] (2,1/2) -- (5,1/2);
\draw[very thick] (5,1/2) -- (6,1/2);

\draw[very thick,->] (-4,-1/2) -- (-2,-1/2);
\draw[very thick,->] (-2,-1/2) -- (1/2,-1/2);
\draw[very thick,->] (1/2,-1/2) -- (2,-1/2);
\draw[very thick,->] (2,-1/2) -- (5,-1/2);
\draw[very thick] (5,-1/2) -- (6,-1/2);

\fill (-2,0) circle (1pt);
\node at (-2,-1/5) {\small{${\wt{u}}$}};

\fill (2,0) circle (1pt);
\node at (2,-1/5) {\small{$\wt{v}$}};

\end{tikzpicture}
\caption{$\Sigma_R$ (case $x<x_*$).}
\label{fig:SigmaRqminus}
\end{figure}

Introduce the analytic function $\boldsymbol R:\mathbb{C}\setminus\Sigma_R\to\mathrm{SL}(2,\mathbb{C})$ by
\begin{equation}
\boldsymbol R(z) = \begin{cases}
\boldsymbol T(z)\boldsymbol P^{(z_0)}(z)^{-1}&\text{if }z\in \mathcal{Q}_{z_0},\,\,z_0\in\lbrace{\wt{u}},\wt{v}\rbrace,\\
\boldsymbol T(z)\boldsymbol P^{{\mathrm{out}}}(z)^{-1}&\text{otherwise}.
\end{cases}
\end{equation}

By construction, $\boldsymbol R(z)$ is the unique solution to the following Riemann--Hilbert problem.

\begin{cRHp}
\label{cRHp:Rqminus}
Find an analytic function $\boldsymbol R:\mathbb{C}\setminus \Sigma_R\to\mathrm{SL}(2,\mathbb{C})$ such that the following conditions hold true.
\begin{enumerate}[leftmargin=*]
\item Non-tangential boundary values of $\boldsymbol R$ exist and are continuous on $\Sigma_R^\circ$ and satisfy
\begin{equation}
\boldsymbol R_+(z)=\boldsymbol R_-(z)\boldsymbol J_R(z),\qquad z\in \Sigma_R^\circ,
\end{equation}
where $\boldsymbol J_R(z)$ is given for $z\in \Sigma_R^\circ$ by 
\begin{equation}
\label{eq:jumpRqminus}
\boldsymbol J_R(z)=\begin{cases}
\boldsymbol P^{{\mathrm{out}}}(z)\boldsymbol P^{(z_0)}(z)^{-1}&\text{if } z\in[z_0\pm\epsilon-\i\epsilon,z_0\pm\epsilon+\i\epsilon],\,\,z_0\in\lbrace{\wt{u}},\wt{v}\rbrace,\\
\boldsymbol P^{{\mathrm{out}}}(z)\boldsymbol J_T(z)^{\mp 1}\boldsymbol P^{(z_0)}(z)^{-1}&\text{if }z\in[z_0\pm\i\epsilon-\epsilon,z_0\pm\i\epsilon+\epsilon],\,\,z_0\in\lbrace{\wt{u}},\wt{v}\rbrace,\\
\boldsymbol P^{{\mathrm{out}}}(z)\boldsymbol J_T(z)\boldsymbol P^{{\mathrm{out}}}(z)^{-1}&\text{otherwise}.
\end{cases}
\end{equation}
\item We have $\boldsymbol R(z)\to\boldsymbol {\mathrm{I}}$ as $z\to\infty$  uniformly in $\mathbb{C}\setminus \Sigma_R$.
\item We have $\boldsymbol R(z)=O(1)$ as $z\to z_0$ uniformly in $\mathbb{C}\setminus\Sigma_R$ for all $z_0\in\Sigma_R\setminus\Sigma_R^\circ$.
\end{enumerate}
\end{cRHp}

By~\eqref{eq:SigmaTepsilonqminus}, we have $\Sigma^\circ_R=\partial \mathcal{Q}_{\wt{u}}\cup\partial \mathcal{Q}_{\wt{v}}\cup\wt{\Sigma_T}^\epsilon\setminus\lbrace\text{vertices of }\mathcal{Q}_{\wt{u}},\mathcal{Q}_{\wt{v}}\rbrace$.

\begin{proposition}
\label{prop:JRsmallqminus}
For any $\delta>0$ there exists $c,t_*>0$ such that
\begin{equation}
\boldsymbol J_R(z) = \begin{cases}
\boldsymbol {\mathrm{I}}-t^{-1}\wt{\boldsymbol J}_{R,z_0}(z)+O(t^{-2}),&\text{if }z\in\partial \mathcal{Q}_{z_0},\ \ z_0\in\lbrace{\wt{u}},\wt{v}\rbrace,\\
\boldsymbol {\mathrm{I}}+O\left(\frac{1}{|z|^2+1}\e^{-ct}\right),&\text{if }z\in\wt{\Sigma_T}^\epsilon,
\end{cases}
\end{equation}
with $\wt{\boldsymbol J}_{R,z_0}$ given in~\eqref{eq:JR1qminus} and error terms uniform for $t\geq t_*$, $z\in\Sigma^\circ_R$, and $x\leq x_*-\delta$.
\end{proposition}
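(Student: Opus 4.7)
The plan is to combine the matching condition from Proposition~\ref{prop:matchingqminus} with the exponential smallness of $\boldsymbol J_T-\boldsymbol{\mathrm{I}}$ from Proposition~\ref{prop:JTsmallqminus}, applied separately to the three types of components of~$\Sigma_R^\circ$: the vertical sides of $\partial \mathcal{Q}_{z_0}$ (where only the matching contributes), the horizontal sides of $\partial \mathcal{Q}_{z_0}$ (which coincide locally with segments of $\Gamma_L^\pm\cup\Gamma_R^\pm$, so both the matching and an exponentially small $\boldsymbol J_T$ contribute), and the remaining part $\wt{\Sigma_T}^\epsilon$ (where only the conjugated $\boldsymbol J_T$ appears).

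On the vertical sides of $\partial \mathcal{Q}_{z_0}$, the jump is simply $\boldsymbol J_R(z)=\boldsymbol P^{\mathrm{out}}(z)\boldsymbol P^{(z_0)}(z)^{-1}$ by~\eqref{eq:jumpRqminus}. Inverting the expansion of Proposition~\ref{prop:matchingqminus} yields
\begin{equation*}
\boldsymbol J_R(z)=\bigl(\boldsymbol{\mathrm{I}}+t^{-1}\wt{\boldsymbol J}_{R,z_0}(z)+O(t^{-2})\bigr)^{-1}=\boldsymbol{\mathrm{I}}-t^{-1}\wt{\boldsymbol J}_{R,z_0}(z)+O(t^{-2})
\end{equation*}
uniformly on these sides, with uniformity in $x\leq x_*-\delta$ inherited directly from Proposition~\ref{prop:matchingqminus}.

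On the horizontal sides of $\partial \mathcal{Q}_{z_0}$, the jump reads $\boldsymbol J_R=\boldsymbol P^{\mathrm{out}}\boldsymbol J_T^{\mp 1}\boldsymbol P^{(z_0),-1}$. I would write this as
\begin{equation*}
\boldsymbol J_R(z)=\boldsymbol P^{\mathrm{out}}(z)\boldsymbol P^{(z_0)}(z)^{-1}+\boldsymbol P^{\mathrm{out}}(z)\bigl(\boldsymbol J_T(z)^{\mp 1}-\boldsymbol{\mathrm{I}}\bigr)\boldsymbol P^{(z_0)}(z)^{-1}.
\end{equation*}
The first term contributes $\boldsymbol{\mathrm{I}}-t^{-1}\wt{\boldsymbol J}_{R,z_0}(z)+O(t^{-2})$ by the previous step. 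The second term is exponentially small: the segments are bounded and bounded away from $\{{\wt{u}},\wt{v}\}$, so $\boldsymbol P^{\mathrm{out}}$ is bounded there by its explicit expression~\eqref{eq:globalparaqminus}; $\boldsymbol P^{(z_0)}$ is also uniformly bounded because $\boldsymbol E^{(z_0)}$ is analytic on $\mathcal{Q}_{z_0}$ and the Airy model function $\boldsymbol\Phi^{\mathrm{Ai}}$ is bounded on $|\zeta|\geq Ct^{2/3}$ (see Proposition~\ref{prop:matchingqminus} and~\eqref{eq:conformalboundaryexpands}); finally $\boldsymbol J_T-\boldsymbol{\mathrm{I}}=O(\e^{-ct})$ by Proposition~\ref{prop:JTsmallqminus}. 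Absorbing this $O(\e^{-ct})$ contribution into the $O(t^{-2})$ error yields the claim on these sides as well.

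Finally, on $z\in\wt{\Sigma_T}^\epsilon$ the jump is $\boldsymbol J_R=\boldsymbol P^{\mathrm{out}}\boldsymbol J_T\boldsymbol P^{{\mathrm{out}},-1}$, so $\boldsymbol J_R-\boldsymbol{\mathrm{I}}=\boldsymbol P^{\mathrm{out}}(\boldsymbol J_T-\boldsymbol{\mathrm{I}})\boldsymbol P^{{\mathrm{out}},-1}$. From~\eqref{eq:globalparaqminus}, $\boldsymbol P^{\mathrm{out}}(z)$ is uniformly bounded on the complement of any fixed neighborhood of $\{{\wt{u}},\wt{v}\}$, and by~\eqref{eq:uvqminus} the endpoints ${\wt{u}},\wt{v}$ are uniformly bounded in $x\leq x_*-\delta$ for $t$ large. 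The bound $\boldsymbol J_R(z)-\boldsymbol{\mathrm{I}}=O\bigl(\tfrac{1}{|z|^2+1}\e^{-ct}\bigr)$ then follows at once from the analogous bound on $\boldsymbol J_T-\boldsymbol{\mathrm{I}}$ in Proposition~\ref{prop:JTsmallqminus}. No genuine obstacle arises: the proposition is essentially an assembly of the uniform estimates already established in Propositions~\ref{prop:matchingqminus} and~\ref{prop:JTsmallqminus}, which is why the uniformity in $x$ propagates for free.
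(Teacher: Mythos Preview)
Your proof is correct and follows exactly the approach implicit in the paper's one-line proof (``It follows from Propositions~\ref{prop:JTsmallqminus} and~\ref{prop:matchingqminus}''). You have simply unpacked how the three cases of the jump formula~\eqref{eq:jumpRqminus} reduce to these two propositions, with the useful observation that on the horizontal sides of $\partial\mathcal{Q}_{z_0}$ the extra factor $\boldsymbol J_T^{\mp 1}$ is exponentially close to $\boldsymbol{\mathrm I}$ and can be absorbed into the $O(t^{-2})$ error.
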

\begin{proof}
It follows from Propositions~\ref{prop:JTsmallqminus} and~\ref{prop:matchingqminus}.
\end{proof}

\begin{proposition}
\label{prop:finalqminus}
For any $\delta>0$ there exists $t_*>0$ such that
\begin{equation}
\wh\alpha(t,x)= t(1-\e^{-\eta})+O(t^{-2}),\qquad
\log\wh\beta(t,x) = xt\eta-\frac 12\eta+O(t^{-2}).
\end{equation}
with error terms uniform for $t\geq t_*$ and $x\leq x_*-\delta$.
Here, $\wh\alpha(t,x)$ and $\wh\beta(t,x)$ are as in~\eqref{eq:wtalphabetagamma}.
\end{proposition}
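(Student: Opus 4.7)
The strategy is to read off $\wh\alpha(t,x)$ and $\wh\beta(t,x)$ from the large-$z$ expansion of $\boldsymbol N(z)$ in \eqref{eq:refinedexpansionNqminus}, using the factorization $\boldsymbol N(z) = \boldsymbol R(z)\,\boldsymbol P^{\mathrm{out}}(z)$ valid for $z$ outside $\mathcal{Q}_{\wt u}\cup\mathcal{Q}_{\wt v}$ together with the explicit expansion of $\boldsymbol P^{\mathrm{out}}$ in \eqref{eq:Poutlargezqminus} and a small-norm expansion of $\boldsymbol R$. Since $\boldsymbol R(z) = \boldsymbol{\mathrm I} + R_1 z^{-1} + O(z^{-2})$ and $\boldsymbol P^{\mathrm{out}}(z) = \boldsymbol{\mathrm I} + P^{\mathrm{out}}_1 z^{-1} + O(z^{-2})$, comparing the $z^{-1}$ coefficient with \eqref{eq:refinedexpansionNqminus} produces the two scalar identities
\begin{equation*}
\wh\alpha - t(1+{\wt{g}}_1) + \tfrac{1}{24t} = R_1^{(1,1)},\qquad
\i\,\e^{t(\wt\ell - 2{\wt{g}}_\infty)}\wh\beta \,=\, \tfrac{\i}{4}(\wt v - \wt u) + R_1^{(1,2)}.
\end{equation*}

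To control $R_1$ I would invoke standard small-norm theory for Riemann--Hilbert problem~\ref{cRHp:Rqminus}: the Cauchy integral representation of $\boldsymbol R$ together with Proposition~\ref{prop:JRsmallqminus} yields, uniformly for $z$ bounded away from $\Sigma_R$ and for $x\le x_*-\delta$,
\begin{equation*}
\boldsymbol R(z) \,=\, \boldsymbol{\mathrm I} \,-\, \frac{1}{2\pi\i\,t}\sum_{z_0\in\{\wt u,\wt v\}}\oint_{\partial \mathcal{Q}_{z_0}} \frac{\wt{\boldsymbol J}_{R,z_0}(s)}{s-z}\,\d s \,+\, O(t^{-2}),
\end{equation*}
the contribution of $\wt{\Sigma_T}^\epsilon$ to the Cauchy integral being $O(t^{-\infty})$ by the exponential bound in Proposition~\ref{prop:JRsmallqminus}. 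Since the factor $\sqrt{\cdot}$ appearing in the explicit formula \eqref{eq:JR1qminus} is analytic at $z_0$, the integrand has a single double pole at $s=z_0$ and the contour integral reduces to a residue. Expanding $(s-z)^{-1} = -z^{-1} - s z^{-2} + \cdots$ at $z=\infty$ then gives a completely explicit $2\times 2$ matrix $R_1^{(1)}$ such that $R_1 = t^{-1} R_1^{(1)} + O(t^{-2})$, with $R_1^{(1)}$ expressed in terms of $\wt u,\wt v$ and of the values \eqref{eq:varphiuvalueu}, \eqref{eq:varphivvaluev} of $\varphi_{z_0}$ and $\varphi_{z_0}'$ at $z_0$.

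The final step is to substitute the approximations \eqref{eq:uvqminus}, \eqref{eq:g1qminus}, \eqref{eq:ellqminus} and \eqref{eq:ginftyqminus} into the two identities above; the prefactor for $\wh\beta$ simplifies because $\wt\ell - 2{\wt{g}}_\infty = -\eta x + O(t^{-\infty})$, and the leading prefactor $\tfrac 14(\wt v-\wt u)$ gives $\e^{-\eta/2}+O(t^{-\infty})$. The two key cancellations that must then be verified by direct residue calculation are
\begin{equation*}
R_1^{(1,1)} \,=\, \frac{1}{24\,t} + O(t^{-2}),\qquad R_1^{(1,2)} \,=\, O(t^{-2}),
\end{equation*}
so that the $\tfrac{1}{24t}$ Stirling term disappears in $\wh\alpha$ and no $O(t^{-1})$ correction survives in $\log \wh\beta$.

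The main obstacle is precisely this explicit residue computation: the two identities require a specific algebraic cancellation between the contributions at $\wt u$ and at $\wt v$, which has to be extracted by differentiating the rational matrix in \eqref{eq:JR1qminus} and carefully using $\varphi_{\wt u}(\wt u)\sqrt{\wt v-\wt u} = \varphi_{\wt v}(\wt v)\sqrt{\wt v-\wt u} = \tfrac 83$ together with the derivatives $\varphi_{\wt u}'(\wt u),\varphi_{\wt v}'(\wt v) = \pm \tfrac{4}{15(\wt v-\wt u)^{3/2}}$. All other ingredients---the existence of $\boldsymbol R$, the exponential decay of $\boldsymbol J_R - \boldsymbol{\mathrm I}$ away from the disks, and the uniform $O(t^{-1})$ bound on $\boldsymbol R_-$---are already supplied by Propositions~\ref{prop:JTsmallqminus}, \ref{prop:matchingqminus}, and \ref{prop:JRsmallqminus}, and uniformity in $x\le x_*-\delta$ is inherited from Remark~\ref{remark:strictineqqminus} and Proposition~\ref{prop:varphiqminus}.
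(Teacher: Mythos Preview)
Your proof plan is correct and follows essentially the same route as the paper: small-norm theory for $\boldsymbol R$, reduction of the Cauchy integral to residues of $\wt{\boldsymbol J}_{R,z_0}$ at $z_0\in\{\wt u,\wt v\}$, and comparison of the $z^{-1}$ coefficients in $\boldsymbol N=\boldsymbol R\boldsymbol P^{\mathrm{out}}$ with~\eqref{eq:refinedexpansionNqminus}. The two cancellations you single out are exactly the ones the paper verifies; there the residues are computed explicitly and the simple-pole parts at $\wt u$ and $\wt v$ turn out to be $\tfrac{1}{48}\bigl(\begin{smallmatrix}1&-\frac52\i\\-\frac52\i&-1\end{smallmatrix}\bigr)$ and $\tfrac{1}{48}\bigl(\begin{smallmatrix}1&\frac52\i\\\frac52\i&-1\end{smallmatrix}\bigr)$ respectively, so their sum is the diagonal matrix $\tfrac{1}{24}\boldsymbol\sigma_3$, confirming both $R_1^{(1,1)}=\tfrac{1}{24t}+O(t^{-2})$ and $R_1^{(1,2)}=O(t^{-2})$.
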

\begin{proof}
By Proposition~\ref{prop:JRsmallqminus} we obtain that
\begin{equation}
\label{eq:smallnessqminus}
\|\boldsymbol J_R-\boldsymbol {\mathrm{I}}\|_{p}=O(t^{-1}),\qquad p=1,2,\infty,
\end{equation}
as $t\to+\infty$ uniformly for $x\leq x_*-\delta$ (for any $\delta>0$), where $\|\cdot\|_p$ is the maximum over the four matrix entries of their $L^p(\Sigma_R^\circ)$-norm.
It follows that $\boldsymbol R$ solves a \emph{small-norm} Riemann--Hilbert problem.
In such situation, it is well-known that the solution $\boldsymbol R$ admits the representation
\begin{equation}
\label{eq:smallnormsolutionqminus}
\boldsymbol R(z) = \boldsymbol {\mathrm{I}}+\frac 1{2\pi\i}\int_{\Sigma_R^\circ}\left(\boldsymbol {\mathrm{I}}+\boldsymbol\rho(\mu)\right)\left(\boldsymbol J_R(\mu)-\boldsymbol {\mathrm{I}}\right)\frac{\d \mu}{\mu-z}
\end{equation}
where $\boldsymbol\rho\in L^2(\Sigma_R^\circ)\otimes \mathbb{C}^{2\times 2}$ is the unique solution to
\begin{equation}
\label{eq:rhosmallnormqminus}
\boldsymbol\rho-\mathscr C_R[\boldsymbol\rho] =\mathscr C_-\left[\boldsymbol J_R-\boldsymbol {\mathrm{I}}\right].
\end{equation}
Here $\mathscr C_-$ is the Cauchy projector on $L^2(\Sigma_R^\circ)$ defined by
\begin{equation}
\label{eq:CauchyProjectorOneCut}
\mathscr C_-[\phi] (\mu)= \lim_{\varepsilon\downarrow 0}\int_{\Sigma_R^\circ}\phi(\nu)\frac{\d\nu}{\nu-\mu+\i\epsilon},\qquad \phi\in L^2(\Sigma_R^\circ),
\end{equation}
and $\mathscr C_R$ is the operator on $L^2(\Sigma_R^\circ)\otimes\mathbb{C}^{2\times 2}$ defined by
\begin{equation}
\mathscr C_R[\boldsymbol F] = \mathscr C_-[\boldsymbol F(\boldsymbol J_R-\boldsymbol {\mathrm{I}})],\qquad \boldsymbol F\in L^2(\Sigma_R^\circ)\otimes\mathbb{C}^{2\times 2}.
\end{equation}
The conditions in~\eqref{eq:smallnessqminus} imply (see~\cite{ItsLargeN}, for example) that the operator norm of~$\mathscr C_R$ is $O(t^{-1})$ and so the solution~$\boldsymbol \rho$ to~\eqref{eq:rhosmallnormqminus} can be expressed in terms of a Neumann series 
\begin{equation}
\boldsymbol\rho=\sum_{k\geq 0}\mathscr C_R^k\bigl[\mathscr C_-[\boldsymbol J_R-\boldsymbol {\mathrm{I}}]\bigr]=O(t^{-1}).
\end{equation}
(Here we also use that $\mathscr C_-$ is a bounded operator.)
Plugging this into~\eqref{eq:smallnormsolutionqminus} and using Proposition~\ref{prop:JRsmallqminus}, it follows that
\begin{equation}
\boldsymbol R(z) = \boldsymbol {\mathrm{I}}+\frac 1{2\pi\i}\int_{\Sigma_R^\circ}\frac{\boldsymbol J_R(\mu)-\boldsymbol {\mathrm{I}}}{\mu-z}\d\mu+O(t^{-2})=
\boldsymbol {\mathrm{I}}+t^{-1}\sum_{z_0\in\lbrace {\wt{u}},\wt{v}\rbrace}\frac 1{2\pi\i}\int_{\partial \mathcal{Q}_{z_0}}\frac{\wt{\boldsymbol J}_{R,z_0}(\mu)}{z-\mu}\d\mu+O(t^{-2}).
\end{equation}
Since $\wt{\boldsymbol J}_{R,z_0}$ extends to a meromorphic function inside $\mathcal{Q}_{z_0}$ with a double pole at $z_0$ and no other singularities, see~\eqref{eq:JR1qminus}, by Cauchy's theorem the integrals above reduce to the polar part of $\wt{\boldsymbol J}_{R,z_0}$ at $z_0$, namely
\begin{equation}
\begin{aligned}
\frac 1{2\pi\i}\int_{\partial \mathcal{Q}_{{\wt{u}}}}\frac{\wt{\boldsymbol J}_{R,{\wt{u}}}(\mu)}{z-\mu}\d\mu
&=\frac{5\e^{-\eta/2}}{48(z-{\wt{u}})^2}\begin{pmatrix}
    1 & \i \\ \i & -1
\end{pmatrix}+\frac 1{48(z-{\wt{u}})}\begin{pmatrix}
    1 & -\frac 52\i \\ -\frac 52\i & -1
\end{pmatrix},
\\
\frac 1{2\pi\i}\int_{\partial \mathcal{Q}_{\wt{v}}}\frac{\wt{\boldsymbol J}_{R,\wt{v}}(\mu)}{z-\mu}\d\mu
&=\frac{5\e^{-\eta/2}}{48(z-\wt{v})^2}\begin{pmatrix}
    -1 & \i \\ \i & 1
\end{pmatrix}+\frac 1{48(z-\wt{v})}\begin{pmatrix}
    1 & \frac 52 \i \\ \frac 52\i & -1
\end{pmatrix},
\end{aligned}
\end{equation}
assuming $z\not\in \mathcal{Q}_{{\wt{u}}}\cup \mathcal{Q}_{\wt{v}}$.
In this computation, we also used~\eqref{eq:varphivvaluev} and~\eqref{eq:varphiuvalueu}.
Finally, when $\Im z$ is large we have $\boldsymbol N(z)=\boldsymbol R(z)\boldsymbol P^{{\mathrm{out}}}(z)$ and so, as $t\to+\infty$ and $\Im z\to+\infty$, we have
\begin{equation}
\boldsymbol N(z) = \left(\boldsymbol {\mathrm{I}}+\frac 1{24 zt}\begin{pmatrix}
    1 & 0 \\ 0 & -1
\end{pmatrix}+O\bigl(z^{-1}t^{-2}\bigr)\right)\,\boldsymbol P^{{\mathrm{out}}}(z)
\end{equation}
(uniformly for $x\leq x_*-\delta$ for any $\delta>0$).
By the large-$z$ expansions~\eqref{eq:refinedexpansionNqminus} and~\eqref{eq:Poutlargezqminus}, it follows that
\begin{equation}
\wh\alpha(t,x) = t(1+{\wt{g}}_1)+\left(\frac 1{24}-\frac 1{24}\right)t^{-1}+O(t^{-2}),\quad
\e^{t(\wt{\ell}-2{\wt{g}}_\infty)}\wh\beta(t,x) =\frac 14(\wt{v}-{\wt{u}})+O(t^{-2}),
\end{equation}
as $t\to+\infty$ uniformly for $x\leq x_*-\delta$.
The thesis then follows from~\eqref{eq:uvqminus},~\eqref{eq:g1qminus}, \eqref{eq:ellqminus}, and~\eqref{eq:ginftyqminus}.
\end{proof}

\section{Nonlinear steepest descent analysis (case \texorpdfstring{$x_*<x<2$}{2<x<x\_*})}
\label{sec:DeiftZhou2cut}

Throughout this section we are going to assume that $x_*<x=s/t<2$.
We are once again starting from a continuous Riemann--Hilbert problem, which characterizes the Fredholm determinant $Q(t,s)$, and performing a nonlinear steepest descent analysis on it.
This strategy is similar to the one employed in in the previous section and it involves a series of analogous transformations of the continuous Riemann--Hilbert problem.
Therefore, in order to emphasize the parallel and avoid excessive notation, we will use the same symbols to denote analogous quantities in both sections, even though their definitions may differ (e.g., functions like $\boldsymbol M,{\wt{g}},\boldsymbol N,\boldsymbol T,\ldots$ and contours like $\Sigma_M,\Sigma_N,\Sigma_T,\ldots$).

\subsection{Continuous Riemann--Hilbert problem}

Let ${\wt{a}}<{\wt{b}}<{\wt{c}}$ (to be fixed later).
Consider the (multi-)contour $\Sigma_M$ in the complex $z$-plane depicted in Figure~\ref{fig:SigmaMqplus}:
\begin{equation}
\Sigma_M=\mathbb{R}\cup\Gamma_{L}^-\cup\Gamma_{L}^+\cup\Gamma_{R}^+\cup\Gamma_{R}^-\cup\left(\bigcup_{z_0\in\lbrace{\wt{a}},{\wt{b}},{\wt{c}}\rbrace}\bigl(\gamma_{z_0}^+\cup\gamma_{z_0}^-\bigr)\right),
\end{equation}
where, for $z_0\in\lbrace{\wt{a}},{\wt{b}},{\wt{c}}\rbrace$, $\gamma_{z_0}^\pm$ is a smooth contour joining $z_0$ to the line $\Im z=\pm\epsilon$ and, denoting $z_0^\pm$ the intersection point of $\gamma_{z_0}^\pm$ with $\Im z=\pm\epsilon$, $\Gamma_R^\pm=[{\wt{a}}^\pm,+\infty\pm\i\epsilon)$ and $\Gamma_L^\pm=\bigl([-R_0\pm\i\epsilon,{\wt{a}}^\pm]\bigr)\cup\bigl(\e^{\pm\i \delta_0}\mathbb{R}_+-R_0\pm\i\epsilon\bigr)$.
The parameters ${\wt{a}}<{\wt{b}}<{\wt{c}}$, $\epsilon>0$, $R_0>-{\wt{a}}$, as well as the specific curves $\gamma_{\wt a}^\pm$, $\gamma_{\wt b}^\pm$, and $\gamma_{\wt c}^\pm$, will be fixed later, while $\delta_0\in(\pi/2,\pi)$ can be fixed arbitrarily.
Let $\Sigma_M^\circ=\Sigma_M\setminus \bigcup_{z_0\in\lbrace\wt a,\wt b,\wt c\rbrace}\lbrace z_0,z_0^+,z_0^-\rbrace$, oriented as in Figure~\ref{fig:SigmaMqminus}.
The orientation determines $\pm$ sides of $\Sigma_M^\circ$, $+$ to the left-hand side and $-$ to the right-hand side.
Moreover, $\Sigma_M$ divides $\mathbb{C}$ into ten connected components which we call $\Omega_1$, $\dots$, $\Omega_{10}$.
This is illustrated in Figure~\ref{fig:SigmaMqplus}.

\begin{figure}[t]
\centering
\begin{tikzpicture}

\draw[very thin,->] (-6.5,0) -- (4.5,0);
\draw[very thin,->] (0,-2) -- (0,2);

\draw[very thick,->] (-6,0) -- (-4.5,0) node[below]{\tiny{$-$}} node[above]{\tiny{$+$}};
\draw[very thick,->] (-4.5,0) -- (-1.5,0) node[below]{\tiny{$-$}} node[above]{\tiny{$+$}};
\draw[very thick,->] (-1.5,0) -- (1,0) node[below]{\tiny{$-$}} node[above]{\tiny{$+$}};
\draw[very thick,->] (1,0) -- (3.5,0) node[below]{\tiny{$-$}} node[above]{\tiny{$+$}};
\draw[very thick] (3.5,0) -- (4,0);

\draw[very thick] (-4,1) -- (-5,3/2) node[below]{\tiny{$-$}} node[above]{\tiny{$+$}};
\draw[very thick,<-] (-5,3/2) -- (-6,2);

\draw[very thick] (-4,-1) -- (-5,-3/2) node[below]{\tiny{$-$}} node[above]{\tiny{$+$}};
\draw[very thick,<-] (-5,-3/2) -- (-6,-2);

\draw[very thick,->] (-4,1) -- (-1.5,1) node[below]{\tiny{$-$}} node[above]{\tiny{$+$}};
\draw[very thick,->] (-1.5,1) -- (1,1) node[below]{\tiny{$-$}} node[above]{\tiny{$+$}};
\draw[very thick,->] (1,1) -- (3.5,1) node[below]{\tiny{$-$}} node[above]{\tiny{$+$}};
\draw[very thick] (3.5,1) -- (4,1);

\draw[very thick,->] (-4,-1) -- (-1.5,-1) node[below]{\tiny{$-$}} node[above]{\tiny{$+$}};
\draw[very thick,->] (-1.5,-1) -- (1,-1) node[below]{\tiny{$-$}} node[above]{\tiny{$+$}};
\draw[very thick,->] (1,-1) -- (3.5,-1) node[below]{\tiny{$-$}} node[above]{\tiny{$+$}};
\draw[very thick] (3.5,-1) -- (4,-1);

\begin{scope}[shift={(-1,0)}]
\draw[very thick,->] 
  (-1.3,-1) .. controls (-1.6,-0.6) .. (-1.8,-0.5) 
    node[right]{\tiny{$-$}} node[left]{\tiny{$+$}};
\draw[very thick] 
  (-1.8,-0.5) .. controls (-2.1,-0.3) .. (-2.3,0);
\draw[very thick,->] 
  (-2.3,0) .. controls (-2.1,0.3) .. (-1.8,0.5) 
    node[right]{\tiny{$-$}} node[left]{\tiny{$+$}};
\draw[very thick] 
  (-1.8,0.5) .. controls (-1.6,0.6) .. (-1.3,1);

\draw[very thick,->] 
  (.3,-1) .. controls (.5,-0.6) .. (.8,-1/2);
\draw[very thick] 
  (.8,-1/2) .. controls (1.1,-0.3) .. (1.3,0);
\draw[very thick,->] 
  (1.3,0) .. controls (1.1,0.3) .. (.8,1/2);
\draw[very thick] 
  (.8,1/2) .. controls (.5,0.6) .. (.3,1);
\end{scope}

\begin{scope}[shift={(4,0)}]
\draw[very thick,->] 
  (-1.3,-1) .. controls (-1.6,-0.6) .. (-1.8,-0.5) 
    node[right]{\tiny{$-$}} node[left]{\tiny{$+$}};
\draw[very thick] 
  (-1.8,-0.5) .. controls (-2.1,-0.3) .. (-2.3,0);
\draw[very thick,->] 
  (-2.3,0) .. controls (-2.1,0.3) .. (-1.8,0.5) 
    node[right]{\tiny{$-$}} node[left]{\tiny{$+$}};
\draw[very thick] 
  (-1.8,0.5) .. controls (-1.6,0.6) .. (-1.3,1);
\end{scope}
  
\node at (-3.4,-1/5) {\small{${\wt{a}}$}};
\node at (-2.1,6/5) {\small{${\wt{a}}^+$}};
\node at (-2.1,-6/5) {\small{${\wt{a}}^-$}};

\node at (.35,-1/5) {\small{${\wt{b}}$}};
\node at (-.5,6/5) {\small{${\wt{b}}^+$}};
\node at (-.5,-6/5) {\small{${\wt{b}}^-$}};

\node at (1.6,-1/5) {\small{${\wt{c}}$}};
\node at (2.7,6/5) {\small{${\wt{c}}^+$}};
\node at (2.7,-6/5) {\small{${\wt{c}}^-$}};

\node at (-5,1/2) {\small{$\Omega_{10}$}};
\node at (-5,-1/2) {\small{$\Omega_1$}};
\node at (1.2,2) {\small{$\Omega_9$}};
\node at (1.2,-2) {\small{$\Omega_2$}};
\node at (-1.5,1/2) {\small{$\Omega_6$}};
\node at (-1.5,-1/2) {\small{$\Omega_3$}};
\node at (1,1/2) {\small{$\Omega_7$}};
\node at (1,-1/2) {\small{$\Omega_4$}};
\node at (3.2,1/2) {\small{$\Omega_8$}};
\node at (3.2,-1/2) {\small{$\Omega_5$}};

\end{tikzpicture}
\caption{$\Sigma_M$, its orientation and corresponding $\pm$ sides and domains $\Omega_i$ for $1\leq i\leq 10$ (case $x_*<x<2$).}
\label{fig:SigmaMqplus}
\end{figure}

We assume that 
\begin{equation}
{\wt{a}}<{\wt{b}}<x<{\wt{c}}.
\end{equation}
We introduce analytic matrix functions $\boldsymbol M_i:\Omega_i\to\mathrm{SL}(2,\mathbb{C})$ as follows:
\begin{equation}
\begin{aligned}
\boldsymbol M_1(z)&=\boldsymbol M_4(z)=\boldsymbol Y(tz)\boldsymbol \Phi_R(tz)\boldsymbol \Delta_-(tz)=\boldsymbol Y(tz)\boldsymbol \Phi_{L}^-(tz)\boldsymbol C_-(tz)^{-1}\boldsymbol \Delta_-(tz),\\
\boldsymbol M_2(z)&=\boldsymbol M_9(z)=\boldsymbol Y(tz)\boldsymbol \Phi_R(tz),\\
\boldsymbol M_3(z)&=\boldsymbol M_5(z)=\boldsymbol Y(tz)\boldsymbol \Phi_R(tz)\boldsymbol \nabla_-(tz),\\
\boldsymbol M_6(z)&=\boldsymbol M_8(z)=\boldsymbol Y(tz)\boldsymbol \Phi_R(tz)\boldsymbol \nabla_+(tz),\\
\boldsymbol M_7(z)&=\boldsymbol M_{10}(z)=\boldsymbol Y(tz)\boldsymbol \Phi_R(tz)\boldsymbol \Delta_+(tz)=\boldsymbol Y(tz)\boldsymbol \Phi_L^+(tz)\boldsymbol C_+(tz)^{-1}\boldsymbol \Delta_+(tz),\\
\end{aligned}
\end{equation}
with the notations introduced in Section~\ref{sec:notationsDeltaNabla} (and a minor abuse of notation in writing equalities like $\boldsymbol M_1=\boldsymbol M_4$, as these are functions with different domains but defined by the same formula).
Note that the assumption~${\wt{b}}<x<{\wt{c}}$ is necessary if we want $\boldsymbol M_3,\boldsymbol M_5,\boldsymbol M_6,\boldsymbol M_8$ to be analytic in $\Omega_3,\Omega_5,\Omega_6,\Omega_8$ (respectively), see~Remark~\ref{remark:entire}

An important property of these matrix functions, which follows from Proposition~\ref{prop:analytic}, is that $\boldsymbol M_i$ is analytic in a proper open neighborhood of $\Omega_i$.
Moreover, we have
\begin{equation}
\begin{aligned}
\e^{-z(\log(zt^{-1})-1)\boldsymbol\sigma_3}\boldsymbol \nabla_+(z)\e^{z(\log(zt^{-1})-1)\boldsymbol\sigma_3}&=\boldsymbol {\mathrm{I}}+O(z^{-\infty})&&\text{as }z\to\infty\text{ uniformly in }\Omega_8\setminus\mathcal N_\delta(\mathbb{Z}'_+),\\
\e^{-z(\log(zt^{-1})-1)\boldsymbol\sigma_3}\boldsymbol \nabla_-(z)\e^{z(\log(zt^{-1})-1)\boldsymbol\sigma_3}&=\boldsymbol {\mathrm{I}}+O(z^{-\infty})&&\text{as }z\to\infty\text{ uniformly in }\Omega_5\setminus\mathcal N_\delta(\mathbb{Z}'_+),\\
\e^{-z(\log(zt^{-1})-1)\boldsymbol\sigma_3}\boldsymbol C_+(z)^{-1}\boldsymbol\Delta_+(z)\e^{z(\log(zt^{-1})-1)\boldsymbol\sigma_3}&=\boldsymbol {\mathrm{I}}+O(z^{-\infty})&&\text{as }z\to\infty\text{ uniformly in }\Omega_{10}\setminus\mathcal N_\delta(\mathbb{Z}'_-),\\
\e^{-z(\log(zt^{-1})-1)\boldsymbol\sigma_3}\boldsymbol C_-(z)^{-1}\boldsymbol\Delta_-(z)\e^{z(\log(zt^{-1})-1)\boldsymbol\sigma_3}&=\boldsymbol {\mathrm{I}}+O(z^{-\infty})&&\text{as }z\to\infty\text{ uniformly in }\Omega_1\setminus\mathcal N_\delta(\mathbb{Z}'_-),
\end{aligned}\end{equation}
for any $\delta>0$.
We infer from Proposition~\ref{prop:asymptotics} that the matrix function $\boldsymbol M:\mathbb{C}\setminus \Sigma_M\to\mathrm{SL}(2,\mathbb{C})$ which equals $(\sqrt{2\pi t})^{\boldsymbol\sigma_3}\boldsymbol M_i$ on $\Omega_i$ is the unique solution to the following Riemann--Hilbert problem.

\begin{cRHp}
\label{cRHp:Mqplus}
Find an analytic function $\boldsymbol M:\mathbb{C}\setminus \Sigma_M\to\mathrm{SL}(2,\mathbb{C})$ such that the following conditions hold true.
\begin{enumerate}[leftmargin=*]
\item Non-tangential boundary values of $\boldsymbol M$ exist and are continuous on $\Sigma_M^\circ$ and satisfy
\begin{equation}
\boldsymbol M_+(z)=\boldsymbol M_-(z)\,\boldsymbol J_M(z),\qquad z\in \Sigma_M^\circ,
\end{equation}
where $\boldsymbol J_M(z)$ is given for $z\in \Sigma_M^\circ$ by 
\begin{equation}
\label{eq:JMqplus}
\boldsymbol J_M(z)=\begin{cases}
\boldsymbol \Delta_\pm(tz)^{\mp 1}&
z\in \Gamma_L^{\pm} \cup ({\wt{b}}^\pm,{\wt{c}}^\pm),
\\
\boldsymbol\nabla_\pm(tz)^{\mp 1}&z\in({\wt{a}}^\pm,{\wt{b}}^\pm)\cup({\wt{c}}^\pm,+\infty\pm\i\epsilon),
\\
\boldsymbol\nabla_-(tz)^{-1}\boldsymbol\nabla_+(tz) = \begin{pmatrix}
1 & -2\pi\i\bigl(1-\varsigma(t(z-x))\bigr) \\ 0 & 1
\end{pmatrix}
&z\in ({\wt{a}},{\wt{b}})\cup({\wt{c}},+\infty),
\\ 
\boldsymbol\Delta_-(tz)^{-1}\boldsymbol\Delta_+(tz)
&z\in (-\infty,{\wt{a}})\cup({\wt{b}},{\wt{c}}),
\\
\boldsymbol\nabla_\pm(tz)^{-1}\boldsymbol\Delta_\pm(tz) & z\in \gamma_{\wt a}^\pm\cup\gamma_{\wt c}^\pm,
\\
\boldsymbol\Delta_\pm(tz)^{-1}\boldsymbol\nabla_\pm(tz) & z\in \gamma_{\wt b}^\pm.
\end{cases}
\end{equation}
\item We have $\boldsymbol M(z)\e^{tz(\log z-1)\boldsymbol\sigma_3}\to\boldsymbol {\mathrm{I}}$ as $z\to\infty$  uniformly in $\mathbb{C}\setminus \Sigma_M$.
\item We have $\boldsymbol M(z)=O(1)$ as $z\to z_0$ uniformly in $\mathbb{C}\setminus\Sigma_M$ for all $z_0\in\Sigma_M\setminus\Sigma_M^\circ$.
\end{enumerate}
\end{cRHp}

We note from Theorem~\ref{thm:CR} that
\begin{equation}
\label{eq:refinedexpansionMqplus}
M(z)=\biggl(\boldsymbol {\mathrm{I}}+\frac 1t\begin{pmatrix}
\alpha+\tfrac 1{24}-t^2 & 2\pi t(\beta+1) \\ \frac{\gamma+t^2}{2\pi t} & -\alpha-\tfrac 1{24}+t^2
\end{pmatrix}z^{-1}+O\bigl(z^{-2}\bigr)\biggr)\,\e^{-tz(\log z-1)\,\boldsymbol\sigma_3}
\end{equation}
as $z\to\infty$ uniformly in $\mathbb{C}\setminus \Sigma_M$.
Here, $\alpha=\alpha(t,xt)$, $\beta=\beta(t,xt)$, and $\gamma=\gamma(t,xt)$ are given in~\eqref{eq:alphabetagamma}.

\subsection{Construction of the \texorpdfstring{$g$}{g}-function}\label{sec:gqplus}

In this case, we will construct the $g$-function ${\wt{g}}(z)$ as a small deformation, when $t$ is large, of the function $g(z)$ employed in Section~\ref{sec:minimizationxq<x<2} in the context of minimization of the logarithmic energy $\mathcal{E}_{\eta,x}$ for $x_*<x<2$.
We will use the function $\wt V_{\eta,x}(z)$ defined in~\eqref{eq:vqminus} in this section too.

For any ${\wt{a}}<{\wt{b}}<{\wt{c}}<{\wt{d}}$, let
\begin{equation}
    {\wt{r}}(z)=\sqrt{(z-{\wt{a}})(z-{\wt{b}})(z-{\wt{c}})(z-{\wt{d}})}
\end{equation} 
analytic for $z\in\mathbb{C}\setminus\bigl([{\wt{a}},{\wt{b}}]\cup[{\wt{c}},{\wt{d}}]\bigr)$ and $\sim z^2$ as $z\to\infty$.

We define
\begin{equation}
\label{eq:gprimeqplus}
\begin{aligned}
{\wt{g}}'(z)&={\wt{r}}(z)\left(\int_{(-\infty,{\wt{a}})\cup({\wt{b}},{\wt{c}})} \frac{\d\nu}{{\wt{r}}(\nu)(\nu-z)}-\frac{1}{2\pi\i}\int_{({\wt{a}},{\wt{b}})\cup({\wt{c}},{\wt{d}})} \frac{\wt V_{\eta,x}'(\nu)\d\nu}{{\wt{r}}_+(\nu)(\nu-z)}\right)
\\
&=\pm\i\pi-{\wt{r}}(z)\left(\int_{\wt{d}}^{+\infty} \frac{\d\nu}{{\wt{r}}(\nu)(\nu-z)}+\frac{1}{2\pi\i}\int_{({\wt{a}},{\wt{b}})\cup({\wt{c}},{\wt{d}})} \frac{\wt V_{\eta,x}'(\nu)\d\nu}{{\wt{r}}_+(\nu)(\nu-z)}\right)
\end{aligned}
\end{equation}
where, in the last line, the sign is determined by $\pm\Im z>0$ and the equality follows from Cauchy's theorem.
Here we assume ${\wt{a}},{\wt{b}},{\wt{c}},{\wt{d}}\in\mathbb{R}$ with ${\wt{b}}<x<{\wt{c}}$ and we will shortly determine the values of ${\wt{a}},{\wt{b}},{\wt{c}},{\wt{d}}$.
The function ${\wt{g}}'(z)$ is analytic for $z\in\mathbb{C}\setminus(-\infty,{\wt{d}}]$ and, by the Sokhotski--Plemelj formulas, the boundary values ${\wt{g}}'_\pm$ from above ($+$) and below ($-$) the real axis exist and are continuous for all $\mu\in (-\infty,{\wt{d}})\setminus\lbrace {\wt{a}},{\wt{b}},{\wt{c}}\rbrace$ and satisfy
\begin{equation}
\begin{aligned}
\label{eq:jumpgprimeqplus}
{\wt{g}}_+'(\mu)+{\wt{g}}_-'(\mu)&=-\wt V_{\eta,x}'(\mu),&&\mu\in ({\wt{a}},{\wt{b}})\cup({\wt{c}},{\wt{d}}),\\
{\wt{g}}_+'(\mu)-{\wt{g}}_-'(\mu)&=2\pi\i,&&\mu\in (-\infty,{\wt{a}})\cup({\wt{b}},{\wt{c}}).
\end{aligned}
\end{equation}

The endpoints ${\wt{a}},{\wt{b}},{\wt{c}},{\wt{d}}$ are fixed by similar arguments as in~Section~\ref{sec:minimizationxq<x<2}. Namely, we first require that in the asymptotic expansion
\begin{equation}
\label{eq:gprimeasympqplus}
{\wt{g}}'(z)=z{\wt{g}}_{-2}+\log z+{\wt{g}}_{-1}+{\wt{g}}_{0}z^{-1}+{\wt{g}}_1z^{-2}+O(z^{-3}),\quad z\to\infty,
\end{equation}
we have
\begin{equation}\label{eq:condqplus1}
{\wt{g}}_{-2}={\wt{g}}_{-1}={\wt{g}}_0=0.
\end{equation}
(We include in~\eqref{eq:gprimeasympqplus} the term of order $z^{-2}$ for later convenience.)
A fourth condition is determined by introducing
\begin{equation}
\label{eq:gqplus}
{\wt{g}}(z)=\int_{\wt{d}}^z {\wt{g}}'(y)\,\d y,
\end{equation}
which is analytic for $z\in\mathbb{C}\setminus(-\infty,{\wt{d}}]$.
By~\eqref{eq:jumpgprimeqplus} we have
\begin{equation}
{\wt{g}}_+(\mu)+{\wt{g}}_-(\mu)=-\wt V_{\eta,x}(\mu)+\begin{cases}\wt{\ell}_1 & \text{if }\mu\in ({\wt{a}},{\wt{b}})
\\
\wt \ell &\text{if } \mu\in ({\wt{c}},{\wt{d}})
\end{cases}
\end{equation}
with
\begin{equation}
\wt{\ell}_1=\wt V_{\eta,x}({\wt{d}})-\wt V_{\eta,x}({\wt{c}})+\wt V_{\eta,x}({\wt{b}})-\int_{\wt{b}}^{\wt{c}}\bigl({\wt{g}}'_+(\mu)+{\wt{g}}_-'(\mu)\bigr)\d \mu,\qquad
\wt\ell=\wt V_{\eta,x}({\wt{d}}).
\end{equation}
The last condition needed for the determination of the endpoints ${\wt{a}},{\wt{b}},{\wt{c}},{\wt{d}}$ is supplied by the requirement $\wt{\ell}_1=\wt\ell$.

Similarly to Section~\ref{sec:minimizationxq<x<2}, it is convenient to work with the elliptic uniformization of the Riemann surface of ${\wt{r}}(z)$. Namely, we introduce
\begin{equation}
\label{eq:newelliptic}
{\wt{m}}= \frac{\i\pi}{\int_{\wt{d}}^{\wt{c}}\frac{\d\nu}{{\wt{r}}_+(\nu)}},\quad
{\wt{K}} = {\wt{m}}\int_{\wt{c}}^{\wt{b}}\frac{\d\nu}{{\wt{r}}(\nu)},\quad
{\wt{w}}_\infty = {\wt{m}}\int_{\wt{d}}^{+\infty}\frac{\d\nu}{{\wt{r}}(\nu)}.
\end{equation}
as well as 
\begin{equation}
\label{eq:newconformalelliptic}
{\wt{w}}(z)={\wt{m}}\int_{\wt{d}}^z\frac{\d\nu}{{\wt{r}}(\nu)}.
\end{equation}
Writing
\begin{equation}
\label{eq:twolinesg}
\begin{aligned}
{\wt{g}}'(z)&=
{\wt{r}}(z)\left(\int_{(-\infty,{\wt{a}})\cup({\wt{b}},{\wt{c}})} \frac{\d\nu}{{\wt{r}}(\nu)(\nu-z)}-\frac{\eta}{2\pi\i}\int_{\wt{c}}^{\wt{d}} \frac{\d\nu}{{\wt{r}}_+(\nu)(\nu-z)}\right)
\\
&\qquad-\frac{{\wt{r}}(z)}{2\pi\i}\int_{\wt{a}}^{\wt{b}} \frac{\wt V_{\eta,x}'(\nu)}{{\wt{r}}_+(\nu)}\frac{\d\nu}{\nu-z}-\frac{{\wt{r}}(z)}{2\pi\i}\int_{\wt{c}}^{\wt{d}} \frac{\wt V_{\eta,x}'(\nu)-\eta}{{\wt{r}}_+(\nu)}\frac{\d\nu}{\nu-z},
\end{aligned}
\end{equation}
the second line is $O\left((1+|z|)t^{-\infty}\right)$ uniformly in $z\in\mathbb{C}\setminus(-\infty,{\wt{d}}]$ (assuming ${\wt{b}}<x<{\wt{c}}$) and the first one can be rewritten using the same arguments as in Section~\ref{sec:minimizationxq<x<2}, thus yielding
\begin{equation}
\label{eq:gprimeellipticprelimit}
{\wt{g}}'(z)=-\frac{\zeta(\i\pi|{\wt{K}},\i\pi)}{\i\pi}\eta{\wt{w}}(z)-\frac{\eta}2-\log\frac{\sigma\bigl({\wt{w}}_\infty-{\wt{w}}(z)|{\wt{K}},\i\pi\bigr)}{\sigma\bigl({\wt{w}}_\infty+{\wt{w}}(z)|{\wt{K}},\i\pi\bigr)}+O\left((1+|z|)t^{-\infty}\right).
\end{equation}

This discussion shows that the system ${\wt{g}}_{-2}={\wt{g}}_{-1}={\wt{g}}_0=\wt{\ell}_1-\wt\ell=0$ has the form
\begin{equation}
\Xi({\wt{w}}_\infty,{\wt{m}},{\wt{K}},{\wt{d}})+O(t^{-\infty}) = (0,0,0,0)
\end{equation}
where $\Xi$ has been introduced in~\eqref{eq:Xi} and where the remainder $O(t^{-\infty})$ as $t\to+\infty$ is uniform for $x_*+\delta \leq x\leq 2-\delta$ and for ${\wt{a}}<{\wt{b}}<{\wt{c}}<{\wt{d}}$ and ${\wt{b}}+\delta\leq x\leq {\wt{c}}-\delta$ (for any $\delta>0$).
In other words, the system ${\wt{g}}_{-2}={\wt{g}}_{-1}={\wt{g}}_0=\wt{\ell}_1-\wt\ell=0$ is a small deformation, of order $O(t^{-\infty})$, of the system determining $w_\infty,m,K,d$ considered in Section~\ref{sec:minimizationxq<x<2}.
Since the Jacobian determinant of that system is uniformly away from zero if $K$ is bounded away from infinity (see~Remark~\ref{remark:Jacobian}), we obtain from the implicit function theorem that, for $t$ sufficiently large, the system ${\wt{g}}_{-2}={\wt{g}}_{-1}={\wt{g}}_0=\wt{\ell}_1-\wt\ell=0$ determines uniquely the parameters ${\wt{w}}_\infty,{\wt{m}},{\wt{K}},{\wt{d}}$ introduced in~\eqref{eq:newelliptic}, and, moreover,
\begin{equation}
\label{eq:parameterscloseqplus}
{\wt{w}}_\infty =\frac{\eta}2+O(t^{-\infty}),\quad
{\wt{m}} =\mathcal{U}\bigl(\mathcal{K}(x)\bigr)+O(t^{-\infty}),\quad
{\wt{K}} =\mathcal{K}(x)+O(t^{-\infty}).
\end{equation}
In turn, these parameters determine the endpoints ${\wt{a}},{\wt{b}},{\wt{c}},{\wt{d}}$ by formulas analogous to those shown in Section~\ref{sec:minimizationxq<x<2}, see~\eqref{eq:abcdapprox}.
In particular,
\begin{equation}
\label{eq:endpointscloseqplus}
{\wt{a}}=a+O(t^{-\infty}),\quad
{\wt{b}}=b+O(t^{-\infty}),\quad
{\wt{c}}=c+O(t^{-\infty}),\quad
{\wt{d}}=d+O(t^{-\infty}),
\end{equation}
as $t\to+\infty$ uniformly for $x\in [x_*+\delta,2-\delta]$ (for any $\delta>0$), where $a,b,c,d$ are the endpoints in~\eqref{eq:maintheorem:endpoints}.

From now on we assume that $t$ is sufficiently large and that ${\wt{a}},{\wt{b}},{\wt{c}},{\wt{d}}$ are determined as we just explained.

As $z\to\infty$,
\begin{equation}
{\wt{g}}(z)=z(\log z-1)+{\wt{g}}_\infty-{\wt{g}}_1 z^{-1}+O(z^{-2})
\end{equation}
with ${\wt{g}}_1$ and ${\wt{g}}_\infty$ independent of $z$.
With similar arguments we obtain that
\begin{equation}
\label{eq:ginftyqplus}
{\wt{g}}_1=g_1+O(t^{-\infty}),\qquad {\wt{g}}_\infty=g_\infty+O(t^{-\infty}),
\end{equation}
where $g_1$ and $g_\infty$ have been explicitly computed in~Proposition~\ref{prop:g1ginftyqplus} and the error terms are uniform for $x\in[x_*+\delta,2-\delta]$ for any $\delta>0$.
Moreover, we have
\begin{equation}
\label{eq:jumpqplus}
\begin{aligned}
{\wt{g}}_+(\mu)-{\wt{g}}_-(\mu) &= 2\pi\i \mu, &&\mu\in (-\infty,{\wt{a}}),\\
{\wt{g}}_+(\mu)-{\wt{g}}_-(\mu) &= 2\pi\i(\mu + {\wt{\mathcal{L}}}), && \mu\in ({\wt{b}},{\wt{c}})\\
{\wt{g}}_+(\mu)+{\wt{g}}_-(\mu)&=-\wt V_{\eta,x}(\mu)+\wt{\ell} &&\mu\in ({\wt{a}},{\wt{b}})\cup({\wt{c}},{\wt{d}}),
\end{aligned}
\end{equation}
where $\wt{\mathcal{L}}=\wt{\mathcal{L}}(x)$ is given by
\begin{equation}
\label{eq:Omega2cut}
{\wt{\mathcal{L}}}=-{\wt{c}}-\frac 1{2\pi\i}\int_{\wt{c}}^{\wt{d}}\bigl({\wt{g}}'_+(\mu)-{\wt{g}}_-'(\mu)\bigr)\d \mu.
\end{equation}
In deriving~\eqref{eq:jumpqplus}, we have also used
\begin{equation}
\int_{({\wt{a}},{\wt{b}})\cup({\wt{c}},{\wt{d}})}\bigl({\wt{g}}'_+(\mu)-{\wt{g}}_-'(\mu)\bigr)\d \mu=-2\pi\i ({\wt{a}}-{\wt{b}}+{\wt{c}}),
\end{equation}
which follows from Cauchy's theorem.
Moreover,
\begin{equation}
\label{eq:asympellqplus}
\wt{\ell}= \wt V_{\eta,x}({\wt{d}})  = \eta(d-x)+O(t^{-\infty}),
\end{equation}
as $t\to+\infty$, uniformly for $x\in[x_*+\delta,2-\delta]$ for any $\delta>0$.

\begin{proposition}
\label{prop:L}
    We have ${\wt{\mathcal{L}}}=\mathcal{L}+O(t^{-\infty})$ with
    \begin{equation}
    \label{eq:L}
    \mathcal{L}(x) = \mathcal{U}\bigl(\mathcal{K}(x)\bigr)\frac{\mathcal{V}\bigl(\mathcal{K}(x)\bigr) - 1}{\mathcal{K}(x)}=-\left.\frac{\partial \mathcal{U}(K)}{\partial K}\right|_{K=\mathcal{K}(x)}.
    \end{equation}
\end{proposition}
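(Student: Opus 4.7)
The plan is to reduce to the equilibrium case and then evaluate the constant $\mathcal{L}$ through the explicit elliptic formula for $F(w)=g(z(w))$ from Section~\ref{sec:minimizationxq<x<2}. First, by \eqref{eq:endpointscloseqplus} and the fact that $\wt g'(z)$ is a perturbation of $g'(z)$ of order $O(t^{-\infty})$ on any compact set bounded away from the line $\Re z=x$ (so, in particular, on a neighborhood of $[\wt c,\wt d]$ once we take $t$ large), a direct comparison with \eqref{eq:Omega2cut} yields $\wt{\mathcal{L}}=\mathcal{L}^{\mathrm{eq}}+O(t^{-\infty})$, where $\mathcal{L}^{\mathrm{eq}}$ is the analog of $\wt{\mathcal{L}}$ built from the limiting $g$-function of Section~\ref{sec:minimizationxq<x<2}. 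Therefore, it suffices to show that $\mathcal{L}^{\mathrm{eq}}=\mathcal{L}(x)$.

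Next, I will recast $\mathcal{L}^{\mathrm{eq}}$ as the constant appearing in the jump of $g$ on $(b,c)$. Indeed, using \eqref{eq:gprimefrakrightjumps}, $g'_+-g'_-=2\pi\i$ on $(b,c)$, and $g$ is analytic in a neighborhood of $(d,+\infty)$ with $g(d)=0$; integrating against $\mu$ and matching at $\mu=c$ from the two sides shows that the jump on $(b,c)$ has the form $g_+(\mu)-g_-(\mu)=2\pi\i(\mu+\mathcal{L}^{\mathrm{eq}})$, in perfect analogy with \eqref{eq:jumpqplus}. In particular, evaluating at the branch point $\mu=b$ (the endpoint common to the band $(a,b)$ and the gap $(b,c)$), one has $g_+(b)-g_-(b)=2\pi\i(b+\mathcal{L}^{\mathrm{eq}})$.

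The core of the argument is the evaluation of $g_+(b)-g_-(b)$ via the elliptic uniformization. Under $w\mapsto z(w)$, the point $z=b$ corresponds to $w=K\pm\i\pi$, and $g_\pm(b)=F(K\pm\i\pi)$ with $F$ given by the explicit formula \eqref{eq:F} (with the log term dropped thanks to $g_0=0$). Writing
\[
F(K+\i\pi)-F(K-\i\pi) = b\bigl(f(K+\i\pi)-f(K-\i\pi)\bigr) - 2\pi\i\, B + m\bigl[\text{difference of }\zeta\text{ terms}\bigr],
\]
where $B:=A(d-2m\zeta(w_\infty))-m\frac{\sigma''(2w_\infty)}{\sigma(2w_\infty)}$, I invoke $f(K+\i\pi)-f(K-\i\pi)=2\pi\i$ (coming from \eqref{condh3}) together with the quasi-periodicity $\zeta(u+2\i\pi)=\zeta(u)+2\zeta(\i\pi)$ applied to $\zeta(w_\infty\pm K\pm\i\pi)$. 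The result is
\[
\mathcal{L}^{\mathrm{eq}} = -B + 2m\,\frac{\zeta(\i\pi)}{\i\pi}.
\]

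To conclude, I will use the endpoint condition \eqref{eq:endpointselliptic4} with $A=-\eta\zeta(\i\pi)/(\i\pi)$: multiplying through by $m$ and rearranging yields $B=\frac{\eta x}{2K}+\frac{2m\zeta(K)}{K}$. Substituting and applying the Legendre identity \eqref{eq:LegendreIdentity} in the form $\frac{\zeta(K)}{K}-\frac{\zeta(\i\pi)}{\i\pi}=\frac{1}{2K}$ collapses the expression to
\[
\mathcal{L}^{\mathrm{eq}} \;=\; -\frac{\eta x/2+m}{K} \;=\; -\frac{\eta x/2+\mathcal{U}(\mathcal{K})}{\mathcal{K}},
\]
which, via $\mathcal{U}\mathcal{V}=-\eta x/2$, agrees with the two equivalent forms of $\mathcal{L}(x)$ in \eqref{eq:defL}. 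The only delicate point is the bookkeeping of the branch of the logarithm defining $f$ in Proposition~\ref{prop:h} when evaluating $f(K\pm\i\pi)$; this is bypassed entirely by using \eqref{condh3}, which gives the jump $2\pi\i$ unambiguously. No other step presents a real obstacle.
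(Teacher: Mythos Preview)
Your proof is correct and follows essentially the same approach as the paper: both reduce to the equilibrium $g$-function and evaluate the $2\pi\i$-jump of $F(w)=g(z(w))$ across a vertical period using the explicit formula~\eqref{eq:F}, arriving at $\mathcal{L}=-B+2m\,\zeta(\i\pi)/(\i\pi)$. The only cosmetic differences are that the paper evaluates at $w=\pm\i\pi$ (i.e., at $z=c$) rather than your $w=K\pm\i\pi$ (i.e., at $z=b$), and the paper simplifies $B$ by expanding it directly in Weierstrass functions via~\eqref{eq:endpointselliptic3} and $\sigma''/\sigma=\zeta^2-\wp$, whereas your use of~\eqref{eq:endpointselliptic4} together with the Legendre identity is a slightly cleaner shortcut to the same endpoint.
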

\begin{proof}
 By~\eqref{eq:Omega2cut}, we have
 \begin{equation}
 {\wt{\mathcal{L}}} = -{\wt{c}}+\frac 1{2\pi\i}\int_{\wt{d}}^{\wt{c}}\bigl({\wt{g}}'_+(\mu)-{\wt{g}}_-'(\mu)\bigr)\d \mu = -{\wt{c}} +\frac 1{2\pi\i}\left({\wt{g}}_+({\wt{c}})-{\wt{g}}_-({\wt{c}})\right).
 \end{equation}
Therefore ${\wt{\mathcal{L}}}=\mathcal{L}+O(t^{-\infty})$ with $\mathcal{L}=-c +\frac 1{2\pi\i}\left(F(\i\pi)-F(-\i\pi)\right)$, where $F$ is defined in~\eqref{eq:lastintegral}.
Using the expression for $F$ given in~\eqref{eq:F}, we evaluate $\frac{1}{2 \pi \i}\left(F(\i \pi) - F(-\i\pi) \right)$ using the following identities: first,
\begin{equation}
\left(f(w)z(w)+d\frac{\eta}2\right)\Big|_{w = -\i \pi}^{w = \i \pi} =
f(w)z(w)\Big|_{w = -\i \pi}^{w = \i \pi} =(g'_+(c)-g'_-(c)) c=2\pi\i \,c,
\end{equation}
stemming from $z(\pm\i\pi)=c$ and~\eqref{eq:gprimefrakrightjumps}; then,
\begin{equation}
-w\left(A\bigl(d - 2 m\zeta(w_\infty)\bigr)+m\frac{\sigma''(2w_\infty)}{\sigma(2w_\infty)}\right)\Big|_{w = -\i \pi}^{w = \i \pi} = 2\pi\i\,m \left( \left(\frac{\zeta(\i \pi)}{\i\pi}\eta - \zeta(\eta) \right)^2 - \wp(\eta)\right)
\end{equation}
stemming from the explicit expression of $A = -\frac{\zeta(\i \pi)}{\i \pi}\eta$, the expression \eqref{eq:endpointselliptic3} for $d$, and the identity $\frac{\sigma''(w)}{\sigma(w)} = \zeta^2(w) - \wp(w)$; finally,
\begin{equation}
m(\zeta(w_\infty + w) - \zeta(w_\infty - w))\Big|_{w = -\i \pi}^{w = \i \pi} = 2m\left( \zeta(w_\infty + \i \pi) - \zeta(w_\infty - \i \pi)\right) = 4m\zeta(\i \pi).
\end{equation}
Combinining these identities and using~\eqref{eq:f1} and~\eqref{eq:f2} completes the proof.
\end{proof}

\subsection{Normalization of the continuous Riemann--Hilbert problem}

Let $\Sigma_N=\Sigma_M$ and $\Sigma_N^\circ=\Sigma_M^\circ\setminus\lbrace {\wt{d}}\rbrace$, with the same orientation.
Introduce the analytic matrix function $\boldsymbol N:\mathbb{C}\setminus \Sigma_N\to\mathrm{SL}(2,\mathbb{C})$ by
\begin{equation}
\label{eq:defNqplus}
\boldsymbol N(z)=\begin{pmatrix}
-\frac 1{2\pi\i} & 0 \\ 0 & 1
\end{pmatrix}\,\e^{t(\frac{\wt{\ell}} 2-{\wt{g}}_\infty)\boldsymbol\sigma_3}\,\boldsymbol M(z)\,\e^{t ({\wt{g}}(z)-\frac {\wt{\ell}} 2)\boldsymbol\sigma_3}\,\begin{pmatrix}
-2\pi\i & 0 \\ 0 & 1
\end{pmatrix}.
\end{equation}
The construction of ${\wt{g}}(z)$ carried out in the previous paragraph ensures that $\boldsymbol N(z)$ is the unique solution to the following Riemann--Hilbert problem.

\begin{cRHp}
\label{cRHp:Nqplus}
Find an analytic function $\boldsymbol N:\mathbb{C}\setminus \Sigma_N\to\mathrm{SL}(2,\mathbb{C})$ such that the following conditions hold true.
\begin{enumerate}[leftmargin=*]
\item Non-tangential boundary values of $\boldsymbol N$ exist and are continuous on $\Sigma_N^\circ$ and satisfy
\begin{equation}
\boldsymbol N_+(z)=\boldsymbol N_-(z)\boldsymbol J_N(z),\qquad z\in \Sigma_N^\circ,
\end{equation}
where $\boldsymbol J_N(z)$ is given for $z\in \Sigma_N^\circ$ by 
\begin{equation}
\label{eq:JNqplus1}
\boldsymbol J_N(z)=
\begin{pmatrix}
-\frac 1{2\pi\i} & 0 \\ 0 & 1
\end{pmatrix}
\e^{-t({\wt{g}}_-(z)-\frac{\wt{\ell}} 2)\boldsymbol\sigma_3}\boldsymbol J_M(z)\e^{t({\wt{g}}_+(z)-\frac{\wt{\ell}} 2)\boldsymbol\sigma_3}
\begin{pmatrix}
-2\pi\i & 0 \\ 0 & 1
\end{pmatrix}
\end{equation}
if $z\in (-\infty,{\wt{a}})\cup({\wt{a}},{\wt{b}})\cup({\wt{b}},{\wt{c}})\cup({\wt{c}},{\wt{d}})$ and by
\begin{equation}
\boldsymbol J_N(z)=\begin{pmatrix}
-\frac 1{2\pi\i} & 0 \\ 0 & 1
\end{pmatrix}
\e^{-t({\wt{g}}(z)-\frac{\wt{\ell}} 2)\boldsymbol\sigma_3}\boldsymbol J_M(z)\e^{t({\wt{g}}(z)-\frac{\wt{\ell}} 2)\boldsymbol\sigma_3}
\begin{pmatrix}
-2\pi\i & 0 \\ 0 & 1
\end{pmatrix}
\end{equation}
otherwise.
\item We have $\boldsymbol N(z)\to\boldsymbol {\mathrm{I}}$ as $z\to\infty$  uniformly in $\mathbb{C}\setminus \Sigma_N$.
\item We have $\boldsymbol N(z)=O(1)$ as $z\to z_0$ uniformly in $\mathbb{C}\setminus\Sigma_N$ for all $z_0\in\Sigma_N\setminus\Sigma_N^\circ$.
\end{enumerate}
\end{cRHp}

We note from~\eqref{eq:refinedexpansionMqplus} and~\eqref{eq:defNqplus} that
\begin{equation}
\label{eq:refinedexpansionNqplus}
\boldsymbol N(z)=\boldsymbol {\mathrm{I}}+\begin{pmatrix}
\wh\alpha-t(1+{\wt{g}}_1)+\tfrac 1{24t} & \i\e^{t(\wt{\ell}-2{\wt{g}}_\infty)}\wh\beta \\ -\i\e^{t(2{\wt{g}}_\infty-\wt{\ell})}\wh\gamma & -\wh\alpha+t(1+{\wt{g}}_1)-\tfrac 1{24t}
\end{pmatrix}z^{-1}+O\bigl(z^{-2}\bigr)
\end{equation}
as $z\to\infty$ uniformly in $\mathbb{C}\setminus \Sigma_N$.
Here, $\wh\alpha=\wh\alpha(t,x)$, $\wh\beta=\wh\beta(t,x)$, and $\wh\gamma=\wh\gamma(t,x)$ are as in~\eqref{eq:wtalphabetagamma}

To write down the jump matrix $\boldsymbol J_N(z)$ in a more explicit way, it is convenient to introduce
\begin{equation}
\label{eq:varphiqplus}
\varphi(z) = 2{\wt{g}}(z)+\wt V_{\eta,x}(z)-\wt{\ell}
\end{equation}
as well as
\begin{equation}
\varphi_1(z) = \varphi(z)\mp2\pi \i z,\quad
\varphi_2(z) = \varphi(z)\mp 2\pi\i (z+{\wt{\mathcal{L}}}),
\qquad \pm\Im z>0.
\end{equation}

We introduce the following quantities, with $z_0\in\lbrace a,b,c,d\rbrace$:
\begin{equation}\label{eq:Tz0_Sz0}
T_{z_0} = \prod_{z\in\lbrace a,b,c,d\rbrace\setminus\lbrace z_0\rbrace}|z_0-z|^{1/2},\quad
S_{z_0} =\pm\frac 16 \sum_{z\in\lbrace a, b, c, d\rbrace\setminus\lbrace z_0\rbrace}(z_0-z)^{-1} ,
\end{equation}
where the sign is $+$ for $z_0\in\lbrace a,c\rbrace$ and $-$ for $z_0\in\lbrace b,d\rbrace$, as well as
\begin{equation}
\label{eq:Az0Bz0}
\begin{aligned}
A_{z_0} &= \frac {8\mathcal{U}(\mathcal{K})}{3T_{z_0}}C_{z_0},
\\
B_{z_0} &= \frac {8\mathcal{U}(\mathcal{K})}{5T_{z_0}}\left(S_{z_0}C_{z_0}-\frac {2\mathcal{U}(\mathcal{K})^2}{3T_{z_0}^2}\left(\wp'(\frac{\eta}2+w(z_0))+\wp'(\frac{\eta}2-w(z_0))\right)\right),
\\
C_{z_0} &= \frac 1{2K}\left(\frac{\vartheta_{11}'}{\vartheta_{11}}\bigl(\frac {\frac{\eta}2+w(z_0)}{2K}\bigr)+\frac{\vartheta_{11}'}{\vartheta_{11}}\bigl(\frac {\frac{\eta}2-w(z_0)}{2K}\bigr)+\eta\right)
\\
&=-\frac{\zeta(\i\pi)}{\i\pi}\eta+\zeta\bigl(\frac{\eta}{2}+w(z_0)\bigr)+\zeta\bigl(\frac{\eta}{2}-w(z_0)\bigr),\\
\end{aligned}
\end{equation}
where $\vartheta_{11}(w)=\vartheta_{11}(w|\i\pi/\mathcal{K})$ and $\frac{\vartheta_{11}'}{\vartheta_{11}}$ is the log-derivative in the argument of the theta function, and the half-periods of $\wp'$ and $\zeta$ are $\mathcal{K},\i\pi$, and $\eta=-\log q$.
The equivalence of the two expressions for $C_{z_0}$ follows from~\eqref{eq:LegendreIdentity} and~\eqref{eq:relzetatheta}.

We recall that $w(z)$ is the conformal transformation defined in~\eqref{eq:conformalelliptic}.
In particular, $w(a)=K$, $w(b)=\mathcal{K}+\i\pi$, $w(c)=\i\pi$, and $w(d)=0$, and we have the expansion
\begin{equation}
\label{eq:expansionwendpoints}
w(z)=w(z_0)\pm\frac{2m}{T_{z_0}}\biggl((\pm(z-z_0))^{1/2}+S_{z_0}(\pm(z-z_0))^{3/2}+O\bigl((z-z_0)^{5/2}\bigr)\biggr)
\end{equation}
as $z\to z_0\in\lbrace a,b,c,d\rbrace$, where the sign is~$+$ if~$z_0\in\lbrace b,d\rbrace$ and~$-$ if~$z_0\in\lbrace a,c\rbrace$.

\begin{proposition}
\label{prop:varphiqplus}
The following properties hold true, for $t$ sufficiently large.
\begin{enumerate}[leftmargin=*]
    \item The function $\varphi(z)$ is analytic for $z\in\mathbb{C}\setminus\bigl((-\infty,{\wt{d}}]\cup(\i \mathbb{R}+x)\bigr)$.
    It has non-tangential boundary values $\varphi_\pm(\mu)$ for all $\mu\in(-\infty,{\wt{d}})\setminus\lbrace {\wt{a}},{\wt{b}},{\wt{c}}\rbrace$ such that
    \begin{align}
        \label{eq:jumpphiqplus1}
        \varphi_\pm(\mu)&=\pm({\wt{g}}_+(\mu)-{\wt{g}}_-(\mu)\bigr),&&\mu\in({\wt{a}},{\wt{b}})\cup({\wt{c}},{\wt{d}}),
        \\
        \label{eq:jumpphiqplus2}
        \varphi_+(\mu)-\varphi_-(\mu)&= 4\pi\i\,\mu,&&\mu\in(-\infty,{\wt{a}}),
        \\
        \label{eq:jumpphiqplus3}
        \varphi_+(\mu)-\varphi_-(\mu)&= 4\pi\i (\mu+{\wt{\mathcal{L}}}),&&\mu\in({\wt{b}},{\wt{c}}).
    \end{align}
    where ${\wt{\mathcal{L}}}$ is defined in~\eqref{eq:Omega2cut}. 
    \item There exist a neighborhood of $z={\wt{d}}$ and a function $\varphi_{\wt{d}}(z)$ analytic in that neighborhood such that $\varphi(z)=(z-{\wt{d}})^{3/2}\varphi_{\wt{d}}(z)$ (with principal branch) and that
    \begin{equation}
        \varphi_{\wt{d}}({\wt{d}}) = A_d+O(t^{-\infty}),
        \qquad
        \varphi_{\wt{d}}'({\wt{d}}) = B_d+O(t^{-\infty}),
    \end{equation}

    \item The function $\varphi_1(z)$ is analytic for $z\in\mathbb{C}\setminus\bigl([{\wt{a}},+\infty)\cup(\i\mathbb{R}+x)\bigr)$.
    Moreover, there exist a neighborhood of $z={\wt{a}}$ and a function $\varphi_{\wt{a}}(z)$ analytic in that neighborhood and such that $\varphi_1(z) = -({\wt{a}}-z)^{3/2}\varphi_{\wt{a}}(z)$ (with principal branch) and that
    \begin{equation}
        \varphi_{\wt{a}}({\wt{a}}) = -A_a+O(t^{-\infty}),
        \qquad
        \varphi_{\wt{a}}'({\wt{a}}) = -B_a+O(t^{-\infty}),
    \end{equation}

    \item The function $\varphi_2(z)$ is analytic for $z\in\mathbb{C}\setminus\bigl((-\infty,{\wt{b}}]\cup[{\wt{c}},+\infty)\cup(\i\mathbb{R}+x)\bigr)$.
    Moreover, there exist neighborhoods of $z={\wt{b}}$ and $z={\wt{c}}$ as well as functions $\varphi_{\wt{b}}(z)$ and $\varphi_{\wt{c}}(z)$ analytic in these neighborhoods and such that
    \begin{equation}
        \varphi_2(z) = -(z-{\wt{b}})^{3/2}\varphi_{\wt{b}}(z),\quad
        \varphi_2(z) = -({\wt{c}}-z)^{3/2}\varphi_{\wt{c}}(z),
    \end{equation}
    (valid in the respective neighborhoods, with principal branches of the square roots) and that
    \begin{equation}
    \begin{aligned}
        \varphi_{\wt{b}}({\wt{b}}) &= A_b+O(t^{-\infty}),
        &&\varphi_{\wt{b}}'({\wt{b}}) =B_b+O(t^{-\infty}),
        \\
        \varphi_{\wt{c}}({\wt{c}}) &= A_c+O(t^{-\infty}),
        &&\varphi_{\wt{c}}'({\wt{c}}) = B_c+O(t^{-\infty}),
    \end{aligned}
    \end{equation}
\end{enumerate}
The neighborhoods in the above statements can be chosen independent of $t$ and $x$, provided $x\in [x_*+\delta,2-\delta]$ for some $\delta>0$.
\end{proposition}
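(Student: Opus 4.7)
The plan is to follow the pattern of the one-cut analog (Proposition~\ref{prop:varphiqminus}), with additional bookkeeping required by the four endpoints and the three auxiliary functions $\varphi$, $\varphi_1$, $\varphi_2$. The analyticity statements in items~(1), (3), (4) and the jump relations~\eqref{eq:jumpphiqplus1}--\eqref{eq:jumpphiqplus3} follow directly from the definition~\eqref{eq:varphiqplus}, the jump conditions~\eqref{eq:jumpqplus} for~${\wt{g}}$, and the analyticity of~$\wt V_{\eta,x}(z)$ off~$\i\mathbb{R}+x$. Indeed, the linear term $\mp 2\pi\i z$ in~$\varphi_1$ was engineered precisely to cancel the jump $4\pi\i\mu$ of~$\varphi$ across~$(-\infty,{\wt{a}})$, and analogously $\varphi_2$ is designed to be continuous across~$({\wt{b}},{\wt{c}})$.

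For the local behavior at each endpoint $z_0\in\{{\wt{a}},{\wt{b}},{\wt{c}},{\wt{d}}\}$ I would proceed in a uniform two-step argument. Let $\Psi$ denote whichever of $\varphi,\varphi_1,\varphi_2$ is the relevant function near~$z_0$ (chosen so that in a small disk around~$z_0$ the only cut of $\Psi$ is the local piece of the adjacent band). Inserting the variational relation ${\wt{g}}_+ + {\wt{g}}_- = -\wt V + \wt\ell$ into the definition of $\Psi$ yields the antisymmetry $\Psi_+ + \Psi_- = 0$ locally across that cut. Solving the two linear relations coming from the sum and difference jumps of~${\wt{g}}$ at~$z_0$ gives the explicit boundary values $\varphi({\wt{d}}) = \wt V({\wt{d}}) - \wt\ell = 0$, $\varphi_1({\wt{a}}^+) = 2\pi\i{\wt{a}} - 2\pi\i{\wt{a}} = 0$, $\varphi_2({\wt{b}}^+) = 2\pi\i({\wt{b}}+{\wt{\mathcal{L}}}) - 2\pi\i({\wt{b}}+{\wt{\mathcal{L}}}) = 0$, and similarly $\varphi_2({\wt{c}}^+) = 0$, so that $\Psi(z_0) = 0$ in every case. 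The antisymmetry then makes $\Psi^2$ single-valued in a punctured neighborhood of~$z_0$ and, by Riemann's removable singularity theorem, analytic through~$z_0$; combined with $\Psi(z_0) = 0$, this forces a factorization of~$\Psi$ as a half-integer power of $(z-z_0)$ (with the branch convention of the statement) times a function analytic at~$z_0$, the odd exponent $(2k+1)/2$ being determined by the order of vanishing.

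To pin down the exponent $3/2$ and to evaluate the coefficients $A_{z_0}, B_{z_0}$, I would use the elliptic uniformization of Section~\ref{sec:gqplus} together with the estimate ${\wt{g}}'(z) = f(w(z)) + O(t^{-\infty}(1+|z|))$ extracted from~\eqref{eq:gprimeellipticprelimit}, with $f$ as in Proposition~\ref{prop:h}. The endpoints ${\wt{d}},{\wt{a}},{\wt{c}},{\wt{b}}$ correspond, up to $O(t^{-\infty})$, to the four corners $w = 0, \mathcal{K}, \i\pi, \mathcal{K}+\i\pi$ of the fundamental rectangle, and at each of these corners the boundary value of~$f$ taken from inside the rectangle on the upper half plane side is an elementary constant ($-\eta/2$, $\i\pi$, $-\eta/2 + \i\pi$, $\i\pi$ respectively), forced by the symmetries of $f$ established in the proof of Proposition~\ref{prop:h} together with the jump of $f$ across the internal cut $(w_\infty,\mathcal{K})$. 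These values precisely offset the remaining linear-in-$z$ contributions in $\Psi'(z) = 2 f(w(z)) + \text{(linear)}(z) + O(t^{-\infty})$ at $z_0$, confirming $\Psi'(z_0) = 0$ and hence that the exponent is at least $3/2$. Taylor-expanding $f$ to two nontrivial orders around $w(z_0)$ and combining with the two-term expansion~\eqref{eq:expansionwendpoints} of $w(z)$, then integrating from~$z_0$ using $\Psi(z_0) = 0$, produces a leading term of the form $\frac{8\,\mathcal{U}(\mathcal{K})}{3\,T_{z_0}}\,f'(w(z_0))\,(z-z_0)^{3/2}$. The identity $f'(w) = A + \zeta(w_\infty-w) + \zeta(w_\infty+w)$ identifies $f'(w(z_0)) = C_{z_0}$ as in~\eqref{eq:Az0Bz0}, so this leading coefficient is $\pm A_{z_0}$ with the sign fixed by the statement; pushing the expansion one order further, using $f''(w) = \wp(w_\infty-w) - \wp(w_\infty+w)$ and $f'''(w) = -\wp'(w_\infty-w) - \wp'(w_\infty+w)$, recovers the subleading coefficient $\pm B_{z_0}$.

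The hardest part of executing this plan will be the careful tracking of signs and branches: the sign in~\eqref{eq:expansionwendpoints} distinguishes the ``left'' endpoints $({\wt{a}},{\wt{c}})$ from the ``right'' endpoints $({\wt{b}},{\wt{d}})$, and the boundary values of $f$ at the corners $\mathcal{K}$, $\i\pi$, $\mathcal{K}+\i\pi$ depend on which side of the internal cut $(w_\infty,\mathcal{K})$ of~$f$ is approached, requiring consistent choices aligned with the orientation of the local cut of~$\Psi$ at $z_0$. Beyond this bookkeeping, the computation reduces to Taylor expansion and to the elliptic identities already deployed in Section~\ref{sec:minimizationxq<x<2}. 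The uniformity of the $O(t^{-\infty})$ errors in $A_{z_0}, B_{z_0}$ for $x\in[x_*+\delta, 2-\delta]$ follows from the uniform convergences ${\wt{a}},{\wt{b}},{\wt{c}},{\wt{d}} = a,b,c,d + O(t^{-\infty})$ and ${\wt{\mathcal{L}}} = \mathcal{L} + O(t^{-\infty})$ already established in~\eqref{eq:endpointscloseqplus} and Proposition~\ref{prop:L}.
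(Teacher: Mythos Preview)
Your proposal is correct and uses the same essential ingredients as the paper: the jump relations for $\varphi,\varphi_1,\varphi_2$ come straight from \eqref{eq:jumpqplus}, and the coefficients $A_{z_0},B_{z_0}$ are computed via the elliptic uniformization~\eqref{eq:gprimeellipticprelimit} combined with the local expansion~\eqref{eq:expansionwendpoints} of $w(z)$ at each endpoint.

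The only organizational difference concerns how the $(z-z_0)^{3/2}$ factorization is first established. The paper writes $\varphi'(z)=2{\wt{g}}'(z)+\wt V_{\eta,x}'(z)$ directly as $2{\wt{r}}(z)$ times an integral that is manifestly analytic near each endpoint (the difference quotient $(\wt V_{\eta,x}'(\mu)-\wt V_{\eta,x}'(z))/(\mu-z)$ absorbing the would-be singularity), so that $\varphi'(z)=(z-{\wt{d}})^{1/2}\times\text{analytic}$ is immediate and integration gives the $3/2$ power. Your route---antisymmetry $\Psi_++\Psi_-=0$, hence $\Psi^2$ single-valued, removability from $\Psi(z_0)=0$, then the parity and the nonvanishing of the leading coefficient pinning down the exponent---is a perfectly valid alternative, but slightly more roundabout: you end up needing the elliptic computation both to determine the exponent and to evaluate the coefficients, whereas the paper separates these two tasks. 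The paper's integral manipulation also makes the analogous formula for $\varphi_1'=\varphi_2'$ (via the second line of~\eqref{eq:gprimeqplus}) fall out with no extra work, avoiding the branch-and-sign bookkeeping you flag as the hardest part of your plan.
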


\begin{proof}
These properties are simple consequences of the definition so we only comment on the proof of the statement about the local structures at $z={\wt{a}},{\wt{b}},{\wt{c}},{\wt{d}}$.

First, by~\eqref{eq:gprimeqplus} (first line) we get
\begin{equation}
\label{eq:varphiprimeqplusexplicit}
\varphi'(z)=2{\wt{g}}'(z)+\wt V_{\eta,x}'(z) = 
2{\wt{r}}(z)\left(\int_{(-\infty,{\wt{a}})\cup({\wt{b}},{\wt{c}})} \frac{\d\mu}{{\wt{r}}(\mu)(\mu-z)}-\frac{1}{2\pi\i}\int_{({\wt{a}},{\wt{b}})\cup({\wt{c}},{\wt{d}})} \frac{\wt V_{\eta,x}'(\mu)-\wt V_{\eta,x}'(z)}{{\wt{r}}_+(\mu)(\mu-z)}\d\mu\right)
\end{equation}
and so $\varphi'(z)$ equals $(z-{\wt{d}})^{1/2}$ times a function of $z$ analytic in a neighborhood of $z={\wt{d}}$.
By integrating in $z$, noting that $\varphi(z)\to 0$ as $z\to {\wt{d}}$, we get $\varphi(z)=(z-{\wt{d}})^{3/2}\varphi_{\wt{d}}(z)$ for a function $\varphi_{\wt{d}}(z)$ analytic for $z$ in a neighborhood of $z={\wt{d}}$.
Moreover, since $\wt V_{\eta,x}'(z)=\eta+O(t^{-\infty})$ for $z$ in a neighborhood of $z={\wt{d}}$, by~\eqref{eq:gprimeellipticprelimit} we get
\begin{equation}
\label{eq:nonsodavverocomechiamarti}
\frac{\varphi'(z)}{(z-{\wt{d}}) ^{1/2}} = -2\frac{\zeta(\i\pi|{\wt{K}},\i\pi)}{\i\pi}\eta\frac{{\wt{w}}(z)}{(z-{\wt{d}})^{1/2}}-\frac 2{(z-{\wt{d}})^{1/2}}\log\frac{\sigma\bigl({\wt{w}}_\infty-{\wt{w}}(z)|{\wt{K}},\i\pi\bigr)}{\sigma\bigl({\wt{w}}_\infty+{\wt{w}}(z)|{\wt{K}},\i\pi\bigr)}+O(t^{-\infty}).
\end{equation}
It is now straightforward to complete the proof, also using the expansion~\eqref{eq:expansionwendpoints}.

The second line of~\eqref{eq:gprimeqplus} gives (assuming $\pm\Im z>0$)
\begin{equation}
\begin{aligned}
-\varphi_1'(z)=-\varphi_2'(z)&=-2{\wt{g}}'(z)-\wt V_{\eta,x}'(z)\mp 2\pi\i 
\\
&= 2{\wt{r}}(z)\left(\int_{{\wt{d}}}^{+\infty} \frac{\d\mu}{{\wt{r}}(\mu)(\mu-z)}+\frac{1}{2\pi\i}\int_{({\wt{a}},{\wt{b}})\cup({\wt{c}},{\wt{d}})} \frac{\wt V_{\eta,x}'(\mu)-\wt V_{\eta,x}'(z)}{{\wt{r}}_+(\mu)(\mu-z)}\d\mu\right)
\end{aligned}
\end{equation}
and the other claims are proved in a similar way.
\end{proof}

\begin{remark}
    \label{remark:positivitycoefficientsvarphiabcdqplus}
    From the formulas of this proposition we observe that that $\varphi_{z_0}(z_0)=c_{z_0}+O(t^{-\infty})$ for some positive constant $c_{z_0}>0$, for $z_0\in\lbrace\wt a,\wt b,\wt c,\wt d\rbrace$.
    This follows from the variational analysis of Section~\ref{sec:minimizationxq<x<2}, which ensures that that $C_{b},C_{c},C_{d}>0$ and~$C_{a}<0$, see~\eqref{eq:ineqzeta}, \eqref{eq:needtoestablish1}, and~\eqref{eq:needtoestablish2}, as well as from the fact that $\mathcal{U}(K)>0$ (see the proof of Proposition~\ref{prop:f1f2body}) and that $T_{z_0}>0$.
\end{remark}

We can use these functions to write the jump matrix $\boldsymbol J_N(z)$ as
\begin{equation}
\label{eq:JNqmplusallcases}
\boldsymbol J_N(z)=\begin{cases}
\begin{pmatrix}
(1+\e^{\pm 2\pi\i t z})^{\pm 1} & 0 \\ \e^{t\varphi(z)}& (1+\e^{\pm 2\pi\i t z})^{\mp 1}
\end{pmatrix}
&z\in\Gamma_L^\pm,\\
\begin{pmatrix}
1 & - \frac{\e^{-t\varphi(z)}}{1+\e^{\mp 2\pi\i tz}} \\ 0 & 1
\end{pmatrix}
&z\in ({\wt{a}}^\pm,{\wt{b}}^\pm)\cup({\wt{c}}^\pm,+\infty\pm\i\epsilon),\\
\begin{pmatrix}
\e^{t\varphi_+(z)} & 1 \\ 0 & \e^{t\varphi_-(z)}
\end{pmatrix}
&z\in ({\wt{a}},{\wt{b}})\cup({\wt{c}},{\wt{d}}),\\
\begin{pmatrix}
1 & \e^{-t\varphi(z)} \\ 0 & 1
\end{pmatrix}
&z\in ({\wt{d}},+\infty),\\
\begin{pmatrix}
	1 & 0 \\
	-\e^{t\varphi_1(z)} & 1
\end{pmatrix}
&z\in (-\infty,{\wt{a}}),
\\
\begin{pmatrix}
	\e^{2\pi\i t{\wt{\mathcal{L}}}} & 0 \\
	-\e^{t\varphi_2(z)} & 	\e^{-2\pi\i t{\wt{\mathcal{L}}}}
\end{pmatrix}
&z\in ({\wt{b}},{\wt{c}}),
\\
\begin{pmatrix}
	1 & \mp \e^{-t\varphi_1(z) } \\
	\mp \e^{t\varphi(z)} & 1 + \e^{\pm 2 \pi \i t z}
\end{pmatrix} & z\in\gamma_{\wt a}^\pm\cup\gamma_{\wt c}^\pm.
\\
\begin{pmatrix}
	1+\e^{\pm2\pi\i tz} & \pm \e^{-t\varphi_1(z) } \\
	\pm \e^{t\varphi(z)} & 1 
\end{pmatrix} & z\in \gamma_{\wt b}^\pm.
\end{cases}
\end{equation}

\subsection{Lens opening}

Thanks to Proposition~\ref{prop:varphiqplus} we now fix the contours $\gamma_{\wt a}^\pm,\gamma_{\wt b}^\pm,\gamma_{\wt c}^\pm$: we take $\gamma_{\wt a}^\pm$ as the loci where $\Im \varphi_1(z)=0$ and $0\leq \pm\Im z\leq \epsilon$; similarly, we take $\gamma_{\wt b}^\pm$ and $\gamma_{\wt c}^\pm$ as the loci $\Im \varphi_2(z)=0$ and $0\leq \pm\Im z\leq \epsilon$.
We also introduce contours $\gamma_{\wt d}^\pm$ (starting at ${\wt{d}}$ and ending on the lines $\Im z=\pm\epsilon$) as the loci where $\Im \varphi(z)=0$, $0\leq \pm\Im z\leq \epsilon$.
We define
\begin{equation}
\Sigma_T = \Sigma_N\cup\gamma_{\wt d}^+\cup\gamma_{\wt d}^-.
\end{equation}
Denoting ${\wt{d}}^\pm$ the intersection points of $\gamma_{\wt d}^\pm$ with $\Im z=\pm\epsilon$, we set $\Sigma_T^\circ=\Sigma_T\setminus\bigcup_{z_0\in\lbrace\wt a,\wt b,\wt c,\wt d\rbrace}\lbrace z_0,z_0^+,z_0^-\rbrace$.
We orient $\Sigma_T^\circ$ as $\Sigma_N^\circ$, with the additional curves also oriented upwards.
This is illustrated in Figure~\ref{fig:SigmaTqplus}; in particular the curves $\gamma_{z_0}^\pm$ (where $z_0\in\lbrace\wt a,\wt b,\wt c,\wt d\rbrace$), $\lbrace\Im z=\pm\epsilon\rbrace$, and $\lbrace\Im z=0\rbrace$ delimit bounded regions $\mathscr{L}_\pm$ (the ``lenses'') and we define
\begin{equation}
\boldsymbol T(z)=
\begin{cases}
\boldsymbol N(z)\begin{pmatrix}
1 & 0 \\ \mp\e^{t\varphi(z)} & 1
\end{pmatrix} & \text{if }z\in\mathscr{L}_\pm,\\
\boldsymbol N(z) & \text{otherwise}.
\end{cases}
\end{equation}

\begin{figure}[t]
\centering
\begin{tikzpicture}

\draw[very thin,->] (-6.5,0) -- (6.5,0);
\draw[very thin,->] (0,-2) -- (0,2);

\draw[very thick,->] (-6,0) -- (-4.5,0);
\draw[very thick,->] (-4.5,0) -- (-1.5,0);
\draw[very thick,->] (-1.5,0) -- (1,0);
\draw[very thick,->] (1,0) -- (3.5,0);
\draw[very thick,->] (3.5,0) -- (5.5,0);
\draw[very thick] (5.5,0) -- (6,0);

\draw[very thick] (-4,1) -- (-5,3/2);
\draw[very thick,<-] (-5,3/2) -- (-6,2);

\draw[very thick] (-4,-1) -- (-5,-3/2);
\draw[very thick,<-] (-5,-3/2) -- (-6,-2);

\draw[very thick,->] (-4,1) -- (-1.5,1);
\draw[very thick,->] (-1.5,1) -- (1,1);
\draw[very thick,->] (1,1) -- (3.5,1);
\draw[very thick,->] (3.5,1) -- (5.5,1);
\draw[very thick] (5.5,1) -- (6,1);

\draw[very thick,->] (-4,-1) -- (-1.5,-1);
\draw[very thick,->] (-1.5,-1) -- (1,-1);
\draw[very thick,->] (1,-1) -- (3.5,-1);
\draw[very thick,->] (3.5,-1) -- (5.5,-1);
\draw[very thick] (5.5,-1) -- (6,-1);

\begin{scope}[shift={(-1,0)}]
\draw[very thick,->] 
  (-1.3,-1) .. controls (-1.6,-0.6) .. (-1.8,-0.5) ;
\draw[very thick] 
  (-1.8,-0.5) .. controls (-2.1,-0.3) .. (-2.3,0);
\draw[very thick,->] 
  (-2.3,0) .. controls (-2.1,0.3) .. (-1.8,0.5) ;
\draw[very thick] 
  (-1.8,0.5) .. controls (-1.6,0.6) .. (-1.3,1);

\draw[very thick,->] 
  (.3,-1) .. controls (.5,-0.6) .. (.8,-1/2);
\draw[very thick] 
  (.8,-1/2) .. controls (1.1,-0.3) .. (1.3,0);
\draw[very thick,->] 
  (1.3,0) .. controls (1.1,0.3) .. (.8,1/2);
\draw[very thick] 
  (.8,1/2) .. controls (.5,0.6) .. (.3,1);
\end{scope}

\begin{scope}[shift={(4,0)}]
\draw[very thick,->] 
  (-1.3,-1) .. controls (-1.6,-0.6) .. (-1.8,-0.5) ;
\draw[very thick] 
  (-1.8,-0.5) .. controls (-2.1,-0.3) .. (-2.3,0);
\draw[very thick,->] 
  (-2.3,0) .. controls (-2.1,0.3) .. (-1.8,0.5) ;
\draw[very thick] 
  (-1.8,0.5) .. controls (-1.6,0.6) .. (-1.3,1);
\end{scope}

\begin{scope}[shift={(3.7,0)}]
\draw[very thick,->] 
  (.3,-1) .. controls (.5,-0.6) .. (.8,-1/2);
\draw[very thick] 
  (.8,-1/2) .. controls (1.1,-0.3) .. (1.3,0);
\draw[very thick,->] 
  (1.3,0) .. controls (1.1,0.3) .. (.8,1/2);
\draw[very thick] 
  (.8,1/2) .. controls (.5,0.6) .. (.3,1);
\end{scope}
  
\node at (-3.4,-1/5) {\small{${\wt{a}}$}};
\node at (-2.1,6/5) {\small{${\wt{a}}^+$}};
\node at (-2.1,-6/5) {\small{${\wt{a}}^-$}};

\node at (.35,-1/5) {\small{${\wt{b}}$}};
\node at (-.5,6/5) {\small{${\wt{b}}^+$}};
\node at (-.5,-6/5) {\small{${\wt{b}}^-$}};

\node at (1.6,-1/5) {\small{${\wt{c}}$}};
\node at (2.7,6/5) {\small{${\wt{c}}^+$}};
\node at (2.7,-6/5) {\small{${\wt{c}}^-$}};

\begin{scope}[shift={(4.7,0)}]
\node at (.35,-1/5) {\small{${\wt{d}}$}};
\node at (-.5,6/5) {\small{${\wt{d}}^+$}};
\node at (-.5,-6/5) {\small{${\wt{d}}^-$}};
\end{scope}

\node at (-1.5,1/2) {\small{$\mathscr L_+$}};
\node at (-1.5,-1/2) {\small{$\mathscr L_-$}};
\node at (3.4,1/2) {\small{$\mathscr L_+$}};
\node at (3.4,-1/2) {\small{$\mathscr L_-$}};

\end{tikzpicture}
\caption{$\Sigma_T$, its orientation, and the ``lenses'' $\mathscr{L}_\pm$ (case $x_*<x<2$).}
\label{fig:SigmaTqplus}
\end{figure}

It is important to note that we can perform this transformation because $\varphi(z)$ is analytic for $z\in\mathscr{L}_\pm$, see~Proposition~\ref{prop:varphiqplus}.

It is clear that $\boldsymbol T$ solves the following Riemann--Hilbert problem.

\begin{cRHp}
Find an analytic function $\boldsymbol T:\mathbb{C}\setminus \Sigma_T\to\mathrm{SL}(2,\mathbb{C})$ such that the following conditions hold true.
\begin{enumerate}[leftmargin=*]
\item Non-tangential boundary values of $\boldsymbol T$ exist and are continuous on $\Sigma_T^\circ$ and satisfy
\begin{equation}
\boldsymbol T_+(z)=\boldsymbol T_-(z)\boldsymbol J_T(z),\qquad z\in \Sigma_T^\circ,
\end{equation}
where $\boldsymbol J_T(z)$ is given explicitly below.
\item We have $\boldsymbol T(z)\to\boldsymbol {\mathrm{I}}$ as $z\to\infty$  uniformly in $\mathbb{C}\setminus \Sigma_T$.
\item We have $\boldsymbol T(z)=O(1)$ as $z\to z_0$ uniformly in $\mathbb{C}\setminus\Sigma_T$ for all $z_0\in\Sigma_T\setminus\Sigma_T^\circ$.
\end{enumerate}
\end{cRHp}

It follows from the factorization
\begin{equation}
\begin{pmatrix}
\e^{t\varphi_+(z)} & 1 \\ 0 & \e^{t\varphi_-(z)}
\end{pmatrix} =
\begin{pmatrix}
1 & 0 \\ \e^{t\varphi_-(z)} & 1
\end{pmatrix}
\begin{pmatrix}
0 & 1 \\ -1 & 0
\end{pmatrix}
\begin{pmatrix}
1 & 0 \\ \e^{t\varphi_+(z)} & 1
\end{pmatrix}
\end{equation}
that the jump matrix $\boldsymbol J_T(z)$ for $z\in\Sigma_T^\circ$ is given explicitly by
\begin{equation}
\label{eq:JTqplus}
\boldsymbol J_T(z)=\begin{cases}
\begin{pmatrix}
0 & 1\\ -1 & 0
\end{pmatrix}
&\text{if }z\in ({\wt{a}},{\wt{b}})\cup({\wt{c}},{\wt{d}})
\\
\begin{pmatrix}
1 & -\frac{\e^{-t\varphi(z)}}{1+\e^{\mp2\pi\i t z}} \\ \e^{t\varphi(z)} & \frac{1}{1+\e^{\pm 2\pi\i t z}}
\end{pmatrix}
&\text{if }z\in ({\wt{a}}^\pm,{\wt{b}}^\pm)\cup({\wt{c}}^\pm,{\wt{d}}^\pm)
\\
\begin{pmatrix}
1 & \mp \e^{-t\varphi_1(z)} \\ 0 & 1
\end{pmatrix}
&\text{if }z\in\gamma_{\wt a}^\pm\cup\gamma_{\wt c}^\pm\text{ or }z\in \gamma_{\wt b}^\mp,
\\
\begin{pmatrix}
1 & 0 \\ \mp \e^{t\varphi(z)} & 1
\end{pmatrix}
&\text{if }z\in\gamma_{\wt d}^\pm,
\\
\boldsymbol J_N(z)
&\text{otherwise}.
\end{cases}
\end{equation}

Similarly to the previous section, the key point of this construction is that the jump matrices $\boldsymbol J_T(z)$ are exponentially close to the identity except on $({\wt{a}},{\wt{b}})\cup({\wt{b}},{\wt{c}})\cup({\wt{c}},{\wt{d}})$ and in (arbitrarily small) neighborhoods of ${\wt{a}},{\wt{b}},{\wt{c}},{\wt{d}}$; moreover, $\boldsymbol J_T(z)$ is exponentially close to $\e^{2\pi\i t{\wt{\mathcal{L}}}\boldsymbol\sigma_3}$ uniformly for $z\in[{\wt{b}}+\epsilon,{\wt{c}}-\epsilon]$.
Namely, let
\begin{equation}
\label{eq:SigmaTepsilonqplus}
\wt{\Sigma_T}^\epsilon = \Sigma_T^\circ\setminus\bigl(({\wt{a}}-\epsilon,{\wt{d}}+\epsilon)\cup\left(\bigcup_{z_0\in\lbrace\wt a,\wt b,\wt c,\wt d\rbrace}\bigl(\gamma_{z_0}^+\cup\gamma_{z_0}^-\bigr)\right).
\end{equation}

\begin{proposition}\label{prop:JTsmallqplus}
    For any $\epsilon>0$ small enough and for any $\delta>0$, there exists $c>0$ such that the following estimates hold uniformly for $x\in[x_*+\delta,2-\delta]$.
    
    \noindent \textit{(1)} We have $\boldsymbol J_T(z)=\boldsymbol {\mathrm{I}}+O\left(\frac{1}{|z|^2+1}\e^{-ct}\right)$ as $t\to+\infty$ uniformly for $z\in \wt{\Sigma_T}^\epsilon$.
     
    \noindent \textit{(2)} We have $\boldsymbol J_T(z)\e^{-2\pi\i t{\wt{\mathcal{L}}}\boldsymbol\sigma_3}=\boldsymbol {\mathrm{I}}+O\left(\e^{-ct}\right)$ as $t\to+\infty$ uniformly for $z\in({\wt{b}}+\epsilon,{\wt{c}}-\epsilon)$.
\end{proposition}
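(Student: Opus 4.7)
The argument will closely parallel the one given for Proposition~\ref{prop:JTsmallqminus} in the one-cut case, with the two-cut variational inequalities from Proposition~\ref{prop:ineqvariational2cut} (and its quantitative strengthening in Remark~\ref{remark:strictineqqplus}) playing the role that Remark~\ref{remark:strictineqqminus} played there. The plan is to combine three ingredients: (i) the closeness ${\wt{g}}(z)=g(z)+O(t^{-\infty})$, ${\wt{\mathcal{L}}}=\mathcal{L}+O(t^{-\infty})$, and $\wt{\ell}=\ell+O(t^{-\infty})$, uniform for $x\in[x_*+\delta,2-\delta]$, which follows from \eqref{eq:parameterscloseqplus}, Proposition~\ref{prop:L}, and \eqref{eq:asympellqplus}; (ii) the uniform estimate $\wt V_{\eta,x}(z)=\eta[\Re z-x]_++O(t^{-\infty})$ away from $\Re z=x$ (and the one-sided bound $\Re\wt V_{\eta,x}(z)\leq t^{-1}\log 2+\eta[\Re z-x]_+$ everywhere), together with the growth $\varphi(z)\sim 2z\log z$ and $\varphi_{1,2}(z)\sim 2z\log|z|$ at infinity; (iii) the local Airy-type factorizations of $\varphi,\varphi_1,\varphi_2$ near ${\wt{a}},{\wt{b}},{\wt{c}},{\wt{d}}$ provided by Proposition~\ref{prop:varphiqplus}, together with the sign information of $\varphi_{z_0}(z_0)$ recorded in Remark~\ref{remark:positivitycoefficientsvarphiabcdqplus}.

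For statement~(1), I would proceed component by component on $\wt{\Sigma_T}^\epsilon$. On the outer real rays $({\wt{d}}+\epsilon,+\infty)$ and $(-\infty,{\wt{a}}-\epsilon)$ the jumps differ from the identity by $\e^{-t\varphi(z)}$ and $\e^{t\varphi_1(z)}$, respectively; using (i)--(ii) and Proposition~\ref{prop:ineqvariational2cut} (strengthened via Remark~\ref{remark:strictineqqplus}) yields $\varphi(z)\geq c$ and $\varphi_1(z)\leq -c$, and combining with the logarithmic growth at infinity produces the desired $O\bigl((|z|^2+1)^{-1}\e^{-ct}\bigr)$ bound. On the horizontal pieces of $\Gamma_R^\pm$ and $\Gamma_L^\pm$, the factor $\e^{\mp 2\pi t\epsilon}$ combined with the local cubic factorizations from Proposition~\ref{prop:varphiqplus}(2),(3) near ${\wt{d}}$ and ${\wt{a}}$ gives $\Re\varphi(z)+2\pi\epsilon\geq c$ and $\Re\varphi_1(z)-2\pi\epsilon\leq -c$ provided $\epsilon$ is taken small enough (since $\varphi_{{\wt{d}}}({\wt{d}})$ and $-\varphi_{{\wt{a}}}({\wt{a}})$ are bounded away from $0$ by Remark~\ref{remark:positivitycoefficientsvarphiabcdqplus}); away from these local neighborhoods, continuity and the bounds on the outer rays finish the argument. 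The pieces $({\wt{a}}^\pm,{\wt{b}}^\pm)$ and $({\wt{c}}^\pm,{\wt{d}}^\pm)$ lying inside the lenses are handled by the standard steepest-descent argument reproduced in the proof of Proposition~\ref{prop:JTsmallqminus}: one shows via the Cauchy--Riemann equations and the local behavior of $\varphi_\pm$ on $({\wt{a}},{\wt{b}})\cup({\wt{c}},{\wt{d}})$ (which, by \eqref{eq:jumpphiqplus1}, has purely imaginary boundary values with derivative strictly between $-2\pi$ and $0$ in the interior of these intervals) that $-2\pi\epsilon+c\wt c<\Re\varphi(\mu\pm\i y)<-c\wt c$ for $0<\pm y\leq\epsilon$ and $\mu$ in the relevant sub-interval, which controls all entries of the jump.

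For the new feature specific to the two-cut case, namely the pieces $\gamma_{\wt b}^\pm$ and the interval $({\wt{b}},{\wt{c}})$ that appear in $\Sigma_T^\circ$ but not in $\wt{\Sigma_T}^\epsilon$ or in the one-cut analysis, I would handle them as follows. On $\gamma_{\wt b}^\pm$ the jump is given in the last line of \eqref{eq:JTqplus} and involves $\e^{-t\varphi_1(z)}$ and $\e^{t\varphi(z)}$; noting $\varphi_1=\varphi_2\mp 2\pi\i{\wt{\mathcal{L}}}$ on $\gamma_{\wt b}^\pm$ (where $\gamma_{\wt b}^\pm$ was chosen as a level line of $\Im\varphi_2$), one applies the factorization near ${\wt{b}}$ from Proposition~\ref{prop:varphiqplus}(4) and the sign $\varphi_{\wt b}({\wt{b}})=A_b+O(t^{-\infty})>0$ from Remark~\ref{remark:positivitycoefficientsvarphiabcdqplus}. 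An identical argument works on $\gamma_{\wt c}^\pm$ using Proposition~\ref{prop:varphiqplus}(4) at ${\wt{c}}$ and $A_c>0$.

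For statement~(2), on $({\wt{b}}+\epsilon,{\wt{c}}-\epsilon)$ the jump $\boldsymbol J_N$ is diagonal and equals $\e^{2\pi\i t{\wt{\mathcal{L}}}\boldsymbol\sigma_3}$ up to the lower-left entry $-\e^{t\varphi_2(z)}$, so it suffices to show $\Re\varphi_2(z)\leq -c$ uniformly for $\mu\in({\wt{b}}+\epsilon,{\wt{c}}-\epsilon)$. Using (i) and \eqref{eq:jumpphiqplus3}, one has $\varphi_2(\mu)=({\wt{g}}_+(\mu)+{\wt{g}}_-(\mu))+\wt V_{\eta,x}(\mu)-\wt{\ell}-2\pi\i(\mu+{\wt{\mathcal{L}}})\mp 2\pi\i(\mu+{\wt{\mathcal{L}}})$, which up to $O(t^{-\infty})$ equals $g_+(\mu)+g_-(\mu)+V_{\eta,x}(\mu)-\ell$ modulo purely imaginary terms; by Proposition~\ref{prop:ineqvariational2cut} strengthened with Remark~\ref{remark:strictineqqplus}, this is bounded above by $-k<0$ for $\mu\in[b+\epsilon,c-\epsilon]$, yielding the claim. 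The main technical obstacle is ensuring uniformity in $x\in[x_*+\delta,2-\delta]$ throughout; this is already guaranteed by Remarks~\ref{remark:strictineqqplus} and~\ref{remark:positivitycoefficientsvarphiabcdqplus} combined with the implicit function theorem argument of \Cref{sec:gqplus} that keeps $\mathcal{K}(x)$ (and hence all the ingredients of the $g$-function and the endpoints) uniformly bounded.
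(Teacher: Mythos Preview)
Your proposal is correct and follows essentially the same approach as the paper's (deliberately terse) proof: both reduce the argument to that of Proposition~\ref{prop:JTsmallqminus} via the closeness ${\wt{g}}=g+O(t^{-\infty})$, the strict variational inequalities of Remark~\ref{remark:strictineqqplus}, the local factorizations of Proposition~\ref{prop:varphiqplus}, and the Cauchy--Riemann argument on the lens segments. Two minor remarks: your paragraph on $\gamma_{\wt b}^\pm,\gamma_{\wt c}^\pm$ is superfluous since those curves are removed from $\wt{\Sigma_T}^\epsilon$ by definition~\eqref{eq:SigmaTepsilonqplus} and are not covered by statement~(2) either; and you did not explicitly treat the horizontal segment $({\wt{b}}^\pm,{\wt{c}}^\pm)$ on $\Im z=\pm\epsilon$ (where the jump is of $\boldsymbol\Delta$-type), but the required bound $\Re\varphi\leq -c$ there follows by the same continuity extension of the inequality on $({\wt{b}},{\wt{c}})$ that you already invoked for statement~(2).
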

\begin{proof}
The proof follows by completely similar arguments (which are standard in the Deift--Zhou nonlinear steepest descent method, see, in particular,~\cite{bleher2011uniform}) to those we presented in full detail in the proof of Proposition~\ref{prop:JTsmallqminus} and so we will be brief here.
More specifically, if $g(z)$ is the function constructed in Section~\ref{sec:minimizationxq<x<2}, we have $\wt g(z)=g(z)+O(t^{-\infty})$ (possibly in the sense of boundary values along the real axis) uniformly for $z$ in the complex plane except for fixed neighborhoods of the endpoints and uniformly for $x\in[x_*+\delta,2-\delta]$.
Then, the estimates on the real parts of the contour~$\wt{\Sigma_T}^\epsilon$ and on $(\wt b+\epsilon,\wt c-\epsilon)$ follow from the variational inequalities established in Proposition~\ref{prop:ineqvariational2cut}, which hold in the strict sense of Remark~\ref{remark:strictineqqplus}, as well as from the asymptotic relation $g(z)\sim z(\log z-1)$ for $z\to\infty$.
These estimates then extend by continuity to the desired estimates on the remaining parts of $\wt{\Sigma_T}^\epsilon$ except close to the endpoints $a,b,c,d$ (in which case the statement follows instead from the local structure of $\varphi,\varphi_1,\varphi_2$ near these points, see Proposition~\ref{prop:varphiqplus}) and except on $\bigl((a,b)\cup(c,d)\bigr)\pm\i\epsilon$ (where we instead resort to a standard argument based on the Cauchy--Riemann equations as in the end of the proof of Proposition~\ref{prop:JTsmallqminus}).
\end{proof}

\subsection{Parametrices}

\subsubsection{Outer parametrix}\label{sec:outerparaqplus}

The outer parametrix is obtained by neglecting all jumps of $ \boldsymbol T$ except on $({\wt{a}},{\wt{b}})\cup({\wt{b}},{\wt{c}})\cup({\wt{c}},{\wt{d}})$ and approximating the jump on~$({\wt{b}},{\wt{c}})$ with $\e^{2\pi\i t{\wt{\mathcal{L}}}\boldsymbol\sigma_3}$, which corresponds to the following Riemann--Hilbert problem.

\begin{cRHp}
\label{cRHp:Poutqplus}
Find an analytic function $\boldsymbol P^{\mathrm{out}}:\mathbb{C}\setminus[{\wt{a}},{\wt{d}}]\to\mathrm{SL}(2,\mathbb{C})$ such that the following conditions hold true.
\begin{enumerate}[leftmargin=*]
\item Non-tangential boundary values of $\boldsymbol P^{\mathrm{out}}$ exist and are continuous on $({\wt{a}},{\wt{b}})\cup({\wt{b}},{\wt{c}})\cup({\wt{c}},{\wt{d}})$ and satisfy
\begin{equation}
\label{eq:jumpPouttwocut}
\begin{aligned}
\boldsymbol P_+^{\mathrm{out}}(z)&=\boldsymbol P_-^{\mathrm{out}}(z)\begin{pmatrix} 0 & 1 \\ -1 & 0 \end{pmatrix},&& z\in ({\wt{a}},{\wt{b}})\cup({\wt{c}},{\wt{d}}),
\\
\boldsymbol P_+^{\mathrm{out}}(z)&=\boldsymbol P_-^{\mathrm{out}}(z)\e^{2\pi\i t{\wt{\mathcal{L}}}\boldsymbol\sigma_3},&& z\in ({\wt{b}},{\wt{c}}).
\end{aligned}
\end{equation}
\item We have $\boldsymbol P^{\mathrm{out}}(z)=\boldsymbol {\mathrm{I}}+O(z^{-1})$ as $z\to\infty$ uniformly in $\mathbb{C}$.
\item We have $\boldsymbol P^{\mathrm{out}}(z)=O\bigl(|z-z_0|^{-1/4}\bigr)$ as $z\to z_0$ with $z_0\in\lbrace\wt a,\wt b,\wt c,\wt d\rbrace$.
\end{enumerate}
\end{cRHp}

The solution to this type of model Riemann--Hilbert problem (in the general multi-cut case) is well known to be explicitly expressed in terms of the Riemann Theta function associated with an elliptic or hyperelliptic curve; see~\cite{DeiftEtAl_multicut}.
We report the solution (in our case), following the literature, which is constructed by using the functions
\begin{equation}
\label{eq:xitilde}
\wt{\xi}(z) =\left(\frac{(z-{\wt{b}})(z-{\wt{d}})}{(z-{\wt{a}})(z-{\wt{c}})}\right)^{\frac 14}
\end{equation}
and (the notation for elliptic theta functions is introduced in Section~\ref{sec:apptheta})
\begin{equation}
\label{eq:chitilde}
\wt{\chi}(w) = \frac{\vartheta_{11}(\frac{w}{2{\wt{K}}}-t{\wt{\mathcal{L}}}|\i\pi {\wt{K}}^{-1})}{\vartheta_{11}(\frac w{2{\wt{K}}}|\i\pi {\wt{K}}^{-1})}.
\end{equation}
The function $\wt{\xi}(z)$ is analytic for $z\in\mathbb{C}\setminus\bigl([{\wt{a}},{\wt{b}}]\cup[{\wt{c}},{\wt{d}}]\bigr)$, tends to $1$ as $z\to\infty$, and admits boundary values on $({\wt{a}},{\wt{b}})$ and $({\wt{c}},{\wt{d}})$ satisfying
\begin{equation}
\wt{\xi}_+(\mu)=\i\wt{\xi}_-(\mu),\qquad \mu\in ({\wt{a}},{\wt{b}})\cup({\wt{c}},{\wt{d}}).
\end{equation}
Moreover, by the automorphy properties~\eqref{eq:periodictheta}, we have
\begin{equation}
\wt{\chi}(w+2{\wt{K}})=\wt{\chi}(w),\qquad \wt{\chi}(w+2\i\pi) = \wt{\chi}(w)\e^{2\pi\i t{\wt{\mathcal{L}}}}.
\end{equation}
By using these properties it is easy to check that
\begin{equation}
\wh{\boldsymbol P}^{\mathrm{out}}(z) = \begin{pmatrix}
    \frac{\wt{\xi}(z)+\wt{\xi}(z)^{-1}}{2} \wt{\chi}({\wt{w}}(z)-{\wt{w}}_1)&
     \frac{\wt{\xi}(z)-\wt{\xi}(z)^{-1}}{2\i}\wt{\chi}(-{\wt{w}}(z)-{\wt{w}}_1)
     \\
     -\frac{\wt{\xi}(z)-\wt{\xi}(z)^{-1}}{2\i}\wt{\chi}({\wt{w}}(z)-{\wt{w}}_2)&
     \frac{\wt{\xi}(z)+\wt{\xi}(z)^{-1}}{2}\wt{\chi}(-{\wt{w}}(z)-{\wt{w}}_2)
\end{pmatrix}
\end{equation}
satisfies the desired jump condition~\eqref{eq:jumpPouttwocut}, for any ${\wt{w}}_1,{\wt{w}}_2$.
To enforce the normalization at $\infty$ we set
\begin{equation}
\label{eq:Poutexplicit2cut}
\boldsymbol P^{\mathrm{out}}(z) = \wh{\boldsymbol P}^{\mathrm{out}}(\infty)^{-1}\wh{\boldsymbol P}^{\mathrm{out}}(z) 
=\begin{pmatrix}
    \wt{\chi}({\wt{w}}_\infty-{\wt{w}}_1)^{-1} & 0 \\ 0 & \wt{\chi}(-{\wt{w}}_\infty-{\wt{w}}_2)^{-1}
\end{pmatrix}\wh{\boldsymbol P}^{\mathrm{out}}(z).
\end{equation}
We claim that choosing 
\begin{equation}
\label{eq:w1w2Pout}
{\wt{w}}_1={\wt{w}}_\infty-{\wt{K}}+\i\pi\quad\text{and}\quad {\wt{w}}_2=-{\wt{w}}_\infty+{\wt{K}}+\i\pi
\end{equation}
then $\boldsymbol P^{\mathrm{out}}(z)$ is the unique solution to the Riemann--Hilbert problem~\ref{cRHp:Poutqplus}.
To see this, note that the only singularities of~$\wt{\chi}(w)$ are simple poles at~$w\in\mathbb{Z}+\frac{\i\pi}{{\wt{K}}}\mathbb{Z}$ and recall that the conformal map $z\mapsto {\wt{w}}(z)$ satisfies $\Re{\wt{w}}(z)>0$ for all $z\in\mathbb{C}\setminus[{\wt{a}},{\wt{d}}]$.
Hence, $\wt{\chi}({\wt{w}}(z)-{\wt{w}}_1)$ and $\wt{\chi}(-{\wt{w}}(z)-{\wt{w}}_2)$ have no poles.
However, $\wt{\chi}(-{\wt{w}}(z)-{\wt{w}}_1)$ and $\wt{\chi}({\wt{w}}(z)-{\wt{w}}_2)$ have poles when ${\wt{w}}(z)=-{\wt{w}}_1$ and when ${\wt{w}}(z)={\wt{w}}_2$, respectively.
Making use of the identity
\begin{equation}
\label{eq:identityxitheta}
\wt{\xi}(z)^2=
\frac
{\vartheta_{11}(\frac{{\wt{w}}(z)}{2{\wt{K}}})\vartheta_{11}(\frac{{\wt{w}}(z)-{\wt{K}}-\i\pi}{2{\wt{K}}})
\vartheta_{11}(\frac{{\wt{w}}_\infty-{\wt{K}}}{2{\wt{K}}})\vartheta_{11}(\frac{{\wt{w}}_\infty-\i\pi}{2{\wt{K}}})}
{\vartheta_{11}(\frac{{\wt{w}}(z)-{\wt{K}}}{2{\wt{K}}})\vartheta_{11}(\frac{{\wt{w}}(z)-\i\pi}{2{\wt{K}}})
\vartheta_{11}(\frac{{\wt{w}}_\infty}{2{\wt{K}}})\vartheta_{11}(\frac{{\wt{w}}_\infty-{\wt{K}}-\i\pi}{2{\wt{K}}})}
\end{equation}
(we use the short-hand notation $\vartheta_{11}(w)=\mathfrak \vartheta_{11}(w|\i\pi{\wt{K}}^{-1})$ in this equation)
one easily checks, using~\eqref{eq:periodictheta} and the fact that $\vartheta_{11}$ is odd, that when ${\wt{w}}(z)=-{\wt{w}}_1$ or ${\wt{w}}(z)={\wt{w}}_2$ we have $\wt{\xi}(z)^2=1$ and so the poles of $\wt{\chi}(-{\wt{w}}(z)-{\wt{w}}_1)$ and $\wt{\chi}({\wt{w}}(z)-{\wt{w}}_2)$ are canceled by zeros of $\wt{\xi}(z)-\wt{\xi}(z)^{-1}$.
Finally, the identity~\eqref{eq:identityxitheta} follows from the fact that both sides are meromorphic functions on the Riemann surface of ${\wt{r}}(z)$ with the same poles and zeros, and that both sides equal~$1$ when $z\to\infty$.

From the explicit representation~\eqref{eq:Poutexplicit2cut} and the expansions
\begin{equation}
\frac{\wt{\xi}(z)+\wt{\xi}(z)^{-1}}{2} = 1+O(z^{-2}),\qquad
\frac{\wt{\xi}(z)-\wt{\xi}(z)^{-1}}{2} = \frac{{\wt{a}}-{\wt{b}}+{\wt{c}}-{\wt{d}}}{4z}+O(z^{-2}),
\end{equation}
as well as ${\wt{w}}(z)-{\wt{w}}_\infty=-{\wt{m}} z^{-1}+O(z^{-2})$, as $z\to\infty$, see~\eqref{eq:newconformalelliptic}, we obtain
\begin{equation}
\label{eq:Poutlargezqplus}
\boldsymbol P^{\mathrm{out}}(z)=
\boldsymbol {\mathrm{I}}+z^{-1}
\begin{pmatrix}
{\wt{p}}_0 & \i{\wt{p}}_+\\
\i{\wt{p}}_- & -{\wt{p}}_0
\end{pmatrix}
+O\bigl(z^{-2}\bigr)
\end{equation}
as $z\to\infty$ uniformly in $\mathbb{C}$, where we also used~\eqref{eq:Taylorsigma}, where ${\wt{p}}_0={\wt{p}}_0(x,t)$ and ${\wt{p}}_\pm={\wt{p}}_\pm(x,t)$ are given by
\begin{equation}
{\wt{p}}_0(x,t) = -{\wt{m}}\frac{\frac{\d}{\d w}\wt{\chi}(w-{\wt{w}}_1)}{\wt{\chi}(w-{\wt{w}}_1)}\bigg|_{w={\wt{w}}_\infty}
=-\frac{{\wt{m}}}{2{\wt{K}}}
\left(
\frac{\vartheta_{11}'(\frac{-{\wt{K}}+\i\pi}{2{\wt{K}}}-t{\wt{\mathcal{L}}})}{\vartheta_{11}(\frac{-{\wt{K}}+\i\pi}{2{\wt{K}}}-t{\wt{\mathcal{L}}})}
-
\frac{\vartheta_{11}'(\frac{-{\wt{K}}+\i\pi}{2{\wt{K}}})}{\vartheta_{11}(\frac{-{\wt{K}}+\i\pi}{2{\wt{K}}})}
\right)
\end{equation}
where $\vartheta_{11}(w)=\mathfrak \vartheta_{11}(w|\i\pi{\wt{K}}^{-1})$ and $\vartheta_{11}'(w)=\frac{\d}{\d w}\mathfrak \vartheta_{11}(w|\i\pi{\wt{K}}^{-1})$,
\begin{equation}
{\wt{p}}_+(x,t) =\frac 14(-{\wt{a}}+{\wt{b}}-{\wt{c}}+{\wt{d}})\frac{\wt{\chi}(-2{\wt{w}}_\infty+{\wt{K}}-\i\pi)}{\wt{\chi}({\wt{K}}-\i\pi)},
\end{equation}
and ${\wt{p}}_-$ has a similar expression which will not be needed in what follows.

\begin{proposition}\label{prop:p0pplus}
We have
\begin{equation}
\label{eq:p0pplusclosemathfrakqplus}
{\wt{p}}_0(x,t)= p_0(x,t)+O(t^{-\infty}),\qquad
{\wt{p}}_+(x,t)= p_+(x,t)+O(t^{-\infty}),
\end{equation}
uniformly for $x\in[x_*+\delta,2-\delta]$, for any $\delta>0$, where
\begin{equation}
\label{eq:p0pplustheta}
p_0=\e^{\frac {\eta}2\bigl(\frac{\eta}{2\mathcal{K}}-1\bigr)}\frac{\vartheta_{11}(\frac{\eta}{2\mathcal{K}}|\frac{\i\pi}{\mathcal{K}})}{\vartheta_{11}'(0|\frac{\i\pi}{\mathcal{K}})}\frac{\vartheta'(t\mathcal{L}|\frac{\i\pi}{\mathcal{K}})}{\vartheta(t\mathcal{L}|\frac{\i\pi}{\mathcal{K}})},
\quad
p_+=\e^{\frac {\eta}2\bigl(\frac{\eta}{2\mathcal{K}}-1\bigr)}\frac{\vartheta(t\mathcal{L}+\frac{\eta}{2\mathcal{K}}|\frac{\i\pi}{\mathcal{K}})}{\vartheta(t\mathcal{L}|\frac{\i\pi}{\mathcal{K}})}.
\end{equation}
Here, $\mathcal{L}$ is as in Proposition~\ref{prop:L} and, as usual, a prime $'$ denotes derivative with respect to the argument of the theta function.
\end{proposition}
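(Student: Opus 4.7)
The overall strategy is to substitute the asymptotic estimates of Proposition~\ref{prop:L} and of equations \eqref{eq:parameterscloseqplus}--\eqref{eq:endpointscloseqplus} directly into the defining formulas for $\wt{p}_0(x,t)$ and $\wt{p}_+(x,t)$, and then simplify the resulting ``untilded'' expressions using elementary theta-function identities in order to match $p_0$ and $p_+$ as given in \eqref{eq:p0pplustheta}.

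First I would justify the substitution. Both $\wt{p}_0$ and $\wt{p}_+$ depend on the parameters $\wt{K},\wt{m},\wt{w}_\infty,\wt{\mathcal L},\wt{a},\wt{b},\wt{c},\wt{d}$ (and on $t$) through ratios and log-derivatives of $\vartheta_{11}$ evaluated at arguments such as $-\frac12+\frac{\i\pi}{2\wt K}-t\wt{\mathcal L}$ and $\frac12-\frac{\i\pi}{2\wt K}-\frac{\eta}{2\wt K}-t\wt{\mathcal L}$. Since $\tau=\i\pi/\wt K$ is purely imaginary, these arguments have imaginary part $\pi/(2\wt K)$, so their reduction modulo $\mathbb{Z}+\tau\mathbb{Z}$ stays uniformly bounded away from the zero set of $\vartheta_{11}(\cdot|\tau)$; the theta-function quantities involved are therefore uniformly smooth in $(z,\tau)$. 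Combined with $\wt{K}-\mathcal{K},\wt{m}-\mathcal{U}(\mathcal K),\wt{w}_\infty-\eta/2,\wt{\mathcal L}-\mathcal L,\wt{a}-a,\ldots,\wt{d}-d=O(t^{-\infty})$ (hence also $t\wt{\mathcal L}-t\mathcal L=O(t^{-\infty})$), this shows that the substitution introduces only $O(t^{-\infty})$ errors uniformly for $x\in[x_*+\delta,2-\delta]$, reducing the claim to a pair of exact identities.

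The key analytic ingredient in the simplification is the relation $\vartheta_{11}(z|\tau)=\i\,\e^{\i\pi(z+\tau/4)}\vartheta(z+\tfrac{1+\tau}{2}|\tau)$ from \eqref{eq:theta}. Applied at $z=-\frac12+\frac{\tau}{2}-u$ (and using the quasi-periodicity $\vartheta(w+\tau|\tau)=\e^{-\i\pi\tau-2\pi\i w}\vartheta(w|\tau)$ together with the evenness of $\vartheta$), a short calculation gives
\begin{equation*}
\frac{\vartheta_{11}'(-\tfrac12+\tfrac{\tau}{2}-u|\tau)}{\vartheta_{11}(-\tfrac12+\tfrac{\tau}{2}-u|\tau)}=-\i\pi-\frac{\vartheta'(u|\tau)}{\vartheta(u|\tau)}.
\end{equation*}
Taking the difference at $u=t\mathcal L$ and $u=0$ the constant $-\i\pi$ cancels, and using $\vartheta'(0|\tau)=0$ together with the definition \eqref{eq:f1} of $\mathcal{U}(\mathcal K)$ one immediately obtains $\wt{p}_0=p_0+O(t^{-\infty})$. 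For $\wt{p}_+$, the same identity applied at $z=\frac12-\frac{\tau}{2}+u$ transforms every $\vartheta_{11}$ factor in the numerator and denominator of the $\wt{\chi}$ ratio into a $\vartheta$ factor, with the exponential prefactors cancelling between numerator and denominator; one finds
\begin{equation*}
\frac{\wt{\chi}(-2\wt{w}_\infty+\wt{K}-\i\pi)}{\wt{\chi}(\wt{K}-\i\pi)}=\frac{\vartheta(\tfrac{\eta}{2\mathcal K}+t\mathcal L|\tau)\,\vartheta(0|\tau)}{\vartheta(\tfrac{\eta}{2\mathcal K}|\tau)\,\vartheta(t\mathcal L|\tau)}+O(t^{-\infty}).
\end{equation*}
Matching with $p_+$ reduces to the purely theta-constant identity
\begin{equation*}
\frac{-a+b-c+d}{4}\,\frac{\vartheta(0|\tau)}{\vartheta(\tfrac{\eta}{2\mathcal K}|\tau)}=\e^{\frac{\eta}{2}(\frac{\eta}{2\mathcal K}-1)}.
\end{equation*}

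The main obstacle is the verification of this last identity. Using the explicit expressions \eqref{eq:maintheorem:endpoints} for $a,b,c,d$ and the definition of $\mathcal{T}$, its left-hand side can be written as $\tfrac{\mathcal U(\mathcal K)}{4\mathcal K}$ times $\tfrac{d}{dz}\log\bigl[\vartheta(z)\vartheta_{11}(z)/(\vartheta_{10}(z)\vartheta_{01}(z))\bigr]$ evaluated at $z=\eta/(4\mathcal K)$. Applying Jacobi's quartic identity $\vartheta_{11}(2z)\vartheta(0)\vartheta_{10}(0)\vartheta_{01}(0)=2\vartheta(z)\vartheta_{10}(z)\vartheta_{01}(z)\vartheta_{11}(z)$ in logarithmic form, this log-derivative becomes $2\vartheta_{11}'(\eta/(2\mathcal K))/\vartheta_{11}(\eta/(2\mathcal K))-2\vartheta_{10}'(\eta/(4\mathcal K))/\vartheta_{10}(\eta/(4\mathcal K))-2\vartheta_{01}'(\eta/(4\mathcal K))/\vartheta_{01}(\eta/(4\mathcal K))$. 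Using the half-period shift formulas for $\vartheta_{11}$ (which convert $\vartheta_{10}'/\vartheta_{10}$ and $\vartheta_{01}'/\vartheta_{01}$ at $\eta/(4\mathcal K)$ into $\vartheta_{11}'/\vartheta_{11}$ at $\eta/(4\mathcal K)\pm\frac12$ or $\pm\tau/2$) and the Jacobi derivative formula $\vartheta_{11}'(0)=\pi\,\vartheta(0)\vartheta_{10}(0)\vartheta_{01}(0)$, together with the definition \eqref{eq:f1} of $\mathcal U(\mathcal K)$, this can be reduced to the stated theta-constant identity. This is the only step that is not entirely mechanical, and I expect it to occupy the bulk of the actual written proof.
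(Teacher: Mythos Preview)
Your overall approach matches the paper's: substitute the $O(t^{-\infty})$ estimates \eqref{eq:parameterscloseqplus}--\eqref{eq:endpointscloseqplus} and Proposition~\ref{prop:L} into the definitions of $\wt p_0,\wt p_+$, then verify two exact theta identities. Your treatment of $p_0$ is essentially identical to the paper's (both use the log-derivative relation between $\vartheta_{11}$ and $\vartheta$ coming from~\eqref{eq:theta}, together with $\vartheta'(0|\tau)=0$ and the definition~\eqref{eq:f1} of $\mathcal U(\mathcal K)$).

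For $p_+$ you correctly arrive at the same reduction as the paper, namely the identity
\[
\frac{-a+b-c+d}{4}\,\frac{\vartheta(0|\tau)}{\vartheta(\tfrac{\eta}{2\mathcal K}|\tau)}=\e^{\frac{\eta}{2}(\frac{\eta}{2\mathcal K}-1)},
\]
but your proposed verification via Jacobi's quartic duplication and subsequent half-period shifts is more laborious than necessary, and you leave the final step unresolved. The paper instead proves in one stroke that
\[
\frac{\vartheta_{01}'}{\vartheta_{01}}\bigl(\tfrac z2\big|\tau\bigr)
-\frac{\vartheta'}{\vartheta}\bigl(\tfrac z2\big|\tau\bigr)
+\frac{\vartheta_{10}'}{\vartheta_{10}}\bigl(\tfrac z2\big|\tau\bigr)
-\frac{\vartheta_{11}'}{\vartheta_{11}}\bigl(\tfrac z2\big|\tau\bigr)
=-2\,\frac{\vartheta'_{11}(0|\tau)}{\vartheta(0|\tau)}\,\frac{\vartheta(z|\tau)}{\vartheta_{11}(z|\tau)}
\]
by the standard elliptic-function argument: both sides are anti-periodic under $z\mapsto z+1$ and $z\mapsto z+\tau$, both have only simple poles on $\mathbb Z+\tau\mathbb Z$, and the residues at $z=0$ agree (equal to $-2$), so they coincide. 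Plugging in~\eqref{eq:maintheorem:endpoints} and the definition~\eqref{eq:f1} of $\mathcal U(\mathcal K)$ then gives the identity immediately. This replaces what you flagged as ``the bulk of the actual written proof'' with a two-line observation.
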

\begin{proof}
By~\eqref{eq:f1},~\eqref{eq:parameterscloseqplus}, and~\eqref{eq:L}, we immediately see that the first equation in~\eqref{eq:p0pplusclosemathfrakqplus} holds true with
\begin{equation}
p_0=
-\frac{\mathcal{U}(\mathcal{K})}{2\mathcal{K}}
\left(
\frac{\vartheta_{11}'(\frac{-\mathcal{K}+\i\pi}{2\mathcal{K}}-t\mathcal{L}|\frac{\i\pi}{\mathcal{K}})}{\vartheta_{11}(\frac{-\mathcal{K}+\i\pi}{2\mathcal{K}}-t\mathcal{L}|\frac{\i\pi}{\mathcal{K}}))}
-
\frac{\vartheta_{11}'(\frac{-\mathcal{K}+\i\pi}{2\mathcal{K}}|\frac{\i\pi}{\mathcal{K}}))}{\vartheta_{11}(\frac{-\mathcal{K}+\i\pi}{2\mathcal{K}}|\frac{\i\pi}{\mathcal{K}}))}
\right)
\end{equation}
and if suffices to check that this expression agrees with the one in~\eqref{eq:p0pplustheta}.
This is easily verified thanks to the identity
\begin{equation}
\frac{\vartheta_{11}'(z|\tau)}{\vartheta_{11}(z|\tau)}=\i\pi+\frac{\vartheta'(z+\frac{1+\tau}{2}|\tau)}{\vartheta(z+\frac{1+\tau}{2}|\tau)},
\end{equation}
which follows from the definition of $\vartheta_{11}$, see~\eqref{eq:theta}, and from~\eqref{eq:zerothetaprime}.

Similarly, it follows from~\eqref{eq:parameterscloseqplus} and~\eqref{eq:L} that the second equation in~\eqref{eq:p0pplusclosemathfrakqplus} holds true with
\begin{equation}
\begin{aligned}
p_+&=\frac 14(-a+b-c+d)\frac{\vartheta_{11}(\frac{-\eta+\mathcal{K}-\i\pi}{2\mathcal{K}}-t\mathcal{L}|\frac{\i\pi}{\mathcal{K}})}{\vartheta_{11}(\frac{-\eta+\mathcal{K}-\i\pi}{2\mathcal{K}}|\frac{\i\pi}{\mathcal{K}})}\frac{\vartheta_{11}(\frac{\mathcal{K}-\i\pi}{2\mathcal{K}}|\frac{\i\pi}{\mathcal{K}})}{\vartheta_{11}(\frac{\mathcal{K}-\i\pi}{2\mathcal{K}}-t\mathcal{L}|\frac{\i\pi}{\mathcal{K}})},
\\
&=\frac 14(-a+b-c+d)\frac{\vartheta(\frac{\eta}{2\mathcal{K}}+t\mathcal{L}|\frac{\i\pi}{\mathcal{K}})}{\vartheta(\frac{\eta}{2\mathcal{K}}|\frac{\i\pi}{\mathcal{K}})}\frac{\vartheta(0|\frac{\i\pi}{\mathcal{K}})}{\vartheta(t\mathcal{L}|\frac{\i\pi}{\mathcal{K}})},
\end{aligned}
\end{equation}
where the last equality follows again from the relation between~$\vartheta_{11}$ and~$\vartheta$, see~\eqref{eq:theta}.
Next, we simplify the expression $a-b+c-d$. By~\eqref{eq:maintheorem:endpoints}, we have
\begin{equation}
a-b+c-d=\frac{\mathcal{U}(\mathcal{K})}{\mathcal{K}}\biggl(
\frac{\vartheta'_{01}}{\vartheta_{01}}(\frac{\eta}{4\mathcal{K}}| \frac{ \i\pi}{ \mathcal{K}})
-\frac{\vartheta'}{\vartheta}(\frac{\eta}{4\mathcal{K}}| \frac{ \i\pi}{ \mathcal{K}})
+\frac{\vartheta'_{10}}{\vartheta_{10}}(\frac{\eta}{4\mathcal{K}}| \frac{ \i\pi}{ \mathcal{K}})
-\frac{\vartheta'_{11}}{\vartheta_{11}}(\frac{\eta}{4\mathcal{K}}| \frac{ \i\pi}{ \mathcal{K}})
\biggr)
\end{equation}
We note the identity (for all $z\in\mathbb{C}$ and $\tau\in\mathbb{C}$ with $\Im\tau>0$)
\begin{equation}
\frac{\vartheta_{01}'}{\vartheta_{01}}\bigl(\frac z2\big|\tau\bigr)
-\frac{\vartheta'}{\vartheta}\bigl(\frac z2\big|\tau\bigr)
+\frac{\vartheta_{10}'}{\vartheta_{10}}\bigl(\frac z2\big|\tau\bigr)
-\frac{\vartheta_{11}'}{\vartheta_{11}}\bigl(\frac z2\big|\tau\bigr)
=-2\frac{\vartheta'_{11}(0|\tau)}{\vartheta(0|\tau)}\frac{\vartheta(z|\tau)}{\vartheta_{11}(z|\tau)}
\end{equation}
which is easily proved as both sides are anti-periodic with respect to the period lattice $\mathbb{Z}+\tau\mathbb{Z}$ (namely, both sides pick up a minus sign when $z\mapsto z+1$ or when $z\mapsto z+\tau$) and both sides have simple poles only when $z\in\mathbb{Z}+\tau\mathbb{Z}$, with residue $-2$ when $z=0$. Such claims follow from the definitions~\eqref{eq:theta}, the quasi-periodicity properties~\eqref{eq:periodictheta}, and the fact that the only zeros of $\vartheta_{11}(z|\tau)$ are $z\in\mathbb{Z}+\tau\mathbb{Z}$. 
The proof is completed by straightforward computations.
\end{proof}

For later convenience, we compute the average of $p_0(t)$ and $\log p_+(t)$ over the period.

\begin{proposition}\label{prop:mean}
    We have
    \begin{equation}
    \int_0^{1/\mathcal{L}} p_0(t)\d t = 0,\qquad \mathcal{L} \int_0^{1/\mathcal{L}} \log p_+(t)\d t 
    =\frac{\eta}2\left(\frac{\eta}{2\mathcal{K}}-1\right).
    \end{equation}
\end{proposition}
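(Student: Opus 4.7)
\medskip

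\noindent\textbf{Proof proposal.}
The plan is to exploit the fact that the theta function $\vartheta(z|\tau)$ with $\tau=\i\pi/\mathcal{K}$ purely imaginary takes strictly positive values for real $z$. Indeed, the zeros of $\vartheta(z|\tau)$ lie at $z\in\tfrac{1+\tau}{2}+\mathbb{Z}+\tau\mathbb{Z}$, which avoids the real axis when $\tau\in\i\mathbb{R}_{>0}$; combined with $\vartheta(0|\tau)>0$ and continuity, this gives $\vartheta(u|\tau)>0$ for all $u\in\mathbb{R}$. In particular, $\log\vartheta(\,\cdot\,|\tau)$ is a well-defined smooth, real-valued function on $\mathbb{R}$ which, by the quasi-periodicity $\vartheta(u+1|\tau)=\vartheta(u|\tau)$, is periodic of period $1$.

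For the first identity, I would change variables $u=t\mathcal{L}$ to obtain
\begin{equation*}
\int_0^{1/\mathcal{L}} p_0(t)\,\d t \;=\; \frac{1}{\mathcal{L}}\,\e^{\frac{\eta}{2}(\frac{\eta}{2\mathcal{K}}-1)}\,\frac{\vartheta_{11}(\frac{\eta}{2\mathcal{K}}|\frac{\i\pi}{\mathcal{K}})}{\vartheta_{11}'(0|\frac{\i\pi}{\mathcal{K}})}\int_0^1 \frac{\d}{\d u}\log\vartheta\bigl(u\big|\tfrac{\i\pi}{\mathcal{K}}\bigr)\,\d u,
\end{equation*}
and the right-hand integral vanishes by the fundamental theorem of calculus together with the periodicity $\log\vartheta(1|\tau)=\log\vartheta(0|\tau)$ just observed.

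For the second identity, I would again substitute $u=t\mathcal{L}$ and split the logarithm as
\begin{equation*}
\mathcal{L}\int_0^{1/\mathcal{L}}\log p_+(t)\,\d t \;=\; \frac{\eta}{2}\biggl(\frac{\eta}{2\mathcal{K}}-1\biggr) \;+\; \int_0^1 \log\vartheta\bigl(u+\tfrac{\eta}{2\mathcal{K}}\big|\tfrac{\i\pi}{\mathcal{K}}\bigr)\,\d u \;-\; \int_0^1 \log\vartheta\bigl(u\big|\tfrac{\i\pi}{\mathcal{K}}\bigr)\,\d u.
\end{equation*}
The last two integrals differ by a shift of integration variable by the real number $\tfrac{\eta}{2\mathcal{K}}$; since $\mathcal{K}>\eta/2$, this shift lies in $(0,1)$ and stays on the real line where $\log\vartheta$ is smooth and $1$-periodic, so an integral of a $1$-periodic function over any interval of length $1$ equals the integral over $[0,1]$. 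Hence those two integrals cancel and only the prefactor $\tfrac{\eta}{2}\bigl(\tfrac{\eta}{2\mathcal{K}}-1\bigr)$ survives, giving the claim.

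There is no substantial obstacle: both identities reduce, via the change of variable $u=t\mathcal{L}$, to statements about periodic real-valued functions on $\mathbb{R}$ and their integrals over a fundamental period. The only point requiring a brief justification is the positivity (hence the unambiguous definition of $\log\vartheta$) for real arguments with purely imaginary modular parameter, which follows from the standard description of the zero set of the Jacobi $\vartheta$-function.
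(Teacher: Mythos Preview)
Your proof is correct and follows essentially the same approach as the paper's: both rely on the positivity of $\vartheta(u|\tau)$ for real $u$ and purely imaginary $\tau$, so that $\log\vartheta$ is a smooth $1$-periodic real function, and then both use that the integral of a derivative over a period vanishes and that the integral of a periodic function over a period is translation-invariant. The paper's proof is slightly terser (it does not explicitly change variables to $u=t\mathcal{L}$), but the content is identical.
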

\begin{proof}
Since $\vartheta(z|\tau)=\vartheta(z+1|\tau)$ and $\vartheta(z|\tau)>0$ for all $z\in\mathbb{R}$ and $\tau\in\i\mathbb{R}_{>0}$, the integral $\int_{0}^{1/L}\frac{\vartheta'(tL|\tau)}{\vartheta(tL|\tau)}\d t$ vanishes and the integral $\int_{0}^{1/L}\log\vartheta(tL+z_0|\tau)\d t$ is independent of $z_0\in\mathbb{R}$ for all $t,L>0$ and $\tau\in\i\mathbb{R}_{>0}$.
The proof follows.
\end{proof}

\subsubsection{Inner Airy parametrices}

Let $z_0$ be any of the points ${\wt{a}},{\wt{b}},{\wt{c}},{\wt{d}}$.
By Proposition~\ref{prop:varphiqplus}, the maps
\begin{equation}
    z\mapsto\zeta_{z_0}(z)=\left(\frac 34t\varphi_{z_0}(z)\right)^{\frac 23}(z-z_0),\qquad
    z_0\in\lbrace\wt a,\wt b,\wt c,\wt d\rbrace,
\end{equation}
are conformal (injective) mappings of neighborhoods $\mathcal{Q}_{z_0}$ of $z=z_0$.
We may safely assume that 
\begin{equation}
\text{$\mathcal{Q}_{z_0}$ is an (open) square of side length $2\epsilon$ centered at $z_0$}
\end{equation}
and that these conformal mappings extend to open neighborhoods of the closures $\overline {\mathcal{Q}_{z_0}}$.
We can also assume that $\mathcal{Q}_{\wt{a}},\mathcal{Q}_{\wt{b}},\mathcal{Q}_{\wt{c}},\mathcal{Q}_{\wt{d}}$ are pairwise disjoint.
Moreover, by Remark~\ref{remark:positivitycoefficientsvarphiabcdqplus}), there exists $C>0$ such that
\begin{equation}
\zeta_{z_0}'(z)\big|_{z=z_0} = \left(\frac 34t\varphi_{z_0}(z_0)\right)^{\frac 23}\geq \left(\frac 34Ct\right)^{\frac 23}
\end{equation}
which implies that, for some $C'>0$,
\begin{equation}
\label{eq:conformalboundaryexpandsqplus}
\left|\zeta_{z_0}(z)\right|\geq C't^{\frac 23}\text{  when }z\in\partial \mathcal{Q}_{z_0}.
\end{equation}
Let us also note that $\zeta_{z_0}$ maps the real line into the real line and it maps $\gamma_{z_0}^\pm$ into the half-line emanating at the origin with argument $\pm\pi/3$ (when $z_0\in\lbrace\wt a,\wt c\rbrace$) or argument $\pm 2\pi/3$ (when $z_0\in\lbrace\wt b,\wt d\rbrace$).

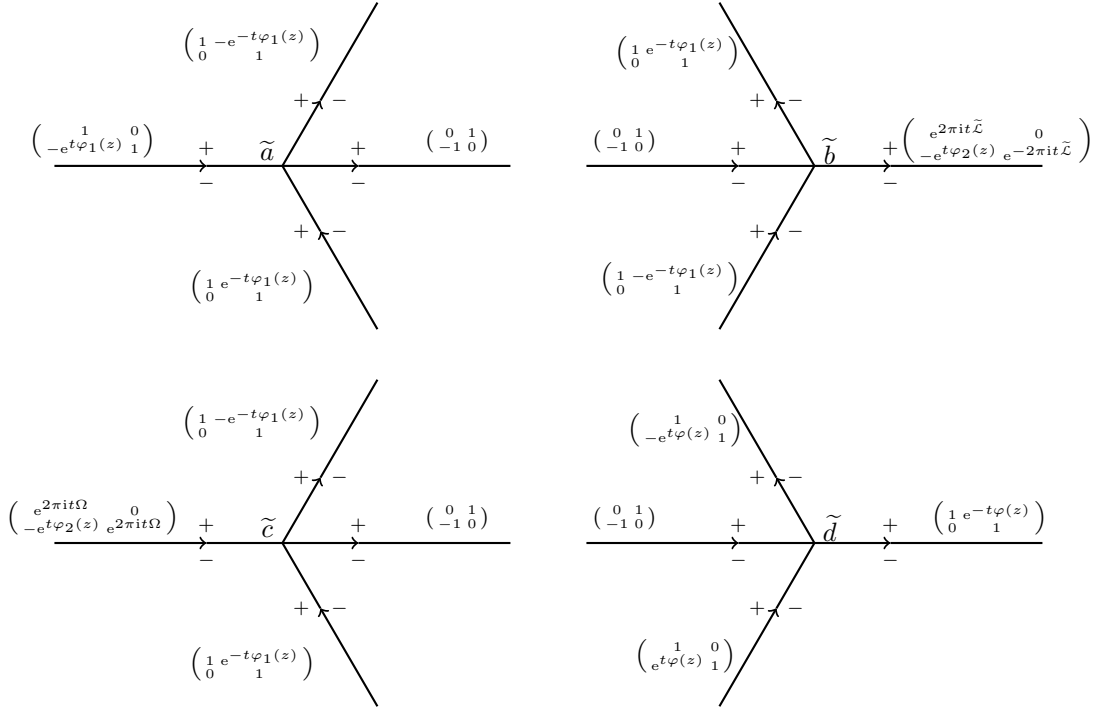
\begin{figure}[t]
\centering
\begin{tikzpicture}
\begin{scope}[shift={(3.5,-2.5)}]

\draw[->,thick] (-3,0) -- (-1,0) node[below]{\tiny{$-$}} node[above]{\tiny{$+$}};
\draw[thick] (-1,0) -- (0,0);
\draw[->,thick] (0,0) -- (1,0) node[below]{\tiny{$-$}} node[above]{\tiny{$+$}};
\draw[thick] (1,0) -- (3,0);

\begin{scope}[rotate=-30]
\draw[->,thick] (0,-2.5) -- (0,-1) node[right]{\tiny{$-$}} node[left]{\tiny{$+$}};
\draw[thick] (0,-1) -- (0,0);
\end{scope}
\begin{scope}[rotate=30]
\draw[->,thick] (0,0) -- (0,1) node[right]{\tiny{$-$}} node[left]{\tiny{$+$}};
\draw[thick] (0,1) -- (0,2.5);
\end{scope}

\node at (1/5,1/5) {\small{${\wt{d}}$}};

\node at (-2.5,1/3) {\tiny{$\left(\begin{smallmatrix}0 & 1\\ -1 & 0 \end{smallmatrix}\right)$}};
\node at (2.3,1/3) {\tiny{$\left(\begin{smallmatrix}1 & \e^{-t\varphi(z)}\\ 0 & 1 \end{smallmatrix}\right)$}};
\node at (-1.7,1.5) {\tiny{$\left(\begin{smallmatrix}1 & 0\\ -\e^{t\varphi(z)} & 1 \end{smallmatrix}\right)$}};
\node at (-1.7,-1.5) {\tiny{$\left(\begin{smallmatrix} 1 & 0\\ \e^{t\varphi(z)} & 1 \end{smallmatrix}\right)$}};
\end{scope}

\begin{scope}[shift={(-3.5,-2.5)}]
\draw[->,thick] (-3,0) -- (-1,0) node[below]{\tiny{$-$}} node[above]{\tiny{$+$}};
\draw[->,thick] (-1,0) -- (1,0) node[below]{\tiny{$-$}} node[above]{\tiny{$+$}};
\draw[thick] (1,0) -- (3,0);

\begin{scope}[rotate=30]
\draw[->,thick] (0,-2.5) -- (0,-1) node[right]{\tiny{$-$}} node[left]{\tiny{$+$}};
\draw[thick] (0,-1) -- (0,0);
\end{scope}
\begin{scope}[rotate=-30]
\draw[->,thick] (0,0) -- (0,1) node[right]{\tiny{$-$}} node[left]{\tiny{$+$}};
\draw[thick] (0,1) -- (0,2.5);
\end{scope}

\node at (-1/5,1/5) {\small{${\wt{c}}$}};

\node at (-2.5,1/3) {\tiny{$\left(\begin{smallmatrix}\e^{2\pi\i t\Omega} & 0 \\ -\e^{t\varphi_2(z)} & \e^{2\pi\i t\Omega}\end{smallmatrix}\right)$}};
\node at (2.3,1/3) {\tiny{$\left(\begin{smallmatrix}0 & 1\\ -1 & 0 \end{smallmatrix}\right)$}};
\node at (-.4,1.6) {\tiny{$\left(\begin{smallmatrix}1 & -\e^{-t\varphi_1(z)} \\ 0 & 1 \end{smallmatrix}\right)$}};
\node at (-.4,-1.6) {\tiny{$\left(\begin{smallmatrix} 1 & \e^{-t\varphi_1(z)} \\ 0 & 1 \end{smallmatrix}\right)$}};
\end{scope}

\begin{scope}[shift={(3.5,2.5)}]

\draw[->,thick] (-3,0) -- (-1,0) node[below]{\tiny{$-$}} node[above]{\tiny{$+$}};
\draw[thick] (-1,0) -- (0,0);
\draw[->,thick] (0,0) -- (1,0) node[below]{\tiny{$-$}} node[above]{\tiny{$+$}};
\draw[thick] (1,0) -- (3,0);

\begin{scope}[rotate=-30]
\draw[->,thick] (0,-2.5) -- (0,-1) node[right]{\tiny{$-$}} node[left]{\tiny{$+$}};
\draw[thick] (0,-1) -- (0,0);
\end{scope}
\begin{scope}[rotate=30]
\draw[->,thick] (0,0) -- (0,1) node[right]{\tiny{$-$}} node[left]{\tiny{$+$}};
\draw[thick] (0,1) -- (0,2.5);
\end{scope}

\node at (1/5,1/5) {\small{${\wt{b}}$}};

\node at (-2.5,1/3) {\tiny{$\left(\begin{smallmatrix}0 & 1\\ -1 & 0 \end{smallmatrix}\right)$}};
\node at (2.4,1/3) {\tiny{$\left(\begin{smallmatrix}\e^{2\pi\i t{\wt{\mathcal{L}}}} & 0 \\ -\e^{t\varphi_2(z)} & \e^{-2\pi\i t{\wt{\mathcal{L}}}} \end{smallmatrix}\right)$}};
\node at (-1.8,1.5) {\tiny{$\left(\begin{smallmatrix}1 & \e^{-t\varphi_1(z)}\\ 0 & 1 \end{smallmatrix}\right)$}};
\node at (-1.9,-1.5) {\tiny{$\left(\begin{smallmatrix} 1 & -\e^{-t\varphi_1(z)}\\ 0 & 1 \end{smallmatrix}\right)$}};
\end{scope}

\begin{scope}[shift={(-3.5,2.5)}]
\draw[->,thick] (-3,0) -- (-1,0) node[below]{\tiny{$-$}} node[above]{\tiny{$+$}};
\draw[->,thick] (-1,0) -- (1,0) node[below]{\tiny{$-$}} node[above]{\tiny{$+$}};
\draw[thick] (1,0) -- (3,0);

\begin{scope}[rotate=30]
\draw[->,thick] (0,-2.5) -- (0,-1) node[right]{\tiny{$-$}} node[left]{\tiny{$+$}};
\draw[thick] (0,-1) -- (0,0);
\end{scope}
\begin{scope}[rotate=-30]
\draw[->,thick] (0,0) -- (0,1) node[right]{\tiny{$-$}} node[left]{\tiny{$+$}};
\draw[thick] (0,1) -- (0,2.5);
\end{scope}

\node at (-1/5,1/5) {\small{${\wt{a}}$}};

\node at (-2.5,1/3) {\tiny{$\left(\begin{smallmatrix}1 & 0 \\ -\e^{t\varphi_1(z)} & 1 \end{smallmatrix}\right)$}};
\node at (2.3,1/3) {\tiny{$\left(\begin{smallmatrix}0 & 1\\ -1 & 0 \end{smallmatrix}\right)$}};
\node at (-.4,1.6) {\tiny{$\left(\begin{smallmatrix}1 & -\e^{-t\varphi_1(z)} \\ 0 & 1 \end{smallmatrix}\right)$}};
\node at (-.4,-1.6) {\tiny{$\left(\begin{smallmatrix} 1 & \e^{-t\varphi_1(z)} \\ 0 & 1 \end{smallmatrix}\right)$}};
\end{scope}

\end{tikzpicture}
\caption{Jumps of $T(z)$ in neighborhoods of $z={\wt{a}},{\wt{b}},{\wt{c}},{\wt{d}}$.}
\label{fig:localparaqplus}
\end{figure}

By definition,
\begin{equation}
\begin{aligned}
&\mbox{if $z\in \mathcal{Q}_{\wt{a}}\setminus({\wt{a}},+\infty)$,}&&t\varphi_1(z)=-\frac 43\left(-\zeta_{\wt{a}}(z) \right)^{\frac 32},
\\
&\mbox{if $z\in \mathcal{Q}_{\wt{b}}\setminus(-\infty,{\wt{b}})$,}&&t\varphi_2(z)=-\frac 43\zeta_{\wt{b}}(z) ^{\frac 32},
\\
&\mbox{if $z\in \mathcal{Q}_{\wt{c}}\setminus({\wt{c}},+\infty)$,}&&t\varphi_2(z)= -\frac 43\left(-\zeta_{\wt{c}}(z) \right)^{\frac 32} ,
\\
&\mbox{if $z\in \mathcal{Q}_{\wt{d}}\setminus(-\infty,{\wt{d}})$,}&&t\varphi(z)= \frac 43\zeta_{\wt{d}}(z) ^{\frac 32} .
\end{aligned}
\end{equation}
Using these identities, we see that the jumps of $\boldsymbol T(z)$ inside $\mathcal{Q}_{\wt{a}},\mathcal{Q}_{\wt{b}},\mathcal{Q}_{\wt{c}},\mathcal{Q}_{\wt{d}}$ (depicted in Figure~\ref{fig:localparaqplus}) can be exactly solved, using these conformal mappings, in terms of appropriate Airy model Riemann--Hilbert problem solutions $\boldsymbol \Phi^{\mathrm{Ai},\mathrm{I}}$, $\boldsymbol \Phi^{\mathrm{Ai},\mathrm{II}}$, and $\boldsymbol \Phi^{\mathrm{Ai},\mathrm{III}}$, defined in~Appendix~\ref{app:Airy} (whose jumps are depicted in~Figure~\ref{fig:AirymodelRHP}).
Therefore, following the routine practice of the nonlinear steepest descent method, we define the inner Airy parametrices by
\begin{equation}
\begin{aligned}
\boldsymbol P^{({\wt{a}})}(z)&=\boldsymbol E^{({\wt{a}})}(z)\,\boldsymbol\Phi^{\mathrm{Ai},\mathrm{II}}(\zeta_{\wt{a}}(z)) &&(z\in \mathcal{Q}_{\wt{a}}),
\\
\boldsymbol P^{({\wt{b}})}(z)&=\boldsymbol E^{({\wt{b}})}(z)\boldsymbol\Phi^{\mathrm{Ai},\mathrm{III}}(\zeta_{\wt{b}}(z))\,\e^{\pm\i\pi t{\wt{\mathcal{L}}}\boldsymbol\sigma_3} &&(z\in \mathcal{Q}_{\wt{b}}),
\\
\boldsymbol P^{({\wt{c}})}(z)&=\boldsymbol E^{({\wt{c}})}(z)\,\boldsymbol\Phi^{\mathrm{Ai},\mathrm{II}}(\zeta_{\wt{c}}(z))\,\e^{\pm\i\pi t{\wt{\mathcal{L}}}\boldsymbol\sigma_3} &&(z\in \mathcal{Q}_{\wt{c}}),
\\
\boldsymbol P^{({\wt{d}})}(z)&=\boldsymbol E^{({\wt{d}})}(z)\,\boldsymbol\Phi^{\mathrm{Ai},\mathrm{I}}(\zeta_{\wt{d}}(z)) &&(z\in \mathcal{Q}_{\wt{d}}),
\end{aligned}
\end{equation}
where the sign is chosen according to $\pm\Im z>0$ and
\begin{equation}
\begin{aligned}
\boldsymbol E^{({\wt{a}})}(z)&=\boldsymbol P^{\mathrm{out}}(z)\,\boldsymbol G^{-1}\,\bigl(-\zeta_{\wt{a}}(z)\bigr)^{-\frac 14\boldsymbol\sigma_3} &&(z\in \mathcal{Q}_{\wt{a}}),
\\
\boldsymbol E^{({\wt{b}})}(z)&=\boldsymbol P^{\mathrm{out}}(z)\,\e^{\mp\i\pi t{\wt{\mathcal{L}}} \boldsymbol \sigma_3}\,\boldsymbol G\, \zeta_{\wt{b}}(z)^{-\frac 14\boldsymbol\sigma_3}&&(z\in \mathcal{Q}_{\wt{b}}),
\\
\boldsymbol E^{({\wt{c}})}(z)&=\boldsymbol P^{\mathrm{out}}(z)\,\e^{\mp\i\pi t{\wt{\mathcal{L}}}\boldsymbol\sigma_3}\, \boldsymbol G^{-1}\,\bigl(-\zeta_{\wt{c}}(z)\bigr)^{-\frac 14\boldsymbol\sigma_3} &&(z\in \mathcal{Q}_{\wt{c}}),
\\
\boldsymbol E^{({\wt{d}})}(z)&=\boldsymbol P^{\mathrm{out}}(z)\,\boldsymbol G^{-1}\,\zeta_{\wt{d}}(z)^{\frac 14\boldsymbol\sigma_3} &&(z\in \mathcal{Q}_{\wt{d}}),
\end{aligned}
\end{equation}
where, again, the sign is determined by $\pm\Im z>0$, and $\boldsymbol P^{\mathrm{out}}$ is the outer parametrix introduced in see~\Cref{sec:outerparaqplus}, $\boldsymbol G$ is defined in~\eqref{eq:Gmatrix}, and $\boldsymbol \Phi^{\mathrm{Ai},\mathrm{I}}$, $\boldsymbol\Phi^{\mathrm{Ai},\mathrm{II}}$, and $\Phi^{\mathrm{Ai},\mathrm{III}}$ are defined in~\eqref{eq:defPhiAi1}--\eqref{eq:defPhiAi2} and solve the model Airy Riemann--Hilbert problem~\ref{cRHp:Airy}.
The usual properties of this construction are summarized in the next proposition.

\begin{proposition}
\label{prop:matchingqplus}
Let $z_0\in\lbrace\wt a,\wt b,\wt c,\wt d\rbrace$.
The matrix $\boldsymbol E^{(z_0)}$ is analytic in $\mathcal{Q}_{z_0}$.
The matrix $\boldsymbol P^{(z_0)}$ is analytic in $\mathcal{Q}_{z_0}\setminus\Sigma_T$ and satisfies the same jump condition as $\boldsymbol T$ on $\Sigma_{T}^\circ\cap \mathcal{Q}_{z_0}$.
Moreover, when $z\in\partial \mathcal{Q}_{z_0}$ we have 
\begin{equation}
\boldsymbol P^{(z_0)}(z)\boldsymbol P^{\mathrm{out}}(z)^{-1} = \boldsymbol {\mathrm{I}}+t^{-1}\wt{\boldsymbol J}_{R,z_0}(z)+O(t^{-2}),\quad t\to+\infty,
\end{equation}
with error term $O(t^{-1})$ uniform for $z\in\partial \mathcal{Q}_{z_0}$ and for $x\in[x_*+\delta,2-\delta]$ (for any~$\delta>0$) and
\begin{equation}
\label{eq:JR1qplus}
\begin{aligned}
\wt{\boldsymbol J}_{R,{\wt{a}}}(z) &= -\frac{1}{36\,\varphi_1(z)}
\boldsymbol P^{{\mathrm{out}}}(z)
\begin{pmatrix}
    -1 & 6\i \\ 6\i & 1
\end{pmatrix}
\boldsymbol P^{{\mathrm{out}}}(z)^{-1},
\\
\wt{\boldsymbol J}_{R,{\wt{b}}}(z) &=-\frac{1}{36\,\varphi_2(z)}
\boldsymbol P^{{\mathrm{out}}}(z)
\begin{pmatrix}
    -1 & -6\i\e^{\mp 2\pi\i t {\wt{\mathcal{L}}}} \\ -6\i\e^{\pm 2\pi\i t {\wt{\mathcal{L}}}} & 1
\end{pmatrix}
\boldsymbol P^{{\mathrm{out}}}(z)^{-1},
\\
\wt{\boldsymbol J}_{R,{\wt{c}}}(z) &= -\frac{1}{36\,\varphi_2(z)}
\boldsymbol P^{{\mathrm{out}}}(z)
\begin{pmatrix}
    -1 & 6\i\e^{\mp 2\pi\i t {\wt{\mathcal{L}}}} \\ 6\i\e^{\pm 2\pi\i t {\wt{\mathcal{L}}}} & 1
\end{pmatrix}
\boldsymbol P^{{\mathrm{out}}}(z)^{-1},
\\
\wt{\boldsymbol J}_{R,{\wt{d}}}(z) &= \frac{1}{36\,\varphi(z)}
\boldsymbol P^{{\mathrm{out}}}(z)
\begin{pmatrix}
    1 & 6\i \\ 6\i & -1
\end{pmatrix}
\boldsymbol P^{{\mathrm{out}}}(z)^{-1}.
\end{aligned}
\end{equation}
Here, $\boldsymbol P^{{\mathrm{out}}}$ is the outer parametrix defined in~\eqref{eq:Poutexplicit2cut} and the sign $\pm$ is determined by $\pm\Im z>0$.
\end{proposition}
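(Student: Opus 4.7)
The plan is to carry out a parallel of the argument used for Proposition~\ref{prop:matchingqminus}, now for four endpoints and with the extra middle-cut jump $\e^{2\pi\i t\wt{\mathcal{L}}\boldsymbol\sigma_3}$ on $({\wt{b}},{\wt{c}})$ accounted for by the factors $\e^{\mp\i\pi t\wt{\mathcal{L}}\boldsymbol\sigma_3}$ that appear in $\boldsymbol E^{({\wt{b}})}$ and $\boldsymbol E^{({\wt{c}})}$. The three claims split naturally into three verifications: analyticity of the prefactor $\boldsymbol E^{(z_0)}$, matching of the local jumps, and the boundary expansion.

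For the analyticity of $\boldsymbol E^{(z_0)}$ in $\mathcal{Q}_{z_0}$, I would proceed case-by-case. For $z_0={\wt{a}}$ and $z_0={\wt{d}}$, the apparent jump of $\boldsymbol P^{\mathrm{out}}$ across $\mathbb{R}\cap\mathcal{Q}_{z_0}$ is $\bigl(\begin{smallmatrix}0&1\\-1&0\end{smallmatrix}\bigr)$, and the jump of $(-\zeta_{z_0})^{\pm\frac 14\boldsymbol\sigma_3}$ or $\zeta_{z_0}^{\pm\frac 14\boldsymbol\sigma_3}$ exactly compensates it via the identity $\bigl(\begin{smallmatrix}0&1\\-1&0\end{smallmatrix}\bigr)\boldsymbol G^{-1}\operatorname{diag}(\i,-\i)=\boldsymbol G^{-1}$ (used already in Proposition~\ref{prop:matchingqminus}); the isolated singularity at $z_0$ is then removable since $\boldsymbol E^{(z_0)}(z)=O\bigl((z-z_0)^{-1/2}\bigr)$. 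For $z_0={\wt{b}}$ and $z_0={\wt{c}}$, the analogous computation is more delicate: the jump of $\boldsymbol P^{\mathrm{out}}$ across $\mathbb{R}\cap\mathcal{Q}_{z_0}$ on the side facing $({\wt{b}},{\wt{c}})$ is $\e^{2\pi\i t\wt{\mathcal{L}}\boldsymbol\sigma_3}$ and on the other side is $\bigl(\begin{smallmatrix}0&1\\-1&0\end{smallmatrix}\bigr)$; the symmetric splitting $\e^{2\pi\i t\wt{\mathcal{L}}\boldsymbol\sigma_3}=\e^{-\i\pi t\wt{\mathcal{L}}\boldsymbol\sigma_3}\cdot \e^{\i\pi t\wt{\mathcal{L}}\boldsymbol\sigma_3}$, together with the commutation of $\boldsymbol\sigma_3$-conjugation with $\zeta^{\pm\frac 14\boldsymbol\sigma_3}$, eliminates the middle-cut discontinuity, and a direct matrix identity then cancels the $\bigl(\begin{smallmatrix}0&1\\-1&0\end{smallmatrix}\bigr)$-jump on the remaining side; the removability of the isolated singularity is handled as before.

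For the jump matching of $\boldsymbol P^{(z_0)}$ on $\Sigma_T^\circ\cap\mathcal{Q}_{z_0}$, I would match the local contour configurations of $\boldsymbol T$ (Figure~\ref{fig:localparaqplus}) with those of $\boldsymbol\Phi^{\mathrm{Ai},\mathrm I}$, $\boldsymbol\Phi^{\mathrm{Ai},\mathrm{II}}$, $\boldsymbol\Phi^{\mathrm{Ai},\mathrm{III}}$ (Figure~\ref{fig:AirymodelRHP}) via the conformal map $\zeta_{z_0}$; the orientation conventions select the correct Airy flavor in each case. Left multiplication by the analytic prefactor $\boldsymbol E^{(z_0)}$ does not alter the jumps, while right multiplication by $\e^{\pm\i\pi t\wt{\mathcal{L}}\boldsymbol\sigma_3}$ (piecewise constant across $\mathbb{R}$) generates, only on the real interval inside $\mathcal{Q}_{{\wt{b}}}$ or $\mathcal{Q}_{{\wt{c}}}$ facing the middle cut, the additional factor $\e^{\pm 2\pi\i t\wt{\mathcal{L}}\boldsymbol\sigma_3}$ needed to reproduce the jump in~\eqref{eq:JTqplus}.

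For the boundary expansion on $\partial\mathcal{Q}_{z_0}$, I would invoke the asymptotic~\eqref{eq:asympPhiAi} of the Airy model together with the uniform estimate $\zeta_{z_0}(z)^{-1}=O(t^{-2/3})$ from~\eqref{eq:conformalboundaryexpandsqplus}, obtaining
\begin{equation}
\boldsymbol\Phi^{\mathrm{Ai},\bullet}\bigl(\zeta_{z_0}(z)\bigr) = \bigl(\boldsymbol{\mathrm I}\pm\tfrac{1}{36}\,\zeta_{z_0}(z)^{-\frac 32}\bigl(\begin{smallmatrix}\mp 1 & 6\i \\ 6\i & \pm 1\end{smallmatrix}\bigr)+O(t^{-2})\bigr)\cdot(\mbox{model piece}),
\end{equation}
and then translate $\zeta_{z_0}^{-3/2}$ into $\bigl(t\varphi(z)\bigr)^{-1}$, $\bigl(t\varphi_1(z)\bigr)^{-1}$, or $\bigl(t\varphi_2(z)\bigr)^{-1}$ using the local identities between the conformal coordinate and the phases. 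Combining with $\boldsymbol E^{(z_0)}$, the model piece cancels the remaining $\zeta_{z_0}^{\pm\frac 14\boldsymbol\sigma_3}\boldsymbol G^{\pm 1}$ factors and leaves a two-sided conjugation by $\boldsymbol P^{\mathrm{out}}$; the factors $\e^{\pm 2\pi\i t\wt{\mathcal{L}}}$ in the off-diagonal entries of $\wt{\boldsymbol J}_{R,{\wt{b}}}$ and $\wt{\boldsymbol J}_{R,{\wt{c}}}$ then arise precisely from conjugating by the $\e^{\pm\i\pi t\wt{\mathcal{L}}\boldsymbol\sigma_3}$ that were inserted in $\boldsymbol E^{({\wt{b}})}$ and $\boldsymbol E^{({\wt{c}})}$.

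The main obstacle is the bookkeeping at ${\wt{b}}$ and ${\wt{c}}$: there the orientation of the lens contours is reversed compared to ${\wt{a}}$ and ${\wt{d}}$, the relevant Airy flavor changes (III at ${\wt{b}}$, II at ${\wt{c}}$), and the $\wt{\mathcal{L}}$-exponentials must be tracked through the conformal inversion of $\varphi_2$. I would therefore organize the computation so that analyticity of $\boldsymbol E^{(z_0)}$ is expressed as a single explicit matrix identity that simultaneously delivers the removable-singularity statement and exhibits how the symmetric splitting of $\e^{2\pi\i t\wt{\mathcal{L}}\boldsymbol\sigma_3}$ produces the $\e^{\pm 2\pi\i t\wt{\mathcal{L}}}$-entries in $\wt{\boldsymbol J}_{R,{\wt{b}}}$ and $\wt{\boldsymbol J}_{R,{\wt{c}}}$; uniformity of the $O(t^{-2})$ error for $x\in[x_*+\delta,2-\delta]$ follows from the uniform bounds on the endpoints~\eqref{eq:endpointscloseqplus} and on the coefficients $\varphi_{z_0}(z_0)$ (Remark~\ref{remark:positivitycoefficientsvarphiabcdqplus}).
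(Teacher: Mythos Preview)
Your proposal is correct and follows exactly the approach the paper intends: the paper omits the proof entirely, stating that it ``consists of verifications and computations which are typical of the nonlinear steepest descent method and are completely analogous to those in the proof of Proposition~\ref{prop:matchingqminus}.'' Your outline in fact supplies more detail than the paper does, correctly identifying the new ingredient (the symmetric splitting of $\e^{2\pi\i t\wt{\mathcal{L}}\boldsymbol\sigma_3}$ at ${\wt{b}}$ and ${\wt{c}}$) and tracing how it propagates through the analyticity check, the jump matching, and the boundary expansion.
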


We omit the proof because it consists of verifications and computations which are typical of the nonlinear steepest descent method and are completely analogous to those in the proof of Proposition~\ref{prop:matchingqminus}.

As usual in this type of analysis (see for example~\cite{bleher2010exact,bothner2015asymptotic}), the matrices $\wt{\boldsymbol J}_{R,z_0}$ (with $z_0\in\lbrace\wt a,\wt b,\wt c,\wt d\rbrace$) extend to meromorphic functions of $z$ in $\mathcal{Q}_{z_0}$ whose only singularity is a double pole at $z=z_0$, as it is readily verified by using the analytic properties of $\boldsymbol P^{\mathrm{out}}$ and of $\varphi,\varphi_1,\varphi_2$.
We omit this standard check.

In the final step of the Riemann--Hilbert asymptotic analysis we will need asymptotics for the $(1,1)$- and $(1,2)$-entries of $\res{z=z_0}\wt{\boldsymbol J}_{R,z_0}\d z$ when $t$ is large and $z_0\in\lbrace\wt a,\wt b,\wt c,\wt d\rbrace$, which we now provide.

First, it is immediate to see that 
\begin{equation}
\label{eq:JtildeJqplus}
\res{z=\wt a}\wt{\boldsymbol J}_{R,\wt a}\d z=\res{z=a}\boldsymbol J_{R,a}\d z+O(t^{-\infty}),
\end{equation}
and similarly for $b,c,d$, where
\begin{equation}
\label{eq:JR1qplus_tlimit}
\begin{aligned}
\boldsymbol J_{R,a}(z) &= \frac{1}{36(a-z)^{3/2}\bigl(A_a-B_a(z-a)\bigr)}
\boldsymbol U(z)
\begin{pmatrix}
    -1 & 6\i \\ 6\i & 1
\end{pmatrix}
\boldsymbol U(z)^{-1},
\\
\boldsymbol J_{R,b}(z) &=\frac{1}{36(z-b)^{3/2}\bigl(A_b+B_b(z-b)\bigr)}
\boldsymbol U(z)
\begin{pmatrix}
    -1 & -6\i\e^{\mp 2\pi\i t \mathcal{L}} \\ -6\i\e^{\pm 2\pi\i t \mathcal{L}} & 1
\end{pmatrix}
\boldsymbol U(z)^{-1},
\\
\boldsymbol J_{R,c}(z) &=\frac{1}{36(c-z)^{3/2}\bigl(A_c-B_c(z-c)\bigr)}
\boldsymbol U(z)
\begin{pmatrix}
    -1 & 6\i\e^{\mp 2\pi\i t \mathcal{L}} \\ 6\i\e^{\pm 2\pi\i t \mathcal{L}} & 1
\end{pmatrix}
\boldsymbol U(z)^{-1},
\\
\boldsymbol J_{R,d}(z) &= \frac{1}{36(z-d)^{3/2}\bigl(A_d+B_d(z-d)\bigr)}
\boldsymbol U(z)
\begin{pmatrix}
    1 & 6\i \\ 6\i & -1
\end{pmatrix}
\boldsymbol U(z)^{-1}.
\end{aligned}
\end{equation}
Here, the signs $\pm$ are chosen according to $\pm\Im z>0$, the expressions $A_{z_0}$ and $B_{z_0}$ have been defined in~\eqref{eq:Az0Bz0} (see~Proposition~\ref{prop:varphiqplus}), and
\begin{equation}
\boldsymbol U(z)=\begin{pmatrix}
    \frac{\xi(z)+\xi(z)^{-1}}{2}\frac{\chi(w(z)-w_1)}{\chi(w_\infty-w_1)}&
     \frac{\xi(z)-\xi(z)^{-1}}{2\i}\frac{\chi(-w(z)-w_1)}{\chi(w_\infty-w_1)}
     \\
     -\frac{\xi(z)-\xi(z)^{-1}}{2\i}\frac{\chi(w(z)-w_2)}{\chi(-w_\infty-w_2)}&
     \frac{\xi(z)+\xi(z)^{-1}}{2}\frac{\chi(-w(z)-w_2)}{\chi(-w_\infty-w_2)}
\end{pmatrix}
\end{equation}
where, see~\eqref{eq:xitilde} and~\eqref{eq:chitilde},
\begin{equation}
\label{eq:xichi}
\xi(z) =\left(\frac{(z-b)(z-d)}{(z-a)(z-c)}\right)^{\frac 14},\qquad
\chi(w) = \frac{\vartheta_{11}(\frac{w}{2\mathcal{K}}-t\mathcal{L}|\frac{\i\pi}{\mathcal{K}})}{\vartheta_{11}(\frac w{2\mathcal{K}}|\frac{\i\pi}{\mathcal{K}})},
\end{equation}
as well as, see~\eqref{eq:w1w2Pout},
\begin{equation}
w_1=w_\infty-\mathcal{K}+\i\pi\quad\text{and}\quad w_2=-w_\infty+\mathcal{K}+\i\pi.
\end{equation}
After some lengthy (albeit elementary) computations which we omit, we obtain
\begin{equation}
\label{eq:explicitJR1qplus1}
\begin{aligned}
&\res{z=d}(\boldsymbol J_{R,d}(z))_{1,1}\d z=
\frac 1{A_d \,\chi(w_\infty-w_1)\,\chi(-w_\infty-w_2)}
\times
\\
&\quad
\times\biggl[\frac{m}{18T_d}(\chi'(-w_1)\chi(-w_2)-\chi(-w_1)\chi'(-w_2))
\\
&\qquad -\frac{m^2}{36T_d(d-b)}(5\chi''(-w_1)\chi(-w_2)+14\chi'(-w_1)\chi'(-w_2)+5\chi(-w_1)\chi''(-w_2))
\\
&\qquad +\frac{\sqrt{\frac{(d-a)(d-c)}{d-b}}}{144} \left(10 \frac{B_d}{A_d}-\frac 5{d-a}+\frac 5{d-b}-\frac 5{d-c}+14\frac{d-b}{(d-a)(d-c)}\right)\chi(-w_1)\chi(-w_2)
\biggr],
\end{aligned}
\end{equation}
\begin{equation}
\begin{aligned}
&\res{z=a}(\boldsymbol J_{R,a}(z))_{1,1}\d z=
\frac 1{A_a \,\chi(w_\infty-w_1)\,\chi(-w_\infty-w_2)}
\times
\\
&\quad
\times\biggl[-\frac{m}{18T_a}(\chi'(\mathcal{K}-w_1)\chi(\mathcal{K}-w_2)-\chi(\mathcal{K}-w_1)\chi'(\mathcal{K}-w_2))
\\
&\qquad -\frac{m^2}{36T_a(c-a)}(5\chi''(\mathcal{K}-w_1)\chi(\mathcal{K}-w_2)+14\chi'(\mathcal{K}-w_1)\chi'(\mathcal{K}-w_2)+5\chi(\mathcal{K}-w_1)\chi''(\mathcal{K}-w_2))
\\
&\qquad +\frac{\sqrt{\frac{(b-a)(d-a)}{c-a}}}{144} \left(10 \frac{B_a}{A_a}+\frac 5{a-b}-\frac 5{a-c}+\frac 5{a-d}+14\frac{c-a}{(b-a)(d-a)}\right)\chi(\mathcal{K}-w_1)\chi(\mathcal{K}-w_2)
\biggr],
\end{aligned}
\end{equation}
\begin{equation}
\begin{aligned}
&\res{z=b}(\boldsymbol J_{R,b}(z))_{1,1}\d z=
\frac {\e^{-2\pi\i t \mathcal{L}}}{A_b \,\chi(w_\infty-w_1)\,\chi(-w_\infty-w_2)}
\times
\\
&\quad
\times\biggl[-\frac{m}{18T_b}(\chi'(\i\pi+\mathcal{K}-w_1)\chi(\i\pi+\mathcal{K}-w_2)-\chi(\i\pi+\mathcal{K}-w_1)\chi'(\i\pi+\mathcal{K}-w_2))
\\
&\qquad +\frac{m^2}{36T_b(d-b)}\bigl(5\chi''(\i\pi+\mathcal{K}-w_1)\chi(\i\pi+\mathcal{K}-w_2)+14\chi'(\i\pi+\mathcal{K}-w_1)\chi'(\i\pi+\mathcal{K}-w_2)
\\
& \qquad \qquad \qquad \qquad \qquad \qquad \qquad \qquad \qquad \qquad \qquad \qquad +5\chi(\i\pi+\mathcal{K}-w_1)\chi''(\i\pi+\mathcal{K}-w_2)\bigr)
\\
&\qquad +\frac{\sqrt{\frac{(b-a)(c-b)}{d-b}}}{144} \left(-10 \frac{B_b}{A_b}+\frac 5{b-a}+\frac 5{b-c}-\frac 5{b-d} 
-14\frac{d-b}{(b-a)(c-b)}\right)
\\
& \qquad \qquad \qquad \qquad \qquad \qquad \qquad
\times\chi(\i\pi+\mathcal{K}-w_1)\chi(\i\pi+\mathcal{K}-w_2)
\biggr],
\end{aligned}
\end{equation}
\begin{equation}
\label{eq:explicitJR1qplus7}
\begin{aligned}
&\res{z=c}(\boldsymbol J_{R,c}(z))_{1,1}\d z=
\frac {\e^{-2\pi\i t \mathcal{L}}}{A_c \,\chi(w_\infty-w_1)\,\chi(-w_\infty-w_2)}
\times
\\
&\quad
\times\biggl[-\frac{m}{18T_c}(\chi'(\i\pi-w_1)\chi(\i\pi-w_2)-\chi(\i\pi-w_1)\chi'(\i\pi-w_2))
\\
&\qquad -\frac{m^2}{36T_c(c-a)}(5\chi''(\i\pi-w_1)\chi(\i\pi-w_2)+14\chi'(\i\pi-w_1)\chi'(\i\pi-w_2)+5\chi(\i\pi-w_1)\chi''(\i\pi-w_2))
\\
&\qquad +\frac{\sqrt{\frac{(c-b)(d-c)}{c-a}}}{144} \left(10 \frac{B_c}{A_c}-\frac 5{c-a}+\frac 5{c-b}+\frac 5{c-d}+14\frac{c-a}{(c-b)(d-c)}\right)\chi(\i\pi-w_1)\chi(\i\pi-w_2)
\biggr].
\end{aligned}
\end{equation}
The coefficients $A_{z_0},B_{z_0},T_{z_0}$ (for $z_0=a,b,c,d$) were defined in \eqref{eq:Az0Bz0}, \eqref{eq:Tz0_Sz0} and $m=\mathcal{U}(\mathcal{K}(x))$ as in \eqref{eq:systemfinaltwocut}.
The residues $\res{z=z_0}(\boldsymbol J_{R,z_0}(z))_{1,2}\d z$ (for $z_0=a,b,c,d$) can be expressed by completely analogous formulas.
We omit them as their explicit form will not be needed in what follows, as we will only use the fact that they are smooth functions of $x,t$, periodic in $t$ of period~$1/\mathcal{L}(x)$.

\subsection{Error analysis}

Let us finally fix $\epsilon>0$ sufficiently small such that the results of the previous sections hold true.
Let $\Sigma_R$ be
\begin{equation}
\Sigma_R=(-\infty,{\wt{a}}-\epsilon]\cup[{\wt{b}}+\epsilon,{\wt{c}}-\epsilon]\cup[{\wt{d}}+\epsilon,+\infty)\cup \Gamma_L^+\cup\Gamma_L^-\cup\Gamma_R^+\cup\Gamma_R^-\cup\biggl(\bigcup_{z_0\in\lbrace{\wt{a}},{\wt{b}},{\wt{c}},{\wt{d}}\rbrace}\partial \mathcal{Q}_{z_0}\biggr)
\end{equation}
and let $\Sigma_R^\circ$ be, as usual, the complement in $\Sigma_R$ of the points of intersection of the various contours forming $\Sigma_R$, namely $\Sigma_R^\circ$ consists of the points in $\Sigma_R$ except for ${\wt{a}}-\epsilon,{\wt{b}}+\epsilon,{\wt{c}}-\epsilon,{\wt{d}}+\epsilon,{\wt{a}}^\pm,{\wt{b}}^\pm,{\wt{c}}^\pm,{\wt{d}}^\pm$ and for the vertices of the squares $\mathcal{Q}_{\wt{a}}$, $\mathcal{Q}_{\wt{b}}$, $\mathcal{Q}_{\wt{c}}$, and $\mathcal{Q}_{\wt{d}}$.
The orientation on $\Sigma_R^\circ$ is illustrated in Figure~\ref{fig:SigmaRqminus}.
\begin{figure}[t]
\centering
\begin{tikzpicture}

\draw[very thin,->] (-7.5,0) -- (7.5,0);
\draw[very thin,->] (0,-2) -- (0,2);

\draw[very thick,->] (-7,0) -- (-5,0);
\draw[very thick] (-5,0) -- (-7/2,0);
\draw[very thick,->] (-1/2,0) -- (1/2,0);
\draw[very thick] (1/2,0) -- (3/2,0);
\draw[very thick,->] (9/2,0) -- (6,0) ;
\draw[very thick] (6,0) -- (7,0);

\draw[very thick, ->] (-7/2,-1/2) -- (-7/2,-1/4);
\draw[very thick, ->] (-7/2,-1/4) -- (-7/2,1/4);
\draw[very thick] (-7/2,1/4) -- (-7/2,1/2);

\draw[very thick, ->] (-5/2,-1/2) -- (-5/2,-1/4);
\draw[very thick, ->] (-5/2,-1/4) -- (-5/2,1/4);
\draw[very thick] (-5/2,1/4) -- (-5/2,1/2);

\draw[very thick, ->] (-3/2,-1/2) -- (-3/2,-1/4);
\draw[very thick, ->] (-3/2,-1/4) -- (-3/2,1/4);
\draw[very thick] (-3/2,1/4) -- (-3/2,1/2);

\draw[very thick, ->] (-1/2,-1/2) -- (-1/2,-1/4);
\draw[very thick, ->] (-1/2,-1/4) -- (-1/2,1/4);
\draw[very thick] (-1/2,1/4) -- (-1/2,1/2);

\draw[very thick, ->] (3/2,-1/2) -- (3/2,-1/4);
\draw[very thick, ->] (3/2,-1/4) -- (3/2,1/4);
\draw[very thick] (3/2,1/4) -- (3/2,1/2);

\draw[very thick, ->] (9/2,-1/2) -- (9/2,-1/4);
\draw[very thick, ->] (9/2,-1/4) -- (9/2,1/4);
\draw[very thick] (9/2,1/4) -- (9/2,1/2);

\draw[very thick, ->] (7/2,-1/2) -- (7/2,-1/4);
\draw[very thick, ->] (7/2,-1/4) -- (7/2,1/4);
\draw[very thick] (7/2,1/4) -- (7/2,1/2);

\draw[very thick, ->] (5/2,-1/2) -- (5/2,-1/4);
\draw[very thick, ->] (5/2,-1/2) -- (5/2,1/4);
\draw[very thick] (5/2,1/4) -- (5/2,1/2);

\draw[very thick] (-4,1/2) -- (-5,5/4);
\draw[very thick,<-] (-5,5/4) -- (-6,2);

\draw[very thick] (-4,-1/2) -- (-5,-5/4);
\draw[very thick,<-] (-5,-5/4) -- (-6,-2);

\draw[very thick,->] (-4,1/2) -- (-3,1/2);
\draw[very thick,->] (-3,1/2) -- (-2,1/2);
\draw[very thick,->] (-2,1/2) -- (-1,1/2);
\draw[very thick,->] (-1,1/2) -- (1/2,1/2);
\draw[very thick,->] (1/2,1/2) -- (2,1/2);
\draw[very thick,->] (2,1/2) -- (3,1/2);
\draw[very thick,->] (3,1/2) -- (4,1/2);
\draw[very thick,->] (4,1/2) -- (6,1/2);
\draw[very thick] (6,1/2) -- (7,1/2);

\draw[very thick,->] (-4,-1/2) -- (-3,-1/2);
\draw[very thick,->] (-3,-1/2) -- (-2,-1/2);
\draw[very thick,->] (-2,-1/2) -- (-1,-1/2);
\draw[very thick,->] (-1,-1/2) -- (1/2,-1/2);
\draw[very thick,->] (1/2,-1/2) -- (2,-1/2);
\draw[very thick,->] (2,-1/2) -- (3,-1/2);
\draw[very thick,->] (3,-1/2) -- (4,-1/2);
\draw[very thick,->] (4,-1/2) -- (6,-1/2);
\draw[very thick] (6,-1/2) -- (7,-1/2);

\fill (-3,0) circle (1pt);
\node at (-3,-1/5) {\small{${\wt{a}}$}};

\fill (-1,0) circle (1pt);
\node at (-1,-1/5) {\small{${\wt{b}}$}};

\fill (2,0) circle (1pt);
\node at (2,-1/5) {\small{${\wt{c}}$}};

\fill (4,0) circle (1pt);
\node at (4,-1/5) {\small{${\wt{d}}$}};

\end{tikzpicture}
\caption{$\Sigma_R$ (case $x_*<x<2$).}
\label{fig:SigmaRqplus}
\end{figure}
Introduce the analytic function $\boldsymbol R:\mathbb{C}\setminus\Sigma_R\to\mathrm{SL}(2,\mathbb{C})$ by
\begin{equation}
\boldsymbol R(z) = \begin{cases}
\boldsymbol T(z)\boldsymbol P^{(z_0)}(z)^{-1}&\text{if }z\in \mathcal{Q}_{z_0},\,\,z_0\in\lbrace\wt a,\wt b,\wt c,\wt d\rbrace,\\
\boldsymbol T(z)\boldsymbol P^{{\mathrm{out}}}(z)^{-1}&\text{otherwise}.
\end{cases}
\end{equation}
By construction, $\boldsymbol R(z)$ is the unique solution to the following Riemann--Hilbert problem.
\begin{cRHp}
\label{cRHp:Rqplus}
Find an analytic function $\boldsymbol R:\mathbb{C}\setminus \Sigma_R\to\mathrm{SL}(2,\mathbb{C})$ such that the following conditions hold true.
\begin{enumerate}[leftmargin=*]
\item Non-tangential boundary values of $\boldsymbol R$ exist and are continuous on $\Sigma_R^\circ$ and satisfy
\begin{equation}
\boldsymbol R_+(z)=\boldsymbol R_-(z)\boldsymbol J_R(z),\qquad z\in \Sigma_R^\circ,
\end{equation}
where $\boldsymbol J_R(z)$ is given for $z\in \Sigma_R^\circ$ by 
\begin{equation}
\label{eq:jumpRqplus}
\boldsymbol J_R(z)=\begin{cases}
\boldsymbol P^{{\mathrm{out}}}(z)\boldsymbol P^{(z_0)}(z)^{-1}&\text{if } z\in\partial \mathcal{Q}_{z_0},\,\,z_0\in\lbrace\wt a,\wt b,\wt c,\wt d\rbrace\\
\boldsymbol P^{{\mathrm{out}}}_-(z)\boldsymbol J_T(z)\e^{-2\pi\i t{\wt{\mathcal{L}}}\boldsymbol\sigma_3}\boldsymbol P^{{\mathrm{out}}}_-(z)^{-1}&\text{if }z\in({\wt{b}}+\epsilon,{\wt{c}}-\epsilon)\\
\boldsymbol P^{{\mathrm{out}}}(z)\boldsymbol J_T(z)\boldsymbol P^{{\mathrm{out}}}(z)^{-1}&\text{otherwise}.
\end{cases}
\end{equation}
\item We have $\boldsymbol R(z)\to\boldsymbol {\mathrm{I}}$ as $z\to\infty$  uniformly in $\mathbb{C}\setminus \Sigma_R$.
\item We have $\boldsymbol R(z)=O(1)$ as $z\to z_0$ uniformly in $\mathbb{C}\setminus\Sigma_R$ for all $z_0\in\Sigma_R\setminus\Sigma_R^\circ$.
\end{enumerate}
\end{cRHp}
Recalling the definition of $\wt{\Sigma_T}^\epsilon$, see~\eqref{eq:SigmaTepsilonqplus}, we have $\Sigma^\circ_R=\biggl(\bigcup_{z_0\in\lbrace\wt a,\wt b,\wt c,\wt d\rbrace}\partial \mathcal{Q}_{z_0}\biggr)\cup({\wt{b}}+\epsilon,{\wt{c}}-\epsilon)\cup\wt{\Sigma_T}^\epsilon\setminus\lbrace\text{vertices of }\mathcal{Q}_{\wt{a}},\mathcal{Q}_{\wt{b}},\mathcal{Q}_{\wt{c}},\mathcal{Q}_{\wt{d}}\rbrace$.
\begin{proposition}
\label{prop:JRsmallqplus}
For any $\delta>0$ there exists $c,t_*>0$ such that
\begin{equation}
\boldsymbol J_R(z) = \begin{cases}
\boldsymbol {\mathrm{I}}-t^{-1}\wt{\boldsymbol J}_{R,z_0}(z)+O(t^{-2}),&\text{if }z\in\partial \mathcal{Q}_{\wt{a}}\cup\partial \mathcal{Q}_{\wt{b}}\cup\partial \mathcal{Q}_{\wt{c}}\cup\partial \mathcal{Q}_{\wt{d}},\\
\boldsymbol {\mathrm{I}}+O\left(\frac{1}{|z|^2+1}\e^{-ct}\right),&\text{if }z\in\wt{\Sigma_T}^\epsilon\cup({\wt{b}}+\epsilon,{\wt{c}}-\epsilon),
\end{cases}
\end{equation}
with $\wt{\boldsymbol J}_{R,z_0}$ given in~\eqref{eq:JR1qplus} and error terms uniform for $t\geq t_*$, $z\in\Sigma^\circ_R$, and $x\in[x_*+\delta,2-\delta]$.
\end{proposition}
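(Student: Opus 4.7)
The proof of Proposition~\ref{prop:JRsmallqplus} is essentially a routine collage of the estimates established in the previous subsections; I shall just indicate how to assemble them. The strategy is to treat separately the three types of components of $\Sigma_R^\circ$ appearing in the definition~\eqref{eq:jumpRqplus} of $\boldsymbol J_R$.

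First, on the circular boundaries $\partial \mathcal{Q}_{z_0}$ for $z_0\in\lbrace {\wt{a}},{\wt{b}},{\wt{c}},{\wt{d}}\rbrace$, by construction $\boldsymbol J_R(z)=\boldsymbol P^{\mathrm{out}}(z)\boldsymbol P^{(z_0)}(z)^{-1}$. Here I would simply invoke Proposition~\ref{prop:matchingqplus}, which gives
\begin{equation}
\boldsymbol P^{(z_0)}(z)\boldsymbol P^{\mathrm{out}}(z)^{-1}=\boldsymbol {\mathrm{I}}+t^{-1}\wt{\boldsymbol J}_{R,z_0}(z)+O(t^{-2}),
\end{equation}
uniformly for $z\in\partial \mathcal{Q}_{z_0}$ and $x\in[x_*+\delta,2-\delta]$, and simply take the inverse of this relation (using that $\wt{\boldsymbol J}_{R,z_0}(z)$ is uniformly bounded on $\partial \mathcal{Q}_{z_0}$, which is clear from~\eqref{eq:JR1qplus} together with the non-vanishing of $\varphi,\varphi_1,\varphi_2$ on the circles and the uniform boundedness of $\boldsymbol P^{\mathrm{out}}$) to produce the claimed expansion with a minus sign.

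Next, on the remaining portions $\wt{\Sigma_T}^\epsilon$ of $\Sigma_R^\circ$, where $\boldsymbol J_R(z)=\boldsymbol P^{\mathrm{out}}(z)\boldsymbol J_T(z)\boldsymbol P^{\mathrm{out}}(z)^{-1}$, I would combine Proposition~\ref{prop:JTsmallqplus}(1) with the uniform boundedness of $\boldsymbol P^{\mathrm{out}}(z)^{\pm 1}$ on this set. The latter is not entirely automatic: it requires observing that, by~\eqref{eq:Poutexplicit2cut} and~\eqref{eq:parameterscloseqplus}, the outer parametrix is a smooth (in fact analytic) expression in the functions $\wt\xi$ and $\wt\chi$, whose poles/branch points are all absorbed in $\mathcal{Q}_{\wt a}\cup\mathcal{Q}_{\wt b}\cup\mathcal{Q}_{\wt c}\cup\mathcal{Q}_{\wt d}$. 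For $x\in [x_*+\delta,2-\delta]$ the half-period $\mathcal{K}(x)$ stays in a compact subset of $(\eta/2,+\infty)$ by Corollary~\ref{cor:Kstar}, so the theta functions entering $\wt\chi$ (and their reciprocals) admit uniform bounds away from their zero set, which lies in~$\mathbb{Z}+\frac{\i\pi}{\mathcal{K}}\mathbb{Z}$ and is avoided on $\wt{\Sigma_T}^\epsilon$. The factor $(|z|^2+1)^{-1}$ in the remainder is inherited directly from the corresponding factor in Proposition~\ref{prop:JTsmallqplus}(1), since conjugation by the bounded matrix $\boldsymbol P^{\mathrm{out}}$ does not affect the $z$-decay.

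Finally, on the middle interval $({\wt{b}}+\epsilon,{\wt{c}}-\epsilon)$ one has the twisted conjugation $\boldsymbol J_R(z)=\boldsymbol P^{\mathrm{out}}_-(z)\bigl(\boldsymbol J_T(z)\e^{-2\pi\i t{\wt{\mathcal{L}}}\boldsymbol\sigma_3}\bigr)\boldsymbol P^{\mathrm{out}}_-(z)^{-1}$, where the insertion of $\e^{-2\pi\i t{\wt{\mathcal{L}}}\boldsymbol\sigma_3}$ precisely compensates for the jump of $\boldsymbol P^{\mathrm{out}}$ across $({\wt{b}},{\wt{c}})$ in~\eqref{eq:jumpPouttwocut}. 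Applying Proposition~\ref{prop:JTsmallqplus}(2), the quantity in parentheses equals $\boldsymbol {\mathrm{I}}+O(\e^{-ct})$, and conjugation by the boundary value $\boldsymbol P^{\mathrm{out}}_-$, which again is uniformly bounded on this compact interval by the same theta-function argument, preserves the exponential smallness. Since $z$ ranges in a bounded set here, the factor $(|z|^2+1)^{-1}$ is harmless and can be inserted at will.

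The only part of the argument requiring genuine care, rather than direct quotation of previous results, is the uniform control of $\boldsymbol P^{\mathrm{out}}(z)^{\pm 1}$ away from the endpoints, together with the uniform non-degeneracy of $\varphi,\varphi_1,\varphi_2$ on $\partial \mathcal{Q}_{z_0}$; both ultimately rest on the fact that $\mathcal{K}(x)$ stays in a compact subinterval of $(\eta/2,+\infty)$ for $x\in [x_*+\delta,2-\delta]$, together with the asymptotic relations~\eqref{eq:parameterscloseqplus}--\eqref{eq:endpointscloseqplus} between the deformed and undeformed elliptic data. Everything else is a verification.
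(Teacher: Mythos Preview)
Your proposal is correct and follows essentially the same approach as the paper's own proof, which simply cites Propositions~\ref{prop:JTsmallqplus} and~\ref{prop:matchingqplus} together with the observation that $\boldsymbol P^{\mathrm{out}}(z)=O(1)$ uniformly outside the local squares $\mathcal{Q}_{z_0}$. You spell out the same three cases in more detail and add a justification for the uniform boundedness of $\boldsymbol P^{\mathrm{out}}$ via compactness of $\mathcal{K}(x)$ on $[x_*+\delta,2-\delta]$, which the paper leaves implicit; one cosmetic slip is that the $\mathcal{Q}_{z_0}$ are squares, not disks, so ``circular boundaries'' should read ``square boundaries''.
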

\begin{proof}
It follows from Propositions~\ref{prop:JTsmallqplus} and~\ref{prop:matchingqplus}, noting that  $\boldsymbol P^{{\mathrm{out}}}(z)=O(1)$ uniformly for $z\in\mathbb{C}\setminus\left(\mathcal{Q}_{\wt{a}}\cup \mathcal{Q}_{\wt{b}}\cup \mathcal{Q}_{\wt{c}}\cup \mathcal{Q}_{\wt{d}}\right)$.
\end{proof}
\begin{proposition}
\label{prop:finalqplus}
For any $\delta>0$ there exists $t_*>0$ such that
\begin{equation}
\begin{aligned}
\wh\alpha(t,x) &=  t(1+g_1(x))+p_0(x,t)+t^{-1}\mathcal X(x,t)+O(t^{-2}),\\
\log\wh\beta(t,x) &= t\bigl(2g_\infty(x)-\ell(x)\bigr)+\log p_+(x,t)+t^{-1}\mathcal Y(x,t)+O(t^{-2}).
\end{aligned}
\end{equation}
with error terms uniform for $t\geq t_*$ and $x\in[x_*+\delta,2-\delta]$. Here, $\wh\alpha(t,x)$ and $\wh\beta(t,x)$ are defined in~\eqref{eq:wtalphabetagamma}, $g_1(x)$ and $g_\infty(x)$ are given in~\eqref{eq:g1ginftyqplus}, while $p_0(x,t)$ and $p_+(x,t)$ are defined in~\eqref{eq:p0pplustheta} (in particular, $p_0(x,t)$ and $\log p_+(x,t)$ are smooth in $x,t$ and periodic in $t$ with period $1/\mathcal{L}(x)$), and $\mathcal X(x,t)$ and $\mathcal Y(x,t)$ are given by
\begin{equation}
\label{eq:mathcalXY}
\begin{aligned}
\mathcal X(x,t)&=-\frac 1{24}+\sum_{z_0\in\lbrace a,b,c,d\rbrace}\left( \res{z=z_0}\boldsymbol J_{R,z_0}(z)\d z\right)_{1,1},
\\
\mathcal Y(x,t)&=\frac{1}{\i p_+(x,t)}\sum_{z_0\in\lbrace a,b,c,d\rbrace}\left( \res{z=z_0}\boldsymbol J_{R,z_0}(z)\d z\right)_{1,2},
\end{aligned}
\end{equation}
which are smooth functions of $x,t$ periodic in $t$ with period $1/\mathcal{L}(x)$, see~\eqref{eq:explicitJR1qplus1}--\eqref{eq:explicitJR1qplus7}.
\end{proposition}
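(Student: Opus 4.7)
The proof follows the standard small-norm framework of the Deift--Zhou nonlinear steepest descent method, in close parallel with the proof of Proposition~\ref{prop:finalqminus}. The main additional complications arise from (i) the more intricate outer parametrix (involving theta functions via $\chi$ and $\xi$ in~\eqref{eq:xichi}), (ii) the fact that four parametrix discs contribute rather than two, and (iii) the need to verify that the resulting subleading coefficients are periodic in $t$.

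First, Proposition~\ref{prop:JRsmallqplus} gives $\|\boldsymbol J_R-\boldsymbol I\|_{p}=O(t^{-1})$ for $p=1,2,\infty$ uniformly for $x\in[x_*+\delta,2-\delta]$, so the Riemann--Hilbert problem~\ref{cRHp:Rqplus} is of small-norm type. The standard theory (as recalled in~\eqref{eq:smallnormsolutionqminus}--\eqref{eq:rhosmallnormqminus}) yields the integral representation
\begin{equation*}
\boldsymbol R(z)=\boldsymbol{\mathrm{I}}+\frac{1}{2\pi\i}\int_{\Sigma_R^\circ}\bigl(\boldsymbol{\mathrm{I}}+\boldsymbol\rho(\mu)\bigr)\bigl(\boldsymbol J_R(\mu)-\boldsymbol{\mathrm{I}}\bigr)\frac{\d\mu}{\mu-z},
\end{equation*}
with $\|\boldsymbol\rho\|_{L^2}=O(t^{-1})$ obtained through a convergent Neumann series. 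Expanding for large $z$ and using the uniform exponential bounds on $\wt{\Sigma_T}^\epsilon\cup({\wt{b}}+\epsilon,{\wt{c}}-\epsilon)$, only the circles $\partial \mathcal{Q}_{z_0}$ contribute to leading order, so
\begin{equation*}
\boldsymbol R(z)=\boldsymbol{\mathrm{I}}+\frac{1}{t}\,z^{-1}\sum_{z_0\in\{{\wt{a}},{\wt{b}},{\wt{c}},{\wt{d}}\}}\frac{1}{2\pi\i}\int_{\partial \mathcal{Q}_{z_0}}\!\wt{\boldsymbol J}_{R,z_0}(\mu)\,\d\mu+O\bigl(z^{-1}t^{-2}\bigr)+O(z^{-2}).
\end{equation*}
Since each $\wt{\boldsymbol J}_{R,z_0}$ extends meromorphically inside $\mathcal{Q}_{z_0}$ with a single double pole at $z=z_0$, Cauchy's theorem converts these contour integrals into residues, and by~\eqref{eq:JtildeJqplus} we may further replace $\wt{\boldsymbol J}_{R,z_0}$ by $\boldsymbol J_{R,z_0}$ at the cost of an $O(t^{-\infty})$ error.

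Second, outside the parametrix discs $\boldsymbol N(z)=\boldsymbol R(z)\boldsymbol P^{\mathrm{out}}(z)$. Combining the above expansion of $\boldsymbol R$ with~\eqref{eq:Poutlargezqplus} and matching the $z^{-1}$ coefficient with~\eqref{eq:refinedexpansionNqplus}, the $(1,1)$-entry yields
\begin{equation*}
\wh\alpha(t,x)-t(1+\wt g_1)+\tfrac{1}{24t}=\wt p_0(x,t)+t^{-1}\!\!\sum_{z_0\in\{a,b,c,d\}}\!\!\bigl(\res{z=z_0}\boldsymbol J_{R,z_0}(z)\,\d z\bigr)_{1,1}+O(t^{-2}),
\end{equation*}
while the $(1,2)$-entry gives the analogous identity with $\wt p_+$ in place of $\wt p_0$ and the $(1,2)$-residues, multiplied by $\i\e^{t(\wt\ell-2\wt g_\infty)}$. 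Substituting the approximations $\wt g_1=g_1+O(t^{-\infty})$ from~\eqref{eq:ginftyqplus}, $\wt\ell-2\wt g_\infty=\ell-2g_\infty+O(t^{-\infty})$ from~\eqref{eq:ginftyqplus} and~\eqref{eq:asympellqplus}, and $\wt p_0=p_0+O(t^{-\infty})$, $\wt p_+=p_++O(t^{-\infty})$ from Proposition~\ref{prop:p0pplus}, yields the stated expansion for $\wh\alpha$ directly. For $\wh\beta$ I would divide both sides by $\i\e^{t(\wt\ell-2\wt g_\infty)}$ and take logarithms, using the elementary expansion $\log(a+t^{-1}b)=\log a+t^{-1}b/a+O(t^{-2})$ with $a=\i p_++O(t^{-\infty})$; this produces the prefactor $1/(\i p_+)$ appearing in~\eqref{eq:mathcalXY}.

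Finally, to verify that $p_0(x,t)$, $\log p_+(x,t)$, and the residue sums defining $\mathcal X, \mathcal Y$ are smooth and $1/\mathcal{L}(x)$-periodic in $t$, one uses the quasi-periodicity $\vartheta_{11}(z+1|\tau)=-\vartheta_{11}(z|\tau)$: under $t\mapsto t+1/\mathcal L$ every occurrence of $\vartheta_{11}(\cdot-t\mathcal L|\tfrac{\i\pi}{\mathcal K})$ picks up a sign, and in every expression in~\eqref{eq:p0pplustheta} and~\eqref{eq:explicitJR1qplus1}--\eqref{eq:explicitJR1qplus7} these signs cancel in pairs between numerator and denominator. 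The main technical obstacle in the proof is really the bookkeeping leading to~\eqref{eq:explicitJR1qplus1}--\eqref{eq:explicitJR1qplus7}, which has already been carried out above; the present proposition is then essentially a matching-and-extraction argument.
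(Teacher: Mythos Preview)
Your proposal is correct and follows essentially the same approach as the paper: invoke Proposition~\ref{prop:JRsmallqplus} to set up the small-norm problem, extract the $z^{-1}$ term of $\boldsymbol R$ as a sum of residues over the four endpoints via Cauchy's theorem, match against~\eqref{eq:refinedexpansionNqplus} and~\eqref{eq:Poutlargezqplus}, and then replace all tilded quantities by their limits using~\eqref{eq:ginftyqplus}, \eqref{eq:asympellqplus}, \eqref{eq:p0pplusclosemathfrakqplus}, and~\eqref{eq:JtildeJqplus}. Your periodicity argument via the sign change of $\vartheta_{11}$ is equivalent to the paper's observation that $\chi(w)$ is anti-periodic in $t$ with period $1/\mathcal{L}(x)$.
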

\begin{proof}
By Proposition~\ref{prop:JRsmallqplus} we obtain that
\begin{equation}
\label{eq:smallnessqplus}
\|\boldsymbol J_R-\boldsymbol {\mathrm{I}}\|_{p}=O(t^{-1}),\qquad p=1,2,\infty,
\end{equation}
as $t\to+\infty$ uniformly for $x\in[x_*+\delta,2-\delta]$ (for any $\delta>0$), where $\|\cdot\|_p$ is the maximum over the four matrix entries of their $L^p(\Sigma_R^\circ)$-norm.
It follows that $\boldsymbol R$ solves a small-norm Riemann--Hilbert problem and so, by the same (standard) arguments in the proof of Proposition~\ref{prop:finalqminus}, we have
\begin{equation}
\boldsymbol R(z) = \boldsymbol {\mathrm{I}}+\frac 1{2\pi\i}\int_{\Sigma_R}\frac{\boldsymbol J_R(\mu)-\boldsymbol {\mathrm{I}}}{\mu-z}\d\mu+O(t^{-2})=
\boldsymbol {\mathrm{I}}+t^{-1}\sum_{z_0\in\lbrace {\wt{a}},{\wt{b}},{\wt{c}},{\wt{d}}\rbrace}\frac 1{2\pi\i}\int_{\partial \mathcal{Q}_{z_0}}\frac{\wt{\boldsymbol J}_{R,z_0}(\mu)}{z-\mu}\d\mu+O(t^{-2}).
\end{equation}
Since $\wt{\boldsymbol J}_{R,z_0}$ extends to a meromorphic function inside $\mathcal{Q}_{z_0}$ with a double pole at $z_0$ and no other singularities, see~\eqref{eq:JR1qplus}, by Cauchy's theorem the integrals above reduce to the polar part of $\wt{\boldsymbol J}_{R,z_0}$ at $z_0$, namely
\begin{equation}
\frac 1{2\pi\i}\int_{\partial \mathcal{Q}_{z_0}}\frac{\wt{\boldsymbol J}_{R,z_0}(\mu)}{z-\mu}\d\mu=
\frac{\wt{\boldsymbol J}_{R,z_0}^{(2)}}{(z-z_0)^2}+\frac{\wt{\boldsymbol J}_{R,z_0}^{(1)}}{z-z_0},
\end{equation}
assuming $z\not\in \mathcal{Q}_{z_0}$, where $\wt{\boldsymbol J}_{R,z_0}^{(0)}$ and $\wt{\boldsymbol J}_{R,z_0}^{(1)}$ are
\begin{equation}
\wt{\boldsymbol J}_{R,z_0}^{(2)}=\res{z=z_0}(z-z_0)\wt{\boldsymbol J}_{R,z_0}(z)\d z,\qquad
\wt{\boldsymbol J}_{R,z_0}^{(1)}=\res{z=z_0}\wt{\boldsymbol J}_{R,z_0}(z)\d z.
\end{equation}
When $\Im z$ is large, we have $\boldsymbol N(z)=\boldsymbol R(z)\boldsymbol P^{{\mathrm{out}}}(z)$ and so, as $t\to+\infty$ and $\Im z\to+\infty$, we have
\begin{equation}
\boldsymbol N(z) = \left(\boldsymbol {\mathrm{I}}+\frac 1{zt}\sum_{z_0\in\lbrace {\wt{a}},{\wt{b}},{\wt{c}},{\wt{d}}\rbrace}
\wt{\boldsymbol J}^{(1)}_{R,z_0}+O\bigl(z^{-1}t^{-2}\bigr)\right)\,\boldsymbol P^{{\mathrm{out}}}(z)
\end{equation}
(uniformly for $x\in [x_*+\delta,2-\delta]$ for any $\delta>0$).
By the large-$z$ expansions~\eqref{eq:refinedexpansionNqplus} and~\eqref{eq:Poutlargezqplus}, it follows that
\begin{equation}
\begin{aligned}
\wh\alpha(t,x) &= t(1+{\wt{g}}_1)+{\wt{p}}_0(t)+\wt{\mathcal X}(x,t)t^{-1}+O(t^{-2}),
\\
\e^{t(\wt{\ell}-2{\wt{g}}_\infty)}\wh\beta(t,x) &={\wt{p}}_+(t)\bigl(1+\wt{\mathcal Y}(x,t)t^{-1}+O(t^{-2})\bigr)
\end{aligned}
\end{equation}
with
\begin{equation}
\begin{aligned}
\wt{\mathcal X}(x,t)&=-\frac 1{24}+\sum_{z_0\in\lbrace {\wt{a}},{\wt{b}},{\wt{c}},{\wt{d}}\rbrace} \left(\wt{\boldsymbol J}_{R,z_0}^{(1)}\right)_{1,1},
\\
\wt{\mathcal Y}(x,t)&=\frac{1}{\i \wt{p}_+(x,t)}\sum_{z_0\in\lbrace {\wt{a}},{\wt{b}},{\wt{c}},{\wt{d}}\rbrace} \left(\wt{\boldsymbol J}^{(1)}_{R,z_0}\right)_{1,2},
\end{aligned}
\end{equation}
as $t\to+\infty$ uniformly for $x\in[x_*+\delta,2-\delta]$.
By~\eqref{eq:JtildeJqplus} and~\eqref{eq:p0pplusclosemathfrakqplus}, in the same regime we have $\wt{\mathcal X}(x,t)=\mathcal X(x,t)+O(t^{-\infty})$ and $\wt{\mathcal Y}(x,t)=\mathcal Y(x,t)+O(t^{-\infty})$ with $\mathcal X(x,t)$ and $\mathcal Y(x,t)$ as in the statement, see~\eqref{eq:mathcalXY}.
The thesis then follows from~\eqref{eq:g1ginftyqplus}, \eqref{eq:ginftyqplus}, \eqref{eq:asympellqplus}, \eqref{eq:p0pplusclosemathfrakqplus}, and~\eqref{eq:p0pplustheta}.
The fact that $\mathcal X(x,t)$ and $\mathcal Y(x,t)$ are smooth in $x,t$ and periodic in $t$ with period $1/\mathcal{L}(x)$ is manifest from the explicit formulas~\eqref{eq:explicitJR1qplus1}--\eqref{eq:explicitJR1qplus7} and the fact that $\chi(w)$, defined in~\eqref{eq:xichi}, is anti-periodic in $t$ with period $1/\mathcal{L}(x)$, and so are $\chi'(w)$ and $\chi''(w)$.
\end{proof}

\section{Asymptotics for the multiplicative average} \label{sec:proof of main thm}

\subsection{Proof for $x<x_*$}

For any $t=t_L>t_{L-1}>\dots>t_1>t_0$ we have
\begin{equation}
    Q(t,xt) = Q(t_0,xt_0)+\sum_{i=0}^{L-1}\left[\int_{t_{i}}^{t_{i+1}}\partial_\tau\log Q(\tau,xt_{i+1})\d\tau+\log\frac{Q(t_{i},xt_{i+1})}{Q(t_{i},xt_{i})}\right]
\end{equation}
If we fix $t_i=t_0-\frac i x$ (we note that $x<0$ because $x_*<0$), we can use~\eqref{eq:wtalphabetagamma} to rewrite the previous identity as
\begin{equation}
\begin{aligned}
\log Q(t,xt)
&=\log Q(t_0,xt_0)+\sum_{i=0}^{L-1}\left[-2\int_{t_{i}}^{t_{i+1}}\wh\alpha(xt_{i+1}/\tau,\tau)\d\tau+\log\wh\beta(x,t_{i})\right]
\\
&=\log Q(t_0,xt_0)+\sum_{i=0}^{L-1}\left[-2\int_{t_{i}}^{t_{i+1}}\tau(1-\e^{-\eta})\d\tau+\left(t_{i}x-\frac 12\right)\eta\right]+\sum_{i=0}^{L-1}U_i
\end{aligned}
\label{eq:proofx<xq}
\end{equation}
where 
\begin{equation}
U_i=-2\int_{t_{i}}^{t_{i+1}}\bigl(\wh\alpha(xt_{i+1}/\tau,\tau)-\tau(1-\e^{-\eta})\bigr)\d\tau+\log\wh\beta(x,t_{i})-\left(t_{i}x-\frac 12\right)\eta.
\end{equation}
Hence, we obtain
\begin{equation}
\begin{aligned}
\log Q(t,xt)
&=\log Q(t_0,xt_0) 
-(1-\e^{-\eta})(t^2-t_0^2)
+\frac{L(2t_0x-L)}2\eta+\sum_{i=0}^{L-1}U_i
\\
&=\log Q(t_0,xt_0) -(t^2-t_0^2)\mathcal{F}(x)+\sum_{i=0}^{L-1}U_i
\end{aligned}
\end{equation}
where we recall that $t=t_L$ and so $L=(t_0-t)x$ and that $\mathcal{F}(x) = \eta\frac{x^2}2+1-\e^{-\eta}$ when $x<x_*$.
It follows from Proposition~\ref{prop:finalqminus} that $U_i=O(t_i^{-2})$ (uniformly in $i$ provided we choose $t_0$ large enough) such that we can write
\begin{equation}
\log Q(t,xt) = -t^2\mathcal{F}(x) + C(t_0,x) - \sum_{i=L}^{+\infty}U_i=-t^2\mathcal{F}(x) + C(t_0,x) + O(t^{-1})
\end{equation}
where $C(t_0,x)=\log Q(t_0,xt_0)+t_0^2\mathcal{F}(x)+\sum_{i=0}^{+\infty}U_i$ is a constant independent of $L,t_L$.
Clearly, this constant cannot depend on $t_0$ and Theorem~\ref{thm:main} is proved for $x<x_*$.

\subsection{Proof for $x_*<x<2$}

We need a few preliminary lemmas.
In the first one we prove some identities involving certain quantities introduced throughout this paper, specifically, $\mathcal{U}$, $\mathcal{K}$ (given in~Definition~\ref{def:K(x)}), $\mathcal{L}$, $\mathcal{F}$ (given in Definition~\ref{def:Fq}), $g_1$, $g_\infty$ and~$\ell$ (given in Proposition~\ref{prop:g1ginftyqplus}), and $p_0$ (given in Proposition~\ref{prop:p0pplus}).

\begin{lemma}
\label{lemma:finalidentities}
    The following identities hold true for all $x\in(x_*,2)$:
    \begin{align}
    \label{eq:finalidentityLprime}
    \frac{\d \mathcal{L}(x)}{\d x}&=-\frac{\eta}{2\mathcal{K}(x)},
    \\
    \label{eq:finalidentityL-Lprime}
    \mathcal{L}(x)-x\frac{\d\mathcal{L}(x)}{\d x}&=
    -\frac{\mathcal{U}\bigl(\mathcal{K}(x)\bigr)}{\mathcal{K}(x)},
    \\
    \label{eq:finalidentityp00}
    -2p_0(0,t) &= \frac{\partial}{\partial t}\log\vartheta\bigl(t\mathcal{L}(0)\big|\frac{\i\pi}{\mathcal{K}(0)}\bigr),
    \\
    \label{eq:finalidentityFsecond}
    \frac{\d^2\mathcal{F}(x)}{\d x^2}&=\eta\biggl(1-\frac{\eta}{2\mathcal{K}(x)}\biggr),
    \\
    \label{eq:finalidentityFprime}
    \frac{\d\mathcal{F}(x)}{\d x} &= \eta\bigl(x+\mathcal{L}(x)\bigr)=2g_\infty(x)-\ell(x),
    \\
    \label{eq:compatibility}
    \frac{\d g_1(x)}{\d x}&=\frac 12\biggl(1-x\frac{\d}{\d x}\biggr)\bigl(2g_\infty(x)-\ell(x)\bigr),
    \\
    \label{eq:finalidentityF}
    \mathcal{F}(x) &= \frac{x}{2}\bigl(2g_\infty(x)-\ell(x)\bigr)+1+g_1(x).
    \end{align}
\end{lemma}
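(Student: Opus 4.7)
\medskip

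\noindent\emph{Proof plan.} The seven identities organize themselves naturally: identities (1)--(4) are essentially calculus exercises on the implicit definitions; identity (5) links the rate function to the $g$-function data; and identities (6)--(7) are equivalent (up to an integration constant) by differentiation using (5).

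First I would treat (1) by implicitly differentiating the defining relation $\mathcal{U}(\mathcal{K}(x)) - \mathcal{K}(x)\mathcal{U}'(\mathcal{K}(x)) = -\eta x/2$ from Definition~\ref{def:K(x)}: this yields $-\mathcal{K}(x)\mathcal{U}''(\mathcal{K}(x))\mathcal{K}'(x) = -\eta/2$, and combining with $\mathcal{L}(x) = -\mathcal{U}'(\mathcal{K}(x))$ gives $\mathcal{L}'(x) = -\mathcal{U}''(\mathcal{K})\mathcal{K}'(x) = -\eta/(2\mathcal{K}(x))$. Identity (2) then follows instantly by substituting (1) into the first expression for $\mathcal{L}(x)$ in~\eqref{eq:defL}. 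For (3), I would specialize the formula for $p_0(x,t)$ from Proposition~\ref{prop:p0pplus} at $x=0$, use~\eqref{eq:defL} at $x=0$ (which gives $\mathcal{L}(0) = -\mathcal{U}(\mathcal{K}(0))/\mathcal{K}(0)$), and compare with~\eqref{eq:f1} to recognize the prefactor of the theta ratio as exactly $\mathcal{L}(0)$; the result follows since $\frac{\partial}{\partial t}\log\vartheta(t\mathcal{L}|\tau) = \mathcal{L}\,\vartheta'(t\mathcal{L}|\tau)/\vartheta(t\mathcal{L}|\tau)$. Identity (4) is obtained by differentiating Definition~\ref{def:Fq} twice and inserting (1).

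For (5), the first equality $\mathcal{F}'(x) = \eta(x+\mathcal{L}(x))$ is immediate from~\eqref{eq:Fq}. For the second equality, I would plug the explicit expressions for $g_\infty(x)$ and $\ell(x)$ from Proposition~\ref{prop:g1ginftyqplus}, compute
\begin{equation*}
2g_\infty(x) - \ell(x) = \eta\mathcal{U}(\mathcal{K})\biggl(\frac{\mathcal{V}(\mathcal{K})-1}{\mathcal{K}} - \frac{2\mathcal{V}(\mathcal{K})}{\eta}\biggr) = \eta\mathcal{U}(\mathcal{K})\frac{\mathcal{V}(\mathcal{K})-1}{\mathcal{K}} - 2\mathcal{U}(\mathcal{K})\mathcal{V}(\mathcal{K}),
\end{equation*}
and simplify using $\mathcal{U}(\mathcal{K})\mathcal{V}(\mathcal{K}) = -\eta x/2$ together with the rewritten $\mathcal{L}(x) = \mathcal{U}(\mathcal{V}-1)/\mathcal{K}$ (obtained from~\eqref{eq:defL}).

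Identities (6) and (7) are equivalent by differentiation and~(5); this can be seen immediately, since $\tfrac{\d}{\d x}[\tfrac{x}{2}(2g_\infty-\ell) + 1 + g_1] = \tfrac{1}{2}\mathcal{F}'(x) + \tfrac{x}{2}\mathcal{F}''(x) + g_1'(x)$, and asking this to equal $\mathcal{F}'(x)$ is precisely~(6). Thus it suffices to prove (6) and then verify (7) at a single convenient value of $x$. Combining (5) with (2) already gives $\tfrac{1}{2}(1-x\tfrac{\d}{\d x})(2g_\infty-\ell) = -\eta\mathcal{U}(\mathcal{K})/(2\mathcal{K})$, so (6) reduces to the claim $g_1'(x) = -\eta\mathcal{U}(\mathcal{K}(x))/(2\mathcal{K}(x))$. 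I would prove this by differentiating the explicit expression $g_1(x) = -\mathcal{U}^2(\wp(\eta) + \tfrac{\zeta(\i\pi)}{\i\pi} - \tfrac{\mathcal{V}-1}{2\mathcal{K}})$ from Proposition~\ref{prop:g1ginftyqplus}, applying the chain rule via $\mathcal{K}'(x) = \eta/(2\mathcal{K}\mathcal{U}''(\mathcal{K}))$ (from the proof of (1)), and using the identities of Appendix~\ref{app:elliptic} (notably~\eqref{eq:partialKsigmazetawp}, \eqref{eq: partial K zeta i pi}, and~\eqref{eq:ZZZZ}) for the $\mathcal{K}$-derivatives of $\wp(\eta|\mathcal{K},\i\pi)$, $\zeta(\i\pi|\mathcal{K},\i\pi)$, and $\mathcal{V}(\mathcal{K})$. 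The integration constant in passing from (6) to (7) is then pinned down by examining the limit $x\uparrow 2$, where $\mathcal{K}\downarrow\eta/2$ and all quantities degenerate explicitly (using~\eqref{eq:limitssss} and the limits computed in Proposition~\ref{prop:f1f2body}), or equivalently by inspection in any regime where the expressions simplify.

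\medskip

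\noindent\emph{Main obstacle.} The delicate step is the chain-rule differentiation of $g_1(x)$: the Weierstrass functions $\wp(\eta|\mathcal{K},\i\pi)$ and $\zeta(\i\pi|\mathcal{K},\i\pi)$ depend on $\mathcal{K}$ through both the half-period and the argument, so the cancellations needed to collapse a rather lengthy expression into the clean right-hand side $-\eta\mathcal{U}/(2\mathcal{K})$ are nontrivial and must be implemented using the Legendre relation~\eqref{eq:LegendreIdentity} together with the defining relation $\mathcal{U}\mathcal{V} = -\eta x/2$.
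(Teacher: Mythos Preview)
Your proposal is correct and follows essentially the same route as the paper's proof: the reductions for (1)--(5) are identical, and for (6) the paper also reduces to showing $g_1'(x)=-\eta\mathcal{U}(\mathcal{K})/(2\mathcal{K})$ and then carries out the Weierstrass-function computation you flag as the main obstacle (introducing the shorthand $\mathcal{Z}(K)=\zeta(\eta)-\eta\zeta(\i\pi)/(\i\pi)$ to organize the algebra). The only difference is that the paper fixes the integration constant in (7) by the limit $x\downarrow x_*$ rather than your $x\uparrow 2$; both work, though the $x_*$ limit is slightly cleaner since the Weierstrass functions degenerate to elementary hyperbolic expressions there rather than requiring the $1/(2K-\eta)$ singular expansions.
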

\begin{proof}
By the defining relation 
\begin{equation}
\label{eq:definingrel}
\bigl(\mathcal{U}(K)-K\partial_K\mathcal{U}(K)\bigr)\big|_{K=\mathcal{K}(x)}=-\frac{\eta x}{2}
\end{equation}
(see Definition~\ref{def:K(x)}) we obtain
\begin{equation}
\label{eq:diffdefiningrel}
\mathcal{K}(x)\frac{\mathrm{d}\mathcal{K}(x)}{\mathrm{d}x}\left.\frac{\partial^2 \mathcal{U}(K)}{\partial K^2}\right|_{K=\mathcal{K}(x)} = -\frac{\eta}{2}.
\end{equation}
Then,~\eqref{eq:finalidentityLprime} is immediate from the definition of $\mathcal{L}$ in~\eqref{eq:defL}.

Equation~\eqref{eq:finalidentityL-Lprime} follows from~\eqref{eq:finalidentityLprime} and the definition of $\mathcal{L}$ in~\eqref{eq:defL}. 

To prove~\eqref{eq:finalidentityp00}, by~\eqref{eq:p0pplustheta} it suffices to show the identity
\begin{equation}
\mathcal{L}(0)=-2\e^{\frac{\eta}2(\frac{\eta}{2\mathcal{K}(0)}-1)}\frac{\vartheta_{11}\bigl(\frac{\eta}{2\mathcal{K}(0)}\big|\frac{\i\pi}{\mathcal{K}(0)}\bigr)}{\vartheta_{11}'\bigl(0\big|\frac{\i\pi}{\mathcal{K}(0)}\bigr)}.
\end{equation}
This follows by setting~$x=0$ into~\eqref{eq:finalidentityL-Lprime} and recalling the expression of~$\mathcal{U}(K)$ in~\eqref{eq:f1}.

Equation~\eqref{eq:finalidentityFsecond} is immediate from~\eqref{eq:finalidentityLprime} and the definition~\eqref{eq:Fq} of the rate function~$\mathcal{F}$.

Next, by~\eqref{eq:Fq} we immediately have $\frac{\d\mathcal{F}(x)}{\d x}=\eta\bigl(x+\mathcal{L}(x)\bigr)$.
It is easy to verify, using~\eqref{eq:g1ginftyqplus} and the defining relation~\eqref{eq:definingrel}, that we also have $2g_\infty(x)-\ell(x)=\eta\bigl(x+\mathcal{L}(x)\bigr)$.

For the proof of~\eqref{eq:compatibility}, we first note that the right-hand side equals $-\frac{\eta\mathcal{U}(\mathcal{K})}{2\mathcal{K}}$ by~\eqref{eq:finalidentityFprime} and~\eqref{eq:finalidentityL-Lprime}.
Therefore, it is enough to show that $\frac{\mathrm{d}g_1(x)}{\mathrm{d}x}=-\frac{\eta\mathcal{U}(\mathcal{K})}{2\mathcal{K}}$.
To perform this computation, it is convenient to introduce $\mathcal{Z} (K)= \zeta(\eta|K,\i\pi)-\frac{\zeta(\i\pi|K,\i\pi)}{\i\pi}\eta$, such that we can rewrite~\eqref{eq:ZZZZ} as
\begin{equation}
\label{eq:dKlogUZ}
\partial_K\log \mathcal{U}(K) = \wp(\eta|K,\i\pi)-2\frac{\zeta(\i\pi|K,\i\pi)}{\i\pi}-\mathcal Z(K)^2
\end{equation}
and so, using the expression for $g_1(x)$ in~\eqref{eq:g1ginftyqplus} as well as Proposition~\ref{prop:relf1f2},
\begin{equation}
g_1(x)= -\frac{\mathcal{U}\bigl(\mathcal{K}(x)\bigr)^2}{2} \biggl(3\wp\bigl(\eta\big|\mathcal{K}(x),\i\pi\bigr) -\mathcal{Z}(\mathcal{K}(x)\bigr)^2 \biggr).
\end{equation}
We note that (also using Lemma~\ref{lemma:WeierstrassDerivativesinK})
\begin{equation}
\label{eq:derivatives Upsilon}
-\partial_\eta \mathcal{Z}(K) = \wp(\eta|K,\i\pi)+\frac{\zeta(\i\pi|K,\i\pi)}{\i\pi},\qquad
\partial_K\mathcal{Z}(K) = \partial_\eta\bigl( \wp(\eta|K,\i\pi)-\mathcal{Z}(K)^2\bigr).
\end{equation}
In view of the chain rule and~\eqref{eq:diffdefiningrel}, the relation \eqref{eq:compatibility} is a consequence of the identity
\begin{equation}\label{eq:first compatibility equiv}
    \frac{\partial_K^2\mathcal{U}(K)}{\mathcal{U}(K)} = \frac{\partial_K\mathcal{U}(K)}{\mathcal{U}(K)} \bigl(3 \wp(\eta) - \mathcal{Z}(K)^2 \bigr) + \frac{1}{2} \partial_K \bigl( 3 \wp(\eta) - \mathcal{Z}(K)^2 \bigr).
\end{equation}
To prove the latter, we express the left-hand side as
\begin{equation}
\label{eq:secondlogdereasy}
    \frac{\partial_K^2\mathcal{U}(K)}{\mathcal{U}(K)} =  \left( \frac{\partial_K\mathcal{U}(K)}{\mathcal{U}(K)} \right)^2 + \partial_K \frac{\partial_K\mathcal{U}(K)}{\mathcal{U}(K)}
\end{equation} 
such that we can evaluate all terms in \eqref{eq:first compatibility equiv} by using~\eqref{eq:dKlogUZ}.
Subtracting the left-hand side from the right-hand side of \eqref{eq:first compatibility equiv} several simplifications occur, as shown below
\begin{equation}
    \begin{aligned}
        &\frac{\partial_K^2\mathcal{U}}{\mathcal{U}} - \frac{\partial_K\mathcal{U}}{\mathcal{U}} \bigl(3 \wp(\eta) - \mathcal{Z}^2 \bigr) - \frac{1}{2} \partial_K \bigl( 3 \wp(\eta) - \mathcal{Z}^2 \bigr)
        \\
        &= -\mathcal{Z} \partial_K \mathcal{Z} - \frac{1}{2} \partial_K \bigl( \wp(\eta) + \frac{\zeta(\i \pi)}{\i \pi} \bigr) + 2 \mathcal{Z}^2 \bigl( \wp(\eta) + \frac{\zeta(\i \pi)}{\i \pi} \bigr) 
        \\
        & \qquad 
        - \frac{3}{2} \partial_K  \frac{\zeta(\i \pi)}{\i \pi} - 2 \wp(\eta)^2 + 2 \wp(\eta) \frac{\zeta(\i \pi)}{\i \pi} + 4 \left( \frac{\zeta(\i \pi)}{\i \pi} \right)^2 
        \\
        &= -\mathcal{Z}\partial_K \mathcal{Z} + \frac{1}{2} \partial_K \partial_\eta \mathcal{Z} - 2 \mathcal{Z}^2 \partial_\eta \mathcal{Z} + \left( \partial_\eta \mathcal{Z} \right)^2 - \frac{1}{2} \partial_\eta^2\wp(\eta) = 0,
    \end{aligned}
\end{equation}
where we use~\eqref{eq:dKlogUZ}, \eqref{eq:derivatives Upsilon}, \eqref{eq:secondlogdereasy}, \eqref{eq:partialKsigmazetawp}, and~\eqref{eq: partial K zeta i pi}, along with straightforward, although lengthy, algebraic manipulations.
The proof of~\eqref{eq:compatibility} is complete.

Finally, by integrating~\eqref{eq:finalidentityFprime} it follows that left-hand and right-hand sides in~\eqref{eq:finalidentityF} are equal up to an additive constant so it suffices to verify the identity in the limit $x\downarrow x_*$. In this limit, $\mathcal{K}(x)\to+\infty$, $\mathcal{U}(\mathcal{K})\to 1-\e^{-\eta}$, $\mathcal{V}(\mathcal{K})\to 1$ (see Proposition~\ref{prop:f1f2body}).
As proved above, $2g_\infty(x)-\ell(x)=\eta\bigl(x+\mathcal{L}(x)\bigr)$ and so, since $\mathcal{L}(x)\to 0$ in this limit, $2g_\infty(x)-\ell(x)\to \eta x_*$ as $x\downarrow x_*$.
Using the same estimates, as well as those in~\eqref{eq:triglimitKinfty}, it is easy to check that $g_1(x)\to -\e^{-\eta}$ such that the claimed identity holds as $x\downarrow x_*$ and the proof is complete.
\end{proof}

\begin{remark}
One expects~\eqref{eq:compatibility} to hold true because it is a \emph{compatibility condition} of the large-$t$ expansions of~$\widehat{\alpha}(t,x)$ and~$\widehat{\beta}(t,x)$ of Proposition~\ref{prop:finalqplus} in view of the identity
\begin{equation}
\widehat{\alpha}(t,x+\tfrac 1t)-\widehat{\alpha}(t,x) = -\frac 12\biggl(\frac{\partial}{\partial t}-\frac xt\frac{\partial}{\partial x}\biggr)\log\widehat{\beta}(t,x),
\end{equation}
which follows directly from the definitions of~$\widehat{\alpha}(t,x)$ and~$\widehat{\beta}(t,x)$ given in~\eqref{eq:wtalphabetagamma}.
\end{remark}

\begin{remark}
\label{remark:explicitFWeierstrass}
By combining~\eqref{eq:finalidentityF} with the explicit expressions for~$g_1$ and~$2g_\infty-\ell$ from Proposition~\ref{prop:g1ginftyqplus} and using the defining relation~\eqref{eq:definingrel}, we obtain
\begin{equation}
\label{eq:explicitFWeierstrass}
\mathcal{F}(x) = 1+\frac{\eta}{2}\biggl(1-\frac{\eta}{2\mathcal{K}}\biggr)x^2-\frac{3\eta\mathcal{U}(\mathcal{K})}{4\mathcal{K}}x-\mathcal{U}(\mathcal{K})^2\biggl(\wp(\eta)+\frac{\zeta(\mathcal{K})}{\mathcal{K}}\biggr),
\end{equation}
where, as usual, $\mathcal{K}=\mathcal{K}(x)$ is given in Definition~\ref{def:K(x)}, and the half-periods of the Weierstrass elliptic functions are~$\mathcal{K}$ and~$\i\pi$.
We obtain~\eqref{eq:explicitF} from the relation between $\wp$ and $\vartheta_{11}$ given in~\eqref{eq:relwptheta}.
\end{remark}

The next lemmas clarify some simple properties of integrals involving periodic functions.

\begin{lemma}
\label{lemma:periodiceasy}
    Let $\psi:\mathbb{R}\to\mathbb{R}$ be a smooth periodic function with period $\mathcal{P}$ such that $\int_0^\mathcal{P}\psi(t)\d t=0$. Then,
    \begin{equation}
    \left|\int_{t}^{+\infty}\frac{\psi(\tau)}{\tau}\d\tau \right|\leq C_\psi\,t^{-1}
    \end{equation}
    for all $t>0$ and for a suitable constant $C_\psi>0$ depending on $\psi$.
\end{lemma}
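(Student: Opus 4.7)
The idea is standard integration by parts exploiting cancellation. Define the antiderivative
\begin{equation}
\Psi(t) = \int_0^t \psi(s)\,\d s.
\end{equation}
Since $\psi$ is smooth and $\mathcal{P}$-periodic with vanishing mean, $\Psi$ is itself smooth and $\mathcal{P}$-periodic, hence bounded: let $M = \sup_{\tau\in\mathbb{R}}|\Psi(\tau)|<\infty$.

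For any $T > t > 0$, integration by parts gives
\begin{equation}
\int_t^T \frac{\psi(\tau)}{\tau}\,\d\tau = \frac{\Psi(T)}{T} - \frac{\Psi(t)}{t} + \int_t^T \frac{\Psi(\tau)}{\tau^2}\,\d\tau.
\end{equation}
The boundary term $\Psi(T)/T$ tends to $0$ as $T\to+\infty$ (since $\Psi$ is bounded), and the remaining integrand $\Psi(\tau)/\tau^2$ is absolutely integrable on $[t,+\infty)$. Passing to the limit,
\begin{equation}
\int_t^{+\infty} \frac{\psi(\tau)}{\tau}\,\d\tau = -\frac{\Psi(t)}{t} + \int_t^{+\infty} \frac{\Psi(\tau)}{\tau^2}\,\d\tau.
\end{equation}

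Estimating each term by $M$ yields
\begin{equation}
\left|\int_t^{+\infty}\frac{\psi(\tau)}{\tau}\,\d\tau\right| \leq \frac{M}{t} + M\int_t^{+\infty}\frac{\d\tau}{\tau^2} = \frac{2M}{t},
\end{equation}
so the conclusion holds with $C_\psi = 2M = 2\sup_{\tau}|\Psi(\tau)|$. There is no real obstacle here; the only substantive input is that the vanishing mean of $\psi$ over a period promotes $\Psi$ from linearly growing to bounded, which is precisely what makes the $1/\tau$ weight integrable in the improper sense after a single integration by parts.
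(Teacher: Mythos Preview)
Your proof is correct and follows essentially the same approach as the paper: both use that the zero-mean condition makes the antiderivative $\Psi$ bounded and periodic, then integrate by parts once and let $T\to+\infty$. You are slightly more explicit in tracking the constant $C_\psi=2\sup|\Psi|$, but the argument is the same.
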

\begin{proof}
The assumption that $\int_0^\mathcal{P}\psi(t)\d t=0$ implies that there exists a smooth periodic function $\Psi:\mathbb{R}\to\mathbb{R}$ with the same period $\mathcal{P}$ such that $\frac{\d \Psi(t)}{\d t}=\psi(t)$. Hence, integrating by parts,
\begin{equation}
\int_t^T\frac{\psi(\tau)}{\tau}\d\tau = \int_t^T\frac{\Psi(\tau)}{\tau^2}\d\tau+\frac{\Psi(T)}{T}-\frac{\Psi(t)}{t}
\end{equation}
and the proof is straightforward after letting $T\to+\infty$.
\end{proof}
\begin{lemma}
\label{lemma:periodiclesseasy}
    Let $I\subseteq \mathbb{R}$ be an open interval.
    Let $\psi:I\times\mathbb{R}\to\mathbb{R}$ be a smooth function such that $\psi(x,t+1)=\psi(x,t)$ for all~$(x,t)\in I\times\mathbb{R}$ and such that $\int_0^{1}\psi(x,t)\d t=0$ for all~$x\in I$.
    Let $L:I\to \mathbb{R}$ be a smooth function such that $L(x)\not=0$ and $\frac{\d L(x)}{\d x}\not=0$ for all $x\in I$.
    Then, if $J$ is a closed interval contained in $I$, we have
    \begin{equation}
    \left|\int_J\psi\bigl(y,tL(y)\bigr)\d y\right|\leq C_{\psi,L,J}\,t^{-1}
    \end{equation}
    for all $t>0$ and for a suitable constant $C_{\psi,J}>0$ depending on $\psi$, $L$, and $J$ only.
\end{lemma}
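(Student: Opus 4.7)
The plan is a single integration by parts, exploiting the mean-zero hypothesis on $\psi$ to produce a bounded periodic antiderivative and then using $L'(y)\neq 0$ on the compact interval $J$ to transfer one derivative from the oscillatory factor onto slowly-varying factors.

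First, I would define
\begin{equation}
\Psi(x,t) = \int_0^t \psi(x,s)\,\d s,\qquad (x,t)\in I\times\mathbb{R}.
\end{equation}
Since $\int_0^1 \psi(x,s)\,\d s = 0$ for every $x\in I$, the function $\Psi$ is smooth on $I\times\mathbb{R}$ and periodic in $t$ with period $1$. In particular, for any closed interval $J\subset I$, both $\Psi$ and $\partial_x\Psi$ are uniformly bounded on $J\times\mathbb{R}$, by a constant $M=M(\psi,J)$.

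Next, I would use the chain rule
\begin{equation}
\frac{\d}{\d y}\Psi\bigl(y,tL(y)\bigr) = \partial_x\Psi\bigl(y,tL(y)\bigr) + t\,L'(y)\,\psi\bigl(y,tL(y)\bigr),
\end{equation}
which is licit because $\partial_t \Psi=\psi$. Since $L'$ is continuous and nowhere zero on the compact set $J$, there exists $c>0$ such that $|L'(y)|\geq c$ for all $y\in J$, and one may solve for $\psi$ and integrate over $J=[a,b]$ to obtain
\begin{equation}
\int_J \psi\bigl(y,tL(y)\bigr)\,\d y = \frac{1}{t}\int_a^b \frac{1}{L'(y)}\,\frac{\d}{\d y}\Psi\bigl(y,tL(y)\bigr)\,\d y - \frac{1}{t}\int_a^b \frac{\partial_x\Psi\bigl(y,tL(y)\bigr)}{L'(y)}\,\d y.
\end{equation}
The second term is already $O(t^{-1})$ with constant $M(b-a)/c$. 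For the first term, I would integrate by parts:
\begin{equation}
\int_a^b \frac{1}{L'(y)}\,\frac{\d}{\d y}\Psi\bigl(y,tL(y)\bigr)\,\d y = \left[\frac{\Psi\bigl(y,tL(y)\bigr)}{L'(y)}\right]_a^b + \int_a^b \frac{L''(y)}{L'(y)^2}\,\Psi\bigl(y,tL(y)\bigr)\,\d y,
\end{equation}
and bound each piece by $M$ times a constant depending on $L,J$ only (using $|L'|\geq c$ and continuity of $L''/L'^2$ on $J$). Combining these estimates yields a bound of the form $C_{\psi,L,J}\,t^{-1}$, as claimed.

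There is no serious obstacle; the only point requiring care is that the hypothesis $\int_0^1\psi(x,t)\,\d t=0$ is used precisely to ensure that $\Psi(x,t)$ is periodic (and hence uniformly bounded together with $\partial_x\Psi$) in $t$. Without this hypothesis, $\Psi$ would grow linearly in $t$ and the argument would fail.
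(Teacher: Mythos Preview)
Your proposal is correct and follows essentially the same approach as the paper: both construct a periodic antiderivative $\Psi$ in the second variable (using the zero-mean hypothesis), express $\psi(y,tL(y))$ via the chain rule as $\frac{1}{tL'(y)}$ times the difference of $\frac{\d}{\d y}\Psi(y,tL(y))$ and $\partial_x\Psi(y,tL(y))$, and then integrate by parts on the first piece while bounding the second directly. The only cosmetic difference is that you write $\partial_x\Psi$ where the paper writes $(\partial_y\Psi(y,T))\big|_{T=tL(y)}$, but these denote the same partial derivative in the first slot.
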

\begin{proof}
The assumption that $\int_0^1\psi(x,t)\d t=0$ for all $x\in I$ implies that there exists a smooth function $\Psi:I\times \mathbb{R}\to\mathbb{R}$ such that $\partial_t\Psi(x,t)=\psi(x,t)$ and $\Psi(x,t+1)=\Psi(x,t)$ for all $(x,t)\in I\times\mathbb{R}$. 
In particular,
\begin{equation}
\label{eq:behchissachenomedarti}
|\Psi(x,t)|\leq M\qquad\text{ and }\qquad |\partial_x \Psi(x,t)|\leq M
\end{equation}
for all $(x,t)\in J\times \mathbb{R}$ and for some constant $M$ depending only on $\psi$ and $J$.
By the identity
\begin{equation}
\psi\bigl(y,t L(y)\bigr) = \frac 1 {t\,L'(y)}\biggl[\frac{\d}{\d y}\Psi\bigl(y,t L(y)\bigr)\,-\,\bigl(\partial_y\Psi(y,T)\bigr)\big|_{T=t L(y)}\biggr]
\end{equation}
(where a prime denotes a derivative in $y$) to complete the proof of the lemma it suffices to note that~\eqref{eq:behchissachenomedarti} implies that
\begin{equation}
\int_J\frac 1{L'(y)}\frac{\d}{\d y}\Psi\bigl(y,t L(y)\bigr)\d y =
\int_J\frac{L''(y)}{L'(y)^2}\Psi\bigl(y,t L(y)\bigr)\d y
+\frac{\Psi\bigl(x_2,t L(x_2)\bigr)}{L'(x_2)}
-\frac{\Psi\bigl(x_1,t L(x_1)\bigr)}{L'(x_1)}
\end{equation}
(where we integrated by parts and denoted $J=[x_1,x_2]$) and
\begin{equation}
\int_J\frac 1{L'(y)}\bigl(\partial_y\Psi(y,T)\bigr)\big|_{T=t L(y)}\d y
\end{equation}
are both bounded by a constant depending only on $\psi$, $L$, and $J$ .
\end{proof}

Finally, the following lemmas provide estimates for sums of rapidly oscillating terms, which commonly arise in the van der Corput method in analytic number theory.

\begin{lemma}[Chapter I, Theorem 6.3 in \cite{tenenbaum2015introduction}]
\label{lemma:tenen1}
    Let $I\subset\mathbb{R}$ be an interval and let $\psi\in C^2(I,\mathbb{R})$ be a function such that $r = \inf_{I} | \psi''| > 0$.
    Then,
    \begin{equation}
        \left| \int_I \e^{2 \pi \i \psi(y) } \d y \right| \le \frac{4}{\sqrt{\pi r}}.
    \end{equation}
\end{lemma}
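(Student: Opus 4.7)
My plan is to prove this classical van der Corput estimate via the standard two-region decomposition combined with integration by parts. Since $|\psi''| > 0$ throughout $I$ and $\psi''$ is continuous, it has constant sign on $I$; replacing $\psi$ by $-\psi$ if necessary (which leaves the modulus of the integral unchanged) we may assume $\psi'' \geq r > 0$ on $I$. In particular, $\psi'$ is strictly increasing on $I$, so the preimage of any interval under $\psi'$ is an interval.

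For a parameter $\delta > 0$ to be optimized, I would decompose $I = I_\delta \cup I_\delta^c$, where
\begin{equation}
I_\delta = \{y \in I : |\psi'(y)| \leq \delta\}, \qquad I_\delta^c = \{y \in I : |\psi'(y)| > \delta\}.
\end{equation}
On $I_\delta$, the mean value theorem combined with $\psi'' \geq r$ gives $|I_\delta| \leq 2\delta/r$, so the trivial estimate yields
\begin{equation}
\biggl|\int_{I_\delta} \e^{2\pi\i\psi(y)}\,\d y\biggr| \leq \frac{2\delta}{r}.
\end{equation}
On $I_\delta^c$, which consists of at most two subintervals (namely $\{\psi' > \delta\}$ and $\{\psi' < -\delta\}$), I would integrate by parts using $\e^{2\pi\i\psi(y)} = \frac{1}{2\pi\i\,\psi'(y)}\frac{\d}{\d y}\e^{2\pi\i\psi(y)}$. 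On each subinterval the boundary terms are bounded by $\frac{1}{2\pi\delta}$, and the remaining integrand is $\frac{\psi''(y)}{2\pi\,\psi'(y)^2}\,\e^{2\pi\i\psi(y)}$, whose absolute value equals $\bigl|\frac{\d}{\d y}\frac{1}{2\pi\,\psi'(y)}\bigr|$ with $\frac{1}{\psi'}$ monotonic on each component. Hence that piece is also controlled by $\frac{1}{2\pi\delta}$ per component. Adding the four such contributions gives
\begin{equation}
\biggl|\int_{I_\delta^c} \e^{2\pi\i\psi(y)}\,\d y\biggr| \leq \frac{2}{\pi\delta}.
\end{equation}

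Combining the two estimates yields $\bigl|\int_I \e^{2\pi\i\psi(y)}\,\d y\bigr| \leq \frac{2\delta}{r} + \frac{2}{\pi\delta}$, and optimizing in $\delta$ by choosing $\delta = \sqrt{r/\pi}$ produces the bound $\frac{4}{\sqrt{\pi r}}$ claimed in the statement. The only delicate point is the careful accounting of the constant: one must verify that the components of $I_\delta^c$ contribute at most two boundary terms each, with the ``interior'' boundaries $\{\psi' = \pm\delta\}$ supplying the factor $\frac{1}{2\pi\delta}$ and the ``exterior'' ones yielding a contribution of the same size or smaller. Since this lemma is the classical van der Corput estimate, stated and proved in Tenenbaum's book, the cleanest route is simply to invoke the reference, but the above two-region argument is what underlies it.
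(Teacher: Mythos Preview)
The paper does not prove this lemma; it simply quotes Tenenbaum. Your sketch captures the standard two-region van der Corput argument and would certainly yield a bound $O(r^{-1/2})$. There is, however, a small but genuine gap in your constant accounting on $I_\delta^c$.

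With integration by parts on a single component $J=[\alpha,\beta]$ of $I_\delta^c$ you get \emph{three} contributions of size at most $\tfrac{1}{2\pi\delta}$: two boundary terms $\bigl|\tfrac{1}{2\pi i\,\psi'(\alpha)}\bigr|$, $\bigl|\tfrac{1}{2\pi i\,\psi'(\beta)}\bigr|$ and one integral term $\tfrac{1}{2\pi}\bigl|\tfrac{1}{\psi'(\alpha)}-\tfrac{1}{\psi'(\beta)}\bigr|$. That is $\tfrac{3}{2\pi\delta}$ per component, hence $\tfrac{3}{\pi\delta}$ total, not $\tfrac{2}{\pi\delta}$; optimizing then gives $\tfrac{2\sqrt{6}}{\sqrt{\pi r}}\approx\tfrac{4.9}{\sqrt{\pi r}}$, not the stated $\tfrac{4}{\sqrt{\pi r}}$. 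The two boundary phases $\e^{2\pi\i\psi(\alpha)}$ and $\e^{2\pi\i\psi(\beta)}$ are unrelated, so you cannot combine them into a single $\tfrac{1}{2\pi\delta}$.

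To obtain the sharp first-derivative bound $\bigl|\int_J\e^{2\pi\i\psi}\bigr|\le\tfrac{1}{\pi\delta}$ per component (which then yields exactly $\tfrac{4}{\sqrt{\pi r}}$), Tenenbaum uses the second mean value theorem applied to $\tfrac{1}{\psi'}\cdot\psi'\cos(2\pi\psi-\theta)$ rather than integration by parts. Since you already flag that the cleanest route is to invoke the reference, this is precisely what the paper does; your outline is fine modulo this sharpening of the first-derivative step.
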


\begin{lemma}[Chapter I, Theorem 6.4 in \cite{tenenbaum2015introduction}]
\label{lemma:tenen2}
    Let $I\subset \mathbb{R}$ be an interval and let $\psi \in C^1(I,\mathbb{R})$ be a function such that $\psi'$ is monotone on $I$.
    Then,
    \begin{equation}
        \sum_{n\in I} \e^{2 \pi \i \psi(n)} = \sum_{\alpha - 1 < \nu < \beta +1} \int_{I}\e^{2 \pi \i [\psi(y) - \nu y]} \d y +O\bigl( \log(\beta - \alpha + 2) \bigr)
    \end{equation}
    where $\alpha = \inf_{I} \psi'$ and $\beta = \sup_{I} \psi'$.
\end{lemma}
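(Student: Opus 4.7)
The plan is to deduce this from Poisson summation combined with a first-derivative test. I would begin by applying the Poisson summation formula to the function $\mathbf{1}_I(y)\,e^{2\pi i \psi(y)}$. Since the indicator is not smooth, a direct application is not legitimate, so I would first regularize $\mathbf{1}_I$ by smooth bump functions $\chi_\varepsilon \uparrow \mathbf{1}_I$, apply the classical Poisson formula, and pass to the limit as $\varepsilon \downarrow 0$, noting that the right-hand side converges symmetrically. This yields the (conditionally convergent) identity
\[
\sum_{n \in I} e^{2\pi i \psi(n)} = \lim_{N \to \infty} \sum_{\nu = -N}^{N} \int_I e^{2\pi i(\psi(y) - \nu y)}\, \mathrm{d} y,
\]
with the usual half-weight convention at integer endpoints of $I$. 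The lemma is then equivalent to the estimate
\[
\sum_{\nu \notin (\alpha - 1,\, \beta + 1)} \int_I e^{2\pi i(\psi(y) - \nu y)}\, \mathrm{d} y \,=\, O\bigl(\log(\beta - \alpha + 2)\bigr).
\]

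For each integer $\nu$ outside $(\alpha-1,\beta+1)$, the derivative of the phase is $\psi'(y) - \nu$, which is monotone (by hypothesis on $\psi'$) and bounded below in modulus by $d(\nu) := \min(|\nu-\alpha|,|\nu-\beta|) \geq 1$. One integration by parts splits the oscillatory integral into a boundary piece
\[
\Bigl[\tfrac{e^{2\pi i(\psi(y) - \nu y)}}{2\pi i(\psi'(y) - \nu)}\Bigr]_{\partial I},
\]
of modulus $O(1/d(\nu))$, and a remainder integral whose modulus is bounded by the total variation of $(\psi'-\nu)^{-1}$ on $I$. Monotonicity of $\psi'-\nu$ makes this total variation a telescoping sum of size $O(1/d(\nu)^2)$. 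Summing the remainder integrals over integer $\nu$ with $d(\nu)\geq 1$ converges absolutely and contributes $O(1)$. The boundary terms, of only barely non-summable size $1/d(\nu)$, oscillate in $\nu$ as $e^{-2\pi i \nu y}$ at each endpoint $y \in \partial I$; an Abel/Dirichlet summation-by-parts argument, using that the partial sums $\sum_{\nu\leq N} e^{-2\pi i \nu y}$ are uniformly bounded on $\nu$, converts the harmonic sum $\sum 1/d(\nu)$ into a convergent sum of size $O(\log(\beta-\alpha+2))$—the logarithm arising precisely from the range $\nu \in (\alpha-1,\beta+1)$ over which the Abel-summed tail is effectively truncated.

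The main obstacle is this last step: a naive bound $|\int_I e^{2\pi i(\psi - \nu y)}| \leq 1/(\pi d(\nu))$ (van der Corput's first-derivative test, a first-order analogue of Lemma~\ref{lemma:tenen1}) produces a divergent harmonic sum, and only the oscillation of the boundary terms in $\nu$ rescues the estimate with the sharp logarithmic bound. Since this classical statement is treated in detail in Chapter~I, Theorem~6.4 of Tenenbaum's \emph{Introduction to Analytic and Probabilistic Number Theory}, the cleanest route in the paper is simply to invoke that reference rather than reproduce the Abel-summation bookkeeping in full.
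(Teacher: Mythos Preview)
The paper does not prove this lemma; it simply quotes it with the citation to Tenenbaum (Chapter~I, Theorem~6.4), exactly as you ultimately recommend. So there is no proof in the paper to compare against, and your bottom line matches the paper.

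Your sketch is the standard route (Poisson summation plus the first-derivative test), but one step is not right as stated. You claim the boundary terms from integration by parts can be handled by Abel summation ``using that the partial sums $\sum_{\nu\leq N} e^{-2\pi i \nu y}$ are uniformly bounded.'' This fails when an endpoint $y$ of $I$ is an integer (then $e^{-2\pi i \nu y}\equiv 1$ and the partial sums grow like $N$), and even for non-integer $y$ the bound is $O(\|y\|^{-1})$, not uniform in $y$. In the actual proof the logarithm does not come from Abel-summing the tail: for integer endpoints it comes from the symmetric principal-value cancellation between the two halves $\nu>\beta+1$ and $\nu<\alpha-1$ of the Poisson series (at an endpoint $y$ with $c=\psi'(y)\in[\alpha,\beta]$, the two divergent harmonic tails combine to $\log\frac{\beta+1-c}{c-\alpha+1}+O(1)=O(\log(\beta-\alpha+2))$), or equivalently from truncating the Fourier expansion of the sawtooth at height $M\asymp\beta-\alpha+2$, whence $\log M$ appears from $\sum_{0<|\nu|\le M}|\nu|^{-1}$. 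Since you defer the full details to Tenenbaum anyway, this does not affect the paper, but the sketch as written would not stand alone.
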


\begin{lemma} \label{lem:van der corput}
    Let $I\subset\mathbb{R}$ be an interval and let $\psi \in C^2(I,\mathbb{R})$ be a function such that $r = \inf_{I} | \psi'' | > 0$.
    Then,
    \begin{equation}
        \left| \sum_{n\in I} \e^{2 \pi \i \psi(n)} \right| \le \frac{4}{\sqrt{\pi r}} (\beta - \alpha+2) +O\bigl( \log(\beta - \alpha + 2) \bigr)
    \end{equation}
    where $\alpha = \inf_{I} \psi'$ and $\beta = \sup_{I} \psi'$.
\end{lemma}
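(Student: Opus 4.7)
The proof will combine Lemmas~\ref{lemma:tenen1} and~\ref{lemma:tenen2} in the most direct way, using the hypothesis on $\psi''$ twice: once to guarantee monotonicity of $\psi'$ (so that Lemma~\ref{lemma:tenen2} applies) and once to control the oscillatory integrals that arise from it (via Lemma~\ref{lemma:tenen1}).

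First I would observe that, since $\psi\in C^2(I)$ and $|\psi''(y)|\ge r>0$ for all $y\in I$, the continuous function $\psi''$ cannot change sign on $I$ by the intermediate value theorem. Hence $\psi'$ is monotone on $I$, and the hypotheses of Lemma~\ref{lemma:tenen2} are satisfied. Applying it yields
\begin{equation}
\sum_{n\in I}\e^{2\pi\i\psi(n)} \;=\; \sum_{\alpha-1<\nu<\beta+1}\int_{I}\e^{2\pi\i[\psi(y)-\nu y]}\,\d y \;+\; O\bigl(\log(\beta-\alpha+2)\bigr),
\end{equation}
where the inner sum is over integers $\nu$.

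Next, for each such integer $\nu$ the function $\psi_\nu(y)=\psi(y)-\nu y$ satisfies $\psi_\nu''=\psi''$, so $\inf_{I}|\psi_\nu''|=r>0$. Lemma~\ref{lemma:tenen1} therefore applies uniformly in $\nu$ and gives
\begin{equation}
\biggl|\int_I \e^{2\pi\i\psi_\nu(y)}\,\d y\biggr|\;\le\;\frac{4}{\sqrt{\pi r}}.
\end{equation}
The number of integers $\nu$ with $\alpha-1<\nu<\beta+1$ is at most $\beta-\alpha+2$, so the triangle inequality yields
\begin{equation}
\biggl|\sum_{n\in I}\e^{2\pi\i\psi(n)}\biggr|\;\le\;\frac{4}{\sqrt{\pi r}}(\beta-\alpha+2)\;+\;O\bigl(\log(\beta-\alpha+2)\bigr),
\end{equation}
which is the claimed bound.

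There is essentially no obstacle here: the result is a standard ``combined'' van der Corput estimate obtained by plugging Lemma~\ref{lemma:tenen1} into the integral representation supplied by Lemma~\ref{lemma:tenen2}. The only point requiring a moment's thought is the preliminary verification that $\psi'$ is monotone, which is needed to legitimately apply Lemma~\ref{lemma:tenen2}; this is immediate from the sign-preservation of $\psi''$ under the hypothesis $\inf_I|\psi''|>0$.
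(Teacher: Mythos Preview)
Your proof is correct and follows exactly the approach indicated in the paper, which simply states that the lemma follows from Lemmas~\ref{lemma:tenen1} and~\ref{lemma:tenen2}. You have filled in the details appropriately: the sign-preservation of $\psi''$ gives monotonicity of $\psi'$, Lemma~\ref{lemma:tenen2} then reduces the sum to a sum of oscillatory integrals plus the logarithmic error, and Lemma~\ref{lemma:tenen1} bounds each integral uniformly by $4/\sqrt{\pi r}$ since $\psi_\nu''=\psi''$.
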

\begin{proof}
    It follows from Lemmas~\ref{lemma:tenen1} and~\ref{lemma:tenen2}.
\end{proof}

\begin{lemma} \label{lem:kusmin-landau_weighted}
    Let $I,\psi,r,\alpha,\beta$ be as in Lemma~\ref{lem:van der corput} and let $\phi\in C^1(I,\mathbb{R})$.
    Then,
    \begin{equation}
        \left| \sum_{n \in I} \phi(n) \e^{2 \pi \i \psi(n)} \right| \le  \left(  \frac{4(\beta - \alpha+2)}{\sqrt{\pi r}} + O\bigl(\log(\beta-\alpha+2)\bigr) \right)\left(\sup_{I} |\phi|  + |I| \sup_{I} | \phi' | \right).
    \end{equation}
\end{lemma}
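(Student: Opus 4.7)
The natural approach is to reduce the weighted exponential sum to the unweighted one (already estimated in Lemma~\ref{lem:van der corput}) via summation by parts (Abel summation). The key observation is that the bound furnished by Lemma~\ref{lem:van der corput} is \emph{hereditary}: if $I' \subseteq I$ is any subinterval, then
\[
\inf_{I'}|\psi''| \geq r, \qquad \sup_{I'}\psi' - \inf_{I'}\psi' \leq \beta - \alpha,
\]
so the same constant $C := \frac{4(\beta-\alpha+2)}{\sqrt{\pi r}} + O\bigl(\log(\beta-\alpha+2)\bigr)$ bounds $\bigl|\sum_{n\in I'} \e^{2\pi\i\psi(n)}\bigr|$ uniformly in $I' \subseteq I$.

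Concretely, enumerate $I \cap \mathbb{Z} = \{n_1 < n_2 < \cdots < n_K\}$ and set $T_k := \sum_{j=1}^k \e^{2\pi\i\psi(n_j)}$ with $T_0 := 0$. Applying Lemma~\ref{lem:van der corput} on the subinterval $[n_1,n_k] \cap I$ gives $|T_k| \leq C$ for every $1 \leq k \leq K$. Abel summation then yields
\[
\sum_{k=1}^{K}\phi(n_k)\e^{2\pi\i\psi(n_k)} \,=\, \phi(n_K)\,T_K \,-\, \sum_{k=1}^{K-1}\bigl(\phi(n_{k+1}) - \phi(n_k)\bigr)\,T_k,
\]
so the triangle inequality and the uniform bound on $T_k$ give
\[
\biggl|\sum_{n\in I}\phi(n)\e^{2\pi\i\psi(n)}\biggr| \,\leq\, C\,|\phi(n_K)| + C\sum_{k=1}^{K-1}\bigl|\phi(n_{k+1}) - \phi(n_k)\bigr|.
\]
The first term is bounded by $C\sup_I|\phi|$. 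For the second, the mean value theorem gives $|\phi(n_{k+1}) - \phi(n_k)| \leq (n_{k+1}-n_k)\sup_I|\phi'|$, so the telescoping sum is at most $(n_K - n_1)\sup_I|\phi'| \leq |I|\sup_I|\phi'|$. Combining these estimates and inserting the explicit value of $C$ produces the claimed bound.

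No step here is a real obstacle; the only point that requires a moment of care is the hereditary property, namely that the constant in Lemma~\ref{lem:van der corput} only depends on $r$ and on the total variation of $\psi'$ over $I$, both of which are stable under passing to subintervals. Once this is verified, the proof is a standard Abel summation argument with no further technical subtleties.
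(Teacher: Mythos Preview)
Your proof is correct and follows essentially the same approach as the paper: both use Abel summation by parts together with the uniform bound from Lemma~\ref{lem:van der corput} on the partial sums, and both estimate the differences $\phi(n_{k+1})-\phi(n_k)$ by $\sup_I|\phi'|$. You are slightly more explicit than the paper in spelling out the hereditary property (that the van~der~Corput constant applies uniformly to all subintervals $I'\subseteq I$), which the paper uses without comment.
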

\begin{proof}
    Let $I= [n_0,n_1]$. By Abel's summation by part,
\begin{equation}
    \sum_{n=n_0}^{n_1} \phi(n) \e^{ 2 \pi \i \psi(n)} = \phi(n_1) \sum_{n=n_0}^{n_1} \e^{2 \pi \i  \psi(n)} - \sum_{n=n_0}^{n_1-1} \bigl(\phi(n+1) - \phi(n) \bigr) \sum_{m=n_0}^n \e^{ 2 \pi \i  \psi(m)}.
\end{equation}
Combining~\Cref{lem:van der corput} with the estimate $|\phi(n+1)-\phi(n)| \le \sup_{ I} | \phi' |$ yields the desired bound.
\end{proof}

We are now ready to give the proof.
We separate the analysis according to~$s=0$, $s\in\mathbb{Z}_{>0}$, or~$s\in\mathbb{Z}_{<0}$.

First, let $s=0$. We have, for any $0<t_0<t$, see~\eqref{eq:wtalphabetagamma},
\begin{equation}
\begin{aligned}
    \log Q(t,0)&=\log Q(t_0,0)-2\int_{t_0}^t\wh\alpha(0,\tau)\d\tau
    \\
    &=\log Q(t_0,0)-2\int_{t_0}^t\biggl(\wh\alpha(0,\tau)-\tau\bigl(1+g_1(0)\bigr)-p_0(0,\tau)-\frac{\mathcal X(0,\tau)}{\tau}\biggr)\d\tau
\\
&\quad
-(t^2-t_0^2)\bigl(1+g_1(0)\bigr)
+\log\frac{\vartheta\bigl(t\mathcal{L}(0)|\frac{\i\pi}{\mathcal{K}(0)}\bigr)}{\vartheta\bigl(t_0 \mathcal{L}(0)|\frac{\i\pi}{\mathcal{K}(0)}\bigr)}
-2\overline{\mathcal X}(0)\log\frac{t}{t_0}
-2\int_{t_0}^t\frac{\mathcal X_0(0,\tau)}{\tau}\d\tau
\end{aligned}
\end{equation}
where we use the identity~\eqref{eq:finalidentityp00} and we decompose $\mathcal X(0,t)$, which is periodic in~$t$ with period $1/\mathcal{L}(0)$, into a constant independent of $t$ and a zero-mean periodic function of $t$ as
\begin{equation}
\mathcal X(0,t) = \overline{\mathcal X}(0)+\mathcal X_0(0,t),\qquad \overline{\mathcal X}(0)=\mathcal{L}(0)\int_0^{1/\mathcal{L}(0)}\mathcal{X}(0,\tau)\d\tau.
\end{equation}
Since $1+g_1(0)=\mathcal{F}(0)$ by~\eqref{eq:finalidentityF}, we obtain
\begin{equation}
\log Q(t,0)=-t^2\mathcal{F}(0)+\log\vartheta\bigl(t\mathcal{L}(0)\big|\frac{\i\pi}{\mathcal{K}(0)}\bigr)-2\overline{\mathcal X}(0)\log t+C(t_0)+O(t^{-1})
\end{equation}
where
\begin{equation}
C(t_0)=\log Q(t_0,0)+t_0^2\mathcal{F}(0)-\log\vartheta\bigl(t_0\mathcal{L}(0)\big|\frac{\i\pi}{\mathcal{K}(0)}\bigr)+2\overline{\mathcal X}(0)\log t_0+\int_{t_0}^{+\infty}\frac{\mathcal X_0(0,\tau)}\tau\d\tau
\end{equation}
and we used Proposition~\ref{prop:finalqplus} as well as the estimate
\begin{equation}
\int_{t}^{+\infty}\frac{\mathcal X_0(0,\tau)}\tau\d\tau=O(t^{-1}),\qquad t\to+\infty,
\end{equation}
which follows from Lemma~\ref{lemma:periodiceasy}.
It is evident that $C(t_0)$ cannot actually depend on $t_0$ and so it is an absolute constant (depending only on~$q$) and the theorem is proved in this case, namely we have shown that
\begin{equation}
\label{eq:finals=0}
\log Q(t,0)=-t^2\mathcal{F}(0)+\log\vartheta\bigl(t\mathcal{L}(0)\big|\frac{\i\pi}{\mathcal{K}(0)}\bigr)+\mathcal{A}\log t+\log\mathcal{C}(0)+O(t^{-1})
\end{equation}
for a suitable constant $\mathcal{C}(0)\neq0$ depending on $\eta$ only and
\begin{equation}
\label{eq:Afinal}
\mathcal{A}=-2\overline{\mathcal X}(0) =-2 \mathcal{L}(0) \int_0^{1/\mathcal{L}(0)} \mathcal{X}(0,\tau) \d \tau,
\end{equation}
which is also a constant depending on $\eta$ only.
We recall that $\mathcal X(x,t)$ is given explicitly by~\eqref{eq:mathcalXY} and~\eqref{eq:explicitJR1qplus1}--\eqref{eq:explicitJR1qplus7}. See \Cref{fig:A} for a plot of $\mathcal{A}$ as a function of $\eta$.

\begin{figure}
    \centering
    \includegraphics[width=0.5\linewidth]{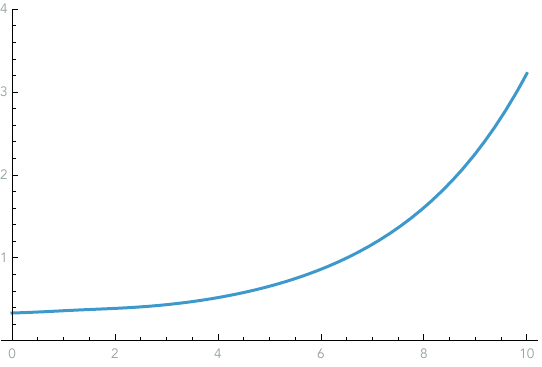}
    \caption{Plot of the function $\mathcal{A} = \mathcal{A}(\eta)$.}
    \label{fig:A}
\end{figure}

Next, let $s=xt\in\mathbb{Z}_{>0}$ with $x\in (0,2)$.
We have
\begin{equation}
\label{eq:s>0_1}
\log Q(t,s)=\log Q(t,0)+\sum_{i=1}^{s}\log\frac{Q(t,i)}{Q(t,i-1)}=\log Q(t,0)-\sum_{i=1}^{s}\log\wh\beta\bigl(\frac it,t\bigr).
\end{equation}
Setting
\begin{equation}
V_i=\log\wh\beta\bigl(\frac it,t\bigr) -t \mathcal{F}'\bigl(\frac it\bigr)-\log p_+\bigl(\frac it,t\bigr)-t^{-1}\mathcal Y\bigl(\frac it,t\bigr),
\end{equation}
by Proposition~\ref{prop:finalqplus} and~\eqref{eq:finalidentityFprime} we have $V_i=O(t^{-2})$, uniformly for $i\in\lbrace 1,\dots, s\rbrace$, hence 
\begin{equation}
\begin{aligned}
\sum_{i=1}^s\log\wh\beta(\frac it,t\bigr) &= t \sum_{i=1}^s\mathcal{F}'\bigl(\frac it\bigr)+\sum_{i=1}^s\log p_+\bigl(\frac it,t\bigr)+t^{-1}\sum_{i=1}^s\mathcal Y\bigl(\frac it,t\bigr)+\sum_{i=1}^sV_i 
\\
\label{eq:s>0_2}
&=t \sum_{i=1}^s\mathcal{F}'\bigl(\frac it\bigr)+\sum_{i=1}^s\log p_+\bigl(\frac it,t\bigr)+t^{-1}\sum_{i=1}^s\mathcal Y\bigl(\frac it,t\bigr)+ O(t^{-1}).
\end{aligned}
\end{equation}
Let us recall the Euler--Maclaurin formulas
\begin{align}
\label{eq:EM1}
\sum_{i=1}^s\psi(i) &= \int_0^s \psi(y)\d y+\frac{\psi(s)-\psi(0)}{2}+\int_0^s\psi'(y)B_1\bigl(y-\lfloor y\rfloor\bigr)\d y,
\\
\label{eq:EM2}
\sum_{i=1}^s\psi(i) &= \int_0^s \psi(y)\d y+\frac{\psi(s)-\psi(0)}{2}+\frac{\psi'(s)-\psi'(0)}{12}-\frac 12\int_0^s\psi''(y)B_2\bigl(y-\lfloor y\rfloor\bigr)\d y,
\end{align}
which hold for any smooth function $\psi:\mathbb{R}\to\mathbb{R}$, where $B_1(y)=y-\frac 12$ and $B_2(y)=y^2-y+\frac 16$ are the first and second Bernoulli polynomials.
Applying~\eqref{eq:EM2} to $\psi(s)=t\mathcal{F}'(s/t)$ we obtain
\begin{equation}
\label{eq:s>0_3}
t \sum_{i=1}^s\mathcal{F}'\bigl(\frac it\bigr) = 
t^2\bigl(\mathcal{F}(x)-\mathcal{F}(0)\bigr)
+\frac t2\bigl(\mathcal{F}'(x)-\mathcal{F}'(0)\bigr)
+\frac 1{12}\bigl(\mathcal{F}''(x)-\mathcal{F}''(0)\bigr)
+O(t^{-1}).
\end{equation}
Next we decompose, thanks to~\eqref{eq:p0pplustheta} and~\eqref{eq:finalidentityFsecond},
\begin{equation}
\log p_+(x,t) = -\frac 12\mathcal F''_q(x)+\log p_+^0(x,t),\quad
p_+^ 0(x,t)=\frac{\vartheta(t\mathcal{L}(x)+\frac{\eta}{2K(x)}|\frac{\i\pi}{\mathcal{K}(x)})}{\vartheta(t\mathcal{L}(x)|\frac{\i\pi}{\mathcal{K}(x)})}.
\end{equation}
Applying the Euler--Maclaurin formula~\eqref{eq:EM1} to $\psi(s)=\mathcal{F}''(s/t)$ we obtain
\begin{equation}
\label{eq:s>0_4}
\sum_{i=1}^s\log p_+\bigl(\frac it,t\bigr)=
\sum_{i=1}^s\log p_+^0\bigl(\frac it,t\bigr)
-\frac t2\bigl(\mathcal{F}'(x)-\mathcal{F}'(0)\bigr)
-\frac 14\bigl(\mathcal{F}''(x)-\mathcal{F}''(0)\bigr)+O(t^{-1}).
\end{equation}
Next, we turn to the sum
\begin{equation}\label{eq:sum_p0+}
\begin{aligned}
\sum_{i=1}^s\log p_+^0\bigl(\frac it,t\bigr) &= \sum_{i=1}^s\log\frac{\vartheta\bigl(t\mathcal{L}\bigl(\frac it\bigr)+\frac{\eta}{2\mathcal{K}(\frac it)}\big|\i\pi \mathcal{K}\bigl(\frac it\bigr)^{-1}\bigr)}{\vartheta\bigl(t\mathcal{L}\bigl(\frac it\bigr)\big|\i\pi \mathcal{K}\bigl(\frac it\bigr)^{-1}\bigr)}
\\
&=\sum_{i=1}^s\log\frac{\vartheta\bigl(t\mathcal{L}\bigl(\frac it\bigr)+\frac{\eta}{2\mathcal{K}(\frac it)}\big|\i\pi \mathcal{K}\bigl(\frac it\bigr)^{-1}\bigr)}{\vartheta\bigl(t\mathcal{L}\bigl(\frac{i-1}t\bigr)\big|\i\pi \mathcal{K}\bigl(\frac {i-1}t\bigr)^{-1}\bigr)}
+\sum_{i=1}^s\log\frac{\vartheta\bigl(t\mathcal{L}\bigl(\frac {i-1}t\bigr)\big|\i\pi \mathcal{K}\bigl(\frac {i-1}t\bigr)^{-1}\bigr)}{\vartheta\bigl(t\mathcal{L}\bigl(\frac it\bigr)\big|\i\pi \mathcal{K}\bigl(\frac it\bigr)^{-1}\bigr)}
\\
&=
\sum_{i=1}^s\log\frac{\vartheta\bigl(t\mathcal{L}\bigl(\frac it\bigr)+\frac{\eta}{2\mathcal{K}(\frac it)}\big|\i\pi \mathcal{K}\bigl(\frac it\bigr)^{-1}\bigr)}{\vartheta\bigl(t\mathcal{L}\bigl(\frac{i-1}t\bigr)\big|\i\pi \mathcal{K}\bigl(\frac {i}t\bigr)^{-1}\bigr)}
+
\sum_{i=1}^s\log\frac{\vartheta\bigl(t\mathcal{L}\bigl(\frac{i-1}t\bigr)\big|\i\pi \mathcal{K}\bigl(\frac {i}t\bigr)^{-1}\bigr)}{\vartheta\bigl(t\mathcal{L}\bigl(\frac{i-1}t\bigr)\big|\i\pi \mathcal{K}\bigl(\frac {i-1}t\bigr)^{-1}\bigr)}
\\
&\,\,\,\,\,\,\,
+\log\frac{\vartheta\bigl(t\mathcal{L}(0)\big|\i\pi \mathcal{K}(0)^{-1}\bigr)}{\vartheta\bigl(t\mathcal{L}(x)\big|\i\pi \mathcal{K}(x)^{-1}\bigr)}
\end{aligned}
\end{equation}
Let us focus on the first sum.
By the Jacobi triple product \eqref{eq:triple_product}, we have
\begin{equation}
    \begin{split}
        &\log\frac{\vartheta\bigl(t\mathcal{L}\bigl(\frac it\bigr)+\frac{\eta}{2\mathcal{K}(\frac it)}\big|\i\pi \mathcal{K}\bigl(\frac it\bigr)^{-1}\bigr)}{\vartheta\bigl(t\mathcal{L}\bigl(\frac{i-1}t\bigr)\big|\i\pi \mathcal{K}\bigl(\frac {i}t\bigr)^{-1}\bigr)} 
        \\
        &= \sum_{\varepsilon \in \{\pm 1\}} \sum_{m \ge 1} \log \left[ \frac{1+\exp\{- \frac{ (2m-1) \pi^2}{\mathcal{K}(\frac{i}{t})}+2\pi \i \varepsilon [t \mathcal{L}(\frac{i}{t}) -\mathcal{L}'(\frac{i}{t})] \} }{ 1+\exp\{- \frac{ (2m-1) \pi^2}{\mathcal{K}(\frac{i}{t})}+2\pi \i \varepsilon [t \mathcal{L}(\frac{i}{t}) -\mathcal{L}'(\frac{i}{t}) +\frac{\mathcal{L}''(\frac{i}{t})}{2t} + O(\frac{1}{t^2})] \} } \right]
        \\
        &= \sum_{\varepsilon \in \{\pm 1\}} \sum_{m \ge 1} \frac{\i \pi \varepsilon \mathcal{L}''(\frac{i}{t})}{t} \frac{\exp\{- \frac{ (2m-1) \pi^2}{\mathcal{K}(\frac{i}{t})}+2\pi \i \varepsilon [t \mathcal{L}(\frac{i}{t}) -\mathcal{L}'(\frac{i}{t})] \} }{1 + \exp\{- \frac{ (2m-1) \pi^2}{\mathcal{K}(\frac{i}{t})}+2\pi \i \varepsilon [t \mathcal{L}(\frac{i}{t}) -\mathcal{L}'(\frac{i}{t})] \} } + O \left( \frac{1}{t^2} \right)
        \\
        &= \sum_{\varepsilon \in \{\pm 1\}} \sum_{m, \ell \ge 1}  \frac{ (-1)^{\ell-1} \i \pi \varepsilon \mathcal{L}''(\frac{i}{t})}{t} \exp\biggl\{- \frac{ \ell(2m-1) \pi^2}{\mathcal{K}(\frac{i}{t})}+2\pi \i \ell \varepsilon \biggl[t \mathcal{L} \biggl( \frac{i}{t} \biggr) -\mathcal{L}'\biggl(\frac{i}{t} \biggr) \biggr] \biggr\} + O\biggl( \frac{1}{t^2} \biggr).
    \end{split}
\end{equation}
To perform the summation over $i$ we will use Lemma~\ref{lem:kusmin-landau_weighted} with $\psi=\psi_{\varepsilon,\ell}$, $\phi=\phi_{\varepsilon,\ell,m}$:
\begin{equation}
    \psi_{\varepsilon,\ell}(i) = \varepsilon \ell t \mathcal{L} \left( \frac{i}{t} \right),\quad 
    \phi_{\varepsilon,\ell,m}(i) =  \frac{ (-1)^{\ell-1} \i \pi \varepsilon \mathcal{L}''(\frac{i}{t})}{t} \exp\left\{- \frac{ \ell(2m-1) \pi^2}{\mathcal{K}(\frac{i}{t})} - 2\pi \i \ell \varepsilon \mathcal{L}' \left( \frac{i}{t} \right)  \right\}.
\end{equation}
We observe that, by \eqref{eq:finalidentityLprime}, we have
\begin{equation}
    \psi_{\varepsilon,\ell}'(i) = -\frac{\ell \varepsilon \eta}{2 \mathcal{K}(\frac{i}{t})}, 
    \qquad
    \psi_{\varepsilon,\ell}''(i) = \frac{\ell \varepsilon \eta \mathcal{K}'(\frac{i}{t})}{2t \mathcal{K}(\frac{i}{t})^2}, \qquad
    \text{for}~ 1\le i \le xt,
\end{equation}
and since $\mathcal{K}$ is a monotonically increasing function, there exists a constant $D>0$ depending on $x$, such that
\begin{equation}
    \frac{\ell}{D}  \le  |\psi_{\varepsilon,\ell}'(i)| \le  \ell D, \qquad 
    |\psi_{\varepsilon,\ell}''(i)| \ge \frac{\ell}{t D}
    \qquad
    \text{for}~ 1\le i \le xt.
\end{equation}
Moreover, by direct inspection of the function $\phi_{\varepsilon,\ell,m}$, there exists $\delta>0$ independent of $i,m,\ell, \varepsilon$ such that, taking the constant~$D$ large enough,
\begin{equation}
    \left|\phi_{\varepsilon,\ell,m}(i) \right| \le \frac{\e^{-\ell(2m-1) \delta }}{t} D,
    \qquad
    \left| \phi_{\varepsilon,\ell,m}'(i) \right| \le \frac{\e^{-\ell(2m-1) \delta }}{t^2} D, \qquad
    \text{for}~ 1\le i \le xt,
\end{equation}
and the hypotheses of Lemma~\ref{lem:kusmin-landau_weighted} are satisfied.
We obtain
\begin{equation}
    \left| \sum_{i=1}^{xt} \phi_{\varepsilon,\ell,m}(i) \e^{2 \pi \i \psi_{\varepsilon,\ell}(i)} \right| \leq  \left[  \frac{4(\ell D - \frac{\ell}{D} +2) \sqrt{D}}{\sqrt{\pi \ell}} \sqrt{t} + O \bigl( \log(\ell) \bigr) \right]  (1 + x) \frac{D}{t} \e^{-\ell(2 m -1)\delta}
\end{equation}
and summing over $\varepsilon,\ell, m$ we obtain
\begin{equation}
    \begin{split}
        &\left| \sum_{i=1}^{xt} \log\frac{\vartheta\bigl(t\mathcal{L}\bigl(\frac it\bigr)+\frac{\eta}{2\mathcal{K}(\frac it)}\big|\i\pi \mathcal{K}\bigl(\frac it\bigr)^{-1}\bigr)}{\vartheta\bigl(t\mathcal{L}\bigl(\frac{i-1}t\bigr)\big|\i\pi \mathcal{K}\bigl(\frac {i}t\bigr)^{-1}\bigr)} \right| 
        \\
        &\le \sum_{\varepsilon \in \{\pm 1\} } \sum_{m, \ell \ge 1 } \left[  \frac{4(\ell D - \frac{\ell}{D} +2) \sqrt{D}}{\sqrt{\pi \ell}} \sqrt{t} + O \bigl( \log(\ell) \bigr) \right]  (1 + x) \frac{D}{t} \e^{-\ell(2 m -1)\delta} +  O\left(\frac{1}{t}\right)
        \\
        & = O ( 1/\sqrt{t} ),
    \end{split}
\end{equation}
We can now consider the second sum in \eqref{eq:sum_p0+}.
By the Jacobi triple product~\eqref{eq:triple_product} we have
\begin{equation}
    \begin{split}
        &\sum_{i=1}^s\log\frac{\vartheta\bigl(t\mathcal{L}\bigl(\frac{i-1}t\bigr)\big|\i\pi \mathcal{K}\bigl(\frac {i}t\bigr)^{-1}\bigr)}{\vartheta\bigl(t\mathcal{L}\bigl(\frac{i-1}t\bigr)\big|\i\pi \mathcal{K}\bigl(\frac {i-1}t\bigr)^{-1}\bigr)} 
        \\
        &= \log \prod_{m \ge 1} \frac{\left(1-\exp\left\{ -  \frac{2 m \pi^2}{\mathcal{K}(0)}  \right\} \right)}{ \left(1-\exp\left\{ -  \frac{2 m \pi^2}{\mathcal{K}(\frac{s}{t})} \right\}\right) } 
        + \sum_{i=1}^s \sum_{ \substack{ m \ge 1 \\ \varepsilon \in \{\pm 1 \} }} \log \frac{\left(1+\exp\left\{ -  \frac{(2 m-1) \pi^2}{\mathcal{K}(\frac{i-1}{t})} +\varepsilon 2 \pi \i t \mathcal{L}(\frac{i}{t})  \right\} \right)}{ \left(1+\exp\left\{ -  \frac{(2 m -1)\pi^2}{\mathcal{K}(\frac{i}{t})} +\varepsilon 2 \pi \i t \mathcal{L}(\frac{i}{t})   \right\} \right) }
        \\
        &= \log \prod_{m \ge 1} \frac{\left(1-\exp\left\{ -  \frac{2 m \pi^2}{\mathcal{K}(0)}  \right\} \right)}{ \left(1-\exp\left\{ -  \frac{2 m \pi^2}{\mathcal{K}(\frac{s}{t})} \right\}\right)} + O \left( \frac{1}{\sqrt{t}} \right).
    \end{split}
\end{equation}
The first term is a function independent of $t$, while the asymptotic behavior of second term follows from van der Corput estimates such as those used above, which we will not repeat.

Finally, we apply the Euler--Maclaurin formula~\eqref{eq:EM1} with $\psi(s)=t^{-1}\mathcal Y(s/t,t)$ to obtain
\begin{equation}
\label{eq:s>0_6}
t^{-1}\sum_{i=1}^s\mathcal Y\bigl(\frac it,t\bigr)=\int_0^x\mathcal Y(y,t)\d y+O(t^{-1}).
\end{equation}
Letting $\overline{\mathcal Y}(x)=\mathcal{L}(x)\int_0^{1/\mathcal{L}(x)}\mathcal Y(x,t)\d t$ be the average over the period in $t$, we have, by Lemma~\ref{lemma:periodiclesseasy},
\begin{equation}
\label{eq:s>0_7}
\int_0^x\mathcal Y(y,t)\d y = \int_0^x\overline{\mathcal Y}(y)\d y+O(t^{-1}).
\end{equation}

Combining~\eqref{eq:finals=0},~\eqref{eq:s>0_1},~\eqref{eq:s>0_2},~\eqref{eq:s>0_3},~\eqref{eq:s>0_4},~\eqref{eq:s>0_6}, and~\eqref{eq:s>0_7}, we obtain
\begin{equation}
\log Q(t,s) = -t^2\mathcal{F}(x)+\log\vartheta\bigl(t\mathcal{L}(x)\big|\i\pi \mathcal{K}(x)^{-1}\bigr)+\log\mathcal C(x)+O\bigl(\frac 1{\sqrt t}\bigr)
\end{equation}
where
\begin{equation}
\log \mathcal C(x)=\log\mathcal C(0)+\frac 16\bigl(\mathcal{F}''(x)-\mathcal{F}''(0)\bigr)+\log \prod_{m \ge 1} \frac{\left(1-\exp\left\{ -  \frac{2 m \pi^2}{\mathcal{K}(0)}  \right\} \right)}{ \left(1-\exp\left\{ -  \frac{2 m \pi^2}{\mathcal{K}(x)} \right\}\right)}+\int_0^x\overline{\mathcal Y}(y)\d y
\end{equation}
is a constant depending on $x$ only.
The proof for $s\in\mathbb{Z}_{<0}$ follows completely similar lines and so we omit it.

\section{Properties of the rate function: phase transitions and large-\texorpdfstring{$\eta$}{η} limit}
\label{sec:propertiesrateproof}
\subsection{Phase transitions: proof of \Cref{thm:phase transition}}

\begin{proof}[Proof of \Cref{thm:phase transition}, (\ref{item:TW phase transition})]
By the periodicity properties~\eqref{eq:periodicsigma}, \eqref{eq:periodiczeta}, and~\eqref{eq:periodicwp} and the Laurent expansions at the origin, see~\eqref{eq:Taylorsigma}, \eqref{eq:Taylorzeta}, \eqref{eq:Taylorwp}, of the Weierstrass elliptic functions, we produce a Taylor expansion of the rate function $\mathcal{F}(x)$ as $x\uparrow 2$.
Namely, denoting $\varepsilon=2K-\eta$, we have, by~\eqref{eq:f1Weierstrass} and~\eqref{eq:f2},
\begin{equation}
    \begin{aligned}
        \mathcal{U}(K) &=\varepsilon \exp\left\{ -\frac{\eta}{4K}\varepsilon - \frac{\zeta(K)}{K} \varepsilon^2  \right\} + O(\varepsilon^5)
        \\
        &= \varepsilon \left[ 1- \frac{\eta}{4K}\varepsilon + \left(\frac{\eta^2}{32 K^2}-\frac{\zeta(K)}{2K} \right) \varepsilon^2 + \left( \frac{\eta \zeta(K)}{8 K^2} - \frac{\eta^3}{384 K^3} \right)\varepsilon^3 \right] + O(\varepsilon^5)
    \end{aligned}
\end{equation}
and
\begin{equation}
    \mathcal{V}(K) = -\frac{\eta}{\varepsilon} + \frac{\eta^2}{4K} + \frac{\eta}{K} \zeta(K) \varepsilon + O(\varepsilon^2).
\end{equation}
Therefore, from \eqref{eq:x} we have
\begin{equation}\label{eq:expansion x epsilon}
    x(K) = 2- 2\varepsilon +\frac{8+3 \eta -24 \zeta(\frac{\eta}{2})}{4\eta}\varepsilon^2+O(\varepsilon^3).
\end{equation}
Combining the expansions of $\mathcal{U}(K)$ and $x(K)$ and expanding the $\wp$ function appearing in $\mathcal{F}$, as in formula ~\eqref{eq:explicitFWeierstrass}, we obtain, after some basic algebra
\begin{equation}
    \mathcal{F}(x) = \frac{2}{3}\varepsilon^3 + O(\varepsilon^4),
\end{equation}
which holds for $\varepsilon>0$.
Inverting the expansion \eqref{eq:expansion x epsilon}, we express $\varepsilon$ in terms of $x$ as
\begin{equation}
    \varepsilon = \frac{2-x}{2} + O(2-x)^2,
\end{equation}
and the proof is complete.
\end{proof}

\begin{proof}[Proof that $\mathcal{F}(x)$ is $C^2(\mathbb{R})$ and of \Cref{thm:phase transition}, (\ref{item:BOAC phase transition})]
    The function $\mathcal{F}(x)$ is smooth for $x\in\mathbb{R}\setminus\lbrace x_*,2\rbrace$, hence it is $C^2(\mathbb{R}\setminus\lbrace x_*\rbrace)$ (the previous point implies that $\mathcal{F}(x)$ is $C^2$ at $x=2$). Therefore, we only need to show that $\mathcal{F}(x)$ is $C^2$ at $x=x_*$ and that $\mathcal{F}''(x)$ is not H\"older continuous at $x_*$ for any H\"older exponent.
    Combining the homogeneity properties, see~\eqref{eq:homogeneity}, and modular symmetries of Weierstrass elliptic functions, see~\eqref{eq:modular} with $a_{11}=a_{22}=0$ and $a_{12}=-a_{21}=1$, we get
    \begin{equation}
    \label{eq:modular symmetries}
            \sigma(z|K,\i\pi) = \i \pi \sigma \biggl( \frac{z}{\i \pi} \bigg| 1,\i\frac{K}{\pi} \biggr),
            \quad
            \zeta(z|K,\i\pi) = \frac{1}{\i \pi} \zeta \biggl(\frac{z}{\i \pi} \bigg| 1,\i\frac{K}{\pi} \biggr),
            \quad
            \wp(z|K,\i\pi) = \frac{1}{\i \pi} \zeta \biggl(  \frac{z}{\i \pi} \bigg| 1,\i\frac{K}{\pi} \biggr).
    \end{equation}
    Applying \eqref{eq:sigma trig expansion}--\eqref{eq:wp trig expansion} to the right-hand sides of these equalities, noting that the (square of the) elliptic nome of the lattice $2 \mathbb{Z} + 2 \i \frac{K}{\pi}\mathbb{Z}$ is~$\sfq=\e^{-2K}$, see~\eqref{eq:nome}, we get
    \begin{align}
        &\begin{aligned}
            \sigma(z|K,\i\pi) &= 2 \e^{-\frac{z^2}{24}} \sinh\left( \frac{z}{2} \right) \left[ 1 - \e^{-2K} \left( 4 \sinh \left( \frac{z}{2} \right)^2 -z^2 \right) \right.
            \\
            & \qquad + \e^{-4K} \left.  \left( 6+5z^2 +\frac{z^4}{2} -(6+2z^2) \cosh(z) \right) \right] +O(\e^{-6K}),
        \end{aligned}
    \\
        &\begin{aligned}
        \zeta(z|K,\i\pi) &= -\frac{z}{12} + \frac{ 1 }{2} \coth \left(\frac{z}{2}\right) - 2 \e^{-2K} \left[ \sinh (z)  -z \right] 
            \\
            & \qquad - 2 \e^{-4 K} \left[ \sinh (z) + \sinh (2z)  - 3z \right] + O(\e^{-6K}),
        \end{aligned}
    \\
        &\begin{aligned}
            \wp(z|K,\i\pi)&= \frac{1}{12} + \frac{ 1 }{4} \frac{1}{\sinh \left(\frac{z}{2}\right)^2} + 2 \e^{-2K} \left[ \cosh (z)  -1 \right] 
            \\
            & \qquad + 2 \e^{-4K} \left[ \cosh (z) + 2\cosh (2z)  - 3 \right] + O(\e^{-6 K}).
        \end{aligned}
    \end{align}
    Plugging these estimates in~\eqref{eq:f1Weierstrass} and~\eqref{eq:f2} we obtain
    \begin{align}
            \mathcal{U}(K) &= (1-\e^{-\eta}) 
            - \frac{(1-\e^{-\eta})^3}{\e^{-\eta}}\e^{-2K} -3 \frac{(1-\e^{-\eta})^3}{\e^{-\eta}} \e^{-4K}+ O(\e^{-6K}),
    \\
            \mathcal{V}(K) &= 1-K \biggl(\frac{ 2 (1-\e^{-\eta})^2}{\e^{-\eta}} \e^{-2 K} + \frac{2 (1-\e^{-\eta})^2 \left(\e^{-2\eta}+4 \e^{-\eta}+1\right)}{\e^{-2\eta}} \e^{-4 K}  \biggr) +O(K \e^{-6K}),
    \end{align}
    and, as a result,
    \begin{equation}
        x(K) = x_* + (1+2K) \e^{-2K} \frac{2 (1-\e^{-\eta})^3}{\eta\e^{-\eta}} + (1+4 K) \e^{-4 K} \frac{6 (1-\e^{-\eta})^3}{\eta\e^{-\eta}} + O(K \e^{-6K}).
    \end{equation}
    Finally, we can expand the rate function $\mathcal{F}(x)$, as in ~\eqref{eq:explicitFWeierstrass}, around $x_*$, where we have
    \begin{equation}
        \begin{split}
            \mathcal{F}(x(K)) = \mathcal{F}(x_*) - 2(1-\e^{-\eta}) [x(K)-x_*] + \frac{\eta}{2} [x(K)-x_*]^2 + \mathsf R(K),
        \end{split}
    \end{equation}
    and the function $\mathsf R$ is
    \begin{equation}
       \mathsf R(K)=-(1+4K) \e^{-4K} \frac{(1-\e^{-\eta})^6}{\e^{-2\eta}} + O(K^2 \e^{-6K}).
    \end{equation}
    It is immediate to observe that $\mathsf R(K)$ satisfies
    \begin{equation}
         \lim_{K \to +\infty}\frac{\mathsf R(K)}{(x(K)-x_*)^{2}} = 0,
         \qquad
         \lim_{K \to +\infty}\frac{\mathsf R(K)}{(x(K)-x_*)^{2+\varepsilon}} = +\infty,
    \end{equation}
    for all $\varepsilon>0$, which completes the proof.
\end{proof}

\subsection{Large-$\eta$ limit} \label{subs: limit q0}
We start by computing the expansion of the relevant quantities in the limit $\eta\to+\infty$.
\begin{lemma}
As $\eta,K\to+\infty$ with $K>\eta/2$, we have
\begin{equation}
\label{eq:largeetaUV}
\mathcal{U}(K)\sim 1-\e^{\eta-2K},\qquad
\mathcal{V}(K)\sim -2K\e^{\eta-2K}.
\end{equation}
For all $x\in(0,2)$, as $\eta\to+\infty$ we have
\begin{align}
\mathcal{K}(x)&=\frac \eta 2 +\frac 12\log(2/x)+O(\eta^{-1}),\\
\label{eq:largeetaL}
\mathcal{L}(x)&=-x+\eta^{-1}\bigl(x\log(2/x)+x-2\bigr)+O(\eta^{-2}).
\end{align}
\end{lemma}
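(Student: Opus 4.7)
The plan is to derive a convergent infinite-product representation for $\mathcal{U}(K)$ whose asymptotic behavior as $\eta, K \to +\infty$ is manifest, together with the corresponding series representation for $\mathcal{V}(K)$. Starting from the expression \eqref{eq:f1}, I would apply the Jacobi imaginary transformation sending $\tau = \i\pi/K$ to $-1/\tau = \i K/\pi$ to both $\vartheta_{11}\bigl(\frac{\eta}{2K}\big|\frac{\i\pi}{K}\bigr)$ and $\vartheta_{11}'\bigl(0\big|\frac{\i\pi}{K}\bigr)$. The Gaussian factor $\e^{-\i\pi z^2/\tau}$ produced by the transformation at $z = \eta/(2K)$ combines with the prefactor $\e^{(\eta/2)(\eta/(2K)-1)}$ in \eqref{eq:f1} and with the overall factor $2K$ to yield $2\sinh(\eta/2)\e^{-\eta/2} = 1 - \e^{-\eta}$, while the Jacobi triple product for the transformed theta functions (which now have small nome $\e^{-K}$) produces the convergent representation
\[
\mathcal{U}(K) = (1-\e^{-\eta})\prod_{n\geq 1}\frac{(1-\e^{\eta-2nK})(1-\e^{-\eta-2nK})}{(1-\e^{-2nK})^2}.
\]
Logarithmic differentiation then gives
\[
\mathcal{V}(K) = 1 - K\sum_{n\geq 1}\biggl[\frac{2n\e^{\eta-2nK}}{1-\e^{\eta-2nK}} + \frac{2n\e^{-\eta-2nK}}{1-\e^{-\eta-2nK}} - \frac{4n\e^{-2nK}}{1-\e^{-2nK}}\biggr].
\]
Equivalently, the same product formula may be obtained by substituting the all-orders trigonometric expansion of $\sigma(z|K,\i\pi)$ in powers of $\e^{-2K}$ (the all-orders extension of the two-term expansion appearing in the proof of \Cref{thm:phase transition}\ref{item:BOAC phase transition}) into the Weierstrass representation \eqref{eq:f1Weierstrass}.

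In the regime $\eta, K \to +\infty$ with $K > \eta/2$, the only factor in the product above not equal to $1+o(1)$ is the $n=1$ factor $(1-\e^{\eta-2K})$; all others are $1 + O(\e^{-2K} + \e^{\eta-4K} + \e^{-\eta-2K})$, and the prefactor $(1-\e^{-\eta})$ tends to $1$ exponentially. The same argument applied to the series for $\mathcal{V}(K)$ shows that the $n=1$ term $2K\e^{\eta-2K}/(1-\e^{\eta-2K})$ exponentially dominates the rest. This yields $\mathcal{U}(K) = (1-\e^{\eta-2K})(1 + o(1))$ and $\mathcal{V}(K) = 1 - 2K\e^{\eta-2K}/(1-\e^{\eta-2K}) + o(K\e^{\eta-2K})$, from which the stated leading-order asymptotics $\mathcal{U}(K) \sim 1-\e^{\eta-2K}$ and $\mathcal{V}(K) \sim -2K\e^{\eta-2K}$ follow (the latter in the subregime where $K\e^{\eta-2K}$ diverges, which includes the case relevant to the second part of the lemma).

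For the second part, combining the $n=1$ truncations yields the exact identity $\mathcal{U}(K)\mathcal{V}(K) = 1 - (1+2K)\e^{\eta-2K}$ modulo exponentially smaller corrections. Substituting into the defining relation $\mathcal{U}(\mathcal{K}(x))\mathcal{V}(\mathcal{K}(x)) = -\eta x/2$ (see \Cref{def:K(x)} together with \Cref{prop:relf1f2}) produces $(1+2\mathcal{K}(x))\e^{\eta-2\mathcal{K}(x)} = 1 + \eta x/2 + o(1)$, whose implicit-function-theorem inversion, using the monotonicity of $K \mapsto \mathcal{U}(K)\mathcal{V}(K)$ from \Cref{prop:f1f2body}, yields $\mathcal{K}(x) = \eta/2 + \tfrac{1}{2}\log(2/x) + O(\eta^{-1})$. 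Finally, inserting this expansion together with $\mathcal{U}(\mathcal{K}(x)) = (2-x)/2 + O(\eta^{-1})$ into the exact identity $\mathcal{L}(x)\mathcal{K}(x) = -\eta x/2 - \mathcal{U}(\mathcal{K}(x))$ from \eqref{eq:defL} and performing the power-series division of the two resulting expansions gives $\mathcal{L}(x) = -x + \eta^{-1}\bigl(x\log(2/x) + x - 2\bigr) + O(\eta^{-2})$.

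The main technical point is uniform control of the error terms for $x$ in compact subsets of $(0,2)$ while $\eta \to +\infty$, i.e., in the regime $K - \eta/2 = O(1)$, where $\e^{-2K}$ and $\e^{\eta-4K}$ are both of order $\e^{-\eta}$ and the $n \geq 2$ contributions are therefore negligible; this makes all of the expansion and error estimates routine.
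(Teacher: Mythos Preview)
Your proposal is correct and follows essentially the same route as the paper, which applies the modular symmetries~\eqref{eq:modular symmetries} and then the trigonometric expansions~\eqref{eq:sigma trig expansion}--\eqref{eq:wp trig expansion} to obtain $\sigma(\eta)\sim \e^{-\eta^2/24}(1-\e^{\eta-2K})$ and $\frac{\zeta(\i\pi)}{\i\pi}\sim -\tfrac{1}{12}$, plugs into~\eqref{eq:f1Weierstrass} and~\eqref{eq:f2}, and then solves the defining relation for~$\mathcal{K}(x)$ and substitutes into~\eqref{eq:defL} for~$\mathcal{L}(x)$. Your Jacobi imaginary transformation of~\eqref{eq:f1} together with the triple product is exactly the theta-function counterpart of the paper's $\sigma$-expansion (the two are linked by~\eqref{eq:relsigmatheta}); the cancellation of the Gaussian prefactor you describe is the same cancellation of the $\e^{\pm\eta^2/24}$ factors that the paper uses implicitly. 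The difference is only that you record the \emph{exact} identity $\mathcal{U}(K)=(1-\e^{-\eta})\prod_{n\ge1}\frac{(1-\e^{\eta-2nK})(1-\e^{-\eta-2nK})}{(1-\e^{-2nK})^2}$ and then truncate, while the paper writes the leading-order asymptotics directly; your formulation makes the uniform error control in the regime $K-\eta/2=O(1)$ slightly more transparent and correctly flags that the relation $\mathcal{V}(K)\sim -2K\e^{\eta-2K}$ requires $K\e^{\eta-2K}\to\infty$.
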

\begin{proof}
From the expansions \eqref{eq:sigma trig expansion}--\eqref{eq:wp trig expansion}, after using the modular symmetries \eqref{eq:modular symmetries}, we get
\begin{equation}
\frac{\zeta(\i \pi)}{\i \pi} \sim -\frac{1}{12},
\qquad
\sigma(\eta) \sim \e^{-\eta^2/24} \left( 1 - \e^{\eta-2K} \right).
\end{equation}
In view of~\eqref{eq:f1Weierstrass} and~\eqref{eq:f2}, \eqref{eq:largeetaUV} follow.
The expansion of~$\mathcal K(x)$ follows from the defining relation $-\frac{\eta}2x=\mathcal{U}\bigl(\mathcal{K}(x)\bigr)\mathcal{V}\bigl(\mathcal{K}(x)\bigr)$ (see Definition~\ref{def:Fq}) and~\eqref{eq:largeetaUV}. By~\eqref{eq:L}, the expansion of~$\mathcal{L}(x)=-\bigl(\mathcal{U}(\mathcal{K}(x))+\frac{\eta x}{2}\bigr)/\mathcal{K}(x)$ follows.
\end{proof}
By~\eqref{eq:Fq} and~\eqref{eq:largeetaL}, we obtain
\begin{equation}
    \lim_{\eta\to+\infty}\mathcal{F} (x) = 
    \begin{cases}
        + \infty \qquad & \text{if } x\leq 0,
        \\
        \frac{1}{2}x^2\log \frac{2}{x} + \frac{3}{4}x^2 - 2 x + 1 & \text{if } 0<x<2,
        \\
        0 \qquad & \text{if } x\geq 2.
    \end{cases}
\end{equation}
In the right-hand side we recognize the lower-tail large deviation rate function of the length of the longest increasing subsequence in a Poisson random environment computed by Sepp{\"a}l{\"a}inen in~\cite{seppalainen_98_increasing}.

\begin{remark}
\label{rem:A}
     Although the rate function $\mathcal{F}$ converges, in the large-$\eta$ limit, to the appropriate rate function of the $\eta = \infty$ case, as shown above, we do not expect the same to be true for the subleading terms in the large-$t$ expansion of~$\log Q(t,xt)$. Indeed, \Cref{fig:A} suggests that the coefficient $\mathcal{A}$ diverges in the limit $\eta \to + \infty$. It is known that, in the continuous setting, the coefficient of the logarithmic term captures topological properties of the equilibrium measure; see \cite{charlier2025asymptotics} or \cite{byun2025free} for the analysis of a 2D log-gas.
    
    A situation similar to ours occurs in the large-$N$ expansion of the partition function $Z_N(\Delta)$ of the six-vertex model with domain-wall boundary conditions in a $N \times N$ square lattice. In the disordered phase ($|\Delta|<1$), P.~Bleher and V.~Fokin \cite{bleher2006exact} proved that $Z_N(\Delta) \propto N^\kappa e^{N^2 f}$, whereas at the disordered-ferroelectric border line ($|\Delta|=1$), P.~Bleher and K.~Liechty \cite{bleher2009exact_critical_line} proved that $Z_N(\Delta=1) \propto N^{\tilde{\kappa}} e^{\sqrt{N}g + N^2 f}$. While the free energy~$f$ is continuous at the transition point $\Delta = 1$, the same is (obviously) not true for the rest of the large-$N$ expansion of $\log Z(\Delta)$.
    
    This point motivates a separate Riemann--Hilbert analysis in the $\eta = \infty$ case, which we hope to address in a future work.
\end{remark}

\appendix

\section{Elliptic functions}\label{app:elliptic}

In this appendix, we recall the definitions of the Weierstrass $\wp$, $\zeta$, and $\sigma$ functions --- collectively (and with a slight abuse of terminology) referred to as \emph{Weierstrass elliptic functions} --- as well as present some of their well-known properties which are needed in this work.
We also recall the main properties of the elliptic theta functions defined in~\eqref{eq:theta} and their relation to Weierstrass elliptic functions.

\subsection{Weierstrass elliptic functions: general period lattice}
Given complex numbers $\omega_1,\omega_2$ such that $\Im(\omega_2/\omega_1)>0$, the \emph{Weierstrass elliptic functions with half-periods} $\omega_1,\omega_2$ are defined by
\begin{equation}
\label{eq:defWeierstrassgeneral}
\begin{aligned}
\sigma(z|\omega_1,\omega_2)&=z\prod_{\substack{l\in 2\omega_1\mathbb{Z}+2\omega_2\mathbb{Z} \\ l\not=0}}\biggl(\biggl(1-\frac zl\biggr)\exp\biggl(\frac zl+\frac{z^2}{2l^2}\biggr)\biggr),\\
\zeta(z|\omega_1,\omega_2)&=\frac{\sigma'(z|\omega_1,\omega_2)}{\sigma(z|\omega_1,\omega_2)},\quad
\wp(z|\omega_1,\omega_2)=-\zeta'(z|\omega_1,\omega_2),
\end{aligned}
\end{equation}
If the half-periods are fixed or clear from the context, we shall omit them from the notation and simply write $\sigma(z)=\sigma(z|\omega_1,\omega_2)$, $\zeta(z)=\zeta(z|\omega_1,\omega_2)$, and $\wp(z)=\wp(z|\omega_1,\omega_2)$.

For all $\kappa\in\mathbb{C}\setminus\lbrace 0\rbrace$ we have the following homogeneity properties
\begin{equation}
\label{eq:homogeneity}
\begin{split}
    \sigma(z|\omega_1,\omega_2) &= \frac 1\kappa\,\sigma(\kappa z|\kappa \omega_1,\kappa \omega_2),
    \\ 
    \zeta(z|\omega_1,\omega_2) &= \kappa\,  \zeta(\kappa z|\kappa \omega_1,\kappa \omega_2),
    \\ 
    \wp(z|\omega_1,\omega_2) &= \kappa^{2}\,\wp(\kappa z|\kappa \omega_1,\kappa \omega_2).
\end{split}
\end{equation}

If $\wt\omega_1=a_{11}\omega_1+a_{12}\omega_2$ and $\wt\omega_2=a_{21}\omega_1+a_{22}\omega_2$ for some $a_{11},a_{12},a_{21},a_{22}\in\mathbb{Z}$ such that $a_{11}a_{22}-a_{12}a_{21}=1$, then we have the following \emph{modular symmetries}
\begin{equation}
\label{eq:modular}
\begin{split}
    \sigma(z |\omega_1,\omega_2) &= \wt\omega_2 \,\sigma\Big(\frac{z}{\wt\omega_2} \Big|\wt\omega_1,\wt\omega_2\Big), 
    \\
    \zeta(z |\omega_1,\omega_2) &= \wt\omega_2^{-1} \, \zeta\Big(\frac{z}{\wt\omega_2}  \Big|\wt\omega_1,\wt\omega_2\Big), 
    \\
    \wp(z|\omega_1,\omega_2) &= \,\wt\omega_2^{-2}  \wp\Big(\frac{z}{\wt\omega_2}  \Big| \wt\omega_1,\wt\omega_2\Big).
\end{split}
\end{equation}

Next, we assume that $\omega_1$ and $\omega_2$ are fixed.
The Weierstrass $\sigma$ function is entire, odd, and satisfies
\begin{equation}
\label{eq:periodicsigma}
\sigma(z+\omega_j)=-\e^{2\zeta(\omega_j)z}\sigma(z-\omega_j),\quad j=1,2.
\end{equation}
The set of zeros of $\sigma$ coincides with $2\omega_1\mathbb{Z}+2\omega_2\mathbb{Z}$ and 
\begin{equation}
\label{eq:Taylorsigma}
\sigma(z)=z+O(z^ 5),\quad z\to 0.
\end{equation}
The Weierstrass $\zeta$ function is meromorphic, odd, and satisfies
\begin{equation}
\label{eq:periodiczeta}
\zeta(z+2\omega_j)=\zeta(z)+2\zeta(\omega_j),\quad j=1,2.
\end{equation}
The set of poles of $\zeta$ coincides with $2\omega_1\mathbb{Z}+2\omega_2\mathbb{Z}$ and
\begin{equation}
\label{eq:Taylorzeta}
\zeta(z)=z^{-1}+O(z^3),\quad z\to 0.
\end{equation}
Finally, the Weierstrass $\wp$ function is meromorphic, even, and doubly-periodic, namely
\begin{equation}
\label{eq:periodicwp}
\wp(z+2\omega_j)=\wp(z),\quad j=1,2.
\end{equation}
The set of poles of $\wp$ coincides with $2\omega_1\mathbb{Z}+2\omega_2\mathbb{Z}$ and
\begin{equation}
\label{eq:Taylorwp}
\wp(z)=z^{-2}+O(z^2),\quad w\to 0.
\end{equation}
Moreover,
\begin{equation}
\label{zeroswpprimegeneral}
\wp'(z) = 0 \iff z\in\lbrace \omega_1,\omega_2,\omega_1+\omega_2\rbrace+ 2\omega_1\mathbb{Z}+2\omega_2\mathbb{Z}
\end{equation}
and the zeros of $\wp'$ are simple.

The Weierstrass $\wp$ function satisfies the ordinary differential equation
\begin{equation}
\label{eq:ODEwp1storder}
\wp'(z)^2=4\wp(z)^3-g_2\wp (z)-g_3 = (\wp (z)-e_1)(\wp (z)-e_2)(\wp (z)-e_3)
\end{equation}
where
\begin{equation}
\label{eq:g2g3Weierstrass}
g_2 = 60\sum_{\substack{l\in 2\omega_1\mathbb{Z}+2\omega_2\mathbb{Z} \\ l\not=0}}l^{-4},\qquad
g_3 = 140\sum_{\substack{l\in 2\omega_1\mathbb{Z}+2\omega_2\mathbb{Z} \\ l\not=0}}l^{-6},
\end{equation}
and
\begin{equation}
\label{eq:e1e2e3}
e_1=\wp(\omega_1),\quad
e_2=\wp(\omega_1+\omega_2),\quad
e_3=\wp(\omega_2).
\end{equation}
It follows from~\eqref{eq:ODEwp1storder} that
\begin{equation}
\label{eq:ODEwp2ndorder}
\wp''(z)=6\wp(z)^2-\frac 12 g_2.
\end{equation}

The following addition formulas hold true:
\begin{align}
\label{eq:additionzeta}
\zeta(z_1+z_2) &= \zeta(z_1)+\zeta(z_2)+\frac{1}{2}\frac{\wp'(z_1)-\wp'(z_2)}{\wp(z_1)-\wp(z_2)}\,,
\\
\label{eq:additionwp}
\wp(z_1+z_2) &= - \wp(z_1) - \wp(z_2) + \frac{1}{4} \left( \frac{\wp'(z_1) - \wp'(z_2)}{\wp(z_1) - \wp(z_2)} \right)^2\,.
\end{align}
Since $\zeta$ and $\wp'$ are odd and $\wp$ is even, we obtain
\begin{align}
    \label{eq:zetaprostaphaeresis}
    \zeta(z_1+z_2) + \zeta(z_1-z_2) - 2\zeta(z_1) &= \frac{\wp'(z_1)}{\wp(z_1) - \wp(z_2)}\,, 
    \\
    \label{eq:ellipticprostaphaeresis}
    \wp(z_1+z_2) -\wp(z_1-z_2) &= \frac{\wp'(z_1) \wp'(z_2)}{\bigl(\wp(z_1) - \wp(z_2)\bigr)^2}.
\end{align}

We report the following trigonometric expansions, see~\cite[Eq.s 23.8.1, 23.8.2, 23.8.6]{DLMF}:
\begin{align} 
    \label{eq:sigma trig expansion}
 \sigma(z) &= \frac{2 \omega_1}{\pi} \exp \left( \frac{\zeta(\omega_1) z^2}{2\omega_1} \right) \sin \left( \frac{\pi z}{2 \omega_1} \right) \prod_{n \ge 1} \frac{1 - 2 \sfq^{n} \cos (\pi z/\omega_1) + \sfq^{2n}}{(1-\sfq^{n})^2},
\\ \label{eq:zeta trig expansion}
	 \zeta(z) &= z \frac{\zeta(\omega_1)}{\omega_1}  + \frac{\pi}{2 \omega_1} \cot\left(  \frac{\pi z}{2 \omega_1} \right) + \frac{2 \pi }{\omega_1} \sum_{n \ge 1} \frac{\sfq^n }{1-\sfq^n} \sin \left( \frac{n \pi z}{\omega_1} \right),
\\ \label{eq:wp trig expansion}
	 \wp(z) &= -\frac{\zeta(\omega_1)}{\omega_1}  + \frac{\pi^2}{4 \omega_1^2} \sin \left(  \frac{\pi z}{2 \omega_1} \right)^{-2} - \frac{2 \pi^2 }{\omega_1^2} \sum_{n \ge 1} n \frac{\sfq^n }{1-\sfq^n} \cos \left( \frac{n \pi z}{\omega_1} \right),
\end{align}
where $\sfq$ is the (square of the) \emph{elliptic nome} of the lattice $2\omega_1\mathbb{Z}+2\omega_2\mathbb{Z}$, namely,
\begin{equation}
\label{eq:nome}
\sfq=\exp\bigl( 2 \pi \i \omega_2/\omega_1\bigr).
\end{equation}
Moreover, we have the expansion
\begin{equation}
	\zeta(\omega_1) = \frac{\pi^2}{12 \omega_1} - \frac{\pi^2}{2 \omega_1} \sum_{n \ge 1} \sinh \left( \frac{n \pi^2}{\omega_1} \right)^{-2}.
\end{equation}

\subsection{Weierstrass elliptic functions: rectangular period lattice}

We now restrict attention to the case $\omega_{1} = K > 0$ and $\omega_{2} = \i\pi$ needed in this paper.
First, we note the \emph{Legendre identity} (which is valid for general half-periods):
\begin{equation}
\label{eq:LegendreIdentity}
\omega_2\zeta(\omega_1)-\omega_1\zeta(\omega_2)=\frac{\i\pi}2
\ \Rightarrow\
\i\pi\zeta(K)-K\zeta(\i\pi)=\frac{\i\pi}2.
\end{equation}
Next, specializing~\eqref{eq:defWeierstrassgeneral}, we have
\begin{equation}
\label{eq:defWeierstrassfunctionsmn}
\begin{aligned}
\sigma(z)&=
z\prod_{(m,n)\in\mathbb{Z}^2\setminus\lbrace(0,0)\rbrace}\biggl[\biggl(1-\frac z{2(nK+m\i\pi)}\biggr)\exp\biggl(\frac z{2(nK+m\i\pi)}+\frac{z^2}{8(nK+m\i\pi)^2}\biggr)\biggr],
\\
\zeta(z) &= \frac 1{z}+\sum_{(m,n)\in\mathbb{Z}^2\setminus\lbrace(0,0)\rbrace}\biggl(\frac{1}{z-2(nK+m\i\pi)}+\frac{1}{2(nK+m\i\pi)}+\frac{z}{4(nK+m\i\pi)^2}\biggr),
\\
\wp(z) &= \frac 1{z^2}+\sum_{(m,n)\in\mathbb{Z}^2\setminus\lbrace(0,0)\rbrace }\biggl(\frac{1}{\bigl(z-2(nK+m\i\pi)\bigr)^2}-\frac{1}{4(nK+m\i\pi)^2}\biggr).
\end{aligned}
\end{equation}
Summing over $m$ and performing elementary algebraic manipulations we obtain:
\begin{align}
    &\begin{aligned} \label{eq:wZ}
        \zeta(z) = &\frac{ -z + 6 \coth(\frac{z}{2}) }{12} 
        \\
        &+ \frac{1}{2} \sum_{n=1}^{\infty} \left\{ \frac{z}{\sinh(K n)^2} - \coth\left(K n- \frac{z}{2} \right) + \coth\left(K n + \frac{z}{2} \right) \right\},
    \end{aligned}
    \\
    &\begin{aligned}
        \label{eq:wpK}
	\wp(z) = \frac{5+ \cosh(z)}{24 \sinh( \frac{z}{2} )^2} - \frac{1}{4} \sum_{n=1}^{\infty} \left\{ \frac{2}{\sinh(K n)^2} - \frac{1}{\sinh(K n-\frac{z}{2})^2} - \frac{1}{\sinh(K n+\frac{z}{2})^2} \right\},
    \end{aligned}
    \\
    & \label{eq:zetaipi}
    \frac{\zeta(\i\pi)}{\i\pi}=- \frac{1}{12} + \frac{1}{2} \sum_{n \ge 1}  \sinh(n K)^{-2},
    \\
    &  
    \begin{aligned} \label{eq:wpKipi}
        \wp(K+\i\pi)&= \frac{\cosh(K)-5}{12\bigl(\cosh(K)+1\bigr)}\\
        &\quad-\frac 12\sum_{n\geq 1}\biggl[\frac{1}{\cosh \bigl( K(2n+1)\bigr)+1}+\frac{1}{\cosh \bigl( K(-2n+1)\bigr)+1}+\frac{1}{\sinh(K n)^2} \biggr] 
    \end{aligned}
\end{align}

We report some properties of $\wp$ and $\wp'$ in the following lemma.

\begin{lemma}\label{lemma:ineqwprectlattice}
Let $\wp(z)=\wp(z|K,\i\pi)$.
\begin{enumerate}[leftmargin=*]
\item The functions $\wp(\i u),\wp(u+\i\pi), \wp(K+ \i u)$ and $\wp'(u +\i\pi)$ are real-valued for $u\in\mathbb{R}$, while $\wp'(\i u),\wp'(K+ \i u)$ are purely imaginary for $u\in\mathbb{R}$.
\item The zeros of $\wp'$ are simple and are located exactly at $\lbrace K,\i\pi,K+\i\pi\rbrace+2K\mathbb{Z}+2\pi\i\mathbb{Z}$.
\item We have $\wp'(u)<0$ and $\wp'(u+\i\pi)>0$ for $u\in (0,K)$, while $\frac{\wp'(K+\i u)}{\i}>0$ and $\frac{\wp'(\i u)}{\i}<0$ for $u\in (0,\pi)$.
\end{enumerate}
\end{lemma}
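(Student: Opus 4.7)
The plan is to prove each of the three parts in turn, relying only on the basic symmetries of $\wp$ and on the classical count of zeros of $\wp'$ recorded in~\eqref{zeroswpprimegeneral}.

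\textbf{Part (1)} follows from a conjugation argument. Since the period lattice $2K\mathbb{Z}+2\pi\i\mathbb{Z}$ is invariant under complex conjugation, the defining series~\eqref{eq:defWeierstrassfunctionsmn} yields $\overline{\wp(z)}=\wp(\bar z)$ and $\overline{\wp'(z)}=\wp'(\bar z)$ for all $z\notin 2K\mathbb{Z}+2\pi\i\mathbb{Z}$. I would combine this with evenness of $\wp$, oddness of $\wp'$, and the periodicity relations to check each assertion in turn: for instance, for $z=u+\i\pi$ one has $\bar z=z-2\i\pi$, so by periodicity $\wp(\bar z)=\wp(z)$ and $\wp'(\bar z)=\wp'(z)$, giving the claim; for $z=\i u$ the identity $\bar z=-z$ together with parity of $\wp$ and $\wp'$ yields the desired real/imaginary values; the case $z=K+\i u$ is analogous, using $\bar z=K-\i u=-z+2K$ and periodicity.

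\textbf{Part (2)} is immediate from~\eqref{zeroswpprimegeneral}, which is recorded in the appendix for general half-periods. For simplicity I would also remark that since $\wp'$ has a single pole of order $3$ in a fundamental parallelogram, it has $3$ zeros counted with multiplicity; the three distinct half-period points $K,\i\pi,K+\i\pi$ are thus all simple zeros.

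\textbf{Part (3)} combines (1) and (2) with a sign check at one point of each segment. On each of the open intervals $(0,K)$, $(0,K)+\i\pi$, $K+\i(0,\pi)$, $\i(0,\pi)$, the function $\wp'$ is real-valued (respectively, purely imaginary) by Part~(1) and does not vanish by Part~(2); hence it keeps constant sign (respectively, constant sign of the imaginary part). To pin down the sign I would use the Laurent expansion $\wp'(z)=-2z^{-3}+O(z)$ near the pole at the origin: for small $u>0$ this gives $\wp'(u)<0$ and $\wp'(\i u)/\i=-2/u^3+O(u)<0$, settling the two segments emanating from $0$. For the segments emanating from $K$ and $\i\pi$ I would instead expand around the appropriate half-period: writing $\wp'(w_0+\epsilon)=\wp''(w_0)\epsilon+O(\epsilon^2)$ for $w_0\in\{K,\i\pi\}$, and using the relation $\wp''(z)=6\wp(z)^2-\tfrac12 g_2=2(\wp(z)-e_1)(\wp(z)-e_2)+2(\wp(z)-e_1)(\wp(z)-e_3)+2(\wp(z)-e_2)(\wp(z)-e_3)$ obtained from~\eqref{eq:ODEwp2ndorder} and~\eqref{eq:ODEwp1storder}, one finds $\wp''(K)=2(e_1-e_2)(e_1-e_3)$ and $\wp''(\i\pi)=2(e_3-e_1)(e_3-e_2)$. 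To conclude I would verify the ordering $e_3<e_2<e_1$ in the present rectangular case; this can be read off from the series~\eqref{eq:wpK} at $z=K$ and $z=\i\pi$ (which yield $e_1,e_3\in\mathbb{R}$) together with $e_2\in\mathbb{R}$ from Part~(1), by checking that $e_1-e_3>0$ and $e_1-e_2, e_2-e_3>0$ term by term. Both $\wp''(K)$ and $\wp''(\i\pi)$ are then positive, and expanding $\wp'(K+\i u)\approx\wp''(K)\,\i u$ and $\wp'(u+\i\pi)\approx\wp''(\i\pi)\,u$ for small $u>0$ yields $\wp'(K+\i u)/\i>0$ and $\wp'(u+\i\pi)>0$ near the relevant endpoints, hence throughout the open segments.

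The only step that requires any care is the verification of the ordering $e_3<e_2<e_1$; this is the one place where a small extra computation, rather than pure symmetry, is needed. An alternative, perhaps slicker, route would be to argue monotonicity of $\wp$ along each of the two real segments $(0,K)$ and $(\i\pi,K+\i\pi)$ (using Part (1) to get real-valuedness and the non-vanishing from Part (2) for strict monotonicity), and then to use the values at the endpoints together with $\wp(0^+)=+\infty$ to position $e_1,e_2,e_3$ relative to one another.
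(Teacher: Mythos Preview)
Your proposal is correct and follows essentially the same skeleton as the paper's proof: Part~(1) by symmetry, Part~(2) directly from~\eqref{zeroswpprimegeneral}, and Part~(3) by combining non-vanishing on the open segments with a sign determination at one point, the latter ultimately resting on the ordering $e_1>e_2>e_3$ for the rectangular lattice.

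The differences are stylistic. For Part~(1) you use the conjugation symmetry $\overline{\wp(z)}=\wp(\bar z)$, whereas the paper invokes the explicit hyperbolic series~\eqref{eq:wpK}; your route is cleaner and avoids series manipulations. For Part~(3), you fix the signs on the two segments not touching the origin by expanding $\wp'$ to first order at the half-periods and computing $\wp''(K),\wp''(\i\pi)$ via~\eqref{eq:ODEwp2ndorder}; the paper instead argues monotonicity of $\wp$ along each boundary segment and reads off the direction from the values $\wp(K)>\wp(K+\i\pi)>\wp(\i\pi)$, which it simply cites from the DLMF. You in fact describe this alternative at the end of your proof. Both reductions hinge on the same inequality $e_1>e_2>e_3$; the paper outsources it to a reference, while you propose checking it from the series, which is a little more work but self-contained.
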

\begin{proof}
The first claim follows, for example, from~\eqref{eq:wpK}.
The second one follows from~\eqref{zeroswpprimegeneral}.
Finally, by~\eqref{eq:Taylorwp} we have $\wp(u)= u^{-2}+O(1) $ as $u \to 0$, and so $\wp'(u)<0$ for $u\in(0,K)$; on the other hand, it is known that $\wp (K)>\wp (K+\i\pi)>\wp (\i\pi)$ (see~\cite[Eq.~23.5.1]{DLMF}) and so $\wp'(u+\i\pi)>0$ for $u\in (0,K)$ and $\frac{\wp'(K+\i u)}{\i}>0$ for $u\in (0,\pi)$, and the third claim is also proved.
\end{proof}

Finally, one can express derivatives in $K$ of Weierstrass elliptic functions as follows.

\begin{lemma}
\label{lemma:WeierstrassDerivativesinK}
    Let $\sigma(z)=\sigma(z|K,\i\pi)$, $\sigma(z)=\zeta(z|K,\i\pi)$, $\wp(z)=\wp(z|K,\i\pi)$.
    We have
    \begin{equation}
    \label{eq:partialKsigmazetawp}
    \begin{aligned}
        \partial_K\log\sigma(z) &= \wp(z)-\zeta(z)^2-\frac 1{12}g_2z^2+2\frac{\zeta(\i\pi)}{\i\pi}\left(z\zeta(z)-1\right),
        \\
        \partial_K\zeta(z) &= \wp'(z)+2\zeta(z)\wp(z)-\frac 1{6}g_2z+2\frac{\zeta(\i\pi)}{\i\pi}\left(\zeta(z)-z\wp(z)\right),
        \\
        \partial_K\wp(z) &= -\wp''(z)+2\wp(z)^2-2\zeta(z)\wp'(z)+\frac 1{6}g_2+2\frac{\zeta(\i\pi)}{\i\pi}\left(2\wp(z)+z\wp'(z)\right)),
    \end{aligned}
    \end{equation}
    where $g_2$ is defined in~\eqref{eq:g2g3Weierstrass}.
\end{lemma}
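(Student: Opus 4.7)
My plan is to establish the three identities in sequence. I will first prove the formula for $\partial_K\wp(z)$ directly, by a Liouville-type argument showing that the difference between its two sides is a doubly-periodic entire function, hence constant, and that the constant is $0$. The formulas for $\partial_K\zeta(z)$ and $\partial_K\log\sigma(z)$ will then follow by successive integration in $z$, with integration constants fixed by matching expansions near $z=0$.

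For the Liouville argument, let $F(z):=\partial_K\wp(z)-R_{\wp}(z)$, where $R_{\wp}$ denotes the right-hand side of the third identity. The first point is that $F$ is doubly periodic with respect to the lattice $2K\mathbb{Z}+2\i\pi\mathbb{Z}$, which is not automatic since the period lattice itself depends on $K$. Differentiating $\wp(z+2K|K,\i\pi)=\wp(z|K,\i\pi)$ in $K$ at fixed $z$ via the chain rule (with $\partial_K(2K)=2$) produces the defect $(\partial_K\wp)(z+2K)-(\partial_K\wp)(z)=-2\wp'(z)$, and analogously $(\partial_K\wp)(z+2\i\pi)=(\partial_K\wp)(z)$. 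Computing the corresponding defects for $R_{\wp}$ using \eqref{eq:periodicwp} and \eqref{eq:periodiczeta}, the only contribution comes from the shifts of $-2\zeta(z)\wp'(z)$ and of $2(\zeta(\i\pi)/\i\pi)z\wp'(z)$, and they combine into $-2\wp'(z)$ thanks to the Legendre identity \eqref{eq:LegendreIdentity} in the form $\zeta(K)-K\zeta(\i\pi)/\i\pi=1/2$. The second point is the regularity of $F$ at $z=0$: the $z^{-4}$ contributions from $-\wp''$, $2\wp^2$, $-2\zeta\wp'$ sum to $(-6+2+4)/z^4=0$; the $z^0$ contributions of these three terms sum to $-g_2/6$ (using $\wp''=6\wp^2-g_2/2$ together with the Laurent series of $\wp,\zeta$), which cancels the explicit $+g_2/6$; and the final term $2(\zeta(\i\pi)/\i\pi)(2\wp+z\wp')$ starts at $O(z^2)$ since $2\wp(z)+z\wp'(z)=g_2z^2/5+O(z^4)$. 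Consequently $R_{\wp}(0)=0$, and since $\partial_K\wp(z)=(\partial_K g_2)z^2/20+O(z^4)$ is also regular and vanishes at $0$, Liouville's theorem gives $F\equiv 0$. As a byproduct, matching the $z^2$-coefficients yields the classical Ramanujan-type identity $\partial_K g_2=-12g_3+8g_2\zeta(\i\pi)/\i\pi$.

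For the remaining two identities, I will integrate in $z$, using $\partial_z\zeta=-\wp$ and $\partial_z\log\sigma=\zeta$. The antiderivatives needed are all elementary: $\int 2\wp^2\,\d z=\wp'/3+g_2z/6$ (from the ODE $\wp''=6\wp^2-g_2/2$), $\int 2\zeta\wp'\,\d z=2\zeta\wp+\wp'/3+g_2z/6$ (integration by parts combined with the previous), $\int(2\wp+z\wp')\,\d z=-\zeta+z\wp$, $\int 2\zeta\wp\,\d z=-\zeta^2$, and $\int(\zeta-z\wp)\,\d z=z\zeta$ (after the $\log\sigma$-contributions cancel). The first integration yields the second identity up to a $K$-only constant that vanishes because both sides vanish at $z=0$; the second integration yields the first identity up to a $K$-only constant that is fixed to $-2\zeta(\i\pi)/\i\pi$ by the requirement $\partial_K\log\sigma(z)\to 0$ as $z\to 0$, using $\sigma(z)\sim z$ together with the expansions $\wp(z)-\zeta(z)^2=g_2z^2/12+O(z^4)$ and $z\zeta(z)=1+O(z^4)$.

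The main obstacle is the careful execution of the regularity check at $z=0$ in the Liouville argument: one must compute Laurent expansions of $\wp,\zeta,\wp',\wp''$ to sufficient order and verify the twofold cancellation at $z^{-4}$ and $z^0$. The ODE $\wp''=6\wp^2-g_2/2$ is the key algebraic input making this cancellation clean, while the Legendre identity is the essential ingredient in the periodicity check and, implicitly, in the final constant-matching. Beyond these two classical identities, the proof is straightforward bookkeeping.
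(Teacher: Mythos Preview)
Your proof is correct and complete. The paper states this lemma without proof, so there is no approach to compare against; your Liouville argument for the $\wp$-identity followed by two successive integrations (with constants fixed by the Laurent expansions at $z=0$) is a natural and efficient route. The two key algebraic ingredients you identified---the Legendre identity \eqref{eq:LegendreIdentity} for the periodicity defect and the ODE \eqref{eq:ODEwp2ndorder} for the cancellation at $z=0$---are exactly what is needed, and all the antiderivative formulas and Laurent-coefficient checks go through as you describe.
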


In particular, since $\wp'(\i\pi)=0$ (see Lemma~\ref{lemma:ineqwprectlattice}),
\begin{equation}\label{eq: partial K zeta i pi}
\partial_K\zeta(\i\pi) = 2\frac{\zeta(\i\pi)^2}{\i\pi}-\frac{\i\pi}{6}g_2.
\end{equation}

\subsection{Elliptic theta functions}\label{sec:apptheta}

Recall the theta functions defined in~\eqref{eq:theta}, which are all entire functions of~$z$.
We have the following quasi-periodicity relations, for $a,b\in\lbrace 0,1\rbrace$:
\begin{equation}
\label{eq:periodictheta}
\vartheta_{ab}(z+1|\tau) = (-1)^b\vartheta_{ab}(z|\tau), \quad
\vartheta_{ab}(z+\tau|\tau) = (-1)^a\e^{-\pi \i \tau - 2\pi \i z}\vartheta_{ab}(z|\tau),
\end{equation}
where $\vartheta_{00}=\vartheta$.
The functions $\vartheta$, $\vartheta_{10}$, and $\vartheta_{01}$ are even, while $\vartheta_{11}$ is odd.
In particular,
\begin{equation}
\label{eq:zerothetaprime}
\vartheta'(0|\tau) = 0. 
\end{equation}
The set of zeros (in $z$) of $\vartheta_{11}(z|\tau)$ coincides with $\mathbb{Z}+\tau\mathbb{Z}$.

We note the relation to Weierstrass elliptic functions:
if $\tau=\frac{\omega_2}{\omega_1}$ and $\Im \tau>0$, 
\begin{align}
\label{eq:relsigmatheta}
\sigma(z|\omega_1,\omega_2)&=2\omega_1\exp\biggl(\frac{\zeta(\omega_1|\omega_1,\omega_2)}{2\omega_1}z^2\biggr)\frac{\vartheta_{11}(\frac z{2\omega_1}|\tau)}{\vartheta_{11}'(0|\tau)},
\\
\label{eq:relzetatheta}
\zeta(z|\omega_1,\omega_2)&=\frac{\zeta(\omega_1|\omega_1,\omega_2)}{\omega_1}z+\frac{1}{2\omega_1}\frac{\vartheta'_{11} \bigl(\frac z{2\omega_1}\big|\tau\bigr) }{\vartheta_{11} \bigl(\frac z{2\omega_1}\big|\tau\bigr) },
\\
\label{eq:relwptheta}
\wp(z|\omega_1,\omega_2)&=-\frac{\zeta(\omega_1|\omega_1,\omega_2)}{\omega_1}-\frac{\d^2}{\d z^2}\log\vartheta_{11}\bigl(\frac z{2\omega_1}\big|\tau\bigr).
\end{align}

Finally, we recall the \textit{Jacobi triple product identity}
\begin{equation}
\label{eq:triple_product}
    \vartheta(z|\tau) = \prod_{m \ge 1 } \left( 1 - \e^{2 m \pi \i \tau} \right) \left( 1 + \e^{(2m-1) \pi \i \tau + 2 \pi \i z } \right) \left( 1 + \e^{(2m-1) \pi \i \tau - 2 \pi \i z } \right).
\end{equation}

\section{Airy model Riemann--Hilbert problem}\label{app:Airy}

In this section, we recall the well-known Airy model Riemann--Hilbert problem and its solution, which is the key ingredient in the construction of the inner parametrices. For our purposes, it is convenient to present it in three (equivalent) formulations, labeled $\mathrm{I}$, $\mathrm{II}$, and $\mathrm{III}$.

Let $\Sigma_{\mathrm{Ai},\mathrm{I}}=\Sigma_{\mathrm{Ai},\mathrm{III}}=\mathbb{R}\cup\e^{\frac 23\pi\i}\mathbb{R}_{+}\cup\e^{-\frac 23\pi\i}\mathbb{R}_{+}$, $\Sigma_{\mathrm{Ai},\mathrm{II}}=\mathbb{R}\cup\e^{\frac 13\pi\i}\mathbb{R}_{+}\cup\e^{-\frac 13\pi\i}\mathbb{R}_{+}$. We also let $\Sigma_{\mathrm{Ai},X}^\circ=\Sigma_{\mathrm{Ai},X}\setminus\lbrace 0\rbrace$ for $X=\mathrm{I},\mathrm{II},\mathrm{III}$.
We orient these contours (see~Figure~\ref{fig:AirymodelRHP}) so that $\mathbb{R}$ points to the right and the diagonal lines point upwards.
This defines $\pm$-sides of $\Sigma_{\mathrm{Ai},X}^\circ$ (as usual, $+$ to the left, $-$ to the right) for $X=\mathrm{I},\mathrm{II},\mathrm{III}$.

Letting $\mathrm{Ai}$ be the Airy function and $\mathcal{A}_j(\zeta) = \sqrt{2\pi}\e^{\frac {2j}3\pi\i}\mathrm{Ai}\bigl(\e^{\frac {2j}3\pi\i}\zeta\bigr)$ (for $j=0,1,2$), we define
\begin{equation}
\label{eq:defPhiAi1}
\boldsymbol \Phi^{\mathrm{Ai},\mathrm{I}}(\zeta) = 
\begin{cases}
\begin{pmatrix}
-\mathcal{A}_2(\zeta) & \mathcal{A}_1(\zeta) \\ \i \mathcal{A}_2'(\zeta) & -\i\mathcal{A}_1'(\zeta)
\end{pmatrix}
\e^{\frac 23\zeta^{\frac 32}\boldsymbol\sigma_3},
& \text{if }-\pi<\arg\zeta<-\frac 23\pi,\\
\begin{pmatrix}
\mathcal{A}_0(\zeta) & \mathcal{A}_1(\zeta) \\ -\i \mathcal{A}_0'(\zeta) & -\i\mathcal{A}_1'(\zeta)
\end{pmatrix}
\e^{\frac 23\zeta^{\frac 32}\boldsymbol\sigma_3},
& \text{if }-\frac 23\pi<\arg\zeta<0,\\
\begin{pmatrix}
-\mathcal{A}_1(\zeta) & -\mathcal{A}_2(\zeta) \\ \i \mathcal{A}_1'(\zeta) & \i\mathcal{A}_2'(\zeta)
\end{pmatrix}
\e^{\frac 23\zeta^{\frac 32}\boldsymbol\sigma_3},
& \text{if }0<\arg\zeta<\frac 23\pi,\\
\begin{pmatrix}
\mathcal{A}_0(\zeta) & -\mathcal{A}_2(\zeta) \\ -\i \mathcal{A}_0'(\zeta) & \i\mathcal{A}_2'(\zeta)
\end{pmatrix}
\e^{\frac 23\zeta^{\frac 32}\boldsymbol\sigma_3},
& \text{if }\frac 23\pi<\arg\zeta<\pi,
\end{cases}
\end{equation}
and
\begin{equation}
\label{eq:defPhiAi2}
\boldsymbol \Phi^{\mathrm{Ai},\mathrm{II}}(\zeta)=
\begin{pmatrix}
    0&1\\1&0
\end{pmatrix}
\boldsymbol\Phi^{\mathrm{Ai},\mathrm{I}}(-\zeta)
\begin{pmatrix}
    0&1\\1&0
\end{pmatrix},\quad
\boldsymbol\Phi^{\mathrm{Ai},\mathrm{III}}(\zeta)=
\boldsymbol\Phi^{\mathrm{Ai},\mathrm{I}}(\zeta)^{-\top}.
\end{equation}

Let $\boldsymbol J_{\mathrm{Ai},\mathrm{I}}:\Sigma_{\mathrm{Ai},\mathrm{I}}^\circ\to \mathrm{SL}(2,\mathbb{C})$ be defined as follows:
\begin{equation}
\boldsymbol J_{\mathrm{Ai},\mathrm{I}}(\zeta) = \begin{cases}
\begin{pmatrix}
0 & 1 \\ -1 & 0 
\end{pmatrix},&\text{if }\zeta<0,\\
\begin{pmatrix}
1 & \e^{-\frac 43 \zeta^{3/2}} \\ 0 & 1 
\end{pmatrix},&\text{if }\zeta>0,\\
\begin{pmatrix}
1 &  0\\ \mp\e^{\frac 43 \zeta^{3/2}} & 1
\end{pmatrix},&\text{if }\zeta\in\e^{\pm\frac 23\pi\i}\mathbb{R}_+.
\end{cases}
\end{equation}
Similarly, let $\boldsymbol J_{\mathrm{Ai},\mathrm{II}}:\Sigma_{\mathrm{Ai},\mathrm{I}}^\circ\to \mathrm{SL}(2,\mathbb{C})$ be 
\begin{equation}
\boldsymbol J_{\mathrm{Ai},\mathrm{II}}(\zeta) = \begin{pmatrix}
    0&1\\1&0
\end{pmatrix}
\boldsymbol J_{\mathrm{Ai},\mathrm{I}}(-\zeta)^{-1}\begin{pmatrix}
    0&1\\1&0
\end{pmatrix} = \begin{cases}
\begin{pmatrix}
0 & 1 \\ -1 & 0 
\end{pmatrix},&\text{if } \zeta>0,\\
\begin{pmatrix}
1 & 0 \\ -\e^{-\frac 43 (-\zeta)^{3/2}}  & 1 
\end{pmatrix},&\text{if }\zeta<0,\\
\begin{pmatrix}
1 & \mp\e^{\frac 43 (-\zeta^{3/2})} \\ 0 & 1
\end{pmatrix},&\text{if }\zeta\in\e^{\pm\frac 13\pi\i}\mathbb{R}_+,
\end{cases}
\end{equation}
and let $\boldsymbol J_{\mathrm{Ai},\mathrm{III}}:\Sigma_{\mathrm{Ai},\mathrm{III}}^\circ\to\mathrm{SL}(2,\mathbb{C})$ be
\begin{equation}
\boldsymbol J_{\mathrm{Ai},\mathrm{III}}(\zeta) = \boldsymbol J_{\mathrm{Ai},\mathrm{I}}(\zeta)^{-\top} = \begin{cases}
\begin{pmatrix}
0 & 1 \\ -1 & 0 
\end{pmatrix},&\text{if }\zeta<0,\\
\begin{pmatrix}
1 & 0 \\ -\e^{-\frac 43 \zeta^{3/2}} & 1 
\end{pmatrix},&\text{if }\zeta>0,\\
\begin{pmatrix}
1 & \pm\e^{\frac 43 \zeta^{3/2}} \\ 0 & 1
\end{pmatrix},&\text{if }\zeta\in\e^{\pm\frac 23\pi\i}\mathbb{R}_+.
\end{cases}
\end{equation}
Here and below, we take the branch of $\zeta^{3/2}$ analytic for $\zeta\in\mathbb{C}\setminus\mathbb{R}_{\leq 0}$ and positive for $\zeta>0$, and the branch of $(-\zeta)^{3/2}$ analytic for $\zeta\in\mathbb{C}\setminus\mathbb{R}_{\geq 0}$ and positive for $\zeta<0$.
This is illustrated in Figure~\ref{fig:AirymodelRHP}.

\begin{figure}[t]
\centering
\begin{tikzpicture}

\begin{scope}[shift={(-5,0)}]
    \draw[->,thick] (-2,0) -- (-.5,0) node[below]{\tiny{$-$}} node[above]{\tiny{$+$}};
    \draw[thick] (-.5,0) -- (0,0);
    \draw[->,thick] (0,0) -- (.4,0) node[below]{\tiny{$-$}} node[above]{\tiny{$+$}};
    \draw[thick] (.4,0) -- (1.7,0);

    \begin{scope}[rotate=-30]
    \draw[->,thick] (0,-2) -- (0,-.8) node[right]{\tiny{$-$}} node[left]{\tiny{$+$}};
    \draw[thick] (0,-.8) -- (0,0);
    \end{scope}
    \begin{scope}[rotate=30]
    \draw[->,thick] (0,0) -- (0,.8) node[right]{\tiny{$-$}} node[left]{\tiny{$+$}};
    \draw[thick] (0,.8) -- (0,2);
    \end{scope}

    \node at (1/10,1/7) {\small{$0$}};

    \node at (-2.1,1/4) {\tiny{$\left(\begin{smallmatrix}0 & 1\\ -1 & 0 \end{smallmatrix}\right)$}};
    \node at (1.3,1/3) {\tiny{$\left(\begin{smallmatrix}1 & \e^{-\frac 43\zeta^{3/2}}\\ 0 & 1 \end{smallmatrix}\right)$}};
    \node at (-1.7,1.2) {\tiny{$\left(\begin{smallmatrix}1 & 0\\ -\e^{\frac 43\zeta^{3/2}} & 1 \end{smallmatrix}\right)$}};
    \node at (-1.7,-1.2) {\tiny{$\left(\begin{smallmatrix} 1 & 0\\ \e^{\frac 43\zeta^{3/2}} & 1 \end{smallmatrix}\right)$}};
\end{scope}

\draw[->,thick] (-2,0) -- (-.4,0) node[below]{\tiny{$-$}} node[above]{\tiny{$+$}};
\draw[thick] (-.4,0) -- (0,0);
\draw[->,thick] (0,0) -- (.5,0) node[below]{\tiny{$-$}} node[above]{\tiny{$+$}};
\draw[thick] (.5,0) -- (2,0);

\begin{scope}[rotate=30]
\draw[->,thick] (0,-2) -- (0,-.8) node[right]{\tiny{$-$}} node[left]{\tiny{$+$}};
\draw[thick] (0,-.8) -- (0,0);
\end{scope}
\begin{scope}[rotate=-30]
\draw[->,thick] (0,0) -- (0,.8) node[right]{\tiny{$-$}} node[left]{\tiny{$+$}};
\draw[thick] (0,.8) -- (0,2);
\end{scope}

\node at (-1/10,1/7) {\small{$0$}};

\node at (-1.6,1/3) {\tiny{$\left(\begin{smallmatrix}1 & 0 \\ -\e^{-\frac 43(-\zeta)^{3/2}} & 1 \end{smallmatrix}\right)$}};
\node at (1.6,1/4) {\tiny{$\left(\begin{smallmatrix}0 & 1\\ -1 & 0 \end{smallmatrix}\right)$}};
\node at (-.5,1.2) {\tiny{$\left(\begin{smallmatrix}1 & -\e^{\frac 43(-\zeta)^{3/2}} \\ 0 & 1 \end{smallmatrix}\right)$}};
\node at (-.5,-1.2) {\tiny{$\left(\begin{smallmatrix} 1 & \e^{\frac 43(-\zeta)^{3/2}} \\ 0 & 1 \end{smallmatrix}\right)$}};

\begin{scope}[shift={(4.8,0)}]
    \draw[->,thick] (-2,0) -- (-.5,0) node[below]{\tiny{$-$}} node[above]{\tiny{$+$}};
    \draw[thick] (-.5,0) -- (0,0);
    \draw[->,thick] (0,0) -- (.4,0) node[below]{\tiny{$-$}} node[above]{\tiny{$+$}};
    \draw[thick] (.4,0) -- (1.7,0);

    \begin{scope}[rotate=-30]
    \draw[->,thick] (0,-2) -- (0,-.8) node[right]{\tiny{$-$}} node[left]{\tiny{$+$}};
    \draw[thick] (0,-.8) -- (0,0);
    \end{scope}
    \begin{scope}[rotate=30]
    \draw[->,thick] (0,0) -- (0,.8) node[right]{\tiny{$-$}} node[left]{\tiny{$+$}};
    \draw[thick] (0,.8) -- (0,2);
    \end{scope}

    \node at (1/10,1/7) {\small{$0$}};

    \node at (-2.1,1/4) {\tiny{$\left(\begin{smallmatrix}0 & 1\\ -1 & 0 \end{smallmatrix}\right)$}};
    \node at (1.5,1/3) {\tiny{$\left(\begin{smallmatrix}1 & 0 \\ -\e^{-\frac 43\zeta^{3/2}} & 1 \end{smallmatrix}\right)$}};
    \node at (-1.7,1.2) {\tiny{$\left(\begin{smallmatrix}1 & \e^{\frac 43\zeta^{3/2}}\\ 0 & 1 \end{smallmatrix}\right)$}};
    \node at (-1.7,-1.2) {\tiny{$\left(\begin{smallmatrix} 1 & -\e^{\frac 43\zeta^{3/2}}\\ 0 & 1 \end{smallmatrix}\right)$}};
\end{scope}

\end{tikzpicture}
\caption{Jumps of $\boldsymbol\Phi^{\mathrm{Ai},\mathrm{I}}$ (left), $\boldsymbol\Phi^{\mathrm{Ai},\mathrm{II}}$ (center), and $\boldsymbol\Phi^{\mathrm{Ai},\mathrm{III}}$ (right).}
\label{fig:AirymodelRHP}
\end{figure}

It is well-known that $\boldsymbol\Phi^{\mathrm{Ai},X}$ solves the following Riemann--Hilbert problem, for each $X=\mathrm{I},\mathrm{II},\mathrm{III}$.

\begin{cRHp}[Airy model Riemann--Hilbert problem]
\label{cRHp:Airy}
Let $X=\mathrm{I},\mathrm{II},\mathrm{III}$.
Find an analytic function $\Phi^{\mathrm{Ai},X}:\mathbb{C}\setminus \Sigma_{\mathrm{Ai},X}\to\mathrm{SL}(2,\mathbb{C})$ such that the following conditions hold true.
\begin{enumerate}[leftmargin=*]
\item Non-tangential boundary values of $\boldsymbol \Phi^{\mathrm{Ai},X}$ exist and are continuous on $\Sigma_{\mathrm{Ai},X}^\circ$ and satisfy
\begin{equation}
\boldsymbol \Phi^{\mathrm{Ai},X}_+(\zeta)=\boldsymbol\Phi^{\mathrm{Ai},X}_-(\zeta)\boldsymbol J_{\mathrm{Ai},X}(\zeta),\qquad \zeta\in \Sigma_{\mathrm{Ai},X}^\circ.
\end{equation}
\item As $\zeta\to\infty$ uniformly in $\mathbb{C}\setminus \Sigma_{\mathrm{Ai},X}$ we have
\begin{equation}
\label{eq:asympPhiAi}
\boldsymbol\Phi^{\mathrm{Ai},X}(\zeta)=
\begin{cases}
\zeta^{-\frac 14\boldsymbol\sigma_3}\boldsymbol G
\left(\boldsymbol {\mathrm{I}}+\frac{1}{48}\left(\begin{smallmatrix}
    1 & 6\i \\ 6\i & -1
\end{smallmatrix}\right)\zeta^{-3/2}+O(\zeta^{-3})\right)
,&\text{if }X=\mathrm{I},
\\
(-\zeta)^{\frac 14\boldsymbol\sigma_3}\boldsymbol G
\left(\boldsymbol {\mathrm{I}}+\frac{1}{48}\left(\begin{smallmatrix}
    -1 & 6\i \\ 6\i & 1
\end{smallmatrix}\right)(-\zeta)^{-3/2}+O(\zeta^{-3})\right)
,&\text{if }X=\mathrm{II},
\\
\zeta^{\frac 14\boldsymbol\sigma_3}\boldsymbol G^{-1}
\left(\boldsymbol {\mathrm{I}}-\frac{1}{48}\left(\begin{smallmatrix}
    1 & 6\i \\ 6\i & -1
\end{smallmatrix}\right)\zeta^{-3/2}+O(\zeta^{-3})\right)
,&\text{if }X=\mathrm{III},
\end{cases}
\end{equation}
where
\begin{equation}
\label{eq:Gmatrix}
\boldsymbol G=\frac {1}{\sqrt 2}\begin{pmatrix}1 & \i \\ \i & 1\end{pmatrix}.
\end{equation}
\item We have $\boldsymbol\Phi^{\mathrm{Ai},X}(\zeta)=O(1)$ as $\zeta\to 0$ uniformly in $\mathbb{C}\setminus\Sigma_{\mathrm{Ai},X}$.
\end{enumerate}
\end{cRHp}

\section{Monotonicity of Weierstrass elliptic functions associated with rectangular lattices as functions of the modulus}\label{app:monotonic}

In this appendix we establish monotonicity properties in~$K$ of certain non-linear combinations of the Weierstrass elliptic functions with half-periods $K>0$ and $\i\pi$.

\begin{proposition}\label{prop:f1monotone}
    The function $K\mapsto \mathcal{U}(K)$, see~\eqref{eq:f1} and \eqref{eq:f1Weierstrass}, is strictly increasing for $K\geq \frac 12\eta$.
\end{proposition}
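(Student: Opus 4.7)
The plan is to derive a convergent infinite product formula for $\mathcal{U}(K)$ with nome $\sfq=\e^{-2K}$ tending to zero as $K\to+\infty$, and then to show that $\partial_K\log\mathcal{U}(K)$ admits a manifestly positive series expansion.

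The first step would be to apply the modular symmetry~\eqref{eq:modular symmetries} to transform $\sigma(\eta|K,\i\pi)$ to the lattice with half-periods $(1,\i K/\pi)$, whose nome is $\e^{-2K}$. Applying the trigonometric expansion~\eqref{eq:sigma trig expansion} with $z=\eta/(\i\pi)$ and using $\sin(\i w)=\i\sinh(w)$, $\cos(\i w)=\cosh(w)$, together with the identity $\zeta(1|1,\i K/\pi)=\i\pi\,\zeta(\i\pi|K,\i\pi)$ obtained by combining Legendre's identity~\eqref{eq:LegendreIdentity} with the modular transformation of $\zeta$, the Gaussian exponential produced by the trigonometric expansion cancels exactly with the prefactor $\exp\!\bigl(-\zeta(\i\pi)\eta^{2}/(2\i\pi)\bigr)$ in~\eqref{eq:f1Weierstrass}. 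After combining $\e^{-\eta/2}\cdot 2\sinh(\eta/2)=1-\e^{-\eta}$ and using $1-2\sfq^{n}\cosh\eta+\sfq^{2n}=(1-\sfq^{n}\e^{\eta})(1-\sfq^{n}\e^{-\eta})$, one would obtain the clean product formula
\begin{equation*}
\mathcal{U}(K) \,=\, (1-\e^{-\eta})\prod_{n\geq 1}\frac{(1-\e^{\eta-2Kn})(1-\e^{-\eta-2Kn})}{(1-\e^{-2Kn})^{2}}.
\end{equation*}
For $K>\eta/2$ every factor is positive (as $\e^{\eta-2K}<1$) and the product converges absolutely, so $\mathcal{U}(K)>0$ on $(\eta/2,+\infty)$.

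The second step would be to logarithmically differentiate this product in $K$ and expand each rational factor as a geometric series via $\frac{x}{1-x}=\sum_{k\geq 1}x^{k}$, which is legitimate since $\e^{\eta-2K}<1$. Collecting the three series and using $\e^{k\eta}+\e^{-k\eta}-2=4\sinh^{2}(k\eta/2)$ yields
\begin{equation*}
\partial_K\log\mathcal{U}(K) \,=\, 8\sum_{n\geq 1}\sum_{k\geq 1}n\,\e^{-2Knk}\sinh^{2}(k\eta/2),
\end{equation*}
which is manifestly positive for $K>\eta/2$. The double series converges absolutely, as $\e^{-2Knk}\sinh^{2}(k\eta/2)\leq\tfrac{1}{4}\e^{-k(2Kn-\eta)}$ with $2Kn-\eta\geq 2K-\eta>0$. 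This immediately gives the strict monotonicity of $\mathcal{U}$ on $(\eta/2,+\infty)$, which extends by continuity to the closed half-line, bearing in mind $\mathcal{U}(K)\to 0$ as $K\downarrow\eta/2$ (cf.~Proposition~\ref{prop:f1f2body}).

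The main obstacle is the derivation of the product formula: the prefactor $\exp\!\bigl(-\zeta(\i\pi)\eta^{2}/(2\i\pi)\bigr)$ appearing in the definition~\eqref{eq:f1Weierstrass} of $\mathcal{U}$ is precisely tailored to cancel the Gaussian exponential arising from the trigonometric expansion of $\sigma$ after the modular transformation, and tracking this cancellation requires carefully combining Legendre's identity with the modular symmetry of~$\zeta$. Once the product formula is established, the positivity of the logarithmic derivative reduces to a transparent term-by-term argument, and the claim follows immediately.
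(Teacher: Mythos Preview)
Your proof is correct and takes a genuinely different route from the paper. The paper splits $\log\mathcal{U}(K)$ into its two constituent pieces from~\eqref{eq:f1Weierstrass}: the term $-\frac{\zeta(\i\pi)}{\i\pi}\eta^2/2$ is shown to be increasing directly from the series~\eqref{eq:zetaipi}, while $\partial_K\log\sigma(\eta)$ is computed by differentiating the lattice-sum definition~\eqref{eq:defWeierstrassfunctionsmn} of $\sigma$, summing over one index via a Mittag--Leffler expansion to obtain hyperbolic functions, and then manipulating the resulting sum over $m\geq 1$ into a form where each term is visibly positive. Your approach instead exploits that the Gaussian prefactor $\exp\bigl(-\frac{\zeta(\i\pi)}{2\i\pi}\eta^2\bigr)$ is precisely what the trigonometric expansion of $\sigma$ (after the modular transformation to nome $\e^{-2K}$) produces, so the two pieces of $\mathcal{U}$ combine into the single product $(1-\e^{-\eta})\prod_{n\geq 1}\frac{(1-\e^{\eta-2Kn})(1-\e^{-\eta-2Kn})}{(1-\e^{-2Kn})^2}$; the logarithmic derivative then collapses to the manifestly positive double series $8\sum_{n,k\geq 1}n\,\e^{-2Knk}\sinh^2(k\eta/2)$. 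Your argument is shorter and more transparent, and the product formula may be of independent interest; the paper's argument is more self-contained in that it works directly from the defining lattice sums without invoking the modular transformation.
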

\begin{proof}
    From the expression \eqref{eq:f1Weierstrass}, it suffices to show that both $K \mapsto -\frac{ \zeta(\i \pi) }{\i \pi}$ and  $K \mapsto \log \sigma(\eta)$ are strictly increasing for $K\geq \frac 12\eta$. The first statement follows directly from~\eqref{eq:zetaipi}, because each of the terms $-\sinh(n K)^{-2}$ is strictly increasing for $K>0$ and $n\geq 1$, and for the second one we reason as follows.
    From the definition of the Weierstrass $\sigma$ function, see~\eqref{eq:defWeierstrassfunctionsmn}, we have, after some basic algebra,
    \begin{equation}\label{eq: series derivative log sigma}
        \begin{aligned}
            \partial_K \log \sigma (\eta) &= \sum_{ (m,n) \in \mathbb{Z}^2\setminus(0,0) } \frac{m \eta^3}{4(K m+ \i n \pi)^3(2Km + 2 \i n \pi -\eta)}
            \\
            & = \sum_{m \in \mathbb{Z} \setminus \{ 0 \} } m \left[\coth \left(K m-\frac{\eta}{2}\right)-\coth (K m)\right]-\frac{1}{4} m \eta \frac{(\eta \coth (K m)+2)}{\sinh(K m)^2},
        \end{aligned}
    \end{equation}
    where in the second equality we evaluated the summation over $n $ using the Mittag-Leffler expansion
    \begin{equation}
            \sum_{n \in \mathbb{Z}} \frac{1}{(a+n \i \pi )^3(2a + 2 n \i \pi -\eta)} 
            = \frac{1}{\eta^3} \biggl[4 \coth \left(a-\frac{\eta}{2}\right) -4 \coth (a) - \frac{\eta (\eta \coth (a)+2)}{\sinh(a)^2} \biggr].
    \end{equation}
    This proves that
    \begin{equation}
        \begin{split}
            \partial_K \log \sigma(\eta)
            &=\sum_{m \ge 1} m \left[\coth \left(K m+\frac{\eta}{2}\right)  +\coth \left(K m-\frac{\eta}{2}\right) - \frac{\eta^2 \cosh (K m)}{2 \sinh(K m)^3 } -2 \coth (K m) \right]
            \\
            &=\sum_{m \ge 1} 2 m \frac{\cosh (K m)}{\sinh(K m)^3 } \left[ \frac{\sinh(\frac{\eta}{2})^2 \sinh(K m)^2}{\sinh(Km + \frac{\eta}{2}) \sinh(Km - \frac{\eta}{2})} - \frac{\eta^2}{4} \right],
        \end{split}
    \end{equation}
    where in the first equality we grouped together the $m$th and $(-m)$th terms in the earlier summation and in the second equality we manipulated the summand using standard identities of hyperbolic functions. Each of the terms in the square brackets in the right-hand side is strictly positive for $K>\eta/2$.
    To see this, denoting $X=Km$ and $Y=z/2$, observe that
    \begin{equation}
        \partial_X\left[ \frac{\sinh(Y)^2 \sinh(X)^2}{\sinh(X +Y) \sinh(X - Y)} - Y^2 \right] = -\frac{\sinh(2X) \sinh(Y)^4}{\sinh(X-Y)^2\sinh(X+Y)^2} < 0,
    \end{equation}
    for $X>Y>0$ and taking the $X\to \infty$ limit of the same expression we have
    \begin{equation}
        \lim_{X \to + \infty}\frac{   \sinh(Y)^2 \sinh (X)^2 }{\sinh (X-Y) \sinh (X+Y) }-  Y^2 = \sinh(Y)^2 - Y^2 >0.
    \end{equation}
    This completes the proof.
\end{proof}

The proof of the monotonicity of $\mathcal{V}$ is more involved.
We start by the definition~\eqref{eq:f2},
\begin{equation}\label{eq: f_2 in terms of a b}
	\mathcal{V}(K) = [\mathscr{A}(\eta,K)^2 + \mathscr{B}(\eta,K)]K +1,
\end{equation}
where we introduce
\begin{equation}\label{eq: a and b def}
	\mathscr{A}(\eta,K) = \zeta(\eta) - \frac{\zeta(\i \pi)}{\pi \i }\eta ,
\qquad
	\mathscr{B}(\eta,K) = - \wp(\eta) + 2\frac{\zeta(\i \pi)}{\i \pi}.
\end{equation}

\begin{proposition}\label{prop:f2monotone}
    The function $K\mapsto \mathcal{V}(K)$ is strictly increasing for $K>\frac 12\eta$.
\end{proposition}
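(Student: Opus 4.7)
The plan is to reduce the monotonicity of $\mathcal{V}$ to an elementary hyperbolic inequality, exploiting the identity $\mathcal{V}(K)=1-K\partial_K\log\mathcal{U}(K)$ from Proposition~\ref{prop:relf1f2}. Under this identity, showing $\mathcal{V}'(K)>0$ on $(\eta/2,+\infty)$ is equivalent to showing that $K\mapsto K\partial_K\log\mathcal{U}(K)$ is strictly decreasing there.

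First, I would extract from the calculation in the proof of Proposition~\ref{prop:f1monotone} the cleaner series representation
\begin{equation*}
\partial_K\log\mathcal{U}(K)=2\sinh^2(\eta/2)\sum_{m\geq 1}\frac{m\cosh(mK)}{\sinh(mK)\bigl[\sinh^2(mK)-\sinh^2(\eta/2)\bigr]},
\end{equation*}
obtained by combining the series for $\partial_K\log\sigma(\eta)$ with $-\tfrac{\eta^2}{2}\partial_K\zeta_\infty$ and simplifying using the identity $\sinh(u+a)\sinh(u-a)=\sinh^2 u-\sinh^2 a$. Multiplying by $K$ and setting $u=mK$ yields
\begin{equation*}
K\partial_K\log\mathcal{U}(K)=2\sinh^2(\eta/2)\sum_{m\geq 1}\Psi(mK),\qquad \Psi(u)=\frac{u\cosh u}{\sinh u\bigl[\sinh^2 u-\sinh^2(\eta/2)\bigr]}.
\end{equation*}
Differentiating in $K$ term by term gives $\frac{\d}{\d K}[K\partial_K\log\mathcal{U}(K)]=2\sinh^2(\eta/2)\sum_{m\geq 1}m\,\Psi'(mK)$. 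Since $mK\geq K>\eta/2$ for all $m\geq 1$, it therefore suffices to prove that $\Psi'(u)<0$ for all $u>\eta/2$.

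To verify this, I would compute the logarithmic derivative of $\Psi$ and simplify using the standard identities $\coth u-\tanh u=2/\sinh(2u)$ and $\coth(u+\eta/2)+\coth(u-\eta/2)=2\sinh u\cosh u/[\sinh^2 u-\sinh^2(\eta/2)]$, obtaining
\begin{equation*}
\frac{\Psi'(u)}{\Psi(u)}=\frac 1u-\frac{2}{\sinh(2u)}-\coth(u+\eta/2)-\coth(u-\eta/2).
\end{equation*}
Since $\Psi(u)>0$ for $u>\eta/2$, we need the right-hand side to be strictly negative, i.e., $1/u<2/\sinh(2u)+\coth(u+\eta/2)+\coth(u-\eta/2)$. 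This is immediate from the elementary bound $\coth(x)>1/x$ valid for all $x>0$: applied to $x=u-\eta/2$, it gives $\coth(u-\eta/2)>1/(u-\eta/2)>1/u$, while the remaining two terms on the right are manifestly positive.

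The main obstacle is essentially algebraic: establishing the closed-form series representation for $\partial_K\log\mathcal{U}(K)$ in which the factor $-\eta^2/4$ that appears in the proof of Proposition~\ref{prop:f1monotone} cancels against the contribution from $-\tfrac{\eta^2}{2}\zeta_\infty$, allowing every summand to be written as a single decreasing function of $u=mK$. Once this representation is in hand, the monotonicity of each summand, and hence of $\mathcal{V}$, follows from a textbook inequality and the remainder of the argument is routine.
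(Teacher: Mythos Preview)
Your argument is correct and takes a genuinely different, and considerably more elegant, route than the paper. The paper does not exploit the identity $\mathcal{V}(K)=1-K\partial_K\log\mathcal{U}(K)$ from Proposition~\ref{prop:relf1f2}; instead it writes $\mathcal{V}(K)=[\mathscr{A}(\eta,K)^2+\mathscr{B}(\eta,K)]K+1$ in terms of combinations of $\zeta$ and $\wp$, and then proves $\partial_K[\mathscr{A}^2+\mathscr{B}]>0$ by splitting into two regimes: for $K\in(\eta/2,3]$ it uses the trigonometric $\sfq$-expansion (nome $\sfq=\e^{-2\pi^2/K}$), isolates the first-order term, and controls the remainder by a chain of explicit numerical bounds (Lemmas~\ref{lem:bound derivative a^2+b + first order}--\ref{lem:bound derivative R^2}); for $K>3$ it uses the hyperbolic expansion and again combines a positive leading term with explicit tail estimates (Lemmas~\ref{lem:bound f K large}--\ref{lem:lower positive term}). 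This occupies roughly four pages of careful inequalities with hard-coded constants.

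Your key step is the cancellation of the $-\eta^2/4$ remnant from the proof of Proposition~\ref{prop:f1monotone} against $-\tfrac{\eta^2}{2}\partial_K\bigl(\zeta(\i\pi)/\i\pi\bigr)$, yielding the clean form $K\partial_K\log\mathcal{U}(K)=2\sinh^2(\eta/2)\sum_{m\geq 1}\Psi(mK)$ with $\Psi(u)=u\cosh u/\bigl[\sinh u(\sinh^2 u-\sinh^2(\eta/2))\bigr]$. The reduction of $\Psi'<0$ to $\coth(u-\eta/2)>1/u$ via the single inequality $\coth x>1/x$ then replaces the entire two-case numerical analysis. The termwise differentiation is unproblematic since $\Psi(u)\sim 4u\e^{-2u}$ gives locally uniform exponential decay on compact subsets of $(\eta/2,\infty)$. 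What your approach buys is a unified, conceptually transparent proof with no case split and no numerical constants; what the paper's approach buys is nothing here beyond robustness---it would survive if the cancellation you exploit had not occurred---but at substantial cost in length.
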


\begin{proof}
	By \eqref{eq: f_2 in terms of a b} it suffices to show that $K\to \mathscr{A}(\eta,K)^2 + \mathscr{B}(\eta,K)$ is increasing for $K\geq \frac 12\eta>0$.
    To prove the latter statement, we will make use of multiple known expansions of the Weierstrass elliptic functions to prove that the first derivative in~$K$ of $\mathscr{A}(\eta,K)^2+\mathscr{B}(\eta,K)$ is strictly positive for all $0<\eta/2<K$.
    We split the proof into two parts, corresponding to $0<K\leq 3$ and $K > 3$: the first case is addressed in~\Cref{prop:monotonicity small K} and the second in~\Cref{prop:increasing property K large}. 
\end{proof}

\subsection{Small $K$: expansion in trigonometric functions}

In this section we assume $\eta\in (0,6)$ and $K \in (\eta/2,3]$.
By the trigonometric expansions~\eqref{eq:sigma trig expansion} and \eqref{eq:zeta trig expansion} we have
\begin{equation}
	\mathscr{A}(\eta,K) = \mathscr{A}_0(\eta,K) + R(\eta,K),
\end{equation}
where
\begin{align}
	\mathscr{A}_0(\eta,K) &= \frac{\eta}{2K} +\frac{\pi}{2K} \cot\left( \frac{\pi \eta}{2 K} \right),
\\
\label{eq:R}
		R(\eta,K) &=  \frac{2\pi}{K} \sum_{n \ge 1 } \frac{\sfq^n }{1-\sfq^n} \sin\left( \frac{n \pi \eta}{K} \right)
		 = \sum_{n\ge 1} \sum_{\ell \ge 1} \sfq^{n \ell} \frac{2 \pi}{K} \sin \left( \frac{n \pi \eta}{K} \right),
\end{align}
and (a notation which we will use throughout this section)
\begin{equation}\sfq=\e^{-2\pi^2/K}
\end{equation}
is the elliptic nome of the lattice $2K\mathbb{Z}+2\pi\i\mathbb{Z}$, see~\eqref{eq:nome}.

The function $\mathscr{B}(\eta,K)$ introduced in \eqref{eq: a and b def} has a similar trigonometric expansion
\begin{equation}
	\mathscr{B}(\eta,K) = \mathscr{B}_0(\eta,K) + S(\eta,K),
\end{equation}
where 
\begin{align}
	\mathscr{B}_0(\eta,K)  &= \frac{\pi^2}{4 K^2} - \frac{1}{K} - \frac{\pi^2}{4K^2} \csc\left( \frac{\pi \eta}{2 K} \right) ^2
    \\
    \nonumber
		S(\eta,K) &= \frac{2\pi^2}{K^2} \sum_{n \ge 1 }n \frac{\sfq^n }{1-\sfq^n} \cos\left( \frac{n \pi \eta}{K} \right) - \frac{3 \pi^2}{2 K^2} \sum_{n\ge 1} \mathrm{csch} \left( \frac{n \pi^2}{K} \right)^2
		\\
    \label{eq:S}
		& = \sum_{n\ge 1} \sum_{\ell \ge 1} \frac{2 \pi^2}{K^2} \left[  n \cos \left( \frac{n \pi \eta}{K} \right) - 3 \ell \right] \sfq^{ n \ell }.
\end{align}
In the expressions \eqref{eq:R} and \eqref{eq:S} we used the elementary expansions
\begin{equation}
	\frac{\sfq^n}{1-\sfq^n} = \sum_{\ell \ge 1} \sfq^{n \ell},
	\qquad
	\mathrm{csch}\left( \frac{n \pi^2}{K} \right)^2 = \sum_{ \ell \ge 0} 4 \ell \sfq^{\ell n}.
\end{equation}
We also define
\begin{equation}
	\widetilde{R}(\eta,K) = \frac{2 \pi }{K} \sin\left( \frac{\pi \eta}{K} \right) \sfq,
\qquad
	\widetilde{S}(\eta,K) = \frac{2 \pi ^2}{K^2} \left( \cos \left(\frac{\pi \eta}{K}\right) -3 \right) \sfq,
\end{equation}
which are the first coefficients in the $\sfq$-expansion of the functions $R,S$.

\begin{proposition} \label{prop:monotonicity small K}
	Let $\eta\in (0,6)$ and let $K \in (z/2,3]$. Then, we have
	\begin{equation}
		 \partial_K \left[ \mathscr{A}(\eta,K)^2 + \mathscr{B}(\eta,K) \right] > 0. 
	\end{equation}
\end{proposition}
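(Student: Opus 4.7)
The plan is to split $H(\eta, K) := \mathscr{A}(\eta,K)^2 + \mathscr{B}(\eta,K)$ as $H = H_0 + H_1$, where $H_0 := \mathscr{A}_0^2 + \mathscr{B}_0$ is the nome-independent trigonometric ``principal part'' and $H_1 := 2\mathscr{A}_0 R + R^2 + S$ collects the $\sfq$-series corrections. Since the nome satisfies $\sfq = \e^{-2\pi^2/K} \leq \e^{-2\pi^2/3} \approx 1.4\cdot 10^{-3}$ throughout $K \in (\eta/2, 3]$, the term $H_1$ and its $K$-derivative are expected to be negligible, and the proof reduces to proving a quantitative lower bound on $\partial_K H_0$ together with a matching upper bound on $|\partial_K H_1|$.

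The main observation for $H_0$ is that the identity $\csc^2 = 1 + \cot^2$ produces a remarkable cancellation eliminating the apparent singularities of $\mathscr{A}_0^2$ and $\mathscr{B}_0$ at $\eta = 0$, leaving
\[
    H_0 \;=\; \frac{\eta^2}{4K^2} + \frac{\pi \eta}{2K^2}\cot\frac{\pi\eta}{2K} - \frac{1}{K}.
\]
Introducing the convenient variable $u = \pi\eta/(2K) \in (0, \pi)$ and using $\partial_K u = -u/K$, a direct computation combined with the completed-square rewriting $u^2 \csc^2 u = u^2 + u^2 \cot^2 u$ yields the clean identity
\[
    K^2\, \partial_K H_0 \;=\; u^2\biggl(1 - \frac{2K}{\pi^2}\biggr) + \bigl(u\cot u - 1\bigr)^2.
\]
Both terms on the right are nonnegative, and the prefactor $1 - 2K/\pi^2 \geq 1 - 6/\pi^2 > 0$ is strictly positive on $K \leq 3$, so the first term alone gives the explicit quantitative lower bound $\partial_K H_0 \geq (\pi^2 - 6)\, \eta^2/(4 K^4)$ on the region of interest.

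It remains to estimate $|\partial_K H_1|$ and to check that it is uniformly dominated by the bound just obtained. The explicit $\sfq$-series \eqref{eq:R} and \eqref{eq:S} admit geometric majorants in $\sfq$ for $R$, $S$, and (via $\partial_K \sfq = 2\pi^2 \sfq/K^2$) for their $K$-derivatives, and the extreme smallness of $\sfq$ on $K \leq 3$ makes these contributions tiny. The main obstacle will be the behavior as $\eta \downarrow 0$: the lower bound on $\partial_K H_0$ vanishes quadratically in $\eta$, while $\mathscr{A}_0$ diverges like $1/\eta$ and $S(0,K)$ itself is nonzero, so delicate cancellations must be tracked. The cross term $2\mathscr{A}_0 R$ will be tamed using the expansion $R = O(\sfq\, \eta)$ as $\eta\to 0$, while the constant $S(0,K)$ is absorbed via the a priori identity $H(\eta, K) = \bigl[(\zeta(\i\pi)/(\i\pi))^2 - g_2/12\bigr]\eta^2 + O(\eta^4)$, which follows from the Laurent expansions of $\zeta$ and $\wp$ near the origin and forces $H_1(\eta, K) = H(\eta, K) - H_0(\eta, K)$ to vanish quadratically in $\eta$; the same quadratic decay then propagates to $\partial_K H_1$ by differentiating under the uniformly convergent $\sfq$-series. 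Putting these bounds together, one obtains $|\partial_K H_1| \leq C\, \sfq\, \eta^2/K^4$ for an explicit constant $C$, and since $C\sfq$ is several orders of magnitude smaller than $(\pi^2 - 6)/4$ throughout the region, the conclusion $\partial_K H > 0$ follows. The hard part is therefore the $\eta \downarrow 0$ bookkeeping needed to propagate the quadratic vanishing into the correction; everything else reduces to elementary (but lengthy) geometric estimates on the $\sfq$-series.
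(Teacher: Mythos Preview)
Your strategy matches the paper's: decompose $H=H_0+H_1$, bound $\partial_K H_0$ below, and control the $\sfq$-corrections. Your completing-the-square identity
\[
K^2\,\partial_K H_0 = u^2\Bigl(1-\tfrac{2K}{\pi^2}\Bigr)+(u\cot u-1)^2,\qquad u=\tfrac{\pi\eta}{2K},
\]
is a genuinely cleaner derivation than the paper's Taylor-series argument in Lemma~\ref{lem:derivative afrak**2 + bfrak}, but it yields the identical bound $\partial_K H_0\ge(\pi^2-2K)\eta^2/(4K^4)$.

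The substantive issue is your claim that $C\sfq$ is ``several orders of magnitude smaller'' than $(\pi^2-6)/4$. This is false and masks the real difficulty. At $K=3$ one has $\sfq\approx1.4\times10^{-3}$, and the paper's own crude estimates (proof of Lemma~\ref{lem:bound derivative a^2+b + first order}) give $\partial_K[2\mathscr{A}_0\widetilde R+\widetilde S]\ge -2\pi^2\eta^2 K^{-4}\sfq\bigl(2\pi^2/K+2\pi^4/K^2-1\bigr)\approx -537\,\eta^2\sfq/K^4$, while your lower bound on $\partial_K H_0$ is $(\pi^2-6)\eta^2/(4K^4)\approx 696\,\eta^2\sfq/K^4$ after dividing by $\sfq$. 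The first-order correction therefore consumes roughly $77\%$ of the available margin; the remaining $\sfq^2$-terms are indeed tiny, but the leading correction is \emph{not}. This is precisely why the paper carries out the explicit-constant computations of Lemmas~\ref{lem:bound derivative a^2+b + first order}--\ref{lem:bound derivative R^2}, and why it chooses to absorb the first-order pieces $\widetilde R,\widetilde S$ into the principal part before bounding the $\sfq^2$-remainder separately. Your proposal is sound in outline, but the ``lengthy geometric estimates'' you defer are not bookkeeping: they are the proof, and the tightness at $K=3$ means you cannot afford to be careless with constants.
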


The proof of \Cref{prop:monotonicity small K} relies on a series of bounds which we state and prove below.

\begin{proof}
	We have
	\begin{align*}
		&
		\partial_K \left[ \mathscr{A}(\eta,K)^2 + \mathscr{B}(\eta,K) \right] 
		\\
		&= \partial_K \left[ \mathscr{A}_0(\eta,K)^2+2\mathscr{A}_0(\eta,K) R(\eta,K) + R(\eta,K)^2 + \mathscr{B}_0(\eta,K) + S(\eta,K) \right]
		\\
		& =  \partial_K \left[ \mathscr{A}_0(\eta,K)^2 + 2\mathscr{A}_0(\eta,K) \widetilde{R}(\eta,K)  + \mathscr{B}_0(\eta,K)  + \widetilde{S}(\eta,K) \right]  
		\\
		& \quad +  \partial_K \left[ 2\mathscr{A}_0(\eta,K) (R(\eta,K)-\widetilde{R}(\eta,K))  + (S(\eta,K) -\widetilde{S}(\eta,K))\right] + \partial_K R(\eta,K)^2 .
	\end{align*}
	Estimates for each of the three terms in the last expression are provided in Lemmas~\ref{lem:bound derivative a^2+b + first order}, \ref{lem:bound derivative higher order}, and~\ref{lem:bound derivative R^2}, respectively.
    Combining these results we get
	\begin{equation}
			\partial_K \left[ \mathscr{A}(\eta,K)^2 + \mathscr{B}(\eta,K) \right] > \frac{\eta^2}{K^4} \sfq  \left( 156 - \frac{163000}{K^2} \sfq  \right)
			> 130 \frac{\eta^2}{K^4} \sfq  > 0,
	\end{equation}
	where in the second inequality we used the fact that $156 - \frac{163000}{K^2} \sfq  =   156 - \frac{163000}{K^2}\e^{-2 \pi^2/K}$
	is strictly decreasing for $K \in (0,3)$ and that it evaluates to $130.858\ldots$ at~$K=3$.
    This completes the proof.
\end{proof}

\begin{lemma} \label{lem:derivative afrak**2 + bfrak}
	Fix $\eta\in (0,2\pi)$. For $K\in\left( \eta/2, \pi^2/2 \right)$, the function
	$
		\partial_K \left[ \mathscr{A}_0(\eta,K)^2 + \mathscr{B}_0(\eta,K) \right]
	$
	is strictly decreasing and strictly positive, bounded below as
	\begin{equation}
		\partial_K \left[ \mathscr{A}_0 (\eta,K)^2 + \mathscr{B}_0(\eta,K) \right] > \frac{\eta^2}{2K^3}\left( \frac{\pi^2}{2K} -1 \right).
	\end{equation}
\end{lemma}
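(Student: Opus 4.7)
The plan is to reduce the problem to an elementary analysis in one trigonometric variable via the dimensionless substitution $u = \pi\eta/(2K)$, where the assumption $K > \eta/2$ is exactly equivalent to $u \in (0,\pi)$. Using $\frac{\eta}{2K} = u/\pi$, $\frac{\pi}{2K} = u/\eta$, and $\frac{\pi^2}{4K^2} = u^2/\eta^2$, both $\mathscr{A}_0$ and $\mathscr{B}_0$ take simple forms in $u$ and $\eta$. The key observation is that when one squares $\mathscr{A}_0$ and adds $\mathscr{B}_0$, the $\eta^{-2}$ contributions combine through the Pythagorean identity $1 + \cot^2 u - \csc^2 u = 0$ and cancel identically, leaving the closed form
\[
\mathscr{A}_0(\eta,K)^2 + \mathscr{B}_0(\eta,K) = \frac{u^2}{\pi^2} + \frac{2u(u\cot u - 1)}{\pi\eta}.
\]

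Next, differentiating this expression in $K$, using $\partial_K u = -u/K$ and the algebraic identity $(1 - u\cot u)^2 + u^2 = 1 - 2u\cot u + u^2\csc^2 u$, one obtains the central formula
\[
\partial_K\bigl[\mathscr{A}_0^2 + \mathscr{B}_0\bigr] = \frac{(1 - u\cot u)^2}{K^2} + \frac{\eta^2}{2K^3}\Bigl(\frac{\pi^2}{2K} - 1\Bigr).
\]
The stated lower bound is then immediate since the first summand is non-negative, and the bound is strictly positive on the assumed range because $K < \pi^2/2$.

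For the strict monotonicity, I would treat the two summands in the above identity separately. The second summand, written as $\frac{\pi^2\eta^2}{4K^4} - \frac{\eta^2}{2K^3}$, has $K$-derivative equal to $\frac{\eta^2(3K - 2\pi^2)}{2K^5}$, which is strictly negative on $(\eta/2,\pi^2/2)$ since $\pi^2/2 < 2\pi^2/3$. For the first summand, I would first show that $1 - u\cot u$ is strictly positive and strictly increasing on $(0,\pi)$: its $u$-derivative equals $(2u - \sin 2u)/(2\sin^2 u)$, whose numerator is non-negative by the elementary inequality $\sin 2u \le 2u$. Since $K \mapsto u(K)$ is strictly decreasing, both factors $1/K^2$ and $(1 - u\cot u)^2$ decrease strictly with $K$, so their product does as well.

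The main obstacle, and the only step that is not routine, is recognizing the Pythagorean cancellation that collapses $\mathscr{A}_0^2 + \mathscr{B}_0$ to the clean two-term form above; without this cancellation one would be left with a mess of $\cot$, $\csc$, and $\csc^2$ contributions that do not obviously combine, and the compact lower bound in the statement would be impossible to read off. Once that identity is in hand, everything else reduces to tracking signs and invoking the elementary inequality $\sin 2u \le 2u$. The restriction $K < \pi^2/2$ is exactly what is needed for both the positivity of the lower bound and the decrease of the second summand to hold simultaneously, and it is essentially sharp, since the sign governing the second summand's derivative flips at $K = 2\pi^2/3$.
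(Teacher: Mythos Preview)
Your proof is correct and takes a genuinely different route from the paper. The paper expands $\cot$ in its Taylor series with Bernoulli numbers, writing $\mathscr{A}_0^2+\mathscr{B}_0 = -\frac{1}{K}+\frac{\eta^2}{4K^2}+\frac{\pi\eta}{2K^2}\cot\bigl(\frac{\pi\eta}{2K}\bigr)$ as a power series in $(\pi\eta/K)$, and then reads off the lower bound and the sign of the second $K$-derivative from the sign pattern $(-1)^{\ell+1}B_{2\ell}>0$. You instead discover the closed-form decomposition
\[
\partial_K\bigl[\mathscr{A}_0^2+\mathscr{B}_0\bigr]=\frac{(1-u\cot u)^2}{K^2}+\frac{\eta^2}{2K^3}\Bigl(\frac{\pi^2}{2K}-1\Bigr),
\]
which makes both the lower bound and the strict monotonicity completely transparent: the first summand is a manifest square, and each summand is handled by elementary calculus (the inequality $\sin 2u<2u$ replacing the Bernoulli sign pattern). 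Your approach is shorter and avoids special-function machinery; the paper's series approach, while heavier, has the virtue that the same expansion is reused verbatim in the surrounding estimates of that appendix. One minor remark: your ``non-negative'' in the lower-bound step is actually strict, since you verify $1-u\cot u>0$ on $(0,\pi)$ in the monotonicity argument; it would be cleaner to state this before invoking it.
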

\begin{proof}
	An explicit computation shows that
	\begin{equation}
		\mathscr{A}_0(\eta,K)^2 + \mathscr{B}_0(\eta,K) = -\frac{1}{K} + \frac{\eta^2}{4 K^2}+\frac{\pi  \eta}{2 K^2} \cot \left(\frac{\pi  \eta}{2 K}\right).
	\end{equation}
	Using the Taylor expansion
    $
		\cot(x) = 2 \sum_{\ell \ge 0} \frac{(-1)^\ell }{(2 \ell)!}B_{2 \ell} (2x)^{2\ell-1}
	$
	where $B_{n}$ is the $n$th Bernoulli number, the Taylor expansion of the right-hand side is
	\begin{equation}
			\frac{\eta^2 }{4 K^2} - \frac{\pi^2 \eta^2}{12 K^3} + \sum_{\ell \ge 2} \frac{(-1)^\ell B_{2 \ell}}{(2 \ell)!}  \frac{ ( \pi \eta )^{2\ell} }{ K^{2\ell +1} },
	\end{equation}
	where we used $B_0 = 1$ and $B_2=\frac 16$.
	We can then compute the Taylor expansion of the first and second derivatives of $\mathscr{A}_0(\eta,K)^2 + \mathscr{B}_0(\eta,K)$ as
	\begin{align}
		\partial_K \left[ \mathscr{A}_0(\eta,K)^2 + \mathscr{B}_0(\eta,K) \right] &= -\frac{\eta^2}{2 K^3} + \frac{\pi ^2 \eta^2}{4 K^4} + \sum_{\ell \ge 2} (-1)^{\ell+1} B_{2 \ell} \frac{(2 \ell + 1) }{(2 \ell)!} \frac{ ( \pi \eta )^{2\ell} }{ K^{2\ell +2} },
	\\
		\partial_K^2 \left[ \mathscr{A}_0(\eta,K)^2 + \mathscr{B}_0(\eta,K) \right] &= \frac{3 \eta^2}{2 K^4}-\frac{\pi ^2 \eta^2}{K^5} + \sum_{\ell \ge 2} (-1)^\ell B_{2 \ell} \frac{(2 \ell + 1)(2 \ell +2) }{(2 \ell)!}  \frac{ ( \pi \eta )^{2\ell} }{ K^{2\ell + 3} }.
	\end{align}
	Since every coefficient $(-1)^\ell B_{2 \ell}$ is strictly negative for $\ell \ge 2$ we have
	\begin{align}
		\partial_K \left[ \mathscr{A}_0(\eta,K)^2 + \mathscr{B}_0(\eta,K) \right] &> \frac{\eta^2}{2K^3}\left( \frac{\pi^2}{2K} -1 \right),
	\\
		\partial_K^2 \left[ \mathscr{A}_0(\eta,K)^2 + \mathscr{B}_0(\eta,K) \right] &< - \frac{\eta^2}{K^4} \left( \frac{\pi^2}{K} - \frac{3}{2} \right) ,
	\end{align}
    which completes the proof.
\end{proof}

\begin{lemma} \label{lem:bound derivative a^2+b + first order}
	Let $\eta\in(0,6)$ and $K\in (\eta/2,3]$. Then, we have
	\begin{equation}
		  \partial_K \left[ \mathscr{A}_0(\eta,K)^2 + 2\mathscr{A}_0(\eta,K) \widetilde{R}(\eta,K)  + \mathscr{B}_0(\eta,K) + \widetilde{S}(\eta,K) \right] > \frac{156 \eta^2}{K^4} \sfq.
	\end{equation}
\end{lemma}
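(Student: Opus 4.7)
The plan is to split the quantity to be estimated into two groups, namely the already-controlled part $\mathscr{A}_0^2+\mathscr{B}_0$ and the $\sfq$-linear correction $2\mathscr{A}_0\widetilde{R}+\widetilde{S}$, and then combine the lower bound of Lemma~\ref{lem:derivative afrak**2 + bfrak} with an explicit upper/lower control on the correction. Concretely, I would write
\begin{equation*}
\partial_K\bigl[\mathscr{A}_0^2+2\mathscr{A}_0\widetilde{R}+\mathscr{B}_0+\widetilde{S}\bigr]
= \partial_K\bigl[\mathscr{A}_0^2+\mathscr{B}_0\bigr]
+ 2(\partial_K\mathscr{A}_0)\widetilde{R}
+ 2\mathscr{A}_0\,\partial_K\widetilde{R}
+ \partial_K\widetilde{S}.
\end{equation*}
The first summand is bounded below by $\frac{\eta^2}{2K^3}\bigl(\frac{\pi^2}{2K}-1\bigr)$ thanks to Lemma~\ref{lem:derivative afrak**2 + bfrak}, which is strictly positive on the range $K\in(\eta/2,3]$ (since $3<\pi^2/2$). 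The remaining three summands are proportional to the nome $\sfq=\e^{-2\pi^2/K}$, since both $\widetilde{R}$ and $\widetilde{S}$ contain one factor of $\sfq$ and $\partial_K\sfq=(2\pi^2/K^2)\sfq$.

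In the second step I would compute $2(\partial_K\mathscr{A}_0)\widetilde{R}+2\mathscr{A}_0\,\partial_K\widetilde{R}+\partial_K\widetilde{S}$ explicitly by substituting the closed-form expressions for $\mathscr{A}_0$, $\widetilde{R}$, $\widetilde{S}$ given in the excerpt. After factoring out $\sfq/K^4$, the resulting expression is a universal function of the single dimensionless variable $u=\pi\eta/(2K)\in(0,\pi)$ together with $K$, built from $\cot u$, $\sin(2u)$, $\cos(2u)$, and polynomial pieces in $K$. The dominant contribution comes from the term $\partial_K\widetilde{S}$, which contributes $\sim -\frac{4\pi^4}{K^4}\sfq$ from the logarithmic derivative $\partial_K\sfq$, while the $\mathscr{A}_0\partial_K\widetilde{R}$ term is the one that carries the cotangent singularity.

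The third step is the heart of the argument: bounding the cotangent singularity. Near the endpoint $K\downarrow\eta/2$ (where $u\uparrow\pi$) we have $\cot u\to-\infty$, so that $\mathscr{A}_0$ is unbounded. However, the cotangent enters only multiplied by $\widetilde{R}=(2\pi/K)\sin(2u)\sfq$, and $\cot u\cdot\sin(2u)=2\cos^2 u$ is bounded (in fact $\leq 2$). A similar cancellation resolves the $\cos(2u)$ contribution inside $\widetilde{S}$. After performing these cancellations explicitly, one is left with an expression of the form $\frac{\sfq}{K^4}\,\Psi(u,K)$ with $\Psi$ continuous on the compact set $\{(u,K):0<u<\pi,\,K\in[\eta/2,3]\}$, which admits a uniform lower bound of the form $-C\frac{\eta^2}{K^4}\sfq$ for some explicit constant $C$ by monotonicity/convexity considerations and elementary trigonometric estimates.

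Finally, combining these bounds yields
\begin{equation*}
\partial_K\bigl[\mathscr{A}_0^2+2\mathscr{A}_0\widetilde{R}+\mathscr{B}_0+\widetilde{S}\bigr]
> \frac{\eta^2}{2K^3}\Bigl(\frac{\pi^2}{2K}-1\Bigr) - C\frac{\eta^2}{K^4}\sfq,
\end{equation*}
and since $\sfq\leq\e^{-2\pi^2/3}<0.002$ on the range $K\in(\eta/2,3]$, the polynomial term overwhelms the $\sfq$-correction and the total exceeds $156\eta^2 K^{-4}\sfq$. The numerical constant $156$ will then come out of carrying through the explicit bounds on $\Psi(u,K)$, and the tightness of the constant is not essential as only a positive residue is needed to feed back into the proof of Proposition~\ref{prop:monotonicity small K}. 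The main obstacle I anticipate is carrying out the cancellation of the cotangent singularity cleanly enough to produce a uniform constant $C$: this requires rewriting $\mathscr{A}_0\widetilde{R}$ and its $K$-derivative in terms of the bounded functions $\cos^2 u,\sin(2u),\cos(2u)$ before attempting any estimate, rather than bounding $|\cot u|$ and $|\sin(2u)|$ separately.
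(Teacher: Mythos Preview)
Your strategy is correct and essentially matches the paper's proof: split into $\partial_K[\mathscr{A}_0^2+\mathscr{B}_0]$ (handled by Lemma~\ref{lem:derivative afrak**2 + bfrak}) plus the $\sfq$-linear correction, bound the correction from below by something of order $-C\eta^2 K^{-4}\sfq$, and then exploit the smallness of $\sfq$ on $(0,3]$. The paper streamlines one step relative to your plan: rather than differentiating by the product rule and then untangling the $\csc^2$ contribution from $\partial_K\cot(\pi\eta/2K)$, it first collapses $2\mathscr{A}_0\widetilde{R}+\widetilde{S}$ algebraically (your identity $\cot u\,\sin 2u=2\cos^2 u$ does this) to $\frac{2\pi}{K^2}\sfq\bigl[\eta\sin(\pi\eta/K)+2\pi(\cos(\pi\eta/K)-1)\bigr]$, a cotangent-free expression, and only then differentiates. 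This is exactly what your final remark recommends, so your anticipated obstacle is already resolved.

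One point where your sketch is too optimistic: in the final combination you write that the polynomial lower bound $\frac{\eta^2}{2K^3}\bigl(\frac{\pi^2}{2K}-1\bigr)$ ``overwhelms'' the $\sfq$-correction, but note that this polynomial bound is \emph{not} proportional to $\sfq$, so you cannot simply compare coefficients. The paper's device is to rewrite that term as $\frac{2\eta^2}{K^4}\sfq\cdot\frac{1}{8}\sfq^{-1}(\pi^2-2K)$ and then use the explicit numerical bound $\frac{1}{8}\e^{2\pi^2/K}>810/K^2$ on $(0,3]$, which converts the comparison into a quadratic inequality in $1/K$ that one checks at the endpoint $K=3$. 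The margin is fairly tight (the correction constant is about $-537$ while the available budget is about $540$), so getting the explicit $156$ does require carrying the bounds carefully rather than appealing to $\sfq<0.002$ loosely.
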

\begin{proof}
An explicit calculation shows that
\begin{align*}
	&\partial_K \left[ 2 \mathscr{A}_0(\eta,K) \widetilde{R}(\eta,K) + \widetilde{S}(\eta,K) \right] 
	\\
	& =  \frac{2 \pi }{K^4} \sfq 
	\left[ 2 \left(2 \pi ^2-K\right) \eta \sin \left(\frac{\pi  \eta}{K}\right)+4 \pi  \left(\pi ^2-K\right) \left(\cos \left(\frac{\pi  \eta}{K}\right)-1\right) -\pi  \eta^2 \cos \left(\frac{\pi  \eta}{K}\right) \right]
	\\
	& \ge 
    \frac{2 \pi^2 \eta^2 }{K^4} \e^{-2\pi^2/K} \left( 1 - \frac{2 \pi ^2}{K} - \frac{2 \pi^4}{K^2} \right),
\end{align*}
where in the last inequality we used the basic bounds $\sin(\pi \eta /K) \ge -\pi \eta /K$, $\cos(\pi \eta /K)-1 \ge - (\pi \eta)^2 / 2K^2$ and $\cos(\pi \eta /K) \le 1$.
This shows that
\begin{align*}
	& \partial_K \left[ \mathscr{A}_0(\eta,K)^2 + \mathscr{B}_0(\eta,K) \right] +  \partial_K \left[ 2\mathscr{A}_0(\eta,K) \widetilde{R}(\eta,K)  + \widetilde{S}(\eta,K) \right]
	\\
	&
	> \frac{\eta^2}{2K^3}\left( \frac{\pi^2}{2K} -1 \right) + \frac{2 \pi^2 \eta^2 }{K^4} \sfq \left( 1 - \frac{2 \pi ^2}{K} - \frac{2 \pi^4}{K^2} \right)
	\\
	& = 
    \frac{2 \eta^2}{K^4} \sfq \left[ \frac{1}{8} \sfq^{-1}   \left(\pi ^2-2 K\right) +\pi ^2 -\frac{2 \pi ^4}{K} -\frac{2 \pi ^6}{K^2} \right] 
	\\
	& > \frac{2 \eta^2}{K^4} \sfq \left[ \frac{810}{K^2} \left(\pi ^2-2 K\right) +\pi ^2 -\frac{2 \pi ^4}{K} -\frac{2 \pi ^6}{K^2} \right] 
	\\
	& > \frac{2 \eta^2}{K^4} \sfq \left[ \frac{ 6071 }{K^2}-\frac{1815}{K}+ 9 \right]
	\\
	& > \frac{156 \eta^2}{K^4} \sfq .
\end{align*}
In the last three inequalities we used, respectively: the elementary bound $\frac{1}{8} \sfq^{-1}=\frac{1}{8}\e^{2\pi^2/K}>\frac{810}{K^2}$ for $K\in(0,3]$; numerical evaluations of expressions involving powers of $\pi$; the fact that the function
$\frac{ 6071 }{K^2}-\frac{1815}{K}+ 9$
is strictly decreasing for $K\in (0,3]$ and equals $707/9>78$ at $K=3$.
\end{proof}

We can write then
\begin{align*}
	&2\mathscr{A}_0(\eta,K) R(\eta,K) + S(\eta,K)
	\\
	& = \sum_{n, \ell \ge 1} \frac{2 \pi}{K^2} \left[ \eta \sin \left( \frac{n \pi \eta}{K} \right) + \pi \left( \cot\left( \frac{\pi \eta}{2K} \right) \sin \left( \frac{n \pi \eta}{K} \right) -2n \right) + \pi n \left(  \cos \left( \frac{n \pi \eta}{K} \right)-1 \right) \right] \sfq^{n \ell}
	\\
	& = 2 \mathscr{A}_0(\eta,K) \widetilde{R}(\eta,K) +  \widetilde{S}(\eta,K) 
	\\
	& \qquad+ \sum_{ \substack{ n, \ell \ge 1 \\ (n , \ell) \neq (1,1) } } \frac{2 \pi}{K^2} \left[ \eta \sin \left( \frac{n \pi \eta}{K} \right) + \pi \left( \cot\left( \frac{\pi \eta}{2K} \right) \sin \left( \frac{n \pi \eta}{K} \right) -2n \right) + \pi n \left(  \cos \left( \frac{n \pi \eta}{K} \right)-1 \right) \right] \sfq^{n \ell}.
\end{align*}

\begin{lemma} \label{lem:bound derivative higher order}
	Let $\eta\in(0,6)$ and $K\in(\eta/2,3]$. Then, we have
	\begin{equation}
		\partial_K \left[ 2\mathscr{A}_0(\eta,K) (R(\eta,K)-\widetilde{R}(\eta,K))  + (S(\eta,K) -\widetilde{S}(\eta,K))\right] \ge  - \eta^2  \frac{ 100500 }{K^6} \sfq^2 .
	\end{equation}
\end{lemma}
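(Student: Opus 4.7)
The strategy is to work with the explicit series representation displayed just before the statement, namely
\begin{equation*}
2\mathscr{A}_0(\eta,K) R(\eta,K) + S(\eta,K) = \sum_{n,\ell\ge 1} T_{n,\ell}(\eta,K)\,\sfq^{n\ell},
\end{equation*}
where
\begin{equation*}
T_{n,\ell}(\eta,K) = \frac{2\pi}{K^2}\biggl[\eta\sin\biggl(\frac{n\pi\eta}{K}\biggr) + \pi\biggl(\cot\biggl(\frac{\pi\eta}{2K}\biggr)\sin\biggl(\frac{n\pi\eta}{K}\biggr) - 2n\biggr) + \pi n\biggl(\cos\biggl(\frac{n\pi\eta}{K}\biggr)-1\biggr)\biggr],
\end{equation*}
and similarly for $2\mathscr{A}_0\widetilde R+\widetilde S$, obtained by restricting the sum to $(n,\ell)=(1,1)$. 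Since $\sfq=\e^{-2\pi^2/K}<\e^{-2\pi^2/3}$ is small on $K\in(0,3]$, the double series converges uniformly on compact subsets of the parameter region (together with its $K$-derivative), so that one may differentiate in $K$ term-by-term. The plan is then to bound the resulting summand in absolute value and sum over $(n,\ell)\ne (1,1)$ (so that $n\ell\ge 2$), and finally to check that the resulting majorant is smaller, in absolute value, than $100500\,\eta^2 K^{-6}\sfq^2$ on $\eta\in(0,6)$, $K\in(\eta/2,3]$.

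The key analytic input is a sharp polynomial bound for $T_{n,\ell}$ and $\partial_K T_{n,\ell}$ as functions of $n$. Setting $u=\pi\eta/(2K)\in(0,\pi)$, a direct Taylor expansion gives the identity
\begin{equation*}
\cot(u)\sin(2nu)-2n = -\tfrac{2u^2}{3}(2n^3+n) + O(n^5 u^4),
\end{equation*}
uniformly for $u$ in a compact subinterval of $(0,\pi)$. Combined with the elementary estimates $|\sin(2nu)|\le 2nu$ and $|\cos(2nu)-1|\le 2n^2u^2$, this yields a bound of the form $|T_{n,\ell}(\eta,K)|\le C_1\,\eta^2\,n^3/K^3$ for an explicit constant $C_1$ depending only on the range of $u$, i.e., on the bounds for $\eta/K$; analogously one bounds $|\partial_K T_{n,\ell}(\eta,K)|\le C_2\,\eta^2\,n^3/K^4$. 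The derivative of the restricted sum is then majorized, in absolute value, by
\begin{equation*}
\sum_{\substack{n,\ell\ge 1\\ n\ell\ge 2}}\biggl(|\partial_K T_{n,\ell}(\eta,K)|\,\sfq^{n\ell} + |T_{n,\ell}(\eta,K)|\,\frac{2\pi^2 n\ell}{K^2}\,\sfq^{n\ell}\biggr) \le \frac{C_3\,\eta^2}{K^6}\sum_{\substack{n,\ell\ge 1\\ n\ell\ge 2}} n^4\ell\,\sfq^{n\ell}.
\end{equation*}

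To control the remaining double series, one splits the sum into contributions with $n\ell=2$, $n\ell=3$, $n\ell=4$, and $n\ell\ge 5$, then estimates the tail by $\sfq^5\sum_{n,\ell\ge 1}n^4\ell\,\sfq^{n\ell-5}$, which converges to a constant times $\sfq^5$ since $\sfq<1/2$ on $K\le 3$. The dominant contribution $n\ell=2$ (that is, $(n,\ell)\in\{(1,2),(2,1)\}$) produces a term of order $\sfq^2$; all other contributions are of order $\sfq^3$ or smaller and thus absorbed into the $\sfq^2$ leading term thanks to the smallness of $\sfq$ for $K\le 3$. Collecting everything gives a majorant of the form $C_4\,\eta^2 K^{-6}\sfq^2$, whose sign is controlled by the minus sign picked up from the bound.

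The main obstacle is not conceptual but numerical: the proof requires tracking constants sharply enough to arrive at the specific $100500$ figure. In particular, one must keep the constant $C_1$ small (which amounts to using $|2n^3+n|\le 3n^3$ and carefully retaining only the leading-order terms in $u$), and estimate the tail series $\sum_{n\ell\ge 5}n^4\ell\,\sfq^{n\ell}$ via the crude inequality $\sfq^{n\ell}\le \sfq^2\cdot\sfq^{n\ell-2}$ together with $\sfq\le \e^{-2\pi^2/3}$, so that the cumulative constant stays within the target. This is the computational core of the argument; once the constants are pinned down, the inequality follows by verifying that the monotone function $K^{-6}\sfq^2$ stays below the stated threshold on $K\in(\eta/2,3]$.
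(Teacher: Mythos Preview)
Your strategy matches the paper's: differentiate the explicit double series term-by-term, bound each summand in absolute value by something of the form $C\,\eta^2\, n^4\ell\, K^{-6}\,\sfq^{n\ell}$, and then sum over $(n,\ell)\neq(1,1)$ to extract a factor $\sfq^2$ with an explicit numerical constant. This is exactly how the paper proceeds, and the paper's proof is precisely the constant-tracking you flag as ``the computational core.''

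There is, however, a real slip in how you justify the key bound for the cotangent term. You write
\[
\cot(u)\sin(2nu)-2n \;=\; -\tfrac{2u^2}{3}(2n^3+n) + O(n^5u^4)
\]
and then claim $|T_{n,\ell}|\le C_1\eta^2 n^3/K^3$. But the $O(n^5u^4)$ remainder would contribute a term of order $\eta^2 n^5/K^3$ (since $u<\pi$ only lets you absorb $u^2$, not $u^4$, into $\eta^2$), so your stated $n^3$ bound does not follow from your stated expansion. The paper avoids this by using the \emph{exact} Dirichlet-kernel identity
\[
\cot(u)\sin(2nu)-2n \;=\; (\cos(2nu)-1) + 2\sum_{k=1}^{n-1}(\cos(2ku)-1),
\]
from which $|\cos(2ku)-1|\le 2k^2u^2$ gives directly $|\cot(u)\sin(2nu)-2n|\le \tfrac{2u^2}{3}(2n^3+n)$ with no remainder, valid for \emph{all} $u$ (not just on compact subintervals of $(0,\pi)$, which is important since $u=\pi\eta/(2K)$ can approach $\pi$). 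The same identity handles $\partial_K$ of this expression. With this fix your argument goes through; the paper then carries out the termwise bound to reach $|\partial_K[T_{n,\ell}\sfq^{n\ell}]|\le 18\pi^5\eta^2 K^{-6}\,n^4\ell\,\sfq^{n\ell}$, evaluates $\sum_{(n,\ell)\neq(1,1)}n^4\ell\,\sfq^{n\ell}$ via $\sum_n n^4\sfq^n/(1-\sfq^n)^2-\sfq$ in closed form, and bounds the result numerically for $K\le 3$ to land on $100500$.
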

\begin{proof}
For the derivative with respect to the variable $K$ of the general term in the above summation we have the following bound
\begin{align*}
	& \Bigg| \partial_K  \left[  \eta \sin \left( \frac{n \pi \eta}{K} \right) + \pi \left( \cot\left( \frac{\pi \eta}{2K} \right) \sin \left( \frac{n \pi \eta}{K} \right) -2n \right)  
	+ \pi n \left(  \cos \left( \frac{n \pi \eta}{K} \right)-1 \right) \right] \frac{2 \pi}{K^2} \sfq^{n \ell} \Bigg|
	\\
	& \le \eta^2 \sfq^{n \ell} \left[ \left( \frac{4 \pi}{K^3} + \frac{4 \pi^3}{K^4} n \ell \right) \left( \frac{n \pi }{K } + \frac{\pi^2 (2n^3+n) }{6 K^2} + \frac{n^2 \pi^2}{ 2 K^2}  \right) + \frac{2 \pi}{K^2} \left( \frac{n \pi}{K^2} + \frac{\pi^2 (2n^3+n) }{ 3 K^3} + \frac{n^2 \pi^2}{K^3} \right) \right]
	\\
	& \le \eta^2 \sfq^{n \ell} \frac{2\pi^2 n }{3K^6} \left[ 2 n^3 \pi ^3 \ell + n^2 \left(2 \pi  K^2+2 \pi  K+3 \pi ^3 \ell \right)+n \left(3 \pi  K^2+6 \pi ^2 K \ell +3 \pi  K+\pi ^3 \ell \right)
	\right.
	\\
	& \qquad \qquad \qquad \qquad \left.
	+9 K^2+\pi  K^2+\pi  K \right]
	\\
	& \le \eta^2 \sfq^{n \ell} \frac{2\pi^2 n }{3K^6} \left[ 2 n^3 \pi ^3 \ell + 7 n^2  \pi ^3 \ell  + 13 n \pi^3 \ell +5 \pi^3 \right]
	\\
	& \le \eta^2 \sfq^{n \ell} \frac{2\pi^5 n \ell }{3K^6} \left[ 2 n^3 + 7 n^2  + 13 n  +5 \right]
	\\
	& \le \eta^2 \sfq^{n \ell} \frac{ \pi^5 n^4 \ell }{K^6} \frac{2}{3} 27 . 
\end{align*}
The first bound was obtained combining the following more elementary estimates
\begin{align*}
	\left| \eta \sin \left( \frac{n \pi \eta}{K} \right) \right| &\le \eta^2 \frac{n \pi }{K}
\\
	\left| \cot \left( \frac{\pi \eta}{2K} \right) \sin\left( \frac{n \pi \eta}{K}  \right) -2 n \right| &\le \eta^2 \frac{ (2n^3 +n) \pi^2 }{6 K^2}
\\
	\left| \cos \left( \frac{n \pi \eta}{K} \right)-1 \right| &\le \eta^2 \frac{ n^2 \pi^2 }{2 K^2}
\\
	\left| \partial_K \left[ \eta \sin \left( \frac{n \pi \eta}{K} \right) \right] \right| &\le \eta^2 \frac{n\pi}{K^2}
\\
	\left| \partial_K \left[\cot \left( \frac{\pi \eta}{2K} \right) \sin\left( \frac{n \pi \eta}{K}  \right) -2 n \right] \right| &\le \eta^2 \frac{(2n^3+n) \pi^2}{3 K^3}
\\
	\left| \partial_K \left[ \cos \left( \frac{n \pi \eta}{K} \right)-1 \right] \right| &\le \eta^2 \frac{ n^2 \pi^2 }{ K^3}
\\
	\left| \partial_K \left[ \frac{2 \pi }{K} \sfq^{n \ell} \right] \right| &\le  \sfq^{n \ell} \left( \frac{4 \pi}{K^3} + \frac{4 \pi^3}{K^4} n \ell \right).
\end{align*}
Finally we have the estimate
\begin{align*}
	\sum_{ \substack{ n, \ell \ge 1 \\ (n , \ell) \neq (1,1) } } \sfq^{n \ell} n^4 \ell 
	& =\biggl( \sum_{n \ge 1} n^4 \frac{\sfq^n}{(1-\sfq^n)^2} \biggr)-\sfq
	\\
	& \le \frac{1}{(1-\sfq)^2} \biggl(\sum_{n \ge 1} n^4 \sfq^n \biggr)-\sfq
	\\
	& = \sfq^2 \frac{ \left(\sfq^6-7 \sfq^5+21 \sfq^4-35 \sfq^3+36 \sfq^2-10 \sfq+18\right)}{(1-\sfq)^7}
	\\
	& \le \sfq^2 \frac{ \left(\sfq^6 +21 \sfq^4+36 \sfq^2+18\right)}{(1-\sfq)^7}.
\end{align*}
The above estimate shows that
\begin{align*}
		&
		\partial_K \sum_{ \substack{ n, \ell \ge 1 \\ (n , \ell) \neq (1,1) } } \frac{2 \pi}{K^2} \left[ \eta \sin \left( \frac{n \pi \eta}{K} \right) + \pi \left( \cot\left( \frac{\pi \eta}{2K} \right) \sin \left( \frac{n \pi \eta}{K} \right) -2n \right) 
        + 
        \pi n \left(  \cos \left( \frac{n \pi \eta}{K} \right)-1 \right) \right] \sfq^{n \ell}
		\\
		& > - \eta^2 \sfq^2 \frac{18 \pi^5  }{K^6} \frac{ \left(\sfq^6 +21 \sfq^4+36 \sfq^2+18\right)}{(1-\sfq)^7}
		\\
		& > - \eta^2 \sfq^2 \frac{ 100500 }{K^6}
\end{align*}
where in the last inequality we used the fact that for $K\in(0,3]$ we have
\begin{equation}
	 18 \pi^5   \frac{ \left(\sfq^6 +21 \sfq^4+36 \sfq^2+18\right)}{(1-\sfq)^7} < 100457.
\end{equation}
\end{proof}

\begin{lemma} \label{lem:bound derivative R^2}
	Let $\eta\in(0,6)$ and $K\in(\eta/2,3]$. Then, we have
	\begin{equation}
		\left| \partial_K R(\eta,K)^2 \right| \le  \eta^2 \, \frac{62500}{K^6} \, \e^{-4 \pi^2/K}.
	\end{equation}
\end{lemma}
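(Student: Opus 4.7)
The plan is to bound $|R(\eta,K)|$ and $|\partial_K R(\eta,K)|$ separately by expressions each carrying a single factor of $\sfq = \e^{-2\pi^2/K}$, and then to combine them via $|\partial_K R^2| = 2|R||\partial_K R|$; the product then automatically produces the required factor $\sfq^2 = \e^{-4\pi^2/K}$ in the final bound.

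First, to estimate $|R|$, I would apply the triangle inequality to the series (\ref{eq:R}) and use the elementary bounds $|\sin(n\pi\eta/K)| \le n\pi\eta/K$ together with $(1-\sfq^n)^{-1} \le (1-\sfq)^{-1}$ (valid for all $n\ge 1$ since $\sfq\in(0,1)$) and the closed-form $\sum_{n\ge 1} n\sfq^n = \sfq/(1-\sfq)^2$. This yields a bound of the shape
\begin{equation*}
|R(\eta,K)| \le \frac{2\pi^2\eta\sfq}{K^2(1-\sfq)^3}.
\end{equation*}

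Next, for $|\partial_K R|$, I would differentiate each term of (\ref{eq:R}) with respect to $K$, exposing three contributions coming from the $K$-dependencies in the prefactor $1/K$, in $\sfq^n/(1-\sfq^n)$ (using $\partial_K\sfq = 2\pi^2 K^{-2}\sfq$), and in $\sin(n\pi\eta/K)$. The same elementary bounds apply, together with the closed-forms $\sum_{n\ge 1} n^2\sfq^n = \sfq(1+\sfq)/(1-\sfq)^3$ and $\sum_{n\ge 1} n^3\sfq^n = \sfq(1+4\sfq+\sfq^2)/(1-\sfq)^4$ to absorb the extra powers of $n$ introduced by the differentiation. The result is a bound of the form $|\partial_K R(\eta,K)| \le C(\sfq)\,\eta\sfq/K^4$ where $C(\sfq)$ is an explicit rational function of $\sfq$ bounded on the relevant range.

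Multiplying the two bounds gives $|\partial_K R^2|\le A(\sfq)\,\eta^2\sfq^2/K^6$ for an explicit rational function $A(\sfq)$. Since $K\in(\eta/2,3]\subset(0,3]$ forces $\sfq\le \e^{-2\pi^2/3}\approx 0.00139$, the function $A(\sfq)$ is comfortably bounded, and a direct numerical verification (in the same spirit as the ones carried out in Lemmas~\ref{lem:bound derivative a^2+b + first order}--\ref{lem:bound derivative higher order}) shows that $A(\sfq)\le 62500$ throughout this range. No step is technically delicate; the only point requiring care is the bookkeeping for the rational factors in $(1-\sfq)^{-1}$ to ensure the final numerical constant $62500$ is achieved, and this is entirely routine given that $\sfq$ is very small on $K\in(0,3]$.
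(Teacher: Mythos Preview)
Your proposal is correct and follows essentially the same approach as the paper's proof: bound $|R|$ and $|\partial_K R|$ separately (each carrying a factor $\sfq$), combine via $|\partial_K R^2|=2|R||\partial_K R|$, and then verify the resulting numerical constant. The only cosmetic difference is that the paper absorbs all the $(1-\sfq^n)^{-1}$ factors into a single constant $\mathsf{c}=(1-\e^{-2\pi})^{-1}$ rather than keeping rational functions of $\sfq$, arriving at $64\pi^6\mathsf{c}^8\approx 62456<62500$.
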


\begin{proof}
Define the constant
\begin{equation}
	\mathsf{c} = \frac{1}{1-\e^{-2 \pi}} = 1.00187...
\end{equation}
Notice that, for all $0<\eta<2K\le 6$ we have 
	\begin{equation}
		\left| \frac{1}{1-\e^{-2n \pi^2/K}} \sin\left( \frac{n \pi \eta}{K} \right) \right| < \mathsf{c} \frac{n \pi \eta}{K} .
	\end{equation}
	Then, we have
	\begin{equation}\label{eq:bound abs R}
		\begin{split}
			\left| R(\eta,K) \right| & \le \frac{2 \pi}{K} \sum_{n \ge 1} \e^{-2n \pi^2/K}  \left| \frac{1}{1-\e^{-2n \pi^2/K}} \sin\left( \frac{n \pi \eta}{K} \right) \right|
			\\& \le \frac{2 \pi^2 \eta}{K^2} \mathsf{c} \sum_{n \ge 1} n \e^{-2n \pi^2/K} 
			\\
			& =  \frac{2 \pi^2 \eta}{K^2} \mathsf{c} \frac{\e^{-2\pi^2/K}}{ (1-\e^{-2\pi^2/K})^2}
			\\
			& \le  \frac{2 \pi^2 \eta}{K^2} \mathsf{c}^3 \e^{-2\pi^2/K}.
		\end{split} 
	\end{equation}
	To produce a bound for the derivative of $R$ compute
	\begin{equation}
		\begin{split}
			& \partial_K \left[ \frac{2 \pi}{K}  \frac{\e^{-2n \pi^2/K}}{1-\e^{-2n \pi^2/K}} \sin\left( \frac{n \pi \eta}{K} \right) \right]
			\\
			& = -\frac{2 \pi}{K^2} \left[ \frac{n \pi \eta \cos \left( \frac{n \pi \eta}{K} \right)}{(1-\e^{-2 n \pi^2/K})K} + \frac{ \sin \left( \frac{n \pi \eta}{K} \right)}{(1-\e^{-2 n \pi^2/K})} - \frac{ 2 n \pi^2 \sin \left( \frac{n \pi \eta}{K} \right)}{(1-\e^{-2 n \pi^2/K})^2 K}  \right] \e^{-2 n \pi^2/K}.
		\end{split}
	\end{equation}
	Then, through basic estimates we obtain
	\begin{equation}
		\begin{split}
    			& \left| \partial_K \left[ \frac{2 \pi}{K}  \frac{\e^{-2n \pi^2/K}}{1-\e^{-2n \pi^2/K}} \sin\left( \frac{n \pi \eta}{K} \right) \right] \right|
			\\
			& \le \frac{2 \pi}{K^2} \left[ \frac{n \pi \eta}{K} \mathsf{c} + \frac{n \pi \eta}{K} \mathsf{c} + \frac{2 n^2 \pi^3\eta }{K^2} \mathsf{c}^2  \right]  \e^{-2 n \pi^2/K}.
			\\
			& \le \eta \frac{4 \pi^2}{K^3} \mathsf{c}^2 \left[ 1     + \frac{\pi^2 }{K}   \right] n^2 \e^{-2 n \pi^2/K}.
		\end{split}
	\end{equation}
	Summing over $n$, we obtain
	\begin{equation}\label{eq:bound abs derivative R}
		\begin{split}
			|\partial_K R(\eta,K)| & \le \eta \frac{4 \pi^2}{K^3} \mathsf{c}^2 \left[ 1  + \frac{\pi^2 }{K}   \right]   \sum_{n \ge 1} n^2 \e^{-2 n \pi^2/K}
			\\
			&
			= \eta \frac{4 \pi^2}{K^3} \mathsf{c}^2 \left[ 1  + \frac{\pi^2 }{K}   \right] \frac{1+ \e^{-2 \pi^2/K}}{(1-\e^{-2 \pi^2/K})^3} \e^{-2 \pi^2/K}
			\\
			&
			\le \eta \frac{8 \pi^2}{K^3} \mathsf{c}^5 \left[ 1  + \frac{\pi^2 }{K}   \right] \e^{-2 \pi^2/K}.
		\end{split}
	\end{equation}
	To complete the proof we combine the estimates \eqref{eq:bound abs R}, \eqref{eq:bound abs derivative R}, obtaining
	\begin{equation}
		\begin{split}
			\left| \partial_K R(\eta,K)^2 \right| &\le 2 \left( \frac{2 \pi^2 \eta}{K^2} \mathsf{c}^3 \right) \left( \eta \frac{8 \pi^2}{K^3} \mathsf{c}^5 \left[ 1  + \frac{\pi^2 }{K}   \right]  \right) \e^{-4 \pi^2/K} 
			\\
			& \le \eta^2 \frac{ 64 \pi^6 \mathsf{c}^8}{K^6} \e^{-4 \pi^2/K}
			\\
			& \le \eta^2 \, \frac{72000}{K^6} \, \e^{-4 \pi^2/K},
		\end{split}
	\end{equation}
    where in the last inequality we used the explicit evaluation $64 \pi^6 \mathsf{c}^8 = 62455.9... < 62500$.
\end{proof}

\subsection{Large $K$: expansion in hyperbolic functions}

Introducing the auxiliary function
\begin{equation}
	\mathsf{f}_\eta(x) =  \coth\left( x - \frac{\eta}{2} \right) - \coth\left( x + \frac{\eta}{2} \right)
\end{equation} 
and using the hyperbolic series expansions \eqref{eq:wZ}, \eqref{eq:wpK}, we have
\begin{equation}\label{eq:a and b}
	\mathscr{A}(\eta,K) = \frac{1}{2} \coth\left( \frac{\eta}{2} \right) - \frac{1}{2} \sum_{n=1}^{\infty}\mathsf{f}_\eta(nK)
\end{equation}
and
\begin{equation}
	\begin{split}
		&\mathscr{B} (\eta,K) 
		\\
		&= -\frac{5 + \cosh(\eta)}{24 \sinh (\frac{\eta}{2})^{2} } - \frac{1}{6} + \frac{1}{4} \sum_{n=1}^{\infty} \left\{ \frac{6}{\sinh(K n)^2} - \frac{1}{\sinh(K n-\frac{\eta}{2})^2} - \frac{1}{\sinh(K n+\frac{\eta}{2})^2} \right\}.
	\end{split}
\end{equation}
Using the above expansions we can write
\begin{equation}\label{eq: f2 series}
	\begin{split}
		&\mathscr{A}(\eta,K)^2+\mathscr{B}(\eta,K)
		\\ 
		& = \frac{1}{4}\coth\left( \frac{\eta}{2} \right)^2 -\frac{5 + \cosh(\eta)}{24 \sinh (\frac{\eta}{2})^{2} } - \frac{1}{6} +  \left( \frac{1}{2} \sum_{n \ge 1} \mathsf{f}_\eta(n K)  \right)^2
		\\
		& + \frac{1}{4} \sum_{n \ge 1} \left\{ \frac{6}{\sinh(K n)^2} - \frac{1}{\sinh(K n-\frac{\eta}{2})^2} - \frac{1}{\sinh(K n+\frac{\eta}{2})^2} - 2 \coth\left( \frac{\eta}{2} \right) \mathsf{f}_\eta(n K) \right\}.
	\end{split}
\end{equation}

\begin{lemma}
	The function $\mathsf{f}_\eta(x)$ is strictly positive and strictly decreasing for $x>\eta/2$.
\end{lemma}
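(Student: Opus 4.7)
The plan is to prove both assertions as direct consequences of elementary monotonicity properties of the hyperbolic cotangent and cosecant on the positive real axis.

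For strict positivity, I would observe that for $x > \eta/2 > 0$ both arguments $x - \eta/2$ and $x + \eta/2$ lie in $(0,+\infty)$, where $\coth$ is a strictly decreasing function (this is immediate, for instance, from $\coth'(y) = -\operatorname{csch}^2(y) < 0$ for $y \neq 0$). Since $x - \eta/2 < x + \eta/2$, it follows that $\coth(x-\eta/2) > \coth(x+\eta/2)$, which is exactly the statement $\mathsf{f}_\eta(x) > 0$.

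For strict monotonicity, I would differentiate directly:
\begin{equation}
\mathsf{f}_\eta'(x) = -\operatorname{csch}^2\!\bigl(x-\tfrac{\eta}{2}\bigr) + \operatorname{csch}^2\!\bigl(x+\tfrac{\eta}{2}\bigr).
\end{equation}
Since $\operatorname{csch}$ is positive and strictly decreasing on $(0,+\infty)$, the same holds for $\operatorname{csch}^2$, and because both $x \pm \eta/2$ are positive with $x-\eta/2 < x+\eta/2$, one obtains $\operatorname{csch}^2(x-\eta/2) > \operatorname{csch}^2(x+\eta/2)$, so $\mathsf{f}_\eta'(x) < 0$ for all $x > \eta/2$. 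This shows $\mathsf{f}_\eta$ is strictly decreasing on $(\eta/2,+\infty)$.

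There is no real obstacle in this lemma: the argument is a one-line consequence of $\coth$ being strictly decreasing on $(0,+\infty)$ (giving positivity) and $\operatorname{csch}^2$ being strictly decreasing on $(0,+\infty)$ (giving strict monotonicity). The only point requiring minor care is making explicit that the hypothesis $x > \eta/2$ ensures that both shifted arguments remain in the positive half-line where these monotonicity properties hold.
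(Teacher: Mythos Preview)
Your proof is correct and takes essentially the same approach as the paper, which simply remarks that the claim is evident upon computing $\mathsf{f}_\eta'(x)$. You have supplied the details the paper omits: the monotonicity of $\coth$ on $(0,+\infty)$ gives positivity, and the monotonicity of $\operatorname{csch}^2$ on $(0,+\infty)$ gives $\mathsf{f}_\eta'(x)<0$.
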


\begin{proof}
	This is evident by computing the derivative $\mathsf{f}_\eta'(x)$.
\end{proof}

\begin{lemma} \label{lem: increasing term}
	The function
	\begin{equation}
		\begin{split}
			&\frac{6}{\sinh(x)^2} - \frac{1}{\sinh(x-\frac{\eta}{2})^2} - \frac{1}{\sinh(x+\frac{\eta}{2})^2} - 2 \coth\left( \frac{\eta}{2} \right) \mathsf{f}_\eta(x)
			\\
			& = \left[ \frac{2}{\sinh(x)^2} - \frac{1}{\sinh(x-\frac{\eta}{2})^2} - \frac{1}{\sinh(x+\frac{\eta}{2})^2} \right] + 2 \left[\frac{2}{\sinh(x)^2}-  \coth\left( \frac{\eta}{2} \right) \mathsf{f}_\eta(x) \right]
		\end{split}
	\end{equation}
	is strictly increasing for $x>\eta/2$. 
\end{lemma}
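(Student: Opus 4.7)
The plan is to handle the two bracketed terms in the decomposition independently and show that each is strictly increasing on $(\eta/2,+\infty)$; their sum is then automatically strictly increasing.

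For the first bracket, I would compute its derivative directly. Writing $h(x)=1/\sinh^2(x)$ and noting $h'(x)=-2\cosh(x)/\sinh^3(x)=-2g(x)$, where
\begin{equation*}
g(x)=\frac{\cosh(x)}{\sinh^3(x)},
\end{equation*}
one obtains
\begin{equation*}
\frac{\mathrm d}{\mathrm dx}\biggl[\frac{2}{\sinh^2 x}-\frac{1}{\sinh^2(x-\eta/2)}-\frac{1}{\sinh^2(x+\eta/2)}\biggr]=2\bigl[g(x-\tfrac{\eta}{2})+g(x+\tfrac{\eta}{2})-2g(x)\bigr],
\end{equation*}
which is a second finite difference of $g$. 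Strict positivity is then a consequence of strict convexity of $g$ on $(0,+\infty)$, and this I would verify by direct computation. Using $\sinh^2 x=\cosh^2 x-1$, one finds $g'(x)=-(1+2\cosh^2 x)/\sinh^4 x$ and, after one more differentiation and simplification,
\begin{equation*}
g''(x)=\frac{4\cosh(x)\bigl(2+\cosh^2 x\bigr)}{\sinh^5 x},
\end{equation*}
manifestly positive for $x>0$. Since $x-\eta/2>0$ on the interval under consideration, the second-difference inequality is strict.

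For the second bracket, my plan is to reduce to an elementary rational inequality via the substitution $u=\cosh(2x)$, which is strictly increasing in $x>0$ and satisfies $u>c:=\cosh\eta>1$ whenever $x>\eta/2$ and $\eta>0$. Using the double-angle identity $\cosh(2x)-1=2\sinh^2 x$, the product-to-sum identity
\begin{equation*}
\sinh\bigl(x-\tfrac{\eta}{2}\bigr)\sinh\bigl(x+\tfrac{\eta}{2}\bigr)=\frac{\cosh(2x)-\cosh\eta}{2},
\end{equation*}
together with $\coth(\eta/2)\sinh\eta=2\cosh^2(\eta/2)=1+c$, one checks that
\begin{equation*}
2\biggl[\frac{2}{\sinh^2 x}-\coth\bigl(\tfrac{\eta}{2}\bigr)\,\mathsf{f}_\eta(x)\biggr]=2\,G(u),\qquad G(u)=\frac{4}{u-1}-\frac{2(1+c)}{u-c}.
\end{equation*}
Monotonicity of the second bracket in $x$ is thus equivalent to monotonicity of $G$ in $u>c$. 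Differentiating and clearing denominators reduces $G'(u)>0$ to $(1+c)(u-1)^2>2(u-c)^2$, which after expansion factors as $(c-1)\bigl[u^2+2u-(2c+1)\bigr]>0$. Since $c>1$, this is equivalent to $(u+1)^2>2(c+1)$, and for $u\geq c$ it follows from $(c+1)^2>2(c+1)$, i.e.\ from the strict inequality $c>1$.

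The argument is largely computational, and no serious analytic obstacle is anticipated. The main care required is organizational: choosing the correct substitution $u=\cosh(2x)$ and the correct hyperbolic identities to arrive at the rational function $G(u)$ for the second bracket, and carrying out the two differentiations to obtain the clean formula for $g''$ for the first bracket. Once these simplifications are in place, both monotonicity statements become manifest.
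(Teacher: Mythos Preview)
Your proof is correct. The treatment of the first bracket is the same as the paper's, merely phrased in terms of convexity of $g=\cosh/\sinh^3$ rather than concavity of $h'=-2g$.

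For the second bracket you take a genuinely different and cleaner route. The paper computes the $x$-derivative, then writes it as an integral in the parameter $\eta$ (value at $\eta=0$ plus $\int_0^\eta \partial_s(\cdots)\,\mathrm{d}s$) and checks that the limit at $\eta=0$ vanishes while the integrand simplifies to a manifestly positive expression. Your substitution $u=\cosh(2x)$, $c=\cosh\eta$ bypasses this entirely: the bracket becomes the rational function $G(u)=4/(u-1)-2(1+c)/(u-c)$, and $G'(u)>0$ reduces to the polynomial inequality $(c-1)\bigl[(u+1)^2-2(c+1)\bigr]>0$, immediate from $u>c>1$. This is more elementary and avoids the somewhat opaque integral trick; the paper's version has the advantage of producing a ``manifestly positive'' integrand without algebraic manipulation, but your argument is shorter and more transparent.
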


\begin{proof}
To see this we analyze separately the two terms in the second line. First observe that
\begin{equation}
	\frac{2}{\sinh(x)^2} - \frac{1}{\sinh(x-\frac{\eta}{2})^2} - \frac{1}{\sinh(x+\frac{\eta}{2})^2}
\end{equation}
is increasing in $x$ by virtue of the fact that the first derivative of the function $\frac{1}{\sinh(x)^2}$ is concave, which is straightforward to verify.
For the remaining term we evaluate its derivative as follows
\begin{align*}
	&\frac{\d}{\d x} \left[\frac{2}{\sinh(x)^2}-  \coth\left( \frac{\eta}{2} \right) \mathsf{f}_\eta(x) \right]
	\\
	& = \coth \left( \frac{\eta}{2} \right) \left[ \frac{1}{\sinh (x-\frac{\eta}{2})^2 } - \frac{1}{\sinh (x+\frac{\eta}{2})^2 } \right] - 4\frac{\coth(x)}{\sinh(x)^2}
	\\
	& = \lim_{\eta \to 0} \left\{ \coth \left( \frac{\eta}{2} \right) \left[ \frac{1}{\sinh (x-\frac{\eta}{2})^2 } - \frac{1}{\sinh (x+\frac{\eta}{2})^2 } \right] - 4\frac{\coth(x)}{\sinh(x)^2}\right\}
	\\ 
	& \quad + \int_0^{\eta} \frac{\d}{\d s} \left\{ \coth \left( \frac{s}{2} \right) \left[ \frac{1}{\sinh (x-\frac{s}{2})^2 } - \frac{1}{\sinh (x+\frac{s}{2})^2 } \right] - 4\frac{\coth(x)}{\sinh(x)^2} \right\} \d s
	\\
	& = \frac{1}{2}\int_0^{\eta} [2+ \cosh(2x)+\cosh(s)] \frac{\sinh(2x) \sinh(s)}{\sinh(x-\frac{s}{2})^3 \sinh(x+\frac{s}{2})^3}  \d s  >0,
\end{align*}
whenever $x>\eta/2>0$. Above, in the second equality we expressed the function of $\eta$ as the integral of its derivative over $(0,\eta)$ plus its value at $\eta=0$.
\end{proof}

Using the expression~\eqref{eq: f2 series} we can evaluate the derivative
\begin{equation}\label{eq: a^2 + b expansion K large}
	\begin{split}
		&\partial_K [\mathscr{A}(\eta,K)^2+\mathscr{B}(\eta,K)] 
		\\
		& = \frac{1}{2} \left( \sum_{n \ge 1} \mathsf{f}_\eta(nK) \right) \left(  \sum_{n \ge 1} n \mathsf{f}_\eta'(nK)  \right)
		\\
		&+ \frac{1}{4} \sum_{n \ge 1} \partial_K \left\{ \frac{6}{\sinh(K n)^2} - \frac{1}{\sinh(K n-\frac{\eta}{2})^2} - \frac{1}{\sinh(K n+\frac{\eta}{2})^2} - 2 \coth\left( \frac{\eta}{2} \right) \mathsf{f}_\eta(n K) \right\}.
	\end{split}
\end{equation}

\begin{proposition} \label{prop:increasing property K large}
	Let $\eta > 0$ and let $K>\max(\eta/2,3)$. Then, we have
	\begin{equation}
		 \partial_K \left[ \mathscr{A} (\eta,K)^2 + \mathscr{B} (\eta,K) \right] > 0. 
	\end{equation}
\end{proposition}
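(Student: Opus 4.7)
The plan is to analyze the right-hand side of the expansion \eqref{eq: a^2 + b expansion K large}, which splits $\partial_K[\mathscr{A}(\eta,K)^2+\mathscr{B}(\eta,K)]$ into two contributions of opposite sign. The first term $T_1 := \frac{1}{2}S_1 S_2$ with $S_1 := \sum_{n\ge 1} \mathsf{f}_\eta(nK)$ and $S_2 := \sum_{n\ge 1} n\mathsf{f}_\eta'(nK)$ is strictly negative, since $\mathsf{f}_\eta>0$ and $\mathsf{f}_\eta'<0$ on $(\eta/2,+\infty)$ by the lemma immediately preceding. The second contribution, $T_2$, is a sum of derivatives in $K$ of the functions of Lemma \ref{lem: increasing term} evaluated at $nK$; since these are strictly increasing in their argument for $x>\eta/2$, and $\partial_K = n\partial_x$, each summand in $T_2$ is strictly positive. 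Hence the proof reduces to showing $T_2 > -T_1$ for $K > \max(\eta/2,3)$.

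The quantitative step would proceed by expanding everything as double series in $e^{-2mnK}$ using the identities $\coth(x)-1 = 2\sum_{m\ge 1}e^{-2mx}$ and $\mathrm{csch}^2(x) = 4\sum_{m\ge 1}m\, e^{-2mx}$, which give
\begin{equation*}
\mathsf{f}_\eta(x) = 4\sum_{m\ge 1}\sinh(m\eta)\,e^{-2mx}, \qquad \mathsf{f}_\eta'(x) = -8\sum_{m\ge 1}m\sinh(m\eta)\,e^{-2mx},
\end{equation*}
and an analogous expansion for the bracket appearing in $T_2$. A short calculation using $\coth(\eta/2)\sinh(\eta)=1+\cosh(\eta)$ yields the explicit forms
\begin{equation*}
T_2 = 4\sum_{m\ge 1}\bigl[m^2(\cosh(m\eta)-3) + m\coth(\eta/2)\sinh(m\eta)\bigr]\frac{e^{-2mK}}{(1-e^{-2mK})^2},
\end{equation*}
\begin{equation*}
-T_1 = 16\sum_{m_1,m_2\ge 1} m_2\sinh(m_1\eta)\sinh(m_2\eta)\frac{e^{-2(m_1+m_2)K}}{(1-e^{-2m_1K})(1-e^{-2m_2K})^2}.
\end{equation*}
In particular, the $m=1$ term of $T_2$ equals $16\sinh^2(\eta/2)\frac{e^{-2K}}{(1-e^{-2K})^2}$ (using $4\sinh^2(\eta/2) = \cosh(\eta)-3+\coth(\eta/2)\sinh(\eta)$), while the $m_1=m_2=1$ piece of $-T_1$ is $16\sinh^2(\eta)\frac{e^{-4K}}{(1-e^{-2K})^3}$, giving a ratio $4\cosh^2(\eta/2)\frac{e^{-2K}}{1-e^{-2K}}\le \frac{e^{\eta-2K}}{1-e^{-2K}}$. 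For $K>\max(\eta/2,3)$ this is strictly less than one.

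The main obstacle is a uniform control of the higher-order ($m\ge 2$) and off-diagonal ($m_1\ne m_2$) contributions, particularly in the regime where $\eta$ is large and $K$ only slightly exceeds $\eta/2$, since $\sinh(m\eta)$ and $\cosh(m\eta)$ grow rapidly. The key tool here would be the elementary identity $2\sinh(m_1\eta)\sinh(m_2\eta) = \cosh((m_1+m_2)\eta) - \cosh((m_1-m_2)\eta)$, which allows the double sum in $-T_1$ to be reorganized by total index $m = m_1+m_2$ and compared term by term with the single sum for $T_2$; combined with an elementary lower bound $m^2(\cosh(m\eta)-3) + m\coth(\eta/2)\sinh(m\eta) \ge c_m \cosh(m\eta)$ with explicit $c_m>0$, and the inequality $\frac{1}{(1-e^{-2mK})(1-e^{-2m_2K})^2}\le \frac{1}{(1-e^{-2K})^3}$, this reduces the claim to a geometric-series comparison. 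The delicate part is tracking the cancellations near the boundary $K\downarrow \eta/2$, where both $T_1$ and the $n=1$ term of $T_2$ develop singularities of the same order; one expects the cancellation to be explicit (of order $1/\delta^3$ with $\delta = K-\eta/2$), leaving a positive subleading contribution of order $1/\delta^2$ that suffices to conclude.
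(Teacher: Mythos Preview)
Your expansion of $T_1$ and $T_2$ as series in $e^{-2mK}$ is correct, but the strategy runs into a real obstruction, and it is exactly the one you flag at the end without resolving. First, a minor slip: the inequality $4\cosh^2(\eta/2)\le e^\eta$ is false (in fact $4\cosh^2(\eta/2)=e^\eta+2+e^{-\eta}$). More importantly, even the corrected leading-order ratio $4\cosh^2(\eta/2)\,\tfrac{e^{-2K}}{1-e^{-2K}}$ is not uniformly below~$1$: as $K\downarrow\eta/2$ it tends to $\coth(\eta/2)(1+e^{-\eta})>1$. So the $m=1$ term of $T_2$ alone does not dominate even the $(1,1)$ term of $-T_1$ near the boundary, and the tail estimates you propose (replacing $(1-e^{-2m_iK})^{-1}$ by $(1-e^{-2K})^{-1}$, dropping positive contributions) only widen this gap. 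The structural issue is that your $m$-series have common ratio $\sim e^{\eta-2K}$, which approaches $1$ as $K\downarrow\eta/2$; no finite truncation controls the sums uniformly there, so a term-by-term comparison in this expansion cannot close.

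The paper takes a different route that avoids this entirely. It expands in the \emph{outer} index $n$ (from the lattice sum over $nK$) rather than in your inner index $m$: for $n\ge 2$ one has $nK\ge 2K>\max(\eta,6)$, so $\mathsf f_\eta(nK)$ and $\mathsf f_\eta'(nK)$ are $O(\eta\,e^{-nK})$ uniformly in $K>\max(\eta/2,3)$, and these tails are genuinely small. The $n=1$ contributions of both $T_1$ and $T_2$ are kept \emph{exact} (not series-expanded) and combined via a closed-form hyperbolic identity,
\[
\bigl[\mathsf f_\eta(K)-\coth\tfrac\eta 2\bigr]\mathsf f_\eta'(K)+\frac{\coth(K+\tfrac\eta2)}{\sinh^2(K+\tfrac\eta2)}+\frac{\coth(K-\tfrac\eta2)}{\sinh^2(K-\tfrac\eta2)}-\frac{6\coth K}{\sinh^2 K}
=\frac{\coth K\,\sinh^2\tfrac\eta2\,\bigl[2\cosh 2K+\cosh 4K-3\cosh\eta\bigr]}{\sinh^2 K\,\sinh^2(K+\tfrac\eta2)\,\sinh^2(K-\tfrac\eta2)},
\]
which is manifestly positive for $K>\eta/2$. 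This identity performs exactly the $1/\delta^3$-cancellation you anticipate, but in closed form rather than order by order; what remains is then shown to dominate the small $n\ge 2$ corrections by explicit estimates. Your series organization does not naturally produce this identity, and without it (or an equivalent exact cancellation) the boundary regime $K\downarrow\eta/2$ cannot be handled by the comparison you outline.
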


The proof of \Cref{prop:increasing property K large} relies on several preliminary bounds, which we collect in the next several Lemmas. In the remainder of the subsection we will use the constant
\begin{equation}
	\sfC = \frac{1}{1-\e^{-2n \pi^2/5}} = 1.01968...
\end{equation}

\begin{lemma} \label{lem:bound f K large}
	We have
	\begin{equation}\label{eq:bound f K large}
		\mathsf{f}_\eta(x) \le 4 \sfC \eta \e^{-x} \qquad \text{for all } \eta>0 \quad \text{and} \quad x>\max(6,\eta).
	\end{equation}
	As a result
	\begin{equation}\label{eq:bound sum f higher terms}
		\sum_{n\ge 2 }\mathsf{f}_\eta(n K) \le 4 \sfC^2 \eta \e^{- 2K}  \qquad \text{for all } \eta>0 \quad \text{and} \quad K>\max(3,\eta/2).
	\end{equation}
\end{lemma}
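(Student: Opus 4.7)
The plan is to establish the pointwise bound \eqref{eq:bound f K large} by direct manipulation of $\mathsf{f}_\eta$, and then derive the sum bound \eqref{eq:bound sum f higher terms} by a geometric series argument.

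For the first estimate, I would start from the identity
\begin{equation}
\mathsf{f}_\eta(x) = \frac{\sinh(\eta)}{\sinh(x-\tfrac{\eta}{2})\,\sinh(x+\tfrac{\eta}{2})}
\end{equation}
which follows from the standard formula $\coth(a)-\coth(b) = \sinh(b-a)/\bigl(\sinh(a)\sinh(b)\bigr)$. I would then bound $\sinh(\eta) \le \eta\,\cosh(\eta) \le \eta\, \e^{\eta}$, and use the assumption $x > \max(6,\eta)$ to obtain $x \pm \tfrac{\eta}{2} > \tfrac{x}{2} > 3$, so that $\sinh(y) \ge \tfrac{1}{2}\e^{y}(1-\e^{-2y})$ applied to $y = x \pm \tfrac{\eta}{2}$ yields
\begin{equation}
\sinh(x-\tfrac{\eta}{2})\,\sinh(x+\tfrac{\eta}{2}) \ge \tfrac{1}{4}\e^{2x}(1-\e^{-6})^{2}.
\end{equation}
Assembling these pieces gives $\mathsf{f}_\eta(x) \le 4\eta\,\e^{\eta-2x}/(1-\e^{-6})^{2}$, and since $x > \eta$ forces $\eta - 2x \le -x$, we arrive at $\mathsf{f}_\eta(x) \le 4\eta\,\e^{-x}/(1-\e^{-6})^{2}$. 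The desired form then follows from the numerical inequality $(1-\e^{-6})^{-2} \le \sfC$, verified by direct computation.

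For the sum bound \eqref{eq:bound sum f higher terms}, I would apply \eqref{eq:bound f K large} termwise: the hypothesis $K > \max(3,\eta/2)$ guarantees $nK > \max(6,\eta)$ for all $n \ge 2$, so
\begin{equation}
\sum_{n \ge 2} \mathsf{f}_\eta(nK) \le 4\sfC\,\eta \sum_{n \ge 2} \e^{-nK} = \frac{4\sfC\,\eta\,\e^{-2K}}{1-\e^{-K}}.
\end{equation}
The target inequality then reduces to the numerical check $(1-\e^{-K})^{-1} \le \sfC$ for $K$ in the stated range.

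The proof is essentially routine once the product formula for $\mathsf{f}_\eta$ is written down; the only delicate point is the verification of the numerical inequalities relating $\sfC$ to $(1-\e^{-6})^{-2}$ and $(1-\e^{-K})^{-1}$, which amounts to comparing a handful of explicit exponentials.
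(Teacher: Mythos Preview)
Your argument for the pointwise bound \eqref{eq:bound f K large} is correct and close in spirit to the paper's; you work with the product form $\sinh(x-\tfrac{\eta}{2})\sinh(x+\tfrac{\eta}{2})$ while the paper uses the equivalent expression $\tfrac{1}{2}(\cosh(2x)-\cosh(\eta))$ and splits into the cases $\eta \le 6$ and $\eta > 6$. Your route avoids the case split, and the numerical check $(1-\e^{-6})^{-2} \le \sfC$ does hold (the two sides are approximately $1.005$ and $1.020$).

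There is, however, a genuine gap in your derivation of \eqref{eq:bound sum f higher terms}. The final numerical check $(1-\e^{-K})^{-1} \le \sfC$ for $K > 3$ is \emph{false}: since $\sfC = (1-\e^{-2\pi^2/5})^{-1}$ and $2\pi^2/5 \approx 3.95$, the inequality would require $K \ge 2\pi^2/5$, whereas the hypothesis only gives $K>3$; near $K=3$ one has $(1-\e^{-K})^{-1} \approx 1.052 > \sfC \approx 1.020$. (The paper's proof contains the same slip.) A simple repair uses your own sharper intermediate estimate $\mathsf{f}_\eta(x) \le 4\eta\,\e^{\eta-2x}/(1-\e^{-6})^{2}$ applied at $x=nK$ together with $\eta < 2K$, giving $\mathsf{f}_\eta(nK) \le 4\eta\,\e^{-2(n-1)K}/(1-\e^{-6})^{2}$; summing over $n\ge 2$ then produces the factor $(1-\e^{-2K})^{-1} < (1-\e^{-6})^{-1}$, and the bound follows from the valid numerical check $(1-\e^{-6})^{-3} \approx 1.0075 < \sfC^{2} \approx 1.0397$.
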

\begin{proof}
	For any $x>\max(\eta,6)$, we have
	\begin{equation}\label{eq: bound f preliminary}
		\begin{split}
			\mathsf{f}_\eta(x) &= 2 \frac{\sinh(\eta)}{\cosh(2x) - \cosh(\eta)}
			\\
			& \le  \sup_{x > \max(\eta,6)} \left\{  \frac{1}{1 - \frac{\cosh(\eta)}{\cosh(2x) }} \right\} \frac{2 \sinh(\eta)}{\cosh(2x)}
			\\
			& \le \sup_{x > \max(\eta,6)} \left\{  \frac{1}{1 - \frac{\cosh(\eta)}{\cosh(2x) }} \right\}  \left( 4 \sinh(\eta) \e^{-\eta} \right) \e^{-x},
		\end{split}
	\end{equation}
	where in the second inequality we used the basic bounds $\cosh(2x)^{-1} \le 2 \e^{-2x} \le 2 \e^{-\eta} \e^{-x}$. When $\eta > 6$, we have
	\begin{equation}
		\begin{split}
			\sup_{x > \max(\eta,6)} \left\{  \frac{1}{1 - \frac{\cosh(\eta)}{\cosh(2x) }} \right\} &
            \le  \frac{1}{1 - \frac{\cosh(\eta)}{\cosh(2\eta) }} 
            \le \frac{1}{1 - \frac{\cosh(6)}{\cosh(12) }}
            = 1.00248...
            < \sfC,
		\end{split}
	\end{equation}
	where we used the fact that the functions $x\to \left( 1- \frac{\cosh(\eta)}{\cosh(2x) } \right)^{-1}$ and $\eta\to \left( 1- \frac{\cosh(\eta)}{\cosh(2\eta) } \right)^{-1}$ are strictly decreasing for $x>\eta>6$. On the other hand we have
	\begin{equation}
		4 \sinh(\eta) \e^{-\eta} < 4 \lim_{\eta\to \infty}  4 \sinh(\eta) \e^{-\eta} = 2,
	\end{equation}
	since $\sinh(\eta) \e^{-\eta}$ is an increasing function. These bounds prove that
	\begin{equation}
		\mathsf{f}_\eta(x) \le 2 \sfC \e^{-x} \qquad \text{for} \quad 6<\eta<x.
	\end{equation}
	Consider now the case $\eta\in(0,6)$. Analyzing the right-hand side of \eqref{eq: bound f preliminary} we find, using similar considerations,
	\begin{equation}
			\sup_{x > \max(\eta,6)} \left\{  \frac{1}{1 - \frac{\cosh(\eta)}{\cosh(2x) }} \right\}
            = \frac{1}{1 - \frac{\cosh(\eta)}{\cosh(12) }}
            < \frac{1}{1 - \frac{\cosh(6)}{\cosh(12) }}
            < \sfC
	\end{equation}
	and 
	$
		4 \sinh(\eta) \e^{-\eta} \le 4 \eta.
	$
	This proves that
	\begin{equation}
		\mathsf{f}_\eta(x) \le 4 \sfC \eta \e^{-x} \qquad \text{for} \quad 0<\eta<6<x.
	\end{equation}
	Combining the two bounds obtained for the cases $\eta\in (0,6)$ and $\eta \ge 6$ we prove \eqref{eq:bound f K large}. To show \eqref{eq:bound sum f higher terms}, we simply use \eqref{eq:bound f K large} to each term of the sum using the fact that, whenever $K> \max(3,\eta/2)$, we have $2K > \max(6,\eta)$ obtaining
	\begin{equation}
		\begin{split}
			\sum_{n\ge 2 }\mathsf{f}_\eta(n K)
            &
            \le 4 \sfC \eta  \sum_{n\ge 2 } \e^{-nK} 
            \le  4 \sfC \eta \frac{\e^{- 2 K}}{1 - \e^{-K}}
            \le 4 \sfC^2 \eta \e^{- 2K}.
		\end{split}
	\end{equation}
\end{proof}

\begin{lemma} \label{lem:bound f' K large}
	We have
	\begin{equation}\label{eq: bound f' K large}
		-\mathsf{f}_\eta ' (x) \le 8 \sfC \eta \e^{-x} \qquad \text{for all} \quad \eta>0 \quad \text{and} \quad x > \max(6,\eta).
	\end{equation}
	As a result,
	\begin{equation}\label{eq:bound sum f' higher terms}
		-\sum_{n \ge 2} n \mathsf{f}_\eta(nK) \le  16 \sfC^3  \eta \e^{-2K} \qquad \text{for all} \quad \eta>0 \quad \text{and} \quad K > \max(3,\eta/2).
	\end{equation}
\end{lemma}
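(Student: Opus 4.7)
The proof mimics the template of Lemma~\ref{lem:bound f K large}, with the function $\mathsf f_\eta$ replaced by its negative derivative. The starting point is the explicit formula
\begin{equation}
-\mathsf f_\eta'(x) = \frac{1}{\sinh^2(x-\tfrac{\eta}{2})} - \frac{1}{\sinh^2(x+\tfrac{\eta}{2})} = \frac{4\sinh(2x)\sinh(\eta)}{\bigl(\cosh(2x) - \cosh(\eta)\bigr)^2},
\end{equation}
obtained from $\sinh^2(x+\tfrac{\eta}{2}) - \sinh^2(x-\tfrac{\eta}{2}) = \sinh(2x)\sinh(\eta)$ and $2\sinh(x+\tfrac{\eta}{2})\sinh(x-\tfrac{\eta}{2}) = \cosh(2x) - \cosh(\eta)$. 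This makes $-\mathsf f_\eta'(x)$ structurally close to $\mathsf f_\eta(x) = 2\sinh(\eta)/(\cosh(2x) - \cosh(\eta))$, differing only by a squared denominator and an additional $2\sinh(2x)$ in the numerator.

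To prove~\eqref{eq: bound f' K large} I would factor
\begin{equation}
-\mathsf f_\eta'(x) \le \sup_{x > \max(6,\eta)}\Bigl( 1 - \tfrac{\cosh(\eta)}{\cosh(2x)}\Bigr)^{-2} \cdot \frac{4\sinh(2x)}{\cosh(2x)^2}\sinh(\eta),
\end{equation}
and bound the three pieces exactly as in the proof of Lemma~\ref{lem:bound f K large}. Since $x > \max(6,\eta)$ forces $\cosh(\eta)/\cosh(2x) \le \cosh(6)/\cosh(12) \approx 2.5 \cdot 10^{-3}$, the squared supremum stays comfortably below $\sfC$. The remaining factor is controlled by $\sinh(2x)/\cosh(2x)^2 \le 1/\cosh(2x) \le 2\e^{-2x} \le 2\e^{-\eta}\e^{-x}$ (valid whenever $x \ge \eta$) together with $4\sinh(\eta)\e^{-\eta} = 2(1-\e^{-2\eta}) \le \min(2, 4\eta)$. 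Splitting into $\eta \in (0,6)$ (where $\min(2,4\eta) = 4\eta$ produces the $\eta$ factor directly) and $\eta \ge 6$ (where the constant $2$ is absorbed into $8\sfC\eta\e^{-x}$ using $\eta \ge 6$) gives \eqref{eq: bound f' K large}.

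The sum estimate~\eqref{eq:bound sum f' higher terms} follows by applying the pointwise bound termwise: the hypothesis $K > \max(3,\eta/2)$ ensures $nK > \max(6,\eta)$ for every $n \ge 2$, so
\begin{equation}
-\sum_{n \ge 2} n\mathsf f_\eta'(nK) \le 8\sfC\eta \sum_{n\ge 2} n\e^{-nK} = 8\sfC\eta\,\frac{\e^{-2K}(2-\e^{-K})}{(1-\e^{-K})^2}.
\end{equation}
Bounding $(2-\e^{-K}) \le 2$ and $(1-\e^{-K})^{-2} \le \sfC^2$ (valid for $K \ge 2\pi^2/5$; for $K$ in the short interval $(3,2\pi^2/5]$ the inequality can be verified directly by monotonicity) yields the claimed $16\sfC^3 \eta \e^{-2K}$. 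No conceptual obstacle is expected here, since every step is a routine hyperbolic estimate; the only real care required is tracking constants carefully across the range of $K$ and $\eta$, and in particular verifying that the supremum of $(1 - \cosh(\eta)/\cosh(2x))^{-2}$ for $x > \max(6,\eta)$ remains below $\sfC$ rather than the weaker $\sfC^2$, which is essential to obtain the claimed constant.
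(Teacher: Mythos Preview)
Your approach matches the paper's proof step for step: same closed form for $-\mathsf f_\eta'$, same factorization through $\tanh(2x)\cdot\bigl(1-\cosh\eta/\cosh 2x\bigr)^{-2}\cdot 4\sinh\eta/\cosh 2x$, and the same termwise summation for the tail. The only cosmetic difference is that the paper skips your case split on $\eta$ by using $\sinh(\eta)\e^{-\eta}\le\eta$ for all $\eta>0$ directly.

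One caveat worth flagging: the bound $(1-\e^{-K})^{-2}\le\sfC^2$ that you invoke actually fails on the interval $(3,2\pi^2/5)$ you single out (at $K=3$ the left side is about $1.11$, while $\sfC^2\approx1.04$), so ``verify directly by monotonicity'' does not patch it; the paper's final inequality $8\sfC\eta\frac{2-\e^{-K}}{(\e^K-1)^2}\le 16\sfC^3\eta\e^{-2K}$ has the identical gap. This is purely a constant-tracking issue, not a conceptual one: if you retain the sharper decay $1/\cosh(2nK)\le 2\e^{-2nK}$ before summing (rather than weakening to $\e^{-nK}$), then the relevant factor becomes $(1-\e^{-2K})^{-2}$, which is indeed below $\sfC$ for all $K>3$, and the estimate closes with room to spare.
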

\begin{proof}
	We proceed in the same way as in \Cref{lem:bound f K large}. Through an explicit computation we obtain
	\begin{equation}
			-\mathsf{f}_\eta'(x) = \frac{4 \sinh(2x) \sinh(\eta)}{(\cosh(2x) - \cosh(\eta))^2}
			 =  \frac{\tanh(2x)}{\left( 1- \frac{\cosh(\eta)}{\cosh(2x)} \right)^2}  \frac{4 \sinh(\eta)}{\cosh(2x)}.
	\end{equation}
	Using the inequalities
	\begin{equation}
    \begin{aligned}
		&\left( 1- \frac{\cosh(\eta)}{\cosh(2x)} \right)^{-2} \le \left( 1- \frac{\cosh(6)}{\cosh(12)} \right)^{-2} = 1.00498... < \sfC,
	\\
		&\tanh(2x) \le 1,
	\quad\cosh(2x)^{-1} \le 2 \e^{-\eta} \e^{-x},
	\quad
		\sinh(\eta) \e^{-\eta} \le \eta,
        \end{aligned}
	\end{equation}
	we get \eqref{eq: bound f' K large}. To show \eqref{eq:bound sum f' higher terms} we use the inequality \eqref{eq: bound f' K large} obtaining
	\begin{align*}
		- \sum_{n \ge 2} n \mathsf{f}_\eta'(nK) &\le 8 \sfC \eta  \sum_{n \ge 2} n \e^{-nK}
        =  8 \sfC \eta \frac{2-\e^{-K}}{(\e^K-1)^2}
        \le 16 \sfC^3  \eta \e^{-2K}
	\end{align*}
    and the proof is complete.
\end{proof}

\begin{lemma} \label{lem:lower bound series f f'}
	Let $\eta > 0$ and let $K>\max(3,\eta/2)$. Then
	\begin{equation}
		\left( \sum_{n\ge 1} \mathsf{f}_\eta (n K) \right) \left( \sum_{n \ge 1} n \mathsf{f}_\eta ' (n K) \right) 
		\ge \left( \mathsf{f}_\eta (K) + 4 \sfC^2 \eta \e^{- 2K}  \right) \left( \mathsf{f}_\eta' (K) - 16 \sfC^3  \eta \e^{-2K}\right).
	\end{equation} 
\end{lemma}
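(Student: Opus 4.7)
The statement follows from a purely elementary bookkeeping argument, using the sign structure of $\mathsf{f}_\eta$ and $\mathsf{f}_\eta'$ together with the remainder estimates already established in Lemmas~\ref{lem:bound f K large} and~\ref{lem:bound f' K large}. The plan is to isolate the $n=1$ term in each of the two series, write
\[
A := \sum_{n\ge 1}\mathsf{f}_\eta(nK) = \mathsf{f}_\eta(K) + R_A,\qquad
B := \sum_{n\ge 1} n\,\mathsf{f}_\eta'(nK) = \mathsf{f}_\eta'(K) + R_B,
\]
and to show that the desired inequality reduces to a product-of-interval bound on $AB$.

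First I would observe that by the preceding lemma on $\mathsf{f}_\eta$, every term $\mathsf{f}_\eta(nK)$ is strictly positive and every term $\mathsf{f}_\eta'(nK)$ is strictly negative, for $K>\eta/2$. In particular, the tail $R_A=\sum_{n\ge 2}\mathsf{f}_\eta(nK)\ge 0$ and the tail $R_B=\sum_{n\ge 2} n\,\mathsf{f}_\eta'(nK)\le 0$. Moreover, Lemma~\ref{lem:bound f K large} (applied under the hypothesis $K>\max(3,\eta/2)$) yields $R_A\le 4\sfC^2\eta\,\e^{-2K}$, and Lemma~\ref{lem:bound f' K large} gives $-R_B\le 16\sfC^3\eta\,\e^{-2K}$. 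Combining these with the sign information we obtain the two-sided bounds
\[
\mathsf{f}_\eta(K)\;\le\;A\;\le\;\mathsf{f}_\eta(K)+4\sfC^2\eta\,\e^{-2K},\qquad
\mathsf{f}_\eta'(K)-16\sfC^3\eta\,\e^{-2K}\;\le\;B\;\le\;\mathsf{f}_\eta'(K).
\]

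The final step is the product estimate. Since $B\le 0$ and $A$ is confined to a positive interval, the quantity $AB$ is smallest when $A$ is as large as possible; hence
\[
AB\;\ge\;\bigl(\mathsf{f}_\eta(K)+4\sfC^2\eta\,\e^{-2K}\bigr)\,B.
\]
The coefficient $\mathsf{f}_\eta(K)+4\sfC^2\eta\,\e^{-2K}$ is strictly positive, so we may further bound using the lower bound for $B$:
\[
\bigl(\mathsf{f}_\eta(K)+4\sfC^2\eta\,\e^{-2K}\bigr)\,B\;\ge\;\bigl(\mathsf{f}_\eta(K)+4\sfC^2\eta\,\e^{-2K}\bigr)\bigl(\mathsf{f}_\eta'(K)-16\sfC^3\eta\,\e^{-2K}\bigr),
\]
which is exactly the claimed inequality.

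There is essentially no obstacle here: the only mild point of attention is to correctly propagate the inequalities through a product in which one factor is positive and the other is negative. This is why the two tail estimates from Lemmas~\ref{lem:bound f K large} and~\ref{lem:bound f' K large} have to be used in the \emph{opposite} directions (the upper bound on the positive tail $R_A$ and the lower bound on the negative tail $R_B$) — this is the only place where care is needed to keep the inequalities consistent.
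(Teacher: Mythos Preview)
Your argument is correct and is exactly the approach the paper has in mind; the paper's proof is a single sentence (``This is a straightforward application of Lemma~\ref{lem:bound f K large} and Lemma~\ref{lem:bound f' K large}''), and you have spelled out the sign-tracking in the product that makes this work.
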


\begin{proof}
	This is a straightforward application of \Cref{lem:bound f K large} and \Cref{lem:bound f' K large}.
\end{proof}

\begin{lemma} \label{lem:lower positive term}
	Let $\eta >0 $ and let $K>\eta/2$. Then
	\begin{equation}\label{eq:bound sum increasing terms}
		\begin{split}
			&\frac{1}{4} \sum_{n \ge 1} \partial_K \left\{ \frac{6}{\sinh(K n)^2} - \frac{1}{\sinh(K n-\frac{\eta}{2})^2} - \frac{1}{\sinh(K n+\frac{\eta}{2})^2} - 2 \coth\left( \frac{\eta}{2} \right) \mathsf{f}_\eta(n K) \right\}
			\\
			& \ge \frac{1}{2} \left( \frac{ \coth \left(K+\frac{\eta}{2}\right) -\coth \left(\frac{\eta}{2}\right)}{ \sinh^2\left(K+\frac{\eta}{2}\right) } + \frac{ \coth \left(K-\frac{\eta}{2}\right) + \coth \left(\frac{\eta}{2}\right) }{ \sinh^2\left(K-\frac{\eta}{2}\right) }  - \frac{ 6 \coth (K) }{ \sinh^2(K) } \right).
		\end{split}
	\end{equation} 
\end{lemma}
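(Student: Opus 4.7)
My plan is to show that the displayed inequality reduces almost immediately to the positivity result already established in Lemma~\ref{lem: increasing term}. The first step is a direct bookkeeping computation: for each $n\geq 1$ I will differentiate the $n$-th summand
\[
T_n(K)=\frac{6}{\sinh(nK)^2}-\frac{1}{\sinh(nK-\tfrac{\eta}{2})^2}-\frac{1}{\sinh(nK+\tfrac{\eta}{2})^2}-2\coth\bigl(\tfrac{\eta}{2}\bigr)\mathsf{f}_\eta(nK)
\]
with respect to $K$, use the identity $\cosh(x)/\sinh(x)^3=\coth(x)/\sinh(x)^2$, and rearrange the pieces coming from $\mathsf{f}_\eta'(x)=-\sinh(x-\tfrac{\eta}{2})^{-2}+\sinh(x+\tfrac{\eta}{2})^{-2}$ to obtain the clean form $\partial_K T_n(K) = 2n\,A_n$, where
\[
A_n=\frac{\coth(nK-\tfrac{\eta}{2})+\coth(\tfrac{\eta}{2})}{\sinh(nK-\tfrac{\eta}{2})^2}+\frac{\coth(nK+\tfrac{\eta}{2})-\coth(\tfrac{\eta}{2})}{\sinh(nK+\tfrac{\eta}{2})^2}-\frac{6\coth(nK)}{\sinh(nK)^2}.
\]

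The second step is to observe that the right-hand side of the inequality we want to prove is exactly $\tfrac{1}{2}A_1$, while the left-hand side, after the computation above, equals $\tfrac{1}{4}\sum_{n\geq 1}2nA_n=\tfrac{1}{2}\sum_{n\geq 1}nA_n$. Hence the claim is equivalent to
\[
\sum_{n\geq 2} n\, A_n \geq 0.
\]

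The third and essentially only non-trivial step is to identify $2A_n$ with the derivative at $x=nK$ of the function appearing in Lemma~\ref{lem: increasing term}, namely
\[
G(x):=\frac{6}{\sinh(x)^2}-\frac{1}{\sinh(x-\tfrac{\eta}{2})^2}-\frac{1}{\sinh(x+\tfrac{\eta}{2})^2}-2\coth\bigl(\tfrac{\eta}{2}\bigr)\mathsf{f}_\eta(x),
\]
i.e.\ $G'(x)=2A_n\big|_{nK=x}$. Since Lemma~\ref{lem: increasing term} gives $G'(x)>0$ for every $x>\eta/2$, and since $K>\eta/2$ implies $nK>\eta/2$ for all $n\geq 1$, we conclude that $A_n>0$ for every $n\geq 1$. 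In particular the tail sum $\sum_{n\geq 2}nA_n$ is strictly positive, which finishes the argument.

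The only place where care is needed is bookkeeping in the first step, in particular matching signs and simplifying the $\mathsf{f}_\eta'$ contribution so that the terms combine into the three ``diagonal'' pieces displayed above; once this is done, the reduction to Lemma~\ref{lem: increasing term} is immediate and there is no genuine analytic obstacle.
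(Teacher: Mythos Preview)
Your proposal is correct and follows essentially the same approach as the paper: both identify each summand $\partial_K T_n(K)$ with $n\,G'(nK)$ where $G$ is the function from Lemma~\ref{lem: increasing term}, invoke that lemma to get positivity of every term, and then bound the sum below by its $n=1$ term, which is exactly the right-hand side. The paper states this in a single sentence, while you spell out the derivative computation and the identification $G'(nK)=2A_n$ explicitly; the logical content is identical.
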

\begin{proof}
	We have shown in \Cref{lem: increasing term} that the right-hand side of \eqref{eq:bound sum increasing terms} is a sum of strictly positive terms and as a result it is bounded from below by the first term in the sum.
\end{proof}

\begin{proof}[Proof of \Cref{prop:increasing property K large}]
	From \eqref{eq: a^2 + b expansion K large} and \Cref{lem:lower bound series f f'}, \Cref{lem:lower positive term}, we have to show that for $\eta>0$ and $K>\max(\eta/2,3)$, the function
	\begin{equation}\label{eq:K large function to estimate}
		\begin{split}
			&\left( \mathsf{f}_\eta (K) + 4 \sfC^2 \eta \e^{- 2K}  \right) \left( \mathsf{f}_\eta' (K) - 16 \sfC^3  \eta \e^{-2K}\right) 
			\\
			& + \left( \frac{ \coth \left(K+\frac{\eta}{2}\right) -\coth \left(\frac{\eta}{2}\right)}{ \sinh^2\left(K+\frac{\eta}{2}\right) } + \frac{ \coth \left(K-\frac{\eta}{2}\right) + \coth \left(\frac{\eta}{2}\right) }{ \sinh^2\left(K-\frac{\eta}{2}\right) }  - \frac{ 6 \coth (K) }{ \sinh^2(K) } \right)
			\\
			& = \left[ \mathsf{f}_\eta(K)- \coth\left( \frac{\eta}{2} \right) \right]  \mathsf{f}_\eta'(K) + \frac{\coth(K+\frac{\eta}{2})}{\sinh(K+\frac{\eta}{2})^2}+ \frac{\coth(K-\frac{\eta}{2})}{\sinh(K-\frac{\eta}{2})^2} - 6 \frac{\coth(K)}{\sinh(K)^2}
			\\
			& \qquad + 4 \eta \sfC^2 \e^{-2K} \left[ \mathsf{f}_\eta' (K) - 4 \sfC \mathsf{f}_\eta (K)  \right] -  64 \eta^2 \sfC^5  \e^{-4K}
			\\
			& = \frac{\coth (K) \sinh \left(\frac{\eta}{2}\right)^2 }{\sinh(K)^2   \sinh \left(K+\frac{\eta}{2}\right)^2  \sinh \left(K-\frac{\eta}{2}\right)^2 } (2 \cosh (2 K)+\cosh (4 K)-3 \cosh (\eta))
			\\
			& \qquad + 4 \eta \sfC^2 \e^{-2K} \left[ \mathsf{f}_\eta' (K) - 4 \sfC \mathsf{f}_\eta (K)  \right] -  64 \eta^2 \sfC^5  \e^{-4K},
		\end{split}
	\end{equation}
	is strictly positive. We have
	\begin{align*}
			&\frac{\coth (K) \sinh \left(\frac{\eta}{2}\right)^2 }{\sinh(K)^2   \sinh \left(K+\frac{\eta}{2}\right)^2  \sinh \left(K-\frac{\eta}{2}\right)^2 } [2 \cosh (2 K)+\cosh (4 K)-3 \cosh (\eta)]
			\\
			& > 
			\frac{ \sinh \left(\frac{\eta}{2}\right)^2 }{\sinh(K)^2   \sinh \left(K+\frac{\eta}{2}\right)^2  \sinh \left(K-\frac{\eta}{2}\right)^2 } [\cosh (4 K)- \cosh (2K)]
			\\
			& > 
			\frac{ \sinh \left(\frac{\eta}{2}\right)^2 \e^{-2K+\eta} }{\sinh(K)^2   \sinh \left(K+\frac{\eta}{2}\right)^2 } [\cosh (4 K)- \cosh (2K)]  \left[ \frac{1}{\left(K-\frac{\eta}{2}\right)^2}+\frac{5}{3} \right]
			\\
			& = 
			\sinh \left(\frac{\eta}{2}\right)^2 8 \e^{-2K}  \frac{ (1+\e^{-8K} -\e^{-2K} -\e^{-6K} )  }{(1- \e^{-2K})^2 (1- \e^{-2K-\eta})^2 }   \left[ \frac{1}{\left(K-\frac{\eta}{2}\right)^2}+\frac{5}{3} \right]
			\\
			& > 
			\sinh \left(\frac{\eta}{2}\right)^2 8 \e^{-2K}  \frac{ (1+\e^{-8K} -\e^{-2K} -\e^{-6K} )  }{(1- \e^{-2K})^2 (1- \e^{-2K-\eta})^2 }   \left[ \frac{1}{\left(K-\frac{\eta}{2}\right)^2}+\frac{5}{3} \right]
			\\
			& > 
			\sinh \left(\frac{\eta}{2}\right)^2 8 \e^{-2K}  \left[ \frac{1}{\left(K-\frac{\eta}{2}\right)^2}+\frac{5}{3} \right],
	\end{align*}
	where we only used the definitions of hyperbolic functions and the elementary bounds
	\begin{equation}
		 \frac{1}{\sinh \left( K - \frac{\eta}{2} \right)^2 }> \e^{-2K+\eta} \left[ \frac{1}{\left(K-\frac{\eta}{2}\right)^2}+\frac{5}{3} \right],
	\end{equation}
	\begin{equation}
		\frac{ (1+\e^{-8K} -\e^{-2K} -\e^{-6K} )  }{(1- \e^{-2K})^2 (1- \e^{-2K-\eta})^2 } >1, \qquad \text{for } K>\max(\eta/2,3), \, \eta > 0.
	\end{equation}
	For the remaining terms in the right-hand side of \eqref{eq:K large function to estimate}, we have
	\begin{align*}
		&4 \eta \sfC^2 \e^{-2K} \left[ \mathsf{f}_\eta' (K) - 4 \sfC \mathsf{f}_\eta (K)  \right] -  64 \eta^2 \sfC^5 \e^{-4K}
		\\
		& = - \frac{32 \sfC^2 \e^{-4 K} \eta}{\left(-2 \e^{2 K} \cosh (\eta)+\e^{4 K}+1\right)^2}    \biggl[8 \sfC^3 \e^{4 K} \eta+2 \sfC^3 \e^{8 K} \eta-8 \sfC^3 \e^{2 K} \eta \cosh (\eta)
		\\
		& \qquad\qquad -8 \sfC^3 \e^{6 K} \eta \cosh (\eta)+4 \sfC^3 \e^{4 K} \eta \cosh (2 \eta)+2 \sfC^3 \eta+2 \sfC \e^{4 K} \sinh (\eta)
		\\
		& \qquad\qquad 
		+2 \sfC \e^{8 K} \sinh (\eta)-2 \sfC \e^{6 K} \sinh (2 \eta)-\e^{4 K} \sinh (\eta)+\e^{8 K} \sinh (\eta)\biggr]
		\\
		& > - \frac{32 \sfC^2 \e^{-4 K} \eta}{\left(-2 \e^{2 K} \cosh (\eta)+\e^{4 K}+1\right)^2}    \biggl[8 \sfC^3 \e^{4 K} \eta+2 \sfC^3 \e^{8 K} \eta  +4 \sfC^3 \e^{4 K} \eta \cosh (2 \eta)
		\\
		& \qquad\qquad\qquad\qquad  +2 \sfC^3 \eta+2 \sfC \e^{4 K} \sinh (\eta)
		+2 \sfC \e^{8 K} \sinh (\eta)  +\e^{8 K} \sinh (\eta)\biggr]
		\\
		& > - \frac{32 \sfC^5 \e^{-4 K} \eta}{\left(-2 \e^{2 K} \cosh (\eta)+\e^{4 K}+1\right)^2}    
		\\
		& \quad \times \left[ 2  \eta +  \e^{4 K} ( 4  \eta \cosh (2 \eta) + 2 \sinh (\eta) + 10 \eta) +3  \e^{8 K} \sinh (\eta)\right]
		\\
		& > - \frac{32 \sfC^5 \e^{-4 K} \eta  \sinh (\eta) }{\left(-2 \e^{2 K} \cosh (\eta)+\e^{4 K}+1\right)^2} \left[ 2   +  \e^{4 K} ( 4   \cosh (2 \eta) + 12 ) +3 \e^{8 K} \right]
		\\
		& = - \frac{32 \sfC^5 \e^{-4 K} \eta  \sinh (\eta) }{\left(\e^{-4 K}-2 \e^{-2 K} \cosh (\eta)+1\right)^2} \left[ 2 \e^{-8 K}  +  \e^{-4 K+2\eta} ( 2  + 2 \e^{-4\eta}    + 12 \e^{-2\eta} ) +3   \right]
		\\
		& > - \frac{ 192 \sfC^7 \e^{-4 K} \eta  \sinh (\eta) }{\left(\e^{-4 K}-2 \e^{-2 K} \cosh (\eta)+1\right)^2}
		\\
		& > -  192 \sfC^7 \e^{-4 K} \eta  \sinh (\eta) \left( \frac{1}{\left(K-\frac{\eta}{2}\right)^2}+1  \right)
		\\
		& > -  200 \e^{-4 K} \eta  \sinh (\eta) \left( \frac{1}{\left(K-\frac{\eta}{2}\right)^2}+1  \right) ,
	\end{align*}
	where we used
	\begin{align*}
		&2 \e^{-8 K}  +  \e^{-4 K+2\eta} ( 2  + 2 \e^{-4\eta}    + 12 \e^{-2\eta} ) +3    
		\\
		& < 2 \e^{-24}  +  2  + 2 \e^{-12}    + 12 \e^{-12}  +3 
		\\
		& < 6
	\end{align*}
	and
	\begin{align*}
		\frac{1 }{\left(\e^{-4 K}-2 \e^{-2 K} \cosh (\eta)+1\right)^2} &= \frac{1}{\left(1-\e^{\eta-2 K}\right)^2 \left(1-\e^{-2 K-\eta}\right)^2}  
		\\
		&
		< \sfC^2 \frac{1}{\left(1-\e^{\eta-2 K}\right)^2 } 
		\\
		& 
        <  \sfC^2 \left(   \frac{1}{\left(K-\frac{\eta}{2}\right)^2}+1 \right).
	\end{align*}
	The last inequality follows from the elementary bound $(1-\e^{-2x})^{-2}<1+x^{-2}$.
	Summing the right-hand sides of the above bounds we get
	\begin{align*}
		&\sinh \left(\frac{\eta}{2}\right)^2 8 \e^{-2K}  \left[ \frac{1}{\left(K-\frac{\eta}{2}\right)^2}+\frac{5}{3} \right] -  200 \e^{-4 K} \eta  \sinh (\eta) \left[ \frac{1}{\left(K-\frac{\eta}{2}\right)^2}+1  \right]
		\\
		& = \frac{8 \e^{-2K} \sinh \left(\frac{\eta}{2}\right)^2 }{\left(K-\frac{\eta}{2}\right)^2} \left[ 1  -  25 \e^{-2 K} \frac{\eta  \sinh (\eta)}{\sinh \left(\frac{\eta}{2}\right)^2} \right] 
        + \sinh \left(\frac{\eta}{2}\right)^2 40 \e^{-2K}\left[ \frac{1}{3} -  5 \e^{-2 K} \frac{\eta  \sinh (\eta)}{\sinh \left(\frac{\eta}{2}\right)^2} \right]
		\\
		& >  8 \e^{-2K} \sinh \left(\frac{\eta}{2}\right)^2  \left(  \frac{1}{\left(K-\frac{\eta}{2}\right)^2} \left[ 1  -  50 \e^{-2 K} (\eta + 2) \right] 
		+  5  \left[ \frac{1}{3} -  10 \e^{-2 K} (\eta + 2) \right] \right)
		\\
		& >  8 \e^{-2K} \sinh \left(\frac{\eta}{2}\right)^2  \left(  \frac{1}{\left(K-\frac{\eta}{2}\right)^2} + \frac{5}{3} \right) \left[ 1  -  50 \e^{-2 K} (\eta + 2) \right] ,
	\end{align*}
	where we used the elementary bound
	\begin{equation}
		\frac{\eta  \sinh (\eta)}{\sinh \left(\frac{\eta}{2}\right)^2} =  2 \eta \frac{1+\e^{-\eta}}{1-\e^{-\eta}}  \le 2\eta + 4.
	\end{equation}
	Finally we have
	\begin{equation}
		1  -  50 \e^{-2 K} (\eta + 2) > 0 \qquad \text{for } K>\max(\eta/2,3), \, \eta>0
	\end{equation}
	which can be seen splitting the cases $\eta\in(0,6]$ and $\eta>6$.
    When $\eta\in(0,6]$ we have
	\begin{equation}
		1  -  50 \e^{-2 K} (\eta + 2) > 1  -  50 \e^{-6} (\eta + 2) > 1  -  400 \e^{-6}  
        > 0.
	\end{equation}
	On the other hand, when $\eta>6$ we have
	\begin{equation}
		1  -  50 \e^{-2 K} (\eta + 2) > 1  -  50 \e^{-\eta} (\eta + 2) > 1  -  400 \e^{-6} > 0,
	\end{equation}
	because the function $\e^{-\eta} (\eta + 2)$ is strictly decreasing for $\eta>6$. This completes the proof.
\end{proof}

\newcommand{\etalchar}[1]{$^{#1}$}

\end{document}